\newtheorem{definition}{\bf Definition} 
\newtheorem{theorem}{\bf Theorem}       
\newtheorem{corollary}{\bf Corollary}   [section]
\newtheorem{lemma}{\bf Lemma}           [section]
\def\dups [#1] [#2]                          {\text{{#1}$^{\mathit{#2}}$}}
\def\kStepArrow [#1]                         {\text{$\overset{{#1}}{\rightarrow}$}}
\newcommand{\myNewLine}     {\vspace{3 mm}}
\def \state {m}
\def \vals  {\sigma}
\begin{document}

\setlength{\pdfpageheight}{\paperheight}
\setlength{\pdfpagewidth}{\paperwidth}


\conferenceinfo{}{}
\copyrightyear{}
\copyrightdata{}
\doi{}





\title{A Formal Study on Backward Compatible Dynamic Software Updates}
\subtitle{}

\authorinfo{Jun Shen}
           {Arizona State University}
           {jun.shen.1@asu.edu}
\authorinfo{Rida A. Bazzi}
           {Arizona State University}
           {bazzi@asu.edu}

\maketitle

\begin{abstract}
We study the dynamic software update problem for programs interacting with an environment that is not necessarily updated.
We argue that such updates should be backward compatible.
We propose a general definition of backward compatibility and cases of backward compatible program update.
Based on our detailed study of real world program evolution, we propose classes of backward compatible update for interactive programs, which are included at an average of 32\% of all studied program changes.
The definitions of update classes are parameterized by our novel framework of program equivalence, which generalizes existing results on program equivalence to non-terminating executions.
Our study of backward compatible updates is based on a typed extension of {\em W} language.
\end{abstract}

\category{D.3.1}{Formal Definitions and Theory}{Semantics, Syntax}
\category{D.2.4}{Software/Program Verification}{Correctness Proof, Formal Methods}
\category{F.3.2}{Semantics of Programming Languages} {Operational Semantics, Program Analysis}
\category{D.3.3}{Language Constructs and Features}{Input/output, Procedures, functions, and subroutines}

\terms{Theory}

\keywords{dynamic software update, backward compatibility, program equivalence, proof rule, operational semantic}

\tableofcontents

\section{Introduction}

Dynamic software update (DSU) allows programs to be updated in the middle of their execution by mapping a state of an old
version of the program to that of a newer version. The ability to update programs without having to restart them
is useful for high-availability applications  that cannot afford the downtime  incurred by offline updates~\cite{hicks_thesis}. DSU has been an active area of research\cite{hicks_thesis,upstare_dissertation_09,linux_hotswapping,ksplice_eurosys_09}
with much of the published work emphasizing the {\em update mechanism} that implements a {\em state mapping}
which maps the execution state of an old version of the program to that of a new version.
DSU {\em safety} has not yet been successfully studied. Existing studies on DSU safety are lacking in one way
or another:
high-level studies are concerned with change management for system components~\cite{KramerConsistentDSU, BloomCorrectDSU} and 
lower-level studies typically require significant programmer annotations~\cite{hayden12dsucorrect,Magill_automap,ZhangCorrectness}
 or have a restricted class of applications to which they apply (e.g., controller systems~\cite{PanzicaLaManna_criteria}).

In this paper, we consider the safety of DSU when applied to possibly non-terminating programs interacting with an environment that is not necessarily updated. For such updates, the new version of the program must be able to interact with the old environment, which means that it should be, in some sense, {\em backward compatible} with the old version. A strict definition of backward compatibility would require the new version to exhibit the same I/O behavior as
the old version; in other words the two programs are observationally equivalent. It should be immediately clear that
a more nuanced definition is needed because observational-equivalence does not allow changes which one would want to allow as backward compatible such as bug fixes, new functionalities, or usability improvement (e.g., improved user messages). Allowing for such differences would be needed in any practical definition of backward compatibility.
One contribution of this work is a general definition of backward compatibility, a classification of common backward compatible program behavior changes, as well as classes of program change from  real world program evolution.

Determining backward compatibility, which allows for differences between two program versions, is requiring one to solve the {\em semantic equivalence} problem which has been extensively studied~\cite{Horwitz88,Godlin10,Benton04,Kundu09,staticaffineprog-equivcheck,arrayLoopTransformation13,optimizationCorrectness02,lucanu13}. Unfortunately, existing results turned out to be lacking in one or more aspects
which rules out retrofitting them for our setting.
In fact, existing work on program equivalence typically guarantees equivalence at the end of an execution.
Such equivalence is not adequate for our purposes because it does not allow us to express that a point in the middle
of a loop execution of one program {\em corresponds} (in a well defined sense) to a point in the middle of a loop execution of another program. The ability to express such correspondences is desirable for dynamic software update.
Besides, existing formulations of the
program equivalence problem either do not use formal semantics~\cite{Horwitz88,Binkley89,arrayLoopTransformation13}, only apply to terminating programs~\cite{Benton04,Kundu09}, severely restrict the programming model~\cite{Godlin10,staticaffineprog-equivcheck,arrayLoopTransformation13}, or rely on some form of model checking~\cite{optimizationCorrectness02,lucanu13} (which is not appropriate
for non-terminating programs with infinite states). Our goal for program equivalence is to establish compile-time conditions ensuring that two programs have the same I/O behavior in {\em all} executions. In particular, if one program enters an infinite loop and does not produce a certain output, the other program should not produce that output either. This is different from much of the literature on program equivalence which only guarantees same behavior in terminating executions.

The closest work that aims to establish program equivalence for nonterminating programs is that of Godlin and Strichman~\cite{Godlin10} who give sufficient conditions for semantic equivalence for a language that includes recursive functions, but does not allow loops (loops are extracted as recursive functions).
That and the fact that equivalence is enforced on corresponding functions severely limits the applicability of the work to general transformations affecting loops such as loop-invariant code motion, loop fission/fusion. So, as a major component of our formal treatment of backward compatible updates, we set out to develop sufficient conditions for semantic equivalence for programs in a typed extension of the {\emph W} languages~\cite{Cartwright:semanticOfDep} with small-step operational semantics. The syntax of language is extended with arrays and enumeration types and the semantics take into consideration the execution environment to allow various classes of updates.

In summary, the paper makes the following contributions:
\begin{enumerate}
\item We formally define backward compatibility and identify cases of backward compatible program behavior for typical program update motivation.
\item We identify and formally define classes of program changes that result in backward compatible program update based on empirical study of real world program evolution.
\item We give a formal treatment of the semantic equivalence for nonterminating imperative programs.
\end{enumerate}

The rest of the paper is organized as follows.
Section~\ref{sec:backwardComptibility} proposes the general backward compatibility and cases of backward compatible new program behavior.
Then we describe real world update classes that result in backward compatible update in Section~\ref{sec:briefUpdateClasses}.
Section~\ref{sec:formallanguage} formally defines our extension of the {\bf W} language to study backward compatible updates.
Section~\ref{sectionDefs} shows terms, notations and definitions (e.g., execution) heavily used in the technical result.
The technical results on semantic equivalence are presented in Section~\ref{sec:equivalence}.
We propose our formal treatment of real world update classes in Section~\ref{sec:realupdateclasses}.
A more detailed comparison to related work is given in Section~\ref{sec:relatedwork} .
Section~\ref{sec:conclusion} concludes the paper. 

\section{Backward compatibility}\label{sec:backwardComptibility}

\subsection{Programs and Specifications}

Programs are designed to satisfy specifications.
Specification can be explicitly provided or implicitly defined by the behavior of a program.
Programs interact with their environment by receiving inputs and producing outputs.
In this section we introduce enough of a computing model to describe the input/output behavior of programs;
 In the next section we introduce a  specific programming language to reason about specific software updates.

An execution of a program consists of a sequence of steps from a finite set of steps, $\mathcal{S} = \mathcal{S}_{in} \cup  \mathcal{S}_{internal} \cup \mathcal{S}_{out} \cup \{halt\}$.
A step of a program can either be an input step in which input is received, an internal step in which the state of the program
is modified, an output step in which output is produced, or a halt.

We make a distinction between internal state of a program and external state (e.g., application settings) of the local environment in which the program
executes. Such external state can include the state of a file system that program can access; we include both as part of
the program state.
The state of a program is an element of a set $\mathcal{M} \times \mathcal{I}$,
where the set $\mathcal{M} = \mathcal{M}_{int} \times \mathcal{M}_{ext}$,
$\mathcal{M}_{int} = \prod_{k=0}^{n_{int}} V_k$ is a cartesian product of $n_{int}$ sets of values, one for each internal memory location, and $\mathcal{M}_{ext} = \prod_{k=0}^{n_{ext}} V_k$ is a cartesian product of $n_{ext}$ sets of values, one for each external location.
The input value last received is an element of the set $\mathcal{I}$ of input values.

A program executes in an \emph{execution environment}.
An execution environment $(M_{{ext}_0}, I)$ specifies an initial value for the external program state $M_{{ext}_0}$ and a
 possibly infinite sequence of input values $I$. The input sequence is assumed to be produced by \emph{users} that we do not model explicitly.

A step of a program $P$ is a mapping that specifies the next program state and the next step to execute.
For an internal step $s_{internal} \in \mathcal{S}_{internal}$, the mapping is
$s_{internal} : \mathcal{M} \times \mathcal{I} \mapsto \mathcal{S} \times \mathcal{M} \times \{\bot\}$,
which specifies the next step and how the state is modified. The internal steps clear input in the state if any.
For an output step $s_{out} \in \mathcal{S}_{out}$,
the mapping $s_{out} : \mathcal{M} \mapsto  \mathcal{S} \times \mathcal{O}$
which specifies the next step to execute and the output value produced.
$\mathcal{O}$ is the set of output values produced by the program.
An input step $s_{in} \in \mathcal{S}_{in}$ is simply an element of  $\mathcal{S} \times \mathcal{I}$
and specifies the next step to execute and the input obtained from the environment. (We simply write $s_{in}()$ to denote the next step and the input received.)
Because the input value is received by the program, we do not restrict the next step to execute. We allow the input value to be ignored by the program by two consecutive input steps.
When the step is $halt$, there is no further action as if $halt$ were mapped to itself.

\begin{definition}\label{def:generalProg}
{\bf (Program)} A program $P$ is a tuple $(\mathcal{S}, \mathcal{M}, M_{{int}_0}, s_0,\allowbreak \mathcal{I}, \mathcal{O})$,
where $\mathcal{S}$ is the set of steps as defined above, $\mathcal{M}$ is the set of program states,
$M_{{int}_0}$ is the initial internal state, $s_0$ is the initial step, and $\mathcal{I}$ and $\mathcal{O}$ are disjoint sets of input and output values.
\end{definition}
We do not include the initial external state $M_{{ext}_0}$ in the program definition; we include it in the execution environment of $P$.

\begin{definition}\label{def:generalExec}
{\bf(Execution)}
An execution of a program $P = (\mathcal{S}, \mathcal{M}, M_{{int}_0}, s_0, \mathcal{I}, \mathcal{O})$ in execution environment $(M_{{ext}_0}, I)$, where $I$ is a possibly infinite sequence of input values from $\mathcal{I}$,
is a sequence of configurations $C$ from the infinite set $\{(M, s, i, I_r, IO)\}$.
A configuration $c$ has the form $c = (M, s, i, I_r,\allowbreak IO)$,
where $M$ is a state, $s$ is a step, $i$ is the last input received, $I_r$ is a sequence of remaining input values and $IO$ is the input/output sequence produced so far. The $k$th configuration $c_k$ in an execution is obtained from the $(k-1)$th configuration $c_{k-1} = (M, s, i, I_r, IO)$ where $s \neq halt$ in one of the following cases:
\begin{enumerate}
\item The first configuration $c_0$ is of the form $(M_0, s_0, \bot, I, \varnothing)$, where $M_0 =
(M_{{int}_0}, M_{{ext}_0})$;

\item $s \in \mathcal{S}_{internal}: c_k = (M', s', \bot, I_r, IO)$, where $(s', M', \bot) = s(M, i)$;

\item $s \in \mathcal{S}_{in}$ and the remaining inputs $I_r$ is not empty: $c_k = (M, s', head(I_r), tail(I_r), IO\cdot head(I_r))$ where $(s', head(I_r))\allowbreak = s(I_r)$;

\item $s \in \mathcal{S}_{in}$ and the remaining inputs $I_r$ is empty: $c_k = c_{k-1}$;

\item $s \in \mathcal{S}_{out}: c_k = (M, s', i, I_r, IO\cdot o')$, where $(s', o') = s(M)$;

\end{enumerate}
\end{definition}
In the definition, $head(I)$  denotes the head (leftmost) element in the sequence $I$ and $tail(I)$ denotes the remaining sequence without the head. The input value in $i$ is either consumed by the next internal step or updated by another input from the next input step. Execution is stuck if an input step is attempted in state in which there are no remaining inputs.
In what follows, we include the execution environment in the execution and we abuse notation to say $(M_{{ext}_0}, I, C)$ is
an execution of a program $P$.

\paragraph{Specifications}

We consider specifications that define the input/output behavior of programs.
Specifications are not concerned with how fast an output is produced or about the internal state of the program.

\begin{definition}\label{def:generalSpec} {\bf (Specification)}
Given a set $\mathcal{M}_{ext}$ of external states, a set $seq(\mathcal{I})$ of input sequences, and a set $seq(\mathcal{I} \cup \mathcal{O})$ of I/O sequences, specification $\Sigma$ is a predicate:
$\mathcal{M}_{ext} \times seq(\mathcal{I}) \times seq(\mathcal{I} \cup \mathcal{O}) \times \mapsto \{true, false\}$.
\end{definition}

We define the I/O sequence of a sequence of configurations $C$ to be a sequence $IO(C)$ of values from $\mathcal{I} \cup \mathcal{O}$ such that every finite prefix of $IO(C)$ is the IO sequence of some configuration $c \in C$ and every I/O sequence of a configuration $c \in C$ is a finite prefix of $IO(C)$.

An execution $(M_{{ext}_0}, I, C)$ of program $P$ satisfies a specification $\Sigma$ if $\Sigma(M_{{ext}_0}, I, IO(C)) = {true}$.
A specification distinguishes executions into those that satisfy the specification and those that do not.

A specification defines the external behavior of a program that is observed by a user.
The input sequence and I/O sequence are obviously part of external behavior.
We also include $\mathcal{M}_{ext}$ in specification domain because a user can have information about the external state.
For example, a user who has data stored in the file system considers the program's refusal to access the stored data a violation of the service specification; this is not the case if the user has no stored data.

\subsection{Hybrid executions and state mapping}

DSU is a process of updating software while it is running. This results in a hybrid execution in which part of the execution is that of the old program and part of the execution is for the new program. 

State mapping is a function $\delta$ mapping an internal state and a non-halt step of one program $P$ to an internal state and a step of another program $P'$, $\delta: \mathcal{M}_{int}^P \times (\mathcal{S}^P \setminus \{halt\}) \mapsto \mathcal{M}_{int}^{P'} \times \mathcal{S}^{P'}$.
The external state is not mapped because the environment is not necessarily updated. In addition, we cannot change input and output that
already occurred and that I/O must be part of the hybrid execution.

\begin{definition}\label{def:hybridExec} {\bf (Hybrid execution)}
A hybrid execution $(M_{{ext}_0}, I, \allowbreak C_P;C_{P'})$, produced by DSU using state mapping $\delta$ from program $P$
 to program $P'$,
is an execution $(M_{{ext}_0}, I, C_P)$ of $P$ concatenated with an execution $(M_{ext}', I_r', C_{P'})$ of $P'$
where the first configuration $c_{P'} = ((M_{int}', M_{ext}'), s', i', I_r', IO')$ in $C_{P'}$ is obtained by applying the state mapping to the last configuration $c_{P} = ((M_{int}, M_{ext}), s(\neq halt), i, I_r, IO)$ in $C_{P}$ as follows:
\begin{itemize}
\item $(M_{int}', s') = \delta(M_{int}, s)$;

\item $(i' = i) \wedge (I_r' = I_r) \wedge (IO' = IO) \wedge (M_{ext} \subseteq M_{ext}')$.
\end{itemize}
\end{definition}

\subsection{Backward compatibility}

In this paper, we consider updates in which the environment is not necessarily updated.
It follows that in order for the hybrid execution to be meaningful, the new program should provide functionality expected by both old and new users of the system. 

In practice, specifications are not explicitly available. Instead, the program is its own specification. This means that the specification that the program satisfies can only be inferred by the external behavior of the program.
Bug fixes create a dilemma for dynamic software updates. When a program has a bug, its external behavior does not captures its
 {\em implicit specification} and the update will change the behavior of the program. In what follows, we first discuss what flexibility we can be afforded for a backward compatible update and then we give formal definitions of backward compatibility and state our assumptions for allowing bug fixes.

We consider a hybrid execution starting from a program
$P = (\mathcal{S}, \mathcal{M}_{int} \times \mathcal{M}_{ext}, M_{{int}_0}, s_0, \mathcal{I}, \mathcal{O})$
and being updated to a program
$P' = (\mathcal{S}', \mathcal{M}_{int}' \times \mathcal{M}_{ext}', M_{{int}_0}', s_0', \mathcal{I}', \mathcal{O}')$.
We examine how the two programs should be related for a meaningful hybrid execution.
\begin{enumerate}
\item (Inputs)
Input set $\mathcal{I}'$ of $P'$ should be a superset of that $\mathcal{I}$ of $P$ to allow for {\em old users} to interact with $P'$
	after the update.
It is possible to allow for new input values in $\mathcal{I}'$ to accommodate new functionality under the assumption
	that old users do not generate new input values. Such new input values should be expected to produce erroneous output by old users
	as they are not part of $P$'s specification.

\item (Outputs)
Output produced by $P'$ should be identical to output produced by $P$ if all the input in an execution comes from the input set of $P$.
This is needed to ensure that interactions between old users and the program $P'$ can make sense from the perspective of old users.
This is true in the case that the update does not involve a bug fix, but what should be done if the update indeed involves a bug fix and
	the output produced by the old program was not correct to start with?
As far as syntax, a bug fix should not introduce new output values. As far as semantics, we should allow the bug fix to change what
	output is produced for a given input. We discuss this further under the bug fix heading.
In summary, if we ignore bug fixes, the new program should behave as the old program when provided with input meant for the old program.

\item (Bugfix)
Handling bug fixes is problematic. If the produced output already violates the fix, then there is no way for the hybrid execution
	to satisfy the {\em implicit semantics} of the program or the semantics of the new program.
Some bug fixes can be handled. For example, a bug that causes a program to crash for some input can be fixed to allow the program
	to continue executing. Applying the fix to a program that has not encountered the bug should not be problematic.
Another case is when the program should terminate for some input sequence, but the old program does not terminate.
A bug fix that allows the program to terminate should not present a semantic difficulty for old users.

In general, we assume that there are  {\em valid} executions and {\em invalid} executions of the old program.
I/O sequences produced in {invalid} executions are not in specification of the program.
We assume that an invalid execution will lead to an {\em error} configuration not explicitly handled by the program developers.
We do not expect the state mapping to change an error configuration into an non-error configuration just as static updating does not fix occurred errors. Besides, we do not attempt to determine if a particular configuration is an error configuration.
Such determination is not possible in general and very hard in practice.
We simply assume that the configuration at the time of the update is not an error configuration.
(which is equivalent to assuming the existence of an oracle $\mathcal{J}_P$ to determine if  a particular configuration is erroneous,
$\mathcal{J}_P(C_P)$ = true if the configuration $C_P$ is not erroneous).

\item (New functionality) New functionality is usually accompanied by new inputs/outputs and the expansion of external state. We assume that new functionality is independent of existing functionality in the sense that programs $P$ and $P'$ produce the same I/O sequence when receiving inputs in $\mathcal{I}$ only.
    We therefore assume all new inputs $\mathcal{I}'\setminus\mathcal{I}$ are introduced by new functionality.

Every external state of $P$ is part of some external state of program $P'$ because of the definition of the specification of $P$. We only consider expansion of the external state of $P$ for new functionalities in $P'$ where the expansion of external state is independent of values in existing external state. One of the motivating examples is to add application settings for new program feature.
\end{enumerate}

In light of the discussion above we give the following definition of backward compatibility in the absence of bug fixes.

\begin{definition}\label{def:bwdcomptGeneral} {\bf (Backward compatible hybrid executions)}
Let $P = (\mathcal{S}, \mathcal{M}_{int} \times \mathcal{M}_{ext}, M_{{int}_0}, s_0, \mathcal{I}, \mathcal{O})$ be a program satisfying a specification $\Sigma$.
We say that a hybrid execution $(M_{ext}, I, C_P;C_P')$ from $P$ to a program
$P' = (\mathcal{S}', \mathcal{M}_{int}' \times \mathcal{M}_{ext}', M_{{int}_0}', s_0', \mathcal{I}', \mathcal{O}')$ is backward compatible with implicit specification of $P$ if all of the following hold:
\begin{itemize}
\item The last configuration in $C_P$ is not an error configuration, $C_P = ``C';(M, s', i, I_r, IO)" : \mathcal{J}_P(C_P)$=true.

\item The hybrid execution satisfies the specification $\Sigma$ of $P$,

\noindent$\Sigma(M_{ext}, I, IO(C_P;C_P')) = true$;

\item Inputs/outputs/external states of $P$ are a subset of those of $P': \mathcal{I} \subseteq \mathcal{I}', \mathcal{O} \subseteq \mathcal{O}' \text{ and } \mathcal{M}_{ext} \subseteq \mathcal{M}_{ext}'$;
\end{itemize}
\end{definition}

If there is bug fix between programs $P'$ and $P$, we need to adapt Definition~\ref{def:bwdcomptGeneral} to allow for
some executions on input sequences from $\mathcal{I}$ to violate the specification of $P$. Above we identified two cases in
which bug fixes are safe (replacing a response with no response or replacing a no response with a correct response without
introducing new output values). We omit the definition.

We have the backward compatible updates by extending the definition of a backward compatible hybrid execution to all possible hybrid executions.
\begin{definition}\label{def:bcupdates} {\bf (Backward compatible updates)}
We say an updated program $P'$ is backward compatible with a program $P$ in configuration $C$ if there is hybrid execution, from configuration $C$ of $P$ to $P'$ that is backward compatible with specification of $P$.
\end{definition}

\subsection{Backward compatible program behavior changes}
With the formal definition of backward compatibility,
it is desirable to check what behavior changes of an updated program help ensure a safe update.
Backward compatibility is essentially a relation between I/O sequences produced by an old program and those produced by an updated program. We summarized typical possibilities of the relation into six cases in Figure~\ref{NewProgBehaviorFig} by considering consequence of major update motivation (i.e., new functionality, bug fix and program perfective/preventive needs~\cite{iso14764}).
According to David Parnas~\cite{Parnas94}, a program is updated to adapt to changing needs.
In other words, program changes are to produce more or less or different output according to changing needs. These changes are captured by case 2, 3, 4, 5 and 6 in Figure~\ref{NewProgBehaviorFig}.
We also capture output-preserving changes which are most likely motivated by the program developer's own needs (e.g., software maintainability),
which is case 1 in Figure~\ref{NewProgBehaviorFig}.

\begin{figure}
		\begin{center}
			\begin{tabular}{| c | l |}
                \hline
				Case          & Formal new program behavior                                         \\ \hline
                 1            & the old behavior including external state extension: \\
				              & $\Sigma_{P} \subseteq \Sigma_{P'}$,
                                 or $\Sigma_{P'}$ = \{$(M_{ext'}, \text{oneseq}(\mathcal{I}), \text{oneseq}(\mathcal{I}\cup\mathcal{O})) \rightarrow \text{val}$       \\
                              & $| \exists (M_{ext}, \text{oneseq}(\mathcal{I}), \text{oneseq}(\mathcal{I}\cup\mathcal{O})) \rightarrow \text{val}$
                                 in $\Sigma_{P}$ \\
                              &  and $M_{ext} \subseteq M_{ext'}$ \} where $\mathcal{I} = \mathcal{I'}$, $\mathcal{O} = \mathcal{O'}$
                                  and $\mathcal{M}_{ext} \subseteq \mathcal{M}_{ext'}$    \\ \hline

                 2            & the old behavior for old input and consuming inputs \\
                              & that are only from new clients:                               \\
				              & $\Sigma_{P} \subseteq \Sigma_{P'} \land$ \\
                              &   $\Sigma_{P'} \setminus \Sigma_{P}$ =
                                     \{$({M}_{ext}, \text{oneseq}(\mathcal{I'}), \text{oneseq}(\mathcal{I'} \cup \mathcal{O'})) \rightarrow \text{true}$ \\
                              &   $| \, \text{oneseq}(\mathcal{I'} \cup \mathcal{O'}) \text{ includes at least one input in } (\mathcal{I}' \setminus \mathcal{I})$\} $\neq \emptyset$ \\
                              & where $\mathcal{I} \subset \mathcal{I'}$, $\mathcal{O} \subseteq \mathcal{O'}$
                                  and $\mathcal{M}_{ext} = \mathcal{M}_{ext'}$    \\ \hline

                 3            & producing more output while the old program terminates:      \\
                              & $\Sigma_{P'} \setminus \Sigma_{P}$ =
                                   $\{(M_{ext}, \text{oneseq}(\mathcal{I}), \text{oneseq}(\mathcal{I}\cup\mathcal{O})) \mapsto \text{false}$  \\
                              & $| (M_{ext}, \text{oneseq}(\mathcal{I}), \text{oneseq}(\mathcal{I}\cup\mathcal{O})) \in \Delta_{f} \neq \emptyset\}$ \\
                              & $\cup$ $\{(M_{ext}, \text{oneseq}(\mathcal{I}), \text{oneseq}(\mathcal{I}\cup\mathcal{O})\centerdot\text{oneseq'}(\mathcal{I}\cup\mathcal{O})) \mapsto \text{true}$ \\
                              & $| (M_{ext}, \text{oneseq}(\mathcal{I}), \text{oneseq}(\mathcal{I}\cup\mathcal{O})\centerdot\text{oneseq'}(\mathcal{I}\cup\mathcal{O})) \in \Delta_{t} \neq \emptyset\}$ \\
                              & $\Sigma_{P} \setminus \Sigma_{P'}$ =
                                   $\{(M_{ext}, \text{oneseq}(\mathcal{I}), \text{oneseq}(\mathcal{I}\cup\mathcal{O})) \mapsto \text{false}$  \\
                              & $| (M_{ext}, \text{oneseq}(\mathcal{I}), \text{oneseq}(\mathcal{I}\cup\mathcal{O})) \in \Delta_{t} \neq \emptyset\}$ \\
                              & $\cup$ $\{(M_{ext}, \text{oneseq}(\mathcal{I}), \text{oneseq}(\mathcal{I}\cup\mathcal{O})\centerdot\text{oneseq'}(\mathcal{I}\cup\mathcal{O})) \mapsto \text{true}$ \\
                              & $| (M_{ext}, \text{oneseq}(\mathcal{I}), \text{oneseq}(\mathcal{I}\cup\mathcal{O})\centerdot\text{oneseq'}(\mathcal{I}\cup\mathcal{O})) \in \Delta_{f} \neq \emptyset\}$ \\
                              & where $\mathcal{I} = \mathcal{I'}$, $\mathcal{O} = \mathcal{O'}$
                                  and $\mathcal{M}_{ext} = \mathcal{M}_{ext'}$    \\ \hline

                 4            & termination while the old program produces erroneous output: \\
				              & $\Sigma_{P'} \setminus \Sigma_{P}$ =
                                   $\{(M_{ext}, \text{oneseq}(\mathcal{I}), \text{oneseq}(\mathcal{I}\cup\mathcal{O})) \mapsto \text{true}$  \\
                              & $| (M_{ext}, \text{oneseq}(\mathcal{I}), \text{oneseq}(\mathcal{I}\cup\mathcal{O})) \in \Delta_{t} \neq \emptyset\}$ \\
                              & $\cup$ $\{(M_{ext}, \text{oneseq}(\mathcal{I}), \text{oneseq}(\mathcal{I}\cup\mathcal{O})\centerdot\text{oneseq'}(\mathcal{I}\cup\mathcal{O})) \mapsto \text{false}$ \\
                              & $| (M_{ext}, \text{oneseq}(\mathcal{I}), \text{oneseq}(\mathcal{I}\cup\mathcal{O})\centerdot\text{oneseq'}(\mathcal{I}\cup\mathcal{O})) \in \Delta_{f} \neq \emptyset\}$ \\

                              & $\Sigma_{P} \setminus \Sigma_{P'}$ =
                                   $\{(M_{ext}, \text{oneseq}(\mathcal{I}), \text{oneseq}(\mathcal{I}\cup\mathcal{O})) \mapsto \text{true}$  \\
                              & $| (M_{ext}, \text{oneseq}(\mathcal{I}), \text{oneseq}(\mathcal{I}\cup\mathcal{O})) \in \Delta_{f} \neq \emptyset\}$ \\
                              & $\cup$ $\{(M_{ext}, \text{oneseq}(\mathcal{I}), \text{oneseq}(\mathcal{I}\cup\mathcal{O})\centerdot\text{oneseq'}(\mathcal{I}\cup\mathcal{O})) \mapsto \text{false}$ \\
                              & $| (M_{ext}, \text{oneseq}(\mathcal{I}), \text{oneseq}(\mathcal{I}\cup\mathcal{O})\centerdot\text{oneseq'}(\mathcal{I}\cup\mathcal{O})) \in \Delta_{t} \neq \emptyset\}$ \\
                              & where $\mathcal{I} = \mathcal{I'}$, $\mathcal{O} = \mathcal{O'}$
                                  and $\mathcal{M}_{ext} = \mathcal{M}_{ext'}$    \\ \hline

                5             & different output that is functionally equivalent to old output: \\
				              & $(\Sigma_{P} \neq \Sigma_{P'}) \land (\Sigma_{P} \equiv \Sigma_{P'})$     \\
                              &   where $\mathcal{I} = \mathcal{I'}$,
                                        $(\mathcal{O} \neq \mathcal{O'}) \land (\mathcal{O} \equiv \mathcal{O'})$
                                  and $\mathcal{M}_{ext} = \mathcal{M}_{ext'}$    \\ \hline
                                  
                6             & enforcing restrictions on program state:    \\
				              & $\Sigma_{P'} \setminus \Sigma_{P}$ =
                                   $\{(M_{ext}, \text{oneseq}(\mathcal{I}), \text{oneseq}(\mathcal{I}\cup\mathcal{O})) \mapsto \text{false}$  \\
                              & $| (M_{ext}, \text{oneseq}(\mathcal{I}), \text{oneseq}(\mathcal{I}\cup\mathcal{O})) \in \Delta_{arbi} \neq \emptyset\}$ \\
                              & $\Sigma_{P} \setminus \Sigma_{P'}$ =
                                   $\{(M_{ext}, \text{oneseq}(\mathcal{I}), \text{oneseq}(\mathcal{I}\cup\mathcal{O})) \mapsto \text{true}$  \\
                              & $| (M_{ext}, \text{oneseq}(\mathcal{I}), \text{oneseq}(\mathcal{I}\cup\mathcal{O})) \in \Delta_{arbi} \neq \emptyset\}$ \\
                              & where $\mathcal{I} = \mathcal{I'}$, $\mathcal{O} = \mathcal{O'}$
                                  and $\mathcal{M}_{ext} = \mathcal{M}_{ext'}$    \\ \hline                                  
			\end{tabular}
		\end{center}
		\caption{Six cases of formalized general new program behavior}\label{NewProgBehaviorFig}
\end{figure}

Furthermore, we find that an update is {\em backward compatible} if in every execution the new program behavior is one of the six cases in Figure~\ref{NewProgBehaviorFig}.
Cases 1 and 2 are obviously backward compatible because an old client is guaranteed to get old responses.
Cases 3, 4, 5, and 6 are not obviously backward compatible.
Unlike case 1 and 2, case 3, 4 and 5 are backward compatible under specific assumptions on program semantics while case 6 is different.
Case 3 is backward compatible because we assume the change is either adding new functionality, or fixing a bug in which the old program hanged or crashed. Similarly, case 4 is backward compatible.
Case 5 is backward compatible because different I/O interaction could express the same application semantics. For example, a greeting message could be changed from ``hi" to ``hello".
Case 6 is backward compatible in that the new program makes implicit specification of the program explicit by enforcing restrictions on program state and therefore eliminating undesired I/O sequence.  

The six cases in Fig.~\ref{NewProgBehaviorFig} have covered the changes of output, including more or less or different output.
There exists more specific cases of backward compatible program behavior changes under various specific assumptions.
However, these more specific cases could be attributed to one of the six cases as far as the changes of output are concerned.
In conclusion, it is not possible to go much beyond the six cases of backward compatibility in Fig.~\ref{NewProgBehaviorFig}.

\section{Real world backward compatible update classes: brief description}\label{sec:briefUpdateClasses}
We have studied evolution of three real world programs (i.e., vsftpd, sshd and icecast)
 to identify real world changes that are backward compatible.
 We chose these three programs because the programs are widely used in practice~\cite{vsftpdweb,sshusers} and are widely studied in the DSU community~\cite{upstare_usenix_09,iuliandissertation}.
We have studied several years of releases of vsftpd and consecutive updates of sshd and icecast. This is because vsftpd is more widely studied by the DSU community~\cite{upstare_usenix_09,iuliandissertation, upstare_dissertation_09}.

Our study of real world program evolution is carried out as follows.
We examined every changed function
manually to classify updates. For every individual change, we first identified the motivation of the change, then the assumptions under which the change could be considered backward compatible. If the assumption under which the change is considered backward compatible is reasonable, we recorded the change into one particular update class. Finally we summarized common update classes observed in the evolution of studied programs.

\begin{figure}
		\begin{center}
			\begin{tabular}{l | c | r | r}
				Software version                 & Update date          & Total         & Class       \\ \hline
				vsftpd 1.1.0 \textendash 1.1.1   & 2002-10-07           & 16            & 8           \\
				vsftpd 1.1.1 \textendash 1.1.2   & 2002-10-16           & 8             & 1           \\
				vsftpd 1.1.2 \textendash 1.1.3   & 2002-11-09           & 8             & 4           \\
				vsftpd 1.1.3 \textendash 1.2.0   & 2003-05-29           & 61            & 9           \\
				vsftpd 1.2.0 \textendash 1.2.1   & 2003-11-13           & 33            & 11          \\
				vsftpd 1.2.1 \textendash 1.2.2   & 2004-04-26           & 10            & 6           \\
				vsftpd 1.2.2 \textendash 2.0.0   & 2004-07-01           & 52            & 13          \\
				vsftpd 2.0.0 \textendash 2.0.1   & 2004-07-02           & 7             & 4           \\
				vsftpd 2.0.1 \textendash 2.0.2   & 2005-03-03           & 23            & 4           \\
				vsftpd 2.0.2 \textendash 2.0.3   & 2005-03-19           & 18            & 8           \\
				vsftpd 2.0.3 \textendash 2.0.4   & 2006-01-09           & 14            & 9           \\
				vsftpd 2.0.4 \textendash 2.0.5   & 2006-07-03           & 21            & 15  \\
				vsftpd 2.0.5 \textendash 2.0.6   & 2008-02-13           & 20            & 9   \\
                vsftpd 2.0.6 \textendash 2.0.7   & 2008-07-30           & 16            & 8   \\
                vsftpd 2.0.7 \textendash 2.1.0   & 2009-02-19           & 53            & 11   \\
                vsftpd 2.1.0 \textendash 2.1.2   & 2009-05-29           & 21            & 9   \\
                vsftpd 2.1.2 \textendash 2.2.0   & 2009-08-13           & 34            & 14   \\
			\end{tabular}
			\begin{tabular}{l | c | r | r}
				Software version                 & Update date          & Total         & Class \\ \hline
                vsftpd 2.2.0 \textendash 2.2.2   & 2009-10-19           & 21            & 5   \\
                vsftpd 2.2.2 \textendash 2.3.0   & 2010-08-06           & 13            & 3  \\
                vsftpd 2.3.0 \textendash 2.3.2   & 2010-08-19           & 5             & 0  \\
                vsftpd 2.3.2 \textendash 2.3.4   & 2011-03-12           & 7             & 0  \\
                vsftpd 2.3.4 \textendash 2.3.5   & 2011-12-19           & 14            & 6  \\
                vsftpd 2.3.5 \textendash 3.0.0   & 2012-04-10           & 23            & 4  \\
                vsftpd 3.0.0 \textendash 3.0.2   & 2012-09-19           & 40            & 2   \\ \hline
				sshd 3.5p1 \textendash 3.6p1     & 2003-03-31           & 95            & 34  \\
				sshd 3.6p1 \textendash 3.6.1p1   & 2003-04-01           & 13            & 12  \\
				sshd 3.6.1p1 \textendash 3.6.1p2 & 2003-04-29           & 16            & 12  \\
				sshd 4.5p1 \textendash 4.6p1     & 2007-03-07           & 48            & 13  \\
                sshd 6.6p1 \textendash 6.7p1     & 2014-10-06           & 283           & 51 \\ \hline
				icecast 0.8.0 \textendash 0.8.1  & 2004-08-04           & 4             & 3   \\
				icecast 0.8.1 \textendash 0.8.2  & 2004-08-04           & 2             & 0   \\
				icecast 2.3.0 \textendash 2.3.1  & 2005-11-30           & 47            & 10  \\
				icecast 2.3.1 \textendash 2.3.2  & 2008-06-02           & 250           & 28 \\
                icecast 2.4.0 \textendash 2.4.1  & 2014-11-19           & 178           & 154 \\
			\end{tabular}
		\end{center}
		\caption{Statistics of classified real world software update}\label{CategoryPercentFig}
\end{figure}

Fig.~\ref{CategoryPercentFig} shows the statistics from our study of real world program evolution where ``total" refers to the number of all updated functions, ``class" refers to the number of updated functions with at least one classified update pattern.
In summary, 32\% of all updated functions include at least one classified program update; the unclassified updates are mostly bug fix that are related to specific program logic.
We summarized seven most common real world update classes from all the studied updates in Fig.~\ref{UpdateClassRequiredAssumptions} and we believe that these update classes are also widespread in other program evolution.
Each of the six real world update classes falls in one of the five cases of backward compatibility in Fig.~\ref{NewProgBehaviorFig}.
We present informal descriptions of all update classes including required assumptions for the two programs to produce same or equivalent output sequence which guarantees backward compatible DSU.

\begin{figure}
		\begin{center}
			\begin{tabular}{ | c | l |} \hline
			Update class (Case)              & Required assumptions for backward          \\
                                             & compatible update                          \\ \hline
			program equivalence (1)          & none                                       \\ \hline
			new config. variables (1)        & no redefinitions of new config             \\
                                             &  variables after initialization            \\ \hline
			enum type extension (2)          & no inputs from old clients match           \\
                                             & the extended enum labels                   \\ \hline
			var. type weakening (3)          & no intentional use of value type           \\
                                             & mismatch and array out of bound            \\ \hline
			exit on error (4)                & correct error check before exit            \\ \hline
            improved prompt msgs (5)         & changing prompt messages for               \\
                                             & more effective communication               \\ \hline
			missing var. init. (6)           & no intentional use of undefined            \\
                                             & variables                                  \\ \hline
			\end{tabular}
		\end{center}
		\caption{Required assumptions for real world backward compatible update classes}\label{UpdateClassRequiredAssumptions}
\end{figure}

\subsection{Observational equivalence: the old behavior}
In case 1 in Fig.~\ref{NewProgBehaviorFig}, two programs are backward compatible because the new program keeps all old behaviors (``observational equivalence"). In our study, we differentiate two types of ``observational equivalence" based on if assumptions are required.

\noindent{\bf Program equivalence}
We consider several types of program changes that are allowed by ``observational equivalence" without user assumptions. These changes include: loop fission or fusion, statement reordering or duplication, and extra statements unrelated to output(e.g., logging related changes).
We incorporate these changes in our framework of program equivalence which ensures two programs produce the same output regardless of whether the programs terminate or not.
The details of the formal treatment is in Section~\ref{sec:equivalence}.

\noindent{\bf Specializing new configuration variables}
Another update class of ``observational equivalence" is ``specializing new configuration variables", which is backward compatible under user assumptions. In this update class, new configuration variables are introduced to generalize functionality.
\begin{figure}
\begin{tabbing}
xxxxxx\=xxx\=xxx\=xxx\=xxxxxxxxxxxxxxxx\=xxxx\=xxx\=xxxxx\= \kill
\>1: \>\>\>                                           \>1': \> {\bf If} $(b)$  {\bf then}       \> \\
\>2: \>\>\>                                           \>2': \> \> {\bf output} $a * 2$                   \> \\
\>3: \>\>\>                                           \>3': \> {\bf else}                        \> \\
\>4: \> {\bf output} $a + 2$ \>\>                     \>4': \> \> {\bf output} $a + 2$                   \> \\

\> \> old\>\>                                         \>  \> new \>
\end{tabbing}
\vspace{-3ex}
\caption{Specializing new configuration variables}\label{fig:newParamExample}
\end{figure}
For example, in Fig.~\ref{fig:newParamExample}, a new configuration variable $b$ is used to introduce new code.
The two statement sequences in Fig.~\ref{fig:newParamExample} are equivalent when the new variable $b$ is specialized to 0.
In general, if all new code is introduced in a way that is similar to that  in Fig.~\ref{fig:newParamExample} where there is a valuation of new configuration variables under which new code is not executed, and new configuration variables are not redefined after initialization, then the new program and the old program produce the same output sequence.
The point is that new functionality is not introduced abruptly in interaction with an old client.
Instead new functionality could be enabled for a new client when old clients are not a concern.

\subsection{Enum. type extension: old behavior for old input and allowing new input}

Enumeration types allow developers to list similar items. New code is usually accompanied with the introduction of new enumeration labels. Fig.~\ref{fig:enumTypeExtExample} shows an example of the update. The new enum label $o_2$ gives a new option for matching the value of the variable $a$, which introduces the new code ``{\bf output} $3 + c$".
\begin{figure}
\begin{center}
\begin{tabbing}
xxxxxx\=xxx\=xxx\=xxx\=xxxxxxxxxxxxxxxx\=xxxx\=xxx\=xxxxx\= \kill
\>1: \>{\bf enum} ${id}$ \{$o_1$\}\>\>          \>1': \> {\bf enum} ${id}$ \{$o_1, o_2$\} \> \\
\>2: \>a : enum $id$\>\>                        \>2': \> a : enum $id$                    \> \\
\>3: \>{\bf If} $(a == o_1)$ {\bf then}\>\>     \>3': \> {\bf If} $(a == o_1)$ {\bf then}\> \\
\>4: \> \>$\text{{\bf output}} \, 2 + c$ \>           \>4': \> \> $\text{{\bf output}} \, 2 + c$                  \> \\
\>5: \>\>\>                                     \>5': \> {\bf If} $(a == o_2)$ {\bf then}\> \\
\>6: \>\>\>                                     \>6': \> \> $\text{{\bf output}} \, 3 + c$                  \> \\

\> \> old\>\>                                         \>  \> new \>
\end{tabbing}
\vspace{-3ex}
\end{center}
\caption{Enumeration type extension}\label{fig:enumTypeExtExample}
\end{figure}
To show enumeration type extensions to be backward compatible,
we assume that values of enum variables, used in the If-predicate introducing the new code, are only from inputs that cannot be translated to new enum labels. This is case 2 of the backward compatibility.

\subsection{Variable type weakening: more output when the old program terminates}
In program updates, variable types are changed either to allow for larger ranges (weakening)
or smaller ranges to save space (strengthening). For example,
an integer variable might be changed to become a long variable
to avoid integer overflow or a long variable might be changed to an
integer variable because the larger range of long is not needed.
Type weakening also includes adding a new enumeration value and increasing array size.
The kinds of strengthening or weakening that should be allowed are application dependent and would need to be defined by the user in general.
The type weakening considered is either changes from type int to long or increase of array size.
These updates fix integer overflow or array index out of bound respectively, the case 3 of backward compatibility.
Implicitly, we assume that there is no intentional use of integer overflow and array out of bound as program semantics.

\subsection{Exit on errors: stopping execution while the old program produces more output}
One kind of bug fix, which we call {\em exit on error}, causes a program to exit in observation of errors that depend on application semantic.
\begin{figure}
\begin{tabbing}
xxxxxx\=xxx\=xxx\=xxx\=xxxxxxxxxxxxxxxx\=xxxx\=xxx\=xxxxx\= \kill
\>1: \>\>\>                                    \>1': \> {\bf If} $(1/(a-5))$ {\bf then}\> \\
\>2: \>\>\>                                    \>2': \> \> {\bf skip} \> \\
\>3: \> {\bf output} $a$\>\>                   \>3': \> {\bf output} $a$  \> \\
\>\\
\> \> old\>\>                                         \>  \> new \>
\end{tabbing}
\vspace{-3ex}
\caption{Exit-on-error}\label{fig:exitOnErrExample}
\end{figure}
Fig.~\ref{fig:exitOnErrExample} shows an example of exit-on-error update.
In the example, the fixed bugs refer to the program semantic error that $a = 5$.
Instead of using an ``exit" statement, we rely on the crash from expression evaluations to model the ``exit".
When errors do not occur, the two programs in Fig.~\ref{fig:exitOnErrExample} produce the same output sequence.
This is case 4 of backward compatibility.
Naturally, we assume that all error checks are correct.

\subsection{Improved prompt messages: functionally equivalent outputs}
In practice, outputs could be classified into prompt outputs and actual outputs. Prompt outputs are those asking clients for inputs, which are constants hardcoded in output statements. Actual outputs are dynamic messages produced by evaluation of non-constant expressions in execution.
If the differences between two programs are only the prompt messages that a client receives, we consider that the two programs are equivalent.
The prompt messages are the replaceable part of program semantics.
We observe cases of improving prompt messages in program evolution for effective communication.
The changes of prompt outputs do not matter only for human clients.
This is case 5 of backward compatibility.

\subsection{Missing variable initialization: enforcing restrictions on program states}
Another kind of bug fix, which we call {\em missing variable initialization}, includes initializations for variables whose arbitrary initial values can affect the output sequence in the old program.
\begin{figure}
\begin{tabbing}
xxxxxx\=xxx\=xxx\=xxx\=xxxxxxxxxxxxxxxx\=xxxx\=xxx\=xxxxx\= \kill
\>1: \>\>\>                                    \>1': \> $b: = 2$ \> \\
\>2: \>{\bf If} $(a > 0)$ {\bf then}\>\>       \>2': \> {\bf If} $(a > 0)$ {\bf then} \> \\
\>3: \> \> $b := c + 1$\>                      \>3': \> \> $b := c + 1$               \> \\
\>4: \> {\bf output} $b + c$ \>\>              \>4': \> {\bf output} $b + c$          \> \\

\> \> old\>\>                                         \>  \> new \>
\end{tabbing}
\vspace{-3ex}
\caption{Missing initialization}\label{fig:missInitExample}
\end{figure}
Fig.~\ref{fig:missInitExample} shows an example of missing variable initialization. The initialization $b: = 2$ ensures the value used in ``{\bf output} $b + c$" not to be undefined. Despite of initialization statements, the two programs are same.
In general, initializations of variables only affect rare buggy executions of the old program,
 where undefined variables affect the output sequence.
This update class is case 6 of backward compatibility and we assume that there is no intentional use of undefined variable in the program.
When there are no uses of variables with undefined variables in executions of the old program,
the two programs produce the same output sequence.

\section{Formal programming language}\label{sec:formallanguage}
We present the formal programming language based on which we prove our semantic equivalence results and describe categories of backward compatible changes. We first explain the language syntax, then the language semantics.

\subsection{Syntax of the formal language}

\begin{figure}
\begin{scriptsize}

\begin{center}
    \begin{tabular}{p{1cm} p{0.3cm} p{0.9cm} p{0.3cm} p{.9cm} p{0.3cm} p{.9cm} p{0.3cm}}
        {Identifier}       & {$id$}      & {Constant}    & {{$n$}}
        & {Label}          & {{$l$}}     &   & \\
        \hline
    \end{tabular}

    \begin{tabular}{  p{1.7cm}  p{0.3cm}  p{0.2cm}  l }
    {Enum Items}             & {$el$}                & ::=             &  {$l \; | \; {el}_1,{el}_2$} \\
    {Enumeration}            & {$EN$}                & ::=             &  $\varnothing \; | \; \text{enum} \, id \, \{el\} \, | \, {EN}_1,{EN}_2$ \\

    {Prompt Msg}             & {$msg$}               & ::=             &  $l : n \; | \; {msg}_1, {msg}_2$ \\
    {Prompts}                & {$Pmpt$}              & ::=             &  $\varnothing \; | \; \{{msg}\}$  \\

    {Base type}              & {$\tau$}              & ::=             &  {$\text{Int} \, | \, \text{Long} \, | \, \text{pmpt} \; | \; \text{enum}\, id$}          \\

    {Variables}              & {$V$}                 & ::=             & {$\varnothing \, | \, \tau \, id \, | \, \tau \, id[n] \, | \, V_1,V_2$} \\

    {Left value}              & {$lval$}               & ::=             & {$id \, | \, {id_1}[{id}_2] \, | \, id[n]$} \\

    {Expression}             & {$e$}                 & ::=             & {$id == l \, | \, lval \, | \, \text{other}$} \\

    {Statement}              & {$s$}                 & ::=            & {$lval := e \; | \; \text{input} \; id \; | \; \text{output} \; e \; | \; \text{skip}$}\\
                             &                       &                & $|$ {while ($e$) \{$S$\} }  \\

    {Stmt Seq.}             & {$S$}                 & ::=            & {$s_1;...;s_k \; \text{for} \; k\geq1$}\\

    {Program}                & {$P$}                 & ::=             & {${Pmpt;EN;V;S_{entry}}$} \\
\\
\hline
    \end{tabular}
\end{center}
\caption{Abstract syntax}\label{fig:syntaxfig}
\end{scriptsize}
\end{figure}

The language syntax is in Figure.~\ref{fig:syntaxfig}.
We use $id$ to range over the set of identifiers, $n$ to range over integers, $l$ to range over labels.
We assume unique identifiers across all syntactic categories, unique labels across all enumeration types and the prompt type.
We have base type Int and Long for integer values.
The integers defined in type Int are also defined in type Long. Every label defined in the prompt type is related with an integer constant as the actual value used in output statement.
We differentiate type Long and Int to define the bug fix of type relaxation from Int to Long to prevent overflow in calculation (e.g., a + b can cause an error with Int but not with Long).
The type Int is necessary reflecting the concern of space and time efficiency in practical computation.
We also have user-defined enumeration type, prompt type and array type.

We explicitly have ``$id == l$" and $lval$ as expressions for convenience of the definition of specific updates.
To make our programming language general and to separate the concern of expression evaluation, we parameterize the language by ``other" expressions which are unspecified.

We have explicit input and output statement because we model the program behavior as the I/O sequence which is the observational behavior of a program. The I/O statement makes it convenient for the argument of program behavior correspondence. In this paper, every I/O value is an integer value which is a common I/O representation~\cite{gordon1994functional}.
A Statement sequence is defined as $s_1;...;s_k$ where $k>0$ for the convenience of syntax-direct definition from both ends of the sequence.

A program is composed of a possibly empty prompt type $Pmpt$, a possibly empty sequence of enumeration types $EN$, a possibly empty sequence of global variables $V$ and a sequence of entry statements $S_{entry}$.
Finally, we have a standard type system based on our syntax.

\subsection{Small-step operational semantics of the formal language}

\begin{figure}
\begin{center}
\begin{scriptsize}
\begin{tabular} {p{1.6cm}p{0.4cm}p{0.1cm}p{1.6cm}p{3cm}}%
{Values}       & $v$      &$\in$   &{$\mathbb{Z}_L \, \cup \, \mathbb{L}$}     & {integer values in type long and enum/prompt labels}\\

{I/O values}   & $v_{io}$ &$\in$   &\multicolumn{2}{l}{$\mathbb{Z}_L$}                \\

{Inputs}       & $v_i$       & ::=    &{$\underline{v}_{io}$}                    &  {tagged input values} \\
{Eval. values} & $v_{\text{err}}$ & ::=    &{$v \, | \, \text{error}$}                &  {values and the runtime error}\\
\\

{Param. types} & ${\tau}_{\top}$ & ::=  & \multicolumn{2}{l}{$\tau \, | \, \text{array}(\tau, n)$} \\
\\

{Loop Labels}   & ${loop}_{lbl}$   & $\in$  &$\mathbb{N}$                &  \\  
\\
\hline
\end{tabular}

\end{scriptsize}
\end{center}

\vspace{-3ex}
\caption{Values, types and domains}\label{fig:domains}
\end{figure}

\begin{figure}[t!]
\begin{center}
\begin{scriptsize}
\begin{tabular} {p{1.5cm}p{.2cm}p{.1cm}p{2.6cm}p{2.5cm}}

{Crash flag}           & $\mathfrak{f}$  & ::=     & 0 $|$ 1                                     & \\

{Overflow flag}        & $\mathfrak{of}$ & ::=     & 0 $|$ 1                                        & \\

{Type Env.}            & $\Gamma$        & ::=     & \multicolumn{2}{l}{$\varnothing \, | \, id : {\tau}_{\top} \, | \, id : \{l_1,...,l_k\} \, | \, {\Gamma}_1, {\Gamma}_2$}                               \\

{Loop counter}         & ${loop}_c$            & ::=     &\multicolumn{2}{l}{$({loop}_{lbl} \mapsto (n \, | \, \bot))$}                     \\

{Value store}          & $\vals$         & ::=     & $\mit{id} \mapsto (\mit{v} \, | \, \bot)$                     & values of scalar variables   \\
                       &                 &$\;|\;$  &${id} \mapsto (n \mapsto ({v} \, | \, \bot))$         & values of array elements \\
                       &                 &$\;|\;$  &${id}_{I}  \mapsto  {v}_{io}^{*}$                                    & input sequence \\
                       &                 &$\;|\;$  &${id}_{IO} \mapsto  {(v_i \; | \; v_o)}^* $                          & I/O sequence \\
\\
{State}                & $m$             &::=    &\multicolumn{2}{l}{$(\mathfrak{f}, \mathfrak{of}, \Gamma, {loop}_c, \vals)$} \\
\\
\hline
\end{tabular}
\end{scriptsize}
\end{center}
\caption{Elements of an execution state }\label{fig:semelems}
\end{figure}

%
Figure~\ref{fig:domains} shows semantic categories of our language.
We consider values to be either labels $\mathbb{L}$ or integer numbers $\mathbb{Z}_L$ defined in type Long.
The integer numbers defined $\mathbb{Z}_I$ of type Int are a proper subset of those in type Long,
$\mathbb{Z}_I \subset \mathbb{Z}_L$. We use the notation $\mathbb{Z}_{L+}$ for the positive integers defined in type Long.
We use the notation $\text{udf}\llbracket \tau \rrbracket$ for an undefined value of type $\tau$. Unlike the ``undef" in Clight~\cite{Clight}, we need to parameterize the undefined value with a type $\tau$ because we do not have an underlining memory model that can interpret any block content according to a type.
An individual value in I/O sequence is an integer number with tag differentiating inputs and outputs, our tags for inputs and outputs are standard notations~\cite{gordon1994functional}.
The value from expression evaluation is a pair. One of the pair is  either a value $v$ or ``error" for runtime errors(e.g., division by zero); the other is the overflow flag (i.e., 0 for no overflow).

We use notation ${\tau}_{\top}$ for all types  that are defined in syntax, including array types.

Every loop statement in a program is with a unique label ${loop}_{lbl}$ of a natural number in order to differentiate their executions.

The composition of an execution state is in Figure~\ref{fig:semelems}.
\begin{enumerate}
\item
The crash flag $\mathfrak{f}$ is initially zero and is set to one whenever an exception occurs. Once the crash flag is set, it is not cleared.
We only consider unrecoverable crashes.
The crash flag is used to make sure that updates do not occur in error states.

\item
The overflow flag $\mathfrak{of}$ is initially zero and is set to one whenever an integer overflow in expression evaluation occurs. Overflow flag is sticky in the sense that once it is set, the flag is not cleared.
According to~\cite{intoverflow}, integer overflows are common in mature programs.

\item
$\Gamma$ is the type environment mapping enumeration type identifers and variable identifiers to their types.
Type environment is necessary for checking array index out of bound or checking value mismatch in execution of input/assignment statement.

\item
Loop counters ${loop}_c$ are to record the number of iterations  for one instance of a loop statement. The loop counters ${loop}_c$ is not necessary for program executions but are needed for our reasoning of the execution of loops.
When a counter entry for loop label $n$ is not defined in loop counters ${loop}_c$, we write ${loop}_c(n) = \bot$.
Otherwise, we write ${loop}_c(n) \neq \bot$.

\item
The value store $\vals$ is a valuation for scalar variables, array elements, the input sequence variable,  and the I/O sequence variable.
\end{enumerate}

Execution state $m$ is a composition of elements discussed above.
In our SOS rules, we only show components of a state $m$ when necessary (e.g., $m(\Gamma, \vals)$).

\begin{figure}[t!]
\begin{small}
\begin{center}
\begin{tabular}{ll}
{\boxed{\mbox{$(S, {\state}) -> (S', {\state}')$}}}
&
\inference[]
{(r, \state) -> (r', \state')}
{(\mathbb{E}[r], \state) -> (\mathbb{E}[r'], \state')}
\end{tabular}

\begin{tabular}{l}
{\mbox{{\bf Eval. Context}
$\mathbb{E} \, ::= \, \_ \,
                 | \, id[\mathbb{E}] \,
                 | \, \mathbb{E} == l $}} \\
{{\mbox{$| \, id \, := \, \mathbb{E} \,
         | \, id[\mathbb{E}] := e \,
         | \, id[v] := \mathbb{E}  \,
         | \, \text{output } \mathbb{E}
         $}}} \\
{\mbox{$| \, \text{while }(\mathbb{E}) \{S\} \,
        | \, \text{If }(\mathbb{E}) \text{ then }\{S_t\} \text{ else }\{S_f\} \,
        | \, \mathbb{E}; S$}}
\\
\\
\hline
\end{tabular}

\caption{Contextual semantic rule}\label{fig:semctxt}
\end{center}
\end{small}
\end{figure}


\begin{figure}[t!]

\begin{tabular} {l}
{\boxed{\mbox{$(r, {\state}) -> (r', {\state}')$}}}
\end{tabular}

\begin{center}
\scriptsize{
\begin{tabular}{l}
$\mathcal{E}: \text{other} -> \vals -> (v_{\text{err}} \times \{0, 1\})$ \\
$\text{Err}:  \text{other} -> \{id\} \;\;\;\; \text{(unspecified)}$ \\
\\

\inference[Var]
{\mathfrak{f} = 0  &    \vals(id) = v
}
{(id, \state(\mathfrak{f}, \vals)) -> (v, \state)}\\
\\

\inference[Arr-1]
{\mathfrak{f} = 0  &    \vals(id, v_1) = v_2
}
{(id[v_1], \state(\mathfrak{f}, \vals)) -> (v_2, \state)}\\
\\

\inference[Arr-2]
{\mathfrak{f} = 0 & (\Gamma \vdash id : \text{array}(\tau, n)) \wedge \neg(1 \leq v_1 \leq n)} 
{(id[v_1], \state(\mathfrak{f}, \Gamma)) -> (id[v_1], \state(1/\mathfrak{f}))}\\
\\

\inference[Eq-T]
{\mathfrak{f} = 0 }
{(l == l, \state(\mathfrak{f})) -> (1, \state)}\\
\\
\\

\inference[Eq-F]
{\mathfrak{f} = 0 & l_1 \neq l_2}
{(l_1 == l_2, \state(\mathfrak{f})) -> (0, \state)}\\
\\
\\

\inference[EEval]
{\mathfrak{f} = 0 & e = \text{other}}
{(e, \state(\mathfrak{f}, \vals)) -> (\mathcal{E}\llbracket e\rrbracket\vals, \state)}
\\
\\

\inference[ECrash]
{\mathfrak{f} = 0}
{((\text{error}, {v}_{\mathfrak{of}}), m(\mathfrak{f})) -> (0, m(1 / \mathfrak{f}))}\\
\\

\inference[EOflow-1]
{\mathfrak{f} = 0 & \mathfrak{of} = 0}
{((v, v_{\mathfrak{of}}), \state(\mathfrak{f}, \mathfrak{of})) -> (v, m( v_{\mathfrak{of}} / \mathfrak{of}))}\\
\\

\inference[EOflow-2]
{\mathfrak{f} = 0 & \mathfrak{of} = 1}
{((v, v_{\mathfrak{of}}), \state(\mathfrak{f}, \mathfrak{of})) -> (v, \state)}\\
\\
\hline
\end{tabular}
}
\end{center}
\caption{SOS rules for expressions}\label{fig:sosrulesExpr}
\end{figure}

\begin{figure}[t!]

\begin{tabular} {l}
{\boxed{\mbox{$(r, {\state}) \rightarrow (r', {\state}')$}}}
\end{tabular}

\begin{center}
\scriptsize{
\begin{tabular}{l}
\\

\inference[As-Scl]
{\mathfrak{f} = 0  & \vals(id) \neq \bot}
{\begin{array}{l}
(id := v, m(\mathfrak{f}, \vals)) ->  (\text{skip}, m(\vals[v / id]))
 \end{array}
}
\\
\\

\inference[As-Arr]
{\mathfrak{f} = 0    &      \vals(id, v_1) \neq \bot}
{\begin{array}{l}
(id[v_1] := v_2, m(\mathfrak{f}, \vals)) ->
(\text{skip}, m(\vals[v_2 / (id, v_1)]))
 \end{array}
}
\\
\\

\inference[As-Err1]
{\mathfrak{f} = 0  & (\Gamma \vdash id : \text{array}(\tau, n)) \wedge \neg(1 \leq v_1 \leq n)
}
{\begin{array}{l}
\begin{array}{l}
(id[v_1] := v_2, m(\mathfrak{f}, \Gamma)) ->
(id[v_1] := v_2, m(1/\mathfrak{f}))
\end{array}
 \end{array}
}
\\
\\

\inference[As-Err2]
{\mathfrak{f} = 0    &    \vals(id) \neq \bot  \\
  (\Gamma \vdash id : \text{Int}) \wedge (v \in (\mathbb{Z}_{L}\setminus \mathbb{Z}_{I}))}
{\begin{array}{l}
(id := v, m(\mathfrak{f}, \Gamma, \vals)) ->  (id := v, m(1/\mathfrak{f}))
 \end{array}
}
\\
\\

\inference[As-Err3]
{\mathfrak{f} = 0  &     \vals(id, v_1) \neq \bot  \\
  (\Gamma \vdash id : \text{array}(\text{Int}, n)) \wedge (v_2 \in (\mathbb{Z}_{L}\setminus \mathbb{Z}_{I}))}
{\begin{array}{l}
(id[v_1] := v_2, m(\mathfrak{f}, \Gamma, \vals)) ->  (id[v_1] := v_2, m(1/\mathfrak{f}))
 \end{array}
}
\\
\\

\inference[If-T]
{\mathfrak{f} = 0 & (v \in \mathbb{Z}_{L}) \wedge (v \neq 0)}
{(\text{If } (v) \text{ then } \{S_t\} \text{ else } \{S_f\}, \state(\mathfrak{f})) -> (S_t, \state)}
\\
\\

\inference[If-F]
{\mathfrak{f} = 0}
{(\text{If } (0) \text{ then } \{S_t\} \text{ else } \{S_f\}, \state(\mathfrak{f})) -> (S_f, \state)}
\\
\\

\inference[Wh-T]
{\mathfrak{f} = 0 & (v \in \mathbb{Z}_{L}) \wedge (v \neq 0)  &  {loop}_c(n) = k }
{\begin{array}{l}
(\text{while}_{\langle n\rangle} \; (v) \; \{S\}, \state(\mathfrak{f}, {loop}_c)) -> \\
(S; \text{while}_{\langle n\rangle} \; (e) \; \{S\}, \state({loop}_c[(k + 1) / n])
 \end{array}
}
\\
\\

\inference[Wh-F]
{\mathfrak{f} = 0 & {loop}_c(n) \neq \bot }
{
\begin{array}{l}
(\text{while}_{\langle n\rangle} \; (0) \; \{S\}, m(\mathfrak{f}, {loop}_c)) ->
(\text{skip}, m({loop}_c[0/n]))
\end{array}
}
\\
\\
\end{tabular}

\begin{tabular}{ll}

\inference[Seq]
{\mathfrak{f} = 0}
{(\text{skip};S, m(\mathfrak{f})) -> (S, m)}
&
\inference[Crash]
{\mathfrak{f} = 1}
{(s, \state(\mathfrak{f})) -> (s, \state)}
\\
\\
\hline
\end{tabular}
}
\end{center}
\caption{SOS rules for Assignment, If, and While statements}\label{fig:basicrules}
\end{figure}

\begin{figure}[t!]

\begin{tabular} {l}
{\boxed{\mbox{$(r, m) -> (r', m')$}}}
\end{tabular}

\begin{scriptsize}
\begin{center}
\begin{tabular} {l}

\inference[In-1]
{\mathfrak{f} = 0  &  \vals(id) \neq \bot & \text{hd}(\vals({id}_I)) = v_{io}    &     \Gamma \vdash id : \text{Long}
}
{\begin{array}{l}
 (\text{input }id, m(\mathfrak{f}, \Gamma, \vals)) -> \\
 (\text{skip}, m(\vals[v_{io} / id] [\text{tl}(\vals({id}_I))/{id}_I] [``\vals({id}_{IO}) \cdot \underline{v}_{io}"/{id}_{IO}])
 \end{array}
}
\\
\\

\inference[In-2]
{\mathfrak{f} = 0    &  \vals(id) \neq \bot \\
 \text{hd}(\vals({id}_I)) = v_{io}    &      (\Gamma \vdash id : \text{Int}) \wedge (v_{io} \in \mathbb{Z}_{I})
}
{\begin{array}{l}
 (\text{input }id, m(\mathfrak{f}, \Gamma, \vals)) -> (\text{skip},\\
  m(\vals[{v_{io}} / id] [\text{tl}(\vals({id}_I))/{id}_I] [``\vals({id}_{IO}) \cdot \underline{v}_{io}"/{id}_{IO}])
 \end{array}
}
\\
\\

\inference[In-3]
{\mathfrak{f} = 0     &  \vals(id) \neq \bot \\
  \text{hd}(\vals({id}_I)) = v_{io}    &      (\Gamma \vdash id : \text{Int}) \wedge (v_{io} \notin \mathbb{Z}_{I})
}
{
 (\text{input }id, m(\mathfrak{f}, \Gamma, \vals)) -> (\text{input }id, m(1 / \mathfrak{f}))
}
\\
\\

\inference[In-4]
{\mathfrak{f} = 0 &  \vals(id) \neq \bot & \text{hd}(\vals({id}_I)) = v_{io} \\
 (\Gamma \vdash id : \text{enum }{id}') \wedge (\Gamma \vdash id' : \{l_1,...,l_k\}) \wedge (1 \leq v_{io} \leq k)
}
{\begin{array}{l}
 (\text{input }id, \state(\mathfrak{f}, \Gamma,\vals)) -> (\text{skip},\\
  \state(\vals[l_{v_{io}} / id, \text{tl}(\vals({id}_I))/{id}_I] [``\vals({id}_{IO}) \cdot \underline{v}_{io}"/{id}_{IO}])
 \end{array}
}
\\
\\

\inference[In-5]
{\mathfrak{f} = 0  & \vals(id) \neq \bot & \text{hd}(\vals({id}_I)) = v_{io} \\
 (\Gamma \vdash id : \text{enum }{id}') \wedge (\Gamma \vdash id' : \{l_1,...,l_k\}) \wedge \neg(1 \leq v_{io} \leq k)
}
{
 (\text{input }id, m(\mathfrak{f}, \Gamma, \vals)) -> (\text{input }id, m(1 / \mathfrak{f}))
}
\\
\\

\inference[In-6]
{\mathfrak{f} = 0    &         \vals({id}_I) = \varnothing}
{\begin{array}{l}
 (\text{input } id, m(\mathfrak{f}, \vals)) ->
 (\text{input } id, m(1/\mathfrak{f}))
 \end{array}
}
\\
\\

\inference[Out-1]
{\mathfrak{f} = 0 & v \in \mathbb{Z}_{L}}
{
\begin{array}{l}
(\text{output } v, m(\mathfrak{f},\vals)) -> (\text{skip},
m(\vals[``\vals({id}_{IO}) \cdot \overline{v}"/{id}_{IO}]))
\end{array}
}
\\
\\

\inference[Out-2]
{\mathfrak{f} = 0 & \Gamma \vdash id : \{l_1,..., l_k\} \wedge v = l_i \in \{l_1,..., l_k\}}
{
\begin{array}{l}
(\text{output }v, m(\mathfrak{f}, \Gamma, \vals)) -> (\text{skip},
m(\vals[``\vals({id}_{IO}) \cdot \overline{i}"/{id}_{IO}]))
\end{array}
}
\\
\\

\inference[Out-3]
{\mathfrak{f} = 0 & \Gamma \vdash pmpt : \{l_1 : n_1,..., l_k : n_k\} \\
 ``l : n" \in \{l_1 : n_1,..., l_k : n_k\}}
{
\begin{array}{l}
(\text{output }l, m(\mathfrak{f}, \Gamma)) -> (\text{output }n, m)
\end{array}
}
\\
\\
\hline
\end{tabular}
\end{center}
\end{scriptsize}
\caption{SOS rules for input/output statements}\label{fig:iorules}
\end{figure}

Figure~\ref{fig:semctxt} shows typical contextual rule and Figure~\ref{fig:sosrulesExpr}, ~\ref{fig:basicrules} and ~\ref{fig:iorules} show all SOS rules.

Figure~\ref{fig:sosrulesExpr} shows rules for expression evaluation.
We use the expression meaning function $\mathcal{E}: \text{other} -> \vals -> (v_{\text{err}} \times \{0, 1\})$ to evaluate ``other" expressions.
In evaluation of expression ``other" against a value store $\vals$, the expression meaning function $\mathcal{E}$ returns a pair $(v_{\text{err}}, \mathfrak{of})$ where the value $v_{\text{err}}$  is either a value $v$ or an ``$\text{error}$",  $\mathfrak{of}$ is a flag indicating if there is integer overflow in the evaluation (e.g., 1 if there is overflow).
The meaning function $\mathcal{E}$ interprets ``other" expressions deterministically.
In addition, there is a function $\text{Use}: \text{other} -> \{id\}$ maps an ``other" expression to a set of variables used in the expression;
there is a function $\text{Err}: \text{other} -> \{id\}$ maps an expression to a set of variables whose values decide if the evaluation of expression leads to crash.
We assume function $\text{Use}$ and $\text{Err}$ available.
The value returned by the expression meaning function only depends on the values of variables in the use set of the expression and the error evaluation only depends on the variables in the error set.

As to integer overflow, there are two ways of handling overflow in practice
one is to wrap around overflow using twos-complement representation (e.g., the gcc option -fwrapv); the other is to generates traps for overflow (e.g., the gcc option -ftrapv). We adopt a combination of the two handling of overflow: the meaning function $\mathcal{E}$ wraps the overflow in some representation (e.g., two-complement) and notifies the overflow in return value.
Rule EOflow-1 and EOflow-2 update the sticky overflow flag.
The evaluation of $lval$ or $id == l$ is shown by respective rules in Figure~\ref{fig:sosrulesExpr}.

Figure~\ref{fig:basicrules} shows SOS rules for assignment, If, while statements, statement sequence, and crash, which are almost standard.
There are four particular crash in execution of assignment statements.
One is array out of bound for array access for l-value (e.g., rule As-Err1);
the second is assigning a value defined in type Long but not type Int to an Int-typed variable (e.g., rule As-Err2);
the third is value mismatch in input statement; the last is expression evaluation exception.
As to loop statement, if the predicate expression evaluates to a nonzero integer, corresponding loop counter value increments by one;
otherwise, the loop counter value is reset to zero.
We use rule Crash to treat crash as non-terminating execution, telling apart normally terminating executions and others.

Figure~\ref{fig:iorules} shows rules for the execution of input/output statements.
As to input, there are conversion from values of type Long to those of Int or enumeration types but not the prompt type.
For an enumeration type, the Long-typed value is transformed to the label with index of that value if possible.
There is crash when value conversion is impossible.
Besides, there is crash when executing input statement with empty input sequence.
We use standard list operation hd and tl for fetching the list head(leftmost element) or the list tail(the list by removing its head) respectively~\cite{pierce2002types}.

Last, we construct initial state in following steps:
First, crash flag $\mathfrak{f}$, overflow flag $\mathfrak{of}$ are zero.
Second, type environment is obtained after parsing of the program.
Third, every loop counter value in $\text{loop}_c$ is initially zero.
Fourth, every scalar variable or array element has an entry in value store with some initial value if specified.
Last, there is initial input sequence and empty I/O sequence. 

\subsection{Preliminary terms and notations}\label{sectionDefs}
We present terms, notations and definitions for program equivalence and backward compatible update classes.

We use $\text{Use}(e)$ or $\text{Use}(S)$ to denote used variables in an expression $e$ or a statement sequence $S$; $\text{Def}(S)$ denotes the set of defined variables in a statement sequence $S$.
The full definitions of Use and Def are in appendix~\ref{appendix:definitions}.

%

We use symbol $\in$ for two different purposes: $x \in X$ denotes one variable to be in a set of variables, $s \in S$ denotes a statement to be in a statement sequence.
We use the symbol $\subset$ to refer to proper subset relation.



We call an ``If" statement or a ``while" statement as a compound statement;
all other statements are simple statements.
We introduce terms referring to a part of a compound statement.
Let $s = ``\text{If}(e) \, \text{then}\{S_t\} \, \text{else}\{S_f\}"$ be an ``If" statement, we call $e$ in $s$ the predicate expression, $S_t/S_f$ the true/false branch of $s$.

\section{Program equivalence}\label{sec:equivalence}

We consider several types of program changes that are allowed by ``observational equivalence" without user assumptions. These changes include: statement reordering or duplication, extra statements unrelated to output(e.g., logging related changes), loop fission or fusion, and extra statements unrelated to output.
Our program equivalence ensures two programs produce the same output, which means two programs produce same I/O sequence till any output. The program equivalence is established upon two other kinds of equivalence, namely equivalent terminating computation of a variable and equivalent termination behavior.

We first define terminating and nonterminating execution.
Then we present the framework of program equivalence in three steps in which every later step relies on prior ones.
We first propose a proof rule ensuring two programs to compute a variable in the same way.
We then suggest a condition ensuring two programs to either both terminate or both do not terminate.
Finally we describe a condition ensuring two programs to produce the same output sequence.
Our proof rule of program equivalence gives program point mapping as well as program state mapping.
Though we express the program equivalence as a whole program relation, it is easy to apply the equivalence check for local changes using our framework under user's various assumptions for equivalence.

\subsection{Definitions of execution}
We define an execution to be a sequence of configurations which are pairs $(S, m)$ where $S$ is a statement sequence and $m$ is a execution state shown in Figure~\ref{fig:semelems}.
Let $(S_1, m_1)$, $(S_2, m_2)$ be two consecutive configurations in an execution, the later configuration $(S_2, m_2)$ is obtained by applying one semantic rule w.r.t to the configuration $(S_1, m_1)$, denoted $(S_1, m_1) -> (S_2, m_2)$, called one step (of execution).
For our convenience, we use the notation
 $(S, m)$ {\kStepArrow[k] } $(S', m')$ for $k$ steps execution where $k > 0$.
When we do not care the exact (finite) number of steps, we write the execution as
$(S, m) ->* (S', m')$.
We express terminating executions, nonterminating executions including crash in Definition~\ref{def:normalExec} and~\ref{def:nontermExec}.
\begin{definition}\label{def:normalExec}
{\bf (Termination)} A statement sequence $S$ normally terminates when started in a state $\state$ iff $(S, \state) ->* (\text{skip}, \state'(\mathfrak{f}))$ where $\mathfrak{f} = 0$.
\end{definition}

\begin{definition}\label{def:nontermExec}
{\bf (Nontermination)} A statement sequence $S$ does not terminate when started in a state $\state$ iff,
$\forall k > 0\,:\, (S, \state)$ {\kStepArrow [k] } $(S_k, \state_k)$ where $S_k \neq \text{skip}$.
\end{definition}

\subsection{Equivalent computation for terminating programs}\label{sec_equiv_comp}
We propose a proof rule under which two terminating programs are computing a variable in the same way. We start by giving the definition of equivalent computation for terminating programs right after this paragraph. Then we present the proof rule of equivalent computation in the same way.
We prove that the proof rule ensures equivalent computation for terminating programs by induction on the program size of the two programs in the proof rule.
We also list auxiliary lemmas required by the soundness proof for the proof rule for equivalent computation for terminating programs.

\begin{definition}\label{def:equiv_term_comp}
{\bf{(Equivalent computation for terminating programs)}}
Two statement sequences $S_1$ and $S_2$  compute a variable $x$ equivalently when started in states $m_1$ and $m_2$ respectively, written $(S_1, m_1) \equiv_{x} (S_2, m_2)$, iff
$({S_1}, m_1) ->* ({\text{skip}}, m_1'(\vals_{1'}))$ and
$({S_2}, m_2) ->* ({\text{skip}}, m_2'(\vals_{2'}))$ imply $\vals_{1'}(x) = \vals_{2'}(x)$.
\end{definition}

\subsubsection{Proof rule for equivalent computation for terminating programs}

We define a proof rule under which $(S_1, m_1) \equiv_{x} (S_2, m_2)$ holds for generally constructed initial states $m_1$ and $m_2$, written $S_1 \equiv_{x}^{S} S_2$.
Our proof rule for equivalent computation for terminating programs allows updates including statement reordering or duplication, loop fission or fusion, additional statements unrelated to the computation and statements movement across if-branch.

Definition~\ref{def:syntactic-equivalence} includes the recursive proof rule of equivalent computing for terminating programs. The base case is the condition for two simple statements in Definition~\ref{def:localcondForEquivTermCompSimpleStmt}. Definition~\ref{def:impvars} of imported variables captures the variable def-use chain which is the essence of our equivalence.
In Definition~\ref{def:impvars}, the Def and Use refer to variables defined or used in a statement (sequence) or an expression similar to those in the optimization chapter in the dragon book~\cite{Aho86}; $S^{i}$ refers to $i$ consecutive copies of a statement sequence $S$.

\begin{definition}\label{def:impvars}
{\bf (Imported variables)}
The imported variables in a sequence of statements $S$ relative to variables $X$, written Imp$(S, X)$, are defined in one of the following cases:
\begin{enumerate}
\item
Def $(S) \; \cap \; X \; = \; \emptyset$:
Imp $(S, X)$ = $X$;

\item
$S$ = ``$id$ := $e$" or ``input $id$" or ``output $e$" and
 $\text{Def}(S) \; \cap \; X \; \neq \; \emptyset$:

\noindent$\text{Imp}(S, X) = \text{Use}(S) \; \cup \; (X \setminus \; \text{Def}(S))$;

\item
$S$ = ``If $(e)$ then \{$S_t$\} else \{$S_f$\}" and $\text{Def}(S) \; \cap \; X \; \neq \emptyset$:

\noindent$\text{Imp}(S, X) = \text{Use}(e) \cup \bigcup_{y \in X}$
$\big(\text{Imp}(S_t, \{y\}) \cup \text{Imp}(S_f, \{y\})\big)$;

\item
$S$ = ``while($e$) \{$S'$\}" where (Def$(S') \; \cap \; X) \neq \emptyset$):
Imp $(S, X)$ = $\bigcup_{i \geq 0}$
Imp $({S'}^i, \text{Use}(e) \; \cup \;X)$;

\item
For $k>0$, $S = s_1;...;s_{k+1}$:

\noindent$\text{Imp}(S, X)  = \text{Imp}(s_1;...;s_k, \text{Imp}(s_{k+1}, X))$
\end{enumerate}
\end{definition}

\begin{definition}\label{def:localcondForEquivTermCompSimpleStmt}
{\bf{(Base cases of the proof rule for equivalent computation for terminating programs)}}
Two simple statements $s_1$ and $s_2$ satisfy the proof rule of equivalent computation of a variable $x$, written $s_1 \equiv_x^S s_2$, iff one of the following holds:
\begin{enumerate}
\item $s_1 = s_2$;

\item $s_1 \neq s_2$ and one of the following holds:
      \begin{enumerate}
      \item $s_1 = ``\text{input} \; {id}_1", s_2 = ``\text{input} \; {id}_2", x \notin \{{id}_1, {id}_2\}$;

      \item Case {\em a)} does not hold and $x \notin \text{Def}(s_1) \cup \text{Def}(s_2)$;
      \end{enumerate}
\end{enumerate}
\end{definition}

\begin{definition}
{\bf{(Proof rule of equivalent computation for terminating programs)}}\label{def:syntactic-equivalence}
Two statement sequences $S_1$ and $S_2$ satisfy the proof rule of equivalent computation of a variable $x$, written $S_1 \equiv_{x}^S S_2$, iff one of the following holds:
\begin{enumerate}

\item $S_1$ and $S_2$ are one statement and one of the following holds:

\begin{enumerate}

\item $S_1$ and $S_2$  are simple statement:
$s_1 \equiv_x^S s_2$;

\item $S_1 = ``\text{If }(e) \text{ then }\{S_1^t\} \text{ else }\{S_1^f\}"$,
      $S_2 = ``\text{If }(e) \text{ then }\{S_2^t\} \text{ else } \allowbreak\{S_2^f\}"$ such that all of the following hold:

     \begin{itemize}
     \item $x \in \text{Def}(S_1) \cap \text{Def}(S_2)$;

     \item $(S_1^t \equiv_{x}^S S_2^t) \wedge (S_1^f \equiv_{x}^S S_2^f)$;
     \end{itemize}

\item $S_1 = ``\text{while}_{\langle n_1\rangle} (e) \; \{S_1''\}", S_2 = ``\text{while}_{\langle n_2\rangle} (e) \; \{S_2''\}"$ such that both of the following hold:
   \begin{itemize}
   \item $x \in \text{Def}(S_1) \cap \text{Def}(S_2)$;

   \item $\forall y \in
        \text{Imp}(S_1, \{x\})
        \cup
        \text{Imp}(S_2, \{x\})\,:\,$
        $S_1'' \equiv_{y}^S S_2''$;
   \end{itemize}

\item $S_1$ and $S_2$ do not define the variable $x$: $x \notin \text{Def}(S_1) \cup \text{Def}(S_2)$.

\end{enumerate}

\item $S_1$ and $S_2$ are not both one statement and one of the following holds:
\begin{enumerate}

\item $S_1=S_1';s_1, S_2=S_2';s_2$ and last statements both define the variable $x$ such that both of the following hold:
\begin{itemize}
\item  $\forall y \in \text{Imp}(s_1, \{x\}) \cup \text{Imp}(s_2, \{x\})\,:\,
S_1' \equiv_{y}^S S_2'$;

\item  $s_1 \equiv_x^S s_2$ where $x \in \text{Def}(s_1) \cap \text{Def}(s_2)$;
\end{itemize}

\item Last statement in $S_1$ or $S_2$ does not define the variable $x$:

$\big(x \notin \text{Def}(s_2) \wedge (S_1 \equiv_{x}^S S_2')\big) \vee \big(x \notin \text{Def}(s_1) \wedge (S_1' \equiv_{x}^S S_2)\big)$;

\item $S_1=S_1';s_1, S_2=S_2';s_2$ and there are statements moving in/out of If statement:
      $s_1 = ``\text{If} \, (e) \, \text{then} \, \{S_1^t\} \, \text{else} \, \{S_1^f\}"$,
      $s_2 = ``\text{If} \, (e) \, \text{then} \, \{S_2^t\} \, \text{else} \, \allowbreak\{S_2^f\}"$ such that none of the above cases hold and all of the following hold:
     \begin{itemize}
     \item $\forall y \in \text{ Use}(e)\,:\, S_1' \equiv_{y}^S S_2'$;

     \item $(S_1';S_1^t \equiv_{x}^S S_2';S_2^t) \wedge (S_1';S_1^f \equiv_{x}^S S_2';S_2^f)$;
     \end{itemize}
\end{enumerate}

\end{enumerate}
\end{definition}

The generalization of definition $S_1 \equiv_{x}^S S_2$ to a set of variables is as follows.
\begin{definition}\label{condSameCompXVar}
Two statement sequences $S_1$ and $S_2$ have equivalent computation of variables $X$, written $S_1 \equiv_{X}^S S_2$, iff
$\forall x \in X\,:\, S_1 \equiv_{x}^S S_2$.
\end{definition}

%
\subsubsection{Soundness of the proof rule for equivalent computation for terminating programs}

We show that if two programs satisfy the proof rule of equivalent computation of a variable $x$ (Definition~\ref{def:syntactic-equivalence}) and their value stores in initial states agree on values of the imported variables relative to $x$, then the two programs compute the same value of $x$ if they terminate.
We start by proving the theorem for the base cases of terminating computation equivalently.
\begin{theorem}\label{thm:equivTermCompOfSimpleStmt}
If $s_1$ and $s_2$ are simple statements  that satisfy the proof rule for equivalent computation of $x$, $s_1 \equiv_x^S s_2$, and their initial states $m_1(\vals_{1})$ and $m_2(\vals_{2})$ agree on the values of the imported variables relative to $x$, $\forall y \in\, \text{Imp}(s_1, \{x\}) \cup \text{ Imp}(s_2, \{x\})\,:\, \vals_{s_1}(y) = \vals_{s_2}(y)$, then $s_1$ and $s_2$ equivalently compute $x$ when started in states $m_1$ and $m_2$ respectively, $(s_1,m_1) \equiv_{x} (s_2,m_2)$.
\end{theorem}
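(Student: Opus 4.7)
The plan is a direct case analysis on Definition~\ref{def:localcondForEquivTermCompSimpleStmt}, exploiting that Definition~\ref{def:equiv_term_comp} is an implication: we may assume both $s_1$ and $s_2$ normally terminate and need only show the final value of $x$ agrees. In all cases the key ingredient is the stated property that the meaning function $\mathcal{E}$ only depends on the variables in $\text{Use}(e)$, plus the definition of $\text{Imp}$, which was arranged precisely so that $\text{Imp}(s,\{x\})$ contains exactly the variables whose initial values determine the final value of $x$ after running $s$.

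First I would dispatch Case~1 of Definition~\ref{def:localcondForEquivTermCompSimpleStmt}, i.e.\ $s_1 = s_2 = s$. Here I split on whether $x \in \text{Def}(s)$. If $x \notin \text{Def}(s)$, then clause~1 of Definition~\ref{def:impvars} gives $\text{Imp}(s,\{x\}) = \{x\}$, so $\vals_{1}(x) = \vals_{2}(x)$ by hypothesis, and since $s$ does not redefine $x$ the value is preserved by both executions. If $x \in \text{Def}(s)$, I sub-case on the shape of $s$. For $s = (id := e)$ with $id = x$, clause~2 of Definition~\ref{def:impvars} gives $\text{Use}(e) \subseteq \text{Imp}(s,\{x\})$; the hypothesis then makes the two evaluations of $e$ agree, so rule \textsc{As-Scl} writes the same value into $x$ in both states. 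For $s = \text{input }id$ with $id = x$, I would appeal to the assumption that the input-sequence variable ${id}_I$ belongs to $\text{Use}(s)$ (and hence to $\text{Imp}(s,\{x\})$): the two value stores then agree on the head of the input sequence, so the applicable rule among \textsc{In-1}--\textsc{In-5} writes the same value into $x$. The remaining shapes of $s$ ($\text{skip}$ and $\text{output }e$) do not define $x$ and reduce to the previous subcase.

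Next, Case~2.a (both statements are inputs to distinct variables, neither equal to $x$) and Case~2.b ($x \notin \text{Def}(s_1)\cup\text{Def}(s_2)$) admit the same uniform argument: clause~1 of Definition~\ref{def:impvars} yields $\text{Imp}(s_i,\{x\}) = \{x\}$ for $i = 1,2$, the hypothesis gives $\vals_{1}(x) = \vals_{2}(x)$, and inspection of the SOS rules shows that neither statement modifies the entry for $x$ in its value store, so the final values coincide.

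The main obstacle is not mathematical depth but careful accounting for the side-conditions embedded in the SOS, particularly for \textbf{input} and assignment statements: one must confirm that $\text{Def}$ and $\text{Use}$ (as spelled out in Appendix~\ref{appendix:definitions}) place the input-sequence variable and any array-index variables into the imported set whenever they can influence the final value of $x$, so that the ``agreement on imported variables'' hypothesis is strong enough to force the two runs to select the same SOS rule (e.g.\ no divergence between \textsc{In-1}--\textsc{In-5}) and produce the same resulting value in $x$. Once this bookkeeping is made explicit, each case reduces either to a direct appeal to determinism of $\mathcal{E}$ or to the observation that $x$ is untouched.
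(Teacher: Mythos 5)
Your overall strategy is the same as the paper's: a case analysis on the base cases of Definition~\ref{def:localcondForEquivTermCompSimpleStmt}, reducing each case either to the determinism of $\mathcal{E}$ on $\text{Use}(e)$ or to the observation that $x$ is untouched. However, your accounting for the bookkeeping variables ${id}_I$ and ${id}_{IO}$ --- which you yourself identify as the main obstacle --- is wrong in exactly the place where it matters, so there is a genuine gap.

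Concretely, in Case~2.a you claim that $x \notin \{{id}_1, {id}_2\}$ lets you apply clause~1 of Definition~\ref{def:impvars} to get $\text{Imp}(s_i,\{x\}) = \{x\}$ and that ``neither statement modifies the entry for $x$.'' But $\text{Def}(\text{input }{id}_i) = \{{id}_i, {id}_I, {id}_{IO}\}$, so when $x \in \{{id}_I, {id}_{IO}\}$ both statements \emph{do} define $x$, clause~1 does not apply (clause~2 gives $\text{Imp}(s_i,\{x\}) = \{{id}_I, {id}_{IO}\}$), and the value of $x$ is changed by both executions. The correct argument, which the paper spells out as a separate subcase, is that the new values of ${id}_I$ and ${id}_{IO}$ (the tail of the input sequence, and the I/O sequence extended by the tagged head) are functions only of the old values of ${id}_I$ and ${id}_{IO}$, on which the two stores agree by hypothesis; they do not depend on which variable ${id}_1$ versus ${id}_2$ receives the input. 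A smaller instance of the same oversight appears in Case~1: you assert that $\text{output }e$ ``does not define $x$,'' but $\text{Def}(\text{output }e) = \{{id}_{IO}\}$, so the subcase $s_1 = s_2 = \text{output }e$ with $x = {id}_{IO}$ needs its own argument (agreement on $\text{Use}(e) \cup \{{id}_{IO}\} = \text{Imp}(s,\{x\})$ forces the same value to be appended to the same prior sequence); likewise $s = \text{input }id$ must be handled for $x \in \{{id}_I,{id}_{IO}\}$, not only $x = id$. Once these subcases are added, the proof matches the paper's.
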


\begin{proof}
The proof is a case analysis according to the cases in the definition of the proof rule for equivalent computation (i.e., Definition~\ref{def:localcondForEquivTermCompSimpleStmt}).

\begin{enumerate}
\item{$s_1 = s_2$}

Since the two statements are identical, they have the same imported variables. By assumption, the imported variables of $s_1$ and $s_2$ have the same initial values, so it is enough to show that the value of $x$ at the end of the computation only depends on the initial values of the imported variables.
\begin{enumerate}
\item{$s_1 = s_2 = ``\text{skip}"$}.
In this case, the states before and after the execution of \text{skip} are the same and $\text{Imp}(\text{skip}, \{x\}) = \{x\}$.

\item{$s_1 = s_2 = ``lval := e"$}.
\begin{enumerate}
\item{$lval = x$}.

$s_1 = s_2 = ``x := e"$. By the definition of imported variables, $\text{Imp}(x := e, \{x\}) = \text{Use}(e)$.
The execution of $s_1$ proceeds as follows.

\begin{tabbing}
xx\=xx\=\kill
\>\>        $(x := e, m(\vals))$\\
\>$->$\> $(x := \mathcal{E}'\llbracket e\rrbracket\vals, m(\vals))$ by the EEval' rule\\
\>$->$\> $(\text{skip}, m(\vals[\mathcal{E}'\llbracket e\rrbracket\vals/x]))$ by the Assign rule.
\end{tabbing}
The value of $x$ after the full execution is  $\vals[(\mathcal{E}\llbracket e\rrbracket\vals)/x](x)$ which only depend on the
initial values of the imported variables by the property of the expression meaning function.

\item{$lval \neq id$}.

By the definition of imported variables, $\text{Imp}(s_1, \{x\}) = \text{Imp}(s_2, \{x\}) = \{x\}$. It follows, by assumption, that $\vals_1(x) = \vals_2(x)$ and also $s_1$ terminate, $(s_1, m_1(\vals_1)) ->* (\text{skip}, m_1'(\vals_1'))$. Hence, $\vals_1'(x) = \vals_1(x)$ by Corollary~\ref{coro:defExclusion}. Similarly, $s_2$ terminates, $(s_2, m_2(\vals_2)) ->* (\text{skip}, m_2'(\vals_2'))$ and $\vals_2'(x) = \vals_2(x)$.
Therefore, $\vals_2'(x) = \vals_2(x) = \vals_1(x) = \vals_1'(x)$ and the theorem holds.
\end{enumerate}

\item{$s_1 = s_2 = ``\text{input }id"$}.
\begin{enumerate}
\item{$x \in \text{Def}(\text{input }id) = \{ id , {id}_I , id_{IO}\}$}.

By the In rule, the execution of $\text{input }id$ is the following.

\begin{tabbing}
xx\=xx\=\kill
\>\>        $(\text{input }id, m(\vals))$ \\
\>$->$\> $(\text{skip}, m(\vals[\text{tl}(\vals({id}_I))/{id}_I]$\\
\>\>     $[``\vals({id}_{IO})\cdot\underline{\text{hd}(\vals({id}_I))}"/{id}_{IO}] [\text{hd}(\vals({id}_I))/id]))$.
\end{tabbing}

The value of $x$ after the execution of ``$\text{input }id$" is one of the following:

\begin{enumerate}
\item $\text{tl}(\vals({id}_I))$ if  $x = {id}_I$.

\item $\vals_1({id}_{IO})\cdot\underline{\text{hd}(\vals({id}_I))}$ if  $x = {id}_{IO}$.

\item $ \text{hd}(\vals({id}_I))$ if  $x = id$.
\end{enumerate}

By the definition of imported variables, $\text{Imp}(\text{input }id, \{x\})\allowbreak = \{{id}_{IO}, {id}_I\}$. So, in all cases, the value of $x$ only depends on the initial values of the imported variables ${id}_I$ and ${id}_{IO}$.

\item{$x \notin \text{Def}(\text{input }id) = \{ id , {id}_I , id_{IO}\}$}.

By same argument in the subcase $id \neq x$ of case $s_1 = s_2 = ``id := e"$, the theorem holds.
\end{enumerate}

\item{$s_1 = s_2 = ``\text{output }e"$}.

\begin{enumerate}
\item{$x = {id}_{IO}$}

By the definition of imported variables, $\text{Imp}(\text{output }e, \{x\}) = \{{id}_{IO}\} \cup \text{Use}(e)$.
The execution of $s_1$ proceeds as follows.

\begin{tabbing}
xx\=xx\=\kill
\>\>     $(\text{output }e, m(\vals))$\\
\>$->$\> $(\text{output }\mathcal{E}\llbracket e\rrbracket\vals, m(\vals))$ \\
\>$->$\> $(\text{skip}, m(\vals[``\vals({id}_{IO})\cdot\bar{\mathcal{E}\llbracket e\rrbracket\vals}"/{id}_{IO}]))$.
\end{tabbing}

The value of $x$ after the execution is  $``\vals({id}_{IO})\cdot\bar{\mathcal{E}\llbracket e\rrbracket\vals}"$, which only depends on the initial value of the imported variables of the statement $``\text{output }e"$ by the expression meaning function.

\item{$x \neq {id}_{IO}$}

By same argument in the subcase $id \neq x$ of case $s_1 = s_2 = ``id := e"$, the theorem holds.
\end{enumerate}
\end{enumerate}

\item{$s_1 \neq s_2$}

\begin{enumerate}
\item{$s_1 = ``\text{input }{id}_1", s_2 = ``\text{input }{id}_2", x \notin \{{id}_1, {id}_2\}$}.

\begin{enumerate}
\item{$x \in \{{id}_I, {id}_{IO}\}$}.

By the definition of imported variables, $\text{Imp}(s_1, \{x\}) = \text{Imp}(s_2, \{x\}) = \{{id}_{IO}, {id}_{I}\}$. It follows, by assumption, that $\vals_1(y) = \vals_2(y), \forall y \in \{{id}_{IO}, {id}_{I}\}$. The execution of $s_1$ proceeds as follows.

\begin{tabbing}
xx\=xx\=\kill
\>\>     $(s_1,m_1)$\\
\>= \>   $(\text{input }{id}_1, m_1(\vals_1))$ \\
\>$->$\> $(\text{skip}, m_1(\vals_1[\text{tl}(\vals_1({id}_I))/{id}_I]$\\
\>\>     $[``\vals_1({id}_{IO})\cdot\underline{\text{hd}(\vals_1({id}_I))}"/{id}_{IO}] [\text{hd}(\vals_1({id}_I))/{id}_1]))$
\end{tabbing}

Let $\vals_1' = \vals_1[\text{tl}(\vec{v})/{id}_I, ``\vals_1({id}_{IO})\cdot\underline{\text{hd}(\vec{v})}"/{id}_{IO}, \text{hd}(\vec{v})/{id}_1]$. The value of $x$ after the execution of $s_1$ is one of the following:

\begin{enumerate}
\item{$\vals_1'(x) = \text{tl}(\vals_1({id}_I))$ if $x = {id}_I$}.

\item{$\vals_1'(x) = \vals_1({id}_{IO})\cdot\underline{\text{hd}(\vals_1({id}_I))}$ if $x = {id}_{IO}$}.
\end{enumerate}
Similarly, $(s_2,m_2) -> (\text{skip}, m_2(\vals_2[\text{tl}(\vals_2({id}_I))/{id}_2]\allowbreak [``\vals_2({id}_{IO})\cdot\underline{\text{hd}(\vals_2({id}_I))}"/{id}_{IO}] [\text{hd}(\vals_2({id}_I))/{id}_2]))$.
Let $\vals_2' = \vals_2[\text{tl}(\vals_2({id}_I))/{id}_I]\allowbreak [``\vals_2({id}_{IO})\cdot\underline{\text{hd}(\vals_2({id}_I))}"/{id}_{IO}]\allowbreak [\text{hd}(\vals_2({id}_I))/{id}_2]$. Then the value of $x$ after the execution of $s_2$ is one of the following:
\begin{enumerate}
\item{$\vals_2'(x) = \text{tl}(\vals_2({id}_I))$ if $x = {id}_I$}

\item{$\vals_2'(x) = \vals_2({id}_{IO})\cdot\underline{\text{hd}(\vals_2({id}_I))}$ if $x = {id}_{IO}$}
\end{enumerate}
Repeatedly, $\vals_2({id}_I) = \vals_1({id}_I)$ and  $\vals_2({id}_{IO}) = \vals_1({id}_{IO})$. Therefore, the theorem holds.

\item{$x \notin \{{id}_I, {id}_{IO}\}$}

Repeatedly, $x \notin \{{id}_1, {id}_2\}$.
By same argument in the subcase $id \neq x$ of case $s_1 = s_2 = ``id := e"$, the theorem holds.

\end{enumerate}

\item{all the above cases do not hold and $ x \notin \text{ Def}(s_1) \cup \text{ Def}(s_2)$}

By same argument in the subcase $id \neq x$ of case $s_1 = s_2 = ``id := e"$, the theorem holds.
\end{enumerate}
\end{enumerate}
\end{proof}

\begin{theorem}\label{thm:equivCompMain}
If statement sequence $S_1$ and $S_2$ satisfy the proof rule of equivalent computation of a variable $x$, $S_1 \equiv_x^S S_2$, and their initial states $m_1(\vals_1)$ and $m_2(\vals_2)$ agree on the initial values of the imported variables relative to $x$, $\forall y \in \text{Imp}(S_1, \{x\}) \cup \text{Imp}(S_2, \{x\})\,:\, \vals_1(y) = \vals_2(y)$, then $S_1$ and $S_2$ equivalently compute the variable $x$ when started in state $m_1$ and $m_2$ respectively,  $(S_1, m_1) \equiv_x (S_2, m_2)$.
\end{theorem}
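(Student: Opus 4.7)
The plan is to proceed by structural induction on the derivation of $S_1 \equiv_x^S S_2$ as given by Definition~\ref{def:syntactic-equivalence}, using Theorem~\ref{thm:equivTermCompOfSimpleStmt} as the base case. Throughout, I will rely on the invariant that after executing two equivalent fragments from states that agree on the imported variables, the resulting value stores agree on $x$. The induction hypothesis, instantiated at a set of variables $Y$, states that if $S_1' \equiv_{y}^S S_2'$ holds for every $y \in Y$ and the initial states agree on $\text{Imp}(S_1', Y) \cup \text{Imp}(S_2', Y)$, then the final value stores agree pointwise on $Y$.

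For the single-statement cases (case 1 of Definition~\ref{def:syntactic-equivalence}): in case 1(a), apply Theorem~\ref{thm:equivTermCompOfSimpleStmt} directly. In case 1(b), since $\text{Use}(e)$ is included in $\text{Imp}(S_1, \{x\}) \cap \text{Imp}(S_2, \{x\})$ by Definition~\ref{def:impvars}, the predicate evaluates to the same value in both states, so the same branch is taken; applying the IH to $S_1^t \equiv_x^S S_2^t$ (or the false-branch pair) gives the result. Case 1(d) uses Corollary~\ref{coro:defExclusion}: if neither $S_1$ nor $S_2$ defines $x$ and the initial stores agree on $x$ (which is in $\text{Imp}$ in this case), the value of $x$ is preserved by both executions. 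For the sequential cases in 2(a), decompose each execution into a prefix execution followed by the last statement; by the definition of imported variables, $\text{Imp}(S_1,\{x\}) = \text{Imp}(S_1', \text{Imp}(s_1,\{x\}))$, so the IH applied to every $y \in \text{Imp}(s_1,\{x\}) \cup \text{Imp}(s_2,\{x\})$ guarantees agreement of the post-prefix stores on the imported variables of $s_1, s_2$, after which Theorem~\ref{thm:equivTermCompOfSimpleStmt} finishes. Case 2(b) follows from Corollary~\ref{coro:defExclusion} applied to the side whose last statement does not define $x$, reducing to a smaller instance covered by the IH.

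The main obstacle is the loop case 1(c). Here the outer induction does not decrease structurally on a single iteration, and I will need an inner induction on the number of iterations executed (which is finite since both programs terminate by hypothesis). The key lemma to establish is: if $S_1'' \equiv_y^S S_2''$ for every $y \in \text{Imp}(S_1,\{x\}) \cup \text{Imp}(S_2,\{x\})$, and the current states agree on these imported variables, then (i) the predicate $e$ evaluates identically in both states (because $\text{Use}(e)$ is included in the imported-variable set by the fixed-point definition $\text{Imp}(S,X) = \bigcup_{i \geq 0} \text{Imp}({S''}^i, \text{Use}(e) \cup X)$), and (ii) after one iteration the resulting states again agree on the same imported-variable set, by the IH applied to the body for each such $y$. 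A simple induction on the iteration count then yields that both loops execute the same number of iterations and, upon termination, agree on $x$. The subtle point is that the set of variables on which agreement must be maintained is exactly $\text{Imp}(S_1,\{x\}) \cup \text{Imp}(S_2,\{x\})$, and this set is closed under one iteration precisely because of the fixed-point definition of $\text{Imp}$ for loops.

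Finally, case 2(c) (statement movement across an if-branch) is handled by observing that the common predicate $e$ evaluates identically after executing the prefixes $S_1'$ and $S_2'$: the IH applied to each $y \in \text{Use}(e)$ with $S_1' \equiv_y^S S_2'$ (combined with the fact that $\text{Imp}(S_1, \{x\})$ contains $\text{Imp}(S_1', \text{Use}(e))$ by construction) gives agreement on $\text{Use}(e)$ after the prefix executions. Thus the same branch is taken in both, reducing the problem to the IH on $S_1';S_1^t \equiv_x^S S_2';S_2^t$ or $S_1';S_1^f \equiv_x^S S_2';S_2^f$. Composing all cases completes the induction and yields $(S_1,m_1) \equiv_x (S_2,m_2)$.
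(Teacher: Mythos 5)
Your proposal is correct and follows essentially the same route as the paper: induction over the proof-rule derivation (the paper uses the equivalent measure $\text{size}(S_1)+\text{size}(S_2)$), with Theorem~\ref{thm:equivTermCompOfSimpleStmt} as the base case, agreement on $\text{Use}(e)$ forcing identical branching, Corollary~\ref{coro:defExclusion} for the non-defining cases, and an inner induction on iteration count for loops exploiting that $\text{Imp}(S_1'',\text{Imp}(x))\subseteq\text{Imp}(x)$ (the paper packages this as Lemmas~\ref{lmm:equivTermCompSameLoopIteration} and~\ref{lmm:sameFinalValueX}). One small wording slip: in case 2(a) the last statements $s_1,s_2$ may be compound (e.g.\ while loops), so the final step needs the induction hypothesis rather than Theorem~\ref{thm:equivTermCompOfSimpleStmt}, and the bookkeeping there also quietly uses Lemma~\ref{lmm:sameImpfromEquivCompCond} to identify $\text{Imp}(s_1,\{x\})$ with $\text{Imp}(s_2,\{x\})$.
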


\begin{proof}
By induction on $\text{size}(S_1)+\text{size}(S_2)$, the sum of the program size of $S_1$ and $S_2$.

\noindent{\bf Base case}.

\noindent$S_1 \equiv_x^S S_2$ where $S_1$ and $S_2$ are two simple statements.
This theorem holds by theorem~\ref{thm:equivTermCompOfSimpleStmt}.

\myNewLine

\noindent{\bf Induction step}

\noindent The hypothesis IH is that Theorem~\ref{thm:equivCompMain} holds when  $\text{size}(S_1) + \text{size}(S_2) = k \geq 2$.

\noindent Then we show that the Theorem holds when $\text{size}(S_1) + \text{size}(S_2) = k + 1$.
The proof is a case analysis according to the cases in the definition of the proof rule of terminating computation of statement sequence.
the two big categories enum
\begin{enumerate}

\item $S_1$ and $S_2$ are one statement such that one of the following holds:

\begin{enumerate}

\item $S_1$ and $S_2$  are If statement that define the variable $x$:

      $S_1 = ``\text{If }(e) \text{ then }\{S_1^t\} \text{ else }\{S_1^f\}"$,
      $S_2 = ``\text{If }(e) \text{ then }\{S_2^t\}\allowbreak \text{ else }\{S_2^f\}"$ such that all of the following hold:

     \begin{itemize}
     \item $x \in \text{Def}(S_1) \cap \text{Def}(S_2)$;

     \item $S_1^t \equiv_{x}^S S_2^t$;

     \item $S_1^f \equiv_{x}^S S_2^f$;
     \end{itemize}

We first show that the evaluations of the predicate expression of $S_1$ and $S_2$ produce the same value when started from state $m_1(\vals_1)$ and $m_2(\vals_2)$, w.l.o.g. say zero.
Next, we show that $S_1^f$ started in the state $m_1$ and $S_2^f$ in the state $m_2$ equivalently compute the variable $x$.

In order to show that the evaluations of predicate expression of $S_1$ and $S_2$ produce same value when started from state $m_1(\vals_1)$ and $m_2(\vals_2)$, we show that the variables used in predicate expression of $S_1$ and $S_2$ are a subset of imported variables in $S_1$ and $S_2$ relative to $x$. This is true by the definition of imported variables, $\text{Use}(e) \subseteq \text{Imp}(S_1, \{x\}), \text{Use}(e) \subseteq \text{Imp}(S_2, \{x\})$.
By assumption, the value stores $\vals_1$ and $\vals_2$ agree on the values of the variables used in predicate expression of $S_1$ and $S_2$, $\vals_1(y) = \vals_2(y), \forall y \in \text{Use}(e)$.
By the property of expression meaning function $\mathcal{E}$,
 the predicate expression of $S_1$ and $S_2$ evaluate to the same value when started in states $m_1(\vals_1)$ and $m_2(\vals_2)$, $\mathcal{E}\llbracket e\rrbracket\vals_1 = \mathcal{E}\llbracket e\rrbracket\vals_2$, w.l.o.g, $\mathcal{E}\llbracket e\rrbracket\vals_1 = \mathcal{E}\llbracket e\rrbracket\vals_2=(0,v_\mathfrak{of})$. Then the execution of $S_1$ proceeds as follows.

\begin{tabbing}
xx\=xx\=\kill
\>\>     $(S_1,m_1(\vals_1))$\\
\>= \>   $(\text{If }(e) \text{ then }\{S_1^t\} \text{ else }\{S_1^f\}, m_1(\vals_1))$ \\
\>$->$\> $(\text{If }((0,v_\mathfrak{of})) \text{ then }\{S_1^t\} \text{ else }\{S_1^f\}, m_1(\vals_1))$\\
\>\>      by the EEval' rule  \\
\>$->$\> $(\text{If }(0) \text{ then }\{S_1^t\} \text{ else }\{S_1^f\}, m_1(\vals_1))$\\
\>\>      by the E-Oflow1 or E-Oflow2 rule  \\
\>$->$\> $(S_1^f, m_1(\vals_1))$ by the If-F rule.
\end{tabbing}

Similarly, the execution from $(s_2, m_2(\vals_2))$ gets to $(S_2^f, m_2(\vals_2))$.

By the hypothesis IH, we show that $S_1^f$ and  $S_2^f$ compute the variable $x$ equivalently when started in state $m_1(\vals_1)$ and $m_2(\vals_2)$ respectively. To do that, we show that all required conditions are satisfied for the application of hypothesis IH.
\begin{itemize}
\item $\text{size}(S_1^f) + \text{size}(S_2^f) < k$.

Because $\text{size}(S_1) = 1 + \text{size}(S_1^t) + \text{size}(S_1^f)$, $\text{size}(S_2) = 1 + \text{size}(S_2^t) + \text{size}(S_2^f)$.

\item the value stores $\vals_1$ and $\vals_2$ agree on the values of the imported variables in $S_1^f$ and $S_2^f$ relative to $x$, $\vals_1(y) = \vals_2(y), \forall y \in \text{Imp}(S_1^f, \{x\}) \cup \text{Imp}(S_2^f, \{x\})$.

By the definition of imported variables,

$\text{Imp}(S_1^f, \{x\}) \subseteq \text{Imp}(S_1, \{x\}), \text{Imp}(S_2^f, \{x\}) \subseteq \text{Imp}(S_2, \{x\})$.
\end{itemize}
By the hypothesis IH, $S_1^f$ and  $S_2^f$ compute the variable $x$ equivalently when started in state $m_1(\vals_1)$ and $m_2(\vals_2)$ respectively.
Therefore, the theorem holds.

\item $S_1$ and $S_2$  are while statement that define the variable $x$:

   $S_1 = ``\text{while}_{\langle n_1\rangle} (e) \; \{S_1''\}", S_2 = ``\text{while}_{\langle n_2\rangle} (e) \; \{S_2''\}"$ such that both of the following hold:
   \begin{itemize}
   \item $x \in \text{Def}(S_1) \cap \text{Def}(S_2)$;

   \item $S_1'' \equiv_{y}^S S_2''$ for $\forall y \in
        \text{Imp}(S_1, \{x\})
        \cup
        \text{Imp}(S_2, \{x\})$;
   \end{itemize}

By Lemma~\ref{lmm:sameFinalValueX}, we show $S_1$ and $S_2$ compute the variable $x$ equivalently when started from state $m_1(m_c^1, \vals_1)$ and $m_2(m_c^2, \vals_2)$ respectively. The point is to show that all required conditions are satisfied for the application of lemma~\ref{lmm:sameFinalValueX}.
\begin{itemize}
\item loop counter value of $S_1$ and $S_2$ are zero.

By our assumption, the loop counter value of $S_1$ and $S_2$ are initially zero.

\item $S_1$ and $S_2$ have same imported variables relative to $x$,
$\text{Imp}(S_1, \{x\}) = \text{Imp }(S_2, \{x\}) = \text{Imp}(\Delta)$.

This is obtained by Lemma~\ref{lmm:sameImpfromEquivCompCond}.

\item the initial value store $\vals_1$ and $\vals_2$ agree on the values of the imported variables in $S_1$ and $S_2$ relative to $x$, $\vals_1(y) = \vals_2(y), \forall y \in \text{Imp}(S_1, \{x\}) \cup \text{Imp}(S_2, \{x\})$.

  By assumption, this holds.

\item $S_1''$ and $S_2''$ compute the imported variables in $S_1$ and $S_2$ relative to $x$ equivalently,
$(S_1'', m_{S_1''}(\vals_{S_1''})) \equiv_{y} (S_2'', m_{S_2''}(\vals_{S_2''})), \forall y \in \text{Imp}(\Delta)$ with value stores $\vals_{S_1''}$ and $\vals_{S_2''}$ agreeing on the values of the imported variables in $S_1''$ and $S_2''$ relative to $\text{Imp}(\Delta)$,
$\vals_{S_1''}(z) = \vals_{S_2''}(z), \forall z \in \text{Imp}(S_1'', \text{Imp}(\Delta)) \cup \text{Imp}(S_2'', \text{Imp}(\Delta))$.

By the definition of program size, the sum of the program size of $S_1'$ and $S_2'$ is less than $k$, $\text{size}(S_1'') + \text{size}(S_2'') < k$. By the hypothesis IH, $S_1''$ and $S_2''$ compute the imported variables in $S_1$ and $S_2$ relative to $x$ equivalently when started in states $m_{S_1''}(\vals_{S_1''})$ and $m_{S_2''}(\vals_{S_2''})$ with value store $\vals_{S_1''}$ and $\vals_{S_2''}$ agreeing on the values of the imported variables in $S_1''$ and $S_2''$ relative to the variables $\text{Imp}(\Delta)$.
\end{itemize}

By Lemma~\ref{lmm:sameFinalValueX}, we show $S_1$ and $S_2$ compute the variable $x$ equivalently when started from state $m_1(m_c^1, \vals_1)$ and $m_2(m_c^2, \vals_2)$ respectively. The theorem holds.

\item $S_1$ and $S_2$ do not define the variable $x$: $x \notin \text{Def}(S_1) \cup \text{Def}(S_2)$.

By the definition of imported variable, the imported variables in $S_1$ and $S_2$ relative to $x$ are both $x$, $\text{Imp}(S_1, \{x\}) = \text{Imp}(S_2, \{x\}) = \{x\}$.
By assumption, the initial values $\vals_1$ and $\vals_2$ agree on the value of the variable $x$, $\vals_1(x) = \vals_2(x)$.
In addition, by assumption, execution of $S_1$ and $S_2$ when started in state $m_1(\vals_1)$ and $m_2(\vals_2)$ terminate, $(S_1, m_1(\vals_1)) ->* (\text{skip}, m_1'(\vals_1')), (S_2, m_2(\vals_2)) ->* (\text{skip}, m_2'(\vals_2'))$.
Finally, by Corollary~\ref{coro:defExclusion}, the value of $x$ is not changed in execution of $S_1$ and $S_2$, $\vals_1'(x) = \vals_1(x) = \vals_2(x) = \vals_2'(x)$. The theorem holds.

\end{enumerate}

\item $S_1$ and $S_2$ are not both one statement such that one of the following holds:

\begin{enumerate}

\item  Last statements both define the variable $x$ such that all of the following hold:
\begin{itemize}
\item  $S_1' \equiv_{y}^S S_2', \forall y \in \text{Imp}(s_1, \{x\}) \cup \text{Imp}(s_2, \{x\})$;

\item  $x \in \text{Def}(s_1) \cap \text{Def}(s_2)$;

\item  $s_1 \equiv_x^S s_2$;
\end{itemize}

We show that $S_1'$ and $S_2'$ compute the imported variables in $s_1$ and $s_2$ relative to the variable $x$ equivalently when started in state $m_1(\vals_1)$ and $m_2(\vals_2)$ respectively by the hypothesis IH. To do that, we show the required conditions are satisfied for applying the hypothesis IH.
\begin{itemize}

\item $\text{size}(S_1') + \text{size}(S_2') < k$.

By the definition of program size, $\text{size}(s_1) \geq 1$, $\text{size}(s_2) \geq 1$.
Hence, $\text{size}(S_1') + \text{size}(S_2') < k$.

\item the executions from $({S_1}, m_1(\vals_1))$ and $({S_2}, m_2(\vals_2))$ terminate respectively,

$({S_1'}, m_1(\vals_1)) ->* (\text{skip}, m_1''(\vals_1''))$,
           $({S_2'}, m_2(\vals_2)) ->* (\text{skip}, m_2''(\vals_2''))$.

By assumption, the execution from $({S_1}, m_1(\vals_1))$ and $({S_2}, m_2(\vals_2))$ terminate,
then the execution of $S_1'$ and $S_2'$ from state $m_1(\vals_1)$ and $m_2(\vals_2)$ terminate,
 $({S_1'}, m_1(\vals_1)) ->* (\text{skip}, m_1''(\vals_1''))$,
           $({S_2'}, m_2(\vals_2)) ->* (\text{skip}, m_2''(\vals_2''))$.

\item the initial value stores agree on the values of the variables:

$\allowbreak\text{Imp}(S_1', \text{Imp}(s_1, \{x\})) \cup \allowbreak \text{Imp}(S_2', \text{Imp}(s_2, \{x\}))$.


By Lemma~\ref{lmm:sameImpfromEquivCompCond}, $s_1$ and $s_2$ have the same imported variables relative to $x$,
$\text{Imp}(s_1, \{x\}) = \text{Imp}(s_2, \{x\})\allowbreak = \text{Imp}(x)$.
By the definition of imported variables, imported variables in $S_1'$ relative to $\text{Imp}(x)$ are same as the imported variables in $S_1$ relative to $x$, $\text{Imp}(S_1', \text{Imp}(s_1, \{x\})) = \text{Imp}(S_1, \{x\})$. Similarly,
$\text{Imp}(S_2', \text{Imp}(s_2, \{x\})) = \text{Imp}(S_2, \{x\})$.
Then, by assumption, the initial value stores agree on the values of the variables
$\allowbreak\text{Imp}(S_1', \text{Imp}(s_1, \{x\}))$ and
$\forall y \in \text{Imp} (S_1', \text{Imp} (s_1, \{x\}))$ $\cup$ $\text{Imp}(S_2', \text{Imp}(s_2, \{x\}))$,
$\text{Imp}(S_2', \text{Imp}(s_2, \{x\}))$, $\vals_1(y) = \vals_2(y)$.

\end{itemize}
By the hypothesis IH, after the full execution of $S_1'$ from state $m_1(\vals_1)$ and the execution of $S_2'$ from state $m_2(\vals_2)$, the value stores agree on the values of the imported variables in $s_1$ and $s_2$ relative to $x$, $\vals_1''(y) = \vals_2''(y), \forall y \in \text{Imp }(x) = \text{Imp }(s_1, \{x\}) = \text{Imp }(s_2, \{x\})$.

Then, we show $s_1$ and $s_2$ compute $x$ equivalently. By Corollary~\ref{coro:termSeq}, $s_1$ and $s_2$ continue execution after the full execution of $S_1'$ and $S_2'$ respectively,
$({S_1'};s_1, m_1(\vals_1)) ->* ({s_1}, m_1''(\vals_1''))$,
$({S_2'};s_2, m_2(\vals_2)) ->* ({s_2}, m_2''(\vals_2''))$.
When $s_1$ and $s_2$ are while statements, by our assumption of unique loop labels, $s_1$ is not in $S_1'$. By Corollary~\ref{coro:loopCntRemainsSame}, the loop counter value of $s_1$ is not redefined in the execution of $S_1'$.
Similarly, the loop counter value of $s_2$ is not redefined in the execution of $S_2'$.
By the hypothesis IH again, after the full execution of $s_1$ and $s_2$, the value stores agree on the value of $x$,
$({s_1}, m_1''(\vals_1'')) ->* (\text{skip}, m_1'(\vals_1')), \allowbreak ({s_2}, m_2''(\vals_2'')) ->* (\text{skip}, m_2'(\vals_2'))$ such that $\vals_1'(x) = \vals_2'(x)$.
The theorem holds.

\item One last statement does not define the variable $x$:
W.l.o.g., $(x \notin \text{ Def}(s_2)) \wedge (S_1 \equiv_{x}^S S_2')$.

We show that $S_1$ and $S_2'$ compute the variable $x$ equivalently when started from state $m_1(\vals_1)$ and $m_2(\vals_2)$ by the hypothesis IH.
First, by the definition of program size, $\text{size}(s_2) \geq 1$. Hence, $\text{size}(S_1) + \text{size}(S_2') \leq k$ .
Next,
by the definition of imported variables, $\text{ Imp }(S_2', \{x\}) \subseteq \text{ Imp }(S_2, \{x\})$.
By assumption, $\vals_1(y) = \vals_2(y)$ for $\forall y \in
\text{ Imp }(S_2', \{x\})
\cup
\text{ Imp }(S_1, \{x\})$.
By the hypothesis IH, $S_1$ and $S_2'$ compute the variable $x$ equivalently when started in state $m_1(\vals_1)$ and $m_2(\vals_2)$ respectively,
$(S_2', m_2(\vals_2)) ->* (\text{skip}, m_2''(\vals_2''))$,
     $({S_1}, m_1(\vals_1)) ->* (\text{skip}, m_1'(\vals_1'))$ such that
$\vals_1'(x) = \vals_2''(x)$.

Then, we show that $S_1$ and $S_2$ compute the variable $x$ equivalently after the full execution of $s_2$. By Corollary~\ref{coro:termSeq}, $s_2$ continues execution immediately after the full execution of $S_2'$,
$({S_2';s_2}, m_2) ->* ({s_2}, m_2'')$.
By assumption, the execution from $({S_2';s_2}, m_2)$ terminates,
$({s_2}, m_2''(\vals_2'')) ->* (\text{skip}, m_2'(\vals_2'))$.
By Corollary~\ref{coro:defExclusion}, the value of $x$ is not changed in the execution of $s_2$, $\vals_2'(x) = \vals_2''(x)$.
Hence, $\vals_1'(x) = \vals_2'(x)$. The theorem holds.

\item There are statements moving in/out of If statement:

      $s_1 = ``\text{If }(e) \text{ then }\{S_1^t\} \text{ else }\{S_1^f\}"$,
      $s_2 = ``\text{If }(e) \text{ then }\{S_2^t\}\allowbreak \text{ else }\{S_2^f\}"$ such that none of the above cases hold and all of the following hold:
     \begin{itemize}
     \item $S_1' \equiv_{y}^S S_2'$ for $\forall y \in \text{ Use}(e)$;

     \item $S_1';S_1^t \equiv_{x}^S S_2';S_2^t$;

     \item $S_1';S_1^f \equiv_{x}^S S_2';S_2^f$;

     \item $x \in \text{Def}(s_1) \cap \text{Def}(s_2)$;
     \end{itemize}

 Repeatedly $S_1 = S_1';s_1, S_2 = S_2';s_2$. We first show that, after the full execution of $S_1'$ and $S_2'$ started in state $m_1$ and $m_2$, the predicate expression of $s_1$ and $s_2$ evaluate to the same value, w.l.o.g, zero.
Next we show that $S_1$ and $S_1';S_1^f$ compute the variable $x$ equivalently when
(1) both started in state $m_1$ and
(2) the predicate expression of $s_1$ evaluates to zero after the full execution of $S_1'$ started in state $m_1$, similarly $S_2$ and $S_2';S_2^f$ compute the variable $x$ equivalently when
(1) both started in the state $m_2$ and
(2) the predicate expression of $s_2$ evaluates to zero after the full execution of $S_1'$ when started in state $m_2$. Last we prove the theorem by showing that $S_1';S_1^f$ started in state $m_1$ and $S_2';S_2^f$ started in state $m_2$ compute the variable $x$ equivalently.

\myNewLine

In order to show that $S_1'$ and $S_2'$ compute the variables used in predicate expression of $s_1$ and $s_2$ equivalently by the hypothesis IH, we show that all required conditions are satisfied for the application of hypothesis IH.
 \begin{itemize}
  \item $\text{size}(S_1')+\text{size}(S_2')<k$.

The sum of program size of $S_1'$ and $S_2'$ are less than $k$ by the definition of program size for $s_1$ and $s_2$, $\text{size}(S_1')+\text{size}(S_2')<k$.

 \item the execution of $S_1'$ and $S_2'$ terminate, $({S_1'}, m_1) ->* (\text{skip}, m_1''(\vals_1''))$, and
$({S_2'}, m_2) ->* (\text{skip}, m_2''(\vals_2''))$.

By assumption, the execution of $S_1$ and $S_2$ from the state $m_1$ and $m_2$ respectively terminate, then the execution of $S_1'$ and $S_2'$ terminate when started in state $m_1$ and $m_2$ respectively.

 \item the initial value stores $\vals_1$ and $\vals_2$ agree on the values of the imported variables in $S_1'$ and $S_2'$ relative to the variables used in the predicate expression of $s_1$ and $s_2$.

By Lemma~\ref{lmm:sameImpfromEquivCompCond}, the imported variables in $S_1'$ and $S_2'$ relative to the variables used in predicate expression of $s_1$ and $s_2$ are same,
 $\text{Imp}(S_1', \text{Use}(e)) = \text{Imp}(S_2', \text{Use}(e)) = \text{Imp}(e)$.
By the definition of imported variable, the imported variables in $S_1'$ relative to the variables used in predicate expression of $s_1$ are a subset of the imported variables in $S_1$ relative to $x$ respectively, $\text{Imp}(S_1', \text{Use}(e)) \subseteq \text{Imp}(S_1', \text{Imp}(s_1, \{x\})) = \text{Imp}(S_1, \{x\})$. Similarly $\text{Imp}(S_2', \text{Use}(e)) \subseteq \text{Imp}(S_2, \{x\})$.
Then, by assumption, the initial value stores agree on the values of the imported variables in $S_1'$ and $S_2'$ relative to the variables used in the predicate expression of $s_1$ and $s_2$, $\vals_1(y) = \vals_2(y), \forall y \in \text{Imp}(e) = \text{Imp}(S_1', \text{Use}(e)) = \text{Imp}(S_2', \text{Use}(e))$.

\end{itemize}
By the hypothesis IH, after the full execution of $S_1'$ and $S_2'$, the value stores agree on the values of the variables used in the predicate expression of $s_1$ and $s_2$, $\vals_1''(y) = \vals_2''(y), \forall y \in \text{Use}(e)$. By Corollary~\ref{coro:termSeq},
$s_1$ and $s_2$ continue execution after the full execution of $S_1'$ and $S_2'$ respectively, $({S_1';s_1}, m_1) ->* (s_1, m_1''(\vals_1''))$, and
$({S_2'};s_2, m_2) ->* (s_2, m_2''(\vals_2''))$.

 By the property of expression meaning function $\mathcal{E}$, expression $e$ evaluates to the same value w.r.t value stores $\vals_1''$ and $\vals_2''$, w.l.o.g., zero, $\mathcal{E}\llbracket e\rrbracket\vals_1'' = \mathcal{E}\llbracket e\rrbracket\vals_2'' = 0$. Then the execution of $s_1$ proceeds as follows.

\begin{tabbing}
xx\=xx\=\kill
\>\>     $(s_1,m_1''(\vals_1''))$\\
\>= \>   $(\text{If }(e) \text{ then }\{S_1^t\} \text{ else }\{S_1^f\}, m_1''(\vals_1''))$ \\
\>$->$\> $(\text{If }(0) \text{ then }\{S_1^t\} \text{ else }\{S_1^f\}, m_1''(\vals_1''))$ by the EEval rule.  \\
\>$->$\> $(S_1^f, m_1''(\vals_1''))$ by the If-F rule.
\end{tabbing}

Similarly, the execution from $(s_2, m_2''(\vals_2''))$ gets to $(S_2^f, m_2''(\vals_2''))$.

\myNewLine

Then, we show that $S_1$ and $S_1';S_1^f$ compute the variable $x$ equivalently when both started from state $m_1(\vals_1)$.
 The execution of ${S_1'};S_1^f$ started from state $m_1$ also gets to configuration $(S_1^f, m_1''(\vals_1''))$ because execution of $S_1=S_1';s_1$ and ${S_1'};S_1^f$ share the common execution $({S_1'}, m_1) ->* (\text{skip}, m_1''(\vals_1''))$. By Corollary~\ref{coro:termSeq}, $S_1^f$ continues execution after the full execution of $S_1'$, $({S_1'};S_1^f, m_1) ->* ({S_1^f}, m_1'')$.
Therefore, the execution of $S_1$ and ${S_1'};S_1^f$ from state $m_1$ compute the variable $x$ equivalently because both executions get to same intermediate configuration. Similarly, $S_2$ and ${S_2'};S_2^f$  compute the variable $x$ equivalently when both started from state $m_2(\vals_2)$.

Lastly, we show that $S_1';S_1^f$ and $S_2';S_2^f$ compute the variable $x$ equivalently when started in states $m_1(\vals_1)$ and $m_2(\vals_2)$ respectively by the hypothesis IH. To do that, we show that all required conditions are satisfied for the application of hypothesis IH.
\begin{itemize}
\item $\text{size}(S_1';S_1^f) + \text{size}(S_2';S_2^f) < k$.

This is obtained by the definition of program size.

\item execution of $S_1';S_1^f$ and $S_2';S_2^f$ terminate when started in state $m_1(\vals_1)$ and $m_2(\vals_2)$ respectively.

This is obtained by above argument.

\item $\vals_1(y) = \vals_2(y),  \forall y \in \text{Imp}(S_1';S_1^f, \{x\}) \allowbreak \cup \text{Imp}(S_2';S_2^f, \{x\})$.

We show that $\text{Imp}(S_1';S_1^f, \{x\}) \subseteq \text{Imp}(S_1, \{x\})$ as follows.
\begin{tabbing}
xx\=xx\=\kill
\>\>             $\text{Imp}(S_1^f, \{x\})$\\
\>$\subseteq$\>  $\text{Imp}(s_1, \{x\})$  (1) by the definition of imported variables.\\
\end{tabbing}

\begin{tabbing}
xx\=xx\=\kill
\>\>             $\text{Imp}(S_1';S_1^f, \{x\})$\\
\>=\>            $\text{Imp}(S_1', \text{Imp}(S_1^f, \{x\}))$ by Lemma~\ref{lmm:ImpPrefixLemma}\\
\>$\subseteq$\>  $\text{Imp}(S_1', \text{Imp}(s_1, \{x\}))$ by (1)\\
\>=\>            $\text{Imp}(S_1, \{x\})$ by the definition of imported variables.
\end{tabbing}
Similarly, $\text{Imp}(S_2';S_2^f, \{x\})  \subseteq \text{Imp}(S_2, \{x\})$.
Then, by assumption, the initial value stores agree on the values of the imported variables in $S_1';S_1^f$ and $S_2';S_2^f$ relative to $x$.
\end{itemize}
Then, by the hypothesis IH, after the full execution of $S_1';S_1^f$ and $S_2';S_2^f$, the value stores agree on the value of $x$,
$({S_1'};S_1^f, m_1) ->* (\text{skip}, m_1'(\vals_1'))$,
$({S_2'};S_2^f, m_2) ->* (\text{skip}, m_2'(\vals_2'))$ such that $\vals_1'(x) = \vals_2'(x)$.

In conclusion, after execution of $S_1$ and $S_2$, the value stores agree on the value of $x$.
Therefore, the theorem holds.

\end{enumerate}
\end{enumerate}
\end{proof}

\subsubsection{Supporting lemmas for the soundness proof of equivalent computation for terminating programs}
The lemmas include the proof of two while statements computing a variable equivalently used in the proof of Theorem~\ref{thm:equivCompMain} and
the property that two programs have same imported variables relative to a variable $x$ if the two programs satisfy the proof rule of equivalent computation of the variable $x$.
From the proof rule of terminating computation of a variable $x$ equivalently,
we have the two programs either both define $x$ or both do not.

\begin{lemma}\label{lmm:equivTermCompSameLoopIteration}
Let $s_1$ = ``$\text{while}_{\langle n_1\rangle} (e) \; \{S_1\}$" and $s_2$ = ``$\text{while}_{\langle n_2\rangle} (e) \;\allowbreak \{S_2\}$" be two while statements with the same set of imported variables relative to a variable $x$ (defined in $s_1$ and $s_2$), $\text{Imp}(x)$, and whose loop bodies $S_1$ and $S_2$ terminatingly compute the variables in $\text{Imp}(x)$ equivalently when started in states that agree on the values of the variables imported by $S_1$ or $S_2$ relative to $\text{Imp}(x)$:
\begin{itemize}
\item $x \in \text{Def}(s_1) \cap \text{Def}(s_2)$;

\item $\text{Imp}(s_1, \{x\}) = \text{Imp}(s_2, \{x\}) = \text{Imp}(x)$;

\item $\forall y \in \text{Imp}(x), \forall m_{S_1}(\vals_{S_1}), m_{S_2}(\vals_{S_2}):$

$((\forall z\in \text{Imp}(S_1, \text{Imp}(x)) \cup \text{Imp}(S_2, \text{Imp}(x))\,:\, \vals_{S_1}(z) = \vals_{S_2}(z))\allowbreak
=> (S_1, m_{S_1}(\vals_{S_1})) \equiv_{y} (S_2, m_{S_2}(\vals_{S_2})))$.
\end{itemize}

If the executions of $s_1$ and $s_2$ terminate when started in states $m_1(\text{loop}_c^1, \vals_1)$ and $m_2(\text{loop}_c^2, \vals_2)$ in which
$s_1$ and $s_2$ have not already executed (loop counter initially 0:
$\text{loop}_c^1(n_{1}) = \text{loop}_c^2(n_{2}) = 0$),
and whose value stores $\vals_1$ and $\vals_2$ agree on the values of the variables in $\text{Imp}(x)$,
$ \forall y \in \text{Imp}(x),\; \vals_1(y) = \vals_2(y)$, then, for any positive integer $i$, one of the following holds:
\begin{enumerate}
\item{The loop counters for $s_1$ and $s_2$ are always less than $i$:}

$\forall m_{1}', m_{2}'$ such that
$({s_1}, m_1) ->* ({S_1'}, m_{1}'(\text{loop}_c^{1'})$ and
$({s_2}, m_2) \allowbreak->* ({S_2'}, m_{2}'(\text{loop}_c^{2'}))$,

$\text{loop}_c^{1'}(n_1) < i$ and $\text{loop}_c^{2'}(n_2) < i$;

\item{There are two configurations $(s_1, m_{1_i})$ and $(s_2, m_{2_i})$ reachable from $(s_1, m_1)$ and $(s_2, m_2)$, respectively, in which the loop counters of $s_1$ and $s_2$ are equal to $i$ and value stores agree on the values of imported variables relative to $x$ and, for every state in execution, $(s_1, m_1) ->* (s_1, m_{1_i})$ or $(s_2, m_2) ->* (s_2, m_{2_i})$ the loop counters for $s_1$ and $s_2$ are less than or equal to $i$ respectively:}

 $\exists (s_1, m_{1_i}), (s_2, m_{2_i})\,:\, ({s_1}, m_1) ->* ({s_1}, m_{1_i}(\text{loop}_c^{1_i}, \vals_{1_i})) \wedge
 ({s_2}, m_2) ->* ({s_2}, m_{2_i}(\text{loop}_c^{2_i}, \vals_{2_i}))$ where
\begin{itemize}
 \item $\text{loop}_c^{1_i}(n_1) = \text{loop}_c^{2_i}(n_2) = i$; and

 \item $\forall y \in \text{Imp}(x)\,:\, \vals_{1_i}(y) = \vals_{2_i}(y)$ and

 \item $\forall m_1'\,:\, ({s_1}, m_1) ->* (S_1', m_1'(\text{loop}_c^{1'})) ->*$

 \noindent$({s_1}, m_{1_i}(\text{loop}_c^{1_i}, \vals_{1_i})), \; \text{loop}_c^{1'}(n_1) \leq i$; and

 \item $\forall m_2'\,:\, ({s_2}, m_2) ->* (S_2', m_2'(\text{loop}_c^{2'})) ->*$

 \noindent$({s_2}, m_{2_i}(\text{loop}_c^{2_i}, \vals_{2_i})), \; \text{loop}_c^{2'}(n_2) \leq i$;
\end{itemize}
\end{enumerate}
\end{lemma}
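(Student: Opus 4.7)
The plan is to prove this by induction on $i$. The induction hypothesis at stage $i$ is exactly the disjunction stated: either the loop counters for $s_1,s_2$ remain strictly below $i$ in every reachable configuration, or there exist configurations $(s_1, m_{1_i})$ and $(s_2, m_{2_i})$ satisfying all three bulleted conditions.

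For the base case $i=1$, I would use the fact that $\text{Use}(e) \subseteq \text{Imp}(s_1,\{x\}) = \text{Imp}(x)$ (by case 4 of the definition of imported variables, with the $i=0$ summand) together with the initial agreement on $\text{Imp}(x)$ to conclude via the expression meaning function that $e$ evaluates to the same value in $m_1$ and $m_2$. If that value is $0$, then Wh-F fires in both executions, the loop counters are reset to $0$, and no configuration ever has a loop counter reaching $1$: case~1 holds. If the value is nonzero, Wh-T fires in both executions, incrementing the counters from $0$ to $1$ and producing configurations of the form $(S_1;s_1,\cdot)$ and $(S_2;s_2,\cdot)$ whose states still agree on $\text{Imp}(x)$.

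For the inductive step, if case~1 held at stage $i$ it still holds at stage $i+1$ trivially. Otherwise I take the witnessing configurations $(s_1, m_{1_i})$ and $(s_2, m_{2_i})$ from case~2 at stage $i$, evaluate the predicate $e$ as before (using $\text{Use}(e) \subseteq \text{Imp}(x)$ and the agreement of $\vals_{1_i}$ and $\vals_{2_i}$ on $\text{Imp}(x)$), and split on the predicate value. If $0$, both loops exit by Wh-F and no counter ever exceeds $i < i+1$, so case~1 at stage $i+1$ holds. If nonzero, Wh-T bumps both counters to $i+1$; then I need to run $S_1$ and $S_2$ and apply the lemma's hypothesis about equivalent terminating computation of the variables in $\text{Imp}(x)$ by $S_1$ and $S_2$. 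For this I must verify that the pre-$S_1,S_2$ states agree on $\text{Imp}(S_1,\text{Imp}(x)) \cup \text{Imp}(S_2,\text{Imp}(x))$. The key algebraic fact is that $\text{Imp}(x)$ is closed under taking $S_1$- and $S_2$-imports, i.e.\ $\text{Imp}(S_1,\text{Imp}(x)) \subseteq \text{Imp}(x)$, which follows from case~4 of Definition~\ref{def:impvars}: writing $\text{Imp}(x) = \bigcup_{j\ge 0}\text{Imp}(S_1^{\,j},\text{Use}(e)\cup\{x\})$ and using case~5 to collapse $\text{Imp}(S_1,\text{Imp}(S_1^{\,j},\cdot)) = \text{Imp}(S_1^{\,j+1},\cdot)$, distributivity of $\text{Imp}$ over unions (itself a small auxiliary lemma by induction on $S_1$) gives the inclusion. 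After $S_1,S_2$ terminate (they do, because $s_1,s_2$ globally terminate), the resulting states agree on $\text{Imp}(x)$ and are the desired $m_{1_{i+1}},m_{2_{i+1}}$.

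The bookkeeping obligation on loop counters for the final bullets of case~2 at stage $i+1$ requires that the counters for labels $n_1,n_2$ do not exceed $i+1$ in any intermediate configuration encountered while running $S_1$ and $S_2$. This follows from the global assumption of unique loop labels: $s_1$ cannot appear inside $S_1$, so no Wh-T/Wh-F rule fired during the execution of $S_1$ touches the counter for $n_1$, which therefore remains exactly $i+1$ throughout; symmetrically for $n_2$. Combining with the inductive bound $\le i$ on counters along the prefix reaching $(s_1,m_{1_i}),(s_2,m_{2_i})$, we get $\le i+1$ everywhere, completing case~2 at stage $i+1$. I expect the main obstacle to be the closure property $\text{Imp}(S_1,\text{Imp}(x))\subseteq\text{Imp}(x)$, because it relies on the infinite-union formulation of $\text{Imp}$ for loops and on distributivity of $\text{Imp}$ over unions --- a structural fact that must be proved separately by induction on the statement sequence. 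Once that lemma is in hand, the rest of the argument is a careful case analysis on the predicate's value and an application of the equivalent-computation hypothesis for $S_1,S_2$.
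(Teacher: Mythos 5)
Your proposal is correct and follows essentially the same route as the paper's proof: induction on $i$, same-value evaluation of the shared predicate via $\text{Use}(e)\subseteq\text{Imp}(x)$, the closure property $\text{Imp}(S_1,\text{Imp}(x))\subseteq\text{Imp}(x)$ derived from the infinite-union formulation of $\text{Imp}$ for loops together with distributivity over unions and the prefix lemma, the equivalent-computation hypothesis applied to the terminating loop bodies, and unique loop labels for the counter bookkeeping. The only cosmetic remark is that your base case should also finish by running $S_1,S_2$ to completion exactly as in your inductive step, so as to actually reach the configurations $(s_1,m_{1_1})$ and $(s_2,m_{2_1})$ demanded by case~2.
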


\begin{proof}
By induction on $i$.

\noindent{\bf Base case}. $i = 1$.

By assumption, initial loop counters of $s_1$ and $s_2$ are of value zero. Initial value stores $\vals_1$ and $\vals_2$ agree on the values of the variables in $\text{Imp}(x)$. Then we show one of the following cases hold:
\begin{enumerate}
\item{The loop counters for $s_1$ and $s_2$ are always less than 1:}

$\forall m_{1}', m_{2}'$ such that
$({s_1}, m_1) ->* ({S_1'}, m_1'(\text{loop}_c^{1'}))$ and
$({s_2}, m_2) \allowbreak->* ({S_2'}, m_2'(\text{loop}_c^{2'}))$,
$\text{loop}_c^{1'}(n_1) < 1$ and $\text{loop}_c^{2'}(n_2) \allowbreak< 1$;

\item{There are two configurations $(s_1, m_{1_1})$ and $(s_2, m_{2_1})$ reachable from $(s_1, m_1)$ and $(s_2, m_2)$, respectively, in which the loop counters of  $s_1$ and $s_2$ are equal to 1 and value stores agree on the values of imported variables relative to $x$ and, for every state in execution, $(s_1, m_1) ->* (s_1, m_{1_1})$ or $(s_2, m_2) ->* (s_2, m_{2_1})$ the loop counters for $s_1$ and $s_2$ are less than or equal to one respectively:}

 $\exists (s_1, m_{1_1}), (s_2, m_{2_1})\,:\,
 ({s_1}, m_1) ->* ({s_1}, m_{1_1}(\text{loop}_c^{1_1}, \vals_{1_1})) \wedge
 ({s_2}, m_2) ->* ({s_2}, m_{2_1}(\text{loop}_c^{2_1}, \vals_{2_1}))$ where
\begin{itemize}
 \item $\text{loop}_c^{1_1}(n_1) = \text{loop}_c^{2_1}(n_2) = 1$; and

 \item $\forall y \in \text{Imp}(x)\,:\,
   \vals_{1_1}(y) = \vals_{2_1}(y)$; and

 \item $\forall m_1'\,:\,
 ({s_1}, m_1) ->* (S_1', m_1'(\text{loop}_c^{1'})) ->* ({s_1}, m_{1_1}(\text{loop}_c^{1_1}, \vals_{1_1})), \; \allowbreak \text{loop}_c^{1'}(n_1) \leq 1$; and

 \item $\forall m_2'\,:\, ({s_2}, m_2) ->* (S_2', m_2'(\text{loop}_c^{2'})) ->* ({s_2}, m_{2_1}(\text{loop}_c^{2_1}, \vals_{2_1})), \; \allowbreak \text{loop}_c^{2'}(n_2) \leq 1$.
\end{itemize}
\end{enumerate}

We show evaluations of the predicate expression of $s_1$ and $s_2$ w.r.t value stores $\vals_1$ and $\vals_2$ produce same value.
By the definition of imported variables,
$\text{Imp}(s_1, \{x\}) = \bigcup_{j\geq0}\text{Imp}(S_1^j, \{x\} \cup \text{Use}(e))$.
By our notation of $S^0$, $S_1^0 = \text{skip}$.
By the definition of imported variables, $\text{Imp}(S_1^0, \{x\} \cup \text{Use}(e)) = \{x\} \cup \text{Use}(e)$.
Then $\text{Use}(e) \subseteq \text{Imp}(x)$. By assumption, value stores $\vals_1$ and $\vals_2$ agree on the values of the variables in $\text{Use}(e)$.
By Lemma~\ref{lmm:expEvalSameVal}, the predicate expression $e$ of $s_1$ and $s_2$ evaluates to same value $v$ w.r.t value stores $\vals_1, \vals_2$,
$\mathcal{E}'\llbracket e\rrbracket\vals_1 = \mathcal{E}'\llbracket e\rrbracket\vals_2 = v$.
Then there are two possibilities to consider.
\begin{enumerate}
\item{$\mathcal{E}'\llbracket e\rrbracket\vals_1 =
       \mathcal{E}'\llbracket e\rrbracket\vals_2 = v = 0$}

The execution from $({s_1}, m_1(\text{loop}_c^{1}, \vals_1))$ proceeds as follows.

\begin{tabbing}
xx\=xx\=\kill
\>\>     $({s_1}, m_{1}(\text{loop}_c^1,\vals_1))$\\
\>= \>   $(\text{while}_{\langle n_1\rangle} (e) \; \{S_1\}, m_{1}(\text{loop}_c^{1}))$\\
\>$->$\> $(\text{while}_{\langle n_1\rangle} (0) \; \{S_1\}, m_{1}(\text{loop}_c^{1}))$ by the EEval' rule\\
\>$->$\> $({\text{skip}}, m_1(\text{loop}_c^{1}[0/n_1]))$ by the Wh-F rule.
\end{tabbing}

Similarly, $({s_2}, m_{2}(\text{loop}_c^{2},\vals_{2}))$ {\kStepArrow [2] }
           $({\text{skip}}, m_2(\text{loop}_c^{2}[0/n_2]))$.

In conclusion, the loop counters of $s_1$ and $s_2$ in any states of the execution from $({s_1}, m_1)$ and $({s_2}, m_2)$ respectively are less than 1,
$\forall m_1', m_2'$ such that
$({s_1}, m_1) ->* ({S_1}', m_{1}'(\text{loop}_c^{1'})$ and
$({s_2}, m_2) ->* ({S_2}', m_{2}'(\text{loop}_c^{2'}))$,
$\text{loop}_c^{1'}(n_1) < 1$, $\text{loop}_c^{2'}(n_2) < 1$.

\item{$\mathcal{E}'\llbracket e\rrbracket\vals_{1} =
       \mathcal{E}'\llbracket e\rrbracket\vals_{2} = v \neq 0$}

The execution from $({s_1}, m_{1}(\text{loop}_c^{1}, \vals_{1}))$ proceeds as follows.

\begin{tabbing}
xx\=xx\=\kill
\>\>     $({s_1}, m_{1}(\text{loop}_c^{1}, \vals_{1}))$\\
\>= \>   $(\text{while}_{\langle n_1\rangle} (e) \; \{S_1\}, {m_{1}(\text{loop}_c^{1}, \vals_{1})})$\\
\>$->$\> $(\text{while}_{\langle n_1\rangle} (v) \; \{S_1\}, {m_{1}(\text{loop}_c^{1}, \vals_{1})})$ by the EEval' rule\\
\>$->$\> $(S_1;\text{while}_{\langle n_1\rangle} (e) \; \{S_1\}, m_{1}(\text{loop}_c^{1}[1/(n_1)], \vals_{1}))$\\
\>\>     by the Wh-T rule.
\end{tabbing}

Similarly,
$({s_2}, m_{2}(\text{loop}_c^{2}, \vals_{2}))$ {\kStepArrow [2] }
$(S_2;\text{while}_{\langle n_2\rangle} (e) \{S_2\}, \allowbreak m_{2}\allowbreak(\text{loop}_c^{2}[1/(n_2)], \vals_{2}))$.
Then, the loop counters of $s_1$ and $s_2$ are 1, value stores $\vals_{1_{1}}$ and $\vals_{2_{1}}$
agree on values of variables in $\text{Imp}(x)$:
$\text{loop}_c^{1}[1/n_1](n_1) =$

\noindent$\text{loop}_c^{2}[1/(n_2)](n_2) = 1$; and
$\forall y \in \text{Imp}(x), \vals_{1_{1}}(y)$ = $\vals_{2_{1}}(y)$.
By assumption, the execution of $s_1$ terminates when started in the state $m_1(, \vals_1)$, then the execution of $S_1$ terminates when started in the state
$m_{1}(\text{loop}_c^{1}[1/(n_1)], \vals_{1})$,
$({s_1}, m_1) ->*$

\noindent$(S_1;\text{while}_{\langle n_1\rangle} (e) \{S_1\}, m_{1}(\text{loop}_c^{1}[1/(n_1)], \vals_{1})) ->*
 (\text{skip}, m_{1}') =>$

$(S_1, m_{1}(\text{loop}_c^{1}[1/(n_1)], \vals_{1})) ->*
(\text{skip}, m_{1_{1}}(\text{loop}_c^{1_{1}}, \vals_{1_{1}}))$.
Similarly, the execution of $S_2$ terminates when started in the state
$m_{2}(\text{loop}_c^{2}[1/(n_2)], \vals_{2})$,

\noindent$(S_2, m_{2}(\text{loop}_c^{2}[1/(n_2)], \vals_{2})) ->* (\text{skip}, m_{2_{1}}(\text{loop}_c^{2_{1}}, \vals_{2_{1}}))$.

We show that, after the full execution of $S_1$ and $S_2$, the following four properties hold.
\begin{itemize}
\item The loop counters of $s_1$ and $s_2$ are of value $1$, $\text{loop}_c^{1_{1}}(n_1) = \text{loop}_c^{2_{1}}(n_2) = 1$.

By assumption of unique loop labels, $s_1 \notin S_1$. Then, the loop counter value of $n_1$ is not redefined in the execution of $S_1$ by corollary~\ref{coro:loopCntRemainsSame}, $\text{loop}_c^{1}[1/(n_1)](n_1)$ = $\text{loop}_c^{1_{1}}(n_1) = 1$.
Similarly, the loop counter value of $n_2$ is not redefined in the execution of $S_2$, $\text{loop}_c^{2}[1/(n_2)](n_2)$ = $\text{loop}_c^{2_{1}}(n_2) = 1$.

\item In any state in the execution $(s_1, m_1) ->* (s_1, m_{1_1}(\text{loop}_c^{1_1}, \vals_{1_1}))$, the loop counter of $s_1$ is less than or equal to 1.

    The loop counter of $s_1$ is zero in any of the two states in the one step execution
    $({s_1}, m_1) -> (\text{while}_{\langle n_1\rangle} (v) \; \{S_1\}, \allowbreak{m_{1}(\text{loop}_c^{1},  , \vals_{1})})$, and
    the loop counter of $s_1$ is 1 in any states in the execution

    \noindent$(S_1;\text{while}_{\langle n_1\rangle} (e) \; \{S_1\}, m_{1}(\text{loop}_c^{1}[i/(n_1)], \vals_{1})) ->*$

    \noindent$(s_1, m_{1_1}(\text{loop}_c^{1_1}, \vals_{1_1}))$.

\item In any state in the executions $(s_2, m_2) ->* (s_2, m_{2_1}(\text{loop}_c^{2_1}, \vals_{2_1}))$, the loop counter of $s_2$ is less than or equal to 1.

    By similar argument for the loop counter of $s_1$.

\item The value stores $\vals_{1_{1}}$ and $\vals_{2_{1}}$ agree on the values of the imported variables in $s_1$ and $s_2$ relative to the variable $x$: $\forall y \in \text{Imp}(x), \vals_{1_{1}}(y) = \vals_{2_{1}}(y)$.


We show that the imported variables in $S_1$ relative to those in $\text{Imp}(x)$ are a subset of $\text{Imp}(x)$.


\begin{tabbing}
xx\=xx\=\kill
\>\>              $\text{Imp}(S_1, \text{Imp}(x))$ \\
\>=\>             $\text{Imp}(S_1, \text{Imp}(s_1, \{x\} \cup \text{Use}(e)))$ by the definition of $\text{Imp}(x)$\\
\>=\>             $\text{Imp}(S_1, \bigcup_{j\geq0} \text{Imp}(S_1^j, \{x\} \cup \text{Use}(e)))$\\
\>\>               by the definition of imported variables\\
\>=\>             $\bigcup_{j\geq0}\text{Imp}(S_1, \text{Imp}(S_1^j, \{x\} \cup \text{Use}(e)))$ by Lemma~\ref{lmm:impVarUnionLemma}\\
\>=\>             $\bigcup_{j>0}\text{Imp}(S_1^j, \{x\} \cup \text{Use}(e))$ by Lemma~\ref{lmm:ImpPrefixLemma}\\
\>$\subseteq$\>   $\bigcup_{j\geq0} \text{Imp}(S_1^j, \{x\} \cup \text{Use}(e))$ \\
\>=\>             $\text{Imp}(s_1, \{x\} \cup \text{Use}(e)) = \text{Imp}(x)$.
\end{tabbing}

Similarly, $\text{Imp}(S_2, \text{Imp}(x)) \subseteq \; \text{Imp}(x)$.
Consequently, the value stores $\vals_{1_1}$ and $\vals_{2_1}$ agree on the values of the imported variables in $S_1$ and $S_2$ relative to those in $\text{Imp}(x)$, $\forall y \in \text{Imp}(S_1, \text{Imp}(x))\allowbreak \cup \text{Imp}(S_2, \text{Imp}(x)), \vals_{1}(y)$ = $\vals_{2}(y)$.
Because $S_1$ and $S_2$ have computation of every variable in $\text{Imp}(x)$ equivalently when started in states agreeing on the values of the imported variables relative to $\text{Imp}(x)$, then value store $\vals_{1_{1}}$ and $\vals_{2_{1}}$ agree on the values of the variables $\text{Imp}(x)$,
$\forall y \in \text{Imp}(x), \vals_{1_{1}}(y)$ = $\vals_{2_{1}}(y)$.
 \end{itemize}

It follows that, by corollary~\ref{coro:termSeq},

$(S_1;s_1, m_{1}(\text{loop}_c^{1}[1/n_1], \vals_{1})) ->*$
$(s_1, m_{1_{1}}(\text{loop}_c^{1_{1}}, \vals_{1_{1}}))$ and
$(S_2;s_2, m_{2}(\text{loop}_c^{2}[1/n_2], \vals_{2})) ->*$
$(s_2, m_{2_{1}}(\text{loop}_c^{2_{1}}, \vals_{2_{1}}))$.
\end{enumerate}


\noindent{\bf Induction Step}.

\noindent The induction hypothesis IH is that, for a positive integer $i$, one of the following holds:
\begin{enumerate}
\item{The loop counters for $s_1$ and $s_2$ are always less than $i$:}

$\forall m_{1}', m_{2}'$ such that
$({s_1}, m_1) ->* ({S_1'}, m_1'(\text{loop}_c^{1'})$ and
$({s_2}, m_2) ->* ({S_2'}, m_2'(\text{loop}_c^{2'}))$,

$\text{loop}_c^{1'}(n_1) < i$ and $\text{loop}_c^{2'}(n_2) < i$;

\item{There are two configurations $(s_1, m_{1_i})$ and $(s_2, m_{2_i})$ reachable from $(s_1, m_1)$ and $(s_2, m_2)$, respectively, in which the loop counters of $s_1$ and $s_2$ are equal to $i$ and value stores agree on the values of imported variables relative to $x$ and, for every state in execution, $(s_1, m_1) ->* (s_1, m_{1_i})$ and $(s_2, m_2) ->* (s_2, m_{2_i})$ the loop counters for $s_1$ and $s_2$ are less than or equal to $i$ respectively:}

 $\exists (s_1, m_{1_i}), (s_2, m_{2_i})\,:\,
 (s_1, m_1) ->* (s_1, m_{1_i}(\text{loop}_c^{1_i}, \vals_{1_i})) \wedge
 (s_2, m_2) ->* (s_2, m_{2_i}(\text{loop}_c^{2_i}, \vals_{2_i}))$ where
\begin{itemize}
 \item $\text{loop}_c^{1_i}(n_1) = \text{loop}_c^{2_i}(n_2) = i$; and

 \item $\forall y \in \text{Imp}(x), \vals_{1_i}(y) = \vals_{2_i}(y)$; and

 \item $\forall m_1'\,:\, ({s_1}, m_1) ->* (S_1', m_1'(\text{loop}_c^{1'}) ->*$

 \noindent$({s_1}, m_{1_i}(\text{loop}_c^{1_i}, \vals_{1_i})), \; \text{loop}_c^{1'}(n_1) \leq i$; and

 \item $\forall m_2'\,:\, ({s_2}, m_2) ->* (S_2', m_2'(\text{loop}_c^{2'})) ->*$

 \noindent$({s_2}, m_{2_i}(\text{loop}_c^{2_i}, \vals_{2_i})), \; \text{loop}_c^{2'}(n_2) \leq i$.
\end{itemize}
\end{enumerate}

\noindent Then we show that, for the positive integer $i+1$, one of the following holds:
\begin{enumerate}
\item{The loop counters for $s_1$ and $s_2$ are always less than $i+1$:}

$\forall m_{1}', m_{2}'$ such that
$({s_1}, m_1) ->* ({S_1'}, m_1'(\text{loop}_c^{1'}))$ and
$({s_2}, m_2) \allowbreak->* ({S_2'}, m_2'(\text{loop}_c^{2'}))$,

$\text{loop}_c^{1'}(n_1) < i+1$ and $\text{loop}_c^{2'}(n_2) < i+1$;

\item{There are two configurations $(s_1, m_{1_{i+1}})$ and $(s_2, m_{2_{i+1}})$ reachable from $(s_1, m_1)$ and $(s_2, m_2)$, respectively, in which the loop counters of $s_1$ and $s_2$ are equal to $i+1$ and value stores agree on the values of imported variables relative to $x$ and, for every state in executions $(s_1, m_1) ->* (s_1, m_{1_{i+1}})$ and $(s_2, m_2) ->* (s_2, m_{2_{i+1}})$ the loop counters for $s_1$ and $s_2$ are less than or equal to $i+1$ respectively:}

 $\exists (s_1, m_{1_{i+1}}), (s_2, m_{2_{i+1}})\,:\,
 (s_1, m_1) ->* (s_1, m_{1_{i+1}}(\text{loop}_c^{1_{i+1}},\allowbreak \vals_{1_{i+1}})) \wedge
 (s_2, m_2) ->* (s_2, m_{2_{i+1}}(\text{loop}_c^{2_{i+1}}, \vals_{2_{i+1}}))$ where
\begin{itemize}
 \item $\text{loop}_c^{1_{i+1}}(n_1) = \text{loop}_c^{2_{i+1}}(n_2) = i+1$; and

 \item $\forall y \in \text{Imp}(x), \vals_{1_{i+1}}(y) = \vals_{2_{i+1}}(y)$; and

 \item $\forall m_1'\,:\, ({s_1}, m_1) ->* (S_1', m_1'(\text{loop}_c^{1'})) ->*$

 \noindent$({s_1}, m_{1_{i+1}}(\text{loop}_c^{1_{i+1}}, \vals_{1_{i+1}})), \; \text{loop}_c^{1'}(n_1) \leq i+1$; and

 \item $\forall m_2'\,:\, ({s_2}, m_2) ->* (S_2', m_2'(\text{loop}_c^{2'})) ->*$

  \noindent$({s_2}, m_{2_{i+1}}(\text{loop}_c^{2_{i+1}}, \vals_{2_{i+1}})), \; \text{loop}_c^{2'}(n_2) \leq i+1$.
\end{itemize}
\end{enumerate}

\noindent By the hypothesis IH, one of the following holds:
\begin{enumerate}
\item{The loop counters for $s_1$ and $s_2$ are always less than $i$:}

$\forall m_{1}', m_{2}'$ such that
$({s_1}, m_1) ->* ({S_1'}, m_1'(\text{loop}_c^{1'})$ and
$({s_2}, m_2) \allowbreak ->* ({S_2'}, m_2'(\text{loop}_c^{2'}))$,

$\text{loop}_c^{1'}(n_1) < i$ and $\text{loop}_c^{2'}(n_2) < i$;

When this case holds, then we have the loop counters for $s_1$ and $s_2$ are always less than $i+1$:

$\forall m_{1}', m_{2}'$ such that
$({s_1}, m_1) ->* ({S_1'}, m_1'(\text{loop}_c^{1'})$ and
$({s_2}, m_2) \allowbreak ->* ({S_2'}, m_2'(\text{loop}_c^{2'}))$,

$\text{loop}_c^{1'}(n_1) < i+1$ and $\text{loop}_c^{2'}(n_2) < i+1$.

\item{There are two configurations $(s_1, m_{1_i})$ and $(s_2, m_{2_i})$ reachable from $(s_1, m_1)$ and $(s_2, m_2)$, respectively, in which the loop counters of $s_1$ and $s_2$ are equal to $i$ and value stores agree on the values of imported variables relative to $x$ and, for every state in executions $(s_1, m_1) ->* (s_1, m_{1_i})$ and $(s_2, m_2) ->* (s_2, m_{2_i})$ the loop counters for $s_1$ and $s_2$ are less than or equal to $i$ respectively:}

 $\exists (s_1, m_{1_i}), (s_2, m_{2_i})\,:\,
 (s_1, m_1) ->* (s_1, m_{1_i}(\text{loop}_c^{1_i}, \vals_{1_i})) \wedge
 (s_2, m_2) ->* (s_2, m_{2_i}(\text{loop}_c^{2_i}, \vals_{2_i}))$ where
\begin{itemize}
 \item $\text{loop}_c^{1_i}(n_1) = \text{loop}_c^{2_i}(n_2) = i$; and

 \item $\forall y \in \text{Imp}(x), \vals_{1_i}(y) = \vals_{2_i}(y)$; and

 \item $\forall m_1'\,:\, ({s_1}, m_1) ->* (S_1', m_1'(\text{loop}_c^{1'}) ->* ({s_1}, m_{1_i}(\text{loop}_c^{1_i}, \allowbreak , \vals_{1_i})), \; \text{loop}_c^{1'}(n_1) \leq i$; and

 \item $\forall m_2'\,:\, ({s_2}, m_2) ->* (S_2', m_2'(\text{loop}_c^{2'})) ->* ({s_2}, m_{2_i}(\text{loop}_c^{2_i}, \allowbreak ), \vals_{2_i})), \; \text{loop}_c^{2'}(n_2) \leq i$.
\end{itemize}
By similar argument in base case, evaluations of the predicate expression of $s_1$ and $s_2$ w.r.t value stores $\vals_{1_i}$ and $\vals_{2_i}$ produce same value. Then there are two possibilities:
\begin{enumerate}
\item{$\mathcal{E}'\llbracket e\rrbracket\vals_{1_i} = \mathcal{E}'\llbracket e\rrbracket\vals_{2_i} = (0, v_\mathfrak{of})$}

The execution from $({s_1}, m_1(\text{loop}_c^{1}, \vals_1))$ proceeds as follows.

\begin{tabbing}
xx\=xx\=\kill
\>\>     $(s_1, m_{1_i}(\text{loop}_c^{1_i}, \vals_{1_i}))$\\
\>=\>    $(\text{while}_{\langle n_1\rangle} (e) \; \{S_1\}, {m_{1_i}(\text{loop}_c^{1_i}, \vals_{1_i})})$\\
\>$->$\> $(\text{while}_{\langle n_1\rangle} ((0, v_\mathfrak{of})) \; \{S_1\}, {m_{1_i}(\text{loop}_c^{1_i}, \vals_{1_i})})$ by the EEval' rule\\
\>$->$\> $(\text{while}_{\langle n_1\rangle} (0) \; \{S_1\}, {m_{1_i}(\text{loop}_c^{1_i}, \vals_{1_i})})$\\
\>\>     by the E-Oflow1 and E-Oflow2 rule\\
\>$->$\> $({\text{skip}}, m_{1_i}(\text{loop}_c^{1_i}[0/n_1], \vals_{1_i}))$ by the Wh-F rule.
\end{tabbing}

By the hypothesis IH, the loop counter of $s_1$ and $s_2$ in any configuration in executions
$(s_1, m_1) ->* (s_1, m_{1_i}(\text{loop}_c^{1_i}, \vals_{1_i}))$ and
$(s_2, m_2) ->* (s_2, m_{2_i}(\text{loop}_c^{2_i}, \vals_{2_i}))$
respectively are less than or equal to $i$,

 $\forall m_1'\,:\, ({s_1}, m_1) ->* (S_1', m_1'(\text{loop}_c^{1'}) ->* ({s_1}, m_{1_i}(\text{loop}_c^{1_i}, \allowbreak , \vals_{1_i})), \; \text{loop}_c^{1'}(n_1) \leq i$; and

 $\forall m_2'\,:\, ({s_2}, m_2) ->* (S_2', m_2'(\text{loop}_c^{2'})) ->* ({s_2}, m_{2_i}(\text{loop}_c^{2_i}, \allowbreak ), \vals_{2_i})), \; \text{loop}_c^{2'}(n_2) \leq i$.

Therefore, the loop counter of $s_1$ and $s_2$ in any configuration in executions

$(s_1, m_1) ->* (\text{skip}, m_{1_i}(\text{loop}_c^{1_i}\setminus\{(n_1)\}, \vals_{1_i}))$ and

$(s_2, m_2) ->* (\text{skip}, m_{2_i}(\text{loop}_c^{2_i}[0/n_2], \vals_{2_i}))$
respectively are less than $i+1$.

\item{$\mathcal{E}'\llbracket e\rrbracket\vals_{1_i} = \mathcal{E}'\llbracket e\rrbracket\vals_{2_i} = v \neq 0$}

The execution from $({s_1}, m_{1_i}(\text{loop}_c^{1_i}, \vals_{1_i}))$ proceeds as follows.

\begin{tabbing}
xx\=xx\=\kill
\>\>     $({s_1}, m_{1_i}(\text{loop}_c^{1_i}, \vals_{1_i}))$\\
\>= \>   $(\text{while}_{\langle n_1\rangle} (e) \; \{S_1\}, {m_{1_i}(\text{loop}_c^{1_i}, \vals_{1_i})})$\\
\>$->$\> $(\text{while}_{\langle n_1\rangle} (v) \; \{S_1\}, {m_{1_i}(\text{loop}_c^{1_i}, \vals_{1_i})})$ by the EEval' rule\\
\>$->$\> $(S_1;\text{while}_{\langle n_1\rangle} (e) \; \{S_1\}, m_{1_i}(\text{loop}_c^{1_i}[i+1/(n_1)]$\\
\>\> $, \vals_{1_i}))$ by the Wh-T rule.
\end{tabbing}

Similarly, $({s_2}, m_{2_i}(\text{loop}_c^{2_i}, \vals_{2_i}))$ {\kStepArrow [2] } $(S_2;\text{while}_{\langle n_2\rangle} (e) \{S_2\}, \allowbreak  m_{2_{i}}(\text{loop}_c^{2_{i}}[i+1/(n_2)], \vals_{2_{i}}))$.

By similar argument in base case, the executions of $S_1$ and $S_2$ terminate when started in states $m_{1_i}(\text{loop}_c^{1_i}[i+1/(n_1)], \allowbreak , \vals_{1_i})$ and
$m_{2_i}(\text{loop}_c^{2_i}[i+1/(n_2)], \vals_{2_i})$ respectively,
$(S_1;s_1, m_{1_i}(\text{loop}_c^{1_i}[i+1/(n_1)], \vals_{1_i})) ->* $

\noindent$(s_1, m_{1_{i+1}}(\text{loop}_c^{1_{i+1}}, \vals_{1_{i+1}}))$ and
$(S_2;s_1, m_{2_i}(\text{loop}_c^{2_i}[i+1/(n_2)], \vals_{2_i})) ->*
(s_2, m_{2_{i+1}}(\text{loop}_c^{2_{i+1}}, \vals_{2_{i+1}}))$ such that all of the following holds:
\begin{itemize}
\item
$\text{loop}_c^{1_{i+1}}(n_1) = \text{loop}_c^{2_{i+1}}(n_2) = i+1$; and

\item $\forall y\in\text{Imp}(x), \vals_{1_{i+1}}(y) = \vals_{2_{i+1}}(y)$, and

\item in any state in the execution

\noindent$(s_{1}, m_{1_i}) ->* (s_1, m_{1_{i+1}}(\text{loop}_c^{1_{i+1}}, \vals_{1_{i+1}}))$, the loop counter of $s_1$ is less than or equal to $i+1$.


\item in any state in the executions

\noindent$(s_2, m_{2_i}) ->* (s_2, m_{2_{i+1}}(\text{loop}_c^{2_{i+1}}, \vals_{2_{i+1}}))$, the loop counter of $s_2$ is less than or equal to $i+1$.

\end{itemize}
With the hypothesis IH, there are two configurations $(s_1, m_{1_{i+1}})$ and $(s_2, m_{2_{i+1}})$ reachable from $(s_1, m_1)$ and $(s_2, m_2)$, respectively, in which the loop counters of $s_1$ and $s_2$ are equal to $i+1$ and value stores agree on the values of imported variables relative to $x$ and, for every state in executions $(s_1, m_1) ->* (s_1, m_{1_{i+1}})$ and $(s_2, m_2) ->* (s_2, m_{2_{i+1}})$ the loop counters for $s_1$ and $s_2$ are less than or equal to $i+1$ respectively:

 $\exists (s_1, m_{1_{i+1}}), (s_2, m_{2_{i+1}})\,:\,$

 \noindent$(s_1, m_1) ->* (s_1, m_{1_{i+1}}(\text{loop}_c^{1_{i+1}}, \vals_{1_{i+1}})) \wedge$

 \noindent$(s_2, m_2) ->* (s_2, m_{2_{i+1}}(\text{loop}_c^{2_{i+1}}, \vals_{2_{i+1}}))$ where
\begin{itemize}
 \item $\text{loop}_c^{1_{i+1}}(n_1) = \text{loop}_c^{2_{i+1}}(n_2) = i+1$; and

 \item $\forall y \in \text{Imp}(x), \vals_{1_{i+1}}(y) = \vals_{2_{i+1}}(y)$; and

 \item $\forall m_1'\,:\, ({s_1}, m_1) ->* (S_1', m_1'(\text{loop}_c^{1'})) ->*$

 \noindent$({s_1}, m_{1_{i+1}}(\text{loop}_c^{1_{i+1}}, \vals_{1_{i+1}})), \; \text{loop}_c^{1'}(n_1) \leq i+1$; and

 \item $\forall m_2'\,:\, ({s_2}, m_2) ->* (S_2', m_2'(\text{loop}_c^{2'})) ->*$

  \noindent$({s_2}, m_{2_{i+1}}(\text{loop}_c^{2_{i+1}}, \vals_{2_{i+1}})), \; \text{loop}_c^{2'}(n_2) \leq i+1$.
\end{itemize}
\end{enumerate}
\end{enumerate}
\end{proof}

\begin{lemma}\label{lmm:sameFinalValueX}
Let $s_1$ = ``$\text{while}_{\langle n_1\rangle} (e) \; \{S_1\}$" and $s_2$ = ``$\text{while}_{\langle n_2\rangle} (e) \allowbreak \{S_2\}$" be two while statements with the same set of imported variables relative to a variable $x$ (defined in $s_1$ and $s_2$), and whose loop bodies $S_1$ and $S_2$ terminatingly compute the variables in $\text{Imp}(x)$ equivalently when started in states that agree on the values of the variables imported by $S_1$ or $S_2$ relative to $\text{Imp}(x)$:
\begin{itemize}
\item $x \in \text{Def}(s_1) \cap \text{Def}(s_2)$;
\item $\text{Imp}(s_1, \{x\}) = \text{Imp}(s_2, \{x\}) = \text{Imp}(x)$;
\item $\forall y \in \text{Imp}(x)\; \forall m_{S_1}(\vals_{S_1})\,m_{S_2}(\vals_{S_2}):$

$((\forall z\in \text{Imp}(S_1, \text{Imp}(x)) \cup \text{Imp}(S_2, \text{Imp}(x)), \vals_{S_1}(z) = \vals_{S_2}(z))
=> ((S_1, m_{S_1}(\vals_{S_1})) \equiv_{y} (S_2, m_{S_2}(\vals_{S_2})))$.
\end{itemize}

If the executions of $s_1$ and $s_2$ terminate when started in states
$m_1(\text{loop}_c^1, \vals_1)$ and $m_2(\text{loop}_c^2, \vals_2)$ in which
$s_1$ and $s_2$ have not already executed (loop counter initially 0:
$\text{loop}_c^1(n_{1}) = \text{loop}_c^2(n_{2}) = 0$),
and whose value stores $\vals_1$ and $\vals_2$ agree on the values of the variables in $\text{Imp}(x)$,
$ \forall y \in \text{Imp}(x)\;\vals_1(y)$ = $\vals_2(y)$,
when $s_1$ and $s_2$ terminate,
$({s_1}, {m_1}) ->* (\text{skip}, m_{1_i}(\vals_1'))$ and
$({s_2}, {m_2}) ->* (\text{skip}, m_{2_i}(\vals_2'))$, value stores $\vals_1'$ and $\vals_2'$ agree on the value of $x$,
$\vals_1'(x) = \vals_2'(x)$.
\end{lemma}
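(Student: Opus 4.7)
The plan is to reduce this lemma to the already-established Lemma~\ref{lmm:equivTermCompSameLoopIteration}, using the termination assumption to identify a specific iteration count at which both loops must exit. The crucial observation is that $x \in \text{Imp}(x)$: by case 4 of Definition~\ref{def:impvars} applied to $s_1$, we have $\text{Imp}(s_1,\{x\}) = \bigcup_{j\geq 0}\text{Imp}(S_1^j, \{x\}\cup\text{Use}(e))$, and the $j=0$ term contributes $\{x\}\cup\text{Use}(e)$, so both $x$ and all variables used in the predicate lie in $\text{Imp}(x)$.

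First I would use the termination of $s_1$ from $m_1$ to extract the number $N$ of times the loop body $S_1$ is executed before $s_1$ reaches skip; by the operational semantics (rule Wh-T increments the counter, rule Wh-F fires only when the predicate evaluates to $0$), the loop counter for $s_1$ reaches value $N$ at the configuration just after the final body execution, and at that configuration the predicate evaluates to $0$. Next I would invoke Lemma~\ref{lmm:equivTermCompSameLoopIteration} with $i = N$. Since the loop counter of $s_1$ actually attains $N$, the first disjunct of that lemma (``counters always less than $i$'') fails; therefore the second disjunct must hold, yielding configurations $(s_1, m_{1_N}(\text{loop}_c^{1_N},\vals_{1_N}))$ and $(s_2, m_{2_N}(\text{loop}_c^{2_N},\vals_{2_N}))$ reachable from $(s_1, m_1)$ and $(s_2, m_2)$, with both loop counters equal to $N$ and $\vals_{1_N}(y)=\vals_{2_N}(y)$ for every $y\in\text{Imp}(x)$.

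Since $\text{Use}(e)\subseteq \text{Imp}(x)$, the predicate $e$ evaluates to the same value under $\vals_{1_N}$ and $\vals_{2_N}$ by the determinism property of the expression meaning function $\mathcal{E}$. Because $s_1$ exits at $m_{1_N}$, this common value is $0$, so rule Wh-F fires at both $(s_1, m_{1_N})$ and $(s_2, m_{2_N})$, producing $(\text{skip}, m_{1_N}(\text{loop}_c^{1_N}[0/n_1],\vals_{1_N}))$ and $(\text{skip}, m_{2_N}(\text{loop}_c^{2_N}[0/n_2],\vals_{2_N}))$ respectively. The Wh-F rule leaves the value store untouched, and since $x\in\text{Imp}(x)$ we have $\vals_{1_N}(x)=\vals_{2_N}(x)$. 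By determinism of the semantics, these are the unique terminating final states of $s_1$ from $m_1$ and $s_2$ from $m_2$, so the $\vals_1'$ and $\vals_2'$ in the lemma's statement coincide with $\vals_{1_N}$ and $\vals_{2_N}$ on $x$, giving $\vals_1'(x)=\vals_2'(x)$.

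The main obstacle is the careful bookkeeping for termination of $s_2$: a priori we only know $s_2$ terminates, but we need to match its termination iteration with that of $s_1$. The above argument sidesteps this by running Lemma~\ref{lmm:equivTermCompSameLoopIteration} only at $i = N$ (defined from $s_1$ alone) and then using semantic determinism to conclude that the exit observed at $m_{2_N}$ is in fact the terminating configuration guaranteed by the hypothesis; any ``earlier'' termination of $s_2$ would contradict the matching of loop counters, and ``later'' termination is ruled out because the predicate evaluates to $0$ at $m_{2_N}$.
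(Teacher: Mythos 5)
Your proposal follows essentially the same route as the paper's proof: both reduce the claim to Lemma~\ref{lmm:equivTermCompSameLoopIteration}, use termination to pin down the last iteration, obtain matched configurations at that iteration count whose value stores agree on $\text{Imp}(x)$, and then observe that $\text{Use}(e)\subseteq\text{Imp}(x)$ forces the predicate to evaluate to $0$ in both, so Wh-F fires and leaves the stores (which agree on $x\in\text{Imp}(x)$) untouched. Your $N$ is the paper's $k-1$, where $k$ is the smallest bound on the loop counters.

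The one concrete defect is the boundary case $N=0$, i.e.\ the predicate evaluates to $0$ at the very first test and the body never runs. Lemma~\ref{lmm:equivTermCompSameLoopIteration} is quantified over \emph{positive} integers $i$, so ``invoke the lemma with $i=N$'' is not a legal move there, and your argument that ``the first disjunct fails because the counter attains $N$'' is vacuous when $N=0$. The paper treats this as a separate case ($k=1$): the initial stores already agree on $\text{Use}(e)\cup\{x\}\subseteq\text{Imp}(x)$, so both predicates evaluate to $0$ immediately and both loops exit with unchanged stores. Your own preliminary observations give you exactly what is needed to dispatch this case, but the proof as written skips it; add that one case split and the argument is complete.
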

\begin{proof}
We show that there must exist a finite integer $k$ such that the loop counters of $s_1$ and $s_2$ in executions started in states $m_1$ and $m_2$ is always less than $k$.
By the definition of terminating execution, there are only finite number of steps in executions of $s_1$ and $s_2$ started in states $m_1$ and $m_2$ respectively. Then, by Lemma~\ref{lmm:loopCntStepwiseInc}, there must be a finite integer $k$ such that the loop counter of $s_1$ and $s_2$ is always less than $k$. In the following, we consider $k$ be the smallest positive integer such that the loop counter of $s_1$ and $s_2$ in executions started in states $m_1$ and $m_2$ is always less than $k$.

By Lemma~\ref{lmm:equivTermCompSameLoopIteration}, there are two possibilities:
\begin{enumerate}
\item{The loop counters for $s_1$ and $s_2$ are always less than 1 $(k=1)$:}

$\forall m_{1}', m_{2}'$ such that
$({s_1}, m_1) ->* ({S_1'}, m_{1}'(\text{loop}_c^{1'}))$ and
$({s_2}, m_2) \allowbreak ->* ({S_2'}, m_{2}'(\text{loop}_c^{2'}))$,

$\text{loop}_c^{1'}(n_1) < 1$ and $\text{loop}_c^{2'}(n_2) < 1$;

By the proof in base case of Lemma~\ref{lmm:equivTermCompSameLoopIteration},
the execution of $s_1$ proceeds as follows:

\begin{tabbing}
xx\=xx\=\kill
\>\>     $(s_1, m_1)$ \\
\>=\>    $(\text{while}_{\langle n_1\rangle} (e) \; \{S_1\}, {m_1(\text{loop}_c^1, \vals_1)})$\\
\>$->$\> $(\text{while}_{\langle n_1\rangle} (0) \; \{S_1\}, {m_1(\text{loop}_c^1, \vals_1)})$ by the EEval' rule\\
\>$->$\> $(\text{skip}, m_1(\text{loop}_c^1[0/(n_1)], \vals_1))$ by the Wh-F rule.
\end{tabbing}

Similarly, the execution of $s_2$ proceeds to

\noindent$(\text{skip}, m_2(\text{loop}_c^2[0/n_2], \vals_2))$. Therefore, $\vals_1' = \vals_1$ and $\vals_2' = \vals_2$.

By the definition of imported variables, $x \in \text{Imp}(s_1, \{x\})$. By assumption, value stores $\vals_1$ and $\vals_2$ agree on the value of $x$, $\vals_1'(x) = \vals_1(x) = \vals_2(x) = \vals_2'(x)$. The lemma holds.

\item For some finite positive $k(>1)$, both of the following hold:
\begin{itemize}
\item{The loop counters for $s_1$ and $s_2$ are always less than $k$:}

$\forall m_{1}', m_{2}'$ such that
$({s_1}, m_1) ->* ({S_1'}, m_{1}'(\text{loop}_c^{1'})$ and
$({s_2}, m_2) ->* ({S_2'}, m_{2}'(\text{loop}_c^{2'}))$,

$\text{loop}_c^{1'}(n_1) < k$ and $\text{loop}_c^{2'}(n_2) < k$;

\item{There are two configuration $(s_1, m_{1_{k-1}})$ and $(s_2, m_{2_{k-1}})$ reachable from $(s_1, m_1)$ and $(s_2, m_2)$, respectively, in which the loop counters of $s_1$ and $s_2$ are equal to $k-1$ and value stores agree on the values of imported variables relative to $x$ and, for every state in execution, $(s_1, m_1) ->* (s_1, m_{1_{k-1}})$ or $(s_2, m_2) ->* (s_2, m_{2_{k-1}})$ the loop counters for $s_1$ and $s_2$ are less than or equal to $k-1$ respectively:}

 $\exists (s_1, m_{1_{k-1}}), (s_2, m_{2_{k-1}})\,:\,$

 \noindent $({s_1}, m_1) ->* ({s_1}, m_{1_{k-1}}(\text{loop}_c^{1_{k-1}}, \vals_{1_{k-1}})) \wedge$

 \noindent $({s_2}, m_2) ->* ({s_2}, m_{2_{k-1}}(\text{loop}_c^{2_{k-1}}, \vals_{2_{k-1}}))$ where
\begin{itemize}
 \item $\text{loop}_c^{1_{k-1}}(n_1) = \text{loop}_c^{2_{k-1}}(n_2) = k-1$; and

 \item $\forall y \in \text{Imp}(x)\,:\, \vals_{1_{k-1}}(y) = \vals_{2_{k-1}}(y)$; and

 \item $\forall m_1'\,:\, ({s_1}, m_1) ->* (S_1', m_1'(\text{loop}_c^{1'}) ->* $

 \noindent$({s_1}, m_{1_{k-1}}(\text{loop}_c^{1_{k-1}}, \vals_{1_{k-1}})), \; \text{loop}_c^{1'}(n_1) \leq k-1$; and

 \item $\forall m_2'\,:\, ({s_2}, m_2) ->* (S_2', m_2'(\text{loop}_c^{2'})) ->*$

  \noindent$({s_2}, m_{2_{k-1}}(\text{loop}_c^{2_{k-1}}, \vals_{2_{k-1}})), \; \text{loop}_c^{2'}(n_2) \leq k-1$;
\end{itemize}
\end{itemize}

 By proof of Lemma~\ref{lmm:equivTermCompSameLoopIteration}, value stores $\vals_{1_{k-1}}$  and $\vals_{2_{k-1}}$ agree on the values of the variables in $\text{Use}(e)$. By Lemma~\ref{lmm:expEvalSameVal},
 $\mathcal{E}'\llbracket e\rrbracket\vals_{1_{k-1}} =
  \mathcal{E}'\llbracket e\rrbracket\vals_{2_{k-1}} = v$.
 Because the loop counter of $s_1$ and $s_2$ is less than $k$ in executions of $s_1$ and $s_2$ when started in states $m_1$ and $m_2$, then by our semantic rules, the predicate expression of $s_1$ and $s_2$ must evaluate to zero w.r.t value stores $\vals_{1_{k-1}}$ and $\vals_{2_{k-1}}$,
 $\mathcal{E}'\llbracket e\rrbracket\vals_{1_{k-1}} =
  \mathcal{E}'\llbracket e\rrbracket\vals_{2_{k-1}} = (0, v_\mathfrak{of})$.
 Then the execution of $s_1$ proceeds as follows.

\begin{tabbing}
xx\=xx\=\kill
\>\>     $(\text{while}_{\langle n_1\rangle} (e) \; \{S_1\}, {m_{1_{k-1}}(\text{loop}_c^{1_{k-1}}, \vals_{1_{k-1}})})$\\
\>$->$\> $(\text{while}_{\langle n_1\rangle} ((0, v_\mathfrak{of})) \; \{S_1\}, {m_{1_{k-1}}(\text{loop}_c^{1_{k-1}}, \vals_{1_{k-1}})})$\\
\>\>      by the EEval' rule\\
\>$->$\> $(\text{while}_{\langle n_1\rangle} (0) \; \{S_1\}, {m_{1_{k-1}}(\text{loop}_c^{1_{k-1}}, \vals_{1_{k-1}})})$\\
\>\>      by the E-Oflow1 or E-Oflow2 rule\\
\>$->$\> $(\text{skip}, m_{1_{k-1}}(\text{loop}_c^{1_{k-1}}[0/n_1], \vals_{1_{k-1}}))$\\
\>\>       by the Wh-F rule.
\end{tabbing}

Similarly, the execution of $s_2$ proceeds to

\noindent$(\text{skip}, m_{2_{k-1}}(\text{loop}_c^{2_{k-1}}[0/n_2], \vals_{2_{k-1}}))$.
Therefore, $\vals_2' = \vals_{2_{k-1}}$, $\vals_1' = \vals_{1_{k-1}}$.
By the definition of imported variables, $x \in \text{Imp}(x)$. In conclusion, $\vals_2'(x) = \vals_{2_{k-1}}(x) = \vals_{1_{k-1}}(x) = \vals_1'(x)$.
\end{enumerate}
\end{proof}

\begin{lemma}\label{lmm:sameImpfromEquivCompCond}
If two statement sequences $S_1$ and $S_2$ satisfy the proof rule of terminating computation of a variable $x$ equivalently, then $S_1$ and $S_2$ have same imported variables relative to $x$:
$(S_1 \equiv_{x}^S S_2) => (\text{Imp}(S_1, \{x\}) = \text{Imp}(S_2, \{x\}))$.
\end{lemma}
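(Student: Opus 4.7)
The plan is to prove this by strong induction on $\text{size}(S_1)+\text{size}(S_2)$, following the case structure of Definition~\ref{def:syntactic-equivalence}. Since the claim is purely syntactic, for each case the strategy is to expand $\text{Imp}(S_1,\{x\})$ and $\text{Imp}(S_2,\{x\})$ using Definition~\ref{def:impvars} and show the two expansions coincide, invoking the inductive hypothesis on the sub-rules provided by the corresponding clause of $\equiv_x^S$ and relying on the auxiliary Lemma~\ref{lmm:ImpPrefixLemma} (prefix law $\text{Imp}(S_a;S_b,X)=\text{Imp}(S_a,\text{Imp}(S_b,X))$) and Lemma~\ref{lmm:impVarUnionLemma} (linearity $\text{Imp}(S,\bigcup_i X_i)=\bigcup_i\text{Imp}(S,X_i)$).

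For the base cases (two simple statements satisfying Definition~\ref{def:localcondForEquivTermCompSimpleStmt}) the verification is direct: if $s_1=s_2$ the two imported sets coincide trivially; if both are distinct $\text{input}$ statements with $x\notin\{id_1,id_2\}$, then either $x\in\{id_I,id_{IO}\}$ and case 2 of Definition~\ref{def:impvars} yields $\{id_I,id_{IO}\}$ for both, or $x\notin\text{Def}(s_i)$ and case 1 yields $\{x\}$ for both; if $x\notin\text{Def}(s_1)\cup\text{Def}(s_2)$ case 1 yields $\{x\}$ on both sides. The if case 1b and the sequence cases 2a and 2b are then routine: for 1b apply case 3 of Definition~\ref{def:impvars} and IH on each branch; for 2a write $\text{Imp}(S_1';s_1,\{x\})=\text{Imp}(S_1',\text{Imp}(s_1,\{x\}))$ via Lemma~\ref{lmm:ImpPrefixLemma}, use IH on $s_1\equiv_x^S s_2$ to get $\text{Imp}(s_1,\{x\})=\text{Imp}(s_2,\{x\})=Y$, then use IH on $S_1'\equiv_y^S S_2'$ for each $y\in Y$ together with Lemma~\ref{lmm:impVarUnionLemma} to equate $\text{Imp}(S_1',Y)$ with $\text{Imp}(S_2',Y)$; case 2b reduces to IH after observing $\text{Imp}(s_i,\{x\})=\{x\}$ when $x\notin\text{Def}(s_i)$. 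Case 2c is handled similarly: writing $\text{Imp}(S_1,\{x\})=\text{Imp}(S_1',\text{Use}(e))\cup\text{Imp}(S_1';S_1^t,\{x\})\cup\text{Imp}(S_1';S_1^f,\{x\})$ via the two lemmas (using that $x\in\text{Def}(s_1)\cap\text{Def}(s_2)$, which is implicit from ``none of the above cases hold''), each of the three pieces matches the corresponding piece for $S_2$ by IH applied to the three hypotheses of case 2c, all of which have strictly smaller combined size.

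The main obstacle is the while case 1c, because Definition~\ref{def:impvars} unfolds $\text{Imp}(S_1,\{x\})$ as the infinite union $\bigcup_{i\geq 0}\text{Imp}((S_1'')^i,\text{Use}(e)\cup\{x\})$, while the hypothesis of the rule, $\forall y\in\text{Imp}(S_1,\{x\})\cup\text{Imp}(S_2,\{x\})\,:\,S_1''\equiv_y^S S_2''$, already quantifies over the very sets we are trying to equate. To break the apparent circularity I will run an inner induction on $i$ proving
\[
\text{Imp}((S_1'')^i,\text{Use}(e)\cup\{x\}) \;=\; \text{Imp}((S_2'')^i,\text{Use}(e)\cup\{x\}).
\]
The base $i=0$ gives $\text{Use}(e)\cup\{x\}$ on both sides by case 1 of Definition~\ref{def:impvars}. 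For the step, let $Y_i$ be the common value at level $i$; since $Y_i\subseteq\text{Imp}(S_1,\{x\})$ by the unfolding, the while hypothesis guarantees $S_1''\equiv_y^S S_2''$ for every $y\in Y_i$, and the outer IH (on program size, applicable because $\text{size}(S_1'')+\text{size}(S_2'')<\text{size}(S_1)+\text{size}(S_2)$) gives $\text{Imp}(S_1'',\{y\})=\text{Imp}(S_2'',\{y\})$. Combining with Lemma~\ref{lmm:ImpPrefixLemma} and Lemma~\ref{lmm:impVarUnionLemma} yields
\[
\text{Imp}((S_1'')^{i+1},\text{Use}(e)\cup\{x\}) = \text{Imp}(S_1'',Y_i) = \text{Imp}(S_2'',Y_i) = \text{Imp}((S_2'')^{i+1},\text{Use}(e)\cup\{x\}),
\]
closing the inner induction. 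Taking the union over $i$ delivers $\text{Imp}(S_1,\{x\})=\text{Imp}(S_2,\{x\})$. The remaining clause 1d ($x\notin\text{Def}(S_1)\cup\text{Def}(S_2)$) is immediate from case 1 of Definition~\ref{def:impvars}, which makes both sides $\{x\}$.
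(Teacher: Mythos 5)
Your proposal is correct and follows essentially the same route as the paper's proof: an outer induction on $\text{size}(S_1)+\text{size}(S_2)$ with a case analysis on the clauses of Definition~\ref{def:syntactic-equivalence}, the prefix and union lemmas (Lemma~\ref{lmm:ImpPrefixLemma} and Lemma~\ref{lmm:impVarUnionLemma}) to decompose sequences, and an inner induction on the unrolling index $i$ to handle the while case. Your explicit observation that $Y_i\subseteq\text{Imp}(S_1,\{x\})$ justifies applying the rule's hypothesis at each level is exactly how the paper's argument resolves the apparent circularity, just stated more transparently.
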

\begin{proof}
By induction on $\text{size}(S_1) + \text{size}(S_2)$, the sum of the program size of $S_1$ and $S_2$.

\noindent{\bf Base case}.

\noindent$S_1$ and $S_2$ are simple statement. Then the proof is a case analysis according to the cases in the definition of the proof rule of computation equivalently for simple statements.

\begin{enumerate}
\item{$S_1 = S_2$}

By the definition of imported variables, same statement have same imported variables relative to same $x$.

\item{$S_1 \neq S_2$}

There are two further cases:
\begin{itemize}
\item{$S_1 = ``\text{input }{id}_1", S_2 = ``\text{input }{id}_2"$ and $x \notin \{{id}_1, {id}_2\}$.}

When $x = {id}_I$, by the definition of imported variables, $\text{Imp}(S_1, \{{id}_I\}) = \text{Imp}(S_2, \{{id}_I\}) = \{{id}_I\}$.
When $x = {id}_{IO}$, by the definition of imported variables, $\text{Imp}(S_1, \{{id}_{IO}\}) = \text{Imp}(S_2, \{{id}_{IO}\}) = \{{id}_I, {id}_{IO}\}$.

When $x \notin \{{id}_I, {id}_{IO}\}$, by the definition of imported variables, $\text{Imp}(S_1, \{x\}) = \text{Imp}(S_2, \{x\}) = \{x\}$.

\item{the above cases do not hold and $x \notin \text{Def}(S_1)\cup\text{Def}(S_2)$.}

By the definition of imported variables, $\text{Imp}(S_1, \{x\}) = \text{Imp}(S_2, \{x\}) = \{x\}$.
\end{itemize}
\end{enumerate}

\noindent{\bf Induction Step}.

\noindent The hypothesis IH is that the lemma holds when size$(S_1) + \text{size}(S_2) = k \geq 2$.

\noindent Then we show the lemma holds when $\text{size}(S_1) + \text{size}(S_2) = k + 1$.
The proof is a case analysis based on the cases in the definition of the proof rule of computation equivalently for statement sequence, $S_1 \equiv_{x}^S S_2$:
\begin{enumerate}
\item $S_1$ and $S_2$ are one statement such that one of the following holds:

\begin{enumerate}

\item $S_1$ and $S_2$  are If statement that define the variable $x$:

      $S_1 = ``\text{If }(e) \, \text{then} \, \{S_1^t\} \text{else} \, \{S_1^f\}"$,
      $S_2 = ``\text{If }(e) \, \text{then} \, \{S_2^t\} \allowbreak\text{else} \, \{S_2^f\}"$ such that all of the following hold:

     \begin{itemize}
     \item $x \in \text{Def}(S_1) \cap \text{Def}(S_2)$;

     \item $S_1^t \equiv_{x}^S S_2^t$;

     \item $S_1^f \equiv_{x}^S S_2^f$;
     \end{itemize}

By the hypothesis IH, the imported variables in $S_1^t$ and $S_2^t$ relative to $x$ are same, $\text{Imp}(S_1^t, \{x\}) = \text{Imp}(S_2^t, \{x\})$. Similarly, $\text{Imp}(S_1^f, \{x\}) = \text{Imp}(S_2^f, \{x\})$.
By the definition of imported variables, $\text{Imp}(S_1, \{x\}) = \text{Use}(e) \cup \text{Imp}(S_1^t, \{x\}) \cup \text{Imp}(S_1^f, \{x\})$.
Similarly, $\text{Imp}(S_2, \{x\})$

$= \text{Use}(e) \cup \text{Imp}(S_2^t, \{x\}) \cup \text{Imp}(S_2^f, \{x\})$.
Then, the lemma holds.

\item $S_1$ and $S_2$ are while statement that define the variable $x$:

   $S_1 = ``\text{while}_{\langle n_1\rangle} (e) \; \{S_1''\}", S_2 = ``\text{while}_{\langle n_2\rangle} (e) \; \{S_2''\}"$ such that both of the following hold:
   \begin{itemize}
   \item $x \in \text{Def}(S_1) \cap \text{Def}(S_2)$;

   \item $\forall y \in
        \text{Imp}(S_1, \{x\})
        \cup
        \text{Imp}(S_2, \{x\}), S_1'' \equiv_{y}^S S_2''$;
   \end{itemize}

By the definition of imported variables,
$\text{Imp}(S_1, \{x\}) =$
$ \bigcup_{i\geq0} \text{Imp}(S_{1}^{''\mathit{i}}, \{x\}\allowbreak \cup \text{Use}(e))$.
Similarly, $\text{Imp}(S_2, \{x\}) = \bigcup_{i\geq0} \text{Imp}(S_{2}^{''\mathit{i}}, \{x\} \cup \text{Use}(e))$.
Then, we show that
$\text{Imp}(S_{1}^{''\mathit{i}}, \{x\} \; \cup \; \text{Use}(e)) =
                    \text{Imp}(S_{2}^{''\mathit{i}}, \{x\} \; \cup \; \text{Use}(e))$ by induction on $i$.


\noindent{Base case}.

By our assumption of the notation $S^0$, $S_{1}^{''\mathit{0}}$ = skip, $S_{2}^{''\mathit{0}}$ = skip.

Then,
$\text{Imp}(S_{1}^{''\mathit{0}},\{x\} \cup \text{Use}(e))$ = $\{x\} \cup \text{Use}(e)$,
$\text{Imp}(S_{2}^{''\mathit{0}},\{x\} \cup \text{Use}(e))$ = $\{x\} \cup \text{Use}(e)$.

Hence, $\text{Imp}(S_{1}^{''\mathit{0}}, \{x\} \cup \text{Use}(e))$ =
       $\text{Imp}(S_{2}^{''\mathit{0}}, \{x\} \cup \text{Use}(e))$.


\noindent{Induction step.}

The hypothesis IH2 is that
$\text{Imp}(S_{1}^{''\mathit{i}}, \{x\} \cup \text{Use}(e))$ =
$\text{Imp}(S_{2}^{''\mathit{i}}, \{x\} \cup \text{Use}(e))$ for $i\geq 0$.

Then we show that
$\text{Imp}(S_{1}^{''\mathit{i+1}}, \{x\} \cup \text{Use}(e))$ =
$\text{Imp}(S_{2}^{''\mathit{i+1}}, \{x\} \cup \text{Use}(e))$.

\begin{tabbing}
xx\=xx\=\kill
\>\>     $\text{Imp}(S_{1}^{''\mathit{i+1}}, \{x\} \cup \text{Use}(e))$\\
\>= \>   $\text{Imp}(S_{1}^{''}, \text{Imp}(S_{1}^{''\mathit{i}}, \{x\} \cup \text{Use}(e)))$ (1) by corollary~\ref{lmm:dupStmtPrefixLemma}
\end{tabbing}

\begin{tabbing}
xx\=xx\=\kill
\>\>     $\text{Imp}(S_{2}^{''\mathit{i+1}}, \{x\} \cup \text{Use}(e))$\\
\>=\>    $\text{Imp}(S_{2}^{''}, \text{Imp}(S_{2}^{''\mathit{i}}, \{x\} \cup \text{Use}(e)))$ (2) by corollary~\ref{lmm:dupStmtPrefixLemma}
\end{tabbing}

\begin{tabular}{l}
    $\text{Imp}(S_{1}^{''\mathit{i}}, \{x\} \cup \text{Use}(e))$
    =  $\text{Imp}(S_{2}^{''\mathit{i}}, \{x\} \cup \text{Use}(e))$
    \\
     {\text{by the hypothesis IH2}}
     \\
     $\text{Imp}(S_{1}^{''}, \text{Imp}(S_{1}^{''\mathit{i}}, \{x\} \cup \text{Use}(e)))$ \\
     =
     $\text{Imp}(S_{2}^{''}, \text{Imp}(S_{2}^{''\mathit{i}}, \{x\} \cup \text{Use}(e)))$
     \\
     {\text{by the hypothesis IH}}
     \\
     $\text{Imp}(S_{1}^{''\mathit{i+1}}, \{x\} \cup \text{Use}(e))$  \\
     =  $\text{Imp}(S_{2}^{''\mathit{i+1}}, \{x\} \cup \text{Use}(e))$ {\text{by (1),(2)}}\\

\end{tabular}

Therefore,
$\text{Imp}(S_{1}^{''\mathit{i+1}}, \{x\} \cup \text{Use}(e))$ =
$\text{Imp}(S_{2}^{''\mathit{i+1}}, \{x\} \cup \text{Use}(e))$.

In conclusion, $\text{Imp}(S_1, \{x\}) = \text{Imp}(S_2, \{x\})$. The lemma holds.

\item $S_1$ and $S_2$ do not define the variable $x$: $x \notin \text{Def}(S_1) \cup \text{Def}(S_2)$.

By the definition of imported variable, the imported variables in $S_1$ and $S_2$ relative to $x$ is $x$, $\text{Imp}(S_1, \{x\}) = \text{Imp}(S_2, \{x\}) = \{x\}$. The lemma holds.

\end{enumerate}

\item $S_1$ and $S_2$ are not both one statement such that one of the following holds:
\begin{enumerate}

\item Last statements both define the variable $x$ such that all of the following hold:
\begin{itemize}
\item  $\forall y \in \text{Imp}(s_1, \{x\}) \cup \text{Imp}(s_2, \{x\}), S_1' \equiv_{y}^S S_2'$;

\item  $x \in \text{Def}(s_1) \cap \text{Def}(s_2)$;

\item  $s_1 \equiv_x^S s_2$;
\end{itemize}

By the hypothesis IH, we have  $\text{Imp}(s_1, \{x\}) = \text{Imp}(s_2, \{x\}) = \text{Imp}(\Delta)$.
Then, by the hypothesis IH again, we have that
$\forall y \in \text{Imp}(\Delta) = \text{Imp}(s_1, \{x\}) = \text{Imp}(s_2, \{x\}),\text{Imp}(S_1', \{y\})\allowbreak = \text{Imp}(S_2', \{y\})$. By taking the union of all $\forall y \in \text{Imp}(\Delta),\allowbreak \text{Imp}(S_1', \{y\})$ and $\text{Imp}(S_2', \{y\})$, by the Lemma~\ref{lmm:impVarUnionLemma}, $\text{Imp}(S_1',\allowbreak \text{Imp}(\Delta)) = \text{Imp}(S_2', \text{Imp}(\Delta))$.
By the definition of imported variables,

$\text{Imp}(S_1, \{x\}) = \text{Imp}(S_1', \text{Imp}(s_1, \{x\}))$, $\text{Imp}(S_2, \{x\}) = \text{Imp}(S_2', \text{Imp}(s_2, \{x\}))$. Therefore, the lemma holds.

\item One last statement does not define the variable $x$:
w.l.o.g., $\big(x \notin \text{ Def}(s_1) \wedge (S_1' \equiv_{x}^S S_2)\big)$;

By the definition of imported variables, we have  $\text{Imp}(s_1, \{x\})$ = \{$x$\}.
By the hypothesis IH,  $\text{Imp}(S_1', \{x\}) = \text{Imp}(S_2, \{x\})$
Therefore, by the definition of imported variables, $\text{Imp}(S_1, \{x\})$ = $\text{Imp}(S_1', \text{Imp}(s_1, \{x\})) = \text{Imp}(S_1', \{x\}) = \text{Imp}(S_2, \{x\})$.

\item There are statements moving in/out of If statement:

      $s_1 = ``\text{If }(e) \text{ then }\{S_1^t\} \text{ else }\{S_1^f\}"$,
      $s_2 = ``\text{If }(e) \text{ then }\{S_2^t\}\allowbreak \text{ else }\{S_2^f\}"$ such that none of the above cases hold and all of the following hold:
     \begin{itemize}
     \item $\forall y \in \text{ Use}(e), S_1' \equiv_{y}^S S_2'$;

     \item $S_1';S_1^t \equiv_{x}^S S_2';S_2^t$;

     \item $S_1';S_1^f \equiv_{x}^S S_2';S_2^f$;

     \item $x \in \text{Def}(s_1) \cap \text{Def}(s_2)$;
     \end{itemize}

We show all of the following hold.
\begin{enumerate}
\item $\text{Imp}(S_1', \text{Use}(e)) = \text{Imp}(S_2', \text{Use}(e))$.

By the hypothesis IH and the assumption that $\forall y \in \text{Use}(e), S_1' \equiv_{y}^S S_2'$.
Then, by Lemma~\ref{lmm:impVarUnionLemma},

\noindent$\text{Imp}(S_1', \text{Use}(e)) = \text{Imp}(S_2', \text{Use}(e))$.

\item $\text{Imp}(S_1', \text{Imp}(S_1^t, \{x\})) = \text{Imp}(S_2', \text{Imp}(S_2^t, \{x\}))$.

Because $\text{size}(``\text{If}(e) \text{ then } \{S_t\} \text{ else }\{S_f\}") = 1+ \text{size}(S_t)$ + $\text{size}(S_f)$,
then

$\text{size}(S_1';S_1^t) + \text{size}(S_2';S_2^t)\allowbreak < k$.
By the hypothesis IH, $\text{Imp}(S_1';S_1^t, \{x\}) = \text{Imp}(S_2';S_2^t, \{x\})$.

Besides, by Lemma~\ref{lmm:ImpPrefixLemma},

$\text{Imp}(S_1', \text{Imp}(S_1^t, \{x\}))\allowbreak = \text{Imp}(S_1';S_1^t, \{x\})$

\noindent$= \text{Imp}(S_2';S_2^t, \{x\}) = \text{Imp}(S_2', \text{Imp}(S_2^t, \{x\}))$.

\item $\text{Imp}(S_1', \text{Imp}(S_1^f, \{x\})) = \text{Imp}(S_2', \text{Imp}(S_2^f, \{x\}))$.

By similar argument in the case that

\noindent$\text{Imp}(S_1', \text{Imp}(S_1^t, \{x\})) = \text{Imp}(S_2', \text{Imp}(S_2^t, \{x\}))$.
\end{enumerate}

Then, by combining things together,

\begin{tabbing}
xx\=xx\=\kill
\>\>     $\text{Imp}(S_1, \{x\})$\\
\>=\>    $\text{Imp}(S_1', \text{Imp}(s_1, \{x\}))$\\
\>\>      by the definition of imported variables\\
\>=\>    $\text{Imp}(S_1', \text{Imp}(S_1^t, \{x\}) \cup \text{Imp}(S_1^f, \{x\}) \cup \text{Use}(e))$\\
\>\>     by the definition of imported variables\\
\>=\>    $\text{Imp}(S_1', \text{Imp}(S_1^t, \{x\})) \cup \text{Imp}(S_1', \text{Imp}(S_1^f, \{x\}))$\\
\>\>     $\cup \text{Imp}(S_1', \text{Use}(e))$ by Lemma~\ref{lmm:impVarUnionLemma}\\
\>=\>    $\text{Imp}(S_2', \text{Imp}(S_2^t, \{x\})) \cup \text{Imp}(S_2', \text{Imp}(S_2^f, \{x\}))$\\
\>\>     $\cup \text{Imp}(S_2', \text{Use}(e))$ by i, ii, iii\\
\>=\>    $\text{Imp}(S_2', \text{Imp}(S_2^t, \{x\}) \cup \text{Imp}(S_2^f, \{x\}) \cup \text{Use}(e))$\\
\>\>     by Lemma~\ref{lmm:impVarUnionLemma}\\
\>=\>    $\text{Imp}(S_2', \text{Imp}(S_2^t, \{x\}) \cup \text{Imp}(S_2^f, \{x\}) \cup \text{Use}(e))$\\
\>\>     by the definition of imported variables\\
\>=\>    $\text{Imp}(S_2', \text{Imp}(s_2, \{x\}))$ \\
\>\>     by the definition of imported variables\\
\>=\>     $\text{Imp}(S_2, \{x\})$.
\end{tabbing}
Hence, the lemma holds.
\end{enumerate}
\end{enumerate}
\end{proof}
\subsection{Termination in the same way}\label{sec_equiv_term}
We proceed to propose a proof rule under which two statement sequences either both terminate or both do not terminate. We start by giving the definition of termination in the same way. Then we present the proof rule of termination in the same way. Our proof rule of termination in the same way allows updates such as statement duplication or reordering, loop fission or fusion and additional terminating statements.
We prove that the proof rule ensures terminating in the same way by induction on the program size of the two programs in the proof rule. We also list auxiliary lemmas required by the proof of termination in the same way.

\begin{definition}\label{def:termInSameWay}
{\bf{(Termination in the same way)}}
Two statement sequences $S_1$ and $S_2$ terminate in the same way when started in states $m_1$ and $m_2$ respectively, written $(S_1, m_1) \equiv_{H} (S_2, m_2)$, iff one of the following holds:
\begin{enumerate}
\item
$({S_1}, m_1) ->* (\text{skip}, m_1')$ and
$({S_2}, m_2) ->* (\text{skip}, m_2')$;

\item
$\forall i\geq 0$, $({S_1}, m_1)$ {\kStepArrow [i] } $({S_1^i}, m_1^i)$ and
                     $({S_2}, m_2)$ {\kStepArrow [i] } $({S_2^i}, m_2^i)$ where
                     $S_1^i \neq \text{skip}, S_2^i \neq \text{skip}$.
\end{enumerate}
\end{definition}

\subsubsection{Proof rule for termination in the same way}
We define proof rules under which two statement sequences $S_1$ and $S_2$ terminate in the same way.
We summarize the cause of non-terminating execution and then give the proof rule.

We consider two causes of nonterminating executions: crash and infinite iterations of loop statements.
As to crash~\cite{crashReason}, we consider four common causes based on our language: expression evaluation exceptions, the lack of input value, input/assignment value type mismatch and array index out of bound.
In essence, the causes of nontermination are partly due to the values of some particular variables during executions.
We capture variables affecting each source of nontermination;
loop deciding variables LVar$(S)$ are variables affecting the evaluation of a loop statements in the statement sequence $S$,
crash deciding variables CVar$(S)$ are variables whose values decide if a crash occurs in $S$.
We list the definitions of LVar$(S)$ and CVar$(S)$ in Definition~\ref{def:LVar} and~\ref{def:CVar}.
Definition~\ref{def:TermVars} summarizes the variables whose values decide if one program terminates.
%
%




\begin{definition}\label{def:LVar}
{\bf (Loop deciding variables)} The loop deciding variables of a statement sequence $S$, written LVar$(S)$, are defined as follows:
\begin{enumerate}
\item $\text{LVar}(S) = \emptyset$ if $\nexists s = ``\text{while} (e) \; \{S'\}$" and $s \in S$;

\item  $\text{LVar}(``\text{If }(e) \text{ then }\{S_t\} \text{ else }\{S_f\}") =
\text{Use}(e) \, \cup \,
\text{LVar}(S_t) \, \cup \,
\text{LVar}(S_f)$ if $``\text{while} (e) \{S'\}" \in S$;

\item $\text{LVar}(``\text{while} (e) \{S'\}") = \text{Imp}(S, \text{Use}(e) \, \cup \, \text{LVar}(S'))$;

\item For $k>0$, $\text{LVar}(s_1;...;s_k;s_{k+1}) = \text{LVar}(s_1;...;s_k) \, \cup \, \text{Imp}(s_1;...;s_k, \text{LVar}(s_{k+1}))$;
\end{enumerate}
\end{definition}

\begin{definition}\label{def:CVar}
{\bf{(Crash deciding variables)}} The crash deciding variables of a statement sequence $S$, written $\text{CVar}(S)$, are defined as follows:
\begin{enumerate}
\item  $\text{CVar}(\text{skip}) = \emptyset$;

\item  $\text{CVar}(lval := e) = \text{Idx}(lval) \cup \text{Use}(e)$ if $(\Gamma \vdash lval : \text{Int}) \wedge (\Gamma \vdash e : \text{Long})$;

\item  $\text{CVar}(lval := e) = \text{Idx}(lval) \cup \text{Err}(e)$ if $(\Gamma \vdash lval : \text{Int}) \wedge (\Gamma \vdash e : \text{Long})$ does not hold;

\item  $\text{CVar}(\text{input }id) = \{{id}_I\}$;

\item  $\text{CVar}(\text{output }e) = \text{Err}(e)$;

\item  $\text{CVar}(``\text{If} \; (e) \text{ then } \{S_t\} \text{ else }\{S_f\}") = \text{Err}(e)$,
  if $\text{CVar}(S_t) \, \allowbreak \cup \, \text{CVar}(S_f) = \emptyset$;

\item $\text{CVar}(``\text{If} \; (e) \text{ then } \{S_t\} \text{ else }\{S_f\}")$ =
  $\text{Use}(e) \,  \cup \, \text{CVar}(S_t) \, \allowbreak \cup \, \text{CVar}(S_f)$,
  if $\text{CVar}(S_t) \, \cup \, \text{CVar}(S_f)\allowbreak \neq \emptyset$;

\item $\text{CVar}(``\text{while}_{\langle n\rangle} (e) \{S'\}") =
\text{Imp}(``\text{while}_{\langle n\rangle} (e) \{S'\}", \text{Use}(e) \, \cup \, \text{CVar}(S'))$;

\item For $k>0$,  $\text{CVar}(s_1;...;s_{k+1}) = \text{CVar}(s_1;...;s_k)$

\noindent$ \, \cup \,
\text{Imp}(s_1;...;s_k, \text{CVar}(s_{k+1}))$;
\end{enumerate}
\end{definition}

\begin{definition}\label{def:TermVars}
{\bf (Termination deciding variables)}
The termination deciding variables of statement sequence $S$ are
 $\text{CVar}(S) \cup \text{LVar}(S)$, written $\text{TVar}(S)$.
\end{definition}

\begin{definition}\label{def:condTermInSameWaySimpleStmt}
{\bf (Base cases of the proof rule of termination in the same way)}
Two simple statements $s_1$ and $s_2$ satisfy the proof rule of termination in the same way, written $s_1 \equiv_{H}^S s_2$, iff one of the following holds:
\begin{enumerate}
\item $s_1$ and $s_2$ are same,
      $s_1 = s_2$;

\item $s_1$ and $s_2$ are input statement with same type variable: $s_1=``\text{input }{id}_1",\allowbreak s_2 = ``\text{input }{id}_2"$
  where $(\Gamma_{s_1} \vdash {id}_1 : \tau) \wedge (\Gamma_{s_2} \vdash {id}_2 : \tau)$;

\item  $s_1 = ``\text{output} \, e" \, \text{or} \, ``{id}_1 := e",
        s_2 = ``\text{output} \, e" \, \text{or} \, ``{id}_2 := e"$ where both of the following hold:

       \begin{itemize}
       \item There is no possible value mismatch in $``{id}_1 := e"$,
       $\neg(\Gamma_{s_1} \vdash {id}_1 : \text{Int}) \vee
        \neg(\Gamma_{s_1} \vdash e : \text{Long}) \vee
            (\Gamma_{s_1} \vdash e : \text{Int})$.

       \item There is no possible value mismatch in $``{id}_2 := e"$,
       $\neg(\Gamma_{s_2} \vdash {id}_2 : \text{Int}) \vee
        \neg(\Gamma_{s_2} \vdash e : \text{Long}) \vee
            (\Gamma_{s_2} \vdash e : \text{Int})$.
       \end{itemize}
\end{enumerate}
\end{definition}

\begin{definition}\label{def:lCondTermInSameWay}
{\bf (proof rule of termination in the same way)}
Two statement sequences $S_1$ and $S_2$ satisfy the proof rule of termination in the same way, written $S_1 \equiv_{H}^S S_2$, iff one of the following holds:
\begin{enumerate}
\item $S_1$ and $S_2$ are both one statement and one of the following holds.

\begin{enumerate}
\item $S_1$ and $S_2$ are simple statements:
$s_1 \equiv_{H}^S s_2$;

\item $S_1$ = ``If($e$) then \{$S_1^t$\} else \{$S_1^f$\}",
      $S_2$ = ``If($e$) then \{$S_2^t$\} else \{$S_2^f$\}" and one of the following holds:

\begin{enumerate}
\item $S_1^t, S_1^f, S_2^t, S_2^f$ are all sequences of ``skip";

\item At least one of $S_1^t, S_1^f, S_2^t, S_2^f$ is not a sequence of ``skip" such that:
      $(S_1^t \equiv_{H}^S S_2^t) \land (S_1^f \equiv_{H}^S S_2^f)$;
\end{enumerate}

\item $S_1$ = ``$\text{while}_{\langle n_1\rangle} (e) \{S_1''\}$",
      $S_2$ = ``$\text{while}_{\langle n_2\rangle} (e) \{S_2''\}$" and both of the following hold:

\begin{itemize}
\item $S_1'' \equiv_{H}^S S_2''$;

\item $S_1''$ and $S_2''$ have equivalent computation of $\text{TVar}(S_1) \cup \text{ TVar}(S_2)$;
\end{itemize}
\end{enumerate}

\item $S_1$ and $S_2$ are not both one statement and one of the following holds:

\begin{enumerate}
\item $S_1=S_1';s_1$ and $S_2 = S_2';s_2$ and all of the following hold:

\begin{itemize}
\item $S_1' \equiv_{H}^S S_2'$;

\item $S_1'$ and $S_2'$ have equivalent computation of $\text{TVar}(s_1) \cup \text{TVar}(s_2)$;

\item
$s_1 \equiv_{H}^S s_2$ where $s_1$ and $s_2$ are not ``skip";
\end{itemize}

\item One last statement is ``skip":

      $\big((S_1 = S_1';``\text{skip}") \wedge (S_1' \equiv_{H}^S S_2)\big)$
$\vee$
      $\big((S_2 = S_2';``\text{skip}") \wedge (S_1 \equiv_{H}^S S_2')\big)$.

\item  One last statement is a ``duplicate" statement and
 one of the following holds:

\begin{enumerate}
\item $S_1 = S_1';s_1';S_1'';s_1$ and all of the following hold:
\begin{itemize}
\item $S_1';s_1';S_1'' \equiv_{H}^S S_2$;

\item $(s_1' \equiv_{H}^S s_1) \wedge (s_1 \neq ``\text{skip}")$;

\item $\text{Def}(s_1';S_1'') \cap \text{TVar}(s_1) = \emptyset$;
\end{itemize}

\item $S_2 = S_2';s_2';S_2'';s_2$ and all of the following hold:
\begin{itemize}
\item $S_1 \equiv_{H}^S S_2';s_2';S_2''$;

\item $(s_2' \equiv_{H}^S s_2) \wedge (s_2 \neq ``\text{skip}")$;

\item $\text{Def}(s_2';S_2'') \cap \text{TVar}(s_2) = \emptyset$;
\end{itemize}
\end{enumerate}

\item $S_1 = S_1';s_1;s_1'$ and $S_2 = S_2';s_2;s_2'$ where $s_1$ and $s_2$ are reordered and all of the following hold:

\begin{itemize}
\item ${S_1'} \equiv_{H}^S {S_2'}$;

\item $S_1'$ and $S_2'$ have equivalent computation of $\text{TVar}(s_1;s_1') \cup \text{TVar}(s_2;s_2')$.

\item ${s_1} \equiv_{H}^S {s_2'}$;

\item ${s_1'} \equiv_{H}^S {s_2}$;

\item $\text{Def}(s_1) \cap \text{TVar}(s_1') = \emptyset$;

\item $\text{Def}(s_2) \cap \text{TVar}(s_2') = \emptyset$;
\end{itemize}
\end{enumerate}
\end{enumerate}
\end{definition}

\subsubsection{Soundness of the proof rule for termination in the same way}
We show that two statement sequences satisfy the proof rules of termination in the same way, and their initial states agree on the values of their termination deciding variables, then they either both terminate or both do not terminate.
%

\begin{theorem}\label{thm:mainTermSameWaySimpleStmt}
If two simple statements $s_1$ and $s_2$ satisfy the proof rule of termination in the same way,
$s_1 \equiv_{H}^s s_2$,
and their initial states
$m_1(\mathfrak{f}_1, \vals_1)$ and $m_2(\mathfrak{f}_2, \vals_2)$ with crash flags not set,
$\mathfrak{f}_1 = \mathfrak{f}_2 = 0$,
 and whose value stores agree on values of the termination deciding variables of $s_1$ and $s_2$,
$\forall x \in \text{TVar}(s_1) \cup \text{TVar}(s_2)\,:\,
  \vals_1(x) = \vals_2(x)$,
when executions of $s_1$ and $s_2$ start in states $m_1$ and $m_2$ respectively,
then $s_1$ and $s_2$ terminate in the same way when started in states $m_1$ and $m_2$ respectively:
\noindent$(s_1, m_1) \equiv_{H} (s_2, m_2)$.
\end{theorem}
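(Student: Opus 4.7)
The plan is a case analysis according to the three cases in the base-case definition of the proof rule for termination in the same way (Definition~\ref{def:condTermInSameWaySimpleStmt}). Since a simple statement steps to $\text{skip}$ in at most one semantic step unless the crash flag is raised, and once $\mathfrak{f}=1$ the Crash rule loops forever without ever reaching $\text{skip}$, the dichotomy ``terminate vs.\ nonterminate'' collapses to ``does the first step set $\mathfrak{f}$ to $1$ or not.'' Thus the goal reduces to showing that the two executions agree on whether a crash occurs, and the key lever for this is that $\text{TVar}$ (together with the fact that $\mathcal{E}$, $\text{Err}$, and the premise in $\text{In-3}$/$\text{In-5}$ depend only on variables in these sets) captures every value on which that decision can hinge.

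First I would dispatch the easy case $s_1 = s_2$: the statements are syntactically identical, so the same SOS rule is applicable in both configurations. For each shape of simple statement ($\text{skip}$, $lval := e$, $\text{input } id$, $\text{output } e$) I would read off from Fig.~\ref{fig:sosrulesExpr}--\ref{fig:iorules} which variables the crash-side-conditions (array bound, value-type mismatch, empty input, enum range, $\mathcal{E}$ returning $\text{error}$) look at, and observe that $\text{CVar}$ (and hence $\text{TVar}$) includes exactly those; agreement of $\vals_1$ and $\vals_2$ on $\text{TVar}(s_1)$ therefore makes the applicable rule be of the same family (crashing or non-crashing) in both executions.

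Next, case 2: $s_1 = ``\text{input } id_1"$ and $s_2 = ``\text{input } id_2"$ with $\Gamma_{s_1} \vdash id_1 : \tau$ and $\Gamma_{s_2} \vdash id_2 : \tau$. Here $\text{TVar}(s_i) \supseteq \{(id_i)_I\}$, and by the typing uniqueness assumption the two input-sequence variables coincide, so $\vals_1((id_1)_I) = \vals_2((id_2)_I)$. This forces the same choice between rules $\text{In-6}$ (empty queue $\Rightarrow$ crash) and $\text{In-1/2/4}$; when $\tau = \text{Int}$ the same head value triggers In-2 or In-3 in both; when $\tau = \text{enum}\, id'$, $\Gamma$ assigns the same label set so In-4 vs.\ In-5 is decided identically. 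In every branch both executions either step to $(\text{skip}, \cdot)$ or both latch into the Crash loop.

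Case 3 ($s_1, s_2 \in \{\text{output } e,\, id := e\}$ with no possible value-type mismatch) is analogous: the relevant crash sources are $\mathcal{E}\llbracket e\rrbracket = \text{error}$ and, for assignments, array-index-out-of-bound on the $lval$. By the no-mismatch premise, $\text{CVar}$ for each of $s_1, s_2$ is $\text{Idx}(lval)\cup\text{Err}(e)$ or $\text{Err}(e)$, all contained in $\text{TVar}$. A short appeal to the fact that $\mathcal{E}$ and the index check depend only on variables in $\text{Use}(e) \cup \text{Err}(e) \cup \text{Idx}(lval)$ yields that both evaluations return $\text{error}$ together, or both return a concrete value; in the former both executions invoke ECrash and then loop via Crash forever, in the latter both reduce to $\text{skip}$ in one further step (using As-Scl/As-Arr or Out-1/Out-2/Out-3, whichever matches). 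The main obstacle is a slightly delicate subcase when one of $s_1, s_2$ is $\text{output } e$ and the other is $id := e$: I would need to argue separately that neither statement can crash for reasons outside $\text{Err}(e)\cup\text{Idx}(lval)$ once value-type mismatch is excluded, and in the non-crash branch both still land in $\text{skip}$ after one step even though they modify different components of the state. The remaining bookkeeping---threading the hypothesis $\mathfrak{f}_1 = \mathfrak{f}_2 = 0$ through each rule application and verifying that the two executions match step-by-step according to Definition~\ref{def:termInSameWay}---is routine.
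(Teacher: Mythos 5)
Your plan matches the paper's proof in both structure and substance: a case analysis over the three clauses of Definition~\ref{def:condTermInSameWaySimpleStmt}, reducing termination-in-the-same-way for simple statements to the question of whether the single crash decision (array bound, value mismatch, empty input, enum range, $\mathcal{E}$ returning error) comes out the same in both executions, which follows because $\text{CVar}\subseteq\text{TVar}$ covers exactly the variables those side conditions inspect (via Lemmas~\ref{lmm:expEvalSameVal} and~\ref{lmm:expEvalSameTerm}). The only nitpick is that a non-crashing simple statement generally takes two or three steps (EEval$'$, EOflow, then the statement rule) rather than one, but this does not affect the argument.
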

\begin{proof}
The proof is a case analysis of those cases in the definition of $s_1 \equiv_{H}^s s_2$.
Because $s_1$ is a simple statement and $s_1$'s execution is without function call,
we only care the crash variables of $s_1$ in the termination deciding variables of $s_1$, $\text{CVar}(s_1)$.
Similarly, we only care $\text{CVar}(s_2)$.
\begin{description}
\item[First] $s_1$ and $s_2$ are same: $s_1 = s_2$;

We show the theorem by induction on abstract syntax of $s_1$ and $s_2$.
\begin{enumerate}
\item $s_1 = s_2 = \text{skip}$.

By definition of termination in the same way, both $s_1$ and $s_2$ terminate. The theorem holds.

\item $s_1 = s_2 = ``lval := e"$.

There are further cases regarding what $lval$ is.
\begin{enumerate}
\item $lval = id$.

By definition, $\text{CVar}(s_1) = \text{CVar}(s_2) = \text{Err}(e)$ or $\text{Use}(e)$
based on if there is possible value mismatch (e.g., assigning value defined only in type Long to a variable of type Int). There are two subcases.
\begin{itemize}
\item Left value $id$ is of type Int and expression $e$ is of type Long but not type Int,
       $(\Gamma \vdash id : \text{Int}) \wedge
       (\Gamma \vdash e : \text{Long}) \wedge
       \neg(\Gamma \vdash e : \text{Int})$.

By definition, $\text{CVar}(s_1) = \text{CVar}(s_2) = \text{Use}(e)$.
By assumption, $\forall x \in \text{Use}(e), \vals_1(x) = \vals_2(x)$.
By Lemma~\ref{lmm:expEvalSameVal}, the expression evaluates to the same value w.r.t two pairs of value stores $\vals_1$ and $\vals_2$ respectively,
\begin{itemize}
\item Both evaluations of expression lead to crash,

\noindent$\mathcal{E}\llbracket e\rrbracket\vals_1 =
 \mathcal{E}\llbracket e\rrbracket\vals_2 = (\text{error}, v_\mathfrak{of})$.

Then the execution of $s_1$ is as follows:
\begin{tabbing}
xx\=xx\=\kill
\>\>     $(s_1,m_1)$ \\
\>= \>   $(id := e, m_1(\vals_1))$ \\
\>$->$\> $(id := (\text{error}, *), m_1(\vals_1))$ by rule EEval' \\
\>$->$\> $(id := 0, m_1(1/\mathfrak{f}))$ by rule ECrash. \\
\>{\kStepArrow [i] }\> $(id := 0, m_1(1/\mathfrak{f}))$ for any $i>0$ by rule Crash.
\end{tabbing}

Similarly, $s_2$ does not terminate. The theorem holds.

\item Both evaluations of expression lead to no crash,
$\mathcal{E}\llbracket e\rrbracket\vals_1 =
 \mathcal{E}\llbracket e\rrbracket\vals_2 = (v, v_\mathfrak{of})$.

Then there are cases regarding if value mismatch occurs.

\noindent$\surd$ The value $v$ is only defined in type Long,
$(\Gamma \vdash v : \text{Long}) \wedge \neg(\Gamma \vdash v : \text{Int})$.

The execution of $s_1$ is as follows:
\begin{tabbing}
xx\=xx\=\kill
\>\>     $(s_1,m_1)$ \\
\>= \>   $(id := e, m_1(\vals_1))$ \\
\>$->$\> $(id := (v, v_\mathfrak{of}), m_1(\vals_1))$ by rule EEval \\
\>$->$\> $(id := v, m_1(\vals_1))$ by rule EOflow-1 or EOflow-2. \\
\>$->$\> $(id := v, m_1(1/\mathfrak{f}))$ by rule Assign-Err. \\
\>{\kStepArrow [i] }\> $(id := v, m_1(1/\mathfrak{f}))$ for any $i>0$ by rule Crash.
\end{tabbing}

Similarly, $s_2$ does not terminate. The theorem holds.

\noindent$\surd$  The value $v$ is defined in type Int,
$\Gamma \vdash v : \text{Int}$.

Assuming that the variable $id$ is a global one, the execution of $s_1$ is as follows:
\begin{tabbing}
xx\=xx\=\kill
\>\>     $(s_1,m_1)$ \\
\>= \>   $(id := e, m_1(\vals_1))$ \\
\>$->$\> $(id := (v, v_\mathfrak{of}), m_1(\vals_1))$ by rule EEval \\
\>$->$\> $(id := v, m_1(\vals_1))$ by rule EOflow-1 or EOflow-2. \\
\>$->$\> $(\text{skip}, m_1(\vals_1(\vals_1[v/id])))$ by rule Assign.
\end{tabbing}

Similarly, $s_2$ terminate. The theorem holds.

When the variable $id$ is a local variable, by similar argument for the global variable, we can show that $s_1$ and $s_2$ terminate. Then the theorem holds.
\end{itemize}

\item It is not the case that left value $id$ is of type Int and the expression $e$ is of type Long only,

\noindent$\neg\big((\Gamma \vdash id : \text{Int}) \wedge
                   (\Gamma \vdash e : \text{Long}) \wedge
                   \neg(\Gamma \vdash e : \text{Int})\big)$.

There are two cases based on if there is crash in evaluation of expression $e$.

\noindent$\surd$ Both evaluations of expression lead to crash,

\noindent$\mathcal{E}\llbracket e\rrbracket\vals_1 =
 \mathcal{E}\llbracket e\rrbracket\vals_2 = (\text{error}, v_\mathfrak{of})$.

By the same argument in case where left value $id$ is of type Int and the expression $e$ is of type Long only,
this theorem holds.


\noindent$\surd$ Both evaluations of expression lead to no crash,

\noindent$\mathcal{E}\llbracket e\rrbracket\vals_1 =
 \mathcal{E}\llbracket e\rrbracket\vals_2 = (v, v_\mathfrak{of})$.

By the same argument in subcase of no value mismatch in case where left value $id$ is of type Int and the expression $e$ is of type Long only, this theorem holds.
\end{itemize}

\item $lval = id[n]$.

There are two subcases based on if $n$ is within the array bound of $id$.
By our assumption, array variable $id$ is of the same bound in two programs.
W.l.o.g., we assume $id$ is local variable.
\begin{enumerate}
\item $n$ is out of bound of array variable $id$,
$((id, n) \mapsto v_1) \notin \vals_1$ and
$((id, n) \mapsto v_2) \notin \vals_2$;

Then the execution of $s_1$ continues as follows:
\begin{tabbing}
xx\=xx\=\kill
\>\>     $(s_1, m_1)$ \\
\>= \>   $(id[n] := e, m_1(\vals_1))$ \\
\>$->$\> $(id[n] := e, m_1(1/\mathfrak{f})$ by rule Arr-3 \\
\>${\kStepArrow [i] }$\> $(id[n] := e, m_1(1/\mathfrak{f}))$ by rule Crash.
\end{tabbing}

Similarly, $s_2$ does not terminate. The theorem holds.

\item $n$ is within the bound of array variable $id$,
$((id, n) \mapsto v_1) \in \vals_1$ and
$((id, n) \mapsto v_2) \in \vals_2$;

There are cases of $\text{CVar}(s_1)$ and $\text{CVar}(s_2)$ based on if there is possible value mismatch exception in $s_1$ and $s_2$.
\begin{itemize}
\item Left value $id[n]$ is of type Int and expression $e$ is of type Long but not type Int,
       $(\Gamma \vdash id[n] : \text{Int}) \wedge
       (\Gamma \vdash e : \text{Long}) \wedge
       \neg(\Gamma \vdash e : \text{Int})$.

By definition, $\text{CVar}(s_1) = \text{CVar}(s_2) = \text{Use}(e)$.
By assumption, $\forall x \in \text{Use}(e), \vals_1(x) = \vals_2(x)$.
By Lemma~\ref{lmm:expEvalSameVal}, the expression evaluates to the same value w.r.t two value stores $\vals_1$ and $\vals_2$ respectively,
\begin{itemize}
\item Both evaluations of expression lead to crash,

\noindent$\mathcal{E}\llbracket e\rrbracket\vals_1 =
 \mathcal{E}\llbracket e\rrbracket\vals_2 = (\text{error}, v_\mathfrak{of})$.

Then the execution of $s_1$ is as follows:
\begin{tabbing}
xx\=xx\=\kill
\>\>     $(s_1,m_1)$ \\
\>= \>   $(id[n] := e, m_1(\vals_1))$ \\
\>$->$\> $(id[n] := (\text{error}, *), m_1(\vals_1))$ by rule EEval' \\
\>$->$\> $(id[n] := 0, m_1(1/\mathfrak{f}))$ by rule ECrash. \\
\>{\kStepArrow [i] }\> $(id[n] := 0, m_1(1/\mathfrak{f}))$ for any $i>0$\\
\>\>                    by rule Crash.
\end{tabbing}

Similarly, $s_2$ does not terminate. The theorem holds.

\item Both evaluations of expression lead to no crash,
$\mathcal{E}\llbracket e\rrbracket\vals_1 =
 \mathcal{E}\llbracket e\rrbracket\vals_2 = (v, v_\mathfrak{of})$.

Then there are cases regarding if value mismatch occurs.

\noindent$\surd$ The value $v$ is only defined in type Long,
$(\Gamma \vdash v : \text{Long}) \wedge \neg(\Gamma \vdash v : \text{Int})$.

The execution of $s_1$ is as follows:
\begin{tabbing}
xx\=xx\=\kill
\>\>     $(s_1,m_1)$ \\
\>= \>   $(id[n] := e, m_1(\vals_1))$ \\
\>$->$\> $(id[n] := (v, v_\mathfrak{of}), m_1(\vals_1))$ by rule EEval' \\
\>$->$\> $(id[n] := v, m_1(\vals_1))$  \\
\>\>     by rule EOflow-1 or EOflow-2.\\
\>$->$\> $(id[n] := v, m_1(1/\mathfrak{f}))$ by rule Assign-Err. \\
\>{\kStepArrow [i] }\> $(id[n] := v, m_1(1/\mathfrak{f}))$ for any $i>0$\\
\>\>                    by rule Crash.
\end{tabbing}

Similarly, $s_2$ does not terminate. The theorem holds.

\noindent$\surd$ The value $v$ is defined in type Int,
$\Gamma \vdash v : \text{Int}$.

The execution of $s_1$ is as follows:
\begin{tabbing}
xx\=xx\=\kill
\>\>     $(s_1,m_1)$ \\
\>= \>   $(id[n] := e, m_1(\vals_1))$ \\
\>$->$\> $(id[n] := (v, v_\mathfrak{of}), m_1(\vals_1))$ by rule EEval' \\
\>$->$\> $(id[n] := v, m_1(\vals_1))$\\
\>\>      by rule EOflow-1 or EOflow-2. \\
\>$->$\> $(\text{skip}, m_1(\vals_1(\vals_1[v/(id, n)])))$ \\
\>\>      by rule Assign-A.
\end{tabbing}

Similarly, $s_2$ terminate. The theorem holds.

When the variable $id$ is a global variable, by similar argument for the global variable, we can show that $s_1$ and $s_2$ terminate. Then the theorem holds.
\end{itemize}

\item It is not the case that left value $id$ is of type Int and the expression $e$ is of type Long only,

\noindent$\neg\big((\Gamma \vdash id : \text{Int}) \wedge
                   (\Gamma \vdash e : \text{Long}) \wedge
                   \neg(\Gamma \vdash e : \text{Int})\big)$.

There are two cases based on if there is crash in evaluation of expression $e$.
\begin{itemize}
\item Both evaluations of expression lead to crash,

\noindent$\mathcal{E}\llbracket e\rrbracket\vals_1 =
          \mathcal{E}\llbracket e\rrbracket\vals_2 = (\text{error}, v_\mathfrak{of})$.

By the same argument in case where left value $id$ is of type Int and the expression $e$ is of type Long only,
this theorem holds.

\item Both evaluations of expression lead to no crash,

\noindent$\mathcal{E}\llbracket e\rrbracket\vals_1 =
 \mathcal{E}\llbracket e\rrbracket\vals_2 = (v, v_\mathfrak{of})$.

By the same argument in subcase of no value mismatch in case where left value $id$ is of type Int and the expression $e$ is of type Long only, this theorem holds.
\end{itemize}
\end{itemize}

\end{enumerate}
If array variable $id$ is a global variable, by similar argument above, the theorem holds.

\item $lval = {id}_1[{id}_2]$.

By definition, $\text{Idx}(s_1) = \text{Idx}(s_2) = \{{id}_2\} \subseteq \text{CVar}(s_1) = \text{CVar}(s_2)$.
By assumption, $\vals_1({id}_2) = \vals_2({id}_2) = n$.
By the same argument in the case where $lval = id[n]$, the theorem holds.
\end{enumerate}

\item $s_1 = s_2 = ``\text{input} \; id"$,

By definition, $\text{CVar}(s_1) = \text{CVar}(s_2) = \{{id}_I\}$.
By assumption $\vals_1({id}_I) = \vals_2({id}_I)$.
There are cases regarding if input sequence is empty or not.
\begin{enumerate}
\item There is empty input sequence, $\vals_1({id}_I) = \vals_2({id}_I) = \varnothing$.

Then the execution of $s_1$ continues as follows:
\begin{tabbing}
xx\=xx\=\kill
\>\>     $(s_1,m_1)$ \\
\>= \>   $(\text{input} \, id, m_1(\vals_1))$ \\
\>$->$\> $(\text{input} \, id, m_1(1/\mathfrak{f}))$ by rule In-7 \\
\>{\kStepArrow [i] }\> $(\text{input} \, id, m_1(1/\mathfrak{f}))$ by rule Crash.
\end{tabbing}

Similarly, $s_2$ does not terminate. The theorem holds.

\item There is nonempty input sequence, $\vals_1({id}_I) = \vals_2({id}_I) \neq \varnothing$.

There are cases regarding if type of the variable $id$ is Long or not.
\begin{enumerate}
\item $id$ is of type Long, $\Gamma \vdash id : \text{Long}$;

Assuming $id$ is a local variable, then the execution of $s_1$ continues as follows:
\begin{tabbing}
xx\=xx\=\kill
\>\>     $(s_1, m_1)$ \\
\>= \>   $(\text{input} \, id, m_1(\vals_1))$ \\
\>$->$\> $(\text{skip}, m_1(\vals_1[{v_{io}} / id, \text{tl}(\vals_1({id}_I))/{id}_I,$\\
\>\>     $``\vals_1({id}_{IO}) \cdot \underline{v}_{io}"/{id}_{IO}]))$ by rule In-3.
\end{tabbing}
Similarly, $s_2$ terminates. The theorem holds.

When the variable $id$ is a global variable, by similar argument, the theorem holds.

\item $id$ is of type Int or enumeration, $\Gamma \vdash id : \text{Int}$ or $\text{enum} \, id'$;

There are cases regarding if the head of input sequence can be transformed to type of $id$.
Let $v_{io} = \text{hd}(\vals_1({id}_I))$.
\begin{itemize}
\item $id$ is of type Int.

If $v_{io}$ is not of type Int, $\Gamma \vdash v_{io} : \text{Long}$ and $\neg(\Gamma \vdash v_{io} : \text{Int})$,
then the execution of $s_1$ continues as follows:
\begin{tabbing}
xx\=xx\=\kill
\>\>     $(s_1, m_1)$ \\
\>= \>   $(\text{input} \, id, m_1(\vals_1))$ \\
\>$->$\> $(\text{input} \, id, m_1(1/\mathfrak{f}))$ by Rule In-4.\\
\>{\kStepArrow [i] }\> $(\text{input} \, id, m_1(1/\mathfrak{f}))$ by Rule crash.
\end{tabbing}
Similarly, $s_2$ does not terminate. The theorem holds.

If $v_{io}$ is of type Int, $\Gamma \vdash v_{io} : \text{Long}$ and $\Gamma \vdash v_{io} : \text{Int}$,
assuming $id$ is a local variable,
then the execution of $s_1$ continues as follows:
\begin{tabbing}
xx\=xx\=\kill
\>\>     $(s_1, m_1)$ \\
\>= \>   $(\text{input} \, id, m_1(\vals_1))$ \\
\>$->$\> $(\text{skip}, m_1(\vals_1[{v_{io}} / id, \text{tl}(\vals_1({id}_I))/{id}_I,$\\
\>\>     $``\vals_1({id}_{IO}) \cdot \underline{v}_{io}"/{id}_{IO}]))$ by Rule In-8.
\end{tabbing}
Similarly, $s_2$ terminates. The theorem holds.

When $id$ is a global variable, by similar argument, the theorem holds.

\item If $id$ is of type enum $id' = \{l_1,...,l_k\}$.

If $(v_{io}<1) \vee (v_{io}>k)$,
then the execution of $s_1$ continues as follows:
\begin{tabbing}
xx\=xx\=\kill
\>\>     $(s_1, m_1)$ \\
\>= \>   $(\text{input} \, id, m_1(\vals_1))$ \\
\>$->$\> $(\text{input} \, id, m_1(1/\mathfrak{f}))$ by Rule In-6.\\
\>{\kStepArrow [i] }\> $(\text{input} \, id, m_1(1/\mathfrak{f}))$ by Rule crash.
\end{tabbing}
Similarly, $s_2$ does not terminate. The theorem holds.
When $id$ is a global variable, by similar argument, the theorem holds.

If $1 \leq v_{io} \leq k$,
assuming $id$ is a local variable,
then the execution of $s_1$ continues as follows:
\begin{tabbing}
xx\=xx\=\kill
\>\>     $(s_1, m_1)$ \\
\>= \>   $(\text{input} \, id, m_1(\vals_1))$ \\
\>$->$\> $(\text{skip}, m_1(\vals_1[l_{v_{io}} /id, \text{tl}(\vals_1({id}_I))/{id}_I,$\\
\>\>     $``\vals_1({id}_{IO}) \cdot \underline{v}_{io}"/{id}_{IO}]))$ by Rule In-5.
\end{tabbing}
Similarly, $s_2$ terminates. The theorem holds.
When $id$ is a global variable, by similar argument, the theorem holds.
\end{itemize}
\end{enumerate}

\item $s_1 = s_2 = ``\text{output} \; e"$;

There are two cases based on if evaluation of expression $e$ crashes.
By definition, $\text{CVar}(s_1) = \text{CVar}(s_2) = \text{Err}(e)$.
By assumption, $\forall x \in \text{Err}(e), \allowbreak  \vals_1(x) = \vals_2(x)$.
By Lemma~\ref{lmm:expEvalSameTerm}, evaluation of the expression $e$ w.r.t two value stores
$\vals_1$ and $\vals_2$ either both crash or both do not crash.
\begin{enumerate}
\item There is crash in evaluation of the expression $e$ w.r.t
two value stores
$\vals_1$ and $\vals_2$,
$\mathcal{E}\llbracket e\rrbracket\vals_1 = (\text{error}, v_\mathfrak{of}^1)$ and
$\mathcal{E}\llbracket e\rrbracket\vals_2 = (\text{error}, v_\mathfrak{of}^2)$.

The execution of $s_1$ continues as follows:
\begin{tabbing}
xx\=xx\=\kill
\>\>     $(s_1, m_1)$ \\
\>= \>   $(\text{output} \, e, m_1(\vals_1))$ \\
\>$->$\> $(\text{output} \, (\text{error}, v_\mathfrak{of}^1), m_1(1/\mathfrak{f}))$ by Rule EEval'\\
\>$->$\> $(\text{output} \, 0, m_1(1/\mathfrak{f}))$ by Rule ECrash.\\
\>{\kStepArrow [i] }\> $(\text{output} \, 0, m_1(1/\mathfrak{f}))$ by Rule crash.
\end{tabbing}

Similarly, $s_2$ does not terminate. The theorem holds.

\item There is no crash in evaluation of the expression $e$ w.r.t
two value stores
$\vals_1$ and $\vals_2$,
$\mathcal{E}\llbracket e\rrbracket\vals_1 = (v_1, v_\mathfrak{of}^1)$ and
$\mathcal{E}\llbracket e\rrbracket\vals_2 = (v_2, v_\mathfrak{of}^2)$.

According to rule Out-1 and Out-2, there is no exception in transformation of different typed output value.
We therefore only show the execution for output value of Int type.
The execution of $s_1$ continues as follows:
\begin{tabbing}
xx\=xx\=\kill
\>\>     $(s_1, m_1)$ \\
\>=\>    $(\text{output} \, e, m_1(\vals_1))$ \\
\>$->$\> $(\text{output} \, (v_1, v_\mathfrak{of}^1), m_1(1/\mathfrak{f}))$ by Rule EEval'\\
\>$->$\> $(\text{output} \, v_1, m_1(v_\mathfrak{of}^1/\mathfrak{of}))$ \\
\>\>     by Rule EOflow-1 or EOflow-2.\\
\>$->$\> $(\text{skip}, m_1(\vals_1[``\vals({id}_{IO}) \cdot \overline{v}_1"/{id}_{IO}]))$\\
\>\>      by Rule Out-1.
\end{tabbing}
Similarly, $s_2$ terminates. Theorem holds.
\end{enumerate}
\end{enumerate}
\end{enumerate}

\item[Second] $s_1$ and $s_2$ are input statement with same type variable: $s_1=``\text{input }{id}_1", s_2 = ``\text{input }{id}_2"$
  where $(\Gamma_{s_1} \vdash {id}_1 : t) \wedge (\Gamma_{s_2} \vdash {id}_2 : t)$;

The theorem holds by similar argument for the case $s_1 = s_2 = \text{input }{id}$.

\item[Third]  $s_1 = ``\text{output} \, e" \, \text{or} \, ``{id}_1 := e",
        s_2 = ``\text{output} \, e" \, \text{or} \, ``{id}_2 := e"$ where both of the following hold:
       \begin{itemize}
       \item There is no possible value mismatch in $``{id}_1 := e"$,
       $\neg(\Gamma_{s_1} \vdash {id}_1 : \text{Int}) \vee
        \neg(\Gamma_{s_1} \vdash e : \text{Long}) \vee
            (\Gamma_{s_1} \vdash e : \text{Int})$.

       \item There is no possible value mismatch in $``{id}_2 := e"$,
       $\neg(\Gamma_{s_2} \vdash {id}_2 : \text{Int}) \vee
        \neg(\Gamma_{s_2} \vdash e : \text{Long}) \vee
            (\Gamma_{s_2} \vdash e : \text{Int})$.
       \end{itemize}

We show that the evaluations of the expression $e$ w.r.t the value stores $\vals_1$ and $\vals_2$ either both raise an exception or both do not.
By the definition of crash variables, the crash variables of $s_1$ are those obtained by the function $\text{Err}(e)$, $\text{CVar}(s_1) = \text{Err}(e)$. Similarly, the termination deciding variables of $s_2$ are $\text{Err}(e)$.
By assumption, the initial value stores $\vals_1$ and $\vals_2$ agree on values of those in  $\text{CVar}(s_1)$ and $\text{CVar}(s_2)$, $\forall x \in \text{Err}(e) = (\text{CVar}(s_1) \cup \text{CVar}(s_2))\,:\,
  \vals_1(x) = \vals_2(x)$.
By Lemma~\ref{lmm:expEvalSameTerm}, the evaluations of expression $e$ w.r.t two value stores,
$\vals_1$ and $\vals_2$, either both raise an exception or both do not raise an exception.

\begin{enumerate}
\item The evaluations of the expression $e$ raise an exception w.r.t two value stores
$\vals_1$ and $\vals_2$,
$\mathcal{E}'\llbracket e\rrbracket\vals_1 = (\text{error}, v_\mathfrak{of}^1),
 \mathcal{E}'\llbracket e\rrbracket\vals_2 = (\text{error}, v_\mathfrak{of}^2)$:

We show the execution of $s_1$ proceeds to an configuration where the crash flag is set and then does not terminate.

When $s_1 = ``\text{output }e"$, the execution of $``\text{output }e"$ proceeds as follows.

\begin{tabbing}
xx\=xx\=\kill
\>\>                   $(\text{output }e, m_1(\vals_1))$ \\
\>$->$\>               $(\text{output }(\text{error}, v_\mathfrak{of}^1), m_1(\vals_1))$ by rule EEval'\\
\>$->$\>               $(\text{output }0, m_1(1/\mathfrak{f}))$ by rule ECrash\\
\>{\kStepArrow [i] }\> $(\text{output }0, m_1(1/\mathfrak{f}))$ for any $i\geq0$, by rule Crash.
\end{tabbing}

When $s_1 = ``{id}_1 := e"$, the execution of $``{id}_1 := e"$ proceeds as follows.

\begin{tabbing}
xx\=xx\=\kill
\>\>                   $({id}_1 := e, m_1(\vals_1))$ \\
\>$->$\>               $({id}_1 := (\text{error}, v_\mathfrak{of}^1), m_1(\vals_1))$ by rule EEval'\\
\>$->$\>               $({id}_1 := 0, m_1(1/\mathfrak{f}))$ by rule ECrash\\
\>{\kStepArrow [i] }\> $({id}_1 := 0, m_1(1/\mathfrak{f}))$ for any $i\geq0$, by rule Crash.
\end{tabbing}

Similarly, the execution of $s_2$ proceeds to a configuration where the crash flag is set.
Then, by the crash rule, the execution of $s_2$ does not terminate.
The theorem~\ref{thm:mainTermSameWaySimpleStmt} holds.

\item the evaluations of expression $e$ do not raise an exception w.r.t two value stores,
$\vals_1$ and $\vals_2$,
$\mathcal{E}'\llbracket e\rrbracket\vals_1 = (v_1, v_\mathfrak{of}^1),
 \mathcal{E}'\llbracket e\rrbracket\vals_2 = (v_2, v_\mathfrak{of}^2)$:

We show the execution of $s_1$ terminates.

When $s_1 = \text{output }(e)$, the execution of $\text{output }(e)$ proceeds as follows.
W.l.o.g, we assume expression $e$ is of type Int.
This is allowed by the condition that it does not hold that
   $(\Gamma_{s_1} \vdash e : \text{Long}) \wedge
\neg(\Gamma_{s_1} \vdash e : \text{Int})$.
\begin{tabbing}
xx\=xx\=\kill
\>\>     $(\text{output }e, m_1(\vals_1))$ \\
\>$->$\> $(\text{output }(v_1, v_\mathfrak{of}^1), m_1(\vals_1))$ by rule EEval'\\
\>$->$\> $(\text{output }v_1, m_1(v_\mathfrak{of}^1/\mathfrak{of}, \vals_1))$ \\
\>\>     by rule E-Oflow1 or E-Oflow2\\
\>$->$\> $(\text{skip}, m_1(\vals_1[``\vals_1({id}_{IO})\cdot\bar{v_1}"/{id}_{IO}]))$ by rule Out.
\end{tabbing}

When $s_1 = ``{id}_1 := e"$,
by assumption, the expression $e$ is of type Int, there is no possible value mismatch in execution of
$``{id}_1 := e"$ because the only possible value mismatch occurs when assigning a value of type Long but not Int to a variable of type Int.
By the condition $\neg(\Gamma_{s_1} \vdash {id}_1 : \text{Int}) \vee
        \neg(\Gamma_{s_1} \vdash e : \text{Long}) \vee
            (\Gamma_{s_2} \vdash e : \text{Int})$,
when expression $e$ is of type Long, then the variable ${id}_1$ is not of type Int.
In summary, there is no value mismatch.

The execution of $``{id}_1 := e"$ proceeds as follows.
\begin{tabbing}
xx\=xx\=\kill
\>\>                   $({id}_1 := e, m_1(\vals_1))$ \\
\>$->$\>               $({id}_1 := (v_1, v_\mathfrak{of}^1), m_1(\vals_1))$ by rule EEval'\\
\>$->$\>               $({id}_1 := v_1, m_1(v_\mathfrak{of}^1/\mathfrak{of}, \vals_1))$ by rule EEval'\\
\>$->$\>               $(\text{skip}, m_1(\vals_1[v_1/{id}_1]))$ by the rule Assign.\\
\end{tabbing}
When ${id}_1$ is a variable of enumeration or Long type, by similar argument, the theorem still holds.

Similarly, the execution of $s_2$ terminates when started in the state $m_2(\vals_2)$.
Theorem~\ref{thm:mainTermSameWaySimpleStmt} holds.
\end{enumerate}
\end{description}
\end{proof}

\begin{theorem}\label{thm:mainTermSameWayLocal}
If two statement sequences $S_1$ and $S_2$ satisfy the proof rule of termination in the same way,
$S_1 \equiv_{H}^S S_2$,
and their respective initial states $m_1(\mathfrak{f}_1, \vals_1)$ and $m_2(\mathfrak{f}_2, \vals_2)$ with crash flags not set,
 $\mathfrak{f}_1 = \mathfrak{f}_2 = 0$,
 and whose value stores agree on values of the termination deciding variables of $S_1$ and $S_2$,
 $\forall x \in \text{TVar}(S_1) \cup \text{TVar}(S_2)\,:\,
  \vals_1(x) = \vals_2(x)$,
 then $S_1$ and $S_2$ terminate in the same way when started in states $m_1$ and $m_2$ respectively:
$(S_1, m_1) \equiv_{H} (S_2, m_2)$.
\end{theorem}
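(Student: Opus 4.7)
}
The plan is to proceed by strong induction on $\text{size}(S_1)+\text{size}(S_2)$, paralleling the structure of the soundness proof for Theorem~\ref{thm:equivCompMain}. The base case where both $S_1$ and $S_2$ are simple statements is delivered directly by Theorem~\ref{thm:mainTermSameWaySimpleStmt}, since the agreement on $\text{TVar}(S_1)\cup\text{TVar}(S_2)$ reduces to agreement on the crash-deciding variables used there. The induction step is a case analysis following the clauses of Definition~\ref{def:lCondTermInSameWay}. In each case I first identify which syntactic subpieces of $S_1$ and $S_2$ are related by $\equiv_H^S$ (or by $\equiv_x^S$) in the premise, then check that the corresponding subsets of termination-deciding variables on which the initial value stores must agree are contained in $\text{TVar}(S_1)\cup\text{TVar}(S_2)$ (this is where Definitions~\ref{def:LVar},~\ref{def:CVar} were tailored to flow imported variables through the context), and finally invoke the induction hypothesis together with Theorem~\ref{thm:equivCompMain} to chain through the execution.

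For the easier structural cases this is largely bookkeeping. For the $\text{If}$ case I would first use Theorem~\ref{thm:equivCompMain} on $\text{Use}(e)$ (a subset of $\text{TVar}$ by construction of $\text{CVar}$ and $\text{LVar}$ on an $\text{If}$) to conclude that the guards evaluate identically on both sides; if they both go to the true branch the IH applies to $S_1^t\equiv_H^S S_2^t$, and symmetrically for false. For the sequential composition case $S_1=S_1';s_1$, $S_2=S_2';s_2$ with neither $s_i$ a $\text{skip}$, I apply the IH to $S_1'\equiv_H^S S_2'$ first; if both diverge we are done, otherwise both terminate in states whose value stores must agree on $\text{TVar}(s_1)\cup\text{TVar}(s_2)$ by Theorem~\ref{thm:equivCompMain} (using the side-condition that $S_1'$ and $S_2'$ compute $\text{TVar}(s_1)\cup\text{TVar}(s_2)$ equivalently), and a second application of the IH to $s_1\equiv_H^S s_2$ finishes. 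The $\text{skip}$-elimination, duplicate-statement, and reorder cases reduce to this pattern after observing that the statements being moved, duplicated, or absorbed do not redefine any $\text{TVar}$-variable of the surrounding context, so Corollary~\ref{coro:defExclusion}-style reasoning lets us align the value stores before invoking the IH.

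The main obstacle is the $\text{while}$ case, exactly as with the computation theorem. Here I would introduce and prove an analogue of Lemma~\ref{lmm:equivTermCompSameLoopIteration}, stating that if $S_1''\equiv_H^S S_2''$ and the bodies equivalently compute $\text{TVar}(S_1)\cup\text{TVar}(S_2)$, then after any finite number $i$ of guard evaluations the two loops are either both still iterating (in matching configurations with agreeing $\text{TVar}$-values) or both have exited at step $i$, or they have both diverged inside a body iteration. The induction in this lemma is on $i$: the step uses the IH of the outer theorem on the loop body (whose size is strictly smaller), Theorem~\ref{thm:equivCompMain} on the body to re-establish agreement on termination-deciding variables at the start of iteration $i{+}1$, and Lemma~\ref{lmm:expEvalSameVal} to conclude that the guards evaluate identically at each iteration. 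Two possibilities then arise: either there is a finite $i$ where both exit, in which case both $\text{while}$ loops terminate together; or for every $i$ both loops are still running, in which case both diverge. The delicate point, mirroring the base-case analysis inside Theorem~\ref{thm:mainTermSameWaySimpleStmt}, is that divergence can also arise within a single body iteration (crash, missing input, or nested loop divergence), but this is already covered by the inner IH applied to the body, so the same trichotomy goes through.

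Finally, I would collect the stated auxiliary facts (agreement propagation $\text{Imp}(\cdot,\text{TVar}(\cdot))\subseteq\text{TVar}(\cdot)$ for the loop body, the $\text{Imp}$-composition identities of Lemma~\ref{lmm:ImpPrefixLemma} and Lemma~\ref{lmm:impVarUnionLemma}, and the unique-loop-label invariant so that loop counters of enclosing loops are not clobbered by $\equiv_x^S$-style inner computation) as short supporting lemmas rather than inlining them, in the same style as Section~\ref{sec_equiv_comp}. The overall proof then reads as a case analysis that delegates equational reasoning to Theorem~\ref{thm:equivCompMain}, delegates the simple-statement dynamics to Theorem~\ref{thm:mainTermSameWaySimpleStmt}, and isolates the only genuinely new argument --- loops terminate together --- in the auxiliary loop-step lemma above.
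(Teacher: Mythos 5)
Your plan matches the paper's proof essentially step for step: induction on $\text{size}(S_1)+\text{size}(S_2)$, the base case delegated to Theorem~\ref{thm:mainTermSameWaySimpleStmt}, re-establishment of value-store agreement via Theorem~\ref{thm:equivCompMain} in the sequential, duplicate, and reorder cases, and the while case isolated in a loop-iteration lemma with exactly the trichotomy you describe (the paper's Lemma~\ref{lmm:loopTermInSameWayIntermediate} and Corollary~\ref{coro:loopTermInSameWay}). The one small caveat is the all-skip If subcase, where $\text{TVar}$ contains only $\text{Err}(e)$ rather than $\text{Use}(e)$, so the guards need not evaluate to the same value; since both branches are skip this does not affect the conclusion, and the paper simply splits that subcase off.
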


\begin{proof}
The proof is by induction on $\text{size}(S_1) + \text{size}(S_2)$, the sum of program size of $S_1$ and $S_2$.

\noindent{\bf Base case.}
$S_1$ and $S_2$ are simple statement. By Theorem~\ref{thm:mainTermSameWaySimpleStmt}, Theorem~\ref{thm:mainTermSameWayLocal} holds.

\noindent{\bf Induction step}.

\noindent There are two hypotheses.
The hypothesis IH is that Theorem~\ref{thm:mainTermSameWayLocal} holds when $\text{size}(S_1)+\text{size}(S_2) = k\geq2$.

\noindent We show Theorem~\ref{thm:mainTermSameWayLocal} holds when $\text{size}(S_1) + \text{ size}(S_2) = k+1$.

The proof of Theorem~\ref{thm:mainTermSameWayLocal} is a case analysis  according to the cases in the definition of the proof rule of termination in the same way, $S_1 \equiv_{H}^S S_2$.
\begin{enumerate}
\item $S_1$ and $S_2$ are one statement and one of the following holds.

\begin{enumerate}
\item
$S_1$ = ``If($e$) then \{$S_1^t$\} else \{$S_1^f$\}",
$S_2$ = ``If($e$) then \{$S_2^t$\} else \{$S_2^f$\}" such that one of the following holds:

\begin{enumerate}
\item $S_1^t, S_1^f, S_2^t, S_2^f$ are all sequences of ``skip";

We show that the evaluation of expression $e$ w.r.t the value store $\vals_1$ and $\vals_2$ either both raise an exception or both do not.
By the definition of crash/loop variables,
$\text{CVar}(S_1^t) = \text{CVar}(S_1^f) = \emptyset$,
$\text{LVar}(S_1) = \emptyset$.
By the definition of termination deciding variables, the termination deciding variables of $S_1$ is the crash variables of $S_1$, $\text{TVar}(S_1) = \text{CVar}(S_1) = \text{Err}(e)$.
By assumption, the value stores $\vals_1$ and $\vals_2$ agree on the values of those in the crash variables of $S_1$ and $S_2$, $\forall x \in \text{Err}(e) = \text{TVar}(S_1) = \text{TVar}(S_2), \vals_1(x) = \vals_2(x)$.
By the property of the expression meaning function $\mathcal{E}$, the evaluation of predicate expression $e$ of $S_1$ and $S_2$ w.r.t value store $\vals_1$ and $\vals_2$ either both crash or both do not crash, $(\mathcal{E}\llbracket e\rrbracket\vals_1 = \mathcal{E}\llbracket e\rrbracket\vals_2 = \text{error}) \vee
\big((\mathcal{E}\llbracket e\rrbracket\vals_1 \neq \text{error}) \wedge (\mathcal{E}\llbracket e\rrbracket\vals_2 \neq \text{error})\big)$.
Then we show that Theorem~\ref{thm:mainTermSameWayLocal} holds in either of the two possibilities.

\begin{enumerate}
\item $\mathcal{E}\llbracket e\rrbracket\vals_1 = \mathcal{E}\llbracket e\rrbracket\vals_2 = \text{error}$.

The execution of $S_1$ proceeds as follows:

\begin{tabbing}
xx\=xx\=\kill
\>\>                   $(\text{If}(e) \text{ then }\{S_1^t\} \text{ else }\{S_1^f\}, m_1(\vals_1))$ \\
\>$->$\>               $(\text{If}(\text{error}) \text{ then }\{S_1^t\} \text{ else }\{S_1^f\}, m_1(\vals_1))$ by rule EEval\\
\>$->$\>               $(\text{If}(0) \text{ then }\{S_1^t\} \text{ else }\{S_1^f\}, m_1(1/\mathfrak{f}, \vals_1))$ by rule ECrash\\
\>{\kStepArrow [i] }\> $(\text{If}(0) \text{ then }\{S_1^t\} \text{ else }\{S_1^f\}, m_1(1/\mathfrak{f}, \vals_1))$ for any $i\geq0$,\\
\>\>                   by rule Crash.
\end{tabbing}

Similarly, the execution of $S_2$ started in the state $m_2(\vals_2)$ does not terminate. The theorem~\ref{thm:mainTermSameWayLocal} holds.

\item $(\mathcal{E}\llbracket e\rrbracket\vals_1 \neq \text{error}) \wedge (\mathcal{E}\llbracket e\rrbracket\vals_2 \neq \text{error})$.

W.l.o.g, $\mathcal{E}\llbracket e\rrbracket\vals_1 = v_1 \neq 0$, $\mathcal{E}\llbracket e\rrbracket\vals_2 = 0$.
Then the execution of $S_1$ proceeds as follows.

\begin{tabbing}
xx\=xx\=\kill
\>\>            $(\text{If}(e) \text{ then }\{S_1^t\} \text{ else }\{S_1^f\}, m_1(\vals_1))$ \\
\>$->$\>        $(\text{If}(v_1) \text{ then }\{S_1^t\} \text{ else }\{S_1^f\}, m_1(\vals_1))$ by rule EEval\\
\>$->$\>        $(S_1^t, m_1(\vals_1))$ by rule If-T\\
\>$->*$\>       $(\text{skip}, m_1')$ by rule Skip.
\end{tabbing}

Similarly, the execution of $S_2$ started in the state $m_2(\vals_2)$ terminates. The theorem~\ref{thm:mainTermSameWayLocal} holds.
\end{enumerate}

\item At least one of $S_1^t, S_1^f, S_2^t, S_2^f$ is not a sequence of ``skip" and
      $(S_1^t \equiv_{H}^S S_2^t) \wedge (S_1^f \equiv_{H}^S S_2^f)$;

W.l.o.g., $S_1^t$ is not of ``skip" only.
We show that the evaluation of the expression $e$ w.r.t the value stores $\vals_1$ and $\vals_2$ either both raise an exception or both produce the same integer value.
Then there is either some loop statement in $S_1^t$ or the crash variables of $S_1^t$ are not $\emptyset$ or both.
\begin{enumerate}
\item When there is some loop statement in $S_1^t$, then, by the definition of loop variables, the loop variables of $S_1$ include all variables used in the predicate expression of $S_1$,
$\text{LVar}(S_1) = \text{Use}(e) \cup \text{LVar}(S_1^t) \cup \text{LVar}(S_1^f)$.

\item When the crash variables of $S_1^t$ are not $\emptyset$, then, by the definition of crash variables, the crash variables of $S_1$ include all variables used in the predicate expression of $S_1$,
$\text{CVar}(S_1) = \text{Use}(e) \cup \text{CVar}(S_1^t) \cup \text{CVar}(S_1^f)$.
\end{enumerate}
In summary, all variables used in predicate expression of $S_1$ is a subset of termination deciding variables of $S_1$,
$\text{Use}(e) \subseteq \text{TVar}(S_1)$.
By assumption, the value store $\vals_1$ and $\vals_2$ agree on the values of those in the termination deciding variables of $S_1$ and $S_2$. It follows, by the property of expression meaning function $\mathcal{E}$, the evaluation of the predicate expression $e$ of $S_1$ and $S_2$ produce the same value w.r.t the value store $\vals_1$ and $\vals_2$,
$\mathcal{E}\llbracket e\rrbracket\vals_1 = \mathcal{E}\llbracket e\rrbracket\vals_2$.
Then either the evaluations of the predicate expression $e$ of $S_1$ and $S_2$ both crash w.r.t the value store $\vals_1$ and $\vals_2$, or both evaluations produce the same integer value,
$(\mathcal{E}\llbracket e\rrbracket\vals_1 = \mathcal{E}\llbracket e\rrbracket\vals_2 = \text{error}) \vee
 (\mathcal{E}\llbracket e\rrbracket\vals_1 = \mathcal{E}\llbracket e\rrbracket\vals_2 = v \neq \text{error})$. We show Theorem~\ref{thm:mainTermSameWayLocal} holds in either of the two possibilities.

\begin{enumerate}
\item $\mathcal{E}\llbracket e\rrbracket\vals_1 = \mathcal{E}\llbracket e\rrbracket\vals_2 = \text{error}$.

The execution of $S_1$ proceeds as follows:

\begin{tabbing}
xx\=xx\=\kill
\>\>                   $(\text{If}(e) \text{ then }\{S_1^t\} \text{ else }\{S_1^f\}, m_1(\vals_1))$ \\
\>$->$\>               $(\text{If}(\text{error}) \text{ then }\{S_1^t\} \text{ else }\{S_1^f\}, m_1(\vals_1))$ by rule EEval\\
\>$->$\>               $(\text{If}(0) \text{ then }\{S_1^t\} \text{ else }\{S_1^f\}, m_1(1/\mathfrak{f}, \vals_1))$ by rule ECrash\\
\>{\kStepArrow [i] }\> $(\text{If}(0) \text{ then }\{S_1^t\} \text{ else }\{S_1^f\}, m_1(1/\mathfrak{f}, \vals_1))$ for any $i\geq0$,\\
\>\>                   by rule Crash.
\end{tabbing}

Similarly, the execution of $S_2$ started from state $m_2(\vals_2)$ does not terminate. The theorem~\ref{thm:mainTermSameWayLocal} holds.

\item $\mathcal{E}\llbracket e\rrbracket\vals_1 = \mathcal{E}\llbracket e\rrbracket\vals_2 = v \neq \text{error}$, w.l.o.g., $v = 0$.

Then the execution of $S_1$ proceeds as follows:

\begin{tabbing}
xx\=xx\=\kill
\>\>                   $(\text{If}(e) \text{ then }\{S_1^t\} \text{ else }\{S_1^f\}, m_1(\vals_1))$ \\
\>$->$\>               $(\text{If}(0) \text{ then }\{S_1^t\} \text{ else }\{S_1^f\}, m_1(\vals_1))$ by rule EEval\\
\>$->$\>               $(S_1^f, m_1(\vals_1))$ by rule If-F.
\end{tabbing}

Similarly, after two steps of execution, $S_2$ gets to the configuration $(S_2^f, m_2(\vals_2))$.

\myNewLine

We show that $S_1^f$ and $S_2^f$ terminate in the same way when started in the state $m_1(\vals_1)$ and $m_2(\vals_2)$ respectively.
Because $S_1^f \equiv_{H}^S S_2^f$, by Corollary~\ref{coro:sameTermVarFromEquivTerm}, the termination deciding variables of $S_1^f$ and $S_2^f$ are same, $\text{TVar}(S_1^f) = \text{TVar}(S_2^f)$.
By the definition of crash/loop variables, $\text{CVar}(S_1^f) \subseteq \text{CVar}(S_1)$ and
                                           $\text{LVar}(S_1^f) \subseteq \text{LVar}(S_1)$.
Hence, the termination deciding variables of $S_1^f$ are a subset of the termination deciding variables of $S_1$, $\text{TVar}(S_1^f) \subseteq \text{TVar}(S_1)$. Similarly, $\text{TVar}(S_2^f) \subseteq \text{TVar}(S_2)$.
Therefore, the value store $\vals_1$ and $\vals_2$ agree on the values of those in the termination deciding variables of $S_1^f$ and $S_2^f$, $\forall y \in \text{TVar}(S_1^f) \cup \text{TVar}(S_2^f)\,:\, \vals_1(y) = \vals_2(y)$.
In addition, the sum of program size of $S_1^f$ and $S_2^f $ is less than $k$ because program size of each of $S_1^t$ and $S_2^t$ is great than or equal to one, $\text{size}(S_1^f) + \text{size}(S_2^f) < k$.
As is shown, crash flags are not set.
Therefore, by the hypothesis IH, $S_1^f$ and $S_2^f$ terminate in the same way when started in state $m_1(\mathfrak{f}_1, \vals_1)$ and $m_2(\mathfrak{f}_2, \vals_2)$, $(S_1^f, m_1(\mathfrak{f}_1, \vals_1)) \equiv_{H} (S_2^f, m_2(\mathfrak{f}_2, \vals_2))$. Hence, Theorem~\ref{thm:mainTermSameWayLocal} holds.
\end{enumerate}
\end{enumerate}

\item
$S_1 = ``\text{while}_{\langle n_1\rangle} (e) \; \{S_1''\}"$,
$S_2 = ``\text{while}_{\langle n_2\rangle} (e) \; \{S_2''\}"$ such that both of the following hold:
    \begin{itemize}
    \item $S_1'' \equiv_{H}^S S_2''$;

    \item $S_1''$ and $S_2''$ have equivalent computation of $\text{TVar}(S_1) \cup \text{ TVar}(S_2)$;
    \end{itemize}

By Corollary~\ref{coro:loopTermInSameWay}, we show $S_1$ and $S_2$ terminate in the same way when started from state $m_1(\mathfrak{f}_1, m_c^1,\vals_1)$ and $m_2(\mathfrak{f}_2, m_c^2,\vals_2)$ respectively.
We need to show that all required conditions are satisfied.
\begin{itemize}
\item The crash flags are not set, $\mathfrak{f}_1 = \mathfrak{f}_2 = 0$.

\item The loop counter value of $S_1$ and $S_2$ are zero: $m_c^1(n_1) = m_c^2(n_2) = 0$.

\item The value stores $\vals_1$ and $\vals_2$ agree on the values of those in the termination deciding variables of $S_1$ and $S_2$, $\forall x \in \text{TVar}(S_1) \cap \text{TVar}(S_2)\,:\, \vals_1(x) = \vals_2(x)$.

The three above conditions are from assumption.

\item $S_1$ and $S_2$ have same set of termination deciding variables,
$\text{TVar}(S_1) = \text{TVar}(S_2)$.

By Corollary~\ref{coro:sameTermVarFromEquivTerm}.

\item The loop body $S_1''$ of $S_1$ and $S_2''$ of $S_2$ terminate in the same way when started in state $m_{S_1}(\mathfrak{f}_{S_1}, \vals_{S_1})$ and $m_{S_2}(\mathfrak{f}_{S_2}, \vals_{S_2})$ with crash flags not set and in which value stores
    agree on the values of those in the termination deciding variables of $S_1''$ and $S_2''$:
    $((\forall x \in \text{TVar}(S_1'') \cup \text{TVar}(S_2'')\,:\, \vals_{S_1}(x) = \vals_{S_2}(x)) \wedge
                (\mathfrak{f}_{S_1} = \mathfrak{f}_{S_2} = 0)) =>
                (S_1'', m_{S_1}(\mathfrak{f}_{S_1}, \vals_{S_1})) \equiv_{H} (S_2'', m_{S_2}(\mathfrak{f}_{S_2}, \vals_{S_2}))$.

By the definition of program size, $\text{size}(S_1) = \text{size}(S_1'') + 1, \text{size}(S_2) = \text{size}(S_2'') + 1$. Then, $\text{size}(S_1'') + \text{size}(S_2'') < k$.
Then, by the hypothesis IH, the loop body $S_1''$ of $S_1$ and $S_2''$ of $S_2$ terminate in the same way when started in state $m_{S_1}(\vals_{S_1})$ and $m_{S_2}(\vals_{S_2})$ with crash flags not set and whose value stores agree on values of the termination deciding variables of $S_1''$ and $S_2''$.

\end{itemize}

Then, by Corollary~\ref{coro:loopTermInSameWay}, $S_1$ and $S_2$ terminate in the same way when started in the states $m_1(m_c^1,\vals_1)$ and $m_2(m_c^2,\vals_2)$ respectively. The theorem~\ref{thm:mainTermSameWayLocal} holds.
\end{enumerate}

\item $S_1$ and $S_2$ are not both one statement and one of the following holds:

\begin{enumerate}

\item $S_1 =S_1';s_1, S_2 = S_2';s_2$ and all of the following hold:

\begin{itemize}
\item $S_1' \equiv_{H}^S S_2'$;

\item $S_1'$ and $S_2'$ have equivalent computation of $\text{TVar}(s_1) \cup \text{TVar}(s_2)$;

\item
$s_1 \equiv_{H}^S s_2$ where $s_1$ and $s_2$ are not sequences of ``skip";
\end{itemize}

By the hypothesis IH, we show that $S_1'$ and $S_2'$ terminate in the same way when started in the states
$\state_1(\mathfrak{f}_1, \vals_1), \state_2(\mathfrak{f}_2, \vals_2)$ respectively,
$(S_1', \state_1(\mathfrak{f}_1, \vals_1)) \equiv_{H} (S_2', \state_2(\mathfrak{f}_2, \vals_2))$.
We need to show all required conditions are satisfied.
\begin{itemize}
\item Crash flags are not set, $\mathfrak{f}_1 = \mathfrak{f}_2 = 0$;

By assumption.

\item $\text{size}(S_1') + \text{size}(S_2') < k$.

By the definition, $\text{size}(s_1) \geq 1, \text{size}(s_2) \geq 1$.
Hence $\text{size}(S_1') + \text{size}(S_2') < k$.

\item Value stores $\vals_1$ and $\vals_2$ agree on values of the termination deciding variables of $S_1'$ and $S_2'$.

Besides, by the definition of loop/crash variables, $\text{LVar}(S_1') \subseteq \text{LVar}(S_1)$ and $\text{CVar}(S_1') \subseteq \text{CVar}(S_1)$. Hence, $\text{TVar}(S_1') \subseteq \text{TVar}(S_1)$. Similarly, $\text{TVar}(S_2') \subseteq \text{TVar}(S_2)$.
Then, value stores $\vals_1$ and $\vals_2$ agree on the values of those in the termination deciding variables of $S_1'$ and $S_2'$,
$\forall x \in \text{TVar}(S_1') \cup \text{TVar}(S_2')\,:\, \vals_1(x) = \vals_2(x)$.
\end{itemize}
Then, by the hypothesis IH, $S_1'$ and $S_2'$ terminate in the same way when started in the states $\state_1(\mathfrak{f}_1, \vals_1), \state_2(\mathfrak{f}_2, \vals_2)$ respectively, $(S_1', m_1(\mathfrak{f}_1, \vals_1)) \equiv_{H} (S_2', m_2(\mathfrak{f}_2, \vals_2))$.

\myNewLine

If the execution of $S_1'$ and $S_2'$ terminate when started in the states
$m_1(\mathfrak{f}_1, \vals_1)$ and $m_2(\mathfrak{f}_2, \vals_2)$ respectively, we show that $s_1$ and $s_2$ terminate in the same way.
We prove that $S_1'$ and $S_2'$ equivalently compute the termination deciding variables of $s_1$ and $s_2$ by Theorem~\ref{thm:equivCompMain}.
\begin{itemize}
\item Crash flags are not set, $\mathfrak{f}_1 = \mathfrak{f}_2 = 0$;

By definition of terminating execution of $S_1'$ and $S_2'$ when started in states $m_1$ and $m_2$ respectively.

\item The executions of $S_1'$ and $S_2'$ terminate when started in the states $m_1(\vals_1)$ and $m_2(\vals_2)$.

By assumption,
$(S_1', \state_1(\vals_1)) ->* (\text{skip}, \state_1'(\vals_1'))$ and
$(S_2', \state_2(\vals_2)) ->* (\text{skip}, \state_2'(\vals_2'))$.

\item $s_1$ and $s_2$ have same termination deciding variables.

By Corollary~\ref{coro:sameTermVarFromEquivTerm}, $s_1$ and $s_2$ have same termination deciding variables, $\text{TVar}(s_1)$ = $\text{TVar}(s_2) = \text{TVar}(s)$.

\item Value stores $\vals_1$ and $\vals_2$ agree on the values of variables in $\text{Imp}(S_1', \text{TVar}(s)) \cup  \allowbreak\text{Imp}(S_2', \text{TVar}(s))$.

By the definition of loop/crash variables,
$\text{Imp}(S_1', \text{LVar}(s_1)) \subseteq \text{LVar}(S_1)$ and
$\text{Imp}(S_1', \text{CVar}(s_1))\allowbreak \subseteq \text{CVar}(S_1)$.
Hence, by Lemma~\ref{lmm:impVarUnionLemma}, the imported variables in $S_1'$ relative to the termination deciding variables of $s_1$ is a subset of the termination deciding variables of $S_1$, $\text{Imp}(S_1', \text{TVar}(s))\allowbreak \subseteq \text{TVar}(S_1)$.
Similarly, $\text{Imp}(S_2', \text{TVar}(s))\allowbreak \subseteq \text{TVar}(S_2)$.
Thus, by assumption, the value stores $\vals_1$ and $\vals_2$ agree on the values of the variables in
$\text{Imp}(S_1', \text{TVar}(s)) \cup \text{Imp}(S_2', \text{TVar}(s))$.
\end{itemize}
By Theorem~\ref{thm:equivCompMain}, $\forall x \in \text{TVar}(s)\,:\, \vals_1'(x) = \vals_2'(x)$.

By Corollary~\ref{coro:termSeq},
$(S_1';s_1, \state_1(\vals_1)) ->* (s_1, \state_1'(\mathfrak{f}_1, \vals_1'))$ and
$(S_2';s_2, \state_2(\vals_2)) ->* (s_2, \state_2'(\mathfrak{f}_2, \vals_2'))$.
Then, by the hypothesis IH, we show that $s_1$ and $s_2$ terminate in the same way when started in the states $m_1'(\vals_1')$ and $m_2'(\vals_2')$. We show that all required conditions are satisfied.
$\text{size}(s_1) + \text{size}(s_2) < k$ because $\text{size}(S_1') \geq 1, \text{size}(S_2') \geq 1$ by the definition of program size.
If $s_1, s_2$ are loop statement, then, by the assumption of unique loop labels, $s_1 \notin S_1', s_2 \notin S_2'$.
Then, by Corollary~\ref{coro:loopCntRemainsSame}, the loop counter value of $s_1$ and $s_2$ is not redefined in the execution of $S_1'$ and $S_2'$ respectively.
By the hypothesis IH, $s_1$ and $s_2$ terminate in the same way when started in the states $\state_1'(\mathfrak{f}_1, \vals_1')$ and $\state_2'(\mathfrak{f}_2, \vals_2')$, $(s_1, \state_1'(\mathfrak{f}_1, \vals_1')) \equiv_{H} (s_2, \state_2'(\mathfrak{f}_2, \vals_2'))$. The theorem~\ref{thm:mainTermSameWayLocal} holds.

\item One last statement is ``skip":
w.l.o.g., $(s_2 =``\text{skip}") \wedge (S_1 \equiv_{H}^S S_2')$.

We show that $S_1$ and $S_2'$ terminate in the same way when started in the states $m_1(\vals_1)$ and $m_2(\vals_2)$ respectively by the hypothesis IH.
By the definition of crash/loop variables, $\text{CVar}(S_2') \subseteq \text{CVar}(S_2)$, $\text{LVar}(S_2') \subseteq \text{LVar}(S_2)$.
Then, by assumption, $\forall x \in \text{TVar}(S_2') \cup \text{TVar}(S_1)\,:\, \vals_1(x) = \vals_2(x)$.
Besides, size $(s_2) \geq 1$ by the definition of program size. Then size $(S_1) + \text{size }(S_2') \leq k$.
By the hypothesis IH, $S_1$ and $S_2'$ terminate in the same way when started in the states $\state_1(\mathfrak{f}_1, \vals_1), \state_2(\mathfrak{f}_2, \vals_2)$,
$(S_1, \state_1(\mathfrak{f}_1, \vals_1)) \equiv_{H} (S_2', \state_2(\mathfrak{f}_2, \vals_2))$.

When the execution of $S_1$ and $S_2'$ terminate when started in the states $m_1(\vals_1)$ and $m_2(\vals_2)$ respectively, $s_2$ terminates after the execution of $S_2'$ by the definition of terminating execution.

\item One last statement is a ``duplicate" statement such that one of the following holds:

W.l.o.g., $S_2 = S_2';s_2';S_2'';s_2$ and all of the following hold:
\begin{itemize}
\item $S_1 \equiv_{H}^S S_2';s_2';S_2''$;

\item $s_2' \equiv_{H}^S s_2$;

\item $\text{Def}(s_2';S_2'') \cap \text{TVar}(s_2) = \emptyset$;

\item $s_2 \neq ``\text{skip}"$;
\end{itemize}

We show that $S_1$ and $S_2';s_2';S_2''$ terminate in the same way when started in the states
$\state_1(\mathfrak{f}_1, \vals_1),$

\noindent$\state_2(\mathfrak{f}_2, \vals_2)$ respectively by the hypothesis IH. The proof is same as that in case a).

We show that $s_2$ terminates if the execution of $S_2';s_2';S_2''$ terminates.
We need to prove that $s_2'$ and $s_2$ start in the states agreeing on the values of variables in $\text{TVar}(s_2)$.
By assumption, $S_2';s_2';S_2''$ terminates,
$(S_2';s_2';S_2'', \state_2(\mathfrak{f}_2, \vals_2)) ->* (\text{skip}, \state_2'(\mathfrak{f}_2, \vals_2'))$.
Then, by Corollary~\ref{coro:termSeq},
$(S_2';s_2';S_2'';s_2, \state_2(\mathfrak{f}_2, \vals_2))\allowbreak ->* (s_2, \state_2'(\mathfrak{f}_2, \vals_2'))$.
In addition, the execution of $S_2'$ and $s_2'$ must terminate because the execution of $S_2';s_2';S_2''$ terminates,

\noindent$(S_2';s_2';S_2'';s_2, \state_2(\mathfrak{f}_2, \vals_2)) ->*
          (s_2';S_2'';s_2, \state_2''(\mathfrak{f}_2, \vals_2''))->*
          (s_2, \state_2'(\mathfrak{f}_2, \vals_2'))$.

By assumptin, $\text{Def}(s_2';S_2'') \cap \text{TVar}(s_2) = \emptyset$.
Then, by Corollary~\ref{coro:defExclusion}, the value store $\vals_2''$ and $\vals_2'$ agree on values of the termination deciding variables of $s_2$, $\forall x \in \text{TVar}(s_2)\,:\, \vals_2''(x) = \vals_2'(x)$.
By Corollary~\ref{coro:sameTermVarFromEquivTerm}, $\text{TVar}(s_2') = \text{TVar}(s_2)$.
Because the execution of $s_2'$ terminates, then the execution of $s_2$ terminates when started in the state
$\state_2'(\mathfrak{f}_2, \vals_2')$ by the hypothesis IH,
$(s_2, \state_2'(\mathfrak{f}_2, \vals_2')) ->* (\text{skip}, \state_2'')$.

In addition, we show that there is no input statement in $s_2$ by contradiction.
Suppose there is input statement in $s_2$. By Lemma~\ref{lmm:inputVarInStmtSeqWithInputStmt}, ${id}_I \in \text{CVar}(s_2)$.
Hence, the input sequence variable is in the termination deciding variables of $s_2$, ${id}_I \in \text{TVar}(s_2)$.
By Corollary~\ref{coro:sameTermVarFromEquivTerm}, $\text{TVar}(s_2) = \text{TVar}(s_2')$.
Then, there must be one input statement in $s_2'$. Otherwise, by Lemma~\ref{coro:inputVarNotInTVarWithoutInputStmt}, the input sequence variable is not in the termination deciding variables of $s_2'$. A contradiction against the result that
${id}_I \in \text{TVar}(s_2')$. Since there is one input statement in $s_2'$, by Lemma~\ref{lmm:inputVarInStmtSeqWithInputStmt}, ${id}_I \in \text{Def}(s_2')$. Thus, by defintion, ${id}_I \in \text{Def}(s_2';S_2'')$. Then, $\text{Def}(s_2';S_2'') \cap \text{TVar}(s_2) \neq \emptyset$. A contradiction.
Therefore, there is no input statement in $s_2$.

\item $S_1 = S_1';s_1;s_1';$ and $S_2 = S_2';s_2;s_2'$ where $s_1$ and $s_2$ are reordered and all of the following hold:
\begin{itemize}
\item ${S_1'} \equiv_{H}^S {S_2'}$;

\item $S_1'$ and $S_2'$ have equivalent computation of $\text{TVar}(s_1;s_1') \cup \text{TVar}(s_2;s_2')$.

\item ${s_1} \equiv_{H}^S {s_2'}$;

\item ${s_1'} \equiv_{H}^S {s_2}$;

\item $\text{Def}(s_1) \cap \text{TVar}(s_1') = \emptyset$;

\item $\text{Def}(s_2) \cap \text{TVar}(s_2') = \emptyset$;
\end{itemize}

The proof is to show that if $S_1$ terminates when started in the state $m_1$, the $S_2$ terminates when started in the state $m_2$, and vice versa.
Due to the symmetric conditions, it is suffice to show one direction that, w.l.o.g., $(S_1, \state_1) ->* (\text{skip}, \state_1') => (S_2, \state_2) ->* (\text{skip}, \state_2')$.


We show that the execution of $S_2'$ terminates by the hypothesis IH.
We need to show that all required conditions are satisfied.
\begin{itemize}
\item $\text{size}(S_1') + \text{size}(S_2') < k$.

This is because $\text{size}(s_1;s_1') >1, \text{size}(s_2;s_2') >1$.

\item
Initial value stores $\vals_1$ and $\vals_2$ agree on  values of the termination deciding variables of $S_1'$ and $S_2'$,
$\forall x \in \text{TVar}(S_1') \cup \text{TVar}(S_2')\,:\, \vals_1(x) = \vals_2(x)$.

We show that $\text{TVar}(S_1') \subseteq \text{TVar}(S_1)$.
In the following, we prove that $\text{CVar}(S_1') \subseteq \text{CVar}(S_1)$.
\begin{tabbing}
xxxxx\=xxxx\=\kill
\>\>                   $\text{CVar}(S_1')$\\
\>$\subseteq$\>        $\text{CVar}(S_1';s_1)$      by the defintion of $\text{CVar}(S_1';s_1)$ \\
\>$\subseteq$\>        $\text{CVar}(S_1';s_1;s_1')$ by the defintion of $\text{CVar}(S_1';s_1;s_1')$
\end{tabbing}
Similarly, $\text{LVar}(S_1') \subseteq \text{LVar}(S_1)$.
Hence, $\text{TVar}(S_1') \subseteq \text{TVar}(S_1)$.
Similarly, $\text{TVar}(S_2') \subseteq \text{TVar}(S_2)$.
By assumption, initial value stores $\vals_1$ and $\vals_2$ agree on  values of the termination deciding variables of $S_1'$ and $S_2'$.
\end{itemize}
By the hypothesis IH, $(S_1', m_1(\vals_1)) \equiv_{H} (S_2', m_2(\vals_2))$. Because the execution of $S_1$ terminates, then $S_1'$ terminates when started in the state $m_1(\vals_1)$,
$(S_1', m_1(\vals_1)) ->* (\text{skip}, m_1'(\vals_1'))$.
Therefore, $S_2'$ termiantes when started in the state $m_2(\vals_2)$, $(S_2', m_2(\vals_2)) ->* (\text{skip}, m_2'(\vals_2'))$.

We show that after the execution of $S_1'$ and $S_2'$, value stores agree on values of the termination deciding variables of $s_1;s_1'$ and $s_2;s_2'$, $\forall x \in \text{TVar}(s_1;s_1') \cup \text{TVar}(s_2;s_2'), \vals_1'(x) = \vals_2'(x)$.
We split the argument into two steps.
\begin{enumerate}
\item
We show that $\text{TVar}(s_1;s_1') = \text{TVar}(s_2;s_2')$.

By Corollary~\ref{coro:sameTermVarFromEquivTerm},
$\text{TVar}(s_1) = \text{TVar}(s_2')$ and $\text{TVar}(s_1') = \text{TVar}(s_2)$.
Then we show that $\text{TVar}(s_1;s_1') = \text{TVar}(s_1) \cup \text{TVar}(s_1')$.
To do that, we show that $\text{CVar}(s_1;s_1') = \text{CVar}(s_1) \cup \text{CVar}(s_1')$.
\begin{tabbing}
xx\=xx\=\kill
\>\>                   $\text{CVar}(s_1;s_1')$\\
\>=\>                  $\text{CVar}(s_1) \cup \text{Imp}(s_1, \text{CVar}(s_1'))$      by the defintion of $\text{CVar}(s_1;s_1')$ \\
\>=\>                  $\text{CVar}(s_1) \cup \text{CVar}(s_1')$ by $\text{Def}(s_1) \cap \text{TVar}(s_1') = \emptyset$ and\\
\>\>                   the defintion of $\text{Imp}(\cdot)$.
\end{tabbing}
Similarly, $\text{LVar}(s_1;s_1') = \text{LVar}(s_1) \cup \text{LVar}(s_1')$.
Thus, $\text{TVar}(s_1;s_1')\allowbreak = \text{TVar}(s_1) \cup \text{TVar}(s_1')$.
Similarly, $\text{TVar}(s_2;s_2') = \text{TVar}(s_2) \cup \text{TVar}(s_2')$.
In summary, $\text{TVar}(s_1;s_1') = \text{TVar}(s_2;s_2')$.

\item We show that $\text{Imp}(S_1', \text{TVar}(s_1;s_1'))\allowbreak \subseteq \text{TVar}(S_1)$ and

\noindent$\text{Imp}(S_2', \text{TVar}(s_2;s_2')) \subseteq \text{TVar}(S_2)$.

W.l.o.g, we show that $\text{Imp}(S_1', \text{TVar}(s_1;s_1')) \subseteq \text{TVar}(S_1)$.

Specifically, we show $\text{Imp}(S_1', \text{CVar}(s_1;s_1')) \subseteq \text{CVar}(S_1)$.
\begin{tabbing}
xx\=xx\=\kill
\>\>                   $\text{CVar}(s_1;s_1')$\\
\>=\>                  $\text{CVar}(s_1) \cup \text{Imp}(s_1, \text{CVar}(s_1'))$ (1)\\
\>\>                   by the defintion of $\text{CVar}(s_1;s_1')$ \\
\end{tabbing}

\begin{tabbing}
xx\=xx\=\kill
\>\>                   $\text{Imp}(S_1', \text{CVar}(s_1;s_1'))$\\
\>=\>                  $\text{Imp}(S_1', \text{CVar}(s_1) \cup \text{Imp}(s_1, \text{CVar}(s_1')))$ by (1)\\
\>=\>                  $\text{Imp}(S_1', \text{CVar}(s_1)) \cup \text{Imp}(S_1', \text{Imp}(s_1, \text{CVar}(s_1')))$ (2)\\
\>\>                   by Lemma~\ref{lmm:impVarUnionLemma}\\
\end{tabbing}

\begin{tabbing}
xx\=xx\=\kill
\>\>                   $\text{Imp}(S_1', \text{CVar}(s_1))$\\
\>$\subseteq$\>        $\text{CVar}(S_1';s_1)$  by the defintion of $\text{CVar}(\cdot)$\\
\>$\subseteq$\>        $\text{CVar}(S_1';s_1;s_1')$ by the defintion of $\text{CVar}(\cdot)$\\
\\
\>\>                  $\text{Imp}(S_1', \text{Imp}(s_1, \text{CVar}(s_1')))$ \\
\>=\>                 $\text{Imp}(S_1';s_1, \text{CVar}(s_1'))$ by Lemma~\ref{lmm:ImpPrefixLemma}\\
\>$\subseteq$\>       $\text{CVar}(S_1';s_1;s_1')$ by the defintion of $\text{CVar}(\cdot)$.\\
\end{tabbing}

\begin{tabbing}
xx\=xx\=\kill
\>\>                   $\text{Imp}(S_1', \text{CVar}(s_1)) \cup \text{Imp}(S_1', \text{Imp}(s_1, \text{CVar}(s_1')))$\\
\>$\subseteq$\>        $\text{CVar}(S_1';s_1;s_1')$ by (3) and (4).
\end{tabbing}
In conclusion, $\text{Imp}(S_1', \text{CVar}(s_1;s_1')) \subseteq \text{CVar}(S_1)$.
Similarly, $\text{Imp}(S_1', \text{LVar}(s_1;s_1')) \subseteq \text{LVar}(S_1)$.
Thus, $\text{Imp}(S_1', \text{TVar}(s_1;s_1')) \subseteq \text{TVar}(S_1)$.
Similarly,

\noindent$\text{Imp}(S_2', \text{TVar}(s_2;s_2')) \subseteq \text{TVar}(S_2)$.
\end{enumerate}
Then, by Theorem~\ref{thm:equivCompMain}, after terminating execution of $S_1'$ and $S_2'$, value stores $\vals_1'$ and $\vals_2'$ agree on values of the termination deciding variables of $s_1;s_1'$ and $s_2;s_2'$,

\noindent$\forall x \in \text{TVar}(s_1;s_1') \cup \text{TVar}(s_2;s_2')\,:\, \vals_1'(x) = \vals_2'(x)$.

We show that the execution of $s_2$ terminates by the hypothesis IH.
By Corollary~\ref{coro:termSeq},

\noindent$(S_1';s_1;s_1', m_1(\vals_1)) ->* (s_1;s_1', m_1'(\vals_1'))$ and
$(S_2';s_2;s_2', \break m_2(\vals_2)) ->* (s_2;s_2', m_2'(\vals_2'))$.
By assumption, $S_1$ terminates, then $s_1$ terminates,
$(s_1, m_1'(\vals_1')) ->* (\text{skip}, m_1''(\vals_1''))$.
Because $s_1' \equiv_{H}^S s_2$, to apply the induction hypothesis, we need to show that all required conditions hold.
\begin{itemize}
\item $\text{size}(s_2) + \text{size}(s_1') < k$.

By definition, $\text{size}(S_2') > 1,  \text{size}(S_1') > 1, \text{size}(s_1) > 1,  \text{size}(s_2') > 1$.

\item
Value stores $\vals_1''$ and $\vals_2'$ agree on values of the termination deciding variables of $s_1'$ and $s_2$.
$\forall x \in \text{TVar}(s_1') \cup \text{TVar}(s_2)\,:\, \vals_1''(x) = \vals_2'(x)$.

By Corollary~\ref{coro:sameTermVarFromEquivTerm}, $\text{TVar}(s_1') = \text{TVar}(s_2)$.
Because of the condition $\text{Def}(s_1) \cap \text{TVar}(s_1') = \emptyset$,
by Corollary~\ref{coro:defExclusion}, value stores $\vals_1''$ and $\vals_1'$ agree on values of the termination deciding variables of $s_1'$, $\forall x \in \text{TVar}(s_1')\,:\, \vals_1''(x) = \vals_1'(x)$.
By the argument above, $\forall x \in \text{TVar}(s_2)\,:\, \vals_1'(x) = \vals_2'(x)$.
Thus, the condition holds.
\end{itemize}
By the induction hypothesis IH, $(s_1', m_1''(\vals_1'')) \equiv_{H} (s_2, m_2'(\vals_2'))$.
Because the execution of $s_1'$ terminates, then the exeuction of $s_2$ terminates when started in the state $m_2'(\vals_2')$,
$(s_2, m_2'(\vals_2')) ->* (\text{skip}, m_2''(\vals_2''))$.

We show that the execution of $s_2'$ terminates.
This is by the similar argument that $s_2$ terminates.

In conclusion, $S_2$ terminates when started in the state $m_2(\vals_2)$. The theorem holds.

In addition, we show that it is impossible that $s_1$ and $s_1'$ both include input statements  by contradiction.
Suppose there are input statements in both $s_1$ and $s_1'$.
By Lemma~\ref{lmm:inputVarInStmtSeqWithInputStmt}, ${id}_I \in \text{Def}(s_1) \cap \text{TVar}(s_1')$. A contradiction against the condition that $\text{Def}(s_1) \cap \text{TVar}(s_1') = \emptyset$.
Similarly, there are no input statements in both $s_2$ and $s_2'$.
\end{enumerate} 
\end{enumerate}
\end{proof}

\subsubsection{Supporting lemmas for the soundness proof of termination in the same way}

The supporting lemmas include various properties of {TVar}$(S)$, two statement sequences satisfying the proof rule of termination in the same way consume the same number of input values when both terminate,  and the proof for the case of while statement of theorem~\ref{thm:mainTermSameWayLocal}.

\paragraph{The properties of the termination deciding variables}
\begin{lemma}\label{lmm:prefCVarSubsetOfCVar}
The crash variables of $S_1;S_1'$ is same as the union of the crash variables of $S_1$ and the imported variables in $S_1$  relative to the crash variables of $S_1'$,
$\text{CVar}(S_1;S_1')$ = $\text{CVar}(S_1) \cup \text{Imp}(S_1, \text{CVar }(S_1'))$.
\end{lemma}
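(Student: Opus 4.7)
The plan is to proceed by induction on the number of statements in $S_1'$, using clause~9 of Definition~\ref{def:CVar} (the recursive clause that appends one statement at a time) together with the two previously-established lemmas about imported variables: the prefix law $\text{Imp}(S;S',X) = \text{Imp}(S,\text{Imp}(S',X))$ (Lemma~\ref{lmm:ImpPrefixLemma}) and the union law $\text{Imp}(S, X\cup Y) = \text{Imp}(S,X)\cup\text{Imp}(S,Y)$ (Lemma~\ref{lmm:impVarUnionLemma}).

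For the base case, $S_1'$ consists of a single statement $s$. Then $\text{CVar}(S_1;s) = \text{CVar}(S_1) \cup \text{Imp}(S_1,\text{CVar}(s))$ holds directly by clause~9 of Definition~\ref{def:CVar}, and $\text{CVar}(s) = \text{CVar}(S_1')$.

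For the inductive step, write $S_1' = S_1'';s_{k+1}$ where $S_1''$ has one fewer statement than $S_1'$. Applying clause~9 to $S_1;S_1'' ;s_{k+1}$ gives
\[
\text{CVar}(S_1;S_1') = \text{CVar}(S_1;S_1'') \cup \text{Imp}(S_1;S_1'', \text{CVar}(s_{k+1})).
\]
The induction hypothesis rewrites the first summand as $\text{CVar}(S_1)\cup\text{Imp}(S_1,\text{CVar}(S_1''))$, and Lemma~\ref{lmm:ImpPrefixLemma} rewrites the second summand as $\text{Imp}(S_1,\text{Imp}(S_1'',\text{CVar}(s_{k+1})))$. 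Lemma~\ref{lmm:impVarUnionLemma} then collapses the two $\text{Imp}(S_1,\cdot)$ terms into a single $\text{Imp}(S_1, \text{CVar}(S_1'')\cup\text{Imp}(S_1'',\text{CVar}(s_{k+1})))$, and clause~9 applied in reverse identifies the inner union with $\text{CVar}(S_1';s_{k+1}) = \text{CVar}(S_1')$, yielding the desired equation.

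No step looks like a genuine obstacle: the proof is essentially a bookkeeping exercise once the prefix and union laws for $\text{Imp}$ are available. The only point that requires mild care is making sure the base case is stated so that clause~9 applies (i.e.\ writing $S_1;s$ rather than starting from a degenerate $S_1' = \text{skip}$), and keeping track of which lemma does which rewrite so that the final simplification aligns correctly with the recursive definition of $\text{CVar}$ on sequences.
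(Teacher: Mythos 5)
Your proof is correct and follows essentially the same route as the paper's: induction on the length of $S_1'$, using clause~9 of the definition of $\text{CVar}$ together with Lemma~\ref{lmm:ImpPrefixLemma} and Lemma~\ref{lmm:impVarUnionLemma} to merge the two $\text{Imp}(S_1,\cdot)$ terms. The only (cosmetic) difference is that you peel the \emph{last} statement of $S_1'$, which lines up directly with the recursive clause of Definition~\ref{def:CVar}, whereas the paper peels the first statement and then has to reassemble $\text{CVar}(s_1;\ldots;s_{k+1})$ at the end; your ordering is, if anything, slightly cleaner.
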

\begin{proof}
Let $S_1' = s_1;...;s_k$ for $k>0$.
We show the lemma by induction on $k$.

\noindent{\bf Base case}.

By the definition of $\text{CVar}(S)$, the lemma holds.

\noindent{\bf Induction step}.

The hypothesis IH is that $\text{CVar}(S_1;s_1;...;s_k)$ = $\text{CVar}(S_1) \cup \text{Imp}(S_1, \text{CVar}(s_1;...;s_k))$ for $k\geq1$.

Then we show that the lemma holds when $S_1' = s_1;...;s_{k+1}$.

\begin{tabbing}
xx\=xx\=\kill
\>\>     $\text{CVar}(S_1;s_1;...;s_{k+1})$\\
\>=\>    $\text{CVar}(S_1;s_1) \cup \text{Imp}(S_1;s_1, \text{CVar}(s_2;...;s_{k+1}))$ by IH\\
\\
\>\>     $\text{CVar}(S_1;s_1) $\\
\>=\>    $\text{CVar}(S_1) \cup \text{Imp}(S_1, \text{CVar}(s_1))$ (1) \\
\\
\>\>     $\text{Imp}(S_1;s_1, \text{CVar }(s_2;...;s_{k+1}))$\\
\>=\>    $\text{Imp}(S_1, \text{Imp}(s_1; \text{CVar }(s_2;...;s_{k+1})))$ (2)
\end{tabbing}

Combining (1) and (2), we have
\begin{tabbing}
xx\=xx\=\kill
\>\>     $\text{CVar}(S_1;s_1) \cup \text{Imp}(S_1;s_1, \text{CVar }(s_2;...;s_{k+1}))$ \\
\>=\>    $\text{CVar}(S_1) \cup \text{Imp }(S_1, \text{CVar }(s_1))$ \\
\>\>        $\cup \text{ Imp }(S_1, \text{Imp}(s_1; \text{CVar }(s_2;...;s_{k+1})))$ \\

\>=\>    $\text{CVar}(S_1) \cup \text{Imp}(S_1, \text{CVar}(s_1) \cup \text{Imp}(s_1; \text{CVar}(s_2;...;s_{k+1})))$\\
\>\>     by Lemma~\ref{lmm:impVarUnionLemma}           \\
\>=\>    $\text{CVar}(S_1) \cup \text{Imp}(S_1, \text{CVar}(s_1;...;s_{k+1}))$.
\end{tabbing}
\end{proof}

\begin{lemma}\label{lmm:prefLVarSubsetOfLVar}
The loop deciding variables of $S_1;S_1'$  is same as the union of the loop deciding variables of $S_1$ and the imported variables in $S_1$  relative to the loop deciding variables of $S_1'$,
$\text{LVar}(S_1;S_1')$ = $\text{LVar}(S_1) \cup \text{Imp}(S_1, \text{LVar }(S_1'))$.
\end{lemma}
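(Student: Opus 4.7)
The plan is to mimic exactly the proof of Lemma~\ref{lmm:prefCVarSubsetOfCVar}, replacing $\text{CVar}$ by $\text{LVar}$ throughout, since both definitions share the same inductive clause for sequential composition (item 4 of Definition~\ref{def:LVar} has the same shape as item 9 of Definition~\ref{def:CVar}). Write $S_1' = s_1;\dots;s_k$ with $k \ge 1$ and proceed by induction on $k$.

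For the base case $k=1$, the statement is immediate from clause~4 of Definition~\ref{def:LVar}: $\text{LVar}(S_1;s_1) = \text{LVar}(S_1) \cup \text{Imp}(S_1,\text{LVar}(s_1))$, which is exactly the claim for $S_1' = s_1$.

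For the induction step, assume $\text{LVar}(S_1;s_1;\dots;s_k) = \text{LVar}(S_1) \cup \text{Imp}(S_1,\text{LVar}(s_1;\dots;s_k))$ and consider $S_1' = s_1;\dots;s_{k+1}$. Apply the induction hypothesis with the prefix $S_1;s_1$ in place of $S_1$ to get $\text{LVar}(S_1;s_1;\dots;s_{k+1}) = \text{LVar}(S_1;s_1) \cup \text{Imp}(S_1;s_1,\text{LVar}(s_2;\dots;s_{k+1}))$. Unfold $\text{LVar}(S_1;s_1) = \text{LVar}(S_1) \cup \text{Imp}(S_1,\text{LVar}(s_1))$ by Definition~\ref{def:LVar}, and rewrite $\text{Imp}(S_1;s_1,\text{LVar}(s_2;\dots;s_{k+1})) = \text{Imp}(S_1,\text{Imp}(s_1,\text{LVar}(s_2;\dots;s_{k+1})))$ by Lemma~\ref{lmm:ImpPrefixLemma}. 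Combining these two $\text{Imp}(S_1,\cdot)$ terms with Lemma~\ref{lmm:impVarUnionLemma} yields $\text{Imp}(S_1,\text{LVar}(s_1) \cup \text{Imp}(s_1,\text{LVar}(s_2;\dots;s_{k+1})))$, and the inner union collapses to $\text{LVar}(s_1;\dots;s_{k+1})$ by one more application of clause~4 of Definition~\ref{def:LVar}.

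There is no real obstacle here: the argument is purely algebraic manipulation of the $\text{LVar}$ and $\text{Imp}$ definitions, and every required identity is already established (Lemmas~\ref{lmm:ImpPrefixLemma} and~\ref{lmm:impVarUnionLemma}). The only minor care needed is to make sure the induction is driven by the length of the suffix $S_1'$ (not $S_1$), so that the prefix that plays the role of ``$S_1$'' in the inductive instance, namely $S_1;s_1$, is itself a legal statement sequence on which the definitions apply verbatim.
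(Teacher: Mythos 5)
Your proof is correct and takes exactly the paper's route: the paper disposes of this lemma with the single line ``by proof similar to that of Lemma~\ref{lmm:prefCVarSubsetOfCVar}'', and your argument is precisely that $\text{CVar}$ proof transliterated to $\text{LVar}$ (induction on the length of the suffix, then Lemma~\ref{lmm:ImpPrefixLemma} and Lemma~\ref{lmm:impVarUnionLemma} to merge the two $\text{Imp}(S_1,\cdot)$ terms). The only quibble is that the final collapse of $\text{LVar}(s_1)\cup\text{Imp}(s_1,\text{LVar}(s_2;\dots;s_{k+1}))$ into $\text{LVar}(s_1;\dots;s_{k+1})$ is really another instance of the induction hypothesis (with prefix $s_1$) rather than of clause~4 of Definition~\ref{def:LVar}, which peels off the \emph{last} statement — but the paper's own $\text{CVar}$ proof makes the same silent step.
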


By proof of Lemma~\ref{lmm:prefLVarSubsetOfLVar} similar to that of lemma~\ref{lmm:prefCVarSubsetOfCVar} above.

\begin{lemma}\label{lmm:equivLVarFromEquivTerm}
If two statement sequences $S_1$ and $S_2$ satisfy the proof rule of termination in the same way, then $S_1$ and $S_2$ have same loop variables, $(S_1 \equiv_{H}^S S_2) => (\text{LVar}(S_1) = \text{LVar}(S_2))$.
\end{lemma}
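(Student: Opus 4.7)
The plan is to proceed by induction on $\text{size}(S_1) + \text{size}(S_2)$ and do a case analysis on the clause of Definition~\ref{def:lCondTermInSameWay} that witnesses $S_1 \equiv_{H}^S S_2$, mirroring the structure of the earlier proof of Lemma~\ref{lmm:sameImpfromEquivCompCond}. The base case has $S_1$ and $S_2$ both simple statements; since a simple statement contains no \text{while} substatement, clauses 1--3 of Definition~\ref{def:LVar} force $\text{LVar}(s_1) = \text{LVar}(s_2) = \emptyset$, regardless of which subcase of $s_1 \equiv_H^s s_2$ applies.

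For the inductive step I would handle the ``one statement'' subcases directly. If both sides are If statements with the same guard $e$, either both pairs of branches are sequences of \text{skip} (so $\text{LVar} = \emptyset$ on both sides) or the branches are related pointwise by $\equiv_H^S$; in the latter situation the IH gives $\text{LVar}(S_1^t)=\text{LVar}(S_2^t)$ and $\text{LVar}(S_1^f)=\text{LVar}(S_2^f)$, and Definition~\ref{def:LVar} yields the result since the $\text{Use}(e)$ contribution is common. If both sides are \text{while} statements with the same guard, the IH gives $\text{LVar}(S_1'') = \text{LVar}(S_2'')$, so the argument reduces to showing $\text{Imp}(S_1'', \text{Use}(e) \cup \text{LVar}(S_1'')) = \text{Imp}(S_2'', \text{Use}(e) \cup \text{LVar}(S_2''))$. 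For this I would invoke the equivalent-computation assumption on $\text{TVar}(S_1) \cup \text{TVar}(S_2) \supseteq \text{Use}(e) \cup \text{LVar}(S_1'')$ together with Lemma~\ref{lmm:sameImpfromEquivCompCond} (which ensures the two $\text{Imp}$ sets coincide when the bodies satisfy $\equiv_x^S$ for each $x$ in the target set) and Lemma~\ref{lmm:impVarUnionLemma} to take the union.

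For the sequence subcases I would write $S_1 = S_1';s_1$, $S_2 = S_2';s_2$ and expand via Lemma~\ref{lmm:prefLVarSubsetOfLVar} to
\[
\text{LVar}(S_1) = \text{LVar}(S_1') \cup \text{Imp}(S_1', \text{LVar}(s_1)),
\]
and similarly for $S_2$. The IH applied to $S_1' \equiv_H^S S_2'$ and to $s_1 \equiv_H^S s_2$ produces $\text{LVar}(S_1') = \text{LVar}(S_2')$ and $\text{LVar}(s_1) = \text{LVar}(s_2)$; then the equivalent-computation hypothesis for $\text{TVar}(s_1) \cup \text{TVar}(s_2) \supseteq \text{LVar}(s_1)$, combined with Lemma~\ref{lmm:sameImpfromEquivCompCond}, gives the equality of the two $\text{Imp}$ sets. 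The ``trailing \text{skip}'' case is immediate since $\text{LVar}$ ignores a skip. For the duplication subcase, the hypothesis $\text{Def}(s_2';S_2'') \cap \text{TVar}(s_2) = \emptyset$ forces the imported-variable closure over the duplicate not to introduce new elements, so the LVar computed for $S_2';s_2';S_2''$ already contains everything needed for $S_2';s_2';S_2'';s_2$, matching $\text{LVar}(S_1)$ via the IH.

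The hardest case I anticipate is the reordering clause, where $S_1 = S_1';s_1;s_1'$ and $S_2 = S_2';s_2;s_2'$ with $s_1 \equiv_H^S s_2'$ and $s_1' \equiv_H^S s_2$ plus the two disjointness conditions $\text{Def}(s_i) \cap \text{TVar}(s_i') = \emptyset$. I plan to first use these disjointness conditions, together with the definition of $\text{Imp}$ (clause 1: if the def set is disjoint from $X$, then $\text{Imp}(S, X) = X$), to collapse
\[
\text{LVar}(s_1;s_1') = \text{LVar}(s_1) \cup \text{LVar}(s_1'), \qquad \text{LVar}(s_2;s_2') = \text{LVar}(s_2) \cup \text{LVar}(s_2'),
\]
by the same calculation that appears in the termination proof (case e) of Theorem~\ref{thm:mainTermSameWayLocal}. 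The IH on the pairwise $\equiv_H^S$ relations then makes these two unions equal. The remaining equality $\text{Imp}(S_1', \text{LVar}(s_1;s_1')) = \text{Imp}(S_2', \text{LVar}(s_2;s_2'))$ follows from the equivalent-computation clause supplied by the reordering rule and another application of Lemma~\ref{lmm:sameImpfromEquivCompCond} together with Lemma~\ref{lmm:impVarUnionLemma}. The main bookkeeping obstacle is tracking that every variable whose $\text{Imp}$-closure we need appears in the $\text{TVar}$ set the rule already guarantees equivalent computation for; once that inclusion is confirmed, the rest is an algebraic manipulation of $\text{Imp}$ via the union lemma.
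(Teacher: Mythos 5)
Your proposal is correct and follows essentially the same route as the paper: induction on $\text{size}(S_1)+\text{size}(S_2)$ with a case analysis on the clause of $\equiv_H^S$, using $\text{LVar}(\text{simple})=\emptyset$ for the base case, the IH plus Lemma~\ref{lmm:sameImpfromEquivCompCond} and Lemma~\ref{lmm:impVarUnionLemma} (justified by $\text{LVar}\subseteq\text{TVar}$ and the equivalent-computation side conditions) for the compound and sequence cases, and the $\text{Def}/\text{TVar}$ disjointness conditions to collapse the $\text{Imp}$-closures in the duplication and reordering cases. The only presentational difference is that the paper makes the unrolling of $\text{Imp}$ over loop iterations explicit via an inner induction on $i$ in the while case, which your appeal to the same two lemmas covers implicitly.
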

\begin{proof}
By induction on $\text{size}(S_1) + \text{size}(S_2)$, the sum of the program size of $S_1$ and $S_2$.

\noindent{\bf Base case}.

$S_1$ and $S_2$ are simple statement. There are three base cases according to the definition of  $s_1 \equiv_{H}^S s_2$.

\begin{enumerate}
\item two same simple statements, $S_1 = S_2$;

\item $S_1$ and $S_2$ are input statement with same type variable: $S_1=``\text{input }{id}_1", S_2 = ``\text{input }{id}_2"$
  where $(\Gamma_{S_1} \vdash {id}_1 : t) \wedge (\Gamma_{S_2} \vdash {id}_2 : t)$;.

\item
$S_1 = ``\text{output} \, e" \, \text{or} \, ``{id}_1 := e",
 S_2 = ``\text{output} \, e" \, \text{or} \, ``{id}_2 := e"$ where both of the following hold:

       \begin{itemize}
       \item There is no possible value mismatch in $``{id}_1 := e"$,
       $\neg(\Gamma_{S_1} \vdash {id}_1 : \text{Int}) \vee
        \neg(\Gamma_{S_1} \vdash e : \text{Long}) \vee
            (\Gamma_{S_1} \vdash e : \text{Int})$.

       \item There is no possible value mismatch in $``{id}_2 := e"$,
       $\neg(\Gamma_{S_2} \vdash {id}_2 : \text{Int}) \vee
        \neg(\Gamma_{S_2} \vdash e : \text{Long}) \vee
            (\Gamma_{S_2} \vdash e : \text{Int})$.
       \end{itemize}
\end{enumerate}
By the definition of loop variables, $\text{LVar}(S_1) = \text{LVar}(S_2) = \emptyset$ in both base cases.
Therefore, Lemma~\ref{lmm:equivLVarFromEquivTerm} holds.

\noindent{\bf Induction Step}.

The hypothesis IH is that Lemma~\ref{lmm:equivLVarFromEquivTerm} holds when $\text{size}(S_1) + \text{ size}(S_2) = k \geq 2$.

We show that Lemma~\ref{lmm:equivLVarFromEquivTerm} holds when $\text{size}(S_1) + \text{ size}(S_2) = k + 1$.

The proof is a case analysis according to the cases in the definition of $(S_1 \equiv_{H}^S S_2)$:
\begin{enumerate}
\item $S_1$ and $S_2$ are one statement and one of the following holds.

\begin{enumerate}
\item $S_1$ = ``If($e$) then \{$S_1^t$\} else \{$S_1^f$\}",
      $S_2$ = ``If($e$) then \{$S_2^t$\} else \{$S_2^f$\}" such that one of the following holds:

\begin{enumerate}
\item $S_1^t, S_1^f, S_2^t, S_2^f$ are all sequences of ``skip";

By the definition of loop variables, $\text{LVar}(S_1^t) = \text{LVar}(S_1^f) = \text{LVar}(S_2^t) = \text{LVar}(S_2^f) = \emptyset$.
Therefore, by the definition of loop variables, $\text{LVar}(S_1) = \text{LVar}(S_2) = \emptyset$. The lemma~\ref{lmm:equivLVarFromEquivTerm} holds.

\item At least one of $S_1^t, S_1^f, S_2^t, S_2^f$ is not a sequence of ``skip" such that:
      $(S_1^t \equiv_{H}^S S_2^t) \wedge (S_1^f \equiv_{H}^S S_2^f)$;

$\text{size}(S_1^t) + \text{size}(S_2^t) < k, \text{size}(S_1^f) + \text{size}(S_2^f) < k$.

Then, by the hypothesis IH1, $\text{LVar}(S_1^t) = \text{LVar}(S_2^t), \text{LVar}(S_1^f) = \text{LVar}(S_2^f)$.
Consequently, $(\text{LVar}(S_1^t) \cup \text{LVar}(S_1^f)) = (\text{LVar}(S_2^t) \cup \text{LVar}(S_2^f)) = \text{LVar}(\Delta)$.
When $\text{LVar}(\Delta) = \emptyset$, then $\text{LVar}(S_1) = \text{LVar}(S_2) = \emptyset$ by the definition of loop variables.
When $\text{LVar}(\Delta) \neq \emptyset$, then $\text{LVar}(S_1) = \text{LVar}(S_2) = \text{LVar}(\Delta) \cup \text{Use}(e)$ by the definition of loop variables.      The lemma~\ref{lmm:equivLVarFromEquivTerm} holds.
\end{enumerate}

\item $S_1$ = ``$\text{while}_{\langle n_1\rangle} (e) \{S_1''\}$",
      $S_2$ = ``$\text{while}_{\langle n_2\rangle} (e) \{S_2''\}$" such that both of the following hold:

\begin{itemize}
\item $S_1'' \equiv_{H}^S S_2''$;

\item $S_1''$ and $S_2''$ have equivalent computation of $\text{TVar}(S_1) \cup \text{TVar}(S_2)$;
\end{itemize}

By the hypothesis IH1, $\text{LVar}(S_1'') = \text{LVar}(S_2'')$. Then $\text{Use}(e) \cup \text{LVar}(S_1'')  = \text{Use}(e) \cup \text{LVar}(S_2'')$.
Then, we show that:

$\forall i \geq0, \text{Imp}({S_1''}^i, \text{Use}(e) \cup \text{LVar}(S_1'')) = \text{Imp}({S_2''}^i, \text{Use}(e) \cup \text{LVar}(S_2''))$ by induction on $i$.

\noindent{\bf Base case}

By our notation $S^0$, ${S_1''}^0 = \text{skip}, {S_2''}^0 = \text{skip}$.
Then, by the definition of imported variables,

$\text{Imp}({S_1''}^0, \text{Use}(e) \cup \text{LVar}(S_1'')) = \text{Use}(e) \cup \text{LVar}(S_1'')$,

$\text{Imp}({S_2''}^0, \text{Use}(e) \cup \text{LVar}(S_2'')) = \text{Use}(e) \cup \text{LVar}(S_2'')$.

Then, $\text{Imp}({S_1''}^0, \text{Use}(e) \cup \text{LVar}(S_1'')) = \text{Imp}({S_2''}^0, \text{Use}(e) \cup \text{LVar}(S_2''))$.

\noindent{\bf Induction step}

The hypothesis IH3 is that, $\forall i \geq0, \text{Imp}({S_1''}^i, \text{Use}(e) \cup \text{LVar}(S_1'')) = \text{Imp}({S_2''}^i, \text{Use}(e) \cup \text{LVar}(S_2''))$.

Then we show that $\text{Imp}({S_1''}^{i+1}, \text{Use}(e) \cup \text{LVar}(S_1'')) = \text{Imp}({S_2''}^{i+1}, \text{Use}(e) \cup \text{LVar}(S_2''))$.

By Corollary~\ref{lmm:dupStmtPrefixLemma},

$\text{Imp}({S_1''}^{i+1}, \text{Use}(e) \cup \text{LVar}(S_1'')) = \text{Imp}(S_1'', \text{Imp}({S_1''}^{i}, \text{Use}(e) \cup \text{LVar}(S_1'')))$,

$\text{Imp}({S_2''}^{i+1}, \text{Use}(e) \cup \text{LVar}(S_2'')) = \allowbreak\text{Imp}(S_2'', \text{Imp}({S_2''}^{i}, \text{Use}(e) \cup \text{LVar}(S_2'')))$.

By the hypothesis IH3, $\text{Imp}({S_1''}^{i}, \text{Use}(e) \cup \text{LVar}(S_1'')) = \text{Imp}({S_2''}^{i}, \text{Use}(e) \cup \text{LVar}(S_2'')) = \text{LVar}(\Delta)$.

Besides, by the definition of loop variables, $\text{LVar}(\Delta) \subseteq \text{LVar}(S_1), \text{LVar}(\Delta) \subseteq \text{LVar}(S_2)$.

Then,
\begin{tabbing}
xx\=xx\=\kill
\>\>     $\text{Imp}(S_1'', \text{Imp}({S_1''}^{i}, \text{Use}(e) \cup \text{LVar}(S_1'')))$ \\
\>= \>   $\text{Imp}(S_1'', \text{LVar}(\Delta))$ \\
\>=\>    $\text{Imp}(S_2'', \text{LVar}(\Delta))$  by Lemma~\ref{lmm:sameImpfromEquivCompCond}\\
\>=\>    $\text{Imp}({S_2''}, \text{Imp}({S_2''}^{i}, \text{Use}(e) \cup \text{LVar}(S_2'')))$
\end{tabbing}

In summary, $\text{LVar}(S_1)) = \text{LVar}(S_2)$. The lemma~\ref{lmm:equivLVarFromEquivTerm} holds.
\end{enumerate}

\item $S_1$ and $S_2$ are not both one statement and one of the following holds:

\begin{enumerate}
\item $S_1=S_1';s_1$ and $S_2 = S_2';s_2$ such that all of the following hold:

\begin{itemize}
\item $S_1' \equiv_{H}^S S_2'$;

\item $S_1'$ and $S_2'$ have equivalent computation of $\text{TVar}(s_1) \cup \text{ TVar}(s_2)$;

\item
$s_1 \equiv_{H}^S s_2$ where $s_1$ and $s_2$ are not ``skip";
\end{itemize}

By the hypothesis IH1, $\text{LVar}(S_1') = \text{LVar}(S_2')$, $\text{LVar}(s_1) = \text{LVar}(s_2) = \text{LVar}(\Delta)$.
Besides,

\noindent$\text{Imp}(S_1', \text{LVar}(s_1)) = \text{LVar}(S_2', \text{LVar}(s_2))$ by Lemma~\ref{lmm:sameImpfromEquivCompCond}.
Therefore, $\text{LVar}(S_1) = \text{LVar}(S_2)$ by the definition of loop variables. The lemma~\ref{lmm:equivLVarFromEquivTerm} holds.

\item One last statement is ``skip":
w.l.o.g.,

      \noindent$\big((S_1 \equiv_{H}^S S_2') \, \wedge \,
      (s_2 = ``\text{skip}")\big)$.

By the hypothesis IH1, $\text{LVar}(S_1) = \text{LVar}(S_2')$ .
Besides, $\text{LVar}(s_2) = \emptyset$ by the definition of loop variables.
Therefore, $\text{LVar}(S_1) = \text{LVar}(S_2) = \text{LVar}(S_2') \cup \text{Imp}(S_2', \emptyset)$.
The lemma~\ref{lmm:equivLVarFromEquivTerm} holds.

\item  One last statement is a ``duplicate" statement such that
 one of the following holds:

W.l.o.g.
$S_1 = S_1';s_1';S_1'';s_1$ and all of the following hold:
\begin{itemize}
\item $S_1';s_1';S_1'' \equiv_{H}^S S_2$;

\item $s_1' \equiv_{H}^S s_1$;

\item $\text{Def}(s_1';S_1'') \cap \text{TVar}(s_1) = \emptyset$;

\item $s_2 \neq ``\text{skip}"$;
\end{itemize}

By the hypothesis IH, $\text{LVar}(S_1';s_1';S_1'') = \text{LVar}(S_2)$.

Then, we show that $\text{LVar}(S_1';s_1';S_1'') = \text{LVar}(S_1';s_1';S_1'';s_1)$.
By the induction hypothesis IH, $\text{LVar}(s_1') = \text{LVar}(s_1)$.

\begin{tabbing}
xx\=xx\=\kill
\>\>               $\text{LVar}(S_1';s_1';S_1'';s_1)$ \\
\>= \>             $\text{LVar}(S_1';s_1';S_1'') \cup \text{Imp}(S_1';s_1';S_1'', \text{LVar}(s_1))$\\
\>\>               by the definition of loop variables\\
\\
\>\>              $\text{Imp}(S_1';s_1';S_1'', \text{LVar}(s_1))$  \\
\>=\>             $\text{Imp}(S_1', \text{Imp}(s_1';S_1'', \text{LVar}(s_1)))$ by Lemma~\ref{lmm:ImpPrefixLemma}\\
\>=\>             $\text{Imp}(S_1', \text{LVar}(s_1))$ by $\text{Def}(s_1';S_1'') \cap \text{TVar}(s_1) = \emptyset$\\
\>=\>             $\text{Imp}(S_1', \text{LVar}(s_1'))$ by $\text{LVar}(s_1') = \text{LVar}(s_1)$ \\
\>$\subseteq$\>   $\text{LVar}(S_1';s_1')$ by the definition of loop variables \\
\>$\subseteq$\>   $\text{LVar}(S_1';s_1';S_1'')$ by Lemma~\ref{lmm:prefLVarSubsetOfLVar}.
\end{tabbing}
In conclusion, $\text{LVar}(S_1';s_1';S_1'';s_1)  = \text{LVar}(S_1';s_1';S_1'')$. The lemma holds.

\item $S_1 = S_1';s_1;s_1';$ and $S_2 = S_2';s_2;s_2'$ where $s_1$ and $s_2$ are reordered and all of the following hold:

\begin{itemize}
\item ${S_1'} \equiv_{H}^S {S_2'}$;

\item $S_1'$ and $S_2'$ have equivalent computation of $\text{TVar}(s_1;s_1') \cup \text{TVar}(s_2;s_2')$.

\item ${s_1} \equiv_{H}^S {s_2'}$;

\item ${s_1'} \equiv_{H}^S {s_2}$;

\item $\text{Def}(s_1) \cap \text{TVar}(s_1') = \emptyset$;

\item $\text{Def}(s_2) \cap \text{TVar}(s_2') = \emptyset$;
\end{itemize}

By the hypothesis IH,
$\text{LVar}(S_1') = \text{LVar}(S_2'),
 \noindent\text{LVar}(s_1)  = \text{LVar}(s_2'),
 \text{LVar}(s_1') = \text{LVar}(s_2)$.

In the following, we show $\text{LVar}(S_1) = \text{LVar}(S_2)$ in three steps.
\begin{enumerate}
\item
We show $\text{LVar}(S_1';s_1;s_1') = \text{LVar}(S_1')$

\noindent$\cup \text{Imp}(S_1', \text{LVar}(s_1)) \cup
\text{Imp}(S_1', \text{LVar}(s_1'))$.

\begin{tabbing}
xx\=xx\=\kill
\>\>               $\text{LVar}(S_1';s_1;s_1')$ \\
\>= \>             $\text{LVar}(S_1';s_1) \cup \text{Imp}(S_1';s_1, \text{LVar}(s_1'))$\\
\>\>               by the definition of loop variables\\
\\
\>\>              $\text{LVar}(S_1';s_1)$  \\
\>=\>             $\text{LVar}(S_1') \cup \text{Imp}(S_1', \text{LVar}(s_1))$ (2)\\
\>\>              by the definition of loop variables\\
\\
\>\>              $\text{Imp}(S_1';s_1, \text{LVar}(s_1'))$ \\
\>=\>             $\text{Imp}(S_1',  \text{Imp}(s_1, \text{LVar}(s_1')))$  by Lemma~\ref{lmm:ImpPrefixLemma}\\
\>=\>             $\text{Imp}(S_1',  \text{LVar}(s_1'))$  (3)\\
\>\>              by the condition $\text{Def}(s_1) \cap \text{TVar}(s_1') = \emptyset$
\end{tabbing}

According to (2) and (3),
$\text{LVar}(S_1';s_1;s_1') =  \text{LVar}(S_1') \cup \text{Imp}(S_1', \text{LVar}(s_1)) \cup \text{Imp}(S_1',  \text{LVar}(s_1'))$.

Similarly,
$\text{LVar}(S_2';s_2;s_2') =  \text{LVar}(S_2') \cup$

\noindent$\text{Imp}(S_2', \text{LVar}(s_2)) \cup \text{Imp}(S_2', \text{LVar}(s_2'))$.

\item We show that
$\text{Imp}(S_1', \text{LVar}(s_1)) = \text{Imp}(S_2',  \text{LVar}(s_2'))$.

\noindent$\text{Imp}(S_2', \text{LVar}(s_2'))$.

We need to show that $\text{LVar}(s_1) \subseteq \text{LVar}(s_1;s_1')$ and
$\text{LVar}(s_2') \subseteq \text{LVar}(s_2;s_2')$.
By the definition of loop variables, $\text{LVar}(s_1) \subseteq \text{LVar}(s_1;s_1')$.
By the definition of loop variables again, $\text{LVar}(s_2;s_2') = \text{LVar}(s_2) \cup \text{Imp}(s_2, \text{LVar}(s_2'))$.
Because $\text{Def}(s_2) \cap \text{TVar}(s_2') = \emptyset$,
$\text{Imp}(s_2, \text{LVar}(s_2')) = \text{LVar}(s_2')$.

By the induction hypothesis IH, $\text{LVar}(s_1) = \text{LVar}(s_2')$.
By Lemma~\ref{lmm:sameImpfromEquivCompCond}, $\forall x \in \text{LVar}(s_1) = \text{LVar}(s_2'), \text{Imp}(S_1', \{x\}) = \text{Imp}(S_2', \{x\})$.
By Lemma~\ref{lmm:impVarUnionLemma},
$\text{Imp}(S_1', \text{LVar}(s_1)) = \text{Imp}(S_2',  \text{LVar}(s_2'))$.

\item
We show that
$\text{Imp}(S_1', \text{LVar}(s_1')) = \text{Imp}(S_2',  \text{LVar}(s_2))$.

By the similar argument that $\text{Imp}(S_1', \text{LVar}(s_1)) = \text{Imp}(S_2',  \text{LVar}(s_2'))$.
\end{enumerate}
In conclusion, $\text{LVar}(S_1';s_1;s_1') = \text{LVar}(S_2';s_2;s_2')$.
\end{enumerate}
\end{enumerate}
\end{proof}

\begin{lemma}\label{lmm:equivCVarFromEquivTerm}
If two statement sequences $S_1$ and $S_2$ satisfy the proof rule of termination in the same way, then $S_1$ and $S_2$ have same crash variables, $(S_1 \equiv_{H}^S S_2) => (\text{CVar}(S_1) = \text{ CVar}(S_2))$.
\end{lemma}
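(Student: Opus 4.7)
The plan is to prove this lemma by induction on $\text{size}(S_1) + \text{size}(S_2)$, mirroring the structure of the proof of Lemma~\ref{lmm:equivLVarFromEquivTerm} but tracking $\text{CVar}$ in place of $\text{LVar}$. At each step I would do a case analysis on the clauses in the definition of $S_1 \equiv_H^S S_2$, appealing to Lemma~\ref{lmm:prefCVarSubsetOfCVar} to decompose $\text{CVar}$ over sequencing, and to Lemma~\ref{lmm:impVarUnionLemma}, Lemma~\ref{lmm:ImpPrefixLemma}, Corollary~\ref{lmm:dupStmtPrefixLemma}, and Lemma~\ref{lmm:sameImpfromEquivCompCond} to line up imported variables across the two programs.

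For the base case, $S_1$ and $S_2$ are simple statements and I would dispatch the three subcases of Definition~\ref{def:condTermInSameWaySimpleStmt} by unfolding $\text{CVar}$ directly. When $s_1=s_2$ the statements are syntactically identical so $\text{CVar}(s_1)=\text{CVar}(s_2)$ is immediate. When both are $\text{input}\,id$ on variables of the same declared type, both yield $\{id_I\}$. For the assignment/output subcase, the explicit hypothesis rules out the Int/Long value-mismatch clause of the $\text{CVar}$ definition on both sides, so $\text{CVar}(s_1)=\text{CVar}(s_2)=\text{Err}(e)$, which depends only on the shared expression.

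For the inductive step I would walk through the clauses of Definition~\ref{def:lCondTermInSameWay}. The If-case is by IH on the branches and the definition of $\text{CVar}$ on If, which is either empty or $\text{Use}(e)$ together with the two branch-$\text{CVar}$s. The sequential case $S_1=S_1';s_1,\,S_2=S_2';s_2$ uses the IH to get $\text{CVar}(S_1')=\text{CVar}(S_2')$ and $\text{CVar}(s_1)=\text{CVar}(s_2)$, then invokes Lemma~\ref{lmm:sameImpfromEquivCompCond} on the equivalent computation of $\text{TVar}(s_1)\cup\text{TVar}(s_2)$ to conclude $\text{Imp}(S_1',\text{CVar}(s_1))=\text{Imp}(S_2',\text{CVar}(s_2))$, and finally Lemma~\ref{lmm:prefCVarSubsetOfCVar} to assemble the whole sequence. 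The trailing-skip, duplicate, and reordering cases follow the same three-step pattern already used for $\text{LVar}$, relying on the $\text{Def}(\cdot)\cap\text{TVar}(\cdot)=\emptyset$ disjointness hypotheses to collapse the duplicated or swapped statement's contribution to imported variables.

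The main obstacle is the While-case. Here $S_1=\text{while}_{\langle n_1\rangle}(e)\{S_1''\}$ and $S_2=\text{while}_{\langle n_2\rangle}(e)\{S_2''\}$ with $S_1''\equiv_H^S S_2''$ and equivalent computation of $\text{TVar}(S_1)\cup\text{TVar}(S_2)$. By IH I get $\text{CVar}(S_1'')=\text{CVar}(S_2'')$, so $\text{Use}(e)\cup\text{CVar}(S_1'')=\text{Use}(e)\cup\text{CVar}(S_2'')$. The definition of $\text{CVar}$ on a loop then unfolds as $\bigcup_{i\geq 0}\text{Imp}({S_1''}^i,\text{Use}(e)\cup\text{CVar}(S_1''))$, so I would do an inner induction on $i$, using Corollary~\ref{lmm:dupStmtPrefixLemma} to peel one copy of the body and Lemma~\ref{lmm:sameImpfromEquivCompCond} to match $\text{Imp}(S_1'',Z)=\text{Imp}(S_2'',Z)$ for each intermediate set $Z$. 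The subtle point is checking that the argument sets $Z$ threaded through this unfolding lie inside $\text{TVar}(S_1)\cup\text{TVar}(S_2)$ so that the equivalent-computation hypothesis actually applies; this containment is where the definitions of $\text{CVar}$ and $\text{TVar}$ for loops do the real work.
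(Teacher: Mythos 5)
Your proposal is correct and takes essentially the same route as the paper, which proves this lemma only by the remark that it follows ``by proof similar to that for Lemma~\ref{lmm:equivLVarFromEquivTerm}''; your induction on $\text{size}(S_1)+\text{size}(S_2)$ with the same case analysis, the same supporting lemmas, and the inner induction on $i$ for the while case is exactly the intended instantiation. You also correctly flag the one place where the adaptation is not purely mechanical, namely that $\text{CVar}$, unlike $\text{LVar}$, is nonempty on simple statements, so the base case needs the explicit no-value-mismatch hypothesis to force $\text{CVar}(s_1)=\text{CVar}(s_2)=\text{Err}(e)$.
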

By proof similar to those for Lemma~\ref{lmm:equivLVarFromEquivTerm}.


\begin{corollary}\label{coro:sameTermVarFromEquivTerm}
If two statement sequences $S_1$ and $S_2$ satisfy the proof rule of termination in the same way, then $S_1$ and $S_2$ have same termination deciding variables, $(S_1 \equiv_{H}^S S_2) => (\text{TVar}(S_1) = \text{TVar}(S_2))$.
\end{corollary}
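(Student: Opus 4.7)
The proof plan is essentially to combine the two immediately preceding lemmas with the definition of termination deciding variables, so the argument is very short and direct.

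First, I would recall the definition of termination deciding variables from Definition~\ref{def:TermVars}: for any statement sequence $S$, $\text{TVar}(S) = \text{CVar}(S) \cup \text{LVar}(S)$. Since this is purely a definitional unfolding, no induction or case analysis is needed at the top level of the corollary.

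Next, I would invoke the two preceding results on the individual components. Lemma~\ref{lmm:equivCVarFromEquivTerm} gives $\text{CVar}(S_1) = \text{CVar}(S_2)$ under the hypothesis $S_1 \equiv_H^S S_2$, and Lemma~\ref{lmm:equivLVarFromEquivTerm} gives $\text{LVar}(S_1) = \text{LVar}(S_2)$ under the same hypothesis. Taking the union of equals on both sides yields $\text{CVar}(S_1) \cup \text{LVar}(S_1) = \text{CVar}(S_2) \cup \text{LVar}(S_2)$, and unfolding the definition on each side gives $\text{TVar}(S_1) = \text{TVar}(S_2)$, which is exactly the conclusion.

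There is no real obstacle here, since the corollary is a straightforward consequence of results already in hand. All the substantive inductive work (over $\text{size}(S_1) + \text{size}(S_2)$ and, in the while case, a nested induction on loop unrollings relying on Lemma~\ref{lmm:sameImpfromEquivCompCond}, Corollary~\ref{lmm:dupStmtPrefixLemma}, and Lemma~\ref{lmm:impVarUnionLemma}) has already been discharged inside the two component lemmas. The corollary therefore fits in a few lines and does not need its own induction.
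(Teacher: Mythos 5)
Your proposal is correct and matches the paper's own argument exactly: the paper justifies the corollary by citing Lemma~\ref{lmm:equivLVarFromEquivTerm} and Lemma~\ref{lmm:equivCVarFromEquivTerm}, which is precisely your combination of $\text{LVar}(S_1)=\text{LVar}(S_2)$ and $\text{CVar}(S_1)=\text{CVar}(S_2)$ followed by unfolding $\text{TVar}(S)=\text{CVar}(S)\cup\text{LVar}(S)$. No further work is needed.
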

By Lemma~\ref{lmm:equivLVarFromEquivTerm}, and~\ref{lmm:equivCVarFromEquivTerm}.

\paragraph{Properties of the input sequence variable}
\begin{lemma}\label{lmm:inputVarNOtInDefStmtSeqWithoutInputStmt}
If there is no input statement in a statement sequence $S$, then the input sequence variable is not in the defined variables of $S$,
$(\nexists ``\text{input }x" \in S) => {id}_{I} \notin \text{Def}(S)$.
\end{lemma}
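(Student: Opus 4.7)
The plan is to proceed by structural induction on the statement sequence $S$, showing that every syntactic form other than an input statement contributes nothing of the form ${id}_I$ to the $\text{Def}$ set, so if no input statement appears anywhere inside $S$, then ${id}_I$ cannot appear in $\text{Def}(S)$.

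First I would inspect the (appendix) definition of $\text{Def}$ clause by clause. The only clause that puts ${id}_I$ into $\text{Def}$ is the one for $``\text{input } id"$, which contributes $\{id, {id}_I, {id}_{IO}\}$; the clauses for $\text{skip}$, for an assignment $lval := e$, and for $\text{output } e$ either yield $\emptyset$, $\{lval\}$ (never ${id}_I$, which is reserved for the input sequence), or $\{{id}_{IO}\}$. This handles all base cases of the induction directly, since none of these simple statements mentions ${id}_I$ in its $\text{Def}$ set.

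For the inductive step I would dispatch the three compound forms. For $``\text{If}(e) \text{ then } \{S_t\} \text{ else } \{S_f\}"$, $\text{Def}$ is defined as $\text{Def}(S_t) \cup \text{Def}(S_f)$; the hypothesis that $S$ contains no input statement passes to both branches, so by induction ${id}_I \notin \text{Def}(S_t)$ and ${id}_I \notin \text{Def}(S_f)$, hence ${id}_I \notin \text{Def}(S)$. The $\text{while}$ case is analogous, with the hypothesis transferred to the body. For a sequence $s_1;\ldots;s_{k+1}$, $\text{Def}$ distributes as a union $\text{Def}(s_1;\ldots;s_k) \cup \text{Def}(s_{k+1})$; since no input statement occurs in the whole sequence, none occurs in the prefix or in $s_{k+1}$, so both sides of the union exclude ${id}_I$ by induction.

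There is no real obstacle here; the proof is essentially bookkeeping against the $\text{Def}$ definition. The only subtlety to watch is that the $\text{Def}$ clauses for assignment and output must not accidentally introduce ${id}_I$: one must confirm that on the left-hand side of an assignment $lval$ is a user identifier (an ordinary $id$, $id[n]$, or $id_1[id_2]$) rather than the distinguished input-sequence variable ${id}_I$, and that $\text{output } e$ only touches ${id}_{IO}$. Given those syntactic conventions, the induction closes cleanly.
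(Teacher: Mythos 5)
Your proposal is correct and follows exactly the approach the paper takes: the paper's entire proof is ``By induction on abstract syntax of $S$,'' and your clause-by-clause check of the $\text{Def}$ definition is precisely the bookkeeping that induction requires. Nothing further is needed.
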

\begin{proof}
By induction on abstract syntax of $S$.
\end{proof}

\begin{lemma}\label{lmm:inputVarNotInCVarWithoutInputStmt}
If there is no input statement in a statement sequence $S$, then the input sequence variable is not in the crash variables of $S$,
$(\nexists ``\text{input }x" \in S) => ({id}_{I} \notin \text{CVar}(S))$.
\end{lemma}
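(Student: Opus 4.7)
The plan is to prove the claim by structural induction on $S$, in parallel with the auxiliary claim (used in the compound cases) that if $S$ has no input statement and $\text{id}_I \notin X$, then $\text{id}_I \notin \text{Imp}(S,X)$.

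First I would record a syntactic convention (implicit in the language of Figure~\ref{fig:syntaxfig}) that the distinguished identifier $\text{id}_I$, which denotes the remaining input sequence in the value store, is never mentioned inside programmer-written left-values or expressions, and hence $\text{id}_I \notin \text{Use}(e)$, $\text{id}_I \notin \text{Err}(e)$, and $\text{id}_I \notin \text{Idx}(lval)$ for every $e$ and $lval$ appearing in a well-formed program. This is the only semantic fact needed beyond the definitions.

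The base cases are then immediate: $\text{CVar}(\text{skip}) = \emptyset$; $\text{CVar}(lval := e) = \text{Idx}(lval) \cup \text{Use}(e)$ (or $\text{Idx}(lval) \cup \text{Err}(e)$); and $\text{CVar}(\text{output } e) = \text{Err}(e)$---each of which avoids $\text{id}_I$ by the convention above. The case $``\text{input } id"$ is excluded by hypothesis. For the If case, if both branches have empty $\text{CVar}$ then $\text{CVar}(S) = \text{Err}(e)$ and we are done; otherwise $\text{CVar}(S) = \text{Use}(e) \cup \text{CVar}(S_t) \cup \text{CVar}(S_f)$, and none of the three sets contains $\text{id}_I$ by the convention and the induction hypothesis applied to $S_t,S_f$ (which inherit the no-input-statement property).

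For the sequential case $S = s_1;\dots;s_k;s_{k+1}$, we have $\text{CVar}(S) = \text{CVar}(s_1;\dots;s_k) \cup \text{Imp}(s_1;\dots;s_k, \text{CVar}(s_{k+1}))$. By the induction hypothesis, $\text{id}_I \notin \text{CVar}(s_{k+1})$ and $\text{id}_I \notin \text{CVar}(s_1;\dots;s_k)$, so the auxiliary claim (applied with $X = \text{CVar}(s_{k+1})$ and with the fact that $s_1;\dots;s_k$ also lacks input statements) finishes this case. The while case is analogous using $\text{CVar}(\text{while}_{\langle n\rangle}(e)\{S'\}) = \text{Imp}(\text{while}_{\langle n\rangle}(e)\{S'\}, \text{Use}(e) \cup \text{CVar}(S'))$, where $\text{Use}(e) \cup \text{CVar}(S')$ avoids $\text{id}_I$ by the convention and IH, and the auxiliary claim kills the outer $\text{Imp}$.

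The auxiliary claim itself is an easy induction on $S$: in the case $\text{Def}(S)\cap X = \emptyset$ we have $\text{Imp}(S,X)=X$; for assignment/output with $\text{Def}(S)\cap X \neq \emptyset$ we get $\text{Use}(S)\cup(X\setminus\text{Def}(S))$, and $\text{Use}(S)$ avoids $\text{id}_I$ by the convention; the If, while, and sequence clauses reduce by unfolding to smaller Imp-expressions to which the IH applies (Lemma~\ref{lmm:inputVarNOtInDefStmtSeqWithoutInputStmt} is used implicitly to know that no sub-statement ever defines $\text{id}_I$, so the set-difference clause never removes $\text{id}_I$ gratuitously, and no fresh $\text{id}_I$ is introduced from $\text{Use}$). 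The main obstacle is the while clause $\bigcup_{i\geq 0}\text{Imp}(S'^i, \text{Use}(e)\cup X)$: one needs the absence of $\text{id}_I$ to be preserved uniformly across all iterates $i$, but since each iterate invokes the same auxiliary IH with a set that still avoids $\text{id}_I$, the union is safe.
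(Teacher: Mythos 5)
Your proof is correct and follows the same approach as the paper, which simply states ``By induction on abstract syntax of $S$'' without further detail; your structural induction, together with the auxiliary claim that $\text{Imp}$ preserves absence of ${id}_I$ and the convention that ${id}_I$ never occurs in programmer-written expressions, supplies exactly the details that induction needs (in particular for the while and sequence clauses of $\text{CVar}$).
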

\begin{proof}
By induction on abstract syntax of $S$.
\end{proof}

\begin{lemma}\label{lmm:inputVarNotInLVarWithoutInputStmt}
If there is no input statement in a statement sequence $S$, then the input sequence variable is in the loop variables of $S$,
$(\nexists ``\text{input }x" \in S) => ({id}_{I} \notin \text{LVar}(S))$.
\end{lemma}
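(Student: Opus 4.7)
The plan is to proceed by induction on the abstract syntax of $S$, following the four clauses in Definition~\ref{def:LVar}, and lean on a small auxiliary fact: if $S$ contains no input statement and ${id}_I \notin X$, then ${id}_I \notin \text{Imp}(S,X)$. This auxiliary itself is a routine induction on the syntax of $S$ using Definition~\ref{def:impvars}. Its key observations are (i) the only statement whose $\text{Use}$ set contains ${id}_I$ is ``input $id$'' (from the Out/Eval rules, ordinary expressions $e$ do not reference the internal input-sequence name), and (ii) in the recursive clauses of $\text{Imp}$ the only way a new variable enters is through $\text{Use}$; excluding input statements therefore excludes ${id}_I$ from ever being introduced.

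With the auxiliary in hand, the four cases of the main induction are:

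First, if $S$ contains no $\text{while}$ statement, then $\text{LVar}(S)=\emptyset$ by clause 1 of Definition~\ref{def:LVar} and the claim is immediate. Second, for $S = ``\text{If}(e)\text{ then }\{S_t\}\text{ else }\{S_f\}"$ containing a while, $\text{LVar}(S) = \text{Use}(e)\cup\text{LVar}(S_t)\cup\text{LVar}(S_f)$; the hypothesis of no input statements in $S$ propagates to $S_t$ and $S_f$, so by the induction hypothesis ${id}_I$ is excluded from both $\text{LVar}(S_t)$ and $\text{LVar}(S_f)$, and since $e$ is an expression we have ${id}_I\notin\text{Use}(e)$. Third, for $S = ``\text{while}(e)\{S'\}"$, we have $\text{LVar}(S) = \text{Imp}(S',\text{Use}(e)\cup\text{LVar}(S'))$; by the induction hypothesis applied to $S'$ (which contains no input statement) and ${id}_I\notin\text{Use}(e)$, the set $\text{Use}(e)\cup\text{LVar}(S')$ does not contain ${id}_I$, so the auxiliary lemma yields ${id}_I\notin\text{Imp}(S',\text{Use}(e)\cup\text{LVar}(S'))$. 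Fourth, for the sequence case $s_1;\ldots;s_k;s_{k+1}$, we combine the induction hypotheses on the prefix and on $s_{k+1}$ with one more application of the auxiliary lemma to the $\text{Imp}(s_1;\ldots;s_k,\text{LVar}(s_{k+1}))$ term.

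The step I expect to be mildly technical is the auxiliary lemma rather than the main induction itself: one must check that none of the defining clauses of $\text{Imp}$ ever manufacture ${id}_I$ when it is absent from both $\text{Use}$ sets and the parameter $X$. The while clause of $\text{Imp}$, which takes an infinite union $\bigcup_{i\geq 0}\text{Imp}({S'}^i,\text{Use}(e)\cup X)$, is handled by an inner induction on $i$ (parallel in spirit to the inner induction used in Lemma~\ref{lmm:sameImpfromEquivCompCond}). Once the auxiliary lemma is in place, the main four-case induction above is essentially bookkeeping, and the lemma follows.
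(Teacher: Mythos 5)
Your proposal is correct and matches the paper's approach: the paper's own proof is simply ``by induction on abstract syntax of $S$,'' and your argument is a faithful expansion of that, with the auxiliary fact that $\text{Imp}$ never introduces ${id}_I$ in the absence of input statements being exactly the detail needed to discharge the while and sequence clauses of $\text{LVar}$. The only nit is that in the while case the definition uses $\text{Imp}(S,\text{Use}(e)\cup\text{LVar}(S'))$ with $S$ the whole while statement (which unfolds to the infinite union over ${S'}^i$), not $\text{Imp}(S',\cdot)$, but your auxiliary lemma applies equally in either reading, so the argument stands.
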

\begin{proof}
By induction on abstract syntax of $S$.
\end{proof}

\begin{corollary}\label{coro:inputVarNotInTVarWithoutInputStmt}
If there is no input statement in a statement sequence $S$, then the input sequence variable is in the termination deciding variables of $S$,
$(\nexists ``\text{input }x" \in S) => ({id}_{I} \notin \text{TVar}(S))$.
\end{corollary}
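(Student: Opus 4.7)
The plan is to derive this corollary immediately from the two preceding lemmas together with the definition of termination deciding variables. Assume $S$ contains no input statement. By Lemma~\ref{lmm:inputVarNotInCVarWithoutInputStmt}, ${id}_I \notin \text{CVar}(S)$, and by Lemma~\ref{lmm:inputVarNotInLVarWithoutInputStmt}, ${id}_I \notin \text{LVar}(S)$. By Definition~\ref{def:TermVars}, $\text{TVar}(S) = \text{CVar}(S) \cup \text{LVar}(S)$, so ${id}_I \notin \text{TVar}(S)$. No induction or case analysis is needed at this level since both lemmas already establish the substantive content; the corollary is just the conjunction of their conclusions transported through the definition of $\text{TVar}$.

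There is essentially no obstacle here. The only subtlety to flag is that the English reading of the statement (``\emph{is in}'') appears to be a typo in the excerpt; the formal claim ``${id}_I \notin \text{TVar}(S)$'' is what is actually intended and what follows from the supporting lemmas. The entire proof therefore amounts to one line invoking the two lemmas and one line applying the union definition.
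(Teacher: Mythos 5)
Your proof is correct and is exactly the paper's argument: the corollary is obtained by combining Lemma~\ref{lmm:inputVarNotInCVarWithoutInputStmt} and Lemma~\ref{lmm:inputVarNotInLVarWithoutInputStmt} through Definition~\ref{def:TermVars}. Your observation that the English phrasing ``is in'' is a typo for ``is not in'' is also right; the formal claim ${id}_I \notin \text{TVar}(S)$ is what is intended.
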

By Lemma~\ref{lmm:inputVarNotInCVarWithoutInputStmt} and~\ref{lmm:inputVarNotInLVarWithoutInputStmt}.

\begin{lemma}\label{lmm:inputVarInStmtSeqWithInputStmt}
If there is one input statement in a statement sequence $S$, then the input sequence variable is in the crash variables and defined variables of $S$,
$(\exists ``\text{input }x" \in S) => ({id}_{I} \in \text{CVar}(S)) \wedge ({id}_{I} \in \text{Def}(S))$.
\end{lemma}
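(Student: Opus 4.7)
The plan is to prove both conjuncts simultaneously by structural induction on the abstract syntax of $S$, following the pattern used in the preceding three lemmas (\ref{lmm:inputVarNOtInDefStmtSeqWithoutInputStmt}--\ref{lmm:inputVarNotInLVarWithoutInputStmt}).

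For the base case, only the simple statement $S = ``\text{input }x"$ contains an input statement; the other simple statements vacuously satisfy the hypothesis. By clause 4 of Definition~\ref{def:CVar} we have $\text{CVar}(\text{input }x) = \{{id}_I\}$, and by the definition of $\text{Def}$ for input statements (which records ${id}_I$ as defined, since the In-rules rewrite $\vals({id}_I)$ to $\text{tl}(\vals({id}_I))$) we obtain ${id}_I \in \text{Def}(\text{input }x)$, giving the base case.

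For the inductive cases, first consider $S = ``\text{If}(e) \text{ then }\{S_t\} \text{ else }\{S_f\}"$. If the input statement lies in $S_t$ (the case for $S_f$ is symmetric), then by the induction hypothesis ${id}_I \in \text{CVar}(S_t) \cap \text{Def}(S_t)$, so $\text{CVar}(S_t) \cup \text{CVar}(S_f) \neq \emptyset$ and clause 7 of Definition~\ref{def:CVar} yields ${id}_I \in \text{CVar}(S)$; membership in $\text{Def}(S)$ follows directly from $\text{Def}(S_t) \subseteq \text{Def}(S)$. For $S = ``\text{while}_{\langle n\rangle}(e)\{S'\}"$ the input statement must lie in $S'$, so by the induction hypothesis ${id}_I \in \text{CVar}(S') \cap \text{Def}(S')$; unfolding $\text{CVar}(S) = \text{Imp}(S, \text{Use}(e) \cup \text{CVar}(S'))$ and observing that ${id}_I$ is preserved by $\text{Imp}$ (see next paragraph) gives the conclusion.

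The main obstacle is the sequence case $S = s_1;\ldots;s_{k+1}$ where the input statement occurs in some $s_j$ with $j \le k$: we must show that ${id}_I$ survives the wrapping $\text{Imp}(s_1;\ldots;s_k,\text{CVar}(s_{k+1}))$ in clause 9 of Definition~\ref{def:CVar}. The key auxiliary observation is that for any statement $s$ and any set $X$ with ${id}_I \in X$, one has ${id}_I \in \text{Imp}(s, X)$: if $\text{Def}(s) \cap X = \emptyset$ this is immediate from clause 1 of Definition~\ref{def:impvars}; if $s$ is not an input statement then ${id}_I \notin \text{Def}(s)$ so ${id}_I \in X \setminus \text{Def}(s) \subseteq \text{Imp}(s,X)$; and if $s = ``\text{input }y"$ then even though ${id}_I \in \text{Def}(s)$, by the In-rules the statement reads $\vals({id}_I)$, so ${id}_I \in \text{Use}(s) \subseteq \text{Imp}(s,X)$. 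A straightforward secondary induction on the length of the prefix lifts this to $\text{Imp}(s_1;\ldots;s_k, X)$, which together with the inductive hypothesis applied to the sub-sequence containing the input statement gives ${id}_I \in \text{CVar}(S)$; the $\text{Def}(S)$ conclusion is a direct consequence of the monotonicity of $\text{Def}$ over sequencing.
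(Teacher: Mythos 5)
Your proof is correct and follows the same route as the paper, which simply states ``By induction on abstract syntax of $S$'' without further detail; your version supplies the missing case analysis, and the key auxiliary fact that ${id}_I \in X$ implies ${id}_I \in \text{Imp}(s,X)$ is exactly what is needed to push membership through the sequencing clause of $\text{CVar}$. (One small mislabeling: in the sequence case the subcase where the input statement lies in the prefix $s_1;\ldots;s_k$ is the immediate one, since $\text{CVar}(s_1;\ldots;s_k)$ is unioned in directly; the wrapping argument is needed when the input statement lies only in $s_{k+1}$ --- but your auxiliary observation covers precisely that situation, so nothing is missing.)
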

\begin{proof}
By induction on abstract syntax of $S$.
\end{proof}

\begin{lemma}\label{lmm:impInputVarASubsetOfCVar}
If there is one input statement in a  statement sequence $S$, then the imported variables in $S$ relative to the input sequence variable are a subset of the crash variables of $S$,
$(\exists ``\text{input }x" \in S) => (\text{Imp}(S, \{{id}_{I}\}) \subseteq \text{CVar}(S))$.
\end{lemma}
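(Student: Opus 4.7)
The plan is to proceed by structural induction on $S$, parallel to the inductive definitions of $\text{Imp}$ and $\text{CVar}$. At every step we exploit two facts already established: (a) by Lemma~\ref{lmm:inputVarInStmtSeqWithInputStmt}, if $S$ contains an input statement then $\{{id}_I\} \subseteq \text{Def}(S) \cap \text{CVar}(S)$, so the recursive cases of both $\text{Imp}(S,\{{id}_I\})$ and $\text{CVar}(S)$ that apply are the ``non-trivial'' ones; (b) by Lemma~\ref{lmm:inputVarNOtInDefStmtSeqWithoutInputStmt} and Corollary~\ref{coro:inputVarNotInTVarWithoutInputStmt}, if a subprogram contains no input statement, then ${id}_I \notin \text{Def}$, so pulling $\{{id}_I\}$ back through it gives $\text{Imp}(\cdot,\{{id}_I\})=\{{id}_I\}$, and this singleton is already contained in the surrounding $\text{CVar}$ by (a).

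The base case is $S = \text{input}\,id$: here $\text{Imp}(S,\{{id}_I\}) = \text{Use}(S)\cup(\{{id}_I\}\setminus\text{Def}(S)) = \text{Use}(\text{input}\,id) \subseteq \{{id}_I\} = \text{CVar}(S)$, which follows directly from the definitions of Use, Def, and CVar for input. For the sequential case $S = s_1;\ldots;s_{k+1}$, I would split on whether the input statement lies in $s_1;\ldots;s_k$ or in $s_{k+1}$. If it lies in the suffix, apply the inductive hypothesis to $s_{k+1}$ to get $\text{Imp}(s_{k+1},\{{id}_I\}) \subseteq \text{CVar}(s_{k+1})$, then use monotonicity of $\text{Imp}$ in its second argument (together with Lemma~\ref{lmm:prefCVarSubsetOfCVar}) to conclude $\text{Imp}(s_1;\ldots;s_k,\text{Imp}(s_{k+1},\{{id}_I\})) \subseteq \text{Imp}(s_1;\ldots;s_k,\text{CVar}(s_{k+1})) \subseteq \text{CVar}(S)$. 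If it lies in the prefix, then $s_{k+1}$ has no input, so by fact (b) the inner $\text{Imp}$ collapses to $\{{id}_I\}$, and the outer $\text{Imp}$ becomes $\text{Imp}(s_1;\ldots;s_k,\{{id}_I\})$, to which the inductive hypothesis applies directly, giving containment in $\text{CVar}(s_1;\ldots;s_k) \subseteq \text{CVar}(S)$.

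For the conditional case $S = \text{If}(e)\text{ then }\{S_t\}\text{ else }\{S_f\}$, at least one branch contains the input statement, so $\text{CVar}(S_t)\cup\text{CVar}(S_f) \neq \emptyset$ and therefore $\text{CVar}(S) = \text{Use}(e)\cup\text{CVar}(S_t)\cup\text{CVar}(S_f)$. Since ${id}_I \in \text{Def}(S)$, the imported-variable definition gives $\text{Imp}(S,\{{id}_I\}) = \text{Use}(e)\cup\text{Imp}(S_t,\{{id}_I\})\cup\text{Imp}(S_f,\{{id}_I\})$. The $\text{Use}(e)$ components match exactly. For a branch that contains an input statement, the induction hypothesis yields the desired containment; for a branch with no input, fact (b) gives $\text{Imp}(\cdot,\{{id}_I\}) = \{{id}_I\}$, and this singleton is in $\text{CVar}$ of the other (input-containing) branch by fact (a). The while case $S = \text{while}_{\langle n\rangle}(e)\{S'\}$ is handled analogously: $\text{Imp}(S,\{{id}_I\}) = \text{Imp}(S,\text{Use}(e)\cup\{{id}_I\})$ expands as a union of $\text{Imp}({S'}^i,\text{Use}(e)\cup\{{id}_I\})$, and each finite unrolling reduces to the sequential case already handled.

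The principal obstacle I anticipate is the bookkeeping in the sequential case, specifically ensuring the split between ``input in prefix'' versus ``input in suffix'' is exhaustive and that Lemma~\ref{lmm:prefCVarSubsetOfCVar} is applied in exactly the right direction (the imported-variable operator must be shown monotone in its second argument, which is immediate but needs to be invoked carefully to keep the set containments in the same direction as $\text{CVar}$ is defined). Once that monotonicity lemma is isolated, the rest of the induction is mechanical.
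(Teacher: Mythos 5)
Your proposal is correct and follows essentially the same route as the paper's own proof: structural induction on $S$, with the base case collapsing to $\{{id}_I\}$, the branch/sequence cases split on which subpart contains the input statement (using Lemma~\ref{lmm:inputVarInStmtSeqWithInputStmt} to absorb the $\{{id}_I\}$ singleton from input-free subparts and Lemma~\ref{lmm:inputVarNOtInDefStmtSeqWithoutInputStmt} to collapse $\text{Imp}$ through them), and monotonicity of $\text{Imp}$ in its second argument (Lemma~\ref{lmm:impVarUnionLemma}) combined with the prefix/unrolling identities to push containment through sequences and loop iterations. The only place where your sketch is thinner than the paper is the while case, where the paper runs an explicit inner induction on the unrolling index $i$ to show $\text{Imp}({S'}^i,\{{id}_I\}\cup\text{Use}(e))\subseteq\text{Imp}({S'}^i,\text{CVar}(S')\cup\text{Use}(e))$ rather than appealing directly to the sequential case, but the ingredients you name are exactly the ones that induction uses.
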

\begin{proof}
By induction on abstract syntax of $S$.
\begin{enumerate}
\item $S = ``\text{input }x"$.

By the definition of $\text{CVar}(\cdot)$ and $\text{Imp}(\cdot)$, $\text{CVar}(S) = \text{Imp}(S, \{{id}_{I}\}) = \{{id}_{I}\}$.

\item $S = ``\text{If}(e) \text{ then } \{S_t\} \text{ else }\{S_f\}"$.

W.l.o.g., there is input statement in $S_t$, by the induction hypothesis,
$\text{Imp}(S_t, \{{id}_{I}\}) \subseteq \text{CVar}(S_t)$.
There are two subcases regarding if input statement is in $S_f$.
\begin{enumerate}
\item There is input statement in $S_f$.

By the induction hypothesis, $\text{Imp}(S_f, \{{id}_{I}\}) \subseteq \text{CVar}(S_f)$.
Hence, the lemma holds.

\item There is no input statement in $S_f$.

By the definition of imported variables, $\text{Imp}(S_f, \{{id}_{I}\}) = \{{id}_{I}\}$.
By Lemma~\ref{lmm:inputVarInStmtSeqWithInputStmt}, ${id}_{I} \in \text{CVar}(S_t)$.
Therefore, the lemma holds.
\end{enumerate}

\item $S = ``\text{while}_{\langle n\rangle}(e) \{S'\}"$.

By the induction hypothesis, $\text{Imp}(S', \{{id}_{I}\}) \subseteq \text{CVar}(S')$.

By the definition of $\text{Imp}(\cdot)$,
$\text{Imp}(S', \{{id}_{I}\}) = \bigcup_{i\geq 0} \text{Imp}({S'}^i, \{{id}_{I}\} \cup \text{Use}(e))$.
By the definition of $\text{CVar}(\cdot)$,
$\text{CVar}(S') =$

\noindent$\bigcup_{i\geq 0} \text{Imp}({S'}^i, \text{CVar}(S') \cup \text{Use}(e))$.

By induction on $i$, we show that, $\forall i\geq0$,
$\text{Imp}({S'}^i, \{{id}_{I}\} \cup \text{Use}(e)) \subseteq \text{Imp}({S'}^i, \text{CVar}(S') \cup \text{Use}(e))$.

\noindent{Base case} $i=0$.

By notation ${S'}^0 = \text{skip}$.

\begin{tabbing}
xx\=xx\=\kill
\>\>            $\text{Imp}({S'}^0, \{{id}_{I}\} \cup \text{Use}(e))$\\
\>=\>           $\{{id}_{I}\} \cup \text{Use}(e)$ by the definition of imported variables\\
\\
\>\>            $\text{Imp}({S'}^0, \text{CVar}(S') \cup \text{Use}(e))$\\
\>=\>           $\text{CVar}(S') \cup \text{Use}(e)$ by the definition of imported variables\\
\\
\>\>            ${id}_{I} \subseteq \text{CVar}(S')$ (1) by Lemma~\ref{lmm:inputVarInStmtSeqWithInputStmt}\\
\\
\>\>            $\text{Imp}({S'}^0, \{{id}_{I}\} \cup \text{Use}(e))$ \\
\>$\subseteq$\> $\text{Imp}({S'}^0, \text{CVar}(S') \cup \text{Use}(e))$  by Lemma~\ref{lmm:impVarUnionLemma}.
\end{tabbing}

\noindent{Induction step}.

The hypothesis IH1 is that
$\text{Imp}({S'}^i, \{{id}_{I}\} \cup \text{Use}(e)) \subseteq \text{Imp}({S'}^i, \text{CVar}(S') \cup \text{Use}(e))$ for $i>0$.

Then we show that
$\text{Imp}({S'}^{i+1}, \{{id}_{I}\} \cup \text{Use}(e))$

\noindent$\subseteq \text{Imp}({S'}^{i+1}, \text{CVar}(S') \cup \text{Use}(e))$

\begin{tabbing}
xx\=xx\=\kill
\>\>            $\text{Imp}({S'}^i, \{{id}_{I}\} \cup \text{Use}(e))$ \\
\>$\subseteq$\> $\text{Imp}({S'}^i, \text{CVar}(S') \cup \text{Use}(e))$ (1) by the hypothesis IH1\\
\\
\>\>            $\text{Imp}({S'}^{i+1}, \{{id}_{I}\} \cup \text{Use}(e))$\\
\>=\>           $\text{Imp}(S', \text{Imp}({S'}^i, \{{id}_{I}\} \cup \text{Use}(e)))$ (2) by Corollary~\ref{lmm:dupStmtPrefixLemma}\\
\\
\>\>            $\text{Imp}({S'}^{i+1}, \text{CVar}(S') \cup \text{Use}(e))$\\
\>=\>           $\text{Imp}(S', \text{Imp}({S'}^i, \text{CVar}(S') \cup \text{Use}(e)))$ (3) by Corollary~\ref{lmm:dupStmtPrefixLemma}.\\
\end{tabbing}

Combining (1), (2) and (3):
\begin{tabbing}
xx\=xx\=\kill
\>\>            $\text{Imp}(S', \text{Imp}({S'}^i, \{{id}_{I}\} \cup \text{Use}(e)))$\\
\>$\subseteq$\> $\text{Imp}(S', \text{Imp}({S'}^i, \text{CVar}(S') \cup \text{Use}(e)))$ by Lemma~\ref{lmm:impVarUnionLemma}.
\end{tabbing}
Therefore,
$\text{Imp}({S'}^{i+1}, \{{id}_{I}\} \cup \text{Use}(e)) \subseteq \text{Imp}({S'}^{i+1}, \text{CVar}(S') \cup \text{Use}(e))$.

In conclusion, $\text{Imp}(S, \{{id}_{I}\}) \subseteq \text{CVar}(S)$.

\item $S = s_1;...;s_k$, for $k>0$.

By induction on $k$.

\noindent{Base case}. $k = 1$.

By above cases, the lemma holds.

\noindent{Induction step}.

The induction hypothesis IH2 is that the lemma holds when $k > 0$.
We show that the lemma holds when $S = s_1;...;s_{k+1}$.

By the definition of crash variables, $\text{CVar}(s_1;...;s_{k+1}) = \text{CVar}(s_1;...;s_k) \cup \text{Imp}(s_1;...;s_k, \text{CVar}(s_{k+1}))$.There are two possibilities.
\begin{enumerate}
\item $\nexists ``\text{input }x" \in s_{k+1}$.

By Lemma~\ref{lmm:inputVarNOtInDefStmtSeqWithoutInputStmt}, ${id}_{I} \notin \text{Def}(S)$.
\begin{tabbing}
xx\=xx\=\kill
\>\>            $\text{Imp}(s_1;...;s_{k+1}, \{{id}_{I}\})$\\
\>=\>           $\text{Imp}(s_1;...;s_{k}, \{{id}_{I}\})$ by ${id}_{I} \notin \text{Def}(S)$ and\\
\>\>            the definition of imported variables\\
\>$\subseteq$\> $\text{CVar}(s_1;...;s_{k})$ by the hypothesis IH2\\
\>$\subseteq$\> $\text{CVar}(s_1;...;s_{k+1})$ by the definition of crash variables\\
\end{tabbing}

\item $\exists ``\text{input }x" \in s_{k+1}$.

\begin{tabbing}
xx\=xx\=\kill
\>\>            $\text{Imp}(s_1;...;s_{k+1}, \{{id}_{I}\})$ \\
\>=\>           $\text{Imp}(s_1;...;s_{k}, \text{Imp}(s_{k+1}, \{{id}_{I}\}))$\\
\>\>            by the definition of imported variables\\
\\
\>\>            $\text{Imp}(s_{k+1}, \{{id}_{I}\})$\\
\>$\subseteq$\> $\text{CVar}(s_{k+1})$ by the hypothesis IH2\\
\\
\>\>            $\text{Imp}(s_1;...;s_{k}, \text{Imp}(s_{k+1}, \{{id}_{I}\}))$ \\
\>$\subseteq$\> $\text{Imp}(s_1;...;s_{k}, \text{CVar}(s_{k+1}))$ by Lemma~\ref{lmm:impVarUnionLemma} \\
\>$\subseteq$\> $\text{CVar}(s_1;...;s_{k+1})$ by the definition of crash variables.
\end{tabbing}
\end{enumerate}
\end{enumerate}
\end{proof}

\begin{lemma}\label{lmm:TermInSameWayImpliesInputSeqEquivComp}
If two programs $S_1$ and $S_2$ satisfy the proof rule of termination in the same way,
then $S_1$ and $S_2$ satisfy the proof rule of terminating computation in the same way of the input sequence,
$(S_1 \equiv_{H}^S S_2) => (S_1 \equiv_{{id}_{I}}^S S_2)$.
\end{lemma}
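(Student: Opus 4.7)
The plan is to proceed by structural induction on the derivation of $S_1 \equiv_{H}^S S_2$ (equivalently, on $\text{size}(S_1) + \text{size}(S_2)$), following the case structure of Definitions~\ref{def:condTermInSameWaySimpleStmt} and~\ref{def:lCondTermInSameWay}, and in each case exhibit a derivation of $S_1 \equiv_{{id}_{I}}^S S_2$ using the corresponding case of Definitions~\ref{def:localcondForEquivTermCompSimpleStmt} and~\ref{def:syntactic-equivalence}. The base cases are immediate: identical simple statements use case~1 of Definition~\ref{def:localcondForEquivTermCompSimpleStmt}; two input statements $\text{input }{id}_1$ and $\text{input }{id}_2$ with matching types satisfy case~2(a) since ${id}_{I} \notin \{{id}_1, {id}_2\}$; and an output or non-mismatching assignment does not define ${id}_{I}$, so case~2(b) applies.

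The central observation tying everything together is the interplay between ${id}_{I}$ and $\text{TVar}(\cdot)$. By Lemma~\ref{lmm:inputVarNOtInDefStmtSeqWithoutInputStmt}, a block without an input statement does not define ${id}_{I}$, so equivalent computation of ${id}_{I}$ is automatic. By Lemma~\ref{lmm:inputVarInStmtSeqWithInputStmt} and Lemma~\ref{lmm:impInputVarASubsetOfCVar}, whenever a block $S$ does contain an input statement we have ${id}_{I} \in \text{Def}(S)$ and $\text{Imp}(S, \{{id}_{I}\}) \subseteq \text{CVar}(S) \subseteq \text{TVar}(S)$. Combined with the fact that the sequence and while cases of $\equiv_{H}^S$ explicitly require $S_1'$ and $S_2'$ (resp.\ $S_1''$ and $S_2''$) to compute all of $\text{TVar}(s_1) \cup \text{TVar}(s_2)$ (resp.\ $\text{TVar}(S_1) \cup \text{TVar}(S_2)$) equivalently, this guarantees that the quantified premise $\forall y \in \text{Imp}(\cdot, \{{id}_{I}\})\,:\, S_1' \equiv_{y}^S S_2'$ required by cases~1(c) and 2(a) of Definition~\ref{def:syntactic-equivalence} is already in hand.

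Walking through the inductive cases: the If--If case splits on whether all four branches are skip (then $\text{Def}\cap\{{id}_{I}\}=\emptyset$) or not (apply IH branchwise and use case~1(b)). The while--while case applies IH to the bodies to obtain $S_1'' \equiv_{{id}_{I}}^S S_2''$, then uses the $\text{TVar}$-computation premise together with Lemma~\ref{lmm:impInputVarASubsetOfCVar} to supply the $\forall y$-premise of case~1(c). The concatenation case $S_1 = S_1';s_1$, $S_2 = S_2';s_2$ splits by whether the last statements are input (in which case $s_1 \equiv_{{id}_{I}}^S s_2$ holds and $\text{Imp}(s_i,\{{id}_{I}\}) \subseteq \text{TVar}(s_i)$, so case~2(a) applies) or both non-input (in which case neither defines ${id}_{I}$ and case~2(b) applies after IH). The duplicate-statement and reordering cases reuse the contradiction argument already employed inside Theorem~\ref{thm:mainTermSameWayLocal}: the side conditions $\text{Def}(\cdot) \cap \text{TVar}(\cdot) = \emptyset$ combined with Lemma~\ref{lmm:inputVarInStmtSeqWithInputStmt} rule out input statements in the duplicated or reordered positions, reducing these cases to the non-${id}_{I}$-defining subcase.

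The main obstacle will be the sequence case when neither $s_1$ nor $s_2$ defines ${id}_{I}$ yet we must produce an $\equiv_{{id}_{I}}^S$ derivation of $S_1';s_1 \equiv_{{id}_{I}}^S S_2';s_2$. Case~2(b) of Definition~\ref{def:syntactic-equivalence} is asymmetric, letting us strip one trailing non-defining statement at a time, so the derivation will require two applications of case~2(b) (first peeling $s_1$ via the ${id}_{I}\notin\text{Def}(s_1)$ disjunct to reduce to $S_1' \equiv_{{id}_{I}}^S S_2';s_2$, then peeling $s_2$) before invoking IH on $S_1' \equiv_{H}^S S_2'$. A small auxiliary claim ``appending a statement $s$ with ${id}_{I}\notin\text{Def}(s)$ to either side preserves $\equiv_{{id}_{I}}^S$'' makes this step cleanly reusable and is itself a one-line consequence of case~2(b) of Definition~\ref{def:syntactic-equivalence}.
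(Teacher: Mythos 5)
Your overall route is the same as the paper's: induction on $\text{size}(S_1)+\text{size}(S_2)$ over the cases of $\equiv_{H}^S$, with the base cases, the If, while, plain concatenation, trailing-skip and duplicate cases handled essentially as in the paper, and the same three supporting facts (Lemmas~\ref{lmm:inputVarNOtInDefStmtSeqWithoutInputStmt}, \ref{lmm:inputVarInStmtSeqWithInputStmt} and \ref{lmm:impInputVarASubsetOfCVar}, giving $\text{Imp}(S,\{{id}_{I}\}) \subseteq \text{CVar}(S) \subseteq \text{TVar}(S)$) doing the work of discharging the $\forall y\in\text{Imp}(\cdot,\{{id}_{I}\})$ premises of cases~1(c) and~2(a) of Definition~\ref{def:syntactic-equivalence}. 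Up to the last case, the proposal is sound and matches the paper.

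The gap is in the reordering case. The side condition $\text{Def}(s_1)\cap\text{TVar}(s_1')=\emptyset$ together with Lemma~\ref{lmm:inputVarInStmtSeqWithInputStmt} only rules out input statements occurring in \emph{both} $s_1$ and $s_1'$; it does not rule out an input statement occurring in exactly one of them. That situation is realizable (e.g.\ reordering an input statement with an independent output statement), and there the reordered statement containing the input \emph{does} define ${id}_{I}$, so your reduction ``to the non-${id}_{I}$-defining subcase'' fails. The paper treats this with a separate asymmetric sub-argument: w.l.o.g.\ the input is in $s_1$ only, whence (via $\text{TVar}(s_1)=\text{TVar}(s_2')$ and Corollary~\ref{coro:inputVarNotInTVarWithoutInputStmt}) it is in $s_2'$ only; one then peels the non-defining $s_1'$ by case~2(b) of Definition~\ref{def:syntactic-equivalence}, matches $s_1$ against $s_2'$ by case~2(a) using the IH, and separately verifies that $S_1'$ and $S_2';s_2$ equivalently compute $\text{Imp}(s_1;s_1',\{{id}_{I}\})\cup\text{Imp}(s_2',\{{id}_{I}\})$, which in turn needs $\text{Def}(s_2)\cap\text{Imp}(s_2',\{{id}_{I}\})=\emptyset$, again obtained from Lemma~\ref{lmm:impInputVarASubsetOfCVar} and the reordering side conditions. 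Without this subcase spelled out, the reordering case of your induction is not covered.
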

\begin{proof}
By induction on $\text{size}(S_1) + \text{size}(S_2)$.

\noindent{Base case}.
$S_1$ and $S_2$ are simple statements.

There are three cases.
\begin{enumerate}
\item $S_1$ and $S_2$ are ``skip":
      $S_1$ = $S_2$ = ``skip";

\item $S_1$ and $S_2$ are input statement: $S_1=``\text{input }{id}_1", S_2 = ``\text{input }{id}_2"$;

\item $s_1$ and $s_2$ are with the same expression:
      $s_1 = ``\text{output }e"$ or $``{id}_1 := e"$, $s_2 = ``\text{output }e"$ or $``{id}_2 := e"$.

By definition of the proof rule of equivalent computation, the lemma holds in above three cases.
\end{enumerate}

\noindent{Induction step}.

The hypothesis IH is that the lemma holds when $\text{size}(S_1) + \text{size}(S_2) = k \geq 2$.

Then, we show that the lemma holds when $\text{size}(S_1) + \text{size}(S_2) = k+1$.
The proof is a case analysis of the cases in the proof rule of termination in the same way.
\begin{enumerate}
\item $S_1$ and $S_2$ are one statement and one of the followings holds.

\begin{enumerate}
\item $S_1$ = ``If($e$) then \{$S_1^t$\} else \{$S_1^f$\}",
      $S_2$ = ``If($e$) then \{$S_2^t$\} else \{$S_2^f$\}" and one of the followings holds:

\begin{enumerate}
\item $S_1^t, S_1^f, S_2^t, S_2^f$ are all sequences of ``skip";

By Lemma~\ref{lmm:inputVarNOtInDefStmtSeqWithoutInputStmt}, ${id}_{I} \notin \text{Def}(S_1) \cap \text{Def}{S_2}$.
The lemma holds.

\item At least one of $S_1^t, S_1^f, S_2^t, S_2^f$ is not a sequence of ``skip" such that:

      $(S_1^t \equiv_{H}^S S_2^t) \land (S_1^f \equiv_{H}^S S_2^f)$;

Because $\text{size}(S_1) = 1 + \text{size}(S_1^t) + \text{size}(S_1^f)$, $\text{size}(S_2) = 1 + \text{size}(S_2^t) + \text{size}(S_2^f)$. Therefore, $\text{size}(S_1^t) + \text{size}(S_2^t) < k, \text{size}(S_1^f) + \text{size}(S_2^f) < k$.
By the induction hypothesis IH, $(S_1^t \equiv_{{id}_{I}}^S S_2^t) \land (S_1^f \equiv_{{id}_{I}}^S S_2^f)$.
Then, the lemma holds by the definition of $S_1 \equiv_{{id}_{I}}^S S_2$.

\end{enumerate}

\item $S_1$ = ``$\text{while}_{\langle n_1\rangle} (e) \{S_1''\}$",
      $S_2$ = ``$\text{while}_{\langle n_2\rangle} (e) \{S_2''\}$" and both of the followings hold:

\begin{itemize}
\item $S_1'' \equiv_{H}^S S_2''$;

\item $S_1''$ and $S_2''$ have equivalent computation of $\text{TVar}(S_1) \cup \text{TVar}(S_2)$;
\end{itemize}

By the induction hypothesis IH, $S_1'' \equiv_{{id}_{I}}^S S_2''$.
In addition, by Corollary~\ref{coro:sameTermVarFromEquivTerm}, $\text{TVar}(S_1) = \text{TVar}(S_2)$.
There are two cases.
\begin{enumerate}
\item ${id}_{I} \in \text{TVar}(S_1) = \text{TVar}(S_2)$.

We show that ${id}_{I} \in \text{Def}(S_1) \cap \text{Def}(S_2)$.
If there is no input statement in $S_1$ or $S_2$, then, by Corollary~\ref{coro:inputVarNotInTVarWithoutInputStmt},
${id}_{I} \notin \text{TVar}(S_1) \cap \text{TVar}(S_2)$. A contradiction.
Thus, there is input statement in $S_1$ and $S_2$, by Lemma~\ref{lmm:inputVarInStmtSeqWithInputStmt},
${id}_{I} \in \text{Def}(S_1) \cap \text{Def}(S_2)$.

By Lemma~\ref{lmm:impInputVarASubsetOfCVar}, $\text{Imp}(S_1, \{{id}_{I}\}) \subseteq \text{CVar}(S_1)$.
Similarly, $\text{Imp}(S_2, \{{id}_{I}\}) \subseteq \text{TVar}(S_2)$.
Hence, loop bodies of $S_1$ and $S_2$ equivalently compute every of the imported variables in $S_1$ and $S_2$ relative to the input sequence variable,
$\forall x \in \text{Imp}(S_1, \{{id}_{I}\}) \cup \text{Imp}(S_2, \{{id}_{I}\})$, $S_1'' \equiv_{x}^S S_2''$.
Thus, the lemma holds.

\item ${id}_{I} \notin \text{TVar}(S_1) = \text{TVar}(S_2)$.

Then there is no input statement in $S_1$ and $S_2$. Otherwise, by Lemma~\ref{lmm:inputVarInStmtSeqWithInputStmt},
$({id}_{I} \in \text{CVar}(S_1)) \vee ({id}_{I} \in \text{CVar}(S_2))$. A contradiction.
Then by Lemma~\ref{lmm:inputVarNOtInDefStmtSeqWithoutInputStmt}, ${id}_{I} \notin (\text{Def}(S_1) \cap \text{Def}(S_2))$.
Hence, the lemma holds.
\end{enumerate}

\end{enumerate}

\item $S_1$ and $S_2$ are not both one statement and one of the followings holds:

\begin{enumerate}
\item $S_1=S_1';s_1$ and $S_2 = S_2';s_2$ and all of the followings hold:

\begin{itemize}
\item $S_1' \equiv_{H}^S S_2'$;

\item $S_1'$ and $S_2'$ have equivalent computation of $\text{TVar}(s_1) \cup \text{TVar}(s_2)$;

\item
$s_1 \equiv_{H}^S s_2$ where $s_1$ and $s_2$ are not ``skip";
\end{itemize}

By the induction hypothesis IH, $s_1 \equiv_{{id}_{I}}^S s_2$.
By Corollary~\ref{coro:sameTermVarFromEquivTerm}, $\text{TVar}(s_1) = \text{TVar}(s_2)$.
There are two cases.
\begin{enumerate}
\item ${id}_I \in \text{TVar}(s_1) = \text{TVar}(s_2)$

Then there is input statement in $s_1$ and $s_2$. Otherwise, by Lemma~\ref{coro:inputVarNotInTVarWithoutInputStmt},
${id}_I \notin \text{TVar}(s_1) = \text{TVar}(s_2)$. A contradiction.
Then, by Lemma~\ref{lmm:inputVarInStmtSeqWithInputStmt}, ${id}_I \in \text{Def}(s_1) \cap \text{Def}(s_2)$.
By Lemma~\ref{lmm:sameImpfromEquivCompCond},
$\text{Imp}(s_1, \{{id}_I\}) = \text{Imp}(s_2, \{{id}_I\})$.

By Lemma~\ref{lmm:impInputVarASubsetOfCVar},
$\text{Imp}(s_1, \{{id}_I\}) \subseteq \text{CVar}(s_1, \{{id}_I\})$.
Therefore, $S_1'$ and $S_2'$ equivalently compute $\text{Imp}(s_1, \{{id}_I\}) \cup \text{Imp}(s_2, \{{id}_I\})$.
The lemma holds.

\item ${id}_I \notin \text{TVar}(s_1) = \text{TVar}(s_2)$.

Then, there is no input statement in $s_1$ and $s_2$.
Otherwise, by Lemma~\ref{lmm:inputVarInStmtSeqWithInputStmt}, ${id}_I \in \text{CVar}(s_1) = \text{CVar}(s_2)$.
A contradiction.
Then, by Lemma~\ref{lmm:inputVarNOtInDefStmtSeqWithoutInputStmt}, ${id}_I \notin \text{Def}(s_1) \cup \text{Def}(s_2)$.
By the induction hypothesis IH, $S_1' \equiv_{{id}_I}^S S_2'$. The lemma holds.
\end{enumerate}

\item One last statement is ``skip":
 W.l.o.g., $\big((S_1' \equiv_{H}^S S_2) \wedge (s_1 = ``\text{skip}")\big)$

By the induction hypothesis, $S_1' \equiv_{{id}_{I}}^S S_2$. By definition, ${id}_{I} \notin \text{Def}(s_1)$.
The lemma holds.

\item  One last statement is a ``duplicate" statement such that
 one of the followings holds:

W.l.o.g., $S_1 = S_1';s_1';S_1'';s_1$ and all of the followings hold:
\begin{itemize}
\item $S_1';s_1';S_1'' \equiv_{H}^S S_2$;

\item $s_1' \equiv_{H}^S s_1$;

\item $\text{Def}(s_1';S_1'') \cap \text{TVar}(s_1) = \emptyset$;

\item $s_2 \neq ``\text{skip}"$.
\end{itemize}

By the induction hypothesis, $S_1';s_1';S_1'' \equiv_{{id}_I}^S S_2$.
In the proof of Theorem~\ref{thm:mainTermSameWaySimpleStmt}, there is no input statement in $s_2$.
Because $\forall x\,:\, ``\text{input }x" \notin s_2$, by Lemma~\ref{lmm:inputVarNOtInDefStmtSeqWithoutInputStmt},
${id}_I \notin \text{Def}(s_1)$.
The lemma holds.

\item $S_1 = S_1';s_1;s_1';$ and $S_2 = S_2';s_2;s_2'$ where $s_1$ and $s_2$ are reordered and all of the followings hold:

\begin{itemize}
\item ${S_1'} \equiv_{H}^S {S_2'}$;

\item $S_1'$ and $S_2'$ have equivalent computation of $\text{TVar}(s_1;s_1') \cup \text{TVar}(s_2;s_2')$.

\item ${s_1} \equiv_{H}^S {s_2'}$;

\item ${s_1'} \equiv_{H}^S {s_2}$;

\item $\text{Def}(s_1) \cap \text{TVar}(s_1') = \emptyset$;

\item $\text{Def}(s_2) \cap \text{TVar}(s_2') = \emptyset$;
\end{itemize}

In the proof of Theorem~\ref{thm:mainTermSameWayLocal}, we showed that $s_1$ and $s_2$ do not both include input statement,
$s_2$ and $s_2'$ do not both include input statement.
There are two subcases.
\begin{enumerate}
\item There are no input statements in both $s_1$ and $s_1'$.

We show that there are no input statements in both $s_2$ and $s_2'$.
By Corollary~\ref{coro:sameTermVarFromEquivTerm}, $\text{TVar}(s_1) = \text{TVar}(s_2')$ and $\text{TVar}(s_1') = \text{TVar}(s_2)$.
By Corollary~\ref{coro:inputVarNotInTVarWithoutInputStmt}, ${id}_I \notin \text{TVar}(s_1) \cup \text{TVar}(s_1')$.
Thus, ${id}_I \notin \text{TVar}(s_2) \cup \text{TVar}(s_2')$.
If there is input statement in $s_2$ or $s_2'$, then, by Lemma~\ref{lmm:inputVarInStmtSeqWithInputStmt},
${id}_I \notin \text{TVar}(s_2) \cup \text{TVar}(s_2')$. A contradiction.
In summary, there are no input statements in both $s_2$ and $s_2'$.

By Lemma~\ref{lmm:inputVarNOtInDefStmtSeqWithoutInputStmt},
${id}_I \notin\text{Def}(s_1;s_1')$ and ${id}_I \notin\text{Def}(s_2;s_2')$.
By the induction hypothesis, ${S_1'} \equiv_{{id}_I}^S {S_2'}$.
Therefore, the lemma holds.

\item W.l.o.g, there are input statements in $s_1$ only.

By similar argument in the proof of Theorem~\ref{thm:mainTermSameWayLocal} that $s_1$ and $s_2$ do not both include input statements, we can show that there is no input statement in $s_2$ and there is input statement in $s_2'$.

In the following, the proof is of two steps.
\begin{enumerate}
\item We show that $s_1;s_1' \equiv_{{id}_I}^S s_2'$.

By the induction hypothesis IH, $s_1 \equiv_{{id}_I}^S s_2'$.
Because there is no input statement in $s_1'$, then by Lemma~\ref{lmm:inputVarNOtInDefStmtSeqWithoutInputStmt},
${id}_I \notin \text{Def}(s_1')$. Thus, $s_1;s_1' \equiv_{{id}_I}^S s_2'$ by definition.

\item We show that $S_1'$ and $S_2';s_2$ equivalently compute $\text{Imp}(s_1;s_1', \{{id}_I\}) \cup \text{Imp}(s_2', \{{id}_I\})$.

The argument is of two parts. First, we need to show that $\text{Def}(s_2) \cap \text{Imp}(s_2', \{{id}_I\}) = \emptyset$.
By Lemma~\ref{lmm:impInputVarASubsetOfCVar}, $\text{Imp}(s_2', \{{id}_I\}) \subseteq \text{CVar}(s_2')$.
Thus, $\text{Imp}(s_2', \{{id}_I\}) \subseteq \text{TVar}(s_2')$.
By assumption, $\text{Def}(s_2) \cap \text{TVar}(s_2') = \emptyset$.
Then, $\text{Def}(s_2) \cap \text{Imp}(s_2', \{{id}_I\}) = \emptyset$.
By Lemma~\ref{lmm:sameImpfromEquivCompCond}, $\text{Imp}(s_1;s_1', \{{id}_I\}) = \text{Imp}(s_2', \{{id}_I\})$.
Thus, $\text{Def}(s_2) \cap \text{Imp}(s_1;s_1', \{{id}_I\}) = \emptyset$.

Second, we show that $\text{Imp}(s_2', \{{id}_I\}) \subseteq \text{TVar}(s_2;s_2')$ and
$\text{Imp}(s_1;s_1', \{{id}_I\}) \subseteq \text{TVar}(s_1;s_1')$.
By Lemma~\ref{lmm:impInputVarASubsetOfCVar}, $\text{Imp}(s_1;s_1', \{{id}_I\}) \subseteq \text{TVar}(s_1;s_1')$ and
$\text{Imp}(s_2', \{{id}_I\}) \subseteq \text{TVar}(s_2')$.
Then we show that $\text{TVar}(s_2') \subseteq \text{TVar}(s_2;s_2')$.
We need to show that $\text{CVar}(s_2') \subseteq \text{CVar}(s_2;s_2')$ and $\text{LVar}(s_2') \subseteq \text{LVar}(s_2;s_2')$.

\begin{tabbing}
xx\=xx\=\kill
\>\>            $\text{CVar}(s_2;s_2')$ \\
\>=\>           $\text{CVar}(s_2) \cup \text{Imp}(s_2, \text{CVar}(s_2'))$\\
\>\>            by the definition of crash variables\\
\\
\>\>            $\text{Imp}(s_2, \text{CVar}(s_2'))$\\
\>=\>           $\text{CVar}(s_2')$ (1) by the assumption\\
\>\>            $\text{Def}(s_2) \cap \text{TVar}(s_2') = \emptyset$\\
\\
\>\>            $\text{CVar}(s_2) \cup \text{Imp}(s_2, \text{CVar}(s_2'))$ \\
\>=\>           $\text{CVar}(s_2) \cup \text{CVar}(s_2')$ by (1)
\end{tabbing}
Similarly, $\text{LVar}(s_2') \subseteq \text{LVar}(s_2;s_2')$.
Thus, $\text{TVar}(s_2') \subseteq \text{TVar}(s_2;s_2')$.

By assumption, $\forall x \in \text{Imp}(s_1;s_1', \{{id}_I\}) \cup \text{Imp}(s_2', \{{id}_I\})\,:\,
                S_1' \equiv_{x}^S S_2'$.
In addition, $\text{Def}(s_2) \cap (\text{Imp}(s_1;s_1', \{{id}_I\}) \cup \text{Imp}(s_2', \{{id}_I\})) = \emptyset$.
Thus, $\forall x \in \text{Imp}(s_1;s_1', \{{id}_I\}) \cup \text{Imp}(s_2', \{{id}_I\})\,:\,
                S_1' \equiv_{x}^S S_2';s_2$.
The lemma holds.
\end{enumerate}

\end{enumerate}

\end{enumerate}
\end{enumerate}
\end{proof}

\begin{lemma}\label{lmm:TermInSameWayConsumeSameInputs}
If two programs $S_1$ and $S_2$ satisfy the proof rule of termination in the same way, and $S_1$ and $S_2$ both terminate when started in their initial states with crash flags not set, $\mathfrak{f}_1 = \mathfrak{f}_2 = 0$, whose value stores agree on values of variables of the termination deciding variables of $S_1$ and $S_2$,
$\forall x \in \text{TVar}(S_1) \cup \text{TVar}(S_2), \vals_1(x) = \vals_2(x)$, and $S_1$ and $S_2$ are fed with the same infinite input sequence, $\vals_1({id}_I) = \vals_2({id}_I)$,
$(S_1, m_1(\mathfrak{f}_1, \vals_1) ->* (\text{skip}, m_1'(\vals_1'))$ and
$(S_2, m_2(\mathfrak{f}_2, \vals_2) ->* (\text{skip}, m_2'(\vals_2'))$,
then the execution of $S_1$ and $S_2$ consume the same number of input values,
$\vals_1'({id}_I) = \vals_2'({id}_I)$.
\end{lemma}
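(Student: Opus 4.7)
The plan is to reduce the statement to the already-established Theorem~\ref{thm:equivCompMain} on equivalent computation for terminating programs, using ${id}_I$ as the variable of interest. The first step is to translate the termination-in-the-same-way hypothesis into an equivalent-computation hypothesis for ${id}_I$: by Lemma~\ref{lmm:TermInSameWayImpliesInputSeqEquivComp}, $S_1 \equiv_{H}^{S} S_2$ gives $S_1 \equiv_{{id}_I}^{S} S_2$ directly.

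Next I would verify the side-condition Theorem~\ref{thm:equivCompMain} needs, namely that the initial value stores $\vals_1$ and $\vals_2$ agree on $\text{Imp}(S_1, \{{id}_I\}) \cup \text{Imp}(S_2, \{{id}_I\})$. I would handle each $S_i$ by cases on whether it contains an input statement. If no input statement occurs in $S_i$, then Lemma~\ref{lmm:inputVarNOtInDefStmtSeqWithoutInputStmt} gives ${id}_I \notin \text{Def}(S_i)$, so by the imported-variables definition $\text{Imp}(S_i, \{{id}_I\}) = \{{id}_I\}$, and the required agreement is exactly the hypothesis $\vals_1({id}_I) = \vals_2({id}_I)$. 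If some input statement does occur in $S_i$, then Lemma~\ref{lmm:impInputVarASubsetOfCVar} yields $\text{Imp}(S_i, \{{id}_I\}) \subseteq \text{CVar}(S_i) \subseteq \text{TVar}(S_i)$, so agreement comes from the assumption that the two stores agree on $\text{TVar}(S_1) \cup \text{TVar}(S_2)$.

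Having discharged the side-condition, Theorem~\ref{thm:equivCompMain} applies to $x = {id}_I$ with the two terminating executions given by hypothesis, and concludes $\vals_1'({id}_I) = \vals_2'({id}_I)$, which is the desired claim. The only mildly delicate point is the case split on the presence of an input statement when assembling the agreement on $\text{Imp}(\cdot, \{{id}_I\})$; everything else is a direct invocation of prior results. I do not expect any substantive obstacle, since the two key machinery lemmas (Lemma~\ref{lmm:TermInSameWayImpliesInputSeqEquivComp} and Lemma~\ref{lmm:impInputVarASubsetOfCVar}) were precisely designed to bridge the termination and input-sequence arguments.
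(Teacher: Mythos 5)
Your proposal is correct and follows essentially the same route as the paper's proof: Lemma~\ref{lmm:TermInSameWayImpliesInputSeqEquivComp} to obtain $S_1 \equiv_{{id}_I}^S S_2$, Lemma~\ref{lmm:impInputVarASubsetOfCVar} to place $\text{Imp}(S_i, \{{id}_I\})$ inside the termination deciding variables, and Theorem~\ref{thm:equivCompMain} to conclude $\vals_1'({id}_I) = \vals_2'({id}_I)$. Your explicit case split on whether $S_i$ contains an input statement is in fact slightly more careful than the paper's version, which invokes Lemma~\ref{lmm:impInputVarASubsetOfCVar} without discharging its hypothesis that an input statement is present.
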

\begin{proof}
By Lemma~\ref{lmm:TermInSameWayImpliesInputSeqEquivComp}, $S_1 \equiv_{{id}_I}^S S_2$.
By Lemma~\ref{lmm:impInputVarASubsetOfCVar}, $\text{Imp}(S_1, {id}_I) \subseteq \text{CVar}(S_1)$ and
$\text{Imp}(S_2, {id}_I) \subseteq \text{CVar}(S_2)$.
By assumption, $\forall x \in \text{Imp}(S_1, {id}_I) \cup \text{Imp}(S_2, {id}_I)\,:\, \vals_1(x) = \vals_2(x)$.
By Theorem~\ref{thm:equivCompMain}, $\vals_1'({id}_I) = \vals_2'({id}_I)$.
\end{proof} 

\paragraph{Theorem of two loop statements terminating in the same way}

\begin{lemma}\label{lmm:loopTermInSameWayIntermediate}
Let $s_1 = ``\text{while}_{\langle n_1\rangle}(e) \{S_1\}"$ and

    \noindent$s_2 = ``\text{while}_{\langle n_2\rangle}(e) \{S_2\}"$ be two while statements with the same set of termination deciding variables in program $P_1$ and $P_2$ respectively,
    whose bodies $S_1$ and $S_2$ satisfy the proof rule of equivalently computation of variables in $\text{TVar}(s)$, and $S_1$ and $S_2$ terminate in the same way when started in states with crash flags not set and agreeing on values of variables in $\text{TVar}(S_1) \cup \text{TVar}(S_2)$:
\begin{itemize}
\item $\text{TVar}(s_1) = \text{TVar}(s_2) = \text{TVar}(s)$;

\item $\forall x \in \text{TVar}(s) \, :\, S_1 \equiv_{x}^S S_2$;

\item $\forall m_{S_1}(\mathfrak{f}_{S_1}, \vals_{S_1})\, m_{S_2}(\mathfrak{f}_{S_2}, \vals_{S_2}):$

$({((\forall z\in \text{TVar}(S_1) \cup \text{TVar}(S_2)), \vals_{S_1}(z) = \vals_{S_2}(z))} \wedge
  (\mathfrak{f}_{S_1} = \mathfrak{f}_{S_2} = 0)) => $

$(S_1, m_{S_1}(\mathfrak{f}_{S_1}, \vals_{S_1})) \equiv_{H} (S_2, m_{S_2}(\mathfrak{f}_{S_2}, \vals_{S_2}))$.
\end{itemize}

If $s_1$ and $s_2$ start in the state
$m_1(\mathfrak{f}_1, \text{loop}_c^1, \vals_1)$ and

\noindent$m_2(\mathfrak{f}_2, \text{loop}_c^2, \vals_2)$ respectively in which
  crash flags are not set, $\mathfrak{f}_1 = \mathfrak{f}_2 = 0$,
  $s_1$ and $s_2$ have not already executed, $\text{loop}_c^1(n_1) = \text{loop}_c^2(n_2) = 0$,
  value stores $\vals_1$ and $\vals_2$ agree on values of variables in $\text{TVar}(s)$,
  $\forall x \in \text{TVar}(s)\,:\, \vals_1(x) = \vals_2(x)$,
then, for any positive integer $i$, one of the following holds:
\begin{enumerate}
\item{The loop counters for $s_1$ and $s_2$ are less than $i$ where $s_1$ and $s_2$ terminate in the same way:}

$\forall m_{1}'\, m_{2}'\,:\,
({s_1}, m_1) ->* ({S_1'}, m_{1}'(\text{loop}_c^{1'}))$ and
$({s_2}, m_2) ->* ({S_2'}, m_{2}'(\text{loop}_c^{2'}))$,
$\text{loop}_c^{1'}(n_1) < i$ and $m_c^{2'}(n_2) < i$ and one of the following holds:
\begin{enumerate}
\item $s_1$ and $s_2$ both terminate:

 $(s_1, m_1) ->* (\text{skip}, m_1'')$ and
 $(s_2, m_2) ->* (\text{skip}, m_2'')$.

\item $s_1$ and $s_2$ both do not terminate:

$\forall k>0\,:\, (s_1, m_1)$ {\kStepArrow [k] } $({S_{1_k}}, m_{1_k})$ and
                 $(s_2, m_2)$ {\kStepArrow [k] } $({S_{2_k}}, m_{2_k})$ in which
                 ${S_{1_k}} \neq \text{skip}, {S_{2_k}} \neq \text{skip}$.
\end{enumerate}

\item{The loop counters for $s_1$ and $s_2$ are less than or equal to $i$ where $s_1$ and $s_2$ do not terminate such that
there are no configurations $(s_1, m_{1_i})$ and $(s_2, m_{2_i})$ reachable from $(s_1, m_1)$ and $(s_2, m_2)$, respectively, in which  crash flags are not set, the loop counters of $s_1$ and $s_2$ are equal to $i$, and value stores agree on the values of variables in $\text{TVar}(s)$:}

\begin{itemize}
\item $\forall m_1'\,m_2' \,:\,       ({s_1}, m_1) ->* (S_{1}, m_1'(\text{loop}_c^{1'})),
                                      ({s_2}, m_2) ->* (S_{2}, m_2'(\text{loop}_c^{2'}))$
 where

 \noindent$\text{loop}_c^{1'}(n_1) \leq i, \text{loop}_c^{2'}(n_2) \leq i$;

\item $\forall k>0 \,:\,$

\noindent$({s_1}, m_1)$ {\kStepArrow [k] } $(S_{1_k}, m_{1_k}),$
$({s_2}, m_2)$ {\kStepArrow [k] } $(S_{2_k}, m_{2_k})$ where

\noindent$S_{1_k} \neq \text{skip}, S_{2_k} \neq \text{skip}$; and

\item  $\nexists (s_1, m_{1_i})\, (s_2, m_{2_i})\,:\,$

\noindent$({s_1}, m_1) ->* ({s_1}, m_{1_i}(\mathfrak{f}_1, \text{loop}_c^{1_i}, \vals_{1_i})) \wedge$

\noindent$({s_2}, m_2) ->* ({s_2}, m_{2_i}(\mathfrak{f}_2, \text{loop}_c^{2_i}, \vals_{2_i}))$ where
      \begin{itemize}
        \item $\mathfrak{f}_1 = \mathfrak{f}_2 = 0$; and

        \item $\text{loop}_c^{1_i}(n_1) = \text{loop}_c^{2_i}(n_2) = i$; and

        \item $\forall x \in \text{TVar}(s)\,:\,
        \vals_{1_i}(x)\allowbreak = \vals_{2_i}(x)$.
      \end{itemize}
\end{itemize}

\item{There are two configurations $(s_1, m_{1_i})$ and $(s_2, m_{2_i})$ reachable from $(s_1, m_1)$ and $(s_2, m_2)$, respectively, in which both crash flags are not set, the loop counters of $s_1$ and $s_2$ are equal to $i$ and value stores agree on the values of variables in $\text{TVar}(s)$, and for every state in execution $(s_1, m_1) ->* (s_1, m_{1_i})$ or
    $(s_2, m_2) ->* (s_2, m_{2_i})$, the loop counters for $s_1$ and $s_2$ are less than or equal to $i$ respectively:}

 $\exists (s_1, m_{1_i})\, (s_2, m_{2_i})\,:\,
 ({s_1}, m_1) ->* ({s_1}, m_{1_i}(\mathfrak{f}_1, \text{loop}_c^{1_i}, \vals_{1_i})) \wedge
 ({s_2}, m_2) ->* ({s_2}, m_{2_i}(\mathfrak{f}_2, \text{loop}_c^{2_i}, \vals_{2_i}))$ where
\begin{itemize}
 \item $\mathfrak{f}_1 = \mathfrak{f}_2 = 0$; and

 \item $\text{loop}_c^{1_i}(n_1) = \text{loop}_c^{2_i}(n_2) = i$; and

 \item $\forall x \in \text{TVar}(s)\,:\,
               \vals_{1_i}(x) = \vals_{2_i}(x)$; and

 \item $\forall m_1'\,:\,
 ({s_1}, m_1) ->* (S_1', m_1'(m_c^{1'})) ->* ({s_1}, m_{1_i}), \;$

\noindent$\text{loop}_c^{1'}(n_1) \leq i$; and

 \item $\forall m_2'\,:\,
 ({s_2}, m_2) ->* (S_2', m_2'(m_c^{2'})) ->* ({s_2}, m_{2_i}), \;$

\noindent$\text{loop}_c^{2'}(n_2) \leq i$;
\end{itemize}
\end{enumerate}
\end{lemma}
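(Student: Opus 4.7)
The plan is to prove this lemma by induction on $i$, parallel to the earlier Lemma~\ref{lmm:equivTermCompSameLoopIteration} but additionally tracking termination behavior via the three cases instead of just two.

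For the base case $i=1$: I would start with the two matching initial states (crash flags clear, counters $0$, value stores agreeing on $\text{TVar}(s)$). Since $\text{Use}(e) \subseteq \text{TVar}(s)$ (by the definition of $\text{LVar}$ and $\text{CVar}$ for a while statement whose body contains any loop/crash-sensitive constructs, and trivially otherwise), Lemma~\ref{lmm:expEvalSameVal} forces the predicate to evaluate to the same value against both stores. If the predicate evaluates to $0$ (including possibly crashing, which would leave crash flags set equally), rules Wh-F (or ECrash followed by Crash) drive both loops to terminate, or both to not-terminate, with counters never reaching $1$: this is case~1. Otherwise the Wh-T rule fires on both sides, incrementing both counters to $1$ and handing control to the bodies in states that agree on $\text{TVar}(S_1) \cup \text{TVar}(S_2)$ (this inclusion follows from $\text{TVar}$'s definition for while, which absorbs $\text{Imp}(\text{body}, \text{TVar}(\text{body}) \cup \text{Use}(e))$). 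By the hypothesis on bodies, $S_1$ and $S_2$ terminate in the same way. If they both diverge, counters stay at $1$ and no configuration of the form $(s_1,\cdot)$ with matching $\text{TVar}(s)$ values is ever reached, giving case~2. If they both terminate, Theorem~\ref{thm:equivCompMain} applied to each $x \in \text{TVar}(s)$ (using $S_1 \equiv_x^S S_2$ and agreement on the imports, which are contained in $\text{TVar}(s)$) shows that the resulting value stores still agree on $\text{TVar}(s)$; by Corollary~\ref{coro:termSeq} the execution reaches $(s_1, m_{1_1})$ and $(s_2, m_{2_1})$ with counter $1$ (unchanged by the body due to unique loop labels and Corollary~\ref{coro:loopCntRemainsSame}) and matching $\text{TVar}(s)$, giving case~3.

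For the induction step, I would assume the trichotomy holds for $i$ and derive it for $i+1$. If case~1 or case~2 held for $i$, the very same property trivially gives case~1 or case~2 for $i+1$ (counters never rise, so they never reach $i+1$ either). The interesting case is case~3 for $i$, in which we have reached $(s_1, m_{1_i})$ and $(s_2, m_{2_i})$ with matching $\text{TVar}(s)$, clear crash flags, and counter exactly $i$. From these two configurations the situation is structurally identical to the base case: I would re-run the predicate/body dichotomy, concluding that either (i) the predicate evaluates to $0$ or crashes identically, yielding case~1 (with counters staying $\leq i < i+1$ after that point, combined with the IH bound $\leq i$ on the prefix), or (ii) both bodies execute; same-way termination of the bodies then branches into divergence (case~2 for $i+1$, counter frozen at $i+1$) or double termination followed by Theorem~\ref{thm:equivCompMain} preserving $\text{TVar}(s)$ agreement, giving case~3 for $i+1$.

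The main obstacle I anticipate is the bookkeeping around the three pieces of machinery that must be lined up at every ``body handoff'' step: (a) verifying that the body's entry states agree on the imported variables needed by Theorem~\ref{thm:equivCompMain}, not merely on $\text{TVar}(s)$ itself, which requires unwinding the $\text{Imp}$-based definitions of $\text{CVar}$ and $\text{LVar}$ for the enclosing while statement; (b) ensuring the loop counter is preserved through body execution, which needs the unique-loop-label convention together with Corollary~\ref{coro:loopCntRemainsSame}; and (c) correctly classifying the ``stuck-at-crash'' behaviors under case~1 versus case~2, since by Definition~\ref{def:nontermExec} crashed executions are nonterminating, so a predicate-evaluation crash before the $i$th iteration falls under case~1 (counters strictly less than $i$) while a body-induced divergence at iteration $i+1$ falls under case~2. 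The rest is a routine application of the SOS rules and the previously established equivalent-computation and same-way-termination lemmas.
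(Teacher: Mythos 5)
Your proposal is correct and follows essentially the same route as the paper's proof: induction on $i$, a three-way case analysis on the shared predicate evaluation (crash / zero / nonzero) using $\text{Use}(e) \subseteq \text{TVar}(s)$ and Lemma~\ref{lmm:expEvalSameVal}, the same-way-termination hypothesis on the bodies to split the nonzero branch into case~2 (divergence) versus case~3 (termination with $\text{TVar}(s)$-agreement preserved via Theorem~\ref{thm:equivCompMain} and the $\text{Imp}(S_j, \text{TVar}(s)) \subseteq \text{TVar}(s)$ containment), and counter preservation via unique loop labels. The ``obstacles'' you flag are exactly the bookkeeping steps the paper carries out, so no gap remains.
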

\begin{proof}
By induction on $i$.

\noindent{\bf Base case}. $i = 1$.

By assumption, initial loop counters of $s_1$ and $s_2$ are of value zero. Initial value stores $\vals_1$ and $\vals_2$ agree on the values of variables in $\text{TVar}(s)$. Then we show one of the following cases hold:
\begin{enumerate}
\item{The loop counters for $s_1$ and $s_2$ are less than 1, $s_1$ and $s_2$ terminate in the same way:}

$\forall m_{1}'\, m_{2}'$ such that
$({s_1}, m_1) ->* ({S_1'}, m_1'(\text{loop}_c^{1'}))$ and
$({s_2}, m_2) ->* ({S_2'}, m_2'(\text{loop}_c^{2'}))$,

$m_c^{1'}(n_1) < 1$ and
$m_c^{2'}(n_2) < 1$
and one of the following holds:
\begin{enumerate}
\item $s_1$ and $s_2$ both terminate:

 $(s_1, m_1) ->* (\text{skip}, m_1'')$ and $(s_2, m_2) ->* (\text{skip}, m_2'')$.

\item $s_1$ and $s_2$ both do not terminate:

$\forall k>0,
 (s_1, m_1)$ {\kStepArrow [k] } $({S_{1_k}}, m_{1_k})$ and
$(s_2, m_2)$ {\kStepArrow [k] } $({S_{2_k}}, m_{2_k})$ in which
${S_{1_k}} \neq \text{skip}, {S_{2_k}} \neq \text{skip}$.
\end{enumerate}

\item{The loop counters for $s_1$ and $s_2$ are less than or equal to 1, and $s_1$ and $s_2$ do not terminate such that
there are no configurations $(s_1, m_{1_1})$ and $(s_2, m_{2_1})$ reachable from $(s_1, m_1)$ and $(s_2, m_2)$, respectively, in which crash flags are not set, the loop counters of $s_1$ and $s_2$ are equal to 1 and value stores agree on the values of variables in $\text{TVar}(s)$:}

\begin{itemize}
\item $\forall m_1'\,m_2' \,:\,      ({s_1}, m_1) ->* (S_{1}, m_1'(\text{loop}_c^{1'})),
                                     ({s_2}, m_2) ->* (S_{2}, m_2'(\text{loop}_c^{2'}))$ where
      $\text{loop}_c^{1'}(n_1) \leq i, \text{loop}_c^{2'}(n_2) \leq i$;

\item $\forall k>0 \,:\, ({s_1}, m_1)$ {\kStepArrow [k] } $(S_{1_k}, m_{1_k}),$
                        $({s_2}, m_2)$ {\kStepArrow [k] } $(S_{2_k}, m_{2_k})$ where
      $S_{1_k} \neq \text{skip}, S_{2_k} \neq \text{skip}$; and

      $\nexists (s_1, m_{1_1}), (s_2, m_{2_1})\,:\,
      ({s_1}, m_1) ->* ({s_1}, m_{1_1}(\mathfrak{f}_1, \text{loop}_c^{1_1}, \vals_{1_1})) \wedge
      ({s_2}, m_2) ->* ({s_2}, m_{2_1}(\mathfrak{f}_2, \text{loop}_c^{2_1}, \vals_{2_1}))$ where
      \begin{itemize}
       \item $\mathfrak{f}_1 = \mathfrak{f}_2 = 0$; and

       \item $\text{loop}_c^{1_1}(n_1) = \text{loop}_c^{2_1}(n_2) = 1$; and

       \item $\forall x \in \text{TVar}(s)\,:\,
       \vals_{1_1}(x) = \vals_{2_1}(x)$.
      \end{itemize}
\end{itemize}

\item{There are two configurations $(s_1, m_{1_1})$ and $(s_2, m_{2_1})$ reachable from $(s_1, m_1)$ and $(s_2, m_2)$ respectively, in which the loop counters of  $s_1$ and $s_2$ are equal to 1 and value stores agree on the values of variables in $\text{TVar}(s)$ and, for every state in execution, $(s_1, m_1) ->* (s_1, m_{1_1})$ or $(s_2, m_2) ->* (s_2, m_{2_1})$ the loop counters for $s_1$ and $s_2$ are less than or equal to 1 respectively:}

 $\exists (s_1, m_{1_1})\, (s_2, m_{2_1})\,:\,
 ({s_1}, m_1) ->* ({s_1}, m_{1_1}(\mathfrak{f}_1, \text{loop}_c^{1_1}, \vals_{1_1})) \wedge
 ({s_2}, m_2) ->* ({s_2}, m_{2_1}(\mathfrak{f}_2, \text{loop}_c^{2_1}, \vals_{2_1}))$ where
\begin{itemize}
 \item $\mathfrak{f}_1 = \mathfrak{f}_2 = 0$; and

 \item $\text{loop}_c^{1_1}(n_1) = \text{loop}_c^{2_1}(n_1) = 1$; and

 \item $\forall x \in \text{TVar}(s)\,:\,
 \vals_{1_1}(x) = \vals_{2_1}(x)$; and

 \item $\forall m_1'\,:\, ({s_1}, m_1) ->* (S_1', m_1'(\text{loop}_c^{1'})) ->* ({s_1}, m_{1_1}), \;
 \text{loop}_c^{1'}(n_1) \leq 1$; and

 \item $\forall m_2'\,:\, ({s_2}, m_2) ->* (S_2', m_2'(\text{loop}_c^{2'})) ->* ({s_2}, m_{2_1}), \;
 \text{loop}_c^{2'}(n_2) \leq 1$.
\end{itemize}
\end{enumerate}

We show evaluations of the predicate expression of $s_1$ and $s_2$ w.r.t value stores $\vals_1$ and $\vals_2$ produce same value.
By the definition of loop variables,
$\text{LVar}(s_1) = \bigcup_{j\geq0}\text{Imp}(S_1^j, \text{LVar}(S_1) \cup \text{Use}(e))$.
By our notation of $S^0$, $S_1^0 = \text{skip}$.
By the definition of loop variables, $\text{Use}(e) \subseteq \text{LVar}(s) = \text{LVar}(s_1)$. By assumption, value stores $\vals_1$ and $\vals_2$ agree on the values of the variables in $\text{Use}(e)$.
By Lemma~\ref{lmm:expEvalSameVal}, the predicate expression $e$ of $s_1$ and $s_2$ evaluates to same value $v$ w.r.t value stores $\vals_1, \vals_2$,
$\mathcal{E}'\llbracket e\rrbracket\vals_1 = \mathcal{E}'\llbracket e\rrbracket\vals_2$.
Then there are three possibilities.
\begin{enumerate}
\item{$\mathcal{E}'\llbracket e\rrbracket\vals_1 =
       \mathcal{E}'\llbracket e\rrbracket\vals_2 = (\text{error}, *)$}

The execution from $({s_1}, m_1(\mathfrak{f}_1, \text{loop}_c^{1}, \vals_1))$ proceeds as follows.

\begin{tabbing}
xx\=xx\=\kill
\>\>                   $(s_1,m_1(\mathfrak{f}_1, \text{loop}_c^{1}, \vals_1))$\\
\>= \>                 $(\text{while}_{\langle n_1\rangle}(e) \; \{S_1\}, m_1(\mathfrak{f}_1, \text{loop}_c^{1}, \vals_1))$ \\
\>$->$\>               $(\text{while}_{\langle n_1\rangle}((\text{error}, *)) \; \{S_1\},
                        m_1(\mathfrak{f}_1, \text{loop}_c^{1}, \vals_1))$ by the EEval' rule \\
\>$->$\>               $(\text{while}_{\langle n_1\rangle}(0) \; \{S_1\},
                        m_1(1/\mathfrak{f}_1))$ by the ECrash rule \\
\>{\kStepArrow [k] }\> $(\text{while}_{\langle n_1\rangle}(0) \; \{S_1\},
                        m_1(1/\mathfrak{f}_1, \text{loop}_c^{1}, \vals_1))$,\\
\>\>                    for any $k\geq0$, by the Crash rule.
\end{tabbing}

Similarly, the execution of $s_2$ started in the state
$m_2(\mathfrak{f}_2, \text{loop}_c^{2}, \vals_2)$ does not terminate.

The loop counters of $s_1$ and $s_2$ are less than 1:

$\forall m_{1}'\, m_{2}'\,:\,
 ({s_1}, m_1) ->* ({S_1'}, m_1'(\text{loop}_c^{1'}))$ and
$({s_2}, m_2) ->* ({S_2'}, m_2'(\text{loop}_c^{2'}))$ where $\text{loop}_c^{1'}(n_1) < 1$ and $\text{loop}_c^{2'}(n_2) < 1$.

Besides, $s_1$ and $s_2$ both do not terminate when started in states $m_1$ and $m_2$,
$\forall k>0\,:\,
 (s_1, m_1)$ {\kStepArrow [k] } $({S_{1_k}}, m_{1_k})$ and
$(s_2, m_2)$ {\kStepArrow [k] } $({S_{2_k}}, m_{2_k})$ in which
${S_{1_k}} \neq \text{skip}, {S_{2_k}} \neq \text{skip}$.

\item{$\mathcal{E}'\llbracket e\rrbracket\vals_1 =
       \mathcal{E}'\llbracket e\rrbracket\vals_2 = (0, v_\mathfrak{of})$}

The execution of $s_1$ proceeds as follows.

\begin{tabbing}
xx\=xx\=\kill
\>\>     $({s_1}, m_{1}(\text{loop}_c^1,\vals_1))$\\
\>= \>   $(\text{while}_{\langle n_1\rangle} (e) \; \{S_1\}, {m_{1}(\text{loop}_c^{1})})$\\
\>$->$\> $(\text{while}_{\langle n_1\rangle} ((0, v_\mathfrak{of})) \; \{S_1\}, {m_{1}(\text{loop}_c^{1})})$ by the EEval' rule\\
\>$->$\> $(\text{while}_{\langle n_1\rangle} (0) \; \{S_1\}, {m_{1}(\text{loop}_c^{1})})$ by the E-Oflow1 or E-Oflow2 rule\\
\>$->$\> $({\text{skip}}, m_1)$ by the Wh-F1 rule.
\end{tabbing}

Similarly, $({s_2}, m_{2}(\text{loop}_c^{2},\vals_{2}))$ {\kStepArrow [2] } $({\text{skip}}, m_2)$.

The loop counters for $s_1$ and $s_2$ are less than 1:

$\forall m_{1}'\, m_{2}'\,:\,
 ({s_1}, m_1) ->* ({S_1'}, m_1'(\text{loop}_c^{1'}))$ and
$({s_2}, m_2) ->* ({S_2'}, m_2'(\text{loop}_c^{2'}))$ where
$\text{loop}_c^{1'}(n_1) < 1$ and $\text{loop}_c^{2'}(n_2) < 1$.

Besides, $s_1$ and $s_2$ both terminate when started in states $m_1$ and $m_2$:

 $(s_1, m_1) ->* (\text{skip}, m_1'')$ and
 $(s_2, m_2) ->* (\text{skip}, m_2'')$.

\item
$\mathcal{E}'\llbracket e\rrbracket\vals_{1} =
\mathcal{E}'\llbracket e\rrbracket\vals_{2} = (v, v_\mathfrak{of})$
where $v \notin \{0, \text{error}\}$;

The execution from $({s_1}, m_{1}(\text{loop}_c^{1},\vals_{1}))$ proceeds as follows.

\begin{tabbing}
xx\=xx\=\kill
\>\>     $({s_1}, m_{1}(\text{loop}_c^{1}, \vals_{1}))$\\
\>= \>   $(\text{while}_{\langle n_1\rangle} (e) \; \{S_1\}, {m_{1}(\text{loop}_c^{1}, \vals_{1})})$\\
\>$->$\> $(\text{while}_{\langle n_1\rangle} ((v, v_\mathfrak{of})) \; \{S_1\}, {m_{1}(\text{loop}_c^{1}, \vals_{1})})$ by the EEval' rule\\
\>$->$\> $(\text{while}_{\langle n_1\rangle} (v) \; \{S_1\}, {m_{1}(\text{loop}_c^{1}, \vals_{1})})$ \\
\>\>     by rule E-Oflow1 or E-Oflow2\\
\>$->$\> $(S_1;\text{while}_{\langle n_1\rangle} (e) \; \{S_1\}, m_{1}(\text{loop}_c^{1}[1/(n_1)], \vals_{1}))$ \\
\>\>      by the Wh-T1 rule.
\end{tabbing}

Similarly,
$({s_2}, m_{2}(\text{loop}_c^{2}, \vals_{2}))$ {\kStepArrow [2] }
$(S_2;\text{while}_{\langle n_2\rangle} (e) \{S_2\}, \,\, \allowbreak m_{2}(\text{loop}_c^{2}[1/(n_2)], \vals_{2}))$.
After two steps of executions of $s_1$ and $s_2$, crash flags are not set, the loop counter value of $s_1$ and $s_2$ are 1,
value stores $\vals_{1}$ and $\vals_{2}$ agree on values of variables in $\text{TVar}(s)$.

We show that $\text{TVar}(S_1) \subseteq \text{TVar}(s)$.
By definition of loop variables,
$\text{LVar}(s_1) =
\bigcup_{j\geq0}\text{Imp}(S_1^j, \text{LVar}(S_1)\allowbreak \cup \text{Use}(e))$.
By notation of $S^0$, $S^0 = \text{skip}$.
By definition of imported variables,
$\text{Imp}(S_1^0, \text{LVar}(S_1) \cup \text{Use}(e)) =
 \text{LVar}(S_1) \cup \text{Use}(e)$.
Then $\text{LVar}(S_1) \subseteq \text{LVar}(s)$.
By similar argument, we have
$\text{CVar}(S_1) \subseteq \text{CVar}(s)$.
Hence, $\text{TVar}(S_1) \subseteq \text{TVar}(s)$.
Similarly, $\text{TVar}(S_2) \subseteq \text{TVar}(s)$.
By assumption, $S_1$ and $S_2$ either both terminate or both do not terminate when started in state $m_{1}(\text{loop}_c^{1}[1/(n_1)], \vals_{1})$ and
$m_{2}(\text{loop}_c^{2}[1/(n_2)], \vals_{2})$ in which
$\forall y\in\text{TVar}(S_1) \cup \text{TVar}(S_2),
\vals_1(y) = \vals_2(y)$ and crash flags are not set.
Then there are two possibilities:
\begin{enumerate}
\item  $S_1$ and $S_2$ both terminate when started in states
       $m_{1}(\allowbreak \text{loop}_c^{1}[1/(n_1)], \vals_{1})$ and
       $m_{2}(\text{loop}_c^{2}[1/(n_2)], \vals_{2})$ respectively:

    $(S_1, m_{1}(\text{loop}_c^{1}[1/(n_1)], \vals_{1})) ->*
    (\text{skip}, m_{1_1}(\mathfrak{f}_1, \text{loop}_c^{1_1}, \vals_{1_1}))$ and

    $(S_2, m_{2}(\text{loop}_c^{2}[1/(n_2)], \vals_{2})) ->*
    (\text{skip}, m_{2_1}(\mathfrak{f}_2, \text{loop}_c^{2_1}, \vals_{2_1}))$.

%
%

We show that, after the full execution of $S_1$ and $S_2$, the following five properties hold.
\begin{itemize}
\item The crash flags are not set.

By the definition of terminating execution, crash flags are not set, $\mathfrak{f}_1 = \mathfrak{f}_2 = 0$.

\item The loop counter of $s_1$ and $s_2$ are of value $1$,
$\text{loop}_c^{1_{1}}(n_1) = \text{loop}_c^{2_{1}}(n_2) = 1$.

By the assumption of unique loop labels, $s_1 \notin S_1$. Then, the loop counter value of $n_1$ is not redefined in the execution of $S_1$ by corollary~\ref{coro:defExclusion},
$\text{loop}_c^{1}[1/n_1](n_1)$ = $\text{loop}_c^{1_{1}}(n_1) = 1$.
Similarly, the loop counter value of $n_2$ is not redefined in the execution of $S_2$,
$\text{loop}_c^{2}[1/(n_2)](n_2)$ = $\text{loop}_c^{2_{1}}(n_2) = 1$.

\item In any state in the execution $(s_1, m_1) ->* (s_1, m_{1_1}(\text{loop}_c^{1_1}, \vals_{1_1}))$, the loop counter of $s_1$ is less than or equal to 1.

    As is shown above, the loop counter of $s_1$ is zero in any of the two states in the one step execution
    $({s_1}, m_1) -> (\text{while}_{\langle n_1\rangle} (v) \; \{S_1\}, {m_{1}(\text{loop}_c^{1}, \vals_{1})})$, and
    the loop counter of $s_1$ is 1 in any states in the execution
    $(S_1;\text{while}_{\langle n_1\rangle} (e) \; \{S_1\}, m_{1}(\text{loop}_c^{1}[i/(n_1)], \vals_{1})) ->* (s_1, m_{1_1}(\text{loop}_c^{1_1}, \vals_{1_1}))$.

\item In any state in the executions $(s_2, m_2) ->* (s_2, m_{2_1}(\text{loop}_c^{2_1}, \vals_{2_1}))$, the loop counter of $s_2$ is less than or equal to 1.

    By similar argument above.

\item The value stores $\vals_{1_{1}}$ and $\vals_{2_{1}}$ agree on values of the termination deciding variables in $s_1$ and $s_2$:
    $\forall x \in \text{TVar}(s),
    \vals_{1_{1}}(x) = \vals_{2_{1}}(x)$.


We show that the imported variables in $S_1$ relative to those in $\text{LVar}(s)$ are a subset of $\text{LVar}(s)$ and the imported variables in $S_1$ relative to those in $\text{CVar}(s)$ are a subset of $\text{CVar}(s)$.

\begin{tabbing}
xx\=xx\=\kill
\>\>     $\text{LVar}(s_1)$\\
\>=\>    $\bigcup_{j\geq0} \text{Imp}(S_1^j, \text{LVar}(S_1) \cup \text{Use}(e))$  (1)\\
\>\>      by the definition of loop variables.\\
\end{tabbing}

\begin{tabbing}
xx\=xx\=\kill
\>\>              $\text{Imp}(S_1, \text{LVar}(s)) = \text{Imp}(S_1, \text{LVar}(s_1))$ \\
\>=\>             $\text{Imp}(S_1, \text{Imp}(s_1,  \text{Use}(e) \cup \text{LVar}(S_1)))$\\
\>\>              by the definition of $\text{LVar}(s)$\\
\>=\>             $\text{Imp}(S_1, \bigcup_{j\geq0} \text{Imp}(S_1^j, \text{LVar}(S_1) \cup \text{Use}(e)))$ by (1)\\
\>=\>             $\bigcup_{j\geq0}\text{Imp}(S_1, \text{Imp}(S_1^j, \text{LVar}(S_1) \cup \text{Use}(e)))$\\
\>\>               by Lemma~\ref{lmm:impVarUnionLemma}\\
\>=\>             $\bigcup_{j>0}\text{Imp}(S_1^j, \text{LVar}(S_1) \cup \text{Use}(e))$ by Lemma~\ref{lmm:ImpPrefixLemma}\\
\>$\subseteq$\>   $\bigcup_{j\geq0} \text{Imp}(S_1^j, \text{LVar}(S_1) \cup \text{Use}(e))$ \\
\>=\>             $\text{Imp}(s_1, \text{LVar}(S_1) \cup \text{Use}(e)) = \text{LVar}(s_1) = \text{LVar}(s)$.
\end{tabbing}

Similarly, $\text{Imp}(S_1, \text{CVar}(s)) \subseteq \text{CVar}(s)$.
Hence,

\noindent$\text{Imp}(S_1, \text{TVar}(s)) \subseteq  \text{TVar}(s)$. In the same way, we can show that $\text{Imp}(S_2, \text{TVar}(s)) \subseteq \; \text{TVar}(s)$.
Consequently, the value stores $\vals_{1_1}$ and $\vals_{2_1}$ agree on the values of the imported variables in $S_1$ and $S_2$ relative to those in $\text{TVar}(s)$,
$\forall x \in \text{Imp}(S_1, \text{TVar}(s))\allowbreak \cup
               \text{Imp}(S_2, \text{TVar}(s)),
               \vals_{1}(x, \allowbreak) = \vals_{2}(x)$.
Because $S_1$ and $S_2$ have equivalent computation of every variable in $\text{TVar}(s)$ when started in states agreeing on the values of the imported variables relative to $\text{TVar}(s)$, by Theorem~\ref{thm:equivTermCompOfSimpleStmt}, value stores $\vals_{1_{1}}$ and $\vals_{2_{1}}$ agree on the values of the variables $\text{TVar}(s)$,
$\forall x \in \text{TVar}(s),
\vals_{1_{1}}(x) = \vals_{2_{1}}(x)$.
 \end{itemize}

It follows that, by Corollary~\ref{coro:termSeq},

$(S_1;\text{while}_{\langle n_1\rangle} (e)  \{S_1\}, m_{1}(\text{loop}_c^{1}[1/(n_1)], \vals_{1})) ->*$

$(\text{while}_{\langle n_1\rangle} (e) \; \{S_1\}, m_{1_{1}}(\text{loop}_c^{1_{1}}, \vals_{1_{1}})) =
(s_1, m_{1_{1}}(\text{loop}_c^{1_{1}}, \vals_{1_{1}}))$ and

$(S_2;\text{while}_{\langle n_2\rangle} (e) \; \{S_2\}, m_{2}(\text{loop}_c^{2}[1/(n_2)], \vals_{2})) ->*$

$(\text{while}_{\langle n_2\rangle} (e) \; \{S_2\}, m_{2_{1}}(\text{loop}_c^{2_{1}}, \vals_{2_{1}})) =
(s_2, m_{2_{1}}(\text{loop}_c^{2_{1}}, \vals_{2_{1}}))$.

\item $S_1$ and $S_2$ do not terminate when started in states

\noindent$m_{1}(\text{loop}_c^{1}[1/(n_1)], \vals_{1})$ and

\noindent$m_{2}(\text{loop}_c^{2}[1/(n_2)], \vals_{2})$ respectively:

    $\forall k>0, (S_1, m_{1}(\text{loop}_c^{1}[1/(n_1)], \vals_{1}))$ {\kStepArrow [k] }

    \noindent$(S_{1_k}, m_{1_{1_k}}(\text{loop}_c^{1_{1_k}}, \vals_{1_{1_k}}))$ and

    $(S_2, m_{2}(\text{loop}_c^{2}[1/(n_2)], \vals_{2}))$ {\kStepArrow [k] }

    \noindent$(S_{2_k}, m_{2_{1_k}}(\text{loop}_c^{2_{1_k}}, \vals_{2_{1_k}}))$
    in which
    $S_{1_k} \neq \text{skip}, S_{2_k} \neq \text{skip}$.

  By our assumption of unique loop labels, $s_1\notin S_1$.
  Then, $\forall k>0, \text{loop}_c^{1_{1_k}}(n_1) = \text{loop}_c^{1}[1/(n_1)](n_1)\allowbreak = 1$.
  Similarly, $\forall k>0, \text{loop}_c^{2_{1_k}}(n_2)$

  \noindent$ = \text{loop}_c^{2}[1/(n_2)](n_2) = 1$.
  In addition, by Lemma~\ref{lmm:multiStepSeqExec},

  $\forall k>0, (S_1;s_1, m_{1}(\text{loop}_c^{1}[1/(n_1)], \vals_{1}))$ {\kStepArrow [k] }

  \noindent$(S_{k};s_1, m_{1_{k}}(\text{loop}_c^{1_{k}}, \vals_{1_{k}}))$ and

    $(S_2;s_2, m_{2}(\text{loop}_c^{2}[1/(n_2)], \vals_{2}))$ {\kStepArrow [k] } $(S_{2_k};s_2, m_{2_{k}}(\text{loop}_c^{2_{k}}, \vals_{2_{k}}))$
    in which
    $S_{1_k} \neq \text{skip}, S_{2_k} \neq \text{skip}$.

  In summary, loop counters of $s_1$ and $s_2$ are less than or equal to 1, and $s_1$ and $s_2$ do not terminate
  such that
there are no configurations $(s_1, m_{1_1})$ and $(s_2, m_{2_1})$ reachable from $(s_1, m_1)$ and $(s_2, m_2)$, respectively, in which crash flags are not set, the loop counters of $s_1$ and $s_2$ are equal to 1 and value stores agree on the values of variables in $\text{TVar}(s)$.
\end{enumerate}

\end{enumerate}

\myNewLine

\noindent{\bf Induction Step}.

\noindent The induction hypothesis IH is that, for a positive integer $i$, one of the following holds:
\begin{enumerate}
\item{The loop counters for $s_1$ and $s_2$ are less than $i$, and $s_1$ and $s_2$ both terminate in the same way:}

$\forall m_{1}'\, m_{2}'$ such that
$({s_1}, m_1) ->* ({S_1'}, m_{1}'(\text{loop}_c^{1'}))$ and
$({s_2}, m_2) ->* ({S_2'}, m_{2}'(\text{loop}_c^{2'}))$,

$\text{loop}_c^{1'}(n_1) < i$ and $\text{loop}_c^{2'}(n_2) < i$ and one of the following holds:
\begin{enumerate}
\item $s_1$ and $s_2$ both terminate:

 $(s_1, m_1) ->* (\text{skip}, m_1'')$ and $(s_2, m_2) ->* (\text{skip}, m_2'')$.

\item $s_1$ and $s_2$ both do not terminate:

$\forall k>0,
 (s_1, m_1)$ {\kStepArrow [k] } $({S_{1_k}}, m_{1_k})$ and
$(s_2, m_2)$ {\kStepArrow [k] } $({S_{2_k}}, m_{2_k})$ in which
${S_{1_k}} \neq \text{skip}, {S_{2_k}} \neq \text{skip}$.
\end{enumerate}

\item The loop counters for $s_1$ and $s_2$ are less than or equal to $i$, and $s_1$ and $s_2$ do not terminate such that
there are no configurations $(s_1, m_{1_i})$ and $(s_2, m_{2_i})$ reachable from $(s_1, m_1)$ and $(s_2, m_2)$, respectively, in which crash flags are not set, the loop counters of $s_1$ and $s_2$ are equal to $i$ and value stores agree on the values of variables in $\text{TVar}(s)$:

\begin{itemize}
\item $\forall m_1'\,m_2' \,:\,
({s_1}, m_1) ->* (S_{1}, m_1'(\text{loop}_c^{1'})),
({s_2}, m_2) ->* (S_{2}, m_2'(\text{loop}_c^{2'}))$ where
$\text{loop}_c^{1'}(n_1) \leq i, \text{loop}_c^{2'}(n_2) \leq i$;

\item $\forall k>0 \,:\,
 ({s_1}, m_1)$ {\kStepArrow [k] } $(S_{1_k}, m_{1_k}),$
$({s_2}, m_2)$ {\kStepArrow [k] } $(S_{2_k}, m_{2_k})$ where
      $S_{1_k} \neq \text{skip}, S_{2_k} \neq \text{skip}$; and

\item  $\nexists (s_1, m_{1_i}), (s_2, m_{2_i})\,:\,
({s_1}, m_1) ->* ({s_1}, m_{1_i}(\mathfrak{f}_1, \text{loop}_c^{1_i}, \vals_{1_i})) \wedge
({s_2}, m_2) ->* ({s_2}, m_{2_i}(\mathfrak{f}_2, \text{loop}_c^{2_i}, \vals_{2_i}))$ where
      \begin{itemize}
        \item $\mathfrak{f}_1 = \mathfrak{f}_2 = 0$; and

        \item $\text{loop}_c^{1_i}(n_1) = \text{loop}_c^{2_i}(n_2) = i$; and

        \item $\forall x \in \text{TVar}(s)\,:\,
        \vals_{1_i}(x) = \vals_{2_i}(x)$.
      \end{itemize}
\end{itemize}

\item There are two configurations $(s_1, m_{1_i})$ and $(s_2, m_{2_i})$ reachable from $(s_1, m_1)$ and $(s_2, m_2)$, respectively, in which crash flags are not set, the loop counters of $s_1$ and $s_2$ are equal to $i$ and value stores agree on the values of variables in $\text{TVar}(s)$ and, for every state in execution, $(s_1, m_1) ->* (s_1, m_{1_i})$ or $(s_2, m_2) ->* (s_2, m_{2_i})$ the loop counters for $s_1$ and $s_2$ are less than or equal to $i$ respectively:

 $\exists (s_1, m_{1_i})\, (s_2, m_{2_i})\,:\,
 ({s_1}, m_1) ->* ({s_1}, m_{1_i}(\mathfrak{f}_1, \text{loop}_c^{1_i}, \vals_{1_i})) \wedge
 ({s_2}, m_2) ->* ({s_2}, m_{2_i}(\mathfrak{f}_2, \text{loop}_c^{2_i}, \vals_{2_i}))$ where
\begin{itemize}
 \item $\mathfrak{f}_1 = \mathfrak{f}_2 = 0$; and

 \item $\text{loop}_c^{1_i}(n_1) = \text{loop}_c^{2_i}(n_2) = i$; and

 \item $\forall x \in \text{TVar}(s)\,:\,
 \vals_{1_i}(x) = \vals_{2_i}(x, $; and

 \item $\forall m_1'\,:\,
 ({s_1}, m_1) ->* (S_1', m_1'(\text{loop}_c^{1'})) ->* ({s_1}, m_{1_i})$,

 \noindent$\text{loop}_c^{1'}(n_1) \leq i$; and

 \item $\forall m_2'\,:\,
 ({s_2}, m_2) ->* (S_2', m_2'(\text{loop}_c^{2'})) ->* ({s_2}, m_{2_i})$,

 \noindent$\text{loop}_c^{2'}(n_2) \leq i$;
\end{itemize}
\end{enumerate}

\noindent Then we show that, for the positive integer $i+1$, one of the following holds:
\begin{enumerate}
\item{The loop counters for $s_1$ and $s_2$ are less than $i+1$, and $s_1$ and $s_2$ both terminate in the same way:}

$\forall m_{1}'\, m_{2}'$ such that
$({s_1}, m_1) ->* ({S_1'}, m_{1}'(\text{loop}_c^{1'}))$ and
$({s_2}, m_2) ->* ({S_2'}, m_{2}'(\text{loop}_c^{2'}))$,

$\text{loop}_c^{1'}(n_1) < i+1$ and $\text{loop}_c^{2'}(n_2) < i+1$ and one of the following holds:
\begin{enumerate}
\item $s_1$ and $s_2$ both terminate:

 $(s_1, m_1) ->* (\text{skip}, m_1'')$ and
 $(s_2, m_2) ->* (\text{skip}, m_2'')$.

\item $s_1$ and $s_2$ both do not terminate:

$\forall k>0,
 (s_1, m_1)$ {\kStepArrow [k] } $({S_{1_k}}, m_{1_k})$ and
$(s_2, m_2)$ {\kStepArrow [k] } $({S_{2_k}}, m_{2_k})$ in which
${S_{1_k}} \neq \text{skip}, {S_{2_k}} \neq \text{skip}$.
\end{enumerate}

\item{The loop counters for $s_1$ and $s_2$ are less than or equal to $i+1$, and $s_1$ and $s_2$ do not terminate such that
there are no configurations $(s_1, m_{1_{i+1}})$ and $(s_2, m_{2_{i+1}})$ reachable from $(s_1, m_1)$ and $(s_2, m_2)$, respectively, in which crash flags are set, the loop counters of $s_1$ and $s_2$ are equal to $i+1$ and value stores agree on the values of variables in $\text{TVar}(s)$:}

\begin{itemize}
\item $\forall m_1'\,m_2' \,:\,      ({s_1}, m_1) ->* (S_{1}, m_1'(\text{loop}_c^{1'})),
                                     ({s_2}, m_2) ->* (S_{2}, m_2'(\text{loop}_c^{2'}))$ where

      $\text{loop}_c^{1'}(n_1) \leq i+1, \text{loop}_c^{2'}(n_2) \leq i+1$;

\item $\forall k>0 \,:\, ({s_1}, m_1)$ {\kStepArrow [k] } $(S_{1_k}, m_{1_k}),$
                        $({s_2}, m_2)$ {\kStepArrow [k] } $(S_{2_k}, m_{2_k})$ where
      $S_{1_k} \neq \text{skip}, S_{2_k} \neq \text{skip}$; and

\item $\nexists (s_1, m_{1_{i+1}}), (s_2, m_{2_{i+1}})\,:\,$

      $({s_1}, m_1) ->* ({s_1}, m_{1_{i+1}}(\mathfrak{f}_1, \text{loop}_c^{1_{i+1}}, \vals_{1_{i+1}})) \wedge
       ({s_2}, m_2) ->* ({s_2}, m_{2_{i+1}}(\mathfrak{f}_2, \text{loop}_c^{2_{i+1}}, \vals_{2_{i+1}}))$ where
      \begin{itemize}
        \item $\mathfrak{f}_1 = \mathfrak{f}_2 = 0$; and

        \item $\text{loop}_c^{1_{i+1}}(n_1) = \text{loop}_c^{2_{i+1}}(n_2) = i+1$; and

        \item $\forall x \in \text{TVar}(s)\,:\,
        \vals_{1_{i+1}}(x)$ = $\vals_{2_{i+1}}(x)$.
      \end{itemize}
\end{itemize}

\item There are two configurations $(s_1, m_{1_{i+1}})$ and $(s_2, m_{2_{i+1}})$ reachable from $(s_1, m_1)$ and $(s_2, m_2)$, respectively, in which the loop counters of $s_1$ and $s_2$ are equal to $i+1$ and value stores agree on the values of variables in $\text{TVar}(s)$ and, for every state in execution, $(s_1, m_1) ->* (s_1, m_{1_{i+1}})$ or $(s_2, m_2) ->* (s_2, m_{2_{i+1}})$ the loop counters for $s_1$ and $s_2$ are less than or equal to $i+1$ respectively:

 $\exists (s_1, m_{1_{i+1}})\, (s_2, m_{2_{i+1}})\,:\,$

 \noindent$({s_1}, m_1) ->* ({s_1}, m_{1_{i+1}}(\text{loop}_c^{1_{i+1}}, \vals_{1_{i+1}})) \wedge
 ({s_2}, m_2) ->* ({s_2}, m_{2_{i+1}}(\text{loop}_c^{2_{i+1}}, \vals_{2_{i+1}}))$ where
\begin{itemize}
 \item $\text{loop}_c^{1_{i+1}}(n_1) = \text{loop}_c^{2_{i+1}}(n_2) = i+1$; and

 \item $\forall x \in \text{TVar}(s)\,:\,
 \vals_{1_{i+1}}(x) = \vals_{2_{i+1}}(x)$; and

 \item $\forall m_1'\,:\, ({s_1}, m_1) ->* (S_1', m_1'(\text{loop}_c^{1'})) ->* ({s_1}, m_{1_i}), \;$

 \noindent$\text{loop}_c^{1'}(n_1) \leq i+1$; and

 \item $\forall m_2'\,:\, ({s_2}, m_2) ->* (S_2', m_2'(\text{loop}_c^{2'})) ->* ({s_2}, m_{2_i}), \;$

 \noindent$\text{loop}_c^{2'}(n_2) \leq i+1$;
\end{itemize}
\end{enumerate}

\noindent By the hypothesis IH, one of the following holds:
\begin{enumerate}
\item{The loop counters for $s_1$ and $s_2$ are less than $i$:}

$\forall m_{1}'\, m_{2}'$ such that
$({s_1}, m_1) ->* ({S_1'}, m_1'(\text{loop}_c^{1'}))$ and
$({s_2}, m_2) ->* ({S_2'}, m_2'(\text{loop}_c^{2'}))$,

$\text{loop}_c^{1'}(n_1) < i$ and $\text{loop}_c^{2'}(n_2) < i$ and one of the following holds:
\begin{enumerate}
\item $s_1$ and $s_2$ both terminate:

 $(s_1, m_1) ->* (\text{skip}, m_1'')$ and
 $(s_2, m_2) ->* (\text{skip}, m_2'')$.

\item $s_1$ and $s_2$ both do not terminate:

$\forall k>0,
 (s_1, m_1)$ {\kStepArrow [k] } $({S_{1_k}}, m_{1_k})$ and
$(s_2, m_2)$ {\kStepArrow [k] } $({S_{2_k}}, m_{2_k})$ in which
${S_{1_k}} \neq \text{skip}, {S_{2_k}} \neq \text{skip}$.
\end{enumerate}

\myNewLine

When this case holds, then we have the loop counters for $s_1$ and $s_2$ are less than $i+1$, and $s_1$ and $s_2$ both terminate in the same way:

$\forall m_{1}'\, m_{2}'$ such that
$({s_1}, m_1) ->* ({S_1'}, m_1'(\text{loop}_c^{1'}))$ and
$({s_2}, m_2) ->* ({S_2'}, m_2'(\text{loop}_c^{2'}))$,

$\text{loop}_c^{1'}(n_1) < i+1$ and $\text{loop}_c^{2'}(n_2) < i+1$ and one of the following holds:
\begin{enumerate}
\item $s_1$ and $s_2$ both terminate:

 $(s_1, m_1) ->* (\text{skip}, m_1'')$ and $(s_2, m_2) ->* (\text{skip}, m_2'')$.

\item $s_1$ and $s_2$ both do not terminate:

$\forall k>0,
 (s_1, m_1)$ {\kStepArrow [k] } $({S_{1_k}}, m_{1_k})$ and
$(s_2, m_2)$ {\kStepArrow [k] } $({S_{2_k}}, m_{2_k})$ in which
${S_{1_k}} \neq \text{skip}, {S_{2_k}} \neq \text{skip}$.
\end{enumerate}

\item The loop counters for $s_1$ and $s_2$ are less than or equal to $i$, and $s_1$ and $s_2$ both do not terminate such that
there are no configurations $(s_1, m_{1_i})$ and $(s_2, m_{2_i})$ reachable from $(s_1, m_1)$ and $(s_2, m_2)$, respectively, in which the loop counters of $s_1$ and $s_2$ are equal to $i$ and value stores agree on the values of variables in $\text{TVar}(s)$:

\begin{itemize}
\item $\forall m_1'\,m_2' \,:\,      ({s_1}, m_1) ->* (S_{1}, m_1'(\text{loop}_c^{1'})),
                                     ({s_2}, m_2) ->* (S_{2}, m_2'(\text{loop}_c^{2'}))$ where
      $\text{loop}_c^{1'}(n_1) \leq i, \text{loop}_c^{2'}(n_2) \leq i$;

\item $\forall k>0 \,:\, ({s_1}, m_1)$ {\kStepArrow [k] } $(S_{1_k}, m_{1_k}),$
                        $({s_2}, m_2)$ {\kStepArrow [k] } $(S_{2_k}, m_{2_k})$ where
      $S_{1_k} \neq \text{skip}, S_{2_k} \neq \text{skip}$; and

      $\nexists (s_1, m_{1_i}), (s_2, m_{2_i})\,:\,
      ({s_1}, m_1) ->* ({s_1}, m_{1_i}(\mathfrak{f}_1, \text{loop}_c^{1_i}, \vals_{1_i})) \wedge
      ({s_2}, m_2) ->* ({s_2}, m_{2_i}(\mathfrak{f}_2, \text{loop}_c^{2_i}, \vals_{2_i}))$ where
      \begin{itemize}
        \item $\mathfrak{f}_1 = \mathfrak{f}_2 = 0$; and

        \item $\text{loop}_c^{1_i}(n_1) = \text{loop}_c^{2_i}(n_2) = i$; and

        \item $\forall x \in \text{TVar}(s)\,:\,
        \vals_{1_i}(x) = \vals_{2_i}(x)$.
      \end{itemize}
\end{itemize}

\myNewLine

When this case holds, we have the loop counter of $s_1$ and $s_2$ are less than $i+1$, and $s_1$ and $s_2$ both do not terminate:

$\forall m_{1}'\, m_{2}'$ such that
$({s_1}, m_1) ->* ({S_1'}, m_1'(\text{loop}_c^{1'}))$ and
$({s_2}, m_2) ->* ({S_2'}, m_2'(\text{loop}_c^{2'}))$,

$\text{loop}_c^{1'}(n_1) < i+1$ and $\text{loop}_c^{2'}(n_2) < i+1$ and
$s_1$ and $s_2$ both do not terminate:

$\forall k>0,
 (s_1, m_1)$ {\kStepArrow [k] } $({S_{1_k}}, m_{1_k})$ and
$(s_2, m_2)$ {\kStepArrow [k] } $({S_{2_k}}, m_{2_k})$ in which
${S_{1_k}} \neq \text{skip}, {S_{2_k}} \neq \text{skip}$.

\item There are two configurations $(s_1, m_{1_i})$ and $(s_2, m_{2_i})$ reachable from $(s_1, m_1)$ and $(s_2, m_2)$, respectively, in which crash flags are not set, the loop counters of $s_1$ and $s_2$ are equal to $i$ and value stores agree on the values of variables in $\text{TVar}(s)$ and, for every state in executions $(s_1, m_1) ->* (s_1, m_{1_i})$ and $(s_2, m_2) ->* (s_2, m_{2_i})$ the loop counters for $s_1$ and $s_2$ are less than or equal to $i$ respectively:

 $\exists (s_1, m_{1_i})\, (s_2, m_{2_i})\,:\,
 (s_1, m_1) ->* (s_1, m_{1_i}(\mathfrak{f}_1, \text{loop}_c^{1_i}, \vals_{1_i})) \wedge
 (s_2, m_2) ->* (s_2, m_{2_i}(\mathfrak{f}_2, \text{loop}_c^{2_i}, \vals_{2_i}))$ where
\begin{itemize}
 \item $\mathfrak{f}_1 = \mathfrak{f}_2 = 0$; and

 \item $\text{loop}_c^{1_i}(n_1) = \text{loop}_c^{2_i}(n_2) = i$; and

 \item $\forall x \in \text{TVar}(s),
 \vals_{1_i}(x) = \vals_{2_i}(x)$; and

 \item $\forall m_1'\,:\, ({s_1}, m_1) ->* (S_1', m_1'(\text{loop}_c^{1'})) ->* ({s_1}, m_{1_i}), \;$

 \noindent$\text{loop}_c^{1'}(n_1) \leq i$; and

 \item $\forall m_2'\,:\, ({s_2}, m_2) ->* (S_2', m_2'(\text{loop}_c^{2'})) ->* ({s_2}, m_{2_i}), \;$

 \noindent$\text{loop}_c^{2'}(n_2) \leq i$.
\end{itemize}
By similar argument in base case, evaluations of the predicate expression of $s_1$ and $s_2$ w.r.t value stores $\vals_{1_i}$ and $\vals_{2_i}$ produce same value. Then there are three possibilities:
\begin{enumerate}
\item $\mathcal{E}'\llbracket e\rrbracket\vals_{1_i} =
       \mathcal{E}'\llbracket e\rrbracket\vals_{2_i} = (\text{error}, *)$.

Then the execution of $s_1$ proceeds as follows.
\begin{tabbing}
xx\=xx\=\kill
\>\>                   $(s_1,m_{1_i}(\mathfrak{f}_1, \vals_{1_i}))$\\
\>= \>                 $(\text{while}_{\langle n_1\rangle}(e) \; \{S_1\}, m_{1_i}(\mathfrak{f}_1, \vals_{1_i}))$ \\
\>$->$\>               $(\text{while}_{\langle n_1\rangle}((\text{error}, *)) \; \{S_1\}, m_{1_i}(\mathfrak{f}_1, \vals_{1_i}))$ by the EEval' rule \\
\>$->$\>               $(\text{while}_{\langle n_1\rangle}(0) \; \{S_1\}, m_{1_i}(1/\mathfrak{f}_1))$ by the ECrash rule \\
\>{\kStepArrow [k] }\> $(\text{while}_{\langle n_1\rangle}(0) \; \{S_1\}, m_{1_i}(1/\mathfrak{f}_1))$, for any $k\geq0$, by the Crash rule.
\end{tabbing}

Similarly, the execution of $s_2$ started in the state $m_{2_i}(\vals_{2_i})$ does not terminate.

The loop counters for $s_1$ and $s_2$ are less than $i+1$:

$\forall m_{1}'\, m_{2}'$ such that
$({s_1}, m_1) ->* ({S_1'}, m_1'(\text{loop}_c^{1'}))$ and
$({s_2}, m_2) ->* ({S_2'}, m_2'(\text{loop}_c^{2'}))$,

$\text{loop}_c^{1'}(n_1) < i+1$ and $\text{loop}_c^{2'}(n_2) < i+1$.

Besides, $s_1$ and $s_2$ both do not terminate when started in states $m_1$ and $m_2$,

$\forall k>0, (s_1, m_1)$ {\kStepArrow [k] } $({S_{1_k}}, m_{1_k})$ and
             $(s_2, m_2)$ {\kStepArrow [k] } $({S_{2_k}}, m_{2_k})$ in which
             ${S_{1_k}} \neq \text{skip}, {S_{2_k}} \neq \text{skip}$.

\item $\mathcal{E}'\llbracket e\rrbracket\vals_{1_i} =
       \mathcal{E}'\llbracket e\rrbracket\vals_{2_i} = (0, v_\mathfrak{of})$

The execution from $({s_1}, m_{1_i}(\text{loop}_c^{{1_i}},\vals_{1_i}))$ proceeds as follows.

\begin{tabbing}
xx\=xx\=\kill
\>\>     $(s_1, m_{1_i}(\text{loop}_c^{1_i}, \vals_{1_i}))$\\
\>=\>    $(\text{while}_{\langle n_1\rangle} (e) \; \{S_1\}, {m_{1_i}(\text{loop}_c^{1_i}, \vals_{1_i})})$\\
\>$->$\> $(\text{while}_{\langle n_1\rangle} ((0, v_\mathfrak{of})) \; \{S_1\}, {m_{1_i}(\text{loop}_c^{1_i}, \vals_{1_i})})$ by rule  EEval' \\
\>$->$\> $(\text{while}_{\langle n_1\rangle} (0) \; \{S_1\}, {m_{1_i}(\text{loop}_c^{1_i}, \vals_{1_i})})$\\
\>\>     by rule  E-Oflow1 or E-Oflow2 \\
\>$->$\> $({\text{skip}}, m_{1_i}(\text{loop}_c^{1_i}[0/(n_1)], \vals_{1_i}))$ by the Wh-F2 rule.
\end{tabbing}

By the hypothesis IH, the loop counter of $s_1$ and $s_2$ in any configuration in executions
$(s_1, m_1) ->* (s_1, m_{1_i}(\text{loop}_c^{1_i}, \vals_{1_i}))$ and
$(s_2, m_2) ->* (s_2, m_{2_i}(\text{loop}_c^{2_i}, \vals_{2_i}))$
respectively are less than or equal to $i$,

 $\forall m_1'\,:\, ({s_1}, m_1) ->* (S_1', m_1'(\text{loop}_c^{1'})) ->* ({s_1}, m_{1_i}(\text{loop}_c^{1_i},$

 \noindent$\vals_{1_i})), \; \text{loop}_c^{1'}(n_1) \leq i$; and

 $\forall m_2'\,:\, ({s_2}, m_2) ->* (S_2', m_2'(\text{loop}_c^{2'})) ->* ({s_2}, m_{2_i}(\text{loop}_c^{2_i},$

 \noindent$\vals_{2_i})), \; \text{loop}_c^{2'}(n_2) \leq i$.

Therefore, $s_1$ and $s_2$ both terminate and the loop counter of $s_1$ and $s_2$ in any state in executions respectively are less than $i+1$.

\item $\mathcal{E}'\llbracket e\rrbracket\vals_{1_i} =
       \mathcal{E}'\llbracket e\rrbracket\vals_{2_i} = (v, v_\mathfrak{of})$
       where $v \notin \{0, \text{error}\}$;

The execution from $({s_1}, m_{1_i}(\text{loop}_c^{1_i}, \vals_{1_i}))$ proceeds as follows.

\begin{tabbing}
xx\=xx\=\kill
\>\>     $({s_1}, m_{1_i}(\text{loop}_c^{1_i}, \vals_{1_i}))$\\
\>= \>   $(\text{while}_{\langle n_1\rangle} (e) \; \{S_1\}, {m_{1_i}(\text{loop}_c^{1_i}, \vals_{1_i})})$\\
\>$->$\> $(\text{while}_{\langle n_1\rangle} ((v, v_\mathfrak{of})) \; \{S_1\}, {m_{1_i}(\text{loop}_c^{1_i}, \vals_{1_i})})$ by rule EEval'\\
\>$->$\> $(\text{while}_{\langle n_1\rangle} ((v, v_\mathfrak{of})) \; \{S_1\}, {m_{1_i}(\text{loop}_c^{1_i}, \vals_{1_i})})$ by rule EEval'\\
\>$->$\> $(\text{while}_{\langle n_1\rangle} (v) \; \{S_1\}, {m_{1_i}(\text{loop}_c^{1_i}, \vals_{1_i})})$ \\
\>\>     by rule E-Oflow1 or E-Oflow2\\
\>$->$\> $(S_1;\text{while}_{\langle n_1\rangle} (e) \; \{S_1\}, m_{1_i}(\text{loop}_c^{1_i}[(i+1)/(n_1)],$\\
\>\>     $\vals_{1_i}))$ by rule Wh-T.
\end{tabbing}

Similarly,
$({s_2}, m_{2_i}(\text{loop}_c^{2_i},\vals_{2_i}))$ {\kStepArrow [2] }
$(S_2;\text{while}_{\langle n_2\rangle} (e) \{S_2\}, \, \allowbreak m_{2_{i}}(\text{loop}_c^{2_{i}}[(i+1)/(n_2)], \vals_{2_{i}}))$.

By similar argument in base case, the executions of $S_1$ and $S_2$ terminate in the same way when started in states
$m_{1_i}(\text{loop}_c^{1_i}[(i+1)/(n_1)], \vals_{1_i})$ and
$m_{2_i}(\text{loop}_c^{2_i}[(i+1)/(n_2)], \vals_{2_i})$ respectively.
Then there are two possibilities.
\begin{enumerate}
\item $S_1$ and $S_2$ terminate when started in states
$m_{1_i}(\text{loop}_c^{1_i}[(i+1)/(n_1)], \vals_{1_i})$ and

\noindent$m_{2_i}(\text{loop}_c^{2_i}[(i+1)/(n_2)], \vals_{2_i})$ respectively

$(S_1;s_1, m_{1_i}(\text{loop}_c^{1_i}[(i+1)/(n_1)], \vals_{1_i})) ->*$

\noindent$(s_1, m_{1_{i+1}}(\mathfrak{f}_1, \text{loop}_c^{1_{i+1}}, \vals_{1_{i+1}}))$ and

$(S_2;s_1, m_{2_i}(\text{loop}_c^{2_i}[(i+1)/(n_2)], \vals_{2_i})) ->*$

\noindent$(s_2, m_{2_{i+1}}(\mathfrak{f}_2, \text{loop}_c^{2_{i+1}}, \vals_{2_{i+1}}))$ such that all of the following holds:
\begin{itemize}
\item $\mathfrak{f}_1 = \mathfrak{f}_2 = 0$; and

\item
$\text{loop}_c^{1_{i+1}}(n_1) = \text{loop}_c^{2_{i+1}}(n_2) = i+1$; and

\item $\forall y\in\text{TVar}(s),
\vals_{1_{i+1}}(y) = \vals_{2_{i+1}}(y)$, and

\item in any state in the execution

\noindent$(s_{1}, m_{1_i}) ->* (s_1, m_{1_{i+1}}(\text{loop}_c^{1_{i+1}}, \vals_{1_{i+1}}))$, the loop counter of $s_1$ is less than or equal to $i+1$.


\item in any state in the executions

\noindent$(s_2, m_{2_i}) ->* (s_2, m_{2_{i+1}}(\text{loop}_c^{2_{i+1}}, \vals_{2_{i+1}}))$, the loop counter of $s_2$ is less than or equal to $i+1$.

\end{itemize}
With the hypothesis IH, there are two configurations $(s_1, m_{1_{i+1}})$ and $(s_2, m_{2_{i+1}})$ reachable from $(s_1, m_1)$ and $(s_2, m_2)$, respectively, in which
crash flags are not set,
the loop counters of $s_1$ and $s_2$ are equal to $i+1$ and
value stores agree on the values of $\text{TVar}(s)$ and,
for every state in executions $(s_1, m_1) ->* (s_1, m_{1_{i+1}})$ and $(s_2, m_2) ->* (s_2, m_{2_{i+1}})$ the loop counters for $s_1$ and $s_2$ are less than or equal to $i+1$ respectively:

 $\exists (s_1, m_{1_{i+1}})\, (s_2, m_{2_{i+1}})\,:\,$

$(s_1, m_1) ->* (s_1, m_{1_{i+1}}(\mathfrak{f}_1, \text{loop}_c^{1_{i+1}}, \vals_{1_{i+1}})) \wedge
 (s_2, m_2) ->* (s_2, m_{2_{i+1}}(\mathfrak{f}_2, \text{loop}_c^{2_{i+1}}, \vals_{2_{i+1}}))$ where
\begin{itemize}
 \item $\mathfrak{f}_1 = \mathfrak{f}_2 = 0$; and

 \item $\text{loop}_c^{1_{i+1}}(n_1) = \text{loop}_c^{2_{i+1}}(n_2) = i+1$; and

 \item $\forall x \in \text{TVar}(s),
 \vals_{1_{i+1}}(x) = \vals_{2_{i+1}}(x)$; and

 \item $\forall m_1'\,:\,
 ({s_1}, m_1) ->* (S_1', m_1'(\text{loop}_c^{1'})) ->*$

 \noindent$({s_1}, m_{1_{i+1}}(\text{loop}_c^{1_{i+1}}, \vals_{1_{i+1}})), \; \text{loop}_c^{1'}(n_1) \leq i+1$; and

 \item $\forall m_2'\,:\,
 ({s_2}, m_2) ->* (S_2', m_2'(\text{loop}_c^{2'})) ->*$

 \noindent$({s_2}, m_{2_{i+1}}(\text{loop}_c^{2_{i+1}}, \vals_{2_{i+1}})), \; \text{loop}_c^{2'}(n_2) \leq i+1$.
\end{itemize}

\item $S_1$ and $S_2$ do not terminate when started in states

$m_{1_i}(\text{loop}_c^{{1_i}}[(i+1)/(n_1)], \vals_{{1_i}})$ and
$m_{2_i}(\text{loop}_c^{2_i}[(i+1)/(n_2)], \vals_{2_i})$ respectively:

    $\forall k>0,
    (S_1, m_{1_i}(\text{loop}_c^{1_i}[(i+1)/(n_1)], \vals_{1_i}))$ {\kStepArrow [k] }
    $(S_{1_k}, m_{1_{1_k}}(\text{loop}_c^{1_{1_k}}, \vals_{1_{1_k}}))$ and

    $(S_2, m_{2_i}(\text{loop}_c^{2_i}[(i+1)/(n_2)], \vals_{2_i}))$ {\kStepArrow [k] }

    \noindent$(S_{2_k}, m_{2_{1_k}}(\text{loop}_c^{2_{1_k}}, \vals_{2_{1_k}}))$
    in which
    $S_{1_k} \neq \text{skip}, S_{2_k} \neq \text{skip}$.

  By our assumption of unique loop labels, $s_1\notin S_1$.
  Then, $\forall k>0, \text{loop}_c^{1_{1_k}}(n_1) =$

  \noindent$\text{loop}_c^{1_i}[(i+1)/(n_1)](n_1) = i+1$.
  Similarly, $\forall k>0, \text{loop}_c^{2_{1_k}}(n_2) =$

  \noindent$\text{loop}_c^{2_i}[(i+1)/(n_2)](n_2) = i+1$.
  In addition, by Lemma~\ref{lmm:multiStepSeqExec},

  $\forall k>0,
   (S_1;s_1, m_{1_i}(\text{loop}_c^{1_i}[(i+1)/(n_1)], \vals_{1_i}))$ {\kStepArrow [k] }
  $(S_{1_k};s_1, m_{1_{1_k}}(\text{loop}_c^{1_{1_k}}, \vals_{1_{1_k}}))$ and
  $(S_2;s_2, m_{2}(\text{loop}_c^{2}[(i+1)/(n_2)], \vals_{2}))$ {\kStepArrow [k] }
  $(S_{2_k};s_2, m_{2_{1_k}}(\text{loop}_c^{2_{1_k}}, \vals_{2_{1_k}}))$
    in which
    $S_{1_k} \neq \text{skip}, S_{2_k} \neq \text{skip}$.

  In summary, the loop counter of $s_1$ and $s_2$ are less than equal to $i+1$, and $s_1$ and $s_2$ do not terminate
  such that
there are no configurations $(s_1, m_{1_{i+1}})$ and $(s_2, m_{2_{i+1}})$ reachable from $(s_1, m_1)$ and $(s_2, m_2)$, respectively, in which
crash flags are not set,
the loop counters of $s_1$ and $s_2$ are equal to $i+1$ and
value stores agree on values of variables in $\text{TVar}(s)$.
\end{enumerate}
\end{enumerate}
\end{enumerate}
\end{proof}

\begin{corollary}\label{coro:loopTermInSameWay}
Let $s_1 = ``\text{while}_{\langle n_1\rangle}(e) \{S_1\}"$  and

    \noindent$s_2 = ``\text{while}_{\langle n_2\rangle}(e) \{S_2\}"$ be two while statements respectively, with the same set of the termination deciding variables,
     $\text{TVar}(s_1) = \text{TVar}(s_2) = \text{TVar}(s)$,
    whose bodies $S_1$ and $S_2$ satisfy the proof rule of equivalently computation of variables in $\text{TVar}(s)$,
    $\forall x \in \text{TVar}(s) \, :\, (S_1) \equiv_{x}^S (S_2)$,
    and whose bodies $S_1$ and $S_2$ terminate in the same way when started in states with crash flags not set and agreeing on values of variables in $\text{TVar}(S_1) \cup \text{TVar}(S_2)$:

   \noindent$\forall m_{S_1}(\mathfrak{f}_{S_1}, \vals_{S_1}),
            m_{S_2}(\mathfrak{f}_{S_2}, \vals_{S_2}):$

\noindent$(((\forall z\in \text{TVar}(S_1) \cup \text{TVar}(S_2)),
              \vals_{S_1}(z) = \vals_{S_2}(z)) \wedge
  (\mathfrak{f}_{S_1} = \mathfrak{f}_{S_2} = 0)) => $
\noindent$(S_1, m_{S_1}(\mathfrak{f}_{S_1}, \vals_{S_1})) \equiv_{H}
 (S_2, m_{S_2}(\mathfrak{f}_{S_2}, \vals_{S_2}))$.

If $s_1$ and $s_2$ start in the state $m_1(\mathfrak{f}_1, \text{loop}_c^1, \vals_1)$ and $m_2(\mathfrak{f}_2, \text{loop}_c^2,\allowbreak , \vals_2)$ respectively in which
  crash flags are not set, $\mathfrak{f}_1 =$

  \noindent$\mathfrak{f}_2 = 0$,
  $s_1$ and $s_2$ have not already executed, $\text{loop}_c^1(n_1) = \text{loop}_c^2(n_2) = 0$,
  value stores $\vals_1$ and $\vals_2$ agree on values of variables in $\text{TVar}(s)$,
  $\forall x \in \text{TVar}(s),
  \vals_1(x) = \vals_2(x)$,
  then $s_1$ and $s_2$ terminate in the same way:
\begin{enumerate}
\item $s_1$ and $s_2$ both terminate,
$({s_1}, m_1) ->* (\text{skip}, m_1')$,
$({s_2}, m_2) ->* (\text{skip}, m_2')$.

\item $s_1$ and $s_2$ both do not terminate, $\forall k> 0$,
$({s_1}, m_1)$ {\kStepArrow [k] } $({S_{1_k}}, m_{1_k})$,
$({s_2}, m_2)$ {\kStepArrow [k] } $({S_{2_k}}, m_{2_k})$
where $S_{1_k} \neq \text{skip}$, $S_{2_k} \neq \text{skip}$.
\end{enumerate}
\end{corollary}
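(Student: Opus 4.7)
The plan is to derive Corollary~\ref{coro:loopTermInSameWay} as a direct consequence of Lemma~\ref{lmm:loopTermInSameWayIntermediate}, which already does all of the heavy lifting. Lemma~\ref{lmm:loopTermInSameWayIntermediate} tells us that, for every positive integer $i$, one of three mutually exclusive alternatives holds: (1) loop counters of $s_1$ and $s_2$ stay strictly below $i$ and both statements terminate in the same way; (2) loop counters stay at most $i$ and both statements diverge without reaching a ``synchronization point'' at level $i$; or (3) there exist reachable configurations $(s_1, m_{1_i})$ and $(s_2, m_{2_i})$ whose loop counters equal $i$ and whose value stores agree on $\text{TVar}(s)$. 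So the corollary reduces to a case split on whether $s_1$ terminates.

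First I would handle the case in which $s_1$ terminates when started in $m_1$. By the definition of terminating execution, this takes only finitely many steps, and I would invoke Lemma~\ref{lmm:loopCntStepwiseInc} (as already used in the proof of Lemma~\ref{lmm:sameFinalValueX}) to conclude that there is a finite positive integer $k$ such that the loop counter of $s_1$ stays strictly less than $k$ throughout its execution. Applying Lemma~\ref{lmm:loopTermInSameWayIntermediate} with $i=k$, alternative (3) is immediately ruled out because no configuration in the execution of $s_1$ has loop counter equal to $k$, and alternative (2) is ruled out because $s_1$ terminates. Hence alternative (1) holds, and within alternative (1) only sub-case (a) is compatible with $s_1$ terminating, which yields that $s_2$ also terminates. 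This establishes one direction of Definition~\ref{def:termInSameWay}.

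Next I would handle the case in which $s_1$ does not terminate. Suppose for contradiction that $s_2$ terminates; then by the symmetric argument applied with the roles of $s_1$ and $s_2$ swapped (noting that all hypotheses of the corollary are symmetric in the two programs) we would conclude that $s_1$ terminates, a contradiction. Alternatively, one can argue directly: if $s_2$ terminated in some state $m_2'$, then by Lemma~\ref{lmm:loopCntStepwiseInc} the loop counter of $s_2$ would be bounded by some $k'$, and applying Lemma~\ref{lmm:loopTermInSameWayIntermediate} with $i = k'$ would again force alternative (1) sub-case (a), yielding termination of $s_1$, a contradiction. So $s_2$ also diverges, matching the second clause of Definition~\ref{def:termInSameWay}.

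The only minor obstacle I anticipate is justifying the boundedness step precisely: one needs that in a terminating execution of a while statement the loop counter indeed attains only finitely many distinct values, which follows from the semantic rule \textbf{Wh-T} together with Lemma~\ref{lmm:loopCntStepwiseInc} saying the counter can increase by at most one per step. Everything else is a routine elimination of alternatives from Lemma~\ref{lmm:loopTermInSameWayIntermediate}, so the corollary itself admits a short proof once the intermediate lemma is in hand.
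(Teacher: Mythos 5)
Your proposal is correct and follows the same route as the paper, which simply asserts that the corollary is immediate from Lemma~\ref{lmm:loopTermInSameWayIntermediate}; you have filled in the elimination-of-alternatives details (bounding the loop counter in a terminating execution via Lemma~\ref{lmm:loopCntStepwiseInc} and then ruling out alternatives (2) and (3)) that the paper leaves implicit. No gaps.
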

This is from Lemma~\ref{lmm:loopTermInSameWayIntermediate} immediately.


Lemma\ref{lmm:loopTermInSameWayWithLoopCnt} is necessary only for showing the same I/O sequence in the next section.

\begin{lemma}\label{lmm:loopTermInSameWayWithLoopCnt}
Let $s_1 = ``\text{while}_{\langle n_1\rangle}(e) \{S_1\}"$ and

    \noindent$s_2 = ``\text{while}_{\langle n_2\rangle}(e) \{S_2\}"$ be two while statements in program $P_1$ and $P_2$ respectively with the same set of termination deciding variables,
    $\text{TVar}(s_1) = \text{TVar}(s_2) = \text{TVar}(s)$,
    whose bodies $S_1$ and $S_2$ satisfy the proof rule of equivalently computation of variables in $\text{TVar}(s)$,
    $\forall x \in \text{TVar}(s) \, :\, S_1 \equiv_{x}^S S_2$
    and whose bodies $S_1$ and $S_2$ terminate in the same way in executions when started in states with crash flags not set and agreeing on values of variables in $\text{TVar}(S_1) \cup \text{TVar}(S_2)$:

\noindent$\forall m_{S_1}(\mathfrak{f}_{S_1}, \vals_{S_1})\,
                  m_{S_2}(\mathfrak{f}_{S_2}, \vals_{S_2}):$

\noindent$(((\forall z\in \text{TVar}(S_1) \cup \text{TVar}(S_2)),
  \vals_{S_1}(z) = \vals_{S_2}(z)) \wedge
  (\mathfrak{f}_{S_1} = \mathfrak{f}_{S_2} = 0)$

\noindent$=> (S_1, m_{S_1}(\mathfrak{f}_{S_1}, \vals_{S_1})) \equiv_{H}
 (S_2, m_{S_2}(\mathfrak{f}_{S_2}, \vals_{S_2}))$.

If $s_1$ and $s_2$ start in the state $m_1(\mathfrak{f}_1, \text{loop}_c^1, \vals_1)$ and $m_2(\mathfrak{f}_2, \text{loop}_c^2,\allowbreak , \vals_2)$ respectively in which
  crash flags are not set,
  $\mathfrak{f}_1 = \mathfrak{f}_2 = 0$,
  $s_1$ and $s_2$ have not already executed,
  $\text{loop}_c^1(n_1) = \text{loop}_c^2(n_2) = 0$,
  value stores $\vals_1$ and $\vals_2$ agree on values of variables in $\text{TVar}(s)$,
  $\forall x \in \text{TVar}(s),
  \vals_1(x) = \vals_2(x)$,
  one of the following holds:
\begin{enumerate}
\item $s_1$ and $s_2$ both terminate and the loop counters of $s_1$ and $s_2$ are less than a positive integer $i$ and less than or equal to $i-1$:
$({s_1}, m_1) ->* (\text{skip}, m_1')$,
$({s_2}, m_2) ->* (\text{skip}, m_2')$ where both of the following hold:
\begin{itemize}

\item The loop counters of $s_1$ and $s_2$ are less than a positive integer $i$:

$\exists i>0\, \forall m_{1}'\, m_{2}'\,:\,$

\noindent$({s_1}, m_1) ->* ({S_1'}, m_{1}'(\text{loop}_c^{1'})), \, \text{loop}_c^{1'}(n_1) < i$ and

\noindent$({s_2}, m_2) ->* ({S_2'}, m_{2}'(\text{loop}_c^{2'})), \, \text{loop}_c^{2'}(n_2) < i$.

\item $\forall 0 < j < i$, there are two configurations $(s_1, m_{1_j})$ and $(s_2, m_{2_j})$ reachable from $(s_1, m_1)$ and $(s_2, m_2)$, respectively, in which both crash flags are not set, the loop counters of $s_1$ and $s_2$ are equal to $j$ and value stores agree on the values of variables in $\text{TVar}(s)$, and for every state in execution $(s_1, m_1) ->* (s_1, m_{1_j})$ or $(s_2, m_2) ->* (s_2, m_{2_j})$, the loop counters for $s_1$ and $s_2$ are less than or equal to $j$ respectively:

 $\exists (s_1, m_{1_j})\, (s_2, m_{2_j})\,:\,$

 \noindent$({s_1}, m_1) ->* ({s_1}, m_{1_j}(\mathfrak{f}_1, \text{loop}_c^{1_j}, \vals_{1_j})) \wedge$

 \noindent$({s_2}, m_2) ->* ({s_2}, m_{2_j}(\mathfrak{f}_2, \text{loop}_c^{2_j}, \vals_{2_j}))$ where
\begin{itemize}
 \item $\mathfrak{f}_1 = \mathfrak{f}_2 = 0$; and

 \item $\text{loop}_c^{1_j}(n_1) = \text{loop}_c^{2_j}(n_2) = j$; and

 \item $\forall x \in \text{TVar}(s)\,:\,
 \vals_{1_j}(x) = \vals_{2_j}(x)$; and

 \item $\forall m_1'\,:\, ({s_1}, m_1) ->* (S_1', m_1'(\text{loop}_c^{1'})) ->* ({s_1}, m_{1_j}), \;$

     \noindent$\text{loop}_c^{1'}(n_1) \leq j$; and

 \item $\forall m_2'\,:\, ({s_2}, m_2) ->* (S_2', m_2'(\text{loop}_c^{2'})) ->* ({s_2}, m_{2_j}), \;$

     \noindent$\text{loop}_c^{2'}(n_2) \leq j$.
\end{itemize}

\end{itemize}

\item $s_1$ and $s_2$ both do not terminate, $\forall k> 0$,
$({s_1}, m_1)$ {\kStepArrow [k] } $({S_{1_k}}, m_{1_k})$,
$({s_2}, m_2)$ {\kStepArrow [k] } $({S_{2_k}}, m_{2_k})$
where $S_{1_k} \neq \text{skip}$, $S_{2_k} \neq \text{skip}$ such that one of the following holds:

\begin{enumerate}
\item For any positive integer $i$, there are two configurations $(s_1, m_{1_i})$ and $(s_2, m_{2_i})$ reachable from $(s_1, m_1)$ and $(s_2, m_2)$, respectively, in which both crash flags are not set, the loop counters of $s_1$ and $s_2$ are equal to $i$ and value stores agree on the values of variables in $\text{TVar}(s)$, and for every state in execution $(s_1, m_1) ->* (s_1, m_{1_i})$ or $(s_2, m_2) ->* (s_2, m_{2_i})$, the loop counters for $s_1$ and $s_2$ are less than or equal to $i$ respectively:

 $\forall i>0 \, \exists (s_1, m_{1_i})\, (s_2, m_{2_i})\,:\,$

 \noindent$({s_1}, m_1) ->* ({s_1}, m_{1_i}(\mathfrak{f}_1, \text{loop}_c^{1_i}, \vals_{1_i})) \wedge$

 \noindent$({s_2}, m_2) ->* ({s_2}, m_{2_i}(\mathfrak{f}_2, \text{loop}_c^{2_i}, \vals_{2_i}))$ where
\begin{itemize}
 \item $\mathfrak{f}_1 = \mathfrak{f}_2 = 0$; and

 \item $\text{loop}_c^{1_i}(n_1) = \text{loop}_c^{2_i}(n_2) = i$; and

 \item $\forall x \in \text{TVar}(s)\,:\,
  \vals_{1_i}(x) = \vals_{2_i}(x)$; and

 \item $\forall m_1'\,:\,
 ({s_1}, m_1) ->* (S_1', m_1'(\text{loop}_c^{1'})) ->* ({s_1}, m_{1_i}), \;$

  \noindent$\text{loop}_c^{1'}(n_1) \leq i$; and

 \item $\forall m_2'\,:\, ({s_2}, m_2) ->* (S_2', m_2'(\text{loop}_c^{2'})) ->* ({s_2}, m_{2_i}), \;$

\noindent$\text{loop}_c^{2'}(n_2) \leq i$;
\end{itemize}

\item The loop counters for $s_1$ and $s_2$ are less than a positive integer $i$ and less than or equal to $i-1$ such that all of the following hold:

\begin{itemize}
\item $\exists i>0, \forall m_1'\,m_2' \,:\,
({s_1}, m_1) ->* (S_{1}, m_1'(\text{loop}_c^{1'})),$

\noindent$({s_2}, m_2) ->* (S_{2}, m_2'(\text{loop}_c^{2'}))$ where
      $\text{loop}_c^{1'}(n_1) < i$,
      $\text{loop}_c^{2'}(n_2) < i$;

\item $\forall 0 < j < i$, there are two configurations $(s_1, m_{1_j})$ and $(s_2, m_{2_j})$ reachable from $(s_1, m_1)$ and $(s_2, m_2)$, respectively, in which both crash flags are not set, the loop counters of $s_1$ and $s_2$ are equal to $j$ and value stores agree on the values of variables in $\text{TVar}(s)$, and for every state in execution $(s_1, m_1) ->* (s_1, m_{1_j})$ or $(s_2, m_2) ->* (s_2, m_{2_j})$, the loop counters for $s_1$ and $s_2$ are less than or equal to $j$ respectively:

 $\exists (s_1, m_{1_j})\, (s_2, m_{2_j})\,:\,$

 \noindent$({s_1}, m_1) ->* ({s_1}, m_{1_j}(\mathfrak{f}_1, \text{loop}_c^{1_j}, \vals_{1_j})) \wedge$

 \noindent$({s_2}, m_2) ->* ({s_2}, m_{2_j}(\mathfrak{f}_2, \text{loop}_c^{2_j}, \vals_{2_j}))$ where
\begin{itemize}
 \item $\mathfrak{f}_1 = \mathfrak{f}_2 = 0$; and

 \item $\text{loop}_c^{1_j}(n_1) = \text{loop}_c^{2_j}(n_2) = j$; and

 \item $\forall x \in \text{TVar}(s)\,:\,
   \vals_{1_j}(x) = \vals_{2_j}(x)$; and

 \item $\forall m_1'\,:\,
 ({s_1}, m_1) ->* (S_1', m_1'(\text{loop}_c^{1'})) ->* ({s_1}, m_{1_j}), \;$

  \noindent$\text{loop}_c^{1'}(n_1) \leq j$; and

 \item $\forall m_2'\,:\, ({s_2}, m_2) ->* (S_2', m_2'(\text{loop}_c^{2'})) ->* ({s_2}, m_{2_j}), \;$

  \noindent$\text{loop}_c^{2'}(n_2) \leq j$;
\end{itemize}
\end{itemize}

\item The loop counters for $s_1$ and $s_2$ are less than or equal to some positive integer $i$ such that all of the following hold:

\begin{itemize}
\item $\exists i>0\,\forall m_1'\,m_2' \,:\,
({s_1}, m_1) ->* (S_{1}, m_1'(\text{loop}_c^{1'})),$

\noindent$({s_2}, m_2) ->* (S_{2}, m_2'(\text{loop}_c^{2'}))$ where
      $\text{loop}_c^{1'}(n_1) \leq i$,
      $\text{loop}_c^{2'}(n_2) \leq i$;

\item $\forall 0 < j < i$, there are two configurations $(s_1, m_{1_j})$ and $(s_2, m_{2_j})$ reachable from $(s_1, m_1)$ and $(s_2, m_2)$, respectively, in which both crash flags are not set, the loop counters of $s_1$ and $s_2$ are equal to $j$ and value stores agree on the values of variables in $\text{TVar}(s)$, and for every state in execution $(s_1, m_1) ->* (s_1, m_{1_j})$ or $(s_2, m_2) ->* (s_2, m_{2_j})$, the loop counters for $s_1$ and $s_2$ are less than or equal to $j$ respectively:

 $\exists (s_1, m_{1_j})\, (s_2, m_{2_j})\,:\,$

 \noindent$({s_1}, m_1) ->* ({s_1}, m_{1_j}(\mathfrak{f}_1, \text{loop}_c^{1_j}, \vals_{1_j})) \wedge$

 \noindent$({s_2}, m_2) ->* ({s_2}, m_{2_j}(\mathfrak{f}_2, \text{loop}_c^{2_j}, \vals_{2_j}))$ where
\begin{itemize}
 \item $\mathfrak{f}_1 = \mathfrak{f}_2 = 0$; and

 \item $\text{loop}_c^{1_j}(n_1) = \text{loop}_c^{2_j}(n_2) = j$; and

 \item $\forall x \in \text{TVar}(s)\,:\,
 \vals_{1_j}(x) = \vals_{2_j}(x)$; and

 \item $\forall m_1'\,:\, ({s_1}, m_1) ->* (S_1', m_1'(\text{loop}_c^{1'})) ->* ({s_1}, m_{1_j}), \;$

     \noindent$\text{loop}_c^{1'}(n_1) \leq j$; and

 \item $\forall m_2'\,:\, ({s_2}, m_2) ->* (S_2', m_2'(\text{loop}_c^{2'})) ->* ({s_2}, m_{2_j}), \;$

     \noindent$\text{loop}_c^{2'}(n_2) \leq j$;
\end{itemize}

\item There are no configurations $(s_1, m_{1_i})$ and $(s_2, m_{2_i})$ reachable from $(s_1, m_1)$ and $(s_2, m_2)$, respectively, in which  crash flags are not set, the loop counters of $s_1$ and $s_2$ are equal to $i$, and value stores agree on the values of variables in $\text{TVar}(s)$:

$\nexists (s_1, m_{1_i})\, (s_2, m_{2_i})\,:\,$

\noindent$({s_1}, m_1) ->* ({s_1}, m_{1_i}(\mathfrak{f}_1, \text{loop}_c^{1_i}, \vals_{1_i})) \wedge$

\noindent$({s_2}, m_2) ->* ({s_2}, m_{2_i}(\mathfrak{f}_2, \text{loop}_c^{2_i}, \vals_{2_i}))$ where
      \begin{itemize}
        \item $\mathfrak{f}_1 = \mathfrak{f}_2 = 0$; and

        \item $\text{loop}_c^{1_i}(n_1) = \text{loop}_c^{2_i}(n_2) = i$; and

        \item $\forall x \in \text{TVar}(s)\,:\,
        \vals_{1_i}(x) = \vals_{2_i}(x)$.
      \end{itemize}
\end{itemize}
\end{enumerate}
\end{enumerate}
\end{lemma}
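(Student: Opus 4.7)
The plan is to derive this global characterization from the per-iteration analysis already established in Lemma~\ref{lmm:loopTermInSameWayIntermediate}. For each positive integer $i$, that lemma classifies the execution of $s_1$ and $s_2$ into one of three mutually exclusive cases: (I) the loop counters stay strictly below $i$ and the two executions either both terminate or both diverge; (II) the counters stay at or below $i$, neither execution terminates, and no pair of matching iteration-$i$ configurations exists; or (III) there exist reachable iteration-$i$ configurations whose crash flags are clear, whose loop counters equal $i$, and whose value stores agree on $\text{TVar}(s)$, with all intermediate counter values bounded by $i$. The present lemma is essentially an aggregation of these per-$i$ verdicts across all $i$.

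First I would establish the downward-closure property: if case (III) of the intermediate lemma holds for some $i$, then it also holds for every $0 < j < i$. This follows because any execution path reaching the iteration-$i$ matching configuration must, by the Wh-T rule, pass through a configuration $(s_1, m_{1_j})$ in which the counter equals $j$ right before the $(j{+}1)$st body execution; the induction argument already carried out inside the proof of Lemma~\ref{lmm:loopTermInSameWayIntermediate} shows that at such a point the value stores agree on $\text{TVar}(s)$ and crash flags are unset. Consequently, the set $I = \{i > 0 : \text{intermediate case (III) holds at } i\}$ is a (possibly empty, possibly infinite) downward-closed subset of $\mathbb{N}^+$.

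Next I would case split on the shape of $I$. If $I = \mathbb{N}^+$, matching iteration-$i$ configurations exist for every $i$, which directly supplies sub-case 2(a) of the current lemma; non-termination of both $s_1$ and $s_2$ follows because a terminating execution would bound the loop counter by some finite value, contradicting case (III) at any larger index. Otherwise, let $k = \min\{i > 0 : i \notin I\}$; by downward closure, case (III) holds at every $j$ with $0 < j < k$, supplying the per-$j$ matching configurations required by all three main cases of the current lemma. At $k$ itself, the intermediate lemma forces either case (I) or case (II). Case (I) with termination yields main case 1 (with the strict bound $\text{loop}_c^{\cdot}(n_{\cdot}) < k$); case (I) with non-termination yields sub-case 2(b); and case (II) yields sub-case 2(c), including the explicit witness that no iteration-$k$ matching configuration exists.

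The main obstacle will be the downward-closure step, because it requires extracting from the execution path the specific intermediate configuration at iteration $j$ and verifying that its value stores still agree on $\text{TVar}(s)$. I expect this to reduce to reusing the equivalent-computation invariant for the loop body that was proven in the induction step of Lemma~\ref{lmm:loopTermInSameWayIntermediate}, invoked now at iteration $j$ rather than at the final iteration. Once downward closure is in place, the remaining work is a clean case analysis on $I$ and careful bookkeeping of strict versus non-strict counter bounds that distinguishes sub-cases 2(b) and 2(c).
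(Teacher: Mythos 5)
Your proposal is correct and follows essentially the same route as the paper: both reduce the lemma to an aggregation of the per-iteration verdicts of Lemma~\ref{lmm:loopTermInSameWayIntermediate}, supplying the matching iteration-$j$ configurations for all $j$ below the first index at which that lemma's ``matching configurations'' case fails, and then sorting the behavior at that first index into main case 1, 2(b), or 2(c) (with 2(a) when no such index exists). Your explicit ``downward-closed set $I$'' framing is just a tidier packaging of the minimal-bound argument the paper carries out directly, and the downward-closure step you flag as the main obstacle is indeed exactly what the induction inside the intermediate lemma already provides.
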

\begin{proof}
From lemma~\ref{lmm:loopTermInSameWayIntermediate}, we have $s_1$ and $s_2$ terminate in the same way when started in states $m_1$ and $m_2$ respectively. Then there are two big cases.
\begin{enumerate}
\item $s_1$ and $s_2$ both terminate.

Let $i$ be the smallest integer such that the loop counters of $s_1$ and $s_2$ are less than $i$ in the executions.
Then there are two possibilities.
\begin{enumerate}

\item $i=1$.

In the proof of Lemma~\ref{lmm:loopTermInSameWayIntermediate}, the evaluation of the loop predicate of $s_1$ and $s_2$ produce zero w.r.t value stores,
$\vals_1$ and $\vals_2$.
Then the execution of $s_1$ proceeds as follows:

\begin{tabbing}
xx\=xx\=\kill
\>\>     $({s_1}, m_{1}(\text{loop}_c^1, \vals_1))$\\
\>= \>   $(\text{while}_{\langle n_1\rangle} (e) \; \{S_1\}, {m_{1}(\text{loop}_c^{1})})$\\
\>$->$\> $(\text{while}_{\langle n_1\rangle} ((0, v_\mathfrak{of})) \; \{S_1\}, {m_{1}(\text{loop}_c^{1})})$ by rule EEval' \\
\>$->$\> $(\text{while}_{\langle n_1\rangle} (0) \; \{S_1\}, {m_{1}(\text{loop}_c^{1})})$\\
\>\>      by rule E-Oflow1 or E-Oflow2 \\
\>$->$\> $({\text{skip}}, m_1(\text{loop}_c^{1}\setminus \{(n_1, *)\}))$\\
\>\>      by rule Wh-F1 or Wh-F2.
\end{tabbing}

Similarly, $({s_2}, m_{2}(\text{loop}_c^{2}, \vals_{2}))$ {\kStepArrow [2] }
$({\text{skip}}, m_2(\text{loop}_c^{2}\setminus \{(n_2,\allowbreak , *)\}))$.
Then the lemma holds because of the initial configuration
$(s_1, m_1(\text{loop}_c^1, \vals_1))$ and $(s_2, m_2(\text{loop}_c^2, \vals_2))$.

\item $i>1$.

Because $s_1$ and $s_2$ terminate, $i$ is the smallest positive integer such that the loop counters of $s_1$ and $s_2$ are less than $i$, by Lemma~\ref{lmm:loopTermInSameWayIntermediate}, $\forall 0<j<i$,
there are two configurations $(s_1, m_{1_j})$ and $(s_2, m_{2_j})$ reachable from $(s_1, m_1)$ and $(s_2, m_2)$, respectively, in which both crash flags are not set, the loop counters of $s_1$ and $s_2$ are equal to $j$ and value stores agree on the values of variables in $\text{TVar}(s)$, and for every state in execution, $(s_1, m_1) ->* (s_1, m_{1_j})$ or $(s_2, m_2) ->* (s_2, m_{2_j})$ the loop counters for $s_1$ and $s_2$ are less than or equal to $j$ respectively.
With the initial configuration $(s_1, m_1)$ and $(s_2, m_2)$, the lemma holds.
\end{enumerate}

\item $s_1$ and $s_2$ both do not terminate. There are three possibilities.
\begin{enumerate}
\item $\forall i>0$, there are two configurations $(s_1, m_{1_i})$ and $(s_2, m_{2_i})$ reachable from $(s_1, m_1)$ and $(s_2, m_2)$, respectively, in which both crash flags are not set, the loop counters of $s_1$ and $s_2$ are equal to $i$ and value stores agree on the values of variables in $\text{TVar}(s)$, and for every state in execution, $(s_1, m_1) ->* (s_1, m_{1_i})$ or $(s_2, m_2) ->* (s_2, m_{2_i})$ the loop counters for $s_1$ and $s_2$ are less than or equal to $i$ respectively.

\item The loop counters of $s_1$ and $s_2$ are less than a positive integer $i$.

Let $i$ be the smallest positive integer such that there is no positive integer $j<i$ that the loop counters of $s_1$ and $s_2$ are less than the positive integer $j$.
This case occurs when $s_1$ and $s_2$ finish the full $(i-1)$th iterations and both executions raise an exception in the evaluation of the loop predicate of $s_1$ and $s_2$ for the $i$th time. There are further two possibilities.
\begin{enumerate}
\item $i=1$.

In proof of Lemma~\ref{lmm:loopTermInSameWayIntermediate}, evaluations of the predicate expression of $s_1$ and $s_2$ raise an exception w.r.t value stores $\vals_1$ and $\vals_2$. The lemma holds.

\item $i>1$.
By the assumption of initial states $(s_1, m_1)$ and $(s_2, m_2)$, when $j=0$, initial states $(s_1, m_1)$ and $(s_2, m_2)$ have crash flags not set, the loop counters of $s_1$ and $s_2$ are zero and value stores agree on values of variables of $s_1$ and $s_2$.

By Lemma~\ref{lmm:loopTermInSameWayIntermediate}, $\forall 0<j<i$,
there are two configurations $(s_1, m_{1_j})$ and $(s_2, m_{2_j})$ reachable from $(s_1, m_1)$ and $(s_2, m_2)$, respectively, in which both crash flags are not set, the loop counters of $s_1$ and $s_2$ are equal to $j$ and value stores agree on the values of variables in $\text{TVar}(s)$, and for every state in execution, $(s_1, m_1) ->* (s_1, m_{1_j})$ or $(s_2, m_2) ->* (s_2, m_{2_j})$ the loop counters for $s_1$ and $s_2$ are less than or equal to $j$ respectively.
With the initial configuration $(s_1, m_1)$ and $(s_2, m_2)$, the lemma holds.

\end{enumerate}

\item The loop counters of $s_1$ and $s_2$ are less than or equal to a positive integer $i$.

Let $i$ be the smallest positive integer such that the loop counters of $s_1$ and $s_2$ are less than or equal to the positive integer $i$.
There are two possibilities.
\begin{enumerate}
\item $i=1$.

In the proof of Lemma~\ref{lmm:loopTermInSameWayIntermediate}, the execution of $s_1$ proceeds as follows:
\begin{tabbing}
xx\=xx\=\kill
\>\>     $({s_1}, m_{1}(\text{loop}_c^{1}, \vals_{1}))$\\
\>= \>   $(\text{while}_{\langle n_1\rangle} (e) \; \{S_1\}, {m_{1}(\text{loop}_c^{1}, \vals_{1})})$\\
\>$->$\> $(\text{while}_{\langle n_1\rangle} ((v, v_\mathfrak{of})) \; \{S_1\}, {m_{1}(\text{loop}_c^{1}, \vals_{1})})$ where $v\neq 0$\\
\>\>      by rule EEval' \\
\>$->$\> $(\text{while}_{\langle n_1\rangle} (v) \; \{S_1\}, {m_{1}(\text{loop}_c^{1}, \vals_{1})})$\\
\>\>     by rule E-Oflow1 or E-Oflow2 \\
\>$->$\> $(S_1;\text{while}_{\langle n_1\rangle} (e) \; \{S_1\}, m_{1}(\text{loop}_c^{1}[1/(n_1)], \vals_{1}))$\\
\>\>      by rule Wh-T .
\end{tabbing}
The execution of $s_2$ proceeds to

\noindent$(S_2;\text{while}_{\langle n_2\rangle} (e) \; \{S_2\}, m_{2}(\text{loop}_c^{2}[1/(n_2)], \vals_{2}))$.
In addition, executions of $S_1$ and $S_2$ do not terminate when started in states
$m_{1}(\text{loop}_c^{1}[1/(n_1)], \vals_{1})$ and

\noindent$m_{2}(\text{loop}_c^{2}[1/(n_2)], \vals_{2})$.
By Lemma~\ref{lmm:multiStepSeqExec}, executions of $s_1$ and $s_2$ do not terminate.

\item $i>1$.

In the proof of Lemma~\ref{lmm:loopTermInSameWayIntermediate}, when started in the state
$({s_1}, m_{1_{i-1}}(\mathfrak{f}_1, \text{loop}_c^{1_{i-1}}, \vals_{1_{i-1}}))$
the execution of $s_1$ proceeds as follows:
\begin{tabbing}
xx\=xx\=\kill
\>\>     $({s_1}, m_{1_{i-1}}(\mathfrak{f}_1, \text{loop}_c^{1_{i-1}}, \vals_{1_{i-1}}))$\\
\>= \>   $(\text{while}_{\langle n_1\rangle} (e) \; \{S_1\}, m_{1_{i-1}}(\mathfrak{f}_1, \text{loop}_c^{1_{i-1}}, \vals_{1_{i-1}}))$\\
\>$->$\> $(\text{while}_{\langle n_1\rangle} ((v, v_\mathfrak{of})) \; \{S_1\}, m_{1_{i-1}}(\mathfrak{f}_1, \text{loop}_c^{1_{i-1}}, \vals_{1_{i-1}}))$\\
\>\>     by rule EEval' \\
\>$->$\> $(\text{while}_{\langle n_1\rangle} (v) \; \{S_1\}, m_{1_{i-1}}(\mathfrak{f}_1, \text{loop}_c^{1_{i-1}}, \vals_{1_{i-1}}))$\\
\>\>      by rule E-Oflow1 or E-Oflow2 \\
\>$->$\> $(S_1;\text{while}_{\langle n_1\rangle} (e) \; \{S_1\},$\\
\>\>     $m_{1_{i-1}}(\mathfrak{f}_1, \text{loop}_c^{1_{i-1}}[i/(n_1)], \vals_{1_{i-1}}))$\\
\>\>     by rule Wh-T1 or Wh-T2.
\end{tabbing}
The execution of $s_2$ proceeds to
$(S_2;\text{while}_{\langle n_2\rangle} (e) \; \{S_2\}, \allowbreak m_{2_{i-1}}(\mathfrak{f}_2, \text{loop}_c^{2_{i-1}}[i/(n_2)], \vals_{2_{i-1}}))$.
In addition, executions of $S_1$ and $S_2$ do not terminate when started in states
$m_{1_{i-1}}(\mathfrak{f}_1, \text{loop}_c^{1_{i-1}}[i/(n_1)], \vals_{1_{i-1}})$ and

\noindent$m_{2_{i-1}}(\mathfrak{f}_2, \text{loop}_c^{2_{i-1}}[i/(n_2)], \vals_{2_{i-1}})$.
By Lemma~\ref{lmm:multiStepSeqExec}, executions of $s_1$ and $s_2$ do not terminate.
\end{enumerate}
\end{enumerate}
\end{enumerate}
\end{proof}

\subsection{Behavioral equivalence}\label{sec:sameOutSeq}
We now propose a proof rule under which two programs produce the same {\it output sequence}, namely the same I/O sequence till any $i$th output value.
We care about the I/O sequence due to the possible crash from the lack of input.
We start by giving the definition of the same output sequence, then we describe the proof rule under which two programs produce the same output sequence, finally we
show that our proof rule ensures same output together with the necessary auxiliary lemmas.
We use the notation $``\text{Out}(\vals)"$ to represent the output sequence in value store $\vals$, the I/O sequence $\vals({id}_{IO})$ till the rightmost output value. Particularly, when there is no output value in the I/O sequence $\vals({id}_{IO})$, $\text{Out}(\vals) = \varnothing$.

\begin{definition} {\bf (Same output sequence)}
Two statement sequences $S_1$ and $S_2$ produce the same output sequence when started in states $m_1$ and $m_2$ respectively, written $(S_1, m_1)  \equiv_{O} (S_2, m_2)$, iff
$\forall m_1'\, m_2'$ such that
$({S_1}, m_1) ->* ({S_1'}, m_1'(\vals_{1}'))$ and
$({S_2}, m_2) ->* ({S_2'}, m_2'(\vals_{2}'))$,
there are states $m_1''\, m_2''$ reachable from initial states $m_1$ and $m_2$,
$({S_1}, m_1) ->* ({S_1''}, m_1''(\vals_{1}''))$ and
$({S_2}, m_2) ->* ({S_2''}, m_2''(\vals_{2}''))$ so that
$\text{Out}(\vals_{2}'') = \text{Out}(\vals_{1}') \,\text{and}\, \text{Out}(\vals_{1}'') = \text{Out}(\vals_{2}')$.
\end{definition}

\subsubsection{Proof rule for behavioral equivalence}
We show the proof rules of the behavioral equivalence.
The output sequence produced in executions of a statement sequence $S$ depends on values of a set of variables in the program, the output deciding variables $\text{OVar}(S)$.
The output deciding variables are of two parts: $\text{TVar}_o(S)$ are variables affecting the termination of executions of a statement sequence;  $\text{Imp}_o(S)$ are variables affecting values of the I/O sequence produced in executions of a statement sequence.
The definitions of $\text{TVar}_o(S)$ and $\text{Imp}_o(S)$ are shown in Definition~\ref{def:ImpVarO} and~\ref{def:TVarO}.

\begin{definition}\label{def:ImpVarO}
{\bf (Imported variables relative to output)} The imported variables in one program $S$ relative to output, written $\text{Imp}_o(S)$, are listed as follows:

\begin{enumerate}
\item $\text{Imp}_o(S)  = \{{id}_{IO}\}$, if $(\forall e\,:\, ``\text{output }e" \notin S)$;

\item $\text{Imp}_o(``\text{output }e") = \{{id}_{IO}\} \cup \text{Use}(e)$;

\item $\text{Imp}_o(``\text{If }(e) \text{ then }\{S_t\}\text{ else }\{S_f\}") =
\text{Use}(e) \cup
\text{Imp}_o(S_t) \cup
\text{Imp}_o(S_f)$ if $(\exists e\,:\, ``\text{output }e" \in S)$;

\item $\text{Imp}_o(``\text{while}_{\langle n\rangle}(e)\{S''\}") = \text{Imp}(``\text{while}_{\langle n\rangle}(e)\{S''\}", \{{id}_{IO}\})$ if $(\exists e\,:\, ``\text{output }e" \in S'')$;

\item For $k>0$, $\text{Imp}_o(s_1;...;s_k;s_{k+1}) =
\text{Imp}(s_1;...;s_k, \text{Imp}_o(s_{k+1}))$ if $(\exists e\,:\, ``\text{output }e" \in s_{k+1})$;

\item For $k>0$, $\text{Imp}_o(s_1;...;s_k;s_{k+1}) = \text{Imp}_o(s_1;...;s_k)$ if $(\forall e\,:\, ``\text{output }e" \notin s_{k+1})$;
\end{enumerate}
\end{definition}

\begin{definition}\label{def:TVarO}
{\bf (Termination deciding variables relative to output)} The termination deciding variables in a statement sequence $S$ relative to output, written $\text{TVar}_o(S)$, are listed as follows:
\begin{enumerate}
\item $\text{TVar}_o(S) = \emptyset$ if $(\forall e\,:\, ``\text{output }e" \notin S)$;

\item $\text{TVar}_o(``\text{output }e") = \text{Err}(e)$;

\item $\text{TVar}_o(``\text{If }(e) \text{ then }\{S_t\}\text{ else }\{S_f\}") =
\text{Use}(e) \cup
\text{TVar}_o(S_t) \cup
\text{TVar}_o(S_f)$ if $(\exists e\,:\, ``\text{output }e" \in S)$;

\item $\text{TVar}_o(``\text{while}_{\langle n\rangle}(e)\{S''\}") = \text{TVar}(``\text{while}_{\langle n\rangle}(e)\{S''\}")$ if $(\exists e\,:\, ``\text{output }e" \in S'')$;

\item For $k>0$, $\text{TVar}_o(s_1;...;s_k;s_{k+1}) =
\text{TVar}(s_1;...;s_k)$

\noindent$\cup \text{Imp}(s_1;...;s_k, \text{TVar}_o(s_{k+1}))$ if $(\exists e\,:\, ``\text{output }e" \in s_{k+1})$;

\item For $k>0$, $\text{TVar}_o(s_1;...;s_k;s_{k+1}) = \text{TVar}_o(s_1;...;s_k)$ if $(\forall e\,:\, ``\text{output }e" \notin s_{k+1})$;
\end{enumerate}
\end{definition}

\begin{definition}\label{def:OVar}
{\bf (Output deciding variables)} The output deciding variables in a statement sequence $S$ are  $\text{Imp}_o(S) \cup \text{TVar}_o(S)$, written $\text{OVar}(S)$.
\end{definition}


The condition of the behavioral equivalence is defined recursively.
The base case is for two same output statements or two statements where the output sequence variable is not defined.
The inductive cases are syntax directed considering the syntax of compound statements and statement sequences.

\begin{definition}
{\bf (proof rule of behavioral equivalence)}
Two statement sequences $S_1$ and $S_2$ satisfy the proof rule of behavioral equivalence,
written $S_1 \equiv_{O}^S S_2$, iff one of the following holds:
\begin{enumerate}
\item $S_1$ and $S_2$ are one statement and one of the following holds:

\begin{enumerate}
\item  $S_1$ and $S_2$ are simple statement and one of the following holds:
\begin{enumerate}
\item $S_1$ and $S_2$ are not output statement, $\forall e_1\, e_2\,:\,$

\noindent$(``\text{output }e_1" \neq S_1) \wedge (``\text{output }e_2" \neq S_2)$; or

\item $S_1 = S_2 = ``\text{output }e"$.
\end{enumerate}

\item
$S_1 = ``\text{If }(e) \text{ then }\{S_1^t\} \text{ else }\{S_1^f\}"$,
$S_2 = ``\text{If }(e) \text{ then }\{S_2^t\} \text{ else}\,\allowbreak \{S_2^f\}"$ and all of the following hold:

\begin{itemize}
\item There is an output statement in $S_1$ and $S_2$,

\noindent$\exists e_1\, e_2\,:\, (``\text{output }e_1" \in S_1) \wedge (``\text{output }e_2" \in S_2)$;

\item $(S_1^t \equiv_{O}^S S_2^t) \wedge (S_1^f \equiv_{O}^S S_2^f)$;
\end{itemize}

\item
$S_1 = ``\text{while}_{\langle n_1 \rangle}(e) \; \{S_1''\}"$ and
$S_2 = ``\text{while}_{\langle n_2 \rangle}(e) \; \{S_2''\}"$ and all of the following hold:

\begin{itemize}
\item There is an output statement in $S_1$ and $S_2$,

\noindent$\exists e_1\, e_2\,:\, (``\text{output }e_1" \in S_1) \wedge (``\text{output }e_2" \in S_2)$;

\item
    $S_1'' \equiv_{O}^S S_2''$;

\item
    $S_1''$ and $S_2''$ have equivalent computation of $\text{OVar}(S_1) \cup \text{OVar}(S_2)$;

\item $S_1''$ and $S_2''$ satisfy the proof rule of termination in the same way,
    $S_1'' \equiv_{H}^S S_2''$;
\end{itemize}

\item Output statements are not in both $S_1$ and $S_2$,

\noindent$\forall e_1\,e_2\,:\, (``\text{output }e_1" \notin S_1) \wedge (``\text{output }e_2" \notin S_2)$.

\end{enumerate}

\item $S_1$ and $S_2$ are not both one statement and one of the following holds:
\begin{enumerate}
\item $S_1=S_1';s_1$ and $S_2=S_2';s_2$, and all of the following hold:
  \begin{itemize}
  \item $S_1' \equiv_{O}^S S_2'$;

  \item $S_1'$ and $S_2'$ have equivalent computation of $\text{OVar}(s_1) \cup \text{OVar}(s_2)$;

  \item $S_1'$ and $S_2'$ satisfy the proof rule of termination in the same way:
   $S_1' \equiv_{H}^S S_2'$;

  \item There is an output statement in both $s_1$ and $s_2$,

  $\exists e_1\, e_2\,:\, (``\text{output }e_1" \in s_1) \wedge (``\text{output }e_2" \in s_2)$;

  \item $s_1 \equiv_{O}^S s_2$;
  \end{itemize}

\item There is no output statement in the last statement in $S_1$ or $S_2$:

$\big((S_1 = S_1';s_1) \wedge (S_1' \equiv_{O}^S S_2) \wedge (\forall e\,:\, ``\text{output }e" \notin s_1)\big)$

$\vee
\big((S_2 = S_2';s_2) \wedge (S_1 \equiv_{O}^S S_2') \wedge (\forall e\,:\, ``\text{output }e" \notin s_2)\big)$;
\end{enumerate}
\end{enumerate}
\end{definition}

\subsubsection{Soundness of the proof rule for behavioral equivalence}
We show that two statement sequences satisfy the proof rule of the behavioral equivalence and their initial states agree on values of their output deciding variables, then the two statement sequences produce the same output sequence when started in their initial states.
\begin{theorem}\label{thm:sameIOtheoremFuncWide}
Two statement sequences $S_1$ and $S_2$ satisfy the proof rule of the behavioral equivalence, $S_1\, {\equiv}_{O}^S \, S_2$.
If $S_1$ and $S_2$ start in states
$m_1(\mathfrak{f}_1, \vals_1)$ and $m_2(\mathfrak{f}_2, \vals_2)$ where both of the following hold:
\begin{itemize}
\item Crash flags are not set, $\mathfrak{f}_1 = \mathfrak{f}_2 = 0$;

\item Value stores $\vals_1$ and $\vals_2$ agree on values of the output deciding variables of $S_1$ and $S_2$,
$\forall id \in \text{OVar}(S_1) \cup \text{OVar}(S_2)\,:\,
\vals_1(id) = \vals_2(id)$;
\end{itemize}
then $S_1$ and $S_2$ produce the same output sequence,

\noindent$(S_1, m_1) \equiv_{O} (S_2, m_2)$.
\end{theorem}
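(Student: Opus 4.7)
The plan is to proceed by induction on $\text{size}(S_1) + \text{size}(S_2)$, mirroring the case analysis used for Theorems~\ref{thm:equivCompMain} and~\ref{thm:mainTermSameWayLocal}, with cases driven by the definition of $S_1 \equiv_{O}^S S_2$. The guiding idea is that $\text{Out}(\vals)$ is only modified by the execution of an output statement, so along any execution the output sequence is a staircase function that changes exactly at those steps, and the output deciding variables $\text{OVar}(S)$ are by construction the variables on whose initial values the outputs and the reachability of each output statement depend.

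For the base cases, if neither $S_1$ nor $S_2$ contains an output statement, then no rule modifies $\vals({id}_{IO})$ up to its last $\bar{v}$, so $\text{Out}(\vals_1') = \text{Out}(\vals_1) = \text{Out}(\vals_2) = \text{Out}(\vals_2')$ trivially for every intermediate $m_1'$, $m_2'$. If $S_1 = S_2 = \text{output } e$, the assumption that $\vals_1$ and $\vals_2$ agree on $\text{OVar} \supseteq \text{Use}(e) \cup \text{Err}(e)$ together with Lemma~\ref{lmm:expEvalSameVal} and Lemma~\ref{lmm:expEvalSameTerm} gives that $\mathcal{E}\llbracket e\rrbracket\vals_1$ and $\mathcal{E}\llbracket e\rrbracket\vals_2$ either both crash (neither produces a new output) or both return the same value (both extend the output sequence by the same $\bar{v}$).

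For the inductive If case, agreement of $\vals_1, \vals_2$ on $\text{Use}(e) \subseteq \text{OVar}$ forces both executions to select the same branch (or both to crash), after which the induction hypothesis applied to the chosen branches delivers the conclusion; care is taken that $\text{OVar}$ of a branch is contained in $\text{OVar}$ of the whole If. For a sequence $S_1';s_1$ and $S_2';s_2$ where $s_1,s_2$ contain output, I would first apply the induction hypothesis to $S_1'\equiv_O^S S_2'$ to align outputs produced by the prefixes. The hypothesis $S_1' \equiv_H^S S_2'$ together with Theorem~\ref{thm:mainTermSameWayLocal} guarantees the prefixes either both hang (so we are done) or both terminate; in the latter case the equivalent-computation-of-$\text{OVar}$ hypothesis and Theorem~\ref{thm:equivCompMain} give that the post-prefix states agree on $\text{OVar}(s_1)\cup\text{OVar}(s_2)$, and the induction hypothesis applied to $s_1\equiv_O^S s_2$ finishes. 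The case where one last statement contains no output is handled by collapsing it via the non-output base case, since its execution cannot change $\text{Out}(\vals)$.

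The main obstacle is the While case, since the two loops can execute arbitrarily many micro-steps between corresponding outputs and the definition of $\equiv_O$ demands a matching configuration exist in the other execution for every intermediate configuration of one execution. The plan is to invoke Lemma~\ref{lmm:loopTermInSameWayWithLoopCnt} with the two loops, yielding for each iteration index $i$ paired configurations $(s_1, m_{1_i})$ and $(s_2, m_{2_i})$ whose value stores agree on $\text{TVar}(s)$; strengthened by the equivalent-computation-of-$\text{OVar}$ condition and Theorem~\ref{thm:equivCompMain} applied iteratively, agreement is lifted to all of $\text{OVar}$ at the start of every iteration. At each such paired configuration the induction hypothesis applied to the loop bodies $S_1''\equiv_O^S S_2''$ gives that the outputs produced during the $i$th iterations match pairwise, and Lemma~\ref{lmm:loopTermInSameWayWithLoopCnt} also tells us that any intermediate configuration of one loop lies within some iteration, so its output prefix is matched by the configuration at the end of that iteration in the other loop. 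The three sub-cases of that lemma (both terminate, both diverge through infinitely many iterations, both diverge inside some iteration) are handled uniformly by this correspondence, and the theorem follows.
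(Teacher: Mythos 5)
Your proposal follows essentially the same route as the paper's proof: induction on $\text{size}(S_1)+\text{size}(S_2)$ with a case analysis on $S_1\equiv_O^S S_2$, using Lemmas~\ref{lmm:expEvalSameVal} and~\ref{lmm:NoOutputExecMultiSteps} for the base cases, Theorems~\ref{thm:mainTermSameWayLocal} and~\ref{thm:equivCompMain} to propagate agreement on $\text{OVar}$ across prefixes in the sequence case, and the iteration-pairing argument for loops. The only cosmetic difference is that you inline the reasoning of Lemma~\ref{lmm:loopTermInSameWayWithLoopCnt} for the while case, whereas the paper packages it as Corollary~\ref{coro:loopSameIOSeq} (via Lemma~\ref{lemma:loopSameIOSeq}); the substance is identical.
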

The proof is by induction on the sum of program size of $S_1$ and $S_2$, $\text{size}(S_1) + \text{size}(S_2)$ and is a case analysis based on $S_1\, {\equiv}_{O}^S \, S_2$.

\begin{proof}
The proof is by induction on the sum of program size of $S_1$ and $S_2$, $\text{size}(S_1) + \text{size}(S_2)$ and is a case analysis based on $S_1 {\equiv}_{O}^S S_2$.

\noindent{\bf Base case}.

$S_1$ and $S_2$ are simple statement.
There are two cases according to the proof rule of behavioral equivalence because stacks are not changed in executions of $S_1$ and $S_2$.
\begin{enumerate}
\item $S_1$ and $S_2$ are not output statement, $\forall e_1\, e_2\,:\, (``\text{output }e_1" \neq S_1) \wedge (``\text{output }e_2" \neq S_2)$;

By the definition of imported variables relative to output,
$\text{Imp}_o(S_1) = \text{Imp}_o(S_2) = \{{id}_{IO}\}$.
By assumption, initial value stores $\vals_1$ and $\vals_2$ agree on the value of the I/O sequence variable, $\vals_1({id}_{IO}) = \vals_2({id}_{IO})$. By definition, $\text{Out}(\vals_1) = \text{Out}(\vals_2)$.

By Lemma~\ref{lmm:NoOutputExecMultiSteps}, in any  state $m_1'$ reachable from $m_1$, the output sequence in $m_1'$ is same as that in $m_1$, $\forall m_1'\,:\, ((S_1, m_1(\vals_1)) ->* (S_1', m_1'(\vals_1'))) => (\text{Out}(\vals_1') = \text{Out}(\vals_1))$.
Similarly, for any state $m_2'$ reachable from $m_2$, the output sequence in $m_2'$ is same as that in $m_2$.
The theorem holds.

\item
$S_1 = S_2 = ``\text{output}\, e"$.

We show that the expression $e$ evaluates to the same value w.r.t value stores, $\vals_1$, $\vals_2$.
By the definition of imported variables relative to output, $\text{Imp}_o(S_1) = \text{Imp}_o(S_2) = \text{Use}(e) \cup \{{id}_{IO}\}$.
Then $\forall x \in \text{Use}(e) \cup \{{id}_{IO}\}\,:\,
\vals_1(x) = \vals_2(x)$ by assumption.
By Lemma~\ref{lmm:expEvalSameVal},
$\mathcal{E}\llbracket e\rrbracket\vals_1 = \mathcal{E}\llbracket e\rrbracket\vals_2$.
Then, there are two possibilities.

\begin{enumerate}
\item $\mathcal{E}\llbracket e\rrbracket\vals_1 = \mathcal{E}\llbracket e\rrbracket\vals_2 = (\text{error}, v_\mathfrak{of})$.

The execution of $S_1$ proceeds as follows.
\begin{tabbing}
xx\=xx\=\kill
\>\>                   $(\text{output }e, m_1(\vals_1))$\\
\>= \>                 $(\text{output }(\text{error}, v_\mathfrak{of}), m_1(\vals_1))$ by the rule EEval'\\
\>$->$\>               $(\text{output }0, m_1(1/\mathfrak{f}))$ by the ECrash rule  \\
\>{\kStepArrow [i] }\> $(\text{output }0, m_1(1/\mathfrak{f}))$ for any $i>0$ by the Crash rule.
\end{tabbing}

Similarly, the execution of $S_2$ does not terminate and there is no change to I/O sequence in execution.
Because $\vals_1({id}_{IO}) = \vals_2({id}_{IO})$, then the output sequence in value stores $\vals_1$ and $\vals_2$ are same, $\text{Out}(\vals_1) = \text{Out}(\vals_2)$, the theorem holds.

\item
$\mathcal{E}\llbracket e\rrbracket\vals_1 = \mathcal{E}\llbracket e\rrbracket\vals_2 \neq (\text{error}, v_\mathfrak{of})$

$S_1$ and $S_2$ satisfy the proof rule of equivalent computation of I/O sequence variable and their initial states agree on the values of the imported variables relative to I/O sequence variable. By Theorem~\ref{thm:equivCompMain}, $S_1$ and $S_2$ produce the same output sequence after terminating execution when started in state $m_1(\vals_1)$ and $m_2(\vals_2)$ respectively. The theorem holds.
\end{enumerate}
\end{enumerate}

\noindent{\bf Induction step}.

The hypothesis IH is that Theorem~\ref{thm:sameIOtheoremFuncWide} holds when
$\text{size}(S_1) + \text{size}(S_2) = k \geq 2$.

We show Theorem~\ref{thm:sameIOtheoremFuncWide} holds when  $\text{size}(S_1) + \text{size}(S_2) = k+1$.
The proof is a case analysis according to the cases in the definition of the proof rule of behavioral equivalence.
\begin{enumerate}
\item $S_1$ and $S_2$ are one statement and one of the following holds:

\begin{enumerate}
\item
$S_1 = ``\text{If}(e) \text{ then }\{S_1^t\} \text{ else }\{S_1^f\}"$ and
$S_2 = ``\text{If}(e) \text{ then }\allowbreak\{S_2^t\} \text{ else }\{S_2^f\}"$ and all of the following hold:

\begin{itemize}
\item There is an output statement in $S_1$ and $S_2$:
$\exists e_1\, e_2\,:\, (``\text{output }e_1" \in S_1) \wedge (``\text{output }e_2" \in S_2)$;

\item $S_1^t \equiv_{O}^S S_2^t$;

\item $S_1^f \equiv_{O}^S S_2^f$;
\end{itemize}

By Lemma~\ref{lmm:IOseqInImpOfAnyStmtWithOutStmt},  $\{{id}_{IO}\} \in \text{Imp}_o(S_1)$.
By assumption, value stores $\vals_1$ and $\vals_2$ agree on the value of the I/O sequence variable and the I/O sequence variable, $\vals_1({id}_{IO}) = \vals_2({id}_{IO})$.

We show that the evaluations of the predicate expression of $S_1$ and $S_2$ w.r.t. initial value store $\vals_1$ and $\vals_2$ produce the same value. We need to show that value stores $\vals_1$ and $\vals_2$ agree on values of variables used in the predicate expression $e$ of $S_1$ and $S_2$.
Because the output sequence is defined in $S_1$, by the definition of imported variables relative to output,
$\text{Imp}_o(S_1) = \text{Use}(e) \cup \text{Imp}_o(S_1^t) \cup \text{Imp}_o(S_1^f)$.
Thus, $\text{Use}(e) \subseteq \text{OVar}(S_1)$.
By assumption, value stores $\vals_1$ and $\vals_2$ agree on values of variables used in the predicate expression $e$ of $S_1$ and $S_2$, $\forall x \in \text{Use}(e)\,:\,
\vals_1(x) = \vals_2(x)$.
By Lemma~\ref{lmm:expEvalSameVal}, the evaluations of the predicate expression of $S_1$ and $S_2$ w.r.t. pairs value stores, $\vals_1$ and $\vals_2$ generate the same value,
$\mathcal{E}'\llbracket e\rrbracket\vals_1 = \mathcal{E}'\llbracket e\rrbracket\vals_2$.
Then there are two possibilities.
\begin{enumerate}

\item $\mathcal{E}'\llbracket e\rrbracket\vals_1 = \mathcal{E}'\llbracket e\rrbracket\vals_2 = (\text{error}, v_\mathfrak{of})$.

Then the execution of $S_1$ proceeds as follows:

\begin{tabbing}
xx\=xx\=\kill
\>\>                   $(\text{If}(e) \text{ then }\{S_1^t\} \text{ else }\{S_1^f\}, m_1(\vals_1))$ \\
\>$->$\>               $(\text{If}((\text{error}, v_\mathfrak{of})) \text{ then }\{S_1^t\} \text{ else }\{S_1^f\}, m_1(\vals_1))$\\
\>\>                   by the EEval' rule\\
\>$->$\>               $(\text{If}(0) \text{ then }\{S_1^t\} \text{ else }\{S_1^f\}, m_1(1/\mathfrak{f}))$\\
\>\>                   by the ECrash rule  \\
\>{\kStepArrow [i] }\> $(\text{If}(0) \text{ then }\{S_1^t\} \text{ else }\{S_1^f\}, m_1(1/\mathfrak{f}))$\\
\>\>                   for any $i>0$, by the Crash rule.
\end{tabbing}

Similarly, the execution of $S_2$ does not terminate and does not redefine I/O sequence.
Because $\vals_1({id}_{IO}) = \vals_2({id}_{IO})$, the theorem holds.

\item $\mathcal{E}'\llbracket e\rrbracket\vals_1 = \mathcal{E}'\llbracket e\rrbracket\vals_2 \neq (\text{error}, v_\mathfrak{of})$.

W.l.o.g., $\mathcal{E}'\llbracket e\rrbracket\vals_1 = \mathcal{E}'\llbracket e\rrbracket\vals_2 = (0, v_\mathfrak{of})$.
The execution of $S_1$ proceeds as follows.

\begin{tabbing}
xx\=xx\=\kill
\>\>                   $(\text{If}(e) \text{ then }\{S_1^t\} \text{ else }\{S_1^f\}, m_1(\vals_1))$ \\
\>$->$\>               $(\text{If}((0, v_\mathfrak{of})) \text{ then }\{S_1^t\} \text{ else }\{S_1^f\}, m_1(\vals_1))$\\
\>\>                   by the EEval rule\\
\>$->$\>               $(\text{If}(0) \text{ then }\{S_1^t\} \text{ else }\{S_1^f\}, m_1(\vals_1))$\\
\>\>                   by the E-Oflow1 or E-Oflow2 rule\\
\>$->$\>               $(S_1^f, m_1(\vals_1))$ by the If-F rule.
\end{tabbing}

Similarly, the execution of $S_2$ proceeds to $(S_2^f, m_2(\vals_2))$ after two steps.
By the hypothesis IH, we show that $S_1^f$ and $S_2^f$ produce the same output sequence when started in states $m_1(\vals_1)$ and $m_2(\vals_2)$. We need to show that all required conditions are satisfied.
\begin{itemize}
\item $\text{size}(S_1^f) + \text{size}(S_2^f) \leq k$.

By definition, $\text{size}(S_1)  = 1 + \text{size}(S_1^t) + \text{size}(S_1^f)$.
Therefore, $\text{size}(S_1^f) + \text{size}(S_2^f) < k$.

\item Value stores $\vals_1$  and $\vals_2$ agree on values of the out-deciding variables of $S_1^f$ and $S_2^f$, $\forall x \in \text{OVar}(S_1^f) \cup \text{OVar}(S_2^f)\,:\,
    \vals_1(x) = \vals_2(x)$.

By the definition of imported variables relative to output,
$\text{Imp}_o(S_1^f) \subseteq \text{Imp}_o(S_1)$. Besides, by the definition of $\text{TVar}_o(S_1)$, $\text{TVar}_o(S_1^f) \subseteq \text{TVar}_o(S_1)$.
Then $\text{OVar}(S_1^f) \subseteq \text{OVar}(S_1)$.
Similarly, $\text{OVar}(S_2^f) \subseteq \text{OVar}(S_2)$.
By assumption, the value stores $\vals_1$  and $\vals_2$ agree on the values of the out-deciding variables of $S_1^f$ and $S_2^f$.
\end{itemize}
By the hypothesis IH, $S_1^f$ and $S_2^f$ produce the same output sequence when started from state $m_1(\vals_1)$ and $m_2(\vals_2)$ respectively. The theorem holds.
\end{enumerate}

\item
$S_1 = ``\text{while}_{\langle n_1 \rangle}(e) \; \{S_1''\}"$ and
$S_2 = ``\text{while}_{\langle n_2 \rangle}(e) \; \{S_2''\}"$ and all of the following hold:

\begin{itemize}
\item There is an output statement in $S_1$ and $S_2$:
$\exists e_1\, e_2\,:\, (``\text{output }e_1" \in S_1) \wedge (``\text{output }e_2" \in S_2)$;

\item $S_1'' \equiv_{O}^S S_2''$;

\item Both loop bodies satisfy the proof rule of termination in the same way:
    $S_1'' \equiv_{H}^S S_2''$;

\item $S_1''$ and $S_2''$ have equivalent computation of  $\text{OVar}(S_1) \cup \text{OVar}(S_2)$;
\end{itemize}

By Corollary~\ref{coro:loopSameIOSeq}, we show that $S_1$ and $S_2$ produce the same output sequence when started in states $m_1(\vals_1)$ and $m_2(\vals_2)$ respectively.
We need to show that the required conditions are satisfied.
\begin{itemize}
\item Crash flags are not set, $\mathfrak{f}_1 = \mathfrak{f}_2 = 0$.

\item Value stores $\vals_1$ and $\vals_2$ agree on the values of the out-deciding variables of $S_1$ and $S_2$, $\forall x \in \text{OVar}(S_1) \cup \text{OVar}(S_2)\,:\,
    \vals_1(x) = \vals_2(x)$.

\item The loop counter value of $S_1$ and $S_2$ are zero in initial loop counter, $\text{loop}_c^1(n_1) = \text{loop}_c^2(n_2) = 0$.

\item The loop body of $S_1$ and $S_2$ satisfy the proof rule of termination in the same way,
$S_1'' \equiv_{H}^S S_2''$.

\item The loop body of $S_1$ and $S_2$ satisfy the proof rule of equivalent computation of
$\text{OVar}(S_1) \cup \text{OVar}(S_2)$.

The above five conditions are from assumption.

\item $S_1$ and $S_2$ have same set of termination deciding variables,
$\text{TVar}(S_1) = \text{TVar}(S_2)$.

By the definition of $\text{TVar}_o(S_1)$, $\text{TVar}_o(S_1) = \text{TVar}(S_1)$ and
$\text{TVar}_o(S_2) = \text{TVar}(S_2)$.
By Lemma~\ref{lmm:sameTVarO}, $\text{TVar}_o(S_1) = \text{TVar}_o(S_2)$.
Thus, $\text{TVar}(S_1) = \text{TVar}(S_2)$.

\item $S_1$ and $S_2$ have same set of imported variables relative to the I/O sequence variable,

\noindent$\text{Imp}(S_1, \{{id}_{IO}\}) = \text{Imp}(S_2, \{{id}_{IO}\})$.

By Lemma~\ref{lmm:sameImpOVar}, $\text{Imp}_o(S_1) = \text{Imp}_o(S_2)$.
By definition, $\text{Imp}_o(S_1) = \text{Imp}(S_1, \{{id}_{IO}\})$ and $\text{Imp}_o(S_2)\allowbreak = \text{Imp}(S_2, \{{id}_{IO}\})$.
Thus, $\text{Imp}(S_1, \{{id}_{IO}\}) = \text{Imp}(S_2, \{{id}_{IO}\})$.

\item The loop body of $S_1$ and $S_2$ produce the same output sequence when started in states with crash flags not set and whose value stores agree on values of the out-deciding variables of $S_1''$ and $S_2''$,
    $\forall m_{S_1''}(\mathfrak{f}_1'', \vals_1'')\, m_{S_2''}(\mathfrak{f}_2'', \vals_2'')\,:\,
    ((\forall x \in \text{OVar}(S_1'') \cup \text{OVar}(S_2'')\,:\,
    \vals_1''(x) = \vals_2''(x))
    \wedge
    (\mathfrak{f}_1'' = \mathfrak{f}_2'' = 0)) =>
    (S_1'', m_{S_1''}(\mathfrak{f}_1'', \vals_1'')) \equiv_{O}
    (S_2'', m_{S_2''}(\mathfrak{f}_2'', \vals_2''))$.

Because $\text{size}(S_1) = 1 + \text{size}(S_1''), \text{size}(S_2) = 1 + \text{size}(S_2'')$, then $\text{size}(S_1'') + \text{size}(S_2'') < k$. By the hypothesis IH, the condition is satisfied.
\end{itemize}
By Corollary~\ref{coro:loopSameIOSeq}, $S_1$ and $S_2$ produce the same output sequence when started in states $m_1(\vals_1)$ and $m_2(\vals_2)$ respectively. The theorem holds.

\item Output statements are not in both $S_1$ and $S_2$,
$\forall e_1\,e_2\,:\, (``\text{output }e_1" \notin S_1) \wedge (``\text{output }e_2" \notin S_2)$.

By Lemma~\ref{lmm:IOseqInImpOfAnyStmtWithOutStmt}, $\{{id}_{IO}\} \subseteq \text{Imp}_o(S_1)$.
By assumption, value stores in initial states $m_1, m_2$ agree on  values of the I/O sequence variable,
$\vals_1({id}_{IO}) = \vals_2({id}_{IO})$.
By Lemma~\ref{lmm:NoOutputExecMultiSteps}, the value of output sequence is same in $m_1$ and any state reachable from $m_1$, $\forall m_1'\, m_2'\,:\, (S_1, m_1(\vals_1)) ->* (S_1', m_1'(\vals_1')) \text{ and } (S_2, m_2(\vals_2)) ->* (S_2', m_2'(\vals_2')), \text{Out}(\vals_1') = \text{Out}(\vals_1) = \text{Out}(\vals_2) = \text{Out}(\vals_2')$.
The theorem holds.
\end{enumerate}

\item $S_1=S_1';s_1$ and $S_2=S_2';s_2$ are not both one statement and one of the following holds:

\begin{enumerate}
\item There is an output statement in both $s_1$ and $s_2$,
  $\exists e_1\, e_2\,:\, (``\text{output }e_1" \in s_1) \wedge (``\text{output }e_2" \in s_2)$, and all of the following hold:

  \begin{itemize}
  \item $S_1' \equiv_{O}^S S_2'$;

  \item $S_1'$ and $S_2'$ satisfy the proof rule of termination in the same way:
   $S_1' \equiv_{H}^S S_2'$;

  \item $S_1'$ and $S_2'$ have equivalent computation of $\text{OVar}(s_1) \cup \text{OVar}(s_2)$;

  \item $s_1 \equiv_{O}^S s_2$;
  \end{itemize}

By the hypothesis IH, we show $S_1'$ and $S_2'$ produce the same output sequence when started in states $m_1(\vals_1)$ and $m_2(\vals_2)$ respectively. We need to show that all required conditions are satisfied.
\begin{itemize}
\item $\text{size}(S_1') + \text{size}(S_2') < k$.

By the definition of program size, $\text{size}(s_1) \geq 1, \text{size}(s_2) \geq 1$. Then $\text{size}(S_1') + \text{size}(S_2') < k$.

\item Value stores $\vals_1$ and $\vals_2$ agree on values of the out-deciding variables of $S_1'$ and $S_2'$,
$\forall x \in \text{OVar}(S_1') \cup \text{OVar}(S_2')\,:\,
 \vals_1(x) = \vals_2(x)$.

We show that $\text{TVar}_o(S_1') \subseteq \text{TVar}_o(S_1)$.
\begin{tabbing}
xx\=xx\=\kill
\>\>              $\text{TVar}_o(S_1')$\\
\>$\subseteq$\>   $\text{TVar}(S_1')$ by Lemma~\ref{lmm:TVarOIsSubsetOfTVar}\\
\>$\subseteq$\>   $\text{TVar}_o(S_1)$ by the definition of $\text{TVar}_o(S_1)$\\
\end{tabbing}

%

We show that $\text{Imp}_o(S_1') \subseteq \text{Imp}_o(S_1)$.
\begin{tabbing}
xx\=xx\=\kill
\>\>              $\text{Imp}_o(S_1')$\\
\>$\subseteq$\>   $\text{Imp}(S_1', \{{id}_{IO}\})$ (1) by Lemma~\ref{lmm:ImpOVarIsSubsetOfImpIO}\\
\\
\>\>              $\{{id}_{IO}\} \subseteq \text{Imp}_o(s_{k+1})$ (2) by Lemma~\ref{lmm:IOseqInImpOfAnyStmtWithOutStmt}\\
\\
\>\>              Combining (1) + (2) \\
\>\>              $\text{Imp}(S_1', \{{id}_{IO}\})$ \\
\>$\subseteq$\>   $\text{Imp}(S_1', \text{Imp}_o(s_1))$ by Lemma~\ref{lmm:impVarUnionLemma}\\
\>=\>             $\text{Imp}_o(S_1)$ by the definition of $\text{Imp}_o(S)$.
\end{tabbing}

Similarly, $\text{OVar}(S_2') \subseteq \text{OVar}(S_2)$.
By assumption, value stores $\vals_1$ and $\vals_2$ agree on values of out-deciding variables of $S_1'$ and $S_2'$.
\end{itemize}

By the hypothesis IH, $S_1'$ and $S_2'$ produce the same output sequence when started in state $m_1(\vals_1)$ and $m_2(\vals_2)$ respectively.

We show that $S_1$ and $S_2$ produce the same output sequence if $s_1$ and $s_2$ execute.
We need to show that $S_1'$ and $S_2'$ terminate in the same way when started in states $m_1(\vals_1)$ and $m_2(\vals_2)$ respectively. Specifically, we prove that the value stores $\vals_1$ and $\vals_2$ agree on the values of termination deciding variables of $S_1'$ and $S_2'$.
By definition, the termination deciding variables in $S_1'$ are a subset of the termination deciding variables relative to output, $\text{TVar}(S_1') \subseteq \text{TVar}_o(S_1)$. Similarly, $\text{TVar}(S_2') \subseteq \text{TVar}_o(S_2)$.
By assumption, the value stores $\vals_1$ and $\vals_2$ agree on the values of the termination deciding variables of $S_1'$ and $S_2'$,
$\forall x \in \text{TVar}(S_1') \cup \text{TVar}(S_2')\,:\,
\vals_1(x) = \vals_2(x)$.
By Theorem~\ref{thm:mainTermSameWayLocal}, $S_1'$ and $S_2'$ terminate in the same way when started in state $m_1(\vals_1)$ and $m_2(\vals_2)$ respectively.

If $S_1'$ and $S_2'$ terminate when started in states $m_1(\vals_1)$ and $m_2(\vals_2)$, by Lemma~\ref{lmm:TermInSameWayConsumeSameInputs}, $S_1'$ and $S_2'$ consume same amount of input values.
In addition,
we show that value stores agree on values of the out-deciding variables of $s_1$ and $s_2$ by Theorem~\ref{thm:equivCompMain}.
We need to show that $S_1'$ and $S_2'$ start execution in states agreeing on values of the imported variables in $S_1'$ and $S_2'$ relative to the out-deciding variables of $s_1$ and $s_2$.
\begin{itemize}
\item  $\text{Imp}(\text{TVar}_o(s_1), S_1') \subseteq \text{TVar}_o(S_1)$.

 This is by the definition of $\text{TVar}_o(S_1)$.

\item $\text{Imp}(\text{Imp}_o(s_1), S_1') = \text{Imp}_o(S_1)$.

  This is by the definition of $\text{Imp}_o(S_1)$.
\end{itemize}
Thus, the imported variables in $S_1'$ relative to the out-deciding variables of $s_1$ are a subset of the out-deciding variables of $S_1$, $\text{Imp}(S_1', \text{OVar}(s_1)) \subseteq \text{OVar}(S_1)$. Similarly, $\text{Imp}(S_2', \text{OVar}(s_2)) \subseteq \text{OVar}(S_2)$.
By Corollary~\ref{coro:sameOVar}, $s_1$ and $s_2$ have same out-deciding variables,
$\text{OVar}(s_1) = \text{OVar}(s_2)$.
By assumption, $S_1'$ and $S_2'$ terminate when started in states $m_1(\vals_1)$ and $m_2(\vals_2)$,
$(S_1', m_1(\vals_1))\allowbreak ->* (\text{skip}, m_1'(\vals_1')), (S_2',\allowbreak m_1(\vals_2)) ->* (\text{skip}, m_2'(\vals_2'))$.
By Theorem~\ref{thm:equivCompMain}, value stores $\vals_1'$ and $\vals_2'$ agree on values of the out-deciding variables of $s_1$ and $s_2$.

By the hypothesis IH again, $s_1$ and $s_2$ produce the same output sequence when started in states $m_1'(\vals_1')$ and $m_2'(\vals_2')$ respectively. The theorem holds.

\item There is no output statement in the last statement in $S_1$ or $S_2$:
W.l.o.g., $(\forall e\,:\, ``\text{output }e" \notin s_1) \wedge ((S_1') \equiv_{O}^S (S_2))$.

By the hypothesis IH, we show that $S_1'$ and $S_2$ produce the same output sequence when started in states $m_1(\vals_1)$ and $m_2(\vals_2)$ respectively. We need to show that two required conditions are satisfied.
\begin{itemize}
\item $\text{size}(S_1') + \text{size}(S_2) \leq k$.

$\text{size}(s_1) \geq 1$ by definition. Then $\text{size}(S_1') + \text{size}(S_2) \leq k$.

\item $\forall x \in \text{OVar}(S_1') \cup \text{OVar}(S_2)\,:\,
\vals_1(x) = \vals_2(x)$.

By definition of $\text{TVar}_o(S)/\text{Imp}_o(S)$,
               $\text{TVar}_o(S_1')\allowbreak = \text{TVar}_o(S_1)$, and
               $\text{Imp}_o(S_1') = \text{Imp}_o(S_1)$
Hence, $\forall x \in \text{OVar}(S_1') \cup \text{OVar}(S_2)\,:\,
\vals_1(x) = \vals_2(x)$.
\end{itemize}
Therefore, $S_1'$ and $S_2$ produce the same output sequence when started in state $m_1(\vals_1)$ and $m_2(\vals_2)$ respectively, $(S_1', m_1) \equiv_{O} (S_2, m_2)$ by the hypothesis IH.

When the execution of $S_1'$ terminates, then the output sequence is not changed in the execution of $s_1$ by Lemma~\ref{lmm:NoOutputExecMultiSteps}.
The theorem holds.
\end{enumerate}
\end{enumerate}
\end{proof}

\subsubsection{Supporting lemmas for the soundness proof of  behavioral equivalence}
We listed the lemmas and corollaries used in the proof of Theorem~\ref{thm:sameIOtheoremFuncWide} below.
The supporting lemmas are of two parts. One part is various properties related to the out-deciding variables. The other part is the proof that two loop statements produce the same output sequence.


\begin{lemma}\label{lmm:IOseqInImpOfAnyStmtWithOutStmt}
For any statement sequence $S$, the I/O sequence variable is in imported variable in $S$ relative to output, ${id}_{IO} \in \text{Imp}_o(S)$.
\end{lemma}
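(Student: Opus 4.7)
The plan is to proceed by structural induction on $S$, case-analyzing along the cases of Definition~\ref{def:ImpVarO}. The only nontrivial ingredient is that $\text{Imp}$ itself preserves ${id}_{IO}$, so I would first establish an auxiliary claim and then ride it through all six cases.

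\textbf{Auxiliary claim.} For every statement sequence $S$ and every set of variables $X$, if ${id}_{IO} \in X$ then ${id}_{IO} \in \text{Imp}(S, X)$. I would prove this by structural induction on $S$ using Definition~\ref{def:impvars}. If $\text{Def}(S) \cap X = \emptyset$, then $\text{Imp}(S,X) = X$ and the claim is immediate. For a simple statement $s$ with $\text{Def}(s) \cap X \neq \emptyset$, the only ways ${id}_{IO}$ can appear in $\text{Def}(s)$ are $s = ``\text{input }id"$ or $s = ``\text{output }e"$; in both cases ${id}_{IO} \in \text{Use}(s)$, and $\text{Imp}(s,X) = \text{Use}(s) \cup (X \setminus \text{Def}(s))$ still contains ${id}_{IO}$. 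For assignments that do not touch ${id}_{IO}$, the variable survives via $X \setminus \text{Def}(s)$. The If, while, and sequence cases follow mechanically from the induction hypothesis together with the fact that $\{{id}_{IO}\}$ enters each recursive invocation.

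\textbf{Main induction.} With the claim in hand, I would do structural induction on $S$ and walk through Definition~\ref{def:ImpVarO}. (1) If $S$ contains no output statement, then $\text{Imp}_o(S) = \{{id}_{IO}\}$ and we are done. (2) If $S = ``\text{output }e"$, then $\text{Imp}_o(S) = \{{id}_{IO}\} \cup \text{Use}(e)$. (3) If $S = ``\text{If}(e)\text{ then }\{S_t\}\text{ else }\{S_f\}"$ contains an output, then whichever of $S_t$, $S_f$ contains that output gives ${id}_{IO} \in \text{Imp}_o(\cdot)$ by the induction hypothesis, and $\text{Imp}_o(S)$ is the union. (4) If $S = ``\text{while}_{\langle n\rangle}(e)\{S''\}"$ with an output in $S''$, then $\text{Imp}_o(S) = \text{Imp}(S, \{{id}_{IO}\})$, and the auxiliary claim with $X = \{{id}_{IO}\}$ yields ${id}_{IO} \in \text{Imp}(S, \{{id}_{IO}\})$. (5) If $S = s_1;\dots;s_k;s_{k+1}$ with an output in $s_{k+1}$, then by induction ${id}_{IO} \in \text{Imp}_o(s_{k+1})$, and applying the auxiliary claim with $X = \text{Imp}_o(s_{k+1})$ to $s_1;\dots;s_k$ gives ${id}_{IO} \in \text{Imp}(s_1;\dots;s_k, \text{Imp}_o(s_{k+1})) = \text{Imp}_o(S)$. (6) If the last statement $s_{k+1}$ has no output, $\text{Imp}_o(S) = \text{Imp}_o(s_1;\dots;s_k)$ and the inductive hypothesis closes the case.

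\textbf{Main obstacle.} The substantive step is really the auxiliary claim, and within it the loop case: one must check that even when the loop body defines ${id}_{IO}$ (via an input or output statement), the fixed-point style definition $\bigcup_{i\geq 0}\text{Imp}({S''}^i, \text{Use}(e)\cup X)$ still retains ${id}_{IO}$. The $i=0$ term equals $\text{Use}(e)\cup X$, which contains ${id}_{IO}$ by assumption, so membership is preserved in the union. Everything else in the argument is routine bookkeeping over Definitions~\ref{def:impvars} and~\ref{def:ImpVarO}.
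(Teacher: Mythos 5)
Your proposal is correct and follows the same route the paper indicates: the paper's own proof is just the one-line remark ``by structural induction on the abstract syntax of $S$,'' and your argument is the natural elaboration of that, with the auxiliary claim that ${id}_{IO}\in X$ implies ${id}_{IO}\in\text{Imp}(S,X)$ being exactly the supporting fact needed for the while and sequence cases. Nothing to correct.
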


\begin{proof}
By structure induction on abstract syntax of $S$.
\end{proof}

\begin{lemma}\label{lmm:ImpOVarIsSubsetOfImpIO}
For any statement sequence $S$, the imported variables in $S$ relative to output are a subset of the imported variables in $S$ relative to the I/O sequence variable, $\text{Imp}_o(S) \subseteq \text{Imp}(S, \{{id}_{IO}\})$.
\end{lemma}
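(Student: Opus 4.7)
The plan is to prove the lemma by structural induction on $S$, following the case distinction used in the definition of $\text{Imp}_o$. The whole argument hinges on a single auxiliary fact that we would establish first as a short preliminary lemma: for every statement sequence $S$, ${id}_{IO} \in \text{Imp}(S, \{{id}_{IO}\})$. This is intuitively clear because the only statements that define ${id}_{IO}$ are the input and output statements, and in both cases the new value appended in the SOS rules ``$\vals({id}_{IO})\cdot\underline{v}_{io}$'' and ``$\vals({id}_{IO})\cdot\overline{v}$'' depends on the previous value of ${id}_{IO}$; thus ${id}_{IO} \in \text{Use}(\text{input }id)$ and ${id}_{IO} \in \text{Use}(\text{output }e)$, which by case~2 of Definition~\ref{def:impvars} propagates ${id}_{IO}$ through. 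A straightforward structural induction on $S$ establishes the auxiliary fact.

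With the auxiliary fact in hand, I would handle the six cases of Definition~\ref{def:ImpVarO} in turn. For the base cases where $S$ contains no output statement, $\text{Imp}_o(S) = \{{id}_{IO}\}$ and inclusion follows from the auxiliary fact. For $S = ``\text{output }e"$, $\text{Imp}_o(S) = \{{id}_{IO}\} \cup \text{Use}(e)$, and since $\text{Def}(\text{output }e) = \{{id}_{IO}\}$ and $\text{Use}(\text{output }e) = \{{id}_{IO}\} \cup \text{Use}(e)$, case~2 of Definition~\ref{def:impvars} yields $\text{Imp}(\text{output }e, \{{id}_{IO}\}) = \{{id}_{IO}\} \cup \text{Use}(e)$. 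The If-case with an output present appeals to the inductive hypothesis on each branch together with case~3 of Definition~\ref{def:impvars} (which applies because the presence of an output statement forces ${id}_{IO} \in \text{Def}(S)$). The While-case with an output in the body is an equality by definition, so inclusion is trivial.

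The sequence cases I would handle by an inner induction on the length $k$. When $s_{k+1}$ contains an output, $\text{Imp}_o(s_1;\ldots;s_{k+1}) = \text{Imp}(s_1;\ldots;s_k, \text{Imp}_o(s_{k+1}))$; by the inductive hypothesis $\text{Imp}_o(s_{k+1}) \subseteq \text{Imp}(s_{k+1}, \{{id}_{IO}\})$, then monotonicity of $\text{Imp}(\cdot,\cdot)$ in its second argument (Lemma~\ref{lmm:impVarUnionLemma}) and the concatenation identity of Lemma~\ref{lmm:ImpPrefixLemma} give
\[
\text{Imp}(s_1;\ldots;s_k, \text{Imp}_o(s_{k+1})) \subseteq \text{Imp}(s_1;\ldots;s_k, \text{Imp}(s_{k+1}, \{{id}_{IO}\})) = \text{Imp}(s_1;\ldots;s_{k+1}, \{{id}_{IO}\}).
\]
When $s_{k+1}$ has no output, $\text{Imp}_o(s_1;\ldots;s_{k+1}) = \text{Imp}_o(s_1;\ldots;s_k)$, and by the inductive hypothesis this is contained in $\text{Imp}(s_1;\ldots;s_k, \{{id}_{IO}\})$.

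The main obstacle I anticipate is precisely the last subcase: closing the inclusion $\text{Imp}(s_1;\ldots;s_k, \{{id}_{IO}\}) \subseteq \text{Imp}(s_1;\ldots;s_{k+1}, \{{id}_{IO}\})$ when $s_{k+1}$ may itself be an input statement (which \emph{does} define ${id}_{IO}$ even though it is not an output statement). This is exactly where the auxiliary fact pays off: applying it to $s_{k+1}$ yields ${id}_{IO} \in \text{Imp}(s_{k+1}, \{{id}_{IO}\})$, so $\{{id}_{IO}\} \subseteq \text{Imp}(s_{k+1}, \{{id}_{IO}\})$, and monotonicity combined with Lemma~\ref{lmm:ImpPrefixLemma} give $\text{Imp}(s_1;\ldots;s_k, \{{id}_{IO}\}) \subseteq \text{Imp}(s_1;\ldots;s_k, \text{Imp}(s_{k+1}, \{{id}_{IO}\})) = \text{Imp}(s_1;\ldots;s_{k+1}, \{{id}_{IO}\})$, completing the proof.
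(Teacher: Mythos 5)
Your proposal is correct and follows essentially the same route as the paper, which proves this lemma by structural induction on $S$ with subcases on whether an output statement is present; the paper's own proof is only a one-line sketch, and your fleshed-out version, including the auxiliary fact ${id}_{IO} \in \text{Imp}(S, \{{id}_{IO}\})$ needed to close the no-output-in-$s_{k+1}$ subcase, fills it in correctly.
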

\begin{proof}
By induction on abstract syntax of $S$.
In every case, there are subcases based on if there is output statement in the statement sequence $S$ or not if necessary.
\end{proof}

\begin{lemma}\label{lmm:sameImpOVar}
If two statement sequences $S_1$ and $S_2$ satisfy the proof rule of behavioral equivalence, then $S_1$ and $S_2$ have the same set of imported variables relative to output, $(S_1 \equiv_{O}^S S_2) => (\text{Imp}_o(S_1) = \text{Imp}_o(S_2))$.
\end{lemma}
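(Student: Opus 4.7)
The plan is to prove the lemma by induction on $\text{size}(S_1) + \text{size}(S_2)$, following the same structural pattern used for the analogous Lemma~\ref{lmm:sameImpfromEquivCompCond} and Lemma~\ref{lmm:equivLVarFromEquivTerm}. In each case I will unfold the definition of $\text{Imp}_o$ to reduce the claim to (a) equalities of $\text{Imp}_o$ on strictly smaller sub\-sequences, which are handled by the inductive hypothesis, and (b) equalities of $\text{Imp}(\cdot, X)$ for matching sets $X$, which can be discharged by Lemma~\ref{lmm:sameImpfromEquivCompCond} (same imported variables from equivalent computation) together with Lemma~\ref{lmm:impVarUnionLemma} and Lemma~\ref{lmm:ImpPrefixLemma}.

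For the base case, $S_1$ and $S_2$ are simple statements. If neither is an output statement, then by the first clause of Definition~\ref{def:ImpVarO} both $\text{Imp}_o$ sets equal $\{{id}_{IO}\}$. If both are the same output statement $``\text{output}\;e"$, then $\text{Imp}_o = \{{id}_{IO}\} \cup \text{Use}(e)$ for each. In the induction step I case-split on the clauses of the proof rule $S_1 \equiv_{O}^S S_2$. The If-statement clause follows directly by applying IH to the two branches and noting that the common predicate contributes the same $\text{Use}(e)$. The ``no output statement in either'' clause is immediate from clause~1 of Definition~\ref{def:ImpVarO}. For the sequence clause $S_1 = S_1';s_1$, $S_2 = S_2';s_2$ with output in both $s_1$ and $s_2$: IH gives $\text{Imp}_o(s_1) = \text{Imp}_o(s_2)$; because $S_1' \equiv_{x}^S S_2'$ for every $x \in \text{OVar}(s_1) \cup \text{OVar}(s_2)$ (equivalent computation of out-deciding variables), Lemma~\ref{lmm:sameImpfromEquivCompCond} combined with Lemma~\ref{lmm:impVarUnionLemma} yields $\text{Imp}(S_1', \text{Imp}_o(s_1)) = \text{Imp}(S_2', \text{Imp}_o(s_2))$, which is exactly $\text{Imp}_o(S_1) = \text{Imp}_o(S_2)$ by clause~5 of Definition~\ref{def:ImpVarO}. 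The ``last statement has no output'' clause follows from IH together with clause~6 of Definition~\ref{def:ImpVarO}, which says such a trailing statement does not affect $\text{Imp}_o$.

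The main obstacle will be the while-statement case: $S_1 = \text{while}_{\langle n_1\rangle}(e)\{S_1''\}$ and $S_2 = \text{while}_{\langle n_2\rangle}(e)\{S_2''\}$ with output inside both loop bodies. Here $\text{Imp}_o(S_1) = \text{Imp}(S_1, \{{id}_{IO}\}) = \bigcup_{i \geq 0} \text{Imp}({S_1''}^i, \{{id}_{IO}\} \cup \text{Use}(e))$, and similarly for $S_2$. I will show termwise equality by a nested induction on $i$, mirroring the loop case of Lemma~\ref{lmm:sameImpfromEquivCompCond}: the base case $i=0$ is trivial since ${S''}^0 = \text{skip}$; for the inductive step I rewrite ${S''}^{i+1}$ via Lemma~\ref{lmm:dupStmtPrefixLemma} as $\text{Imp}(S'', \text{Imp}({S''}^i, \ldots))$, apply the outer IH on $S_1''$ to replace $\text{Imp}({S_1''}^i, \ldots)$ by $\text{Imp}({S_2''}^i, \ldots)$, and then use Lemma~\ref{lmm:sameImpfromEquivCompCond} (available because $S_1'' \equiv^S_{x} S_2''$ for every $x$ in the relevant out-deciding set) together with Lemma~\ref{lmm:impVarUnionLemma} to replace $\text{Imp}(S_1'', \cdot)$ by $\text{Imp}(S_2'', \cdot)$. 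Taking the union over $i$ then gives $\text{Imp}_o(S_1) = \text{Imp}_o(S_2)$.

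The only subtlety beyond bookkeeping is verifying, in the while case, that the equivalent-computation premise applies to \emph{every} variable that appears inside the $\bigcup_{i \geq 0}$. Since the proof rule requires $S_1''$ and $S_2''$ to equivalently compute every variable in $\text{OVar}(S_1) \cup \text{OVar}(S_2)$, and $\text{Use}(e) \cup \{{id}_{IO}\} \subseteq \text{OVar}(S_1)$ by Definition~\ref{def:ImpVarO} and Definition~\ref{def:TVarO}, the hypothesis of Lemma~\ref{lmm:sameImpfromEquivCompCond} is satisfied at each unfolding level, which closes the induction.
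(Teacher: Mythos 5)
Your proposal is correct and follows essentially the same route as the paper: the paper's proof is likewise an induction on $\text{size}(S_1)+\text{size}(S_2)$ with a case analysis on the clauses of $\equiv_{O}^S$, reducing each case to the inductive hypothesis plus Lemma~\ref{lmm:sameImpfromEquivCompCond}, Lemma~\ref{lmm:impVarUnionLemma}, Lemma~\ref{lmm:ImpPrefixLemma}, and a nested induction on the unfolding index $i$ for the while case. Your observation that every set fed to Lemma~\ref{lmm:sameImpfromEquivCompCond} lies inside $\text{Imp}_o(S_1)\subseteq\text{OVar}(S_1)$ is exactly the closure fact that makes the loop case go through.
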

\begin{proof}
By induction on $\text{size}(S_1) + \text{size}(S_2)$.
\end{proof}

\begin{lemma}\label{lmm:TVarOIsSubsetOfTVar}
For any statement sequence $S$ and any variable $x$, the termination deciding variables in $S$ relative to output is a subset of the termination deciding variables in $S$, $\text{TVar}_o(S) \subseteq \text{TVar}(S)$.
\end{lemma}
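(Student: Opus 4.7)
The plan is to prove the inclusion by structural induction on $S$, mirroring the recursive clauses of Definitions~\ref{def:LVar}, \ref{def:CVar}, and \ref{def:TVarO}. For every compound case, the first split will be on whether $S$ contains an output statement syntactically: if it does not, clause (1) of Definition~\ref{def:TVarO} gives $\text{TVar}_o(S) = \emptyset$ and the inclusion is immediate, so the real content lives in the output-bearing subcase. In the base cases, if $S$ is skip, an assignment, or an input statement, $S$ carries no output and $\text{TVar}_o(S)=\emptyset$; if $S = \text{output}\,e$, one reads off $\text{TVar}_o(S) = \text{Err}(e) = \text{CVar}(S) \subseteq \text{TVar}(S)$ directly from the definitions.

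For the while case, once an output appears in the body, clause (4) of Definition~\ref{def:TVarO} equates $\text{TVar}_o(S) = \text{TVar}(S)$, so the inclusion is literal equality. For the sequence case $S = s_1;\ldots;s_k;s_{k+1}$, I would split on whether $s_{k+1}$ itself carries an output. If not, clause (6) collapses $\text{TVar}_o$ onto the prefix, and the proof closes using the IH together with Lemmas~\ref{lmm:prefCVarSubsetOfCVar} and \ref{lmm:prefLVarSubsetOfLVar}, which inject prefix termination variables into the sequence's $\text{TVar}$. If $s_{k+1}$ does carry output, clause (5) expands $\text{TVar}_o(S) = \text{TVar}(s_1;\ldots;s_k) \cup \text{Imp}(s_1;\ldots;s_k,\text{TVar}_o(s_{k+1}))$; the first summand already lies in $\text{TVar}(S)$, and for the second I apply the IH to $s_{k+1}$ and monotonicity of $\text{Imp}$ (Lemma~\ref{lmm:impVarUnionLemma}) to land inside $\text{Imp}(s_1;\ldots;s_k,\text{TVar}(s_{k+1}))$, which is absorbed into $\text{TVar}(S)$ by the sequence clauses of Definitions~\ref{def:LVar} and \ref{def:CVar}.

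The hardest case, and the one I expect to be the main obstacle, is the If statement $S = \text{If}(e)\text{ then }\{S_t\}\text{ else }\{S_f\}$ when some output appears in a branch. Clause (3) of Definition~\ref{def:TVarO} pulls in $\text{Use}(e)$ unconditionally once an output is witnessed, whereas the corresponding If-clauses of $\text{LVar}$ and $\text{CVar}$ only absorb $\text{Use}(e)$ under side conditions (a while occurring inside, or the branches' $\text{CVar}$ being nonempty). I would discharge this with a short auxiliary structural argument walking from the If toward the witnessing output along the syntax: either a while is encountered on the path, firing the $\text{LVar}$-If clause and giving $\text{Use}(e) \subseteq \text{LVar}(S) \subseteq \text{TVar}(S)$, or an $\text{Err}$-contributing construct is encountered, forcing $\text{CVar}(S_t)\cup\text{CVar}(S_f)\neq\emptyset$ so that the ``nonempty-branch'' clause of $\text{CVar}$ on If fires and yields $\text{Use}(e)\subseteq\text{CVar}(S)$. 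Combining this with the IH-supplied inclusions $\text{TVar}_o(S_t)\subseteq\text{TVar}(S_t)\subseteq\text{TVar}(S)$ and the symmetric inclusion on $S_f$ closes the If case and thus the induction.
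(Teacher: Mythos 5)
Your overall strategy --- structural induction on $S$, splitting each case on whether $S$ syntactically contains an output statement --- is exactly the approach the paper takes; the paper's own proof is only a one-line sketch (``by induction on abstract syntax of $S$,'' with subcases on the presence of an output statement), so your write-up supplies essentially all of the detail. The base cases, the while case (where clause (4) of Definition~\ref{def:TVarO} gives literal equality), and both sequence subcases are handled correctly, and invoking Lemma~\ref{lmm:impVarUnionLemma} for monotonicity of $\text{Imp}$ together with the sequence clauses of $\text{CVar}$ and $\text{LVar}$ is the right machinery for absorbing the prefix.

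The gap is in the If case: you have correctly located the difficulty, but your proposed dichotomy does not close it. You argue that walking from the If down to the witnessing output one must meet either a while (giving $\text{Use}(e)\subseteq\text{LVar}(S)$) or an ``$\text{Err}$-contributing construct'' forcing $\text{CVar}(S_t)\cup\text{CVar}(S_f)\neq\emptyset$. But the witnessing output statement contributes only $\text{Err}(e')$ to $\text{CVar}$ (clause (5) of Definition~\ref{def:CVar}), and $\text{Err}(e')$ may be empty --- e.g.\ $e'$ a constant or a plain scalar variable, for which the paper's own case analysis in Lemma~\ref{lmm:expEvalSameTerm} takes $\text{Err}=\emptyset$. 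Concretely, for $S=\text{If}(y)\text{ then }\{\text{output }z\}\text{ else }\{\text{skip}\}$ with $y,z$ scalar, there is no while, $\text{CVar}(S_t)\cup\text{CVar}(S_f)=\text{Err}(z)=\emptyset$, so clause (6) of Definition~\ref{def:CVar} yields $\text{TVar}(S)=\text{Err}(y)=\emptyset$, while clause (3) of Definition~\ref{def:TVarO} yields $\text{TVar}_o(S)\supseteq\text{Use}(y)=\{y\}$. Neither horn of your dichotomy fires, and indeed the claimed inclusion itself fails on this configuration as the definitions are literally written; no argument of this shape can succeed without an additional hypothesis (such as $\text{Use}(e)\subseteq\text{Err}(e)$, or $\text{Err}$ being nonempty on the relevant expressions) or a repair of Definition~\ref{def:CVar} or~\ref{def:TVarO}. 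Since the paper suppresses all detail of its own induction, it offers no resolution of this point either.
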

\begin{proof}
By induction on abstract syntax of $S$.
In every case, there are subcases based on if there is output statement in the statement sequence $S$ or not if necessary.
\end{proof}

\begin{lemma}\label{lmm:sameTVarO}
If two statement sequences $S_1$ and $S_2$ satisfy the proof rule of behavioral equivalence, then $S_1$ and $S_2$ have the same set of termination deciding variables relative to output, $(S_1 \equiv_{O}^S S_2) => (\text{TVar}_o(S_1) = \text{TVar}_o(S_2))$.
\end{lemma}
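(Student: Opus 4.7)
The plan is to mirror the inductive structure used for Lemma~\ref{lmm:sameImpOVar}: proceed by induction on $\text{size}(S_1)+\text{size}(S_2)$ and perform a case analysis according to the clauses of the definition of $S_1 \equiv_{O}^S S_2$. The base case, where $S_1$ and $S_2$ are both simple statements, splits into the subcase where neither contains an output statement (so the first clause of Definition~\ref{def:TVarO} gives $\text{TVar}_o(S_1)=\text{TVar}_o(S_2)=\emptyset$) and the subcase $S_1=S_2=``\text{output }e"$ (so the second clause gives $\text{TVar}_o(S_1)=\text{TVar}_o(S_2)=\text{Err}(e)$).

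For the inductive step, the straightforward cases are: the If--If case with shared predicate $e$, where the induction hypothesis applied to the two branches, together with the third clause of Definition~\ref{def:TVarO} and the fact that $\text{Use}(e)$ matches syntactically, yields equality; the no-output-in-either case, where the first clause immediately gives $\emptyset=\emptyset$; and the sequence case with no output in the last statement, say $S_1=S_1';s_1$ with no output in $s_1$ and $S_1'\equiv_O^S S_2$, where the sixth clause collapses $\text{TVar}_o(S_1)$ to $\text{TVar}_o(S_1')$ and the induction hypothesis finishes the job.

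The two substantive cases are the while--while case and the sequence case with output in the last statement. In the while--while case I would exploit the fact that the relevant clause of Definition~\ref{def:TVarO} makes $\text{TVar}_o$ of such a while coincide with $\text{TVar}$ of the whole while. Since the behavioral-equivalence rule guarantees that $S_1''\equiv_H^S S_2''$ and that the loop bodies equivalently compute $\text{OVar}(S_1)\cup\text{OVar}(S_2)$, and since $\text{TVar}(S_1)=\text{TVar}_o(S_1)\subseteq\text{OVar}(S_1)$ (analogously for $S_2$), the hypotheses of the While clause of the termination-in-the-same-way proof rule are met, so $S_1\equiv_H^S S_2$; then Corollary~\ref{coro:sameTermVarFromEquivTerm} yields $\text{TVar}(S_1)=\text{TVar}(S_2)$, which is exactly what is needed. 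For the sequence case with output in the last statement, the fifth clause of Definition~\ref{def:TVarO} presents $\text{TVar}_o(S_1)$ as $\text{TVar}(S_1')\cup\text{Imp}(S_1',\text{TVar}_o(s_1))$, so two facts must align: $\text{TVar}(S_1')=\text{TVar}(S_2')$, which follows from the assumed $S_1'\equiv_H^S S_2'$ via Corollary~\ref{coro:sameTermVarFromEquivTerm}; and $\text{Imp}(S_1',\text{TVar}_o(s_1))=\text{Imp}(S_2',\text{TVar}_o(s_2))$, which follows from the induction hypothesis on $s_1\equiv_O^S s_2$ (giving $\text{TVar}_o(s_1)=\text{TVar}_o(s_2)$) together with Lemma~\ref{lmm:sameImpfromEquivCompCond} and Lemma~\ref{lmm:impVarUnionLemma}, once one observes that $\text{TVar}_o(s_i)\subseteq\text{OVar}(s_i)$, so that the assumed equivalent computation by $S_1'$ and $S_2'$ of $\text{OVar}(s_1)\cup\text{OVar}(s_2)$ specializes to equivalent computation of every variable in $\text{TVar}_o(s_1)$.

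The main obstacle will be checking in each inductive case that the hypotheses of the auxiliary lemmas---especially Lemma~\ref{lmm:sameImpfromEquivCompCond} and the While clause of the termination-in-the-same-way proof rule---are actually delivered by the behavioral-equivalence hypothesis. The inclusion $\text{TVar}_o(S)\subseteq\text{OVar}(S)$ (Lemma~\ref{lmm:TVarOIsSubsetOfTVar} together with the definition of $\text{OVar}$) is what lets the stronger ``computes $\text{OVar}$'' assumption downgrade to ``computes $\text{TVar}_o$'' whenever needed; keeping that chain of inclusions explicit in the while--while and sequence-with-output cases is where the bookkeeping is most error-prone.
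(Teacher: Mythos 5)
Your proposal is correct and follows exactly the approach the paper indicates (induction on $\text{size}(S_1)+\text{size}(S_2)$ with a case analysis on the clauses of $\equiv_O^S$); the key observations you isolate --- that $\text{TVar}_o$ of an output-containing while collapses to $\text{TVar}$ so that Corollary~\ref{coro:sameTermVarFromEquivTerm} applies, and that the assumed equivalent computation of $\text{OVar}$ specializes to $\text{TVar}_o(s_i)$ so that Lemmas~\ref{lmm:sameImpfromEquivCompCond} and~\ref{lmm:impVarUnionLemma} handle the $\text{Imp}$ component in the sequence case --- are precisely the points the elided proof must rely on.
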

\begin{proof}
By induction on $\text{size}(S_1) + \text{size}(S_2)$.
\end{proof}

\begin{corollary}\label{coro:sameOVar}
If two statement sequences $S_1$ and $S_2$ satisfy the proof rule of behavioral equivalence, then $S_1$ and $S_2$ have the same set of out-deciding variables,
$(S_1 \equiv_{O}^S S_2) => \text{OVar}(S_1) = \text{OVar}(S_2)$.
\end{corollary}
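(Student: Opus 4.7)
The plan is to derive this corollary immediately by combining the two preceding lemmas, since $\text{OVar}$ is defined componentwise as the union of $\text{Imp}_o$ and $\text{TVar}_o$ (Definition~\ref{def:OVar}). There is no induction or case analysis needed at this level; all the inductive work has been done in Lemma~\ref{lmm:sameImpOVar} and Lemma~\ref{lmm:sameTVarO}.

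First, I would assume $S_1 \equiv_{O}^S S_2$. Applying Lemma~\ref{lmm:sameImpOVar} to this hypothesis yields $\text{Imp}_o(S_1) = \text{Imp}_o(S_2)$. Applying Lemma~\ref{lmm:sameTVarO} to the same hypothesis yields $\text{TVar}_o(S_1) = \text{TVar}_o(S_2)$. Taking the union of both equalities gives $\text{Imp}_o(S_1) \cup \text{TVar}_o(S_1) = \text{Imp}_o(S_2) \cup \text{TVar}_o(S_2)$, which by Definition~\ref{def:OVar} is exactly $\text{OVar}(S_1) = \text{OVar}(S_2)$.

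There is no real obstacle here; the corollary is essentially a bookkeeping consequence of the two lemmas stated just above it. The only thing worth double-checking is that the definition of $\text{OVar}$ being used is indeed the union form from Definition~\ref{def:OVar} (as opposed to some alternative partition), so the union operation commutes correctly with the componentwise equalities — which it trivially does.
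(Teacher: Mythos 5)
Your proposal is correct and matches the paper's own proof exactly: the paper likewise derives the corollary by applying Lemma~\ref{lmm:sameImpOVar} to get $\text{Imp}_o(S_1) = \text{Imp}_o(S_2)$ and Lemma~\ref{lmm:sameTVarO} to get $\text{TVar}_o(S_1) = \text{TVar}_o(S_2)$, with the union step via Definition~\ref{def:OVar} left implicit. Nothing further is needed.
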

\begin{proof}
By Lemma~\ref{lmm:sameImpOVar},
$\text{Imp}_o(S_1) = \text{Imp}_o(S_2)$.
By Lemma~\ref{lmm:sameTVarO}, $\text{TVar}_o(S_1) = \text{TVar}_o(S_2)$.
\end{proof}

\begin{lemma}\label{lmm:NoOutputExec}
In one step execution $(S, m(\vals)) -> (S', m'(\vals'))$, if there is no output statement in $S$, then the output sequence is same in value store $\vals$ and $\vals'$, $\text{Out}(\vals') = \text{Out}(\vals)$.
\end{lemma}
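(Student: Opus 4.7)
The plan is to case split on the small-step rule used to justify the transition $(S, m(\vals)) -> (S', m'(\vals'))$, drawing on the key observation that $\text{Out}(\vals)$ depends only on the prefix of $\vals({id}_{IO})$ up to (and including) the rightmost output-tagged value. Hence it suffices to show that in every applicable rule either $\vals'({id}_{IO}) = \vals({id}_{IO})$ or $\vals'({id}_{IO})$ is obtained from $\vals({id}_{IO})$ by appending only input-tagged values.

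First, scanning the SOS rules in Figures \ref{fig:sosrulesExpr}, \ref{fig:basicrules}, \ref{fig:iorules}, one sees that the only rules that alter the binding of ${id}_{IO}$ at all are the input rules (In-1, In-2, In-4), which append $\underline{v}_{io}$ (an input-tagged value), and the output rules (Out-1, Out-2), which append $\overline{v}$ (an output-tagged value); rule Out-3 merely rewrites a label redex and leaves $\vals$ untouched, while all other rules either leave $\vals$ entirely alone, modify the crash/overflow flag, set a scalar or array entry distinct from ${id}_{IO}$, or touch the loop counter. So I would discharge all non-I/O rules by inspection: $\vals'({id}_{IO}) = \vals({id}_{IO})$ immediately, and therefore $\text{Out}(\vals') = \text{Out}(\vals)$. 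For the input rules, appending a value with the input tag leaves the rightmost output-tagged value (if any) unchanged, so again $\text{Out}(\vals') = \text{Out}(\vals)$.

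The remaining case is the output rules Out-1 and Out-2. Here I would use the premise that no output statement occurs syntactically in $S$. By the contextual rule of Figure \ref{fig:semctxt}, a one-step reduction has the form $(\mathbb{E}[r], \state) -> (\mathbb{E}[r'], \state')$, and the evaluation contexts include $\text{output }\mathbb{E}$. Thus firing rule Out-1 or Out-2 requires that $S = \mathbb{E}[\text{output }v]$ for some evaluation context $\mathbb{E}$, which in particular forces the token ``$\text{output}$'' to appear syntactically in $S$, contradicting the hypothesis. Therefore Out-1 and Out-2 are inapplicable, and the lemma follows.

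The only mildly delicate point is making precise the notion ``no output statement in $S$''; I would adopt the reading that the keyword $\text{output}$ does not appear anywhere in $S$, so that $S$ cannot be decomposed as $\mathbb{E}[\text{output }e]$ for any evaluation context $\mathbb{E}$ and expression $e$. With that reading the case analysis is entirely mechanical and the argument is short; I do not expect any real obstacle beyond pinning down this definitional convention.
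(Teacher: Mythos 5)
Your proposal is correct and matches the paper's intent: the paper's own proof is just the one-line indication ``by induction on the abstract syntax of $S$ and the crash flag,'' which unfolds into exactly the rule-by-rule inspection you carry out (only In-$k$ and Out-$k$ touch ${id}_{IO}$, input rules append only input-tagged values so $\text{Out}(\vals)$ is preserved, and Out-1/Out-2 cannot fire because the redex decomposition $S = \mathbb{E}[\text{output }v]$ would put an output statement in $S$). Your explicit flagging of the definitional convention that ``no output statement in $S$'' must cover partially evaluated forms like $\text{output }v$ is a point the paper glosses over, but it does not change the substance of the argument.
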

\begin{proof}
By induction on abstract syntax of $S$ and crash flag $\mathfrak{f}$ in state $m$.
%
%
%
%
%
%
%
%
%
%
%
%
%
%
%
%
%
%
%
\end{proof}

\begin{lemma}\label{lmm:NoOutputExecMultiSteps}
If there is no output statement in $S$, then, after the execution $(S, m(\vals)) ->* (S', m'(\vals'))$, the output sequence is same in value store $\vals$ and $\vals'$, $\text{Out}(\vals') = \text{Out}(\vals)$.
\end{lemma}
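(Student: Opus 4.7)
The plan is to reduce the multi-step claim to the single-step Lemma~\ref{lmm:NoOutputExec} by a straightforward induction on the number of execution steps $k$ in $(S,m)\,\kStepArrow[k]\,(S',m')$. The one subtlety is that to iterate Lemma~\ref{lmm:NoOutputExec} I need to know that the ``no output statement appears'' property is preserved by a single execution step, so that I can apply the single-step lemma again to the successor configuration. I would therefore first state and dispatch an auxiliary preservation claim.

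The auxiliary claim: if $\forall e\,:\,``\text{output }e"\notin S$ and $(S,m)\to(S',m')$, then $\forall e\,:\,``\text{output }e"\notin S'$. I would prove this by a case analysis on the SOS rule used for the step, reading off the contextual rule from Figure~\ref{fig:semctxt} together with the basic rules in Figures~\ref{fig:sosrulesExpr}, \ref{fig:basicrules}, \ref{fig:iorules}. Every rule either leaves the statement syntactically unchanged (crash/evaluation progress steps), replaces it by $\mathrm{skip}$ (assignment, input, output, Wh-F, etc.), descends into a branch (If-T/If-F), or unrolls a loop body (Wh-T). In each case the successor $S'$ is syntactically contained in (or a known rewrite of) $S$, so no new $\text{output }e$ subterm can be introduced. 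This is the only mildly tedious part of the proof, but it is entirely routine; no single case poses an obstacle.

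With the preservation claim in hand, the main induction is immediate. Base case $k=0$: then $S'=S$ and $\vals'=\vals$, so trivially $\mathrm{Out}(\vals')=\mathrm{Out}(\vals)$. Inductive step: assume the lemma holds whenever the execution takes at most $k$ steps, and suppose
\[
(S,m(\vals))\,\kStepArrow[k+1]\,(S',m'(\vals')).
\]
Split this as $(S,m(\vals))\to(S_1,m_1(\vals_1))\,\kStepArrow[k]\,(S',m'(\vals'))$. By Lemma~\ref{lmm:NoOutputExec} applied to the first step, $\mathrm{Out}(\vals_1)=\mathrm{Out}(\vals)$. By the auxiliary preservation claim, $S_1$ also contains no output statement, so the induction hypothesis applies to the remaining $k$ steps and yields $\mathrm{Out}(\vals')=\mathrm{Out}(\vals_1)$. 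Chaining the two equalities gives $\mathrm{Out}(\vals')=\mathrm{Out}(\vals)$, completing the induction.

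The only place where care is needed is the preservation lemma, and within that the Wh-T case, because the loop body is duplicated into $S';\text{while}_{\langle n\rangle}(e)\{S'\}$; but since the duplicated body is a sub-statement of the original while loop, it already contained no output subterm by assumption, so the property is maintained. All other cases are essentially definitional. Hence I anticipate no real obstacle, and the proof should occupy just a few lines beyond the statement of the auxiliary preservation fact.
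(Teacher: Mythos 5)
Your proof is correct and matches the paper's: the paper also proceeds by induction on the number of execution steps, combining the single-step Lemma~\ref{lmm:NoOutputExec} with exactly the preservation fact you isolate (that $s \notin S$ is maintained by execution steps, which the paper already records as Lemma~\ref{lmm:oneStepStmtExclusion}). No gap.
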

\begin{proof}
By induction on number $k$ of execution steps in the execution $(S, m(\vals))$ {\kStepArrow [k] } $(S', m'(\vals'))$.
The proof also relies on the fact that if $s \notin S$, then $s \notin S'$.
\end{proof}

\begin{lemma}\label{lmm:ExistenceOfIthIterStart}
One while statement $s = ``\text{while}_{\langle n\rangle}(e) \{S\}"$ starts in a state $m(\mathfrak{f}, \text{loop}_c)$ in which the loop counter of $s$ is zero, $\text{loop}_c(n) = 0$ and the crash flag is not set, $\mathfrak{f} = 0$. For any positive integer $i$, if there is a state $m'(m_c')$ reachable from $m$ in which the loop counter is $i$, $\text{loop}_c'(n) = i$, then there is a configuration $(S;s, m''(\mathfrak{f}'', \text{loop}_c''))$ reachable from the configuration $(s, m)$ in which loop counter of $s$ is $i$, $\text{loop}_c''(n) = i$ and the crash flag is not set, $\mathfrak{f}'' = 0$:

$\forall i>0\,:\, ({((s, m(\mathfrak{f}, m_c)) ->* (S', m'(\mathfrak{f'}, \text{loop}_c')))} \, \wedge \,
                    (\text{loop}_c(n) = 0) \, \wedge \,
                    (\mathfrak{f} = 0) \, \wedge \,
                    (\text{loop}_c'(n) = i)) => $

$(s, m(\mathfrak{f}, \text{loop}_c)) ->* (S;s, m''(\mathfrak{f}, \text{loop}_c''))$ where $\mathfrak{f} = 0$ and $\text{loop}_c''(n) = i$.
\end{lemma}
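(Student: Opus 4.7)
}
The plan is to argue by tracking \emph{which} rules can possibly change the loop counter entry for label $n$. By the global assumption of unique loop labels, the statement $s = \text{while}_{\langle n\rangle}(e)\{S\}$ is the only while statement in the whole program carrying label $n$; hence the only semantic rules that can redefine $\text{loop}_c(n)$ are the Wh-T rule and the Wh-F rule applied to $s$ itself. In particular, no rule fired inside the body $S$ (when $s\notin S$, which follows again from unique loop labels) and no rule fired after a crash (rule Crash leaves $\state$ unchanged) can modify $\text{loop}_c(n)$. This reduces the problem to analysing the finite list of configurations in the given execution at which $\text{loop}_c(n)$ is actually updated.

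First, I would take the execution $(s,m) \to^* (S',m'(\mathfrak{f}',\text{loop}_c'))$ with $\text{loop}_c'(n)=i$ and focus on its shortest prefix that first reaches a state in which $\text{loop}_c(n)=i$. Call this prefix $\pi$. Along $\pi$, let $c_0,c_1,\ldots,c_r$ be the subsequence of configurations produced immediately after a step that rewrote $\text{loop}_c(n)$, together with $c_0 = (s,m)$. By the observation above, every such rewriting step is an application of Wh-T or Wh-F to $s$ itself. The Wh-F rule sets $\text{loop}_c(n)$ to $0$ and produces the continuation $\text{skip}$, after which $s$ no longer occurs at the head and no further updates to $\text{loop}_c(n)$ are possible along this branch. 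Therefore, to go from $\text{loop}_c(n)=0$ at $c_0$ up to $\text{loop}_c(n)=i>0$, the only possibility is that each of the updates along $\pi$ is a Wh-T application, and these Wh-T applications must be fired from configurations of the form $(s,m^{(j)})$ with $\text{loop}_c^{(j)}(n)=j$ for $j=0,1,\ldots,i-1$. Each such Wh-T application yields a configuration of the exact shape $(S;s, m^{(j+1)})$ with $\text{loop}_c^{(j+1)}(n)=j+1$.

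In particular, the last such Wh-T step produces a configuration $(S;s, m''(\mathfrak{f}'',\text{loop}_c''))$ with $\text{loop}_c''(n)=i$, which is reachable from $(s,m)$. It remains only to check that $\mathfrak{f}'' = 0$: the premise of Wh-T requires $\mathfrak{f}=0$ just before the step, and the rule does not alter the crash flag, so $\mathfrak{f}''=0$ as required. This delivers exactly the configuration the lemma asks for.

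The main obstacle is the bookkeeping step that rules out any ``spurious'' modification of $\text{loop}_c(n)$: one has to invoke the uniqueness of loop labels in the source program, combined with the preservation property (used implicitly in Corollary~\ref{coro:loopCntRemainsSame}) that executing a statement sequence not containing $s$ cannot change $\text{loop}_c(n)$, and the stickiness of the crash flag, which ensures that once $\mathfrak{f}=1$ the loop counter is frozen by rule Crash. Once these ingredients are in place, the chain $0 \to 1 \to \cdots \to i$ of Wh-T applications produces the desired configuration directly, with no further calculation.
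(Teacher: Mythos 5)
Your argument is correct and rests on exactly the same ingredients as the paper's proof: uniqueness of loop labels (so only Wh-T/Wh-F applied to $s$ itself can touch $\text{loop}_c(n)$, and the body's execution preserves it via Corollary~\ref{coro:loopCntRemainsSame}), the fact that Wh-T increments the counter by exactly one and requires $\mathfrak{f}=0$, and the fact that a Wh-F or a crash would freeze the counter and so contradict reachability of the value $i$. The only difference is organizational: the paper proceeds by induction on $i$, and at each stage does an explicit three-way case analysis of the predicate evaluation (error, zero, nonzero), refuting the first two by contradiction with the reachability hypothesis before firing Wh-T; you instead take the minimal prefix of the given execution that first attains counter value $i$ and classify all counter-updating steps along it in one pass, concluding they form the chain $0\to 1\to\cdots\to i$ of Wh-T applications. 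The two arguments are interchangeable here; yours is somewhat more economical, while the paper's inductive phrasing matches the style of the neighbouring lemmas (e.g.\ Lemma~\ref{lmm:equivTermCompSameLoopIteration}) that it is used alongside. One small point of care: when you say the Wh-T steps are ``fired from configurations of the form $(s,m^{(j)})$,'' strictly the rule fires after EEval$'$ has reduced the predicate to a value, and the return to such a configuration after each body execution is what Corollary~\ref{coro:termSeq} supplies; this is implicit in your plan and worth making explicit in a full write-up.
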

\begin{proof}
The proof is by induction on $i$.

{\bf Base case} $i=1$.

We show that the evaluation of the loop predicate of $s$ w.r.t the value store $\vals$ in the state $m(\text{loop}_c, \vals)$ produces an nonzero integer value. By our semantic rule, if the evaluation of the predicate expression of $s$ raises an exception, the execution of $s$ proceeds as follows:

\begin{tabbing}
xx\=xx\=\kill
\>\>     $({s}, m(\mathfrak{f}, \text{loop}_c,\vals))$\\
\>= \>   $(\text{while}_{\langle n\rangle} (e) \; \{S\}, {m(\text{loop}_c, \vals)})$\\
\>$->$\> $(\text{while}_{\langle n\rangle} (\text{error}) \; \{S\}, {m(\text{loop}_c, \vals)})$ by the EEval rule\\
\>$->$\> $(\text{while}_{\langle n\rangle} (0) \; \{S\}, {m(1/\mathfrak{f}, \text{loop}_c, \vals)})$ by the ECrash rule\\
\>{\kStepArrow [k] }\> $(\text{while}_{\langle n\rangle} (0) \; \{S\}, m(1/\mathfrak{f}, \text{loop}_c))$ for any $k>0$, by the Crash rule.
\end{tabbing}

Hence, we have a contradiction that there is no configuration in which the loop counter of $s$ is 1.

When the evaluation of the loop predicate expression of $s$ produce zero, the execution of $s$ proceeds as follows:

\begin{tabbing}
xx\=xx\=\kill
\>\>     $({s}, m(\mathfrak{f}, \text{loop}_c,\vals))$\\
\>= \>   $(\text{while}_{\langle n\rangle} (e) \; \{S\}, {m(\text{loop}_c, \vals)})$\\
\>$->$\> $(\text{while}_{\langle n\rangle} (0) \; \{S\}, {m(\text{loop}_c, \vals)})$ by the EEval rule\\
\>$->$\> $({\text{skip}}, m(\text{loop}_c[0/n_1]))$ by the Wh-F rule.
\end{tabbing}

Hence, we have a contradiction that there is no configuration in which the loop counter of $s$ is 1.
The evaluation of the predicate expression of $s$ w.r.t value store $\vals$ produce nonzero value.
The execution of $s$ proceeds as follows:

\begin{tabbing}
xx\=xx\=\kill
\>\>     $({s}, m(\mathfrak{f}, \text{loop}_c,\vals))$\\
\>= \>   $(\text{while}_{\langle n\rangle} (e) \; \{S\}, {m(\mathfrak{f}, \text{loop}_c, \vals)})$\\
\>$->$\> $(\text{while}_{\langle n\rangle} (\mathcal{E}\llbracket e\rrbracket\vals) \; \{S\}, {m(\mathfrak{f}, \text{loop}_c, \vals)})$ by the EEval rule\\
\>$->$\> $({S;\text{while}_{\langle n\rangle} (e) \; \{S\}}, m(\mathfrak{f}, \text{loop}_c[1/n_1], \vals))$ by the Wh-T rule.
\end{tabbing}

The lemma holds.

{\bf Induction step}.

The hypothesis IH is that, if there is a configuration $(S', m_i(\text{loop}_c^i))$ reachable from $(s, m)$ in which the loop counter of $s$ is $i$, $\text{loop}_c^i(n) = i>0$, then there is a reachable configuration $(S;s, m_i(\mathfrak{f}, \text{loop}_c^i))$ from $(s, m)$ where the loop counter of $s$ is $i$ and the crash flag is not set.

Then we show that, if there is a configuration $(S', m_{i+1}(\text{loop}_c^{i+1}))$ reachable from $(s, m)$ in which the loop counter of $s$ is $i+1$, then there is a reachable configuration $(S;s, m_{i+1}(\mathfrak{f}, \text{loop}_c^{i+1}))$ from $(s, m)$ where the loop counter of $s$ is $i+1$ and the crash flag $\mathfrak{f}$ is not set.

By Lemma~\ref{lmm:loopCntStepwiseInc}, the loop counter of $s$ is increasing by one in one step. Hence, there must be one configuration reachable from $(s, m)$ in which the loop counter of $s$ is $i$.
By hypothesis, there is the configuration $(S;s, m_{i}(\mathfrak{f}, \text{loop}_c^i, \vals_i))$ reachable from $(s, m)$ in which the loop counter is $i$, $\text{loop}_c^i(n) = i$, and the crash flag is not set, $\mathfrak{f} = 0$.
By the assumption of unique loop labels, $s\notin S$. Then the loop counter of $s$ is not redefined in the execution of $S$ started in state $m_{i}(\mathfrak{f}, \text{loop}_c^i, \vals_i)$. Because there is a configuration in which the loop counter of  $s$ is $i+1$, then the execution of $S$ when started in the state $m_{i}(\mathfrak{f}, \text{loop}_c^i, \vals_i)$ terminates,
$(S, m_{i}(\mathfrak{f}, \text{loop}_c^i, \vals_i)) ->* (\text{skip}, m_{i+1}(\mathfrak{f}, \text{loop}_c^{i+1}, \vals_{i+1}))$ where
$\mathfrak{f} = 0$ and $\text{loop}_c^{i+1}(n) = i$.
By Corollary~\ref{coro:termSeq},
$(S;s, m_{i}(\mathfrak{f}, \text{loop}_c^i, \vals_i)) ->* (s, m_{i+1}(\mathfrak{f}, \text{loop}_c^{i+1}, \vals_{i+1}))$.
By similar argument in base case, the evaluation of the predicate expression w.r.t the value store $\vals_{i+1}$ produce nonzero integer value. The execution of $s$ proceeds as follows:

\begin{tabbing}
xx\=xx\=\kill
\>\>     $(s, m_{i+1}(\mathfrak{f}, \text{loop}_c^{i+1}, \vals_{i+1}))$\\
\>= \>   $(\text{while}_{\langle n\rangle} (e) \; \{S\}, {m_{i+1}(\mathfrak{f}, \text{loop}_c^{i+1}, \vals_{i+1})})$\\
\>$->$\> $(\text{while}_{\langle n\rangle} (\mathcal{E}\llbracket e\rrbracket\vals_{i+1}) \; \{S\}, {m_{i+1}(\mathfrak{f}, \text{loop}_c^{i+1}, \vals_{i+1})})$\\
\>\>      by the EEval rule\\
\>$->$\> $({S;\text{while}_{\langle n\rangle} (e) \; \{S\}}, {{m_{i+1}(\mathfrak{f}, \text{loop}_c^{i+1}[(i+1)/n], \vals_{i+1})}})$\\
\>\>      by the Wh-T rule.
\end{tabbing}
The lemma holds.
\end{proof}

\begin{lemma}\label{lemma:loopSameIOSeq}
Let $s_1 = ``\text{while}_{\langle n_1\rangle}(e) \; \{S_{1}\}"$ and
    $s_2 = ``\text{while}_{\langle n_2\rangle}(e) \;\allowbreak \{S_{2}\}"$ be two while statements and all of the followings hold:
\begin{itemize}
\item There are output statements in $s_1$ and $s_2$, $\exists e_1\,e_2\,:\, (``\text{output }e_1" \in s_1) \wedge (``\text{output }e_2" \in s_2)$;

\item $s_1$ and $s_2$ have the same set of termination deciding variables relative to output, and the same set of imported variables relative to output, $(\text{TVar}_o(s_1) = \text{TVar}_o(s_2) = \text{TVar}(s)) \wedge
                        (\text{Imp}_o(s_1) = \text{Imp}_o(s_2) = \text{Imp}(io))$;

\item Loop bodies $S_1$ and $S_2$ satisfy the proof rule of equivalent computation of the out-deciding variables of $s_1$ and $s_2$, $\forall x \in \text{OVar}(s) = \text{TVar}(s) \cup \text{Imp}(io)\,:\, S_{1} \equiv_{x}^S S_{2}$;

\item Loop bodies $S_1$ and $S_2$ satisfy the proof rule of termination in the same way, $S_{1} \equiv_{H}^S S_{2}$;

\item Loop bodies $S_1$ and $S_2$ produce the same output sequence when started in states with crash flags not set and whose value stores agree on values of variables in $\text{OVar}(S_{1}) \cup \text{OVar}(S_{2})$,
    $\forall m_{S_{1}}(\mathfrak{f}_1, \vals_{S_{1}})\, m_{S_{2}}(\mathfrak{f}_2, \vals_{S_{2}})\,:$

    $((\mathfrak{f}_1 = \mathfrak{f}_2 = 0) \wedge
     (\forall x \in \text{OVar}(S_{1}) \cup \text{OVar}(S_{2})\,:\, \vals_{S_{1}}(x) =\,\allowbreak \vals_{S_{2}}(x))) =>$

    ${((S_{1}, m_{S_{1}}(\mathfrak{f}_1, \vals_{S_{1}})) \equiv_{O} (S_{2}, m_{S_{2}}(\mathfrak{f}_2, \vals_{S_{2}})))}$.
\end{itemize}

If $s_1$ and $s_2$ start in states $\state_1(\mathfrak{f}_1, \text{loop}_c^1, \vals_1), \state_2(\mathfrak{f}_2, \text{loop}_c^2, \vals_2)$ respectively with crash flags not set $\mathfrak{f}_1 = \mathfrak{f}_2 = 0$ and
in which $s_1$ and $s_2$ have not started execution $(\text{loop}_c^1(n_1) = \text{loop}_c^2(n_2) = 0)$,
  value stores $\vals_1$ and $\vals_2$ agree on values of variables in $\text{OVar}(s)$,
  $\forall x \in \text{OVar}(s)\,:\, \vals_1(x) = \vals_2(x)$,
then one of the followings holds:
\begin{enumerate}
\item $s_1$ and $s_2$ both terminate and produce the same output sequence:

$({s_1}, m_1) ->* (\text{skip}, m_1'(\vals_1'))$,
$({s_2}, m_2) ->* (\text{skip}, m_2'(\vals_2'))$ where $\vals_1'({id}_{IO}) = \vals_2'({id}_{IO})$.
\item $s_1$ and $s_2$ both do not terminate, $\forall k> 0$,
$({s_1}, m_1)$ {\kStepArrow [k] } $({S_{1_k}}, m_{1_k})$,
$({s_2}, m_2)$ {\kStepArrow [k] } $({S_{2_k}}, m_{2_k})$
where $S_{1_k} \neq \text{skip}$, $S_{2_k} \neq \text{skip}$ and one of the followings holds:

\begin{enumerate}
\item{For any positive integer $i$, there are two configurations $(s_1, m_{1_i})$ and $(s_2, m_{2_i})$ reachable from $(s_1, m_1)$ and $(s_2, m_2)$, respectively, in which both crash flags are not set, the loop counters of $s_1$ and $s_2$ are equal to $i$ and value stores agree on values of variables in $\text{OVar}(s)$, and for every state in execution, $(s_1, m_1) ->* (s_1, m_{1_i})$ or $(s_2, m_2) ->* (s_2, m_{2_i})$, loop counters for $s_1$ and $s_2$ are less than or equal to $i$ respectively:}

 $\forall i>0 \, \exists (s_1, m_{1_i}) \, (s_2, m_{2_i})\,:\, ({s_1}, m_1) ->* ({s_1}, m_{1_i}(\mathfrak{f}_1,\;\allowbreak \text{loop}_c^{1_i}, \vals_{1_i})) \wedge
                                              ({s_2}, m_2) ->* ({s_2}, m_{2_i}(\mathfrak{f}_2, \text{loop}_c^{2_i}, \vals_{2_i}))$ where
\begin{itemize}
 \item $\mathfrak{f}_1 = \mathfrak{f}_2 = 0$; and

 \item $\text{loop}_c^{1_i}(n_1) = \text{loop}_c^{2_i}(n_2) = i$; and

 \item $\forall x \in \text{OVar}(s)\,:\, \vals_{1_i}(x)$ = $\vals_{2_i}(x)$.

 \item $\forall m_1'\,:\, ({s_1}, m_1) ->* (S_1', m_1'(\text{loop}_c^{1'})) ->* ({s_1}, m_{1_i}(\text{loop}_c^{1_i},\allowbreak \vals_{1_i}))$,
     \noindent$\text{loop}_c^{1'}(n_1) \leq i$; and

 \item $\forall m_2'\,:\, ({s_2}, m_2) ->* (S_2', m_2'(\text{loop}_c^{2'})) ->* ({s_2}, m_{2_i}(\text{loop}_c^{2_i},\allowbreak \vals_{2_i}))$,
 \noindent$\text{loop}_c^{2'}(n_2) \leq i$;
\end{itemize}

\item{The loop counters for $s_1$ and $s_2$ are less than a smallest positive integer $i$ and all of the followings hold:}

\begin{itemize}
\item $\exists i>0\, \forall m_1',m_2' \,:\,       ({s_1}, m_1) ->* (S_{1}', m_1'(\text{loop}_c^{1'}))$,

                                     \noindent$({s_2}, m_2) ->* (S_{2}', m_2'(\text{loop}_c^{2'}))$ where
      $\text{loop}_c^{1'}(n_1) < i$,

      \noindent$\text{loop}_c^{2'}(n_2) < i$;

\item{$\forall 0 < j < i$, there are two configurations $(s_1, m_{1_j})$ and $(s_2, m_{2_j})$ reachable from $(s_1, m_1)$ and $(s_2, m_2)$, respectively, in which both crash flags are not set, loop counters of $s_1$ and $s_2$ are equal to $j$ and value stores agree on values of variables in $\text{OVar}(s)$:}

 $\exists (s_1, m_{1_j}), (s_2, m_{2_j})\,:\, ({s_1}, m_1) ->* ({s_1}, m_{1_j}(\mathfrak{f}_1, \text{loop}_c^{1_j},\allowbreak \vals_{1_j})) \wedge
                                              ({s_2}, m_2) ->* ({s_2}, m_{2_j}(\mathfrak{f}_2, \text{loop}_c^{2_j},\allowbreak \vals_{2_j}))$ where
\begin{itemize}
 \item $\mathfrak{f}_1 = \mathfrak{f}_2 = 0$; and

 \item $\text{loop}_c^{1_j}(n_1) = \text{loop}_c^{2_j}(n_2) = j$; and

 \item $\forall x \in \text{OVar}(s)\,:\, \vals_{1_j}(x)$ = $\vals_{2_j}(x)$.

%
\end{itemize}

\item If $i=1$, then the I/O sequence is not redefined in any states reachable from
$({s_1}, m_1)$ and $({s_2}, m_2)$.

\begin{itemize}
\item
$\forall m_1''\,:\, $
$ ({s_1}, m_1(\text{loop}_c^1, \vals_1)) ->* ({S_1''}, {m_1''(\vals_1'')})$ where
 $\vals_1''({id}_{IO}) = \vals_1({id}_{IO})$.

\item
$\forall m_2''\,:\, $
$({s_2}, m_2(\text{loop}_c^2, \vals_2)) ->* ({S_2''}, {m_2''(\vals_2'')})$ where
 $\vals_2''({id}_{IO}) = \vals_2({id}_{IO})$.

\end{itemize}

\item If $i>1$, then the I/O sequence is not redefined in any states reachable from
$({s_1}, m_{1_{i-1}})$ and $({s_2}, m_{2_{i-1}})$.

\begin{itemize}
\item
$\forall m_1''\,:\, $
$ ({s_1}, m_{1_{i-1}}(\text{loop}_c^{1_{i-1}}, \vals_{1_{i-1}})) ->* ({S_1''}, {m_1''(\vals_1'')})$ where
 $\vals_1''({id}_{IO}) = \vals_{1_{i-1}}({id}_{IO})$.

\item
$\forall m_2''\,:\, $
$({s_2}, m_{2_{i-1}}(\text{loop}_c^{2_{i-1}}, \vals_{2_{i-1}})) ->* ({S_2''}, {m_2''(\vals_2'')})$ where
 $\vals_2''({id}_{IO}) = \vals_{2_{i-1}}({id}_{IO})$.
\end{itemize}

\end{itemize}

\item{The loop counters for $s_1$ and $s_2$ are less than or equal to a smallest positive integer $i$ and all of the followings hold:}

\begin{itemize}
\item $\exists i>0\,\forall m_1',m_2' \,:\,       ({s_1}, m_1) ->* (S_{1}', m_1'(\text{loop}_c^{1'})),
                                     ({s_2},\allowbreak m_2) ->* (S_{2}', m_2'(\text{loop}_c^{2'}))$ where
      $\text{loop}_c^{1'}(n_1) \leq i$,
      \noindent$\text{loop}_c^{2'}(n_2) \leq i$;

\item There are no configurations $(s_1, m_{1_i})$ and $(s_2, m_{2_i})$ reachable from $(s_1, m_1)$ and $(s_2, m_2)$, respectively, in which  crash flags are not set, the loop counters of $s_1$ and $s_2$ are equal to $i$, and value stores agree on values of variables in $\text{OVar}(s)$:

$\nexists (s_1, m_{1_i}), (s_2, m_{2_i})\,:\, ({s_1}, m_1) ->* ({s_1}, m_{1_i}(\mathfrak{f}_1, \text{loop}_c^{1_i},\allowbreak \vals_{1_i})) \wedge
       ({s_2}, m_2) ->* ({s_2}, m_{2_i}(\mathfrak{f}_2, \text{loop}_c^{2_i}, \vals_{2_i}))$ where
      \begin{itemize}
        \item $\mathfrak{f}_1 = \mathfrak{f}_2 = 0$; and

        \item $\text{loop}_c^{1_i}(n_1) = \text{loop}_c^{2_i}(n_2) = i$; and

        \item $\forall x \in \text{OVar}(s)\,:\, \vals_{1_i}(x)$ = $\vals_{2_i}(x)$.
      \end{itemize}

\item{$\forall 0 < j < i$, there are two configurations $(s_1, m_{1_j})$ and $(s_2, m_{2_j})$ reachable from $(s_1, m_1)$ and $(s_2, m_2)$, respectively, in which crash flags are not set, the loop counters of $s_1$ and $s_2$ are equal to $j$ and value stores agree on values of variables in $\text{OVar}(s)$:}

 $\exists (s_1, m_{1_j}), (s_2, m_{2_j})\,:\, ({s_1}, m_1) ->* ({s_1}, m_{1_j}(\mathfrak{f}_1, \text{loop}_c^{1_j},\allowbreak \vals_{1_j})) \wedge
                                              ({s_2}, m_2) ->* ({s_2}, m_{2_j}(\mathfrak{f}_2, \text{loop}_c^{2_j}, \vals_{2_j}))$ where
\begin{itemize}
 \item $\mathfrak{f}_1 = \mathfrak{f}_2 = 0$; and

 \item $\text{loop}_c^{1_j}(n_1) = \text{loop}_c^{2_j}(n_2) = j$; and

 \item $\forall x \in \text{OVar}(s)\,:\, \vals_{1_j}(x)$ = $\vals_{2_j}(x)$.

%
\end{itemize}

\item If $i=1$, then executions from $({s_1}, m_1)$ and $({s_2}, m_2)$ produce the same output sequence:

$({s_1}, m_1(\text{loop}_c^1, \vals_1)) \equiv_{O} ({s_2}, m_2(\text{loop}_c^2, \vals_2))$.

\item If $i>1$, then executions from
 $({s_1}, m_{1_{i-1}})$ and $({s_2}, m_{2_{i-1}})$
 produce the same output sequence:

 $({s_1}, m_{1_{i-1}}(\text{loop}_c^{1_{i-1}}, \vals_{1_{i-1}})) \equiv_{O} ({s_2}, m_{2_{i-1}}(\text{loop}_c^{2_{i-1}},\allowbreak \vals_{2_{i-1}}))$.
\end{itemize}

\end{enumerate}
\end{enumerate}
\end{lemma}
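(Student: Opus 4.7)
The plan is to build on Lemma~\ref{lmm:loopTermInSameWayWithLoopCnt}, which already gives the required iteration-by-iteration correspondence for termination behavior. That lemma applies because the hypotheses of this lemma imply its hypotheses: the bodies satisfy termination in the same way, equivalent computation of $\text{TVar}(s)$ (since $\text{TVar}(s) \subseteq \text{OVar}(s)$ by Lemma~\ref{lmm:TVarOIsSubsetOfTVar} plus the equivalent-computation assumption on $\text{OVar}(s)$), and loop counters start at zero with value stores agreeing on $\text{TVar}(s)$. So the four structural cases in the conclusion of this lemma will match up with the cases produced by that lemma; my job is to upgrade each of them from termination information to output-sequence information.

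First I would strengthen the iteration-boundary invariant from ``agreement on $\text{TVar}(s)$'' to ``agreement on $\text{OVar}(s) = \text{TVar}(s) \cup \text{Imp}(io)$.'' This is the analog of the strengthening step in the proof of Lemma~\ref{lmm:sameFinalValueX}. At iteration $j$, suppose the configurations $(s_1,m_{1_j})$ and $(s_2,m_{2_j})$ are reached with value stores agreeing on $\text{OVar}(s)$. Applying the equivalent-computation hypothesis $\forall x \in \text{OVar}(s)\,:\, S_{1}\equiv_{x}^{S} S_{2}$ together with Theorem~\ref{thm:equivCompMain} to a single terminating execution of the bodies (which terminate in the same way by Corollary~\ref{coro:loopTermInSameWay}), and using the fact that $\text{Imp}(S_1, \text{OVar}(s)) \subseteq \text{OVar}(s)$ (the same containment argument used in the proof of Lemma~\ref{lmm:sameFinalValueX}), the post-body value stores still agree on $\text{OVar}(s)$. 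Induction on the iteration index propagates this to every subsequent boundary.

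Next I would address the output sequence itself. For case 1 (both terminate), the output sequence is produced over finitely many complete body executions plus a final predicate evaluation that evaluates to zero; applying the ``same output sequence'' hypothesis on bodies to each complete iteration (states at iteration boundaries agree on $\text{OVar}(S_i)$ since $\text{OVar}(S_i) \subseteq \text{OVar}(s)$) gives equal final I/O sequences, hence $\vals_1'({id}_{IO}) = \vals_2'({id}_{IO})$. For case 2a (both diverge through infinitely many iterations), every prefix of output emitted by $s_1$ is already emitted by some finite prefix of complete iterations of $s_2$ (and symmetrically), matching the definition of $\equiv_O$. For cases 2b and 2c (diverging within one final iteration), I would use Lemma~\ref{lmm:NoOutputExecMultiSteps} plus case analysis on whether the divergence happens before any output of the final iteration (then I/O frozen at the last boundary, giving the ``not redefined'' clause) or involves the final body's $\equiv_O$ hypothesis applied to the matched states at iteration $i-1$ (giving the $\equiv_O$ clause).

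The main obstacle will be cases 2b and 2c, where both programs diverge partway through the $i$-th iteration. Here I must split on whether the final partial execution makes it past any output statement. The bodies terminate in the same way by Corollary~\ref{coro:loopTermInSameWay}, so if the bodies diverge they diverge together; but I need to argue that the divergence is symmetric with respect to output emission, which requires invoking the ``same output sequence'' hypothesis on the bodies starting from states $m_{1_{i-1}}$ and $m_{2_{i-1}}$ and showing that agreement on $\text{OVar}(S_1) \cup \text{OVar}(S_2)$ holds there. Distinguishing whether the divergence crashes in the predicate evaluation (case 2b) or inside the body (case 2c) then gives the ``not redefined'' clause versus the $\equiv_O$ clause, respectively.
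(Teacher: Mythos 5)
Your proposal matches the paper's proof in all essentials: it reduces to Lemma~\ref{lmm:loopTermInSameWayWithLoopCnt} for the termination correspondence, strengthens the iteration-boundary invariant from agreement on $\text{TVar}(s)$ to agreement on $\text{OVar}(s)$ via the containment $\text{Imp}(S_1,\text{Imp}(io))\subseteq\text{Imp}(io)$ and Theorem~\ref{thm:equivCompMain}, and then discharges the four conclusion cases exactly as the paper does (predicate-crash versus body-divergence in the final iteration). The only cosmetic difference is in the terminating case, where the paper invokes Lemma~\ref{lmm:sameFinalValueX} on ${id}_{IO}$ directly rather than applying the bodies' $\equiv_O$ hypothesis iteration by iteration; both routes are sound given the boundary invariant.
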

\begin{proof}
We show that $s_1$ and $s_2$ terminate in the same way when started in states
$m_1(\mathfrak{f}_1, \text{loop}_c^1, \vals_1)$ and $m_2(\mathfrak{f}_2, \text{loop}_c^2, \vals_2)$ respectively,  $(s_1, m_1) \equiv_{H} (s_2, m_2)$. In addition, we show that $s_1$ and $s_2$ produce the same output sequence in every possibilities of termination in the same way, $(s_1, m_1) \equiv_{O} (s_2,m_2)$.

By definition, $s_1$ and $s_2$ satisfy the proof rule of termination in the same way because
\begin{itemize}
\item Loop bodies $S_1$ and $S_2$ satisfy the proof rule of termination in the same way;

By assumption.

\item Loop bodies $S_1$ and $S_2$ satisfy the proof rule of equivalent computation of those in the termination deciding variables of $s_1$ and $s_2$, $\forall x \in \text{TVar}(s_1) \cup \text{TVar}(s_2)\,:\, S_{1} \equiv_{x}^S S_{2}$;

By the definition of $\text{OVar}(s)$, $\text{TVar}_o(s_1) \subseteq \text{OVar}(s_1)$ and $\text{TVar}_o(s_2) \subseteq \text{OVar}(s_2)$.
By the definition of $\text{TVar}_o$, $\text{TVar}_o(s_1) = \text{TVar}(s_1)$ and $\text{TVar}_o(s_2) = \text{TVar}(s_2)$.
\end{itemize}
By Lemma~\ref{lmm:loopTermInSameWayWithLoopCnt}, we show $s_1$ and $s_2$ terminate in the same way when started in states
$m_1(\mathfrak{f}_1, \text{loop}_c^1, \vals_1)$ and $m_2(\mathfrak{f}_2, \text{loop}_c^2, \vals_2)$.
We need to show that all the required conditions are satisfied.
\begin{itemize}
\item Crash flags are not set, $\mathfrak{f}_1 = \mathfrak{f}_2 = 0$;

\item Loop counters of $s_1$ and $s_2$ are initially zero, $\text{loop}_c^1(n_1) = \text{loop}_c^2(n_2) = 0$;

\item $s_1$ and $s_2$ have same set of termination deciding variables, $\text{TVar}(s_1) = \text{TVar}(s_2) = \text{TVar}(s)$;

\item Value stores $\vals_1$ and $\vals_2$ agree on values of variables in $\text{TVar}(s_1) = \text{TVar}(s_2)$,
$\forall x \in \text{TVar}(s)\,:\, \vals_1(x) = \vals_2(x)$;

  The above four conditions are from assumption.

\item Loop bodies $S_1$ and $S_2$ terminate in the same way when started in states with crash flags not set and whose value stores agree on values of variables in $\text{TVar}(S_1) \cup \text{TVar}(S_2)$;

    By Theorem~\ref{thm:mainTermSameWayLocal}.
\end{itemize}
Therefore, by Lemma~\ref{lmm:loopTermInSameWayWithLoopCnt}, we have one of the followings holds:
\begin{enumerate}
\item $s_1$ and $s_2$ both terminate and the loop counters of $s_1$ and $s_2$ are less than a positive integer $i$ such that the loop counters of $s_1$ and $s_2$ are less than or equal to $i-1$:

$({s_1}, m_1) ->* (\text{skip}, m_1')$,
$({s_2}, m_2) ->* (\text{skip}, m_2')$.

We show that, when $s_1$ and $s_2$ terminate, value stores of $s_1$ and $s_2$ agree on the value of the I/O sequence variable
 by Lemma~\ref{lmm:sameFinalValueX}. We need to show all the required conditions hold.
\begin{itemize}
\item $\forall x \in \text{Imp}(io)\,:\, \vals_{1}(x) = \vals_{2}(x)$;

\item $\text{loop}_c^1(n_1) = \text{loop}_c^2(n_2) = 0$;

  The above two conditions are from assumption.

\item ${id}_{IO} \in \text{Def}(s_1) \cap \text{Def}(s_2)$;

Because there are output statements in $s_1$ and $s_2$. By the definition of $\text{Def}(\cdot)$, the I/O sequence variable is defined in $s_1$ and $s_2$.

\item $\text{Imp}(s_1, \{{id}_{IO}\}) = \text{Imp}(s_2, \{{id}_{IO}\}) = \text{Imp}(io)$;

By the definition of $\text{Imp}_o(\cdot)$, $\text{Imp}_o(s_1) = \text{Imp}(s_1, \{{id}_{IO}\}),\allowbreak
\text{Imp}_o(s_2) = \text{Imp}(s_2, \{{id}_{IO}\})$.

\item $\forall y \in \text{Imp}(io), \forall m_{S_1}(\vals_{S_1})\, m_{S_2}(\vals_{S_2}):$

$((\forall z\in \text{Imp}(S_1, \text{Imp}(io)) \cup \text{Imp}(S_2, \text{Imp}(io)), \vals_{S_1}(z) = \vals_{S_2}(z))
=> (S_1, m_{S_1}(\vals_{S_1})) \equiv_{y} (S_2, m_{S_2}(\vals_{S_2})))$.

By Theorem~\ref{thm:equivCompMain}.
\end{itemize}

In addition, by the semantic rules, the I/O sequence is appended at most by one value in one step.
Hence, $s_1$ and $s_2$ produce the same output sequence when started in states $m_1$ and $m_2$ respectively.

\item $s_1$ and $s_2$ both do not terminate, $\forall k> 0$,
$({s_1}, m_1)$ {\kStepArrow [k] } $({S_{1_k}}, m_{1_k})$,
$({s_2}, m_2)$ {\kStepArrow [k] } $({S_{2_k}}, m_{2_k})$
where $S_{1_k} \neq \text{skip}$, $S_{2_k} \neq \text{skip}$ and one of the followings holds:

\begin{enumerate}
\item{$\forall i>0$, there are two configurations $(s_1, m_{1_i})$ and $(s_2, m_{2_i})$ reachable from $(s_1, m_1)$ and $(s_2, m_2)$, respectively, in which both crash flags are not set, the loop counters of $s_1$ and $s_2$ are equal to $i$ and value stores agree on the values of variables in $\text{TVar}(s)$:}

 $\forall i>0\, \exists (s_1, m_{1_i})\, (s_2, m_{2_i})\,:\, ({s_1}, m_1) ->* ({s_1}, m_{1_i}(\mathfrak{f}_1, \,\allowbreak \text{loop}_c^{1_i}, \vals_{1_i})) \wedge
                                              ({s_2}, m_2) ->* ({s_2}, m_{2_i}(\mathfrak{f}_2, \text{loop}_c^{2_i}, \vals_{2_i}))$ where
\begin{itemize}
 \item $\mathfrak{f}_1 = \mathfrak{f}_2 = 0$; and

 \item $\text{loop}_c^{1_i}(n_1) = \text{loop}_c^{2_i}(n_2) = i$; and

 \item $\forall x \in \text{TVar}(s)\,:\, \vals_{1_i}(x)$ = $\vals_{2_i}(x)$.

 \item $\forall m_1'\,:\, ({s_1}, m_1) ->* (S_1', m_1'(\text{loop}_c^{1'})) ->* ({s_1}, m_{1_i}(\text{loop}_c^{1_i},\allowbreak \vals_{1_i})), \; \text{loop}_c^{1'}(n_1) \leq i$; and

 \item $\forall m_2'\,:\, ({s_2}, m_2) ->* (S_2', m_2'(\text{loop}_c^{2'})) ->* ({s_2}, m_{2_i}(\text{loop}_c^{2_i},\allowbreak \vals_{2_i})), \; \text{loop}_c^{2'}(n_2) \leq i$;
\end{itemize}

We show that, for any positive integer $i$, value stores $\vals_{1_i}$ and $\vals_{2_i}$ agree on values of variables in $\text{Imp}(io)$ by the proof of Lemma~\ref{lmm:equivTermCompSameLoopIteration}.
We need to show that all the required conditions are satisfied.
\begin{itemize}
\item $\forall x \in \text{Imp}(io)\,:\, \vals_{1}(x) = \vals_{2}(x)$;

\item $\text{loop}_c^1(n_1) = \text{loop}_c^2(n_2) = 0$;

The above two conditions are by assumption.

\item ${id}_{IO} \in \text{Def}(s_1) \cap \text{Def}(s_2)$;


\item $\text{Imp}(s_1, \{{id}_{IO}\}) = \text{Imp}(s_2, \{{id}_{IO}\}) = \text{Imp}(io)$;

The above two conditions are obtained by similar argument in the case that $s_1$ and $s_2$ both terminate.

\item $\forall y \in \text{Imp}(io), \forall m_{S_1}(\vals_{S_1})\, m_{S_2}(\vals_{S_2}):$

$((\forall z\in \text{Imp}(S_1, \text{Imp}(io)) \cup \text{Imp}(S_2, \text{Imp}(io)), \vals_{S_1}(z) = \vals_{S_2}(z))
=>$

$(S_1, m_{S_1}(\vals_{S_1})) \equiv_{y} (S_2, m_{S_2}(\vals_{S_2})))$.

By Theorem~\ref{thm:equivCompMain}.
\end{itemize}
We cannot apply Lemma~\ref{lmm:equivTermCompSameLoopIteration} directly because $s_1$ and $s_2$ do not terminate. But we can still have the proof closely similar to that of Lemma~\ref{lmm:equivTermCompSameLoopIteration}
by using the fact that there exists a configuration of arbitrarily large loop counters of $s_1$ and $s_2$ and in which crash flags are not set.

Then, $\forall i>0, \forall x \in \text{Imp}(io)\,:\, \vals_{1_i}(x) = \vals_{2_i}(x)$.
In addition, by the semantic rules, the I/O sequence is appended at most by one value in one step.
The lemma holds.

\item{The loop counters for $s_1$ and $s_2$ are less than a smallest positive integer $i$ and all of the followings hold:}

\begin{itemize}
\item $\exists i>0\,\forall m_1',m_2' \,:\,       ({s_1}, m_1) ->* (S_{1}', m_1'(\text{loop}_c^{1'})),
                                     ({s_2},\allowbreak m_2) ->* (S_{2}', m_2'(\text{loop}_c^{2'}))$ where
      $\text{loop}_c^{1'}(n_1) < i$,

      \noindent$\text{loop}_c^{2'}(n_2) < i$;

\item{$\forall 0 < j < i$, there are two configurations $(s_1, m_{1_j})$ and $(s_2, m_{2_j})$ reachable from $(s_1, m_1)$ and $(s_2, m_2)$, respectively, in which both crash flags are not set, the loop counters of $s_1$ and $s_2$ are equal to $j$ and value stores agree on the values of variables in $\text{TVar}(s)$:}

 $\exists (s_1, m_{1_j}), (s_2, m_{2_j})\,:\, ({s_1}, m_1) ->* ({s_1}, m_{1_j}(\mathfrak{f}_1, \text{loop}_c^{1_j},\allowbreak \vals_{1_j})) \wedge
                                              ({s_2}, m_2) ->* ({s_2}, m_{2_j}(\mathfrak{f}_2, \text{loop}_c^{2_j}, \vals_{2_j}))$ where
\begin{itemize}
 \item $\mathfrak{f}_1 = \mathfrak{f}_2 = 0$; and

 \item $\text{loop}_c^{1_j}(n_1) = \text{loop}_c^{2_j}(n_2) = j$; and

 \item $\forall x \in \text{TVar}(s)\,:\, \vals_{1_j}(x)$ = $\vals_{2_j}(x)$.

%
\end{itemize}
\end{itemize}

This case corresponds to the situation that the $i$th evaluations of the predicate expression of $s_1$ and $s_2$ raise an exception. There are two possibilities regarding the value of $i$.
\begin{enumerate}
\item
$i=1$.

 $s_1$ and $s_2$ raise an exception in the 1st evaluation of their predicate expression because loop counters of $s_1$ and $s_2$ are less than 1.
 In the proof of Lemma~\ref{lmm:loopTermInSameWayWithLoopCnt}, value stores $\vals_1$ and $\vals_2$ in states $m_1$ and $m_2$ respectively are not modified in the 1st evaluation of the predicate expression of $s_1$ and $s_2$.
 In addition, value stores $\vals_1$ and $\vals_2$ are not modified after $s_1$ and $s_2$ both crash according to the rule Crash.
We have the corresponding initial state in which value stores $\vals_1$ and $\vals_2$ agree on values of variables in $\text{OVar}(s)$. Thus, $\vals_1({id}_{IO}) = \vals_2({id}_{IO})$. The lemma holds.

\item $i>1$.

We show that, for any positive integer $0<j<i$, value stores $\vals_{1_j}$ and $\vals_{2_j}$ agree on values of variables in $\text{Imp}(s)$ by the proof of Lemma~\ref{lmm:equivTermCompSameLoopIteration}.
We need to show that all the required conditions are satisfied.
\begin{itemize}
\item $\forall x \in \text{Imp}(io)\,:\, \vals_{1}(x) = \vals_{2}(x)$;

\item $\text{loop}_c^1(n_1) = \text{loop}_c^2(n_2) = 0$;

The above two conditions are from assumption.

\item ${id}_{IO} \in \text{Def}(s_1) \cap \text{Def}(s_2)$;

\item $\text{Imp}(s_1, \{{id}_{IO}\}) = \text{Imp}(s_2, \{{id}_{IO}\}) = \text{Imp}(io)$;

The above two conditions are obtained by the same argument in the case that $s_1$ and $s_2$ terminate.

\item $\forall y \in \text{Imp}(io), \forall m_{S_1}(\vals_{S_1})\, m_{S_2}(\vals_{S_2}):$

$((\forall z\in \text{Imp}(S_1, \text{Imp}(io)) \cup \text{Imp}(S_2, \text{Imp}(io)), \vals_{S_1}(z) = \vals_{S_2}(z))
=>$

$(S_1, m_{S_1}(\vals_{S_1})) \equiv_{y} (S_2, m_{S_2}(\vals_{S_2})))$.

By Theorem~\ref{thm:equivCompMain}.
\end{itemize}
We cannot apply Lemma~\ref{lmm:equivTermCompSameLoopIteration} directly because $s_1$ and $s_2$ do not terminate. But we can still have the proof closely similar to that of Lemma~\ref{lmm:equivTermCompSameLoopIteration}
by using the fact that there are reachable configurations $(s_1, m_{1_{i-1}})$ and $(s_2, m_{2_{i-1}})$ with the loop counters of $s_1$ and $s_2$ of value $i-1$ and crash flags not set.

 By assumption, there is configuration $(s_1, m_{1_{i-1}}(\mathfrak{f}_1,\,\,\allowbreak \text{loop}_c^{1_{i-1}}, \vals_{1_{i-1}}))$ reachable from $(s_1, m_1)$ in which the loop counter of $s_1$ is $i-1$ and the crash flag is not set;
there is also a configuration $(s_2, m_{2_{i-1}}(\mathfrak{f}_2, \text{loop}_c^{2_{i-1}}, \vals_{2_{i-1}}))$ of $s_2$ reachable from $(s_2, m_2)$ in which the loop counter is $i-1$ and the crash flag is not set.
In addition, value stores $\vals_{1_{i-1}}$ and $\vals_{2_{i-1}}$ agree on values of variables in
$\text{Imp}(io)$.
In the proof of Lemma~\ref{lmm:loopTermInSameWayIntermediate}, the $i$th evaluations of the predicate expression of $s_1$ and $s_2$ must raise an exception because loop counters of $s_1$ and $s_2$ are less than $i$.
Then the I/O sequence is not redefined in any state reachable from
$(s_1, m_{1_{i-1}}(\mathfrak{f}_1, \text{loop}_c^{1_{i-1}}, \vals_{1_{i-1}}))$ and
$(s_2, m_{2_{i-1}}(\mathfrak{f}_2, \text{loop}_c^{2_{i-1}}, \vals_{2_{i-1}}))$ respectively.
In addition, by the semantic rules, the I/O sequence is appended at most by one value in one step.
The lemma holds.
\end{enumerate}

\item{The loop counters for $s_1$ and $s_2$ are less than or equal to a smallest positive integer $i$ and all of the followings hold:}

\begin{itemize}
\item $\exists i>0\,\forall m_1',m_2' \,:\,       ({s_1}, m_1) ->* (S_{1}', m_1'(\text{loop}_c^{1'})),
                                     ({s_2}, m_2) ->* (S_{2}', m_2'(\text{loop}_c^{2'}))$ where
      $\text{loop}_c^{1'}(n_1) \leq i$,

      \noindent$\text{loop}_c^{2'}(n_2) \leq i$;

\item There are no configurations $(s_1, m_{1_i})$ and $(s_2, m_{2_i})$ reachable from $(s_1, m_1)$ and $(s_2, m_2)$, respectively, in which  crash flags are not set, the loop counters of $s_1$ and $s_2$ are equal to $i$, and value stores agree on values of variables in $\text{TVar}(s)$:

$\nexists (s_1, m_{1_i}), (s_2, m_{2_i})\,:\, ({s_1}, m_1) ->* ({s_1}, m_{1_i}(\mathfrak{f}_1, \text{loop}_c^{1_i}, \vals_{1_i})) \wedge
       ({s_2}, m_2) ->* ({s_2}, m_{2_i}(\mathfrak{f}_2, \text{loop}_c^{2_i}, \vals_{2_i}))$ where
      \begin{itemize}
        \item $\mathfrak{f}_1 = \mathfrak{f}_2 = 0$; and

        \item $\text{loop}_c^{1_i}(n_1) = \text{loop}_c^{2_i}(n_2) = i$; and

        \item $\forall x \in \text{TVar}(s)\,:\, \vals_{1_i}(x)$ = $\vals_{2_i}(x)$.
      \end{itemize}

\item{$\forall 0 < j < i$, there are two configurations $(s_1, m_{1_j})$ and $(s_2, m_{2_j})$ reachable from $(s_1, m_1)$ and $(s_2, m_2)$, respectively, in which both crash flags are not set, the loop counters of $s_1$ and $s_2$ are equal to $j$ and value stores agree on values of variables in $\text{TVar}(s)$:}

 $\exists (s_1, m_{1_j}), (s_2, m_{2_j})\,:\, ({s_1}, m_1) ->* ({s_1}, m_{1_j}(\mathfrak{f}_1, \text{loop}_c^{1_j}, \vals_{1_j})) \wedge
                                              ({s_2}, m_2) ->* ({s_2}, m_{2_j}(\mathfrak{f}_2, \text{loop}_c^{2_j}, \vals_{2_j}))$ where
\begin{itemize}
 \item $\mathfrak{f}_1 = \mathfrak{f}_2 = 0$; and

 \item $\text{loop}_c^{1_j}(n_1) = \text{loop}_c^{2_j}(n_2) = j$; and

 \item $\forall x \in \text{TVar}(s)\,:\, \vals_{1_j}(x)$ = $\vals_{2_j}(x)$.

%
\end{itemize}
\end{itemize}

This case corresponds to the situation that, the $i$th evaluation of the predicate expression of $s_1$ and $s_2$ produce same nonzero integer value and loop bodies $S_1$ and $S_2$ do not terminate after the $i$th evaluation of the predicate expression of $s_1$ and $s_2$.
There are two possibilities regarding the value of $i$.
\begin{enumerate}
\item $i=1$.

By assumption, we have the initial value stores $\vals_1$ and $\vals_2$ agree on values of variables in $\text{OVar}(s)$.
In the proof of Lemma~\ref{lmm:loopTermInSameWayIntermediate}, the execution of $s_1$ proceeds as follows:

\begin{tabbing}
xx\=xx\=\kill
\>\>     $({s_1}, m_{1}(\text{loop}_c^{1},\vals_{1}))$\\
\>= \>   $(\text{while}_{\langle n_1\rangle} (e) \; \{S_1\}, {m_{1}(\text{loop}_c^{1}, \vals_{1})})$\\
\>$->$\> $(\text{while}_{\langle n_1\rangle} (v) \; \{S_1\}, {m_{1}(\text{loop}_c^{1}, \vals_{1})})$ by the EEval rule\\
\>$->$\> $(S_1;\text{while}_{\langle n_1\rangle} (e) \; \{S_1\}, m_{1}(\text{loop}_c^{1}[1/n_1], \vals_{1}))$\\
\>\>     by the Wh-T rule.
\end{tabbing}

The execution of $s_2$ proceeds to $(S_2;\text{while}_{\langle n_2\rangle} (e) \; \{S_2\},\allowbreak m_{2}(\text{loop}_c^{2}[1/n_2], \vals_{2}))$.
Then the execution of $S_1$ and $S_2$ do not terminate when started in states
$m_{1}(\text{loop}_c^{1}[1/n_1], \vals_{1})$ and $m_{2}(\text{loop}_c^{2}[1/n_2], \vals_{2})$.
By assumption, value stores $\vals_1$ and $\vals_2$ agree on values of the out-deciding variables of $s_1$ and $s_2$,
$\forall x \in \text{OVar}(s_1) \cup \text{OVar}(s_2)\,:\, \vals_1(x) = \vals_2(x)$.
By definition,
$\text{Imp}(S_1, \{{id}_{IO}\}) \subseteq \text{Imp}(s_1, \{{id}_{IO}\}),
 \text{CVar}(S_1) \subseteq \text{CVar}(s_1)$ and
$\text{LVar}(S_1) \subseteq \text{LVar}(s_1)$.
Thus, $\text{TVar}(S_1) \subseteq \text{TVar}(s_1)$.
By Lemma~\ref{lmm:TVarOIsSubsetOfTVar}, $\text{TVar}_o(S_1) \subseteq \text{TVar}(S_1)$.
By Lemma~\ref{lmm:ImpOVarIsSubsetOfImpIO}, $\text{Imp}_o(S_1) \subseteq \text{Imp}(S_1, {id}_{IO})$.
In conclusion, $\text{OVar}(S_1) \subseteq \text{OVar}(s_1)$. Similarly, $\text{OVar}(S_2) \subseteq \text{OVar}(s_2)$.
Thus, $\forall x \in \text{OVar}(S_1) \cup \text{OVar}(S_2)\,:\, \vals_1(x) = \vals_2(x)$.
Then executions of $S_1$ and $S_2$ when started from states

\noindent$m_{1}(\text{loop}_c^{1}[1/n_1], \vals_{1})$ and $m_{2}(\text{loop}_c^{2}[1/n_2], \vals_{2})$ produce the same output sequence:

\noindent$(S_1, m_{1}(\text{loop}_c^{1}[1/n_1], \vals_{1})) \equiv_{O} (S_2, m_{2}(\text{loop}_c^{2}[1/n_2], \vals_{2}))$.
 In addition, by the semantic rules, the I/O sequence
is appended at most by one value in one step.
The lemma holds.

\item $i>1$.

We show that, for any positive integer $0<j<i$, value stores $\vals_{1_j}$ and $\vals_{2_j}$ agree on values of variables in $\text{Imp}(io)$ by the proof of Lemma~\ref{lmm:equivTermCompSameLoopIteration}.
We need to show that all the required conditions are satisfied.
\begin{itemize}
\item $\forall x \in \text{Imp}(io)\,:\, \vals_{1}(x) = \vals_{2}(x)$;

\item $\text{loop}_c^1(n_1) = \text{loop}_c^2(n_2) = 0$;

The above two conditions are from assumption.

\item ${id}_{IO} \in \text{Def}(s_1) \cap \text{Def}(s_2)$;

\item $\text{Imp}(s_1, \{{id}_{IO}\}) = \text{Imp}(s_2, \{{id}_{IO}\}) = \text{Imp}(io)$;

The above two conditions are obtained by the same argument in the case that $s_1$ and $s_2$ terminate.

\item $\forall y \in \text{Imp}(io), \forall m_{S_1}(\vals_{S_1})\, m_{S_2}(\vals_{S_2}):$

$((\forall z\in \text{Imp}(S_1, \text{Imp}(io)) \cup \text{Imp}(S_2, \text{Imp}(io)), \vals_{S_1}(z) = \vals_{S_2}(z))
=>$

$(S_1, m_{S_1}(\vals_{S_1})) \equiv_{y} (S_2, m_{S_2}(\vals_{S_2})))$.

By Theorem~\ref{thm:equivCompMain}.
\end{itemize}
We cannot apply Lemma~\ref{lmm:equivTermCompSameLoopIteration} directly because $s_1$ and $s_2$ do not terminate. But we can still have the proof closely similar to that of Lemma~\ref{lmm:equivTermCompSameLoopIteration}.
The reason is that there are reachable configurations $(S_1;s_1, m_{1}')$ and $(S_2;s_2, m_{2}')$ with loop counters of $s_1$ and $s_2$ of value $i$ and crash flags not set.
This is by Lemma~\ref{lmm:ExistenceOfIthIterStart} because there are configurations reachable from $(s_1, m_1)$ and $(s_2, m_2)$
respectively with loop counters of $s_1$ and $s_2$ of $i$.

There are configurations $(s_1, m_{1_{i-1}})$ reachable from $(s_1, m_1)$ and $(s_2, m_{2_{i-1}})$ reachable from $(s_2, m_2)$ in which loop counters of $s_1$ and $s_2$ are $i-1$ and crash flags are not set and value stores agree on values of variables in $\text{Imp}(io)$.
Because loop counters of $s_1$ and $s_2$ are less than or equal to $i$. Then the execution of $s_1$ proceeds as follows:

\begin{tabbing}
xx\=xx\=\kill
\>\>     $({s_1}, m_{1_{i-1}}(\text{loop}_c^{1_{i-1}},\vals_{1_{i-1}}))$\\
\>= \>   $(\text{while}_{\langle n_1\rangle} (e) \; \{S_1\}, {m_{1_{i-1}}(\text{loop}_c^{1_{i-1}}, \vals_{1_{i-1}})})$\\
\>$->$\> $(\text{while}_{\langle n_1\rangle} (v) \; \{S_1\}, {m_{1_{i-1}}(\text{loop}_c^{1_{i-1}}, \vals_{1_{i-1}})})$\\
\>\>     by the EEval rule\\
\>$->$\> $(S_1;\text{while}_{\langle n_1\rangle} (e) \; \{S_1\}, m_{1_{i-1}}(\text{loop}_c^{1_{i-1}}[i/n_1],$\\
\>\>     $\vals_{1_{i-1}}))$ by the Wh-T rule.
\end{tabbing}

The execution of $s_2$ proceeds to
$(S_2;\text{while}_{\langle n_2\rangle} (e) \; \{S_2\},\allowbreak m_{2_{i-1}}(\text{loop}_c^{2_{i-1}}[i/n_2], \vals_{2_{i-1}}))$.
By similar argument in the case $i=1$, $S_2$ and $S_1$ produce the same output sequence when started in states
$m_{1_{i-1}}(\text{loop}_c^{1_{i-1}}[i/n_1], \vals_{1_{i-1}})$ and $m_{2_{i-1}}(\text{loop}_c^{2_{i-1}}[i/n_2], \vals_{2_{i-1}})$ respectively.
In addition, by the semantic rules, the I/O sequence is appended at most by one value in one step.
The lemma holds.
\end{enumerate}

\end{enumerate}
\end{enumerate}
\end{proof}

\begin{corollary}\label{coro:loopSameIOSeq}
Let $s_1 = ``\text{while}_{\langle n_1\rangle}(e) \; \{S_{1}\}"$ and
    $s_2 = ``\text{while}_{\langle n_2\rangle}(e) \; \{S_{2}\}"$ be two while statements such that all of the followings hold
\begin{itemize}
\item There are output statements in $s_1$ and $s_2$, $\exists e_1\, e_2\,:\, (``\text{output }e_1" \in s_1)\wedge (``\text{output }e_2" \in s_2)$;

\item $s_1$ and $s_2$ have same set of termination deciding variables and same set of imported variables relative to the I/O sequence variable, $(\text{TVar}(s_1) = \text{TVar}(s_2) = \text{TVar}(s)) \wedge
                        (\text{Imp}(s_1, \{{id}_{IO}\}) = \text{ Imp}(s_2, \{{id}_{IO}\}) = \text{Imp}(io))$;

\item Loop bodies $S_1$ and $S_2$ satisfy the proof rule of equivalent computation of those in out-deciding variables of $s_1$ and $s_2$, $\forall x \in \text{OVar}(s) = \text{TVar}(s) \cup \text{Imp}(io)\,:\, S_{1} \equiv_{x}^S S_{2}$;

\item Loop bodies $S_1$ and $S_2$ satisfy the proof rule of termination in the same way, $S_{1} \equiv_{H}^S S_{2}$;

\item Loop bodies $S_1$ and $S_2$ produce the same output sequence when started in states with crash flags not set and agreeing on values of variables in $\text{OVar}(S_{1}) \cup \text{OVar}(S_{2})$,
    $\forall m_{S_{1}}(\mathfrak{f}_1, \vals_{S_{1}})\, m_{S_{2}}(\mathfrak{f}_2,\allowbreak \vals_{S_{2}}):$

    $({(\mathfrak{f}_1 = \mathfrak{f}_2 = 0)} \wedge
     (\forall x \in \text{OVar}(S_{1}) \cup \text{OVar}(S_{2})\,:\, \vals_{S_{1}}(x) =\,\,\,\allowbreak \vals_{S_{2}}(x))) =>$
    ${((S_{1}, m_{S_{1}}(\mathfrak{f}_1, \vals_{S_{1}})) \equiv_{O} (S_{2}, m_{S_{2}}(\mathfrak{f}_2, \vals_{S_{2}})))}$.
\end{itemize}

If $s_1$ and $s_2$ start in states $\state_1(\mathfrak{f}_1, \text{loop}_c^1, \vals_1), \state_2(\mathfrak{f}_2, \text{loop}_c^2, \vals_2)$ respectively with crash flags not set $\mathfrak{f}_1 = \mathfrak{f}_2 = 0$ and
in which $s_1$ and $s_2$ have not started execution ($\text{loop}_c^1(n_1) = \text{loop}_c^2(n_2) = 0$),
  value stores $\vals_1$ and $\vals_2$ agree on values of variables in $\text{OVar}(s)$,
  $\forall x \in \text{OVar}(s)\,:\, \vals_1(x) = \vals_2(x)$, then $s_1$ and $s_2$ produce the same output sequence:
$(s_1, m_1) \equiv_{O} (s_2, m_2)$.
\end{corollary}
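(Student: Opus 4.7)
The plan is to derive the corollary directly from Lemma~\ref{lemma:loopSameIOSeq}, whose hypotheses are almost literally those of the corollary once we match up the two vocabularies. Because both $s_1$ and $s_2$ contain output statements, clause~(4) of Definition~\ref{def:ImpVarO} gives $\text{Imp}_o(s_i) = \text{Imp}(s_i, \{{id}_{IO}\})$ and the parallel clause of Definition~\ref{def:TVarO} gives $\text{TVar}_o(s_i) = \text{TVar}(s_i)$, so $\text{OVar}(s_i)$ as used in the corollary matches the set $\text{TVar}(s) \cup \text{Imp}(io)$ the lemma is formulated with; the remaining hypotheses about equivalent computation, termination in the same way, and same-output-sequence of the loop bodies transfer verbatim. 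With the lemma's conclusion in hand, all that remains is, for every reachable state of one execution, to exhibit a reachable state of the other with a matching output prefix, which is precisely what $\equiv_O$ requires.

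First I would dispatch the terminating outcome of the lemma. It yields $\vals_1'({id}_{IO}) = \vals_2'({id}_{IO})$, hence $\text{Out}(\vals_1') = \text{Out}(\vals_2')$. Since the I/O sequence is only ever extended (never truncated) by rules Out-1, Out-2, and In-1 through In-5, every intermediate state of $s_1$ carries an output prefix of $\vals_1'({id}_{IO})$, and the terminal configuration of $s_2$ serves as its matching witness; symmetry handles the other direction.

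Next I would handle the three non-terminating subcases of the lemma in turn. When, for every $i>0$, matching reachable configurations $(s_1,m_{1_i})$ and $(s_2,m_{2_i})$ exist with value stores agreeing on $\text{OVar}(s)$ and hence on $\{{id}_{IO}\}$, any intermediate state of $s_1$ lies in some iteration $j$ whose full output is captured by $m_{1_{j+1}}$ and therefore matches $m_{2_{j+1}}$. When loop counters are strictly bounded by $i$ because the $i$th predicate evaluation crashes, the lemma guarantees that from $(s_1,m_{1_{i-1}})$, $(s_2,m_{2_{i-1}})$ (or from $m_1$, $m_2$ when $i=1$) the I/O sequence is never again modified, so the output prefixes already agree and the Crash rule freezes them thereafter. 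Finally, when loop counters are bounded by $i$ but the $i$th iteration of the loop body diverges, the lemma provides matching configurations at iteration $i{-}1$ in which value stores agree on $\text{OVar}(s) \supseteq \text{OVar}(S_1) \cup \text{OVar}(S_2)$; one step of Wh-T on each side lands in $(S_1;s_1, \cdot)$ and $(S_2;s_2, \cdot)$ with agreement preserved, whereupon the hypothesis that $S_1$ and $S_2$ produce the same output sequence (transferred via Lemma~\ref{lmm:multiStepSeqExec} to $S_1;s_1$ and $S_2;s_2$) supplies the required matches inside the divergent iteration.

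The main obstacle will be the symmetric existential bookkeeping in the definition of $\equiv_O$: for each reachable state on one side a matching reachable state on the other must be produced, in both directions. The inclusions $\text{OVar}(S_i) \subseteq \text{OVar}(s_i)$ (from the definitions of $\text{Imp}_o$, $\text{TVar}_o$, and $\text{LVar}$, together with Lemmas~\ref{lmm:TVarOIsSubsetOfTVar} and~\ref{lmm:ImpOVarIsSubsetOfImpIO}) are what let us invoke the loop-body hypothesis at each iteration, and monotonicity of the I/O sequence together with the freezing effect of the Crash rule ensure that a match achieved at one pair of intermediate configurations propagates to all longer prefixes; these two observations are the glue that reduces the symmetric existential statement to the per-iteration agreement delivered by Lemma~\ref{lemma:loopSameIOSeq}.
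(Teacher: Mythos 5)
Your proposal is correct and takes the same route as the paper: the paper's entire proof of this corollary is the single line ``This is from lemma~\ref{lemma:loopSameIOSeq}.'' Your additional work --- matching $\text{OVar}(s)$ with $\text{TVar}(s)\cup\text{Imp}(io)$ via the output-statement clauses of Definitions~\ref{def:ImpVarO} and~\ref{def:TVarO}, and then checking case by case that each of the lemma's four possible outcomes (termination with equal final I/O, per-iteration agreement, predicate crash, divergent body) yields the symmetric matching of output prefixes demanded by $\equiv_O$ --- is exactly the bookkeeping the paper leaves implicit, and it is sound.
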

This is from lemma~\ref{lemma:loopSameIOSeq}.

%
%

\subsection{Backward compatible DSU based on program equivalence}
Based on the equivalence result above, we show that there exists backward compatible DSU.
We need to show there exists a mapping of old program configurations and new program configurations and the hybrid execution obtained from the configuration mapping is backward compatible.
We do not provide a practical algorithm to calculate the state mapping. Instead we only show that there exists new program configurations corresponding to some old program configurations via a simulation. The treatment in this section is informal.

The idea is to map a configuration just before an output is produced to a corresponding configuration.
Based on the proof rule of same output sequences, not every statement of the old
program can correspond to  a statement of the new program, but every output statemet of the old
program should correspond to an output statement of the new program.
Consider configuration $C_1$ of the old program where the leftmost statement (next statement to execute) is an output statement.
We can define a corresponding statement of the new program by {\em simulating} the execution of the new program on
the input consumed so far in $C_1$.
There are two cases. When the leftmost statement in $C_1$ is not included in a loop statement,
then it is easy to know when to stop simulation.
Otherwise, we have the bijection of loop statements including output statements based on the condition of same output sequences. Therefore, it is easy to know how many iterations of the loop statements including the output statement shall be carried out based on the loop counters in the old program configuration $C_1$.
Based on Theorem~\ref{thm:sameIOtheoremFuncWide}, there must be a configuration $C_2$ corresponding to $C_1$.
Moreover, the executions starting from configurations $C_1$ and $C_2$ produce the same output sequence based on
Theorem~\ref{thm:sameIOtheoremFuncWide}. In conclusion, we obtain a backward compatible hybrid execution where the state mapping is from $C_1$ to $C_2$.

\section{Real world backward compatible update classes: proof rules}\label{sec:realupdateclasses}
We propose our formal treatment for real world update classes. For each update class, we show how the old program and new program produce the same I/O sequence which guarantees backward compatible DSU.

\subsection{Proof rule for specializing new configuration variables}
New configuration variables can be introduced to generalize functionality.
\begin{figure}
\begin{small}
\begin{tabbing}
xxxxxx\=xxx\=xxx\=xxx\=xxxxxxxxxxxxxxxx\=xxxx\=xxx\=xxxxx\= \kill
\>1: \>\>\>                                           \>1': \> {\bf If} $(b)$  {\bf then}       \> \\
\>2: \>\>\>                                           \>2': \> \> \text{output} $a * 2$         \> \\
\>3: \>\>\>                                           \>3': \> {\bf else}                        \> \\
\>4: \> \text{output} $a + 2$ \>\>                             \>4': \> \> \text{output} $a + 2$    \> \\
\>\\
\> \> old\>\>                                         \>  \> new \>
\end{tabbing}
\end{small}
\caption{Specializing new configuration variables}\label{fig:newParamExample}
\end{figure}
Figure~\ref{fig:newParamExample} shows an example of how a new configuration variable introduces new code.
The two statement sequences in Figure~\ref{fig:newParamExample} are equivalent when the new variable $b$ is specialized to 0.

Our generalized formal definition of ``specializing new configuration variables" is defined as follows.
\begin{definition}\label{def:localAdditionalParam}
{\bf (Specializing new configuration variables)}
A statement sequence $S_2$ includes updates of specializing new configuration variables compared with $S_1$ w.r.t a mapping $\rho$ of new configuration variables in $S_2$,
$\rho : \{id\} \mapsto \{0, 1\}$, denoted $S_2\, {\approx}_{\rho}^S \, S_1$, iff one of the following holds:
\begin{enumerate}
\item $S_2 = ``\text{If}(id) \, \text{then} \{S_2^t\}\, \text{else} \{S_2^f\}"$
where one of the following holds:
\begin{enumerate}
\item $(\rho(id) = 0) \wedge (S_2^f\, {\approx}_{\rho}^S \, S_1)$;

\item $(\rho(id) = 1) \wedge (S_2^t \, {\approx}_{\rho}^S \, S_1)$;
\end{enumerate}

\item $S_1$ and $S_2$ produce the same output sequence, $S_1 \approx_{O}^S S_2$;

\item
$S_1 = ``\text{If}(e) \, \text{then}\{S_1^t\} \, \text{else} \{S_1^f\}"$,
$S_2 = ``\text{If}(e) \, \text{then}\{S_2^t\} \, \text{else} \{S_2^f\}"$ where
$(S_2^t\, {\approx}_{\rho}^S\, S_1^t) \, \wedge (S_2^f\, {\approx}_{\rho}^S\, S_1^f)$;

\item
$S_1 = ``\text{while}_{\langle n_1\rangle}(e) \, \{S_1'\}"$,
$S_2 = ``\text{while}_{\langle n_2\rangle}(e) \, \{S_2'\}"$ where

\noindent$S_2'\, {\approx}_{\rho}^S\, S_1'$;

\item $S_1 = S_1';s_1$ and $S_2 = S_2';s_2$ where
\noindent$(S_2'\, \approx_{\rho}^S\, S_1') \, \wedge \,
          (S_2'\, \approx_{H}^S\, S_1')  \, \wedge \,
          \big(\forall x \in \text{Imp}(s_1, {id}_{IO})\cup\text{Imp}(s_1, {id}_{IO})\,:\, (S_2'\, \approx_{x}^S\, S_1')\big)  \, \wedge \,
          (s_2\, \approx_{\rho}^S\, s_1)$.
\end{enumerate}
\end{definition}

Then we show that executions of two statement sequences produce the same I/O sequence if there are updates of specializing new configuration variables between the two.
\begin{lemma}\label{lmm:additionalParamsFuncWide}
Let $S_1$ and $S_2$ be two different statement sequences where there are updates of ``specializing new configuration variables" in $S_2$ compared with $S_1$ w.r.t a mapping of new configuration variables $\rho$, $S_2\, {\approx}_{\rho}^S\, S_1$.
If executions of $S_2$ and $S_1$ start in states
$m_2(\mathfrak{f}_2, {\vals}_2)$ and $m_1(\mathfrak{f}_1, {\vals}_1)$ respectively where all of the following hold:
\begin{itemize}
\item Crash flags $\mathfrak{f}_2, \mathfrak{f}_1$ are not set, $\mathfrak{f}_2 = \mathfrak{f}_1 = 0$;

\item Value stores $\vals_1$ and $\vals_2$ agree on output deciding variables in both $S_1$ and $S_2$ including the input and I/O sequence variable,

\noindent$\forall id \in (\text{OVar}(S_1) \cap \text{OVar}(S_2)) \cup \{{id}_I, {id}_{IO}\}$\,:\,
\noindent$\vals_1(id) = \vals_2(id)$;

\item Values of new configuration variables in the value store $\vals_2$ are matching those in $\rho$, $\forall id \in \text{Dom}(\rho)\,:\, \rho(id) = \vals_2(id)$;

\item Values of new configuration variables are not defined in the statement sequence $S_2$,
$\text{Dom}(\rho) \cap \text{Def}(S_2) = \emptyset$;
\end{itemize}
then $S_2$ and $S_1$ satisfy all of the following:
\begin{itemize}
\item $(S_1, m_1) \equiv_{H} (S_2, m_2)$;

\item $(S_1, m_1) \equiv_{O} (S_2, m_2)$;

\item $\forall x \in \{{id}_I,  {id}_{IO}\}$\,:\,
\noindent$(S_1, m_1) \equiv_{x} (S_2, m_2)$;
\end{itemize}
\end{lemma}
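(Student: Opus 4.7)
The plan is to induct on the derivation of $S_2 \approx_{\rho}^S S_1$, using the five cases of Definition~\ref{def:localAdditionalParam} as the case split. The simplest case is case~2, where $S_1 \equiv_{O}^S S_2$ already holds; here all three conclusions are delivered by the previously proven theorems. Termination in the same way comes from Theorem~\ref{thm:mainTermSameWayLocal} (since the behavioral equivalence proof rule presupposes the termination rule), behavioral equivalence comes from Theorem~\ref{thm:sameIOtheoremFuncWide}, and equivalent computation of ${id}_{I}$ and ${id}_{IO}$ comes from Lemma~\ref{lmm:TermInSameWayImpliesInputSeqEquivComp} combined with Theorem~\ref{thm:equivCompMain}, where Lemma~\ref{lmm:IOseqInImpOfAnyStmtWithOutStmt} ensures ${id}_{IO} \in \text{Imp}_o(S_i)$.

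For case~1, in which $S_2 = \text{If}(id)\{S_2^t\}\{S_2^f\}$ with $id$ a new configuration variable and $\rho(id) \in \{0, 1\}$, the first two steps of $S_2$ deterministically reach the surviving branch, since $\vals_2(id) = \rho(id)$ by hypothesis, and these two steps modify neither $\vals_2$ nor the I/O state. The induction hypothesis applied to the surviving branch and $S_1$ yields the three conclusions relative to the branch, which I then lift back to $S_2$ by observing that a prefix of internal non-IO steps preserves termination, output sequence, and input consumption. Cases~3 and~4 (matching If and while) proceed by first showing $\text{Use}(e) \subseteq \text{OVar}(S_1) \cap \text{OVar}(S_2)$ so that the shared predicate evaluates identically in $\vals_1$ and $\vals_2$, then applying the induction hypothesis to the matching sub-statements. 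For the while case this per-iteration step drives an outer induction on the loop counter patterned on Lemma~\ref{lmm:equivTermCompSameLoopIteration} and Corollary~\ref{coro:loopSameIOSeq}.

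Case~5 (sequences $S_1';s_1$ versus $S_2';s_2$) carries the most bookkeeping. First apply the induction hypothesis to $S_1'$ and $S_2'$ to obtain same termination behavior, same output sequence, and equivalent computation of $\{{id}_I, {id}_{IO}\}$ for the prefixes. When both prefixes terminate, the auxiliary premises $S_2' \equiv_H^S S_1'$ and $S_2' \equiv_x^S S_1'$ for each $x$ in the imported set, combined with Theorem~\ref{thm:equivCompMain}, promote the post-prefix value stores to agreement on $\text{OVar}(s_1) \cap \text{OVar}(s_2) \cup \{{id}_I, {id}_{IO}\}$; the non-interference premise $\text{Dom}(\rho) \cap \text{Def}(S_2) = \emptyset$ passes to $\text{Def}(S_2')$ (and hence to the post-prefix value store via Corollary~\ref{coro:defExclusion}), preserving $\rho(id) = \vals_2'(id)$. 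The induction hypothesis on $s_2 \approx_{\rho}^S s_1$ then finishes the case, followed by composition via Corollary~\ref{coro:termSeq}.

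The main obstacle is the while case, where the invariants must be maintained across unboundedly many iterations: both the value-store agreement on $\text{OVar}(S_1) \cap \text{OVar}(S_2) \cup \{{id}_I, {id}_{IO}\}$ and the specialization invariant $\rho(id) = \vals_2(id)$ for every new configuration variable. This requires an inner induction mirroring the structure of Lemma~\ref{lmm:equivTermCompSameLoopIteration}, where the body's induction hypothesis restores the invariants at each iteration; the premise $\text{Dom}(\rho) \cap \text{Def}(S_2) = \emptyset$ is crucial for showing that specialization is stable under repeated execution of the body, and a variant of Lemma~\ref{lmm:sameImpOVar} and Corollary~\ref{coro:sameOVar} (suitably re-proved for $\approx_{\rho}^S$) is needed to keep the relevant output-deciding variable sets synchronized between the two programs across iterations.
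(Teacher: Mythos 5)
Your proposal is correct and follows essentially the same route as the paper: induction over the structure of the $\approx_{\rho}^S$ derivation (the paper measures by $\text{size}(S_1)+\text{size}(S_2)$, which amounts to the same thing here), with the specialized-If case resolved by two deterministic non-I/O steps into the surviving branch, the sequence case handled via Theorem~\ref{thm:equivCompMain}, Corollary~\ref{coro:defExclusion} and Corollary~\ref{coro:termSeq}, and the matching-while case driven by an inner induction on iterations (which the paper factors out as Lemma~\ref{lmm:additionalParamLoopStmt}) together with the $\text{OVar}$-inclusion property you correctly identify as needing its own proof (the paper's Lemma~\ref{lmm:additionalParamFuncWideSimilarUseDef}).
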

\begin{proof}
The proof of Lemma~\ref{lmm:additionalParamsFuncWide} is by induction on the sum of program sizes of $S_1$ and $S_2$ and is a case analysis based on Definition~\ref{def:localAdditionalParam}.

\noindent{Base case}.

\item $S_1$ is a simple statement $s$, $S_2 = ``\text{If}(id) \, \text{then} \{s_2^t\}\, \text{else} \{s_2^f\}"$
where $s_2^t, s_2^f$ are simple statement and one of the following holds:
\begin{enumerate}
\item $(\rho(id) = 0) \wedge (s_2^f = s)$;

\item $(\rho(id) = 1) \wedge (s_2^t = s)$;
\end{enumerate}

W.l.o.g., we assume that $\rho(id) = 0$.
By assumption, $\vals_2(id) = \rho(id) = 0$.
Then the execution of $S_2$ proceeds as follows:
\begin{tabbing}
xx\=xx\=\kill
\>\>                   $(\text{If}(id) \, \text{then}\{s_2^t\} \, \text{else} \{s_2^f\}, m_2(\vals_2))$\\
\>$->$\>               $(\text{If}(0) \, \text{then}\{s_2^t\} \, \text{else} \{s_2^f\}, m_2(\vals_2))$\\
\>\>                   by the rule Var\\
\>$->$\>               $(s_2^f, m_2(\vals_2))$ by the If-F rule.
\end{tabbing}
By Theorem~\ref{thm:sameIOtheoremFuncWide} and Theorem~\ref{thm:mainTermSameWayLocal}, this lemma holds.

\noindent{Induction step}.

The induction hypothesis (IH) is that Lemma~\ref{lmm:additionalParamsFuncWide} holds when the sum of the program size of $S_1$ and $S_2$ is at least 4, $\text{size}(S_1) + \text{size}(S_2) = k \geq 4$.

Then we show that the lemma holds when $\text{size}(S_1) + \text{size}(S_2) = k + 1$.
There are cases to consider.
\begin{enumerate}
\item $S_1$ and $S_2$ satisfy the condition of same output sequence, $S_1 \equiv_{O}^S S_2$.

By Theorem~\ref{thm:sameIOtheoremFuncWide}, the lemma~\ref{lmm:additionalParamsFuncWide} holds.

\item $S_1$ and $S_2$ are both ``If" statement:

$S_1 = ``\text{If}(e) \, \text{then}\{S_1^t\} \, \text{else} \{S_1^f\}"$,
$S_2 = ``\text{If}(e) \, \text{then}\{S_2^t\} \, \text{else} \{S_2^f\}"$ where both of the following hold
\begin{itemize}
\item $S_2^t {\approx}_{\rho}^S S_1^t$;

\item $S_2^f {\approx}_{\rho}^S S_1^f$;
\end{itemize}

By the definition of $\text{Use}(S_1)$, variables used in the predicate expression $e$ are a subset of  used variables in $S_1$ and $S_2$,
$\text{Use}(e) \subseteq \text{Use}(S_1) \cap \text{Use}(S_2)$.
By assumption, corresponding variables used in $e$ are of same value in value stores $\vals_1$ and $\vals_2$. By Lemma~\ref{lmm:expEvalSameVal}, the expression evaluates to the same value w.r.t value stores $\vals_1$ and $\vals_2$. There are three possibilities.
\begin{enumerate}
\item The evaluation of $e$ crashes,
$\mathcal{E}'\llbracket e\rrbracket \vals_1 =
 \mathcal{E}'\llbracket e\rrbracket \vals_2 = (\text{error}, v_{\mathfrak{of}})$.

The execution of $S_1$ continues as follows:
\begin{tabbing}
xx\=xx\=\kill
\>\>                   $(\text{If}(e) \, \text{then}\{S_1^t\} \, \text{else} \{S_1^f\}, m_1(\vals_1))$\\
\>$->$\>               $(\text{If}((\text{error}, v_{\mathfrak{of}})) \, \text{then}\{S_1^t\} \, \text{else} \{S_1^f\}, m_1(\vals_1))$\\
\>\>                   by the rule EEval'\\
\>$->$\>               $(\text{If}(0) \, \text{then}\{S_1^t\} \, \text{else} \{S_1^f\}, m_1(1/\mathfrak{f}))$\\
\>\>                   by the ECrash rule  \\
\>{\kStepArrow [i] }\> $(\text{If}(0) \, \text{then}\{S_1^t\} \, \text{else} \{S_1^f\}, m_1(1/\mathfrak{f}))$ for any $i>0$\\
\>\>                   by the Crash rule.
\end{tabbing}

Similarly, the execution of $S_2$ started from the state $m_2(\vals_2)$ crashes.
The lemma holds.

\item The evaluation of $e$ reduces to zero,
$\mathcal{E}'\llbracket e\rrbracket \vals_1 =
 \mathcal{E}'\llbracket e\rrbracket \vals_2 = (0, v_{\mathfrak{of}})$.

The execution of $S_1$ continues as follows.
\begin{tabbing}
xx\=xx\=\kill
\>\>                   $(\text{If}(e) \, \text{then}\{S_1^t\} \, \text{else} \{S_1^f\}, m_1(\vals_1))$\\
\>= \>                 $(\text{If}((0, v_{\mathfrak{of}})) \, \text{then}\{S_1^t\} \, \text{else} \{S_1^f\}, m_1(\vals_1))$\\
\>\>                   by the rule EEval'\\
\>$->$\>               $(\text{If}(0) \, \text{then}\{S_1^t\} \, \text{else} \{S_1^f\}, m_1(\vals_1))$\\
\>\>                   by the E-Oflow1 or E-Oflow2 rule  \\
\>$->$\>               $(S_1^f, m_1(\vals_1))$ by the If-F rule.
\end{tabbing}

Similarly, the execution of $S_2$ gets to the configuration $(S_2^f, m_2(\vals_2))$.

By the hypothesis IH, we show the lemma holds.
We need to show that all conditions are satisfied for the application of the hypothesis IH.
\begin{itemize}
\item $(S_2^f {\approx}_{\rho}^S S_1^f)$

By assumption.

\item The sum of  the program size of $S_1^f$ and $S_2^f$ is less than $k$,
$\text{size}(S_1^f) + \text{size}(S_2^f) < k$.

By definition, $\text{size}(S_1) = 1 + \text{size}(S_1^t) + \text{size}(S_1^f)$.
Then, $\text{size}(S_1^f) + \text{size}(S_2^f) < k + 1 - 2 = k - 1$.

\item Value stores $\vals_1$ and $\vals_2$ agree on values of  used variables in $S_1^f$ and $S_2^f$ as well as the input, I/O sequence variable.

By definition, $\text{Use}(S_1^f)\allowbreak \subseteq \text{Use}(S_1)$.
So are the cases to $S_2^f$ and $S_2$.
In addition, value stores $\vals_1$ and $\vals_2$ are not changed in the evaluation of the predicate expression $e$. The condition holds.

\item Values of new configuration variables are consistent in the value store $\vals_2$ and the specialization $\rho$, $\forall id \in \text{Dom}(\rho)\,:\, \vals_2(id) = \rho(id)$.

By assumption.
\end{itemize}

By the hypothesis IH, the lemma holds.

\item The evaluation of $e$ reduces to the same nonzero integer value,
$\mathcal{E}'\llbracket e\rrbracket \vals_1 =
 \mathcal{E}'\llbracket e\rrbracket \vals_2 = (v, v_{\mathfrak{of}})$ where $v \neq 0$.

By arguments similar to the second subcase above.
\end{enumerate}

\item $S_1$ and $S_2$ are both ``while" statements:

$S_1 = ``\text{while}_{\langle n\rangle}(e) \, \{S_1'\}"$,
$S_2 = ``\text{while}_{\langle n\rangle}(e) \, \{S_2'\}"$ where
$S_2' {\approx}_{\rho}^S S_1'$;

By Lemma~\ref{lmm:additionalParamLoopStmt}, we show this lemma holds.
We need to show that all required conditions are satisfied for the application of Lemma~\ref{lmm:additionalParamLoopStmt}.
\begin{itemize}
\item $S_1$ and $S_2$ have same set of  output deciding variables,
$\text{OVar}(S_1) = \text{OVar}(S_2) = \text{OVar}(S)$;

By Lemma~\ref{lmm:additionalParamFuncWideSimilarUseDef} and Corollary~\ref{coro:sameTermVarFromEquivTerm}.

\item When started in states $m_1'(\vals_1'), m_2'(\vals_1')$ where
value stores $\vals_1'$ and $\vals_2'$ agree on values of output deciding variables in both $S_1$ and $S_2$ as well as the input sequence variable and the I/O sequence variable,
then $S_1'$ and $S_2'$ terminate in the same way, produce the same output sequence, and have equivalent computation of defined variables in both $S_1$ and $S_2$.

By the induction hypothesis IH. This is because the sum of the program size of $S_1'$ and $S_2'$ is less than $k$. By definition, $\text{size}(S_1) = 1 + \text{size}(S_1')$.
\end{itemize}
By Lemma~\ref{lmm:additionalParamLoopStmt}, this lemma holds.

\item $S_2 = ``\text{If}(id) \, \text{then} \, \{S_2^t\} \, \text{else} \, \{S_2^f\}"$ where one of the following holds:
  \begin{enumerate}
    \item $(\rho(id) = 0) \wedge
           (S_2^f {\approx}_{\rho}^S S_1)$;

    \item $(\rho(id) = 1) \wedge
           (S_2^t {\approx}_{\rho}^S S_1)$;
  \end{enumerate}

W.l.o.g, we assume $(\rho(id) = 0) \wedge
           (S_2^f {\approx}_{\rho}^S S_1)$;

Then the execution of $S_2$ proceeds as follows:
\begin{tabbing}
xx\=xx\=\kill
\>\>                   $(\text{If}(id) \, \text{then}\{S_2^t\} \, \text{else} \{S_2^f\}, m_2(\vals_2))$\\
\>= \>                 $(\text{If}(0) \, \text{then}\{S_2^t\} \, \text{else} \{S_2^f\}, m_2(\vals_2))$\\
\>\>                   by the Var rule\\
\>$->$\>               $(S_2^f, m_2(\vals_2))$\\
\>\>                   by the If-F rule
\end{tabbing}

By the induction hypothesis, we show that the lemma holds.
We need to show the required conditions are satisfied for the application of the hypothesis.
\begin{itemize}
\item $S_2^f {\approx}_{\rho}^S S_1$

By assumption.

\item The sum of  the program size of $S_1^f$ and $S_2^f$ is less than $k$,
$\text{size}(S_1^f) + \text{size}(S_2^f) < k$.

By definition, $\text{size}(S_2) = 1 + \text{size}(S_2^t) + \text{size}(S_2^f)$.
Then, $\text{size}(S_2^f) + \text{size}(S_1) < k + 1 - 1 - \text{size}(S_2^t) < k $.

\item Value stores $\vals_1$ and $\vals_2$ agree on values of  used  variables in $S_2^f$ and $S_1$ as well as the input, I/O sequence variable.

By definition, $\text{Use}(S_2^f)\allowbreak \subseteq \text{Use}(S_2)$.
In addition, the value store $\vals_2$ is not changed in the evaluation of the predicate expression $e$. The condition holds.
\end{itemize}
By the hypothesis IH, the lemma holds.

\item $S_1$ and $S_2$ are same, $S_1 = S_2$;

By definition, used variables in $S_1$ and $S_2$ are same;
defined variables in $S_1$ and $S_2$ are same.
By semantic rules, $S_1$ and $S_2$ terminate in the same way, produce the same output sequence and have equivalent computation of defined variables in $S_1$ and $S_2$.
This lemma holds.

\item $S_1 = S_1';s_1$ and $S_2 = S_2';s_2$ where both of the following hold:

\begin{itemize}
\item $S_2' \approx_{\rho}^S S_1'$;

\item $s_2 \approx_{\rho}^S s_1$;
\end{itemize}

By Theorem~\ref{thm:mainTermSameWayLocal} and the hypothesis IH, we show $S_2'$ and $S_1'$ terminate in the same way and produce the same output sequence and when $S_2'$ and $S_1'$ both terminate, $S_2'$ and $S_1'$ have equivalent terminating computation of variables  used or defined in $S_2'$ and $S_1'$.

We show all the required conditions are satisfied for the application of the hypothesis IH.
\begin{itemize}
\item $S_2' {\approx}_{\rho}^S S_1'$.

By assumption.

\item The sum of  the program size of $S_1'$ and $S_2'$ is less than $k$,
$\text{size}(S_1') + \text{size}(S_2') < k$.

By definition, $\text{size}(S_2) = \text{size}(s_2) + \text{size}(S_2')$ where $\text{size}(s_2) < 1$.
Then, $\text{size}(S_2') + \text{size}(S_1') < k + 1 - \text{size}(s_2) - \text{size}(s_1) < k$.

\item Value stores $\vals_1$ and $\vals_2$ agree on values of  output deciding variables in $S_2'$ and $S_1'$ including the input, I/O sequence variable.

By definition of $\text{TVar}_o$ and $\text{Imp}_o$, $\text{OVar}(S_2')\allowbreak \subseteq \text{OVar}(S_2)$.
The condition holds.

\item Values of new configuration variables are consistent in the value store $\vals_2$ and the specialization $\rho$, $\forall id \in \text{Dom}(\rho)\,:\,
    \vals_2(id) = \rho(id)$.

By assumption.
\end{itemize}
By the hypothesis IH, one of the following holds:
\begin{enumerate}
\item $S_1'$ and $S_2'$ both do not terminate.

By Lemma~\ref{lmm:multiStepSeqExec}, executions of $S_1 = S_1';s_1$ and $S_2 = S_2';s_2$ both do not terminate and produce the same output sequence.

\item $S_1'$ and $S_2'$ both terminate.

By assumption,
$(S_2', m_2(\vals_2)) ->* (\text{skip}, m_2'(\vals_2'))$,

\noindent$(S_1', m_1(\vals_1)) ->* (\text{skip}, m_1'(\vals_1'))$.

By Corollary~\ref{coro:termSeq},
$(S_2';s_2, m_2(\vals_2)) ->* (s_2, m_2'(\vals_2'))$,
$(S_1';s_1, m_1(\vals_1)) ->* (s_1, m_1'(\vals_1'))$.

By the hypothesis IH, we show that $s_2$ and $s_1$ terminate in the same way, produce the same output sequence and when $s_2$ and $s_1$ both terminate, $s_2$ and $s_1$ have equivalent computation of variables  used or defined in $s_1$ and $s_2$ and the input, and I/O sequence variables.

We need to show that all conditions are satisfied for the application of the hypothesis IH.
\begin{itemize}
\item There are updates of ``new configuration variables" between $s_2$ and $s_1$;

By assumption, $s_2 {\approx}_{\rho}^S s_1$.

\item The sum of the program size $s_2$ and $s_1$ is less than or equals to $k$;

By definition, $\text{size}(S_2') \geq 1,  \text{size}(S_1') \geq 1$.
Therefore, $\text{size}(s_2) + \text{size}(s_1) < k+1 - \text{size}(S_2') - \text{size}(S_1') \leq k$.

\item Value stores $\vals_1'$ and $\vals_2'$ agree on values of  output deciding variables in $s_2$ and $s_1$ as well as the input, I/O sequence variable.

By induction hypothesis IH,
$\text{OVar}(s_1) \subseteq \text{OVar}(s_2)$, then $\text{Use}(s_2)\,\allowbreak   \cap \text{Use}(s_1) = \text{Use}(s_1)$.
For any variable $id$ in $\text{OVar}(s_1)$, if $id$ is in $\text{OVar}(S_1')$,
then the value of $id$ is same after the execution of $S_1'$ and $S_2'$,
$\vals_1'(id) = \vals_1(id) = \vals_2(id)\allowbreak = \vals_2'(id)$.
Otherwise, the variable $id$ is defined in the execution of $S_1'$ and $S_2'$, by assumption, $\vals_1'(id) = \vals_2'(id)$.
The condition holds.

\item Values of new configuration variables are consistent in the value store $\vals_2'$ and the specialization $\rho$, $\forall id \in \text{Dom}(\rho)\,:\,
    \vals_2'(id) = \rho(id)$.

By assumption, $\text{Dom}(\rho) \cap \text{Def}(S_2)$.
By Corollary~\ref{coro:defExclusion}, values of new configuration variables are not changed in the execution of $S_2'$,
$\forall id \in \text{Dom}(\rho)\,:\,
    \vals_2'(id) = \vals_2(id) = \rho(id)$.
\end{itemize}
By the hypothesis IH, the lemma holds.
\end{enumerate}

\end{enumerate}
\end{proof}

We list properties of the update of new configuration variables and the proof of backward compatibility for the case of loop statement as follows.
We present one auxiliary lemma used in the proof of Lemma~\ref{lmm:additionalParamsFuncWide}.
\begin{lemma}\label{lmm:additionalParamFuncWideSimilarUseDef}
Let $S_2$ be a statement sequence and $S_1$ where there are updates of ``specializing new configuration variables" w.r.t a mapping of new configuration variables $\rho$, $S_2 {\approx}_{\rho}^S S_1$.
Then the output deciding variables in $S_1$ are a subset of the union of those in $S_2$,
 $\text{OVar}(S_1) \subseteq \text{OVar}(S_2)$.
\end{lemma}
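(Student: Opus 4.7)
\textbf{Proof plan for Lemma~\ref{lmm:additionalParamFuncWideSimilarUseDef}.}
The plan is to proceed by induction on $\text{size}(S_1) + \text{size}(S_2)$, with the case split mirroring Definition~\ref{def:localAdditionalParam} of $S_2 \approx_\rho^S S_1$. In each case I will show $\text{TVar}_o(S_1) \subseteq \text{TVar}_o(S_2)$ and $\text{Imp}_o(S_1) \subseteq \text{Imp}_o(S_2)$ separately, since $\text{OVar} = \text{TVar}_o \cup \text{Imp}_o$.

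First I would dispatch the cases that reduce immediately. Case 2 ($S_1 \equiv_O^S S_2$) follows from Corollary~\ref{coro:sameOVar}, which gives equality of the two sets, hence inclusion. Case 3 (both are If statements sharing predicate $e$, with $S_2^t \approx_\rho^S S_1^t$ and $S_2^f \approx_\rho^S S_1^f$) is handled by the induction hypothesis applied to the two branches, combined with the inductive clauses defining $\text{TVar}_o$ and $\text{Imp}_o$ on If statements: both use $\text{Use}(e)$ plus the union of the branches' contributions, so inclusion of the branch-wise output variables yields inclusion for the whole If. Case 4 (both are while statements with the same predicate and $S_2' \approx_\rho^S S_1'$) is similar; the only subtlety is that both $\text{TVar}_o$ and $\text{Imp}_o$ of a loop are defined through $\text{Imp}(\text{while}\dots, -)$ applied to $\text{Use}(e)$ and body-level sets, so I will use Lemma~\ref{lmm:impVarUnionLemma} (monotonicity of $\text{Imp}$ in its second argument) to lift the branch-level inclusions through the fixed-point expansion.

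Case 1 (the asymmetric case $S_2 = ``\text{If}(id)\,\text{then}\{S_2^t\}\,\text{else}\{S_2^f\}"$ with the $\rho$-selected branch equal to $S_1$ up to $\approx_\rho^S$) is where the inclusion becomes strict. By the induction hypothesis $\text{OVar}(S_1) \subseteq \text{OVar}(S_2^f)$ (w.l.o.g.), and the definition of $\text{OVar}$ for the enclosing If gives $\text{OVar}(S_2^f) \subseteq \text{Use}(id) \cup \text{OVar}(S_2^t) \cup \text{OVar}(S_2^f) \subseteq \text{OVar}(S_2)$, so transitivity finishes the case. Case 5 (sequence $S_1 = S_1';s_1$, $S_2 = S_2';s_2$, with $S_2' \approx_\rho^S S_1'$ and $s_2 \approx_\rho^S s_1$) is the main obstacle: here I must show that the sequence-level definitions of $\text{TVar}_o$ and $\text{Imp}_o$, which thread $\text{Imp}(S_1', -)$ around the tail's output variables, respect the inclusion. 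I would first apply the IH to obtain $\text{OVar}(s_1) \subseteq \text{OVar}(s_2)$, then use Lemma~\ref{lmm:impVarUnionLemma} to get $\text{Imp}(S_1', \text{Imp}_o(s_1)) \subseteq \text{Imp}(S_1', \text{Imp}_o(s_2))$, and combine with the IH for $S_1',S_2'$ plus the fact that $\text{Imp}$ is monotone in its program argument when comparing prefixes (which I would establish as a small auxiliary observation, or derive directly from the definition of $\text{Imp}$ case-by-case on $s_2 \setminus s_1$). The two sub-cases of clause 5 of the definition of $\text{Imp}_o$ (whether or not the tail contains an output statement) must be chased through: when the tail of $S_2$ contains an output but the corresponding tail of $S_1$ does not, $\text{Imp}_o(S_1) = \text{Imp}_o(S_1')$ and must be shown to sit inside $\text{Imp}(S_2', \text{Imp}_o(s_2))$, which again uses monotonicity of $\text{Imp}$ together with $\{{id}_{IO}\} \subseteq \text{Imp}_o(s_2)$ from Lemma~\ref{lmm:IOseqInImpOfAnyStmtWithOutStmt}.

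The hard part, as indicated, will be the sequence case — specifically, verifying the inclusion carefully when the two statement sequences are syntactically dissimilar (the update may insert specialized branches that vanish under $\rho$ but still contribute to $\text{Use}/\text{Def}$ counts on the new side), and making sure the $\text{Imp}$-based definitions of $\text{Imp}_o$ and $\text{TVar}_o$ are handled uniformly across the two sub-cases of Definition~\ref{def:ImpVarO}(5--6) and Definition~\ref{def:TVarO}(5--6). Everything else is a routine unfolding of definitions.
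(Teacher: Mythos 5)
Your proposal follows the same route as the paper, which proves this lemma simply ``by induction on the sum of the program size of $S_1$ and $S_2$'' with a case split on Definition~\ref{def:localAdditionalParam}; your write-up fills in considerably more detail than the paper provides and correctly identifies the delicate points (the asymmetric If case, the sequence case, and the output/no-output sub-cases of the $\text{Imp}_o$ and $\text{TVar}_o$ definitions). The plan is sound as stated.
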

\begin{proof}
By induction on the sum of the program size of $S_1$ and $S_2$.
\end{proof}

\begin{lemma}\label{lmm:additionalParamLoopStmt}
Let $S_1 = \text{while}_{\langle n_1\rangle}(e) \, \{S_1'\}$ and
    $S_2 = \text{while}_{\langle n_2\rangle}(e) \,\allowbreak \{S_2'\}$ be two loop statements where all of the following hold:
\begin{itemize}
\item $S_2'$ includes updates of ``specializing new configuration variables" compared to $S_1'$, $S_2' \approx_{\rho}^S S_1'$ where $\text{Dom}(\rho) \cap \text{Def}(S_2') = \emptyset$.

\item the output deciding variables in $S_1$ are a subset of those in $S_2$,

\noindent$\text{OVar}(S_1) \subseteq \text{OVar}(S_2)$;

\item When started in states agreeing on values of output deciding variables in $S_1$ and $S_2$ including the input sequence variable and the I/O sequence variable,
    $\forall x \in \text{OVar}(S_1) \cup \text{OVar}(S_2) \cup \{{id}_I, {id}_{IO}\}\, \forall m_1'(\vals_1')\, m_2'(\vals_2')\,:\, \, \allowbreak
    (\vals_1'(x) = \vals_2'(x))$,
$S_1'$ and $S_2'$ terminate in the same way, produce the same output sequence, and have equivalent computation of defined variables in $S_1'$ and $S_2'$ as well as the input sequence variable and the I/O sequence variable
    $((S_1', m_1) \equiv_{H} (S_2', m_2)) \wedge
     ((S_1', m_1) \equiv_{O} (S_2', m_2)) \wedge
     (\forall x \in \text{OVar}(S_1) \cup \text{OVar}(S_2) \cup \{{id}_I, \allowbreak {id}_{IO}\}\,:\,
     (S_1', m_1) \equiv_{x} (S_2', m_2))$;
\end{itemize}

If $S_1$ and $S_2$ start in states $m_1(\text{loop}_c^1, \vals_1), m_2(\text{loop}_c^2, \vals_2)$ respectively, with loop counters of $S_1$ and $S_2$ not initialized ($S_1, S_2$ have not executed yet), value stores agree on values of output deciding variables in $S_1$ and $S_2$, then, for any positive integer $i$, one of the following holds:
\begin{enumerate}
\item Loop counters for $S_1$ and $S_2$ are always less than $i$ if any is present,
$\forall m_1'(\text{loop}_c^{1'})\, m_2'(\text{loop}_c^{2'})\,:\,
(S_1, m_1(\text{loop}_c^1, \vals_1)) ->* (S_1'', m_1'(\text{loop}_c^{1'})),
\text{loop}_c^{1'}(n_1) < i,
(S_2, m_2(\text{loop}_c^2,\allowbreak \vals_2)) ->* (S_2'', m_2'(\text{loop}_c^{2'})),
\text{loop}_c^{2'}(n_2) < i$,
$S_1$ and $S_2$ terminate in the same way, produce the same output sequence, and have equivalent computation of output deciding variables in both $S_1$ and $S_2$ and the input sequence variable, the I/O sequence variable,
$(S_1, m_1)\allowbreak \equiv_{H} (S_2, m_2)$ and
$(S_1, m_1) \equiv_{O} (S_2, m_2)$ and
$\forall x \in (\text{OVar}(S_1) \cap \text{OVar}(S_2)) \cup
               \{{id}_I,  {id}_{IO}\}\,:\,
               (S_1, m_1) \equiv_{x} (S_2, m_2)$;

\item The loop counter of $S_1$ and $S_2$ are of value less than or equal to $i$,
and there are no reachable configurations
$(S_1, m_1(\text{loop}_c^{1_i},\allowbreak \vals_{1_i}))$ from $(S_1, m_1(\vals_1))$,
$(S_2, m_2(\text{loop}_c^{2_i}, \vals_{2_i}))$ from $(S_2, \,\allowbreak m_2(\vals_2))$ where all of the following hold:
\begin{itemize}
\item The loop counters of $S_1$ and $S_2$ are of value $i$,
$\text{loop}_c^{1_i}(n_1)\allowbreak = \text{loop}_c^{2_i}(n_2) = i$.

\item Value stores $\vals_{1_i}$ and $\vals_{2_i}$ agree on values of output deciding variables in both $S_1$ and $S_2$ as well as the input sequence variable and the I/O sequence variable,
$\forall x \in \, (\text{OVar}(S_1) \cap \text{OVar}(S_2)) \cup \{{id}_I,  {id}_{IO}\}\,:\,
\vals_{1_i}(x) = \vals_{2_i}(x)$.
\end{itemize}

\item There are reachable configurations
$(S_1, m_1(\text{loop}_c^{1_i}, \vals_{1_i}))$ from $(S_1, m_1(\vals_1))$,
$(S_2, m_2(\text{loop}_c^{2_i}, \vals_{2_i}))$ from $(S_2, \,\allowbreak m_2(\vals_2))$ where all of the following hold:
\begin{itemize}
\item The loop counter of $S_1$ and $S_2$ are of value $i$,
$\text{loop}_c^{1_i}(n_1)\allowbreak = \text{loop}_c^{2_i}(n_2) = i$.

\item Value stores $\vals_{1_i}$ and $\vals_{2_i}$ agree on values of output deciding variables in both $S_1$ and $S_2$ including the input sequence variable and the I/O sequence variable,
$\forall x \in \, (\text{OVar}(S_1) \cap \text{OVar}(S_2)) \cup \{{id}_I,  {id}_{IO}\}\,:\,
\vals_{1_i}(x) = \vals_{2_i}(x)$.
\end{itemize}
\end{enumerate}
\end{lemma}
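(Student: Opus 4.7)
The plan is to proceed by induction on $i$, mirroring the structure of Lemma~\ref{lmm:equivTermCompSameLoopIteration} and Lemma~\ref{lmm:loopTermInSameWayIntermediate}, but using the hypothesis on $S_1' \approx_\rho^S S_2'$ (via the third bullet) in place of the straightforward equivalent-computation hypothesis. The induction variable $i$ ranges over positive integers, and at each $i$ one of the three listed alternatives must hold.

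For the base case $i = 1$, I would first show that the predicate $e$ evaluates to the same result under $\vals_1$ and $\vals_2$. Since $S_1$ contains output statements iff it contains loops that contain them (the hypothesis gives us the loop body agreement), we have $\text{Use}(e) \subseteq \text{OVar}(S_1)$, and by the assumption $\text{OVar}(S_1) \subseteq \text{OVar}(S_2)$ together with the initial-state agreement, Lemma~\ref{lmm:expEvalSameVal} gives identical evaluation of $e$. Three subcases arise: (i) $e$ crashes — both loops enter the Crash fixpoint, the loop counters never reach $1$, so alternative~1 holds; (ii) $e$ evaluates to $0$ — both loops take one step via Wh-F to skip, and again alternative~1 holds; (iii) $e$ evaluates to a nonzero value — both loops unroll one iteration, loop counters become $1$, and we must check alternative~2 or~3 depending on whether a reachable configuration with counter $1$ and agreeing value stores exists. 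The key step is invoking the third bullet (loop-body equivalence) on $S_1'$ and $S_2'$ started in the states just after unrolling; if $S_1'$ and $S_2'$ both terminate, their terminal value stores agree on $\text{OVar}(S_1) \cap \text{OVar}(S_2) \cup \{{id}_I, {id}_{IO}\}$ (alternative~3), and otherwise neither terminates (alternative~2).

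For the induction step, assuming one of the three alternatives holds at $i$, I would analyze each case. If alternative~1 already held, it continues to hold at $i+1$ trivially. If alternative~2 held, it continues to hold at $i+1$. The nontrivial case is alternative~3: there are reachable configurations $(S_1, m_{1_i})$ and $(S_2, m_{2_i})$ with counter $i$ and agreeing value stores. Here I would re-run the predicate analysis of the base case at these configurations, and in the third subcase (nonzero evaluation) unroll once more to produce configurations whose states feed into the third bullet of the hypothesis. Applying that hypothesis gives either joint termination with agreeing value stores (yielding alternative~3 at $i+1$) or joint nontermination (yielding alternative~2 at $i+1$). The preservation of $\rho$-consistency across iterations will follow from $\text{Dom}(\rho) \cap \text{Def}(S_2') = \emptyset$ and Corollary~\ref{coro:defExclusion}, applied iteratively.

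The main obstacle I anticipate is the bookkeeping of variable sets: specifically, proving that $\text{Imp}(S_1', X) \subseteq \text{OVar}(S_1)$ and $\text{Imp}(S_2', X) \subseteq \text{OVar}(S_2)$ for $X = \text{OVar}(S_1) \cap \text{OVar}(S_2) \cup \{{id}_I, {id}_{IO}\}$, so that the value-store agreement at the start of an iteration propagates to the next. This essentially replays the closure-under-iteration argument already used in Lemma~\ref{lmm:equivTermCompSameLoopIteration}'s proof, using the recursive definitions of $\text{Imp}_o$, $\text{TVar}_o$, together with Lemma~\ref{lmm:ImpPrefixLemma} and Lemma~\ref{lmm:impVarUnionLemma}. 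Once this closure is in place, the rest of the induction unfolds mechanically; the proof does not require anything fundamentally new beyond combining Lemma~\ref{lmm:equivTermCompSameLoopIteration} and Lemma~\ref{lmm:loopTermInSameWayIntermediate} with the additional-configuration-variable hypothesis.
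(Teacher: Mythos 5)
Your proposal matches the paper's own proof: induction on $i$, agreement of the predicate evaluation via $\text{Use}(e)$ being contained in the output-deciding variables, the three-way split on crash/zero/nonzero, and application of the loop-body hypothesis (the third bullet) to carry the value-store agreement and termination behaviour across one more iteration, with $\text{Dom}(\rho)\cap\text{Def}(S_2')=\emptyset$ preserving the configuration-variable specialization. The closure argument you flag as the main obstacle is exactly the one the paper reuses from its earlier loop lemmas, so nothing further is needed.
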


\begin{proof}
By induction on $i$.

\noindent{Base case}.

We show that, when $i=1$, one of the following holds:
\begin{enumerate}
\item Loop counters for $S_1$ and $S_2$ are always less than 1 if any is present,
$\forall m_1'(\text{loop}_c^{1'})\, m_2'(\text{loop}_c^{2'})\,:\,
(S_1, m_1(\text{loop}_c^1, \vals_1)) ->* (S_1'', m_1'(\text{loop}_c^{1'})),
\text{loop}_c^{1'}(n_1) < i,
(S_2, m_2(\text{loop}_c^2,\allowbreak \vals_2)) ->* (S_2'', m_2'(\text{loop}_c^{2'})),
\text{loop}_c^{2'}(n_2) < i$,
$S_1$ and $S_2$ terminate in the same way, produce the same output sequence, and have equivalent computation of used/defined variables in both $S_1$ and $S_2$ and the input sequence variable, the I/O sequence variable,
$(S_1, m_1)\allowbreak \equiv_{H} (S_2, m_2)$ and
$(S_1, m_1) \equiv_{O} (S_2, m_2)$ and
$\forall x \in (\text{OVar}(S_1) \cap \text{OVar}(S_2)) \cup
               \{{id}_I,  {id}_{IO}\}\,:\,
               (S_1, m_1) \equiv_{x} (S_2, m_2)$;

\item Loop counters of $S_1$ and $S_2$ are of value less than or equal to 1 but there are no reachable configurations
$(S_1, m_1(\text{loop}_c^{1_1}, \vals_{1_i}))$ from $(S_1, m_1(\vals_1))$,
$(S_2, m_2(\text{loop}_c^{2_1}, \vals_{2_i}))$ from $(S_2, \,\allowbreak m_2(\vals_2))$ where all of the following hold:
\begin{itemize}
\item The loop counter of $S_1$ and $S_2$ are of value 1,
$\text{loop}_c^{1_1}(n_1)\allowbreak = \text{loop}_c^{2_1}(n_2) = 1$.

\item Value stores $\vals_{1_1}$ and $\vals_{2_1}$ agree on values of used variables in both $S_1$ and $S_2$ as well as the input sequence variable and the I/O sequence variable,
$\forall x \in \, (\text{OVar}(S_1) \cap \text{OVar}(S_2)) \cup \{{id}_I,  {id}_{IO}\}\,:\,
\vals_{1_1}(x) = \vals_{2_1}(x)$.
\end{itemize}

\item There are reachable configuration
$(S_1, m_1(\text{loop}_c^{1_1}, \vals_{1_i}))$ from $(S_1, m_1(\vals_1))$,
$(S_2, m_2(\text{loop}_c^{2_1}, \vals_{2_i}))$ from $(S_2, \,\allowbreak m_2(\vals_2))$ where all of the following hold:
\begin{itemize}
\item The loop counter of $S_1$ and $S_2$ are of value 1,
$\text{loop}_c^{1_1}(n_1)\allowbreak = \text{loop}_c^{2_1}(n_2) = 1$.

\item Value stores $\vals_{1_1}$ and $\vals_{2_1}$ agree on values of used variables in both $S_1$ and $S_2$ as well as the input sequence variable and the I/O sequence variable,
$\forall x \in \, (\text{OVar}(S_1) \cap \text{OVar}(S_2)) \cup \{{id}_I,  {id}_{IO}\}\,:\,
\vals_{1_1}(x) = \vals_{2_1}(x)$.
\end{itemize}
\end{enumerate}

By definition, variables used in the predicate expression $e$ of $S_1$ and $S_2$ are used in $S_1$ and $S_2$, $\text{Use}(e) \subseteq \text{OVar}(S_1) \cap \text{OVar}(S_2)$.
By assumption, value stores $\vals_1$ and $\vals_2$ agree on values of variables in $\text{Use}(e)$,
the predicate expression $e$ evaluates to the same value w.r.t value stores $\vals_1$ and $\vals_2$.
There are three possibilities.
\begin{enumerate}
\item The evaluation of $e$ crashes,

\noindent$\mathcal{E}'\llbracket e\rrbracket \vals_1 =
 \mathcal{E}'\llbracket e\rrbracket \vals_2 = (\text{error}, v_{\mathfrak{of}})$.

The execution of $S_1$ continues as follows:
\begin{tabbing}
xx\=xx\=\kill
\>\>                   $(\text{while}_{\langle n_1\rangle}(e) \, \{S_1'\}, m_1(\vals_1))$\\
\>$->$\>               $(\text{while}_{\langle n_1\rangle}((\text{error}, v_{\mathfrak{of}})) \, \{S_1'\}, m_1(\vals_1))$\\
\>\>                   by the rule EEval'\\
\>$->$\>               $(\text{while}_{\langle n_1\rangle}(0) \, \{S_1'\}, m_1(1/\mathfrak{f}))$\\
\>\>                   by the ECrash rule  \\
\>{\kStepArrow [i] }\> $(\text{while}_{\langle n_1\rangle}(0) \, \{S_1'\}, m_1(1/\mathfrak{f}))$ for any $i>0$\\
\>\>                   by the Crash rule.
\end{tabbing}

Similarly, the execution of $S_2$ started from the state $m_2(\vals_2)$ crashes.
Therefore $S_1$ and $S_2$ terminate in the same way when started from $m_1$ and $m_2$ respectively.
Because $\vals_1({id}_{IO}) = \vals_2({id}_{IO})$, the lemma holds.

\item The evaluation of $e$ reduces to zero,
$\mathcal{E}'\llbracket e\rrbracket \vals_1 =
 \mathcal{E}'\llbracket e\rrbracket \vals_2 = (0, v_{\mathfrak{of}})$.

The execution of $S_1$ continues as follows.
\begin{tabbing}
xx\=xx\=\kill
\>\>                   $(\text{while}_{\langle n_1\rangle}(e) \, \{S_1'\}, m_1(\vals_1))$\\
\>= \>                 $(\text{while}_{\langle n_1\rangle}((0, v_{\mathfrak{of}})) \, \{S_1'\}, m_1(\vals_1))$\\
\>\>                   by the rule EEval'\\
\>$->$\>               $(\text{while}_{\langle n_1\rangle}(0) \, \{S_1'\}, m_1(\vals_1))$\\
\>\>                   by the E-Oflow1 or E-Oflow2 rule  \\
\>$->$\>               $(\text{skip}, m_1(\vals_1))$ by the Wh-F rule.
\end{tabbing}

Similarly, the execution of $S_2$ gets to the configuration $(\text{skip}, m_2(\vals_2))$.
Loop counters of $S_1$ and $S_2$ are less than 1 and value stores agree on values of used/defined variables in both $S_1$ and $S_2$ as well as the input sequence variable and the I/O sequence variable.

\item The evaluation of $e$ reduces to the same nonzero integer value,
$\mathcal{E}'\llbracket e\rrbracket \vals_1 =
 \mathcal{E}'\llbracket e\rrbracket \vals_2 = (0, v_{\mathfrak{of}})$.

Then the execution of $S_1$ proceeds as follows:
\begin{tabbing}
xx\=xx\=\kill
\>\>                   $(\text{while}_{\langle n_1\rangle}(e) \, \{S_1'\}, m_1(\vals_1))$\\
\>= \>                 $(\text{while}_{\langle n_1\rangle}((v, v_{\mathfrak{of}})) \, \{S_1'\}, m_1(\vals_1))$\\
\>\>                   by the rule EEval'\\
\>$->$\>               $(\text{while}_{\langle n_1\rangle}(v) \, \{S_1'\}, m_1(\vals_1))$\\
\>\>                   by the E-Oflow1 or E-Oflow2 rule  \\
\>$->$\>               $(S_1';\text{while}_{\langle n_1\rangle}(e) \, \{S_1'\}, m_1($\\
\>\>                   $\text{loop}_c^1 \cup \{(n_1) \mapsto 1\}, \vals_1))$ by the Wh-T rule.
\end{tabbing}

Similarly, the execution of $S_2$ proceeds to the configuration
$(S_2';\text{while}_{\langle n_2\rangle}(e) \, \{S_2'\}, m_2(\text{loop}_c^2 \cup \{n_2 \mapsto 1\}, \vals_2))$.

By the hypothesis IH, we show that $S_1'$ and $S_2'$ terminate in the same way and produce the same output sequence when started in the state $m_1(\text{loop}_c^{1_1}, \vals_1)$ and $m_2(\text{loop}_c^{2_1}, \vals_2)$, and $S_1'$ and $S_2'$ have equivalent computation of variables used or defined in both statement sequences if both terminate.
We need to show that all conditions are satisfied for the application of the hypothesis IH.
\begin{itemize}
\item variables in the domain of $\rho$ are not redefined in the execution of $S_2'$.


The above three conditions are by assumption.

By definition, $\text{size}(S_1) = 1 + \text{size}(S_1')$.
Then, $\text{size}(S_1') + \text{size}(S_2') = k + 1 - 2 = k - 1$.

\item Value stores $\vals_1$ and $\vals_2$ agree on values of used variables in $S_1'$ and $S_2'$ as well as the input, I/O sequence variable.

By definition, $\text{OVar}(S_1')\allowbreak \subseteq \text{OVar}(S_1)$.
So are the cases to $S_2'$ and $S_2$.
In addition, value stores $\vals_1$ and $\vals_2$ are not changed in the evaluation of the predicate expression $e$. The condition holds.

\item Values of new configuration variables are consistent in the value store $\vals_2$ and the specialization $\rho$, $\forall id \in \text{Dom}(\rho)\,:\,
    \vals_2(id) = \rho(id)$.

By assumption.
\end{itemize}
By  assumption, $S_1'$ and $S_2'$ terminate in the same way and produce the same output sequence when started in states
$m_1(\text{loop}_c', \vals_1)$ and $m_2(\text{loop}_c', \vals_2)$.
In addition, $S_1'$ and $S_2'$ have equivalent computation of variables used or defined in $S_1'$ and $S_2'$ when started in states
$m_1(\text{loop}_c', \vals_1)$ and $m_2(\text{loop}_c', \vals_2)$.

Then there are two cases.
\begin{enumerate}
\item $S_1'$ and $S_2'$ both do not terminate and produce the same output sequence.

By Lemma~\ref{lmm:multiStepSeqExec}, $S_1';S_1$ and $S_2';S_2$ both do not terminate and produce the same output sequence.

\item $S_1'$ and $S_2'$ both terminate and have equivalent computation of variables used or defined in $S_1'$ and $S_2'$.

By assumption, $(S_1', m_1(\text{loop}_c', \vals_1)) ->*
                (\text{skip}, \,\allowbreak m_1'(\text{loop}_c'', \vals_1'))$;
               $(S_2', m_2(\text{loop}_c', \vals_2)) ->*
                (\text{skip},\allowbreak m_2'(\text{loop}_c'', \vals_2'))$
where $\forall x \in (\text{OVar}(S_1') \cap \text{OVar}(S_2')) \cup
                     \{{id}_I,  {id}_{IO}\},
\allowbreak\vals_1'(x) = \vals_2'(x)$.

Because $S_1$ and $S_2$ have the same predicate expression,
variables used in the predicate expression of $S_1$ and $S_2$ are not in the domain of $\rho$.
By assumption,
$\text{OVar}(S_1') \subseteq \text{OVar}(S_2') \subseteq \text{OVar}(S_2') \cup \text{Dom}(\rho)$ and
$\text{OVar}(S_1') \subseteq \text{OVar}(S_2')$.
Then variables used in the predicate expression of $S_1$ and $S_2$ are either in variables used or defined in both $S_1'$ and $S_2'$ or not.
Therefore value stores $\vals_2'$ and $\vals_1'$ agree on values of variables used in the expression $e$ and even variables used or defined in $S_1$ and $S_2$.
\end{enumerate}
\end{enumerate}

\noindent{Induction step on iterations}

The induction hypothesis (IH) is that, when $i\geq 1$, one of the following holds:
\begin{enumerate}
\item Loop counters for $S_1$ and $S_2$ are always less than $i$ if any is present,
$\forall m_1'(\text{loop}_c^{1'})\, m_2'(\text{loop}_c^{2'})\,:\,
(S_1, m_1(\text{loop}_c^1, \vals_1)) ->* (S_1'', m_1'(\text{loop}_c^{1'})),
\text{loop}_c^{1'}(n_1) < i,
(S_2, m_2(\text{loop}_c^2,\allowbreak \vals_2)) ->* (S_2'', m_2'(\text{loop}_c^{2'})),
\text{loop}_c^{2'}(n_2) < i$,
$S_1$ and $S_2$ terminate in the same way, produce the same output sequence, and have equivalent computation of used/defined variables in both $S_1$ and $S_2$ and the input sequence variable, the I/O sequence variable,
$(S_1, m_1)\allowbreak \equiv_{H} (S_2, m_2)$ and
$(S_1, m_1) \equiv_{O} (S_2, m_2)$ and
$\forall x \in (\text{OVar}(S_1) \cap \text{OVar}(S_2)) \cup
               \{{id}_I,  {id}_{IO}\}\,:\,
               (S_1, m_1) \equiv_{x} (S_2, m_2)$;

\item The loop counter of $S_1$ and $S_2$ are of value less than or equal to $i$,
and there are no reachable configurations
$(S_1, m_1(\text{loop}_c^{1_i}, \vals_{1_i}))$ from $(S_1, m_1(\vals_1))$,
$(S_2, m_2(\text{loop}_c^{2_i}, \vals_{2_i}))$ from $(S_2, \,\allowbreak m_2(\vals_2))$ where all of the following hold:
\begin{itemize}
\item The loop counters of $S_1$ and $S_2$ are of value $i$,
$\text{loop}_c^{1_i}(n_1)\allowbreak = \text{loop}_c^{2_i}(n_2) = i$.

\item Value stores $\vals_{1_i}$ and $\vals_{2_i}$ agree on values of used variables in both $S_1$ and $S_2$ as well as the input sequence variable and the I/O sequence variable,
$\forall x \in \, (\text{OVar}(S_1) \cap \text{OVar}(S_2)) \cup \{{id}_I,  {id}_{IO}\}\,:\,
\vals_{1_i}(x) = \vals_{2_i}(x)$.
\end{itemize}

\item There are reachable configurations
$(S_1, m_1(\text{loop}_c^{1_i}, \vals_{1_i}))$ from $(S_1, m_1(\vals_1))$,
$(S_2, m_2(\text{loop}_c^{2_i}, \vals_{2_i}))$ from $(S_2, \,\allowbreak m_2(\vals_2))$ where all of the following hold:
\begin{itemize}
\item The loop counter of $S_1$ and $S_2$ are of value $i$,
$\text{loop}_c^{1_i}(n_1)\allowbreak = \text{loop}_c^{2_i}(n_2) = i$.

\item Value stores $\vals_{1_i}$ and $\vals_{2_i}$ agree on values of used variables in both $S_1$ and $S_2$ as well as the input sequence variable and the I/O sequence variable,
$\forall x \in \, (\text{OVar}(S_1) \cap \text{OVar}(S_2)) \cup \{{id}_I,  {id}_{IO}\}\,:\,
\vals_{1_i}(x) = \vals_{2_i}(x)$.
\end{itemize}
\end{enumerate}

Then we show that, when $i+1$, one of the following holds:
The induction hypothesis (IH) is that, when $i\geq 1$, one of the following holds:
\begin{enumerate}
\item Loop counters for $S_1$ and $S_2$ are always less than $i+1$ if any is present,
$\forall m_1'(\text{loop}_c^{1'})\, m_2'(\text{loop}_c^{2'})\,:\,
(S_1, m_1(\text{loop}_c^1, \vals_1)) ->* (S_1'', m_1'(\text{loop}_c^{1'})),
\text{loop}_c^{1'}(n_1) < i+1,
(S_2, m_2(\text{loop}_c^2,\allowbreak \vals_2)) ->* (S_2'', m_2'(\text{loop}_c^{2'})),
\text{loop}_c^{2'}(n_2) < i+1$,
$S_1$ and $S_2$ terminate in the same way, produce the same output sequence, and have equivalent computation of used/defined variables in both $S_1$ and $S_2$ and the input sequence variable, the I/O sequence variable,
$(S_1, m_1)\allowbreak \equiv_{H} (S_2, m_2)$ and
$(S_1, m_1) \equiv_{O} (S_2, m_2)$ and
$\forall x \in (\text{OVar}(S_1) \cap \text{OVar}(S_2)) \cup
               \{{id}_I,  {id}_{IO}\}\,:\,
               (S_1, m_1) \equiv_{x} (S_2, m_2)$;

\item The loop counter of $S_1$ and $S_2$ are of value less than or equal to $i+1$,
and there are no reachable configurations
$(S_1, m_1(\text{loop}_c^{1_{i+1}}, \vals_{1_{i+1}}))$ from $(S_1, m_1(\vals_1))$,
$(S_2, m_2(\text{loop}_c^{2_{i+1}}, \vals_{2_{i+1}}))$ from $(S_2, \,\allowbreak m_2(\vals_2))$ where all of the following hold:
\begin{itemize}
\item The loop counters of $S_1$ and $S_2$ are of value $i+1$,
$\text{loop}_c^{1_{i+1}}(n_1)\allowbreak = \text{loop}_c^{2_{i+1}}(n_2) = i+1$.

\item Value stores $\vals_{1_{i+1}}$ and $\vals_{2_{i+1}}$ agree on values of used variables in both $S_1$ and $S_2$ as well as the input sequence variable and the I/O sequence variable,
$\forall x \in \, (\text{OVar}(S_1) \cap \text{OVar}(S_2)) \cup \{{id}_I,  {id}_{IO}\}\,:\,
\vals_{1_{i+1}}(x) = \vals_{2_{i+1}}(x)$.
\end{itemize}

\item There are reachable configurations
$(S_1, m_1(\text{loop}_c^{1_{i+1}}, \vals_{1_i}))$ from $(S_1, m_1(\vals_1))$,
$(S_2, m_2(\text{loop}_c^{2_{i+1}}, \vals_{2_i}))$ from $(S_2, \,\allowbreak m_2(\vals_2))$ where all of the following hold:
\begin{itemize}
\item The loop counter of $S_1$ and $S_2$ are of value $i$,
$\text{loop}_c^{1_{i+1}}(n_1)\allowbreak = \text{loop}_c^{2_{i+1}}(n_2) = i+1$.

\item Value stores $\vals_{1_{i+1}}$ and $\vals_{2_{i+1}}$ agree on values of used variables in both $S_1$ and $S_2$ as well as the input sequence variable and the I/O sequence variable,
$\forall x \in \, (\text{OVar}(S_1) \cap \text{OVar}(S_2)) \cup \{{id}_{I},  {id}_{IO}\}\,:\,
\vals_{1_{i+1}}(x) = \vals_{2_{i+1}}(x)$.
\end{itemize}
\end{enumerate}

By hypothesis IH, there is no configuration where loop counters of $S_1$ and $S_2$ are of value $i+1$ when any of the following holds:
\begin{enumerate}
\item Loop counters for $S_1$ and $S_2$ are always less than $i$ if any is present,
$\forall m_1'(\text{loop}_c^{1'})\, m_2'(\text{loop}_c^{2'})\,:\,
(S_1, m_1(\text{loop}_c^1, \vals_1)) ->* (S_1'', m_1'(\text{loop}_c^{1'})),
\text{loop}_c^{1'}(n_1) < i,
(S_2, m_2(\text{loop}_c^2,\allowbreak \vals_2)) ->* (S_2'', m_2'(\text{loop}_c^{2'})),
\text{loop}_c^{2'}(n_2) < i$,
$S_1$ and $S_2$ terminate in the same way, produce the same output sequence, and have equivalent computation of used/defined variables in both $S_1$ and $S_2$ and the input sequence variable, the I/O sequence variable,
$(S_1, m_1)\allowbreak \equiv_{H} (S_2, m_2)$ and
$(S_1, m_1) \equiv_{O} (S_2, m_2)$ and
$\forall x \in (\text{OVar}(S_1) \cap \text{OVar}(S_2)) \cup
               \{{id}_I,  {id}_{IO}\}\,:\,
               (S_1, m_1) \equiv_{x} (S_2, m_2)$;

\item The loop counter of $S_1$ and $S_2$ are of value less than or equal to $i$,
and there are no reachable configurations
$(S_1, m_1(\text{loop}_c^{1_i}, \vals_{1_i}))$ from $(S_1, m_1(\vals_1))$,
$(S_2, m_2(\text{loop}_c^{2_i}, \vals_{2_i}))$ from $(S_2, \,\allowbreak m_2(\vals_2))$ where all of the following hold:
\begin{itemize}
\item The loop counters of $S_1$ and $S_2$ are of value $i$,
$\text{loop}_c^{1_i}(n_1)\allowbreak = \text{loop}_c^{2_i}(n_2) = i$.

\item Value stores $\vals_{1_i}$ and $\vals_{2_i}$ agree on values of used variables in both $S_1$ and $S_2$ as well as the input sequence variable, and the I/O sequence variable,
$\forall x \in \, (\text{OVar}(S_1) \cap \text{OVar}(S_2)) \cup \{{id}_I,  {id}_{IO}\}\,:\,
\vals_{1_i}(x) = \vals_{2_i}(x)$.
\end{itemize}
\end{enumerate}

When there are  reachable configurations
$(S_1, m_1(\text{loop}_c^{1_i}, \vals_{1_i}))$ from $(S_1, m_1(\vals_1))$,
$(S_2, m_2(\text{loop}_c^{2_i}, \vals_{2_i}))$ from $(S_2, \,\allowbreak m_2(\vals_2))$ where all of the following hold:
\begin{itemize}
\item The loop counter of $S_1$ and $S_2$ are of value $i$,
$\text{loop}_c^{1_i}(n_1)\allowbreak = \text{loop}_c^{2_i}(n_2) = i$.

\item The loop counter of $S_1$ and $S_2$ are of value $i$,
$\text{loop}_c^{1_i}(n_1)\allowbreak = \text{loop}_c^{2_i}(n_2) = i$.

\item Value stores $\vals_{1_i}$ and $\vals_{2_i}$ agree on values of used variables in both $S_1$ and $S_2$ as well as the input sequence variable and the I/O sequence variable,
$\forall x \in \, (\text{OVar}(S_1) \cap \text{OVar}(S_2)) \cup \{{id}_I,  {id}_{IO}\}\,:\,
\vals_{1_i}(x) = \vals_{2_i}(x)$.
\end{itemize}

By similar argument in base case, we have one of the following holds:
\begin{enumerate}
\item Loop counters for $S_1$ and $S_2$ are always less than $i+1$ if any is present,
$\forall m_1'(\text{loop}_c^{1'})\, m_2'(\text{loop}_c^{2'})\,:\,
(S_1, m_1(\text{loop}_c^1, \vals_1)) ->* (S_1'', m_1'(\text{loop}_c^{1'})),
\text{loop}_c^{1'}(n_1) < i+1,
(S_2, m_2(\text{loop}_c^2,\allowbreak \vals_2)) ->* (S_2'', m_2'(\text{loop}_c^{2'})),
\text{loop}_c^{2'}(n_2) < i+1$,
$S_1$ and $S_2$ terminate in the same way, produce the same output sequence, and have equivalent computation of used/defined variables in both $S_1$ and $S_2$ and the input sequence variable, the I/O sequence variable,
$(S_1, m_1)\allowbreak \equiv_{H} (S_2, m_2)$ and
$(S_1, m_1) \equiv_{O} (S_2, m_2)$ and
$\forall x \in (\text{OVar}(S_1) \cap \text{OVar}(S_2)) \cup
               \{{id}_I,  {id}_{IO}\}\,:\,
               (S_1, m_1) \equiv_{x} (S_2, m_2)$;

\item The loop counter of $S_1$ and $S_2$ are of value less than or equal to $i+1$,
and there are no reachable configurations
$(S_1, m_1(\text{loop}_c^{1_{i+1}}, \vals_{1_{i+1}}))$ from $(S_1, m_1(\vals_1))$,
$(S_2, m_2(\text{loop}_c^{2_{i+1}},\allowbreak \vals_{2_{i+1}}))$ from $(S_2, \,\allowbreak m_2(\vals_2))$ where all of the following hold:
\begin{itemize}
\item The loop counters of $S_1$ and $S_2$ are of value $i$,
$\text{loop}_c^{1_{i+1}}(n_1)\allowbreak = \text{loop}_c^{2_{i+1}}(n_2) = i$.

\item Value stores $\vals_{1_{i+1}}$ and $\vals_{2_{i+1}}$ agree on values of used variables in both $S_1$ and $S_2$ as well as the input sequence variable and the I/O sequence variable,
$\forall x \in \, (\text{OVar}(S_1) \cap \text{OVar}(S_2)) \cup \{{id}_{I+1},  {id}_{IO}\}\,:\,
\vals_{1_{i+1}}(x) = \vals_{2_{i+1}}(x)$.
\end{itemize}

\item There are reachable configurations
$(S_1, m_1(\text{loop}_c^{1_{i+1}}, \vals_{1_{i+1}}))$ from $(S_1, m_1(\vals_1))$,
$(S_2, m_2(\text{loop}_c^{2_{i+1}}, \vals_{2_{i+1}}))$ from $(S_2, \,\allowbreak m_2(\vals_2))$ where all of the following hold:
\begin{itemize}
\item The loop counter of $S_1$ and $S_2$ are of value $i$,
$\text{loop}_c^{1_{i+1}}(n_1)\allowbreak = \text{loop}_c^{2_{i+1}}(n_2) = i+1$.

\item Value stores $\vals_{1_{i+1}}$ and $\vals_{2_{i+1}}$ agree on values of used variables in both $S_1$ and $S_2$ as well as the input sequence variable and the I/O sequence variable,
$\forall x \in \, (\text{OVar}(S_1) \cap \text{OVar}(S_2)) \cup \{{id}_{I+1},  {id}_{IO}\}\,:\,
\vals_{1_{i+1}}(x) = \vals_{2_{i+1}}(x)$.
\end{itemize}
\end{enumerate}
\end{proof} 

\subsection{Proof rule for enumeration type extension}

Enumeration types allow developers to list similar items. New code is usually accompanied with the introduction of new enumeration labels. Figure~\ref{fig:enumTypeExtExample} shows an example of the update. The new enum label $o_2$ gives a new option for matching the value of the variable $a$, which introduce the new code $b := 3 + c$.
\begin{figure}
\begin{small}
\begin{tabbing}
xxxxxx\=xxx\=xxx\=xxx\=xxxxxxxxxxxxxxxx\=xxxx\=xxx\=xxxxx\= \kill
\>1: \>{\bf enum} ${id}$ \{$o_1$\}\>\>          \>1': \> {\bf enum} ${id}$ \{$o_1, o_2$\} \> \\
\>2: \>a : enum $id$\>\>                        \>2': \> a : enum $id$                    \> \\
\>3: \>{\bf If} $(a == o_1)$ {\bf then}\>\>     \>3': \> {\bf If} $(a == o_1)$ {\bf then} \> \\
\>4: \> \>output $2 + c$ \>                       \>4': \> \>output $2 + c$               \> \\
\>5: \>\>\>                                     \>5': \> {\bf If} $(a == o_2)$ {\bf then} \> \\
\>6: \>\>\>                                     \>6': \> \>output $3 + c$                 \> \\
\>\\
\> \> old\>\>                                         \>  \> new \>
\end{tabbing}
\end{small}
\caption{Enumeration type extension}\label{fig:enumTypeExtExample}
\end{figure}
To show updates ``enumeration type extension" to be backward compatible,
we assume that values of enum variables, used in the If-predicate introducing the new code, are only from inputs that cannot be translated to new enum labels.

In order to have a general definition of the update class, we show a relation between two sequences of enumeration type definitions, called proper subset.
\begin{definition}
{\bf (Extension relation of enumeration types)}
Let $\text{EN}_1, \text{EN}_2$ be two different sequences of enumeration type definitions.
$\text{EN}_1$ is a subset of $\text{EN}_2$, written $\text{EN}_1 \subset \text{EN}_2$, iff one of the following holds:
\begin{enumerate}
\item $\text{EN}_1 = ``\text{enum}\, id \, \{{el}_1\}",
       \text{EN}_2 = ``\text{enum}\, id \, \{{el}_2\}"$ where labels in type ``enum $id$" in $\text{EN}_1$ are a subset of those in $\text{EN}_2$,
    ${el}_2 = {el}_1, {el}$ and ${el} \neq \varnothing$;

\item $\text{EN}_1, \text{EN}_2$ include more than one enumeration type definitions
 $\text{EN}_1 = \, ``\text{enum}\, id \, \{{el}_1\}, \text{EN}_1'",
  \text{EN}_2 = \, ``\text{enum}\, id \, \{{el}_2\}, \text{EN}_2'"$
 where one of the following holds:
\begin{enumerate}
\item $(\text{EN}_1' \subset \text{EN}_2')$ and $({el}_1 = {el}_2) \, \vee \, ({el}_2 = {el}_1,el)$;

\item $(\text{EN}_1' \subset \text{EN}_2') \, \vee \, (\text{EN}_1' = \text{EN}_2')$ and
$``\text{enum}\, id \, \{{el}_1\}" \, \subset \, ``\text{enum}\, id \, \{{el}_2\}"$.
\end{enumerate}
\end{enumerate}
\end{definition}

\begin{definition}\label{def:enumTypeExtFuncWide}
{\bf (Enumeration type extension)}
Let $P_1, P_2$ be two programs where enumeration type definitions $\text{EN}_1$ in $P_1$ are a subset of $\text{EN}_2$ in $P_2$, $\text{EN}_1 \subset \text{EN}_2$ and $E$ are new enum labels in $P_2$.
A statement sequence $S_2$ in a program $P_2$ includes updates of enumeration type extension compared with a statement sequence $S_1$ in $P_1$, written $S_2\, {\approx}_{E}^S\, S_1$, iff one of the following holds:
\begin{enumerate}
\item $S_2 = ``\text{If}(id \text{==} l)\, \text{then} \{S_2^t\} \, \text{else}\{S_2^f\}"$ and all of the following hold:
\begin{itemize}
\item $l \in E$;

\item The variable $id$ is not lvalue in an assignment statement,
 $``id := \text{e}" \, \notin \, P_2$;

\item $S_2^f\, {\approx}_{E}^S\, S_1$;
\end{itemize}

\item
$S_1 = ``\text{If}(e) \, \text{then}\{S_1^t\} \, \text{else} \{S_1^f\}"$,
$S_2 = ``\text{If}(e) \, \text{then}\{S_2^t\} \, \text{else} \{S_2^f\}"$ where
$(S_2^t\, {\approx}_{E}^S\, S_1^t) \, \wedge \, (S_2^f\, {\approx}_{E}^S\, S_1^f)$;

\item
$S_1 = ``\text{while}_{\langle n_1\rangle}(e) \, \{S_1'\}"$,
$S_2 = ``\text{while}_{\langle n_2\rangle}(e) \, \{S_2'\}"$ where

\noindent$S_2'\, {\approx}_{E}^S\, S_1'$;

\item $S_1 \approx_{O}^S S_2$;

\item $S_1 = S_1';s_1$ and $S_2 = S_2';s_2$ where
$(S_2'\, \approx_{E}^S\, S_1') \, \wedge \,
 (S_2'\, \approx_{H}^S\, S_1') \, \wedge \,
 \big(\forall x \in \text{Imp}(s_1, {id}_{IO}) \cup \text{Imp}(s_1, {id}_{IO})\,:\, (S_2'\, \approx_{x}^S\, S_1')\big) \, \wedge \,
 (s_2\, \approx_{E}^S\, s_1)$.
\end{enumerate}
\end{definition}

We show that two programs terminate in the same way, produce the same output sequence, and have equivalent computation of variables defined in both of them in executions if there are updates of enumeration type extension between them.
\begin{lemma}\label{lmm:enumTypeExtFuncWideAppend}
Let $S_1$ and $S_2$ be two statement sequences in programs $P_1$ and $P_2$ respectively where there are updates of enumeration type extensions in $S_2$  of $P_2$ compared with $S_1$ of $P_1$, $S_2\, {\approx}_{E}^S\, S_1$.
If $S_1$ and $S_2$ start in states $m_1(\vals_1)$ and $m_2(\vals_2)$ such that both of the following hold:
\begin{itemize}
\item Value stores $\vals_1$ and $\vals_2$ agree on values of output deciding variables in both $S_1$ and $S_2$ including the input sequence variable and the I/O sequence variable,
$\forall x \in (\text{OVar}(S_1)\cup \text{OVar}(S_2)) \cup \{{id}_I, {id}_{IO}\}\,:\,
 \vals_1(x) = \vals_2(x)$;

\item No variables  used in $S_2$ are of initial value of enum labels in $E$,
$\forall x \in \text{Use}(S_2)\,:\, (\vals_2(x) \notin E)$;

\item No inputs are translated to any label in $E$ during the execution of $S_2$;
\end{itemize}
then $S_1$ and $S_2$ terminate in the same way, produce the same output sequence, and when $S_1$ and $S_2$ both terminate, they have equivalent computation of used variables and defined variables,
\begin{itemize}
\item $(S_1, m_1) \equiv_{H} (S_2, m_2)$;

\item $(S_1, m_1) \equiv_{O} (S_2, m_2)$;

\item $\forall x \in \text{OVar}(S_1)\cup \text{OVar}(S_2)\,:\,
 \allowbreak (S_1, m_1) \equiv_{x} (S_2, m_2)$;
\end{itemize}
\end{lemma}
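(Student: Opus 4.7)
The plan is to prove Lemma~\ref{lmm:enumTypeExtFuncWideAppend} by induction on $\text{size}(S_1) + \text{size}(S_2)$, performing a case analysis that follows the structure of Definition~\ref{def:enumTypeExtFuncWide}, and reducing each inductive case to the three previously established program-equivalence theorems (Theorem~\ref{thm:equivCompMain}, Theorem~\ref{thm:mainTermSameWayLocal}, and Theorem~\ref{thm:sameIOtheoremFuncWide}) or to the induction hypothesis. The key insight, and what makes the definition sound, is the following invariant: in every execution of $S_2$ started from $m_2$, no program variable ever takes a value in the new-label set $E$. This invariant is forced by the three bulleted side conditions of the hypothesis --- initial value stores exclude $E$, input conversion never produces a label in $E$, and in case~1 the variable $id$ tested against a new label is never the left value of an assignment. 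Once this invariant is in hand, the newly introduced branch in case~1 is statically unreachable from the dynamic standpoint, and the remaining cases reduce to bookkeeping.

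First, I would prove the invariant as a small lemma: if $\vals_2$ satisfies $\forall x \in \text{Use}(S_2)\,:\,\vals_2(x) \notin E$ and no input is converted to a label in $E$, then in every reachable state of the execution of $S_2$ from $m_2$, every variable whose value could flow into the predicate $id == l$ of case~1 still carries a value outside $E$. This is proved by induction on execution steps, using the semantic rules In-4 (which rules out labels from $E$ because inputs do not translate there) and Var/As-Scl (together with the syntactic side condition that $id$ is never assigned). Then case~1 of Definition~\ref{def:enumTypeExtFuncWide} is handled directly: by the EEval and Eq-F rules, the guard $id==l$ reduces to $0$, the If-F rule selects $S_2^f$, and we invoke the induction hypothesis on the smaller pair $(S_1, S_2^f)$. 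Crash flags and value-store agreement on $\text{OVar}(S_1) \cup \text{OVar}(S_2^f) \cup \{{id}_I, {id}_{IO}\}$ transfer because the guard evaluation does not mutate the store or read outside $\text{Use}(e) \cup \{id\}$.

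Next, I would dispatch the remaining four cases mechanically. Case~4, where $S_1 \equiv_O^S S_2$, follows immediately from Theorem~\ref{thm:sameIOtheoremFuncWide}, Theorem~\ref{thm:mainTermSameWayLocal} (recall $S_1 \equiv_O^S S_2$ entails the termination-equivalence condition by its definition) and Theorem~\ref{thm:equivCompMain}. Case~2 (matching If) and case~3 (matching while) proceed by the same template used throughout Section~\ref{sec:equivalence}: show the guard evaluates identically in both states using the already-assumed agreement on $\text{Use}(e) \subseteq \text{OVar}(S_1) \cap \text{OVar}(S_2)$, step both programs to corresponding branch/body configurations, and apply the induction hypothesis to the branches/bodies, whose sizes are strictly smaller. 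For case~3 I anticipate needing a loop lemma analogous to Lemma~\ref{lemma:loopSameIOSeq} that additionally maintains the ``values outside $E$'' invariant across iterations; this is why the small invariant lemma above is valuable. Case~5 follows the same two-phase pattern as case~(a) in the proof of Theorem~\ref{thm:sameIOtheoremFuncWide}: apply the IH to $S_1' \approx_E^S S_2'$ to get termination-equivalence and output agreement on the prefix; if both prefixes terminate, Theorem~\ref{thm:equivCompMain} together with the assumed equivalent computation of $\text{Imp}(s_i, {id}_{IO})$ transfers the store-agreement hypothesis to the starting configurations of $s_1$ and $s_2$, and the IH applies to the suffix.

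The main obstacle will be propagating the ``no variable ever holds a value in $E$'' invariant across loop iterations in case~3. The difficulty is that the loop body may assign to ordinary variables whose values then flow into other variables, and we must argue inductively that none of these flows can introduce an $E$-valued label. The argument rests on the syntactic fact that $E$-valued labels can only be produced by (i) constants literally mentioning $E$, which do not appear in $S_1$ or in $S_2$ outside the newly guarded branches, (ii) inputs, excluded by hypothesis, and (iii) propagation from another variable already holding an $E$-label, which is ruled out by induction. Making this precise will likely require strengthening the loop-case induction hypothesis to state jointly ``behaves like $S_1$ on the given execution and preserves the out-of-$E$ invariant,'' rather than proving the two properties separately.
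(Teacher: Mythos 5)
Your proposal follows essentially the same route as the paper's proof: induction on $\text{size}(S_1)+\text{size}(S_2)$ with a case analysis mirroring Definition~\ref{def:enumTypeExtFuncWide}, showing the new guard $id==l$ reduces to $0$ so the If-F rule selects the old branch, handling matching If/while by identical guard evaluation plus the induction hypothesis (with a dedicated loop lemma for the while case), and discharging the sequence case by the prefix/suffix two-phase argument via Theorem~\ref{thm:equivCompMain} and Corollary~\ref{coro:termSeq}. The one genuine difference is that you propose to isolate and prove the ``no variable ever holds a value in $E$'' invariant as an explicit step-indexed lemma, whereas the paper only argues this informally (it reasons locally that the guarded variable $id$ is never an assignment target and so can only receive values from inputs or its initial value, and in the sequence case asserts without proof that defined variables cannot acquire $E$-labels because such labels do not exist in $P_1$); your explicit invariant, and your observation that the loop case must carry it across iterations, is a strengthening of rigor rather than a change of approach, and is arguably where the paper's own argument is thinnest.
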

\begin{proof}
By induction on the sum of the program size of $S_1$ and $S_2$, $\text{size}(S_1) + \text{size}(S_1)$.

\noindent{Base case}.
$S_1$ is a simple statement $s$, and $S_2 = ``\text{If}(id \text{==} l)\, \text{then} \{s_2^t\} \,\allowbreak \text{else} \{s_2^f\}"$ where all of the following hold:
\begin{itemize}
\item $l \in E$;

\item $s_2^t, s_2^f$ are two simple statements;

\item $s_2^f = s$;
\end{itemize}

We informally argue that the value of the variable $id$ in the predicate expression of $S_2$ only coming from an input value or the initial value.
There are three ways a scalar variable is defined: the execution of an assignment statement, the execution of an input statement or the initial value.
Because $id$ is not lvalue in an assignment statement, then the value of $id$ is only from the execution of an input statement or the initial value.

In addition, by assumption, any output deciding variable is not of the initial value of enum label in $E$; no input values are translated into an enum label in $E$. Then the execution of $S_2$ proceeds as follows:
\begin{tabbing}
xx\=xx\=\kill
\>\>                   $(\text{If}(id \text{==} l) \, \text{then}\{s_2^t\} \, \text{else} \{s_2^f\}, m_2(\vals_2))$\\
\>$->$\>               $(\text{If}(0) \, \text{then}\{s_2^t\} \, \text{else} \{s_2^f\}, m_2(\vals_2))$\\
\>\>                   by the rule Eq-F\\
\>$->$\>               $(s_2^f, m_2(\vals_2))$ by the If-F rule.
\end{tabbing}

The value store $\vals_2$ is not updated in the execution of $S_2$ so far.
By assumption, value stores $\vals_1$ and $\vals_2$ agree on values of  output deciding variables in both $S_1$ and $S_2$.

By Theorem~\ref{thm:equivCompMain} and~\ref{thm:mainTermSameWayLocal}, $S_1$ and $S_2$ terminate in the same way, produce the same I/O sequence. The lemma holds.

\noindent{Induction step}.

The hypothesis is that this lemma holds when the sum $k$ of the program size of $S_1$ and $S_2$ are great than or equal to 4, $k\geq 4$.

We then show that this lemma holds when the sum of the program size of $S_1$ and $S_2$ is $k+1$.
There are cases regarding $S_2 {\approx}_{E}^S S_1$.
\begin{enumerate}
\item $S_1$ and $S_2$ are both ``If" statement:

$S_1 = ``\text{If}(e) \, \text{then}\{S_1^t\} \, \text{else} \{S_1^f\}"$,
$S_2 = ``\text{If}(e) \, \text{then}\{S_2^t\} \, \text{else} \{S_2^f\}"$ where both of the following hold
\begin{itemize}
\item $S_2^t {\approx}_{E}^S S_1^t$;

\item $S_2^f {\approx}_{E}^S S_1^f$;
\end{itemize}

By the definition of $\text{Imp}_o(S_1)$, variables used in the predicate expression $e$ are a subset of  output deciding variables in $S_1$ and $S_2$,
$\text{Use}(e) \subseteq \text{OVar}(S_1) \cap \text{OVar}(S_2)$.
By assumption, corresponding variables used in $e$ are of same value in value stores $\vals_1$ and $\vals_2$. By Lemma~\ref{lmm:expEvalSameVal}, the expression evaluates to the same value w.r.t value stores $\vals_1$ and $\vals_2$. There are three possibilities.
\begin{enumerate}
\item The evaluation of $e$ crashes,
$\mathcal{E}'\llbracket e\rrbracket \vals_1 =
 \mathcal{E}'\llbracket e\rrbracket \vals_2 = (\text{error}, v_{\mathfrak{of}})$.

The execution of $S_1$ continues as follows:
\begin{tabbing}
xx\=xx\=\kill
\>\>                   $(\text{If}(e) \, \text{then}\{S_1^t\} \, \text{else} \{S_1^f\}, m_1(\vals_1))$\\
\>$->$\>               $(\text{If}((\text{error}, v_{\mathfrak{of}})) \, \text{then}\{S_1^t\} \, \text{else} \{S_1^f\}, m_1(\vals_1))$\\
\>\>                   by the rule EEval'\\
\>$->$\>               $(\text{If}(0) \, \text{then}\{S_1^t\} \, \text{else} \{S_1^f\}, m_1(1/\mathfrak{f}))$\\
\>\>                   by the ECrash rule  \\
\>{\kStepArrow [i] }\> $(\text{If}(0) \, \text{then}\{S_1^t\} \, \text{else} \{S_1^f\}, m_1(1/\mathfrak{f}))$ for any $i>0$\\
\>\>                   by the Crash rule.
\end{tabbing}

Similarly, the execution of $S_2$ started from the state $m_2(\vals_2)$ crashes.
The lemma holds.

\item The evaluation of $e$ reduces to zero,
$\mathcal{E}'\llbracket e\rrbracket \vals_1 =
 \mathcal{E}'\llbracket e\rrbracket \vals_2 = (0, v_{\mathfrak{of}})$.

The execution of $S_1$ continues as follows.
\begin{tabbing}
xx\=xx\=\kill
\>\>                   $(\text{If}(e) \, \text{then}\{S_1^t\} \, \text{else} \{S_1^f\}, m_1(\vals_1))$\\
\>= \>                 $(\text{If}((0, v_{\mathfrak{of}})) \, \text{then}\{S_1^t\} \, \text{else} \{S_1^f\}, m_1(\vals_1))$\\
\>\>                   by the rule EEval'\\
\>$->$\>               $(\text{If}(0) \, \text{then}\{S_1^t\} \, \text{else} \{S_1^f\}, m_1(\vals_1))$\\
\>\>                   by the E-Oflow1 or E-Oflow2 rule  \\
\>$->$\>               $(S_1^f, m_1(\vals_1))$ by the If-F rule.
\end{tabbing}

Similarly, the execution of $S_2$ gets to the configuration $(S_2^f, m_2(\vals_2))$.

By the hypothesis IH, we show the lemma holds.
We need to show that all conditions are satisfied for the application of the hypothesis IH.
\begin{itemize}
\item $S_2^f {\approx}_{E}^S S_1^f$

By assumption.

\item The sum of  the program size of $S_1^f$ and $S_2^f$ is less than $k$,
$\text{size}(S_1^f) + \text{size}(S_2^f) < k$.

By definition, $\text{size}(S_1) = 1 + \text{size}(S_1^t) + \text{size}(S_1^f)$.
Then, $\text{size}(S_1^f) + \text{size}(S_2^f) < k + 1 - 2 = k - 1$.

\item Value stores $\vals_1$ and $\vals_2$ agree on values of  output deciding variables in $S_1^f$ and $S_2^f$ including the input, I/O sequence variable.

By definition, $\text{OVar}(S_1^f)\allowbreak \subseteq \text{OVar}(S_1)$.
So are the cases to $S_2^f$ and $S_2$.
In addition, value stores $\vals_1$ and $\vals_2$ are not changed in the evaluation of the predicate expression $e$. The condition holds.

\item There are no inputs translated to enum labels in $E$ in $S_2^f$'s execution.

By assumption.
\end{itemize}

By the hypothesis IH, the lemma holds.

\item The evaluation of $e$ reduces to the same nonzero integer value,
$\mathcal{E}'\llbracket e\rrbracket \vals_1 =
 \mathcal{E}'\llbracket e\rrbracket \vals_2 = (0, v_{\mathfrak{of}})$.

By similar to the second subcase above.
\end{enumerate}

\item $S_1$ and $S_2$ are both ``while" statements:

$S_1 = ``\text{while}_{\langle n_1\rangle}(e) \, \{S_1'\}"$,
$S_2 = ``\text{while}_{\langle n_2\rangle}(e) \, \{S_2'\}"$ where

\noindent$S_2' {\approx}_{E}^S S_1'$;

By Lemma~\ref{lmm:enumTypeExtLoopStmt}, we show the lemma holds.
We need to show all the required conditions for the application of Lemma~\ref{lmm:enumTypeExtLoopStmt} holds
\begin{enumerate}
\item No variables are of initial values as new enum labels in $E$;

\item Value stores $\vals_1$ and $\vals_2$ agree on values of variables used in both $S_1$ and $S_2$;

\item Enumeration types in $P_1$ are a subset of those in $P_2$;

The above three conditions are by assumption.

\item The output deciding variables in $S_1'$ are a subset of those in $S_2'$;


The above condition is by Lemma~\ref{lmm:enumTypeExtSameDefAndUse}.

\item $S_1'$ and $S_2'$ produce the same output sequence, terminate in the same way and have equivalent computation of defined variables in both $S_1'$ and $S_2'$ when started in states agreeing on values of variables used in both $S_1'$ and $S_2'$;

Because $\text{size}(S_1) = \text{size}(S_1') + 1$, then the condition holds by the induction hypothesis.
\end{enumerate}
By Lemma~\ref{lmm:enumTypeExtLoopStmt}, the lemma holds.

\item $S_2 = ``\text{If}(id \text{==} l) \, \text{then} \, \{S_2^t\} \, \text{else} \, \{S_2^f\}"$ such that both of the following hold:
  \begin{itemize}
    \item The label $l$ is in $E$, $l \in E$;

    \item The variable $id$ is not lvalue in an assignment statement,
    $\nexists ``id := \text{e}" \, \text{in} \, S_2$;

    \item There are updates of enumeration type extension from $S_2^f$ to $S_1$,
    $S_2^f {\approx}_{E}^S S_1$;
  \end{itemize}
By Lemma~\ref{lmm:enumTypeExtLoopStmt}, we show this lemma holds.
We need to show all the conditions are satisfied for the application of Lemma~\ref{lmm:enumTypeExtLoopStmt}.
\begin{itemize}
\item $S_1'$ and $S_2'$ have same set of  output deciding variables,
$\text{OVar}(S_1') = \text{OVar}(S_2') = \text{OVar}(S)$;

\item The  output deciding variables in $S_1'$ are a subset of those in $S_2'$,
$\text{OVar}(S_1') \subseteq \text{OVar}(S_2')$;

By Lemma~\ref{lmm:enumTypeExtSameDefAndUse}.

\item There are no inputs translated to enum labels in the set $E$.

By assumption.

\item When started in states $m_1'(\vals_1'), m_2'(\vals_1')$ where
value stores $\vals_1'$ and $\vals_2'$ agree on values of  output deciding variables in both $S_1'$ and $S_2'$ as well as the input sequence variable and the I/O sequence variable, and there are no inputs translated to enum labels in $E$,
then $S_1'$ and $S_2'$ produce the same output sequence.

By the induction hypothesis IH. This is because the sum of the program size of $S_1'$ and $S_2'$ is less than $k$. By definition, $\text{size}(S_1) = 1 + \text{size}(S_1')$.
\end{itemize}

\item $S_2 = ``\text{If}(id \text{==} l) \, \text{then} \, \{S_2^t\} \, \text{else} \, \{S_2^f\}"$ such that both of the following hold:
  \begin{itemize}
    \item The label $l$ is in $E$, $l \in E$;

    \item The variable $id$ is not lvalue in an assignment statement,
    $``id := \text{e}" \, \notin \, S_2$;

    \item There are updates of enumeration type extension from $S_2^f$ to $S_1$,
    $S_2^f {\approx}_{E}^S S_1$;
  \end{itemize}

We informally argue that the value of the variable $id$ in the predicate expression of $S_2$ only coming from an input value or the initial value.
There are several ways a scalar variable is defined: the execution of an assignment statement, the execution of an input statement or the initial value.
Because $id$ is not lvalue in an assignment statement, then the value of $id$ is only from the execution of an input statement or initial value. In addition, by assumption, any  used variable is not of initial value of enum label in $E$; no input values are translated into an enum label in $E$.

Then the expression $id \text{==} l$ evaluates to 0. The execution of $S_2$ proceeds as follows.
\begin{tabbing}
xx\=xx\=\kill
\>\>                   $(\text{If}(id \text{==} l) \, \text{then}\{S_2^t\} \, \text{else} \{S_2^f\}, m_2(\vals_2))$\\
\>$->$\>               $(\text{If}(v \text{==} l) \, \text{then}\{S_2^t\} \, \text{else} \{S_2^f\}, m_2(\vals_2))$ where $v \neq l$\\
\>\>                   by the rule Var\\
\>$->$\>               $(\text{If}(0) \, \text{then}\{S_2^t\} \, \text{else} \{S_2^f\}, m_1(\vals_2))$\\
\>\>                   by the Eq-F rule  \\
\>$->$\>               $(S_2^f, m_2(\vals_2))$ by the If-F rule.
\end{tabbing}

By the hypothesis IH, we show the lemma holds.
We need to show the conditions are satisfied for the application of the hypothesis IH.
\begin{itemize}
\item $S_2^f {\approx}_{E}^S S_1$

\item There are no inputs translated to enum labels in $E$ in $S_2^f$'s execution.

\item Initial values of  used variables in $S_2$ are not enum labels in $E$.

The above three conditions are by assumption.

\item The sum of  the program size of $S_1$ and $S_2^f$ is less than $k$,
$\text{size}(S_1) + \text{size}(S_2^f) < k$.

By definition, $\text{size}(S_1) = 1 + \text{size}(S_1^t) + \text{size}(S_1^f)$.
Then, $\text{size}(S_1) + \text{size}(S_2^f) < k + 1 - 1 - \text{size}(S_2^t) < k$.

\item Value stores $\vals_1$ and $\vals_2$ agree on values of  used variables in both $S_1$ and $S_2^f$ as well as the input, I/O sequence variable.

By definition, $\text{OVar}(S_2^f)\allowbreak \subseteq \text{OVar}(S_2)$.
In addition, value stores $\vals_1$ and $\vals_2$ are not changed in the evaluation of the predicate expression $e$. The condition holds.
\end{itemize}
By the hypothesis IH, this lemma holds.

\item $S_1 = S_1';s_1$ and $S_2 = S_2';s_2$ such that both of the following hold:

\begin{itemize}
\item $S_2' \approx_{E}^S S_1'$;

\item $s_2 \approx_{E}^S s_1$;
\end{itemize}

By the hypothesis IH, we show $S_2'$ and $S_1'$ terminate in the same way, produce the same output sequence, and have equivalent computation of  defined variables in $S_1$ and $S_2$.
We need to show that all the conditions are satisfied for the application of the hypothesis IH.
\begin{itemize}
\item $S_2' {\approx}_{E}^S S_1'$;

\item There are no inputs translated to enum labels in $E$ in $S_2'$'s execution.

\item Initial values of  used variables in $S_2'$ are not enum labels in $E$.

The above three conditions are by assumption.

\item The sum of the program size of $S_1'$ and $S_2'$ is less than $k$,
$\text{size}(S_1') + \text{size}(S_2') < k$.

By definition, $\text{size}(S_1) = \text{size}(S_1') + \text{size}(s_1)$.
Then, $\text{size}(S_1') + \text{size}(S_2') < k + 1 - \text{size}(s_2) - \text{size}(s_1) < k$.

\item Value stores $\vals_1$ and $\vals_2$ agree on values of  used variables in both $S_1'$ and $S_2'$ as well as the input, output, I/O sequence variable.

By definition, $\text{OVar}(S_2')\allowbreak \subseteq \text{OVar}(S_2),
                \text{OVar}(S_1')\allowbreak \subseteq \text{OVar}(S_1)$.
In addition, value stores $\vals_1$ and $\vals_2$ are not changed in the evaluation of the predicate expression $e$. The condition holds.
\end{itemize}

By the hypothesis IH, one of the following holds:
\begin{enumerate}
\item $S_1'$ and $S_2'$ both do not terminate.

By Lemma~\ref{lmm:multiStepSeqExec}, executions of $S_1 = S_1';s_1$ and $S_2 = S_2';s_2$ both do not terminate and produce the same output sequence.

\item $S_1'$ and $S_2'$ both terminate.

By assumption,
$(S_2', m_2(\vals_2)) ->* (\text{skip}, m_2'(\vals_2'))$,
$(S_1', m_1(\vals_1)) ->* (\text{skip}, m_1'(\vals_1'))$.

By Corollary~\ref{coro:termSeq},
$(S_2';s_2, m_2(\vals_2)) ->* (s_2, m_2'(\vals_2'))$,
$(S_1';s_1, m_1(\vals_1)) ->* (s_1, m_1'(\vals_1'))$.

By the hypothesis IH, we show that $s_2$ and $s_1$ terminate in the same way, produce the same output sequence and when $s_2$ and $s_1$ both terminate, $s_2$ and $s_1$ have equivalent computation of variables  used or defined in $s_1$ and $s_2$ and the input, output, and I/O sequence variables.

We need to show that all conditions are satisfied for the application of the hypothesis IH.
\begin{itemize}
\item There are updates of ``enumeration type extension" between $s_2$ and $s_1$;

\item There are no input values translated into enum labels in $E$ in the execution of $s_2$;

The above two conditions are by assumption.

\item The sum of the program size $s_2$ and $s_1$ is less than or equals to $k$;

By definition, $\text{size}(S_2') \geq 1,  \text{size}(S_1') \geq 1$.
Therefore, $\text{size}(s_2) + \text{size}(s_1) < k+1 - \text{size}(S_2') - \text{size}(S_1') \leq k$.

\item Value stores $\vals_1'$ and $\vals_2'$ agree on values of  used variables in $s_2$ and $s_1$ as well as the input, output, I/O sequence variable.

By Lemma~\ref{lmm:enumTypeExtSameDefAndUse},
$\text{OVar}(s_1) \subseteq \text{OVar}(s_2)$, then $\text{OVar}(s_2)\,\allowbreak   \cap \text{OVar}(s_1) = \text{OVar}(s_1)$.
Similarly, by Lemma~\ref{lmm:enumTypeExtSameDefAndUse}, $\text{OVar}(S_1') \subseteq \text{OVar}(S_2')$. For any variable $id$ in $\text{OVar}(s_1)$, if $id$ is not in $\text{OVar}(S_1')$,
then the value of $id$ is not changed in the execution of $S_1'$ and $S_2'$,
$\vals_1'(id) = \vals_1(id) = \vals_2(id) = \vals_2'(id)$.
Otherwise, the variable $id$ is defined in the execution of $S_1'$ and $S_2'$, by assumption, $\vals_1'(id) = \vals_2'(id)$.
The condition holds.

\item Values of  used variables in $s_2$ are not of value as enum labels in $E$,
$\forall id \in \text{OVar}(s_2)\,:\, \vals_2'(id) \in E$.

By assumption, initial values of  used variables in $s_2$ are not of values as enum labels in $E$. $S_2'$ and $S_1'$ have equivalent computation of  defined variables in $S_2'$ and $S_1'$.
Because enum labels are not defined in $P_1$, defined variables in the execution of $S_2'$ and $S_1'$ are not of values as enum labels in $E$.
\end{itemize}
By the hypothesis IH, the lemma holds.
\end{enumerate}

\end{enumerate}
\end{proof}

We show a auxiliary lemma telling that the two programs with updates of enumeration type extension have same set of used variables and the same set of  defined variables.
\begin{lemma}\label{lmm:enumTypeExtSameDefAndUse}
If there are updates of enumeration type extension in a statement sequence $S_2$ against a statement sequence $S_1$, $S_2 {\approx}_{E}^S S_1$,
then the output deciding variables in $S_1$ are a subset of those in $S_2$,
$\text{OVar}(S_1) \subseteq \text{OVar}(S_2)$.
\end{lemma}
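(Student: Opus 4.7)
\medskip

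The plan is to prove the lemma by strong induction on $\text{size}(S_1) + \text{size}(S_2)$, with a case analysis mirroring the five clauses of Definition~\ref{def:enumTypeExtFuncWide}. In each inductive case, I will reduce $\text{OVar}(S_1) \subseteq \text{OVar}(S_2)$ to subset inclusions among $\text{Imp}_o$ and $\text{TVar}_o$, and then appeal to monotonicity of $\text{Imp}(\cdot,\cdot)$ in its second argument (Lemma~\ref{lmm:impVarUnionLemma}) together with the structural definitions of $\text{Imp}_o$ and $\text{TVar}_o$ (Definitions~\ref{def:ImpVarO},~\ref{def:TVarO}).

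For the base situation, when $S_2 = ``\text{If}(id == l)\,\text{then}\{S_2^t\}\,\text{else}\{S_2^f\}"$ with $l \in E$ and $S_2^f \approx_{E}^S S_1$, the induction hypothesis gives $\text{OVar}(S_1) \subseteq \text{OVar}(S_2^f)$. By the syntactic definitions, $\text{Imp}_o(S_2^f) \subseteq \text{Imp}_o(S_2)$ and $\text{TVar}_o(S_2^f) \subseteq \text{TVar}_o(S_2)$ (both via the If-case of their recursive definitions, which unions $\text{Use}(id == l)$ with the branch contributions), so $\text{OVar}(S_2^f) \subseteq \text{OVar}(S_2)$ and inclusion follows by transitivity. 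The symmetric If-case and the while-case are handled analogously: the recursive structure of $\text{OVar}$ on a compound statement adds only $\text{Use}(e)$ plus the contributions of its constituent parts, and by IH each constituent of $S_1$ has its $\text{OVar}$ contained in the corresponding constituent of $S_2$. For the while case, one also needs that $\text{Imp}(S_1'', X)$ is monotone in the subprogram when $\text{OVar}(S_1'') \subseteq \text{OVar}(S_2'')$; this requires a short auxiliary observation but is routine given our standing lemmas.

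The case $S_1 \approx_{O}^S S_2$ is immediate by Corollary~\ref{coro:sameOVar}, which already gives equality (and in particular inclusion) of output deciding variables. The sequential case $S_1 = S_1'; s_1$, $S_2 = S_2'; s_2$ is where most of the bookkeeping lives: the IH yields both $\text{OVar}(S_1') \subseteq \text{OVar}(S_2')$ and $\text{OVar}(s_1) \subseteq \text{OVar}(s_2)$. Using the recursive definitions on a sequence, I can write $\text{TVar}_o(S_1) = \text{TVar}_o(S_1') \cup \text{TVar}(S_1') \cup \text{Imp}(S_1', \text{TVar}_o(s_1))$ (depending on whether $s_1$ contains an output) and similarly for $\text{Imp}_o$. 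The containment $\text{Imp}(S_1', Y_1) \subseteq \text{Imp}(S_2', Y_2)$ whenever $Y_1 \subseteq Y_2$ and $S_1', S_2'$ are sequentially compatible reduces, via Lemma~\ref{lmm:impVarUnionLemma} and the standing assumption $S_2' \approx_{H}^S S_1'$ (which by Corollary~\ref{coro:sameTermVarFromEquivTerm} and Lemma~\ref{lmm:sameImpfromEquivCompCond} synchronizes the imported variables), to the inductive inclusion on the prefix.

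I expect the main obstacle to be the sequential case, specifically establishing that $\text{Imp}(S_1', \text{TVar}_o(s_1)) \subseteq \text{Imp}_o(S_2) \cup \text{TVar}_o(S_2)$ when $s_1$ and $s_2$ differ but the prefix programs are only known to be $\approx_{H}^S$-related and equivalent on certain computations. The subtlety is that $\text{OVar}(s_1) \subseteq \text{OVar}(s_2)$ by IH does not by itself let us push the inclusion through an arbitrary $\text{Imp}(S_1', \cdot)$; we must exploit that $S_2' \approx_{H}^S S_1'$ and $\forall x \in \text{Imp}(s_1, id_{IO}) \cup \text{Imp}(s_1, id_{IO})\,:\, S_2' \approx_{x}^S S_1'$ from Definition~\ref{def:enumTypeExtFuncWide}(5), combined with the fact that $\approx_{x}^S$ implies equal imported-variable sets (Lemma~\ref{lmm:sameImpfromEquivCompCond}). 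Once this is in place, the rest of the case amounts to unfolding the definitions of $\text{Imp}_o$ and $\text{TVar}_o$ on sequences and reassembling the unions, which is mechanical.
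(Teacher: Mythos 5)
Your proposal is correct and follows essentially the same route as the paper, whose entire proof of this lemma is the one-line remark ``By induction on the sum of the program size of $S_1$ and $S_2$''; your case analysis over Definition~\ref{def:enumTypeExtFuncWide}, the appeal to Corollary~\ref{coro:sameOVar} for the $\approx_{O}^S$ case, and the use of $\approx_{H}^S$ together with Corollary~\ref{coro:sameTermVarFromEquivTerm} and Lemma~\ref{lmm:sameImpfromEquivCompCond} in the sequential case are exactly the details the paper leaves implicit. The only refinement worth recording is that the induction is cleanest if you prove $\text{Imp}_o(S_1)\subseteq\text{Imp}_o(S_2)$ and $\text{TVar}_o(S_1)\subseteq\text{TVar}_o(S_2)$ as two separate inclusions (as the paper does for the analogous Lemmas~\ref{lmm:sameImpOVar} and~\ref{lmm:sameTVarO}), since the union form of $\text{OVar}$ alone does not let you push the inclusion through $\text{Imp}(S_1',\cdot)$ in the sequence case.
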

\begin{proof}
By induction on the sum of the program size of $S_1$ and $S_2$.
\end{proof}

\begin{lemma}\label{lmm:enumTypeExtLoopStmt}
Let $S_1 = \text{while}_{\langle n_1\rangle}(e) \, \{S_1'\}$ and
    $S_2 = \text{while}_{\langle n_2\rangle}(e) \,\allowbreak \{S_2'\}$ be two loop statements in programs $P_1$ and $P_2$ respectively where all of the following hold:
\begin{itemize}
\item Enumeration types ${EN}_1$ in $P_1$ are a proper subset of ${EN}_2$ in $P_2$, ${EN}_1 \subset {EN}_2$, such that there are a set of enum labels $E$ only defined in $P_2$;

\item When started in states agreeing on values of  output deciding variables in both $S_1'$ and $S_2'$ as well as the input sequence variable and the I/O sequence variable,
initial values of  used variables in $S_2'$ are not enum labels in $E$,
and there are no inputs in $S_2$'s execution translated into any label in $E$,
    $\forall x \in \text{OVar}(S_1') \cup \{{id}_I,  {id}_{IO}\}\,\, \forall m_1(\vals_1)\,\,\allowbreak m_2(\vals_2)\,:\,
    \vals_1(x) = \vals_2(x)$, and
$S_1'$ and $S_2'$ terminate in the same way, produce the same output sequence, and have equivalent computation of defined variables in $S_1'$ and $S_2'$ as well as the input sequence variable and the I/O sequence variable
    $((S_1', m_1) \equiv_{H} (S_2', m_2)) \wedge
     ((S_1', m_1) \equiv_{O} (S_2', m_2)) \wedge
     (\forall x \in \text{OVar}(S_1) \cup \text{OVar}(S_2) \cup \{{id}_I, {id}_{IO}\}\,:\,$

\noindent$(S_1', \allowbreak m_1) \equiv_{x} (S_2', m_2))$;
\end{itemize}
If $S_1$ and $S_2$ start in states $m_1(\text{loop}_c^1, \vals_1), m_2(\text{loop}_c^2, \vals_2)$, with loop counters of $S_1$ and $S_2$ not initialized ($S_1, S_2$ have not executed yet), value stores agree on values of  output deciding variables in $S_1$ and $S_2$ as well as the input sequence variable,  the I/O sequence variable,
initial values of  used variables in $S_2$ are not of values as enum labels in $E$,
no inputs are translated into enum labels in $E$,
  then, for any positive integer $i$, one of the following holds:
\begin{enumerate}
\item Loop counters for $S_1$ and $S_2$ are always less than $i$ if any is present,
$\forall m_1'(\text{loop}_c^{1'})\, m_2'(\text{loop}_c^{2'})\,:\,
(S_1, m_1(\text{loop}_c^1, \vals_1)) ->* (S_1'', m_1'(\text{loop}_c^{1'})),
\text{loop}_c^{1'}(n_1) < i,
(S_2, m_2(\text{loop}_c^2,\allowbreak \vals_2)) ->* (S_2'', m_2'(\text{loop}_c^{2'})),
\text{loop}_c^{2'}(n_2) < i$,
$S_1$ and $S_2$ terminate in the same way, produce the same output sequence, and have equivalent computation of  output deciding variables in both $S_1$ and $S_2$ and the input sequence variable,  the I/O sequence variable,
$(S_1, m_1)\allowbreak \equiv_{H} (S_2, m_2)$ and
$(S_1, m_1) \equiv_{O} (S_2, m_2)$ and
$\forall x \in (\text{OVar}(S_1) \cup \text{OVar}(S_2)) \cup
               \{{id}_I,  {id}_{IO}\}\,:\,
               (S_1, m_1) \equiv_{x} (S_2, m_2)$;

\item The loop counter of $S_1$ and $S_2$ are of value less than or equal to $i$,
and there are no reachable configurations
$(S_1, m_1(\text{loop}_c^{1_i}, \vals_{1_i}))$ from $(S_1, m_1(\vals_1))$,
$(S_2, m_2(\text{loop}_c^{2_i}, \vals_{2_i}))$ from $(S_2, \,\allowbreak m_2(\vals_2))$ where all of the following hold:
\begin{itemize}
\item The loop counters of $S_1$ and $S_2$ are of value $i$,
$\text{loop}_c^{1_i}(n_1)\allowbreak = \text{loop}_c^{2_i}(n_2) = i$.

\item Value stores $\vals_{1_i}$ and $\vals_{2_i}$ agree on values of  output deciding variables in both $S_1$ and $S_2$ as well as the input sequence variable  and the I/O sequence variable,
$\forall x \in \, (\text{OVar}(S_1) \cap \text{OVar}(S_2)) \cup \{{id}_I,   {id}_{IO}\}\,:\,
\vals_{1_i}(x) = \vals_{2_i}(x)$.
\end{itemize}

\item There are reachable configurations
$(S_1, m_1(\text{loop}_c^{1_i}, \vals_{1_i}))$ from $(S_1, m_1(\vals_1))$,
$(S_2, m_2(\text{loop}_c^{2_i}, \vals_{2_i}))$ from $(S_2, \,\allowbreak m_2(\vals_2))$ where all of the following hold:
\begin{itemize}
\item The loop counter of $S_1$ and $S_2$ are of value $i$,
$\text{loop}_c^{1_i}(n_1)\allowbreak = \text{loop}_c^{2_i}(n_2) = i$.

\item Value stores $\vals_{1_i}$ and $\vals_{2_i}$ agree on values of  output deciding variables in both $S_1$ and $S_2$ as well as the input sequence variable and the I/O sequence variable,
$\forall x \in \, (\text{OVar}(S_1) \cap \text{OVar}(S_2)) \cup \{{id}_I,   {id}_{IO}\}\,:\,
\vals_{1_i}(x) = \vals_{2_i}(x)$.
\end{itemize}
\end{enumerate}
\end{lemma}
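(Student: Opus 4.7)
My plan is to prove this lemma by induction on $i$, mirroring the structure of Lemma~\ref{lmm:loopTermInSameWayIntermediate} and Lemma~\ref{lmm:additionalParamLoopStmt}, since this is essentially a coupling argument for loops: at each iteration count, either both loops have already exited/diverged in a matched way (Case~1), or neither program can reach an aligned iteration $i$ (Case~2), or they both reach iteration $i$ with value stores agreeing on the relevant variables (Case~3).

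For the base case $i=1$, I would first observe that $\text{Use}(e) \subseteq \text{OVar}(S_1) \cap \text{OVar}(S_2)$ by the definition of $\text{Imp}_o$, so by Lemma~\ref{lmm:expEvalSameVal} the predicate $e$ evaluates to the same pair under $\vals_1$ and $\vals_2$. A three-way case split follows: if evaluation crashes or yields zero, the execution of $S_1$ and $S_2$ behaves identically (crash-then-Crash rule, or Wh-F rule), the loop counters never reach $1$, and Case~1 holds; if evaluation yields a nonzero $v$, one Wh-T step puts the programs at $(S_1';S_1, m_1(\text{loop}_c^1[1/n_1],\vals_1))$ and $(S_2';S_2, m_2(\text{loop}_c^2[1/n_2],\vals_2))$, after which I invoke the lemma's hypothesis on $S_1',S_2'$. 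Either $S_1'$ and $S_2'$ both fail to terminate (yielding Case~2 via Lemma~\ref{lmm:multiStepSeqExec}), or they both terminate with value stores agreeing on $\text{OVar}(S_1)\cup\text{OVar}(S_2)\cup\{{id}_I,{id}_{IO}\}$, yielding Case~3 at counter $1$ by Corollary~\ref{coro:termSeq}.

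For the inductive step, I assume the three-case disjunction holds at $i$ and show it at $i+1$. Cases~1 and~2 at $i$ immediately yield the corresponding cases at $i+1$ (loop counters stay strictly below $i<i+1$, or the nonexistence of a matched counter-$i$ configuration rules out any counter-$(i+1)$ configuration by monotonicity of the loop counter, Lemma~\ref{lmm:loopCntStepwiseInc}). In Case~3 at $i$, I take the reachable configurations $(S_1,m_{1_i})$ and $(S_2,m_{2_i})$ and repeat the base-case argument from them: evaluate $e$, split into three subcases, and conclude as above.

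The main obstacle will be propagating the auxiliary invariants that are required to invoke the hypothesis on $S_1',S_2'$ at each iteration. Specifically, the assumption that no variable in $\text{Use}(S_2)$ holds a value in $E$ must be preserved across iterations of $S_2'$; this requires arguing that after one body execution, no defined variable can acquire a label in $E$, because such labels cannot appear as constants in $P_1$-shaped expressions (they are only mentioned in the new guard $id\;\text{==}\;l$, which does not assign $id$), cannot be produced by inputs (by assumption), and the initial values were already outside $E$. Coupled with this, I must use Lemma~\ref{lmm:enumTypeExtSameDefAndUse} to conclude $\text{OVar}(S_1')\subseteq \text{OVar}(S_2')$ so that agreement on $\text{OVar}(S_1)\cap \text{OVar}(S_2)\cup\{{id}_I,{id}_{IO}\}$ transports to the precondition required to invoke the hypothesis on $S_1',S_2'$, and that the equivalent terminating computation of defined variables delivered by that hypothesis re-establishes the agreement invariant at the next iteration. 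Once these bookkeeping steps are in place, the rest is routine case analysis on the predicate evaluation and the Wh-T/Wh-F/ECrash transitions.
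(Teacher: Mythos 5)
Your proposal is correct and follows essentially the same route as the paper: induction on $i$ with a three-way case split on the shared predicate evaluation, invoking the hypothesis on the loop bodies $S_1',S_2'$ after a Wh-T step, and using Lemma~\ref{lmm:multiStepSeqExec}, Corollary~\ref{coro:termSeq}, and Lemma~\ref{lmm:enumTypeExtSameDefAndUse} exactly where the paper does. Your explicit attention to re-establishing the ``no values in $E$'' invariant across iterations is a point the paper's own proof only dispatches with ``by assumption,'' so your bookkeeping is, if anything, more careful than the original.
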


\begin{proof}
By induction on $i$.

\noindent{Base case}.

We show that, when $i=1$, one of the following holds:
\begin{enumerate}
\item Loop counters for $S_1$ and $S_2$ are always less than 1 if any is present,
$\forall m_1'(\text{loop}_c^{1'})\, m_2'(\text{loop}_c^{2'})\,:\,
(S_1, m_1(\text{loop}_c^1, \vals_1)) ->* (S_1'', m_1'(\text{loop}_c^{1'})),
\text{loop}_c^{1'}(n_1) < i,
(S_2, m_2(\text{loop}_c^2,\allowbreak \vals_2)) ->* (S_2'', m_2'(\text{loop}_c^{2'})),
\text{loop}_c^{2'}(n_2) < i$,
$S_1$ and $S_2$ terminate in the same way, produce the same output sequence, and have equivalent computation of  defined variables in both $S_1$ and $S_2$ and the input sequence variable,  the I/O sequence variable,
$(S_1, m_1)\allowbreak \equiv_{H} (S_2, m_2)$ and
$(S_1, m_1) \equiv_{O} (S_2, m_2)$ and
$\forall x \in (\text{Def}(S_1) \cap \text{Def}(S_2)) \cup
               \{{id}_I,   {id}_{IO}\}\,:\,
               (S_1, m_1) \equiv_{x} (S_2, m_2)$;

\item Loop counters of $S_1$ and $S_2$ are of value less than or equal to 1 but there are no reachable configurations
$(S_1, m_1(\text{loop}_c^{1_1}, \vals_{1_i}))$ from $(S_1, m_1(\vals_1))$,
$(S_2, m_2(\text{loop}_c^{2_1}, \vals_{2_i}))$ from $(S_2, \,\allowbreak m_2(\vals_2))$ where all of the following hold:
\begin{itemize}
\item The loop counter of $S_1$ and $S_2$ are of value 1,
$\text{loop}_c^{1_1}(n_1)\allowbreak = \text{loop}_c^{2_1}(n_2) = 1$.

\item Value stores $\vals_{1_1}$ and $\vals_{2_1}$ agree on values of  used variables in both $S_1$ and $S_2$ as well as the input sequence variable,  and the I/O sequence variable,
$\forall x \in \, (\text{Use}(S_1) \cap \text{Use}(S_2)) \cup \{{id}_I,   {id}_{IO}\}\,:\,
\vals_{1_1}(x) = \vals_{2_1}(x)$.
\end{itemize}

\item There are reachable configuration
$(S_1, m_1(\text{loop}_c^{1_1}, \vals_{1_i}))$ from $(S_1, m_1(\vals_1))$,
$(S_2, m_2(\text{loop}_c^{2_1}, \vals_{2_i}))$ from $(S_2, \,\allowbreak m_2(\vals_2))$ where all of the following hold:
\begin{itemize}
\item The loop counter of $S_1$ and $S_2$ are of value 1,
$\text{loop}_c^{1_1}(n_1)\allowbreak = \text{loop}_c^{2_1}(n_2) = 1$.

\item Value stores $\vals_{1_1}$ and $\vals_{2_1}$ agree on values of  used variables in both $S_1$ and $S_2$ as well as the input sequence variable,  and the I/O sequence variable,
$\forall x \in \, (\text{Use}(S_1) \cap \text{Use}(S_2)) \cup \{{id}_I,   {id}_{IO}\}\,:\,
\vals_{1_1}(x) = \vals_{2_1}(x)$.
\end{itemize}
\end{enumerate}

By definition, variables used in the predicate expression $e$ of $S_1$ and $S_2$ are  used in $S_1$ and $S_2$, $\text{Use}(e) \subseteq \text{Use}(S_1) \cap \text{Use}(S_2)$.
By assumption, value stores $\vals_1$ and $\vals_2$ agree on values of variables in $\text{Use}(e)$,
the predicate expression $e$ evaluates to the same value w.r.t value stores $\vals_1$ and $\vals_2$ by Lemma~\ref{lmm:expEvalSameTerm}.
There are three possibilities.
\begin{enumerate}
\item The evaluation of $e$ crashes,

\noindent$\mathcal{E}'\llbracket e\rrbracket \vals_1 =
 \mathcal{E}'\llbracket e\rrbracket \vals_2 = (\text{error}, v_{\mathfrak{of}})$.

The execution of $S_1$ continues as follows:
\begin{tabbing}
xx\=xx\=\kill
\>\>                   $(\text{while}_{\langle n_1\rangle}(e) \, \{S_1'\}, m_1(\vals_1))$\\
\>$->$\>               $(\text{while}_{\langle n_1\rangle}((\text{error}, v_{\mathfrak{of}})) \, \{S_1'\}, m_1(\vals_1))$\\
\>\>                   by the rule EEval'\\
\>$->$\>               $(\text{while}_{\langle n_1\rangle}(0) \, \{S_1'\}, m_1(1/\mathfrak{f}))$\\
\>\>                   by the ECrash rule  \\
\>{\kStepArrow [i] }\> $(\text{while}_{\langle n_1\rangle}(0) \, \{S_1'\}, m_1(1/\mathfrak{f}))$ for any $i>0$\\
\>\>                   by the Crash rule.
\end{tabbing}

Similarly, the execution of $S_2$ started from the state $m_2(\vals_2)$ crashes.
Therefore $S_1$ and $S_2$ terminate in the same way when started from $m_1$ and $m_2$ respectively.
Because $\vals_1({id}_{IO}) = \vals_2({id}_{IO})$, the lemma holds.

\item The evaluation of $e$ reduces to zero,
$\mathcal{E}'\llbracket e\rrbracket \vals_1 =
 \mathcal{E}'\llbracket e\rrbracket \vals_2 = (0, v_{\mathfrak{of}})$.

The execution of $S_1$ continues as follows.
\begin{tabbing}
xx\=xx\=\kill
\>\>                   $(\text{while}_{\langle n_1\rangle}(e) \, \{S_1'\}, m_1(\vals_1))$\\
\>= \>                 $(\text{while}_{\langle n_1\rangle}((0, v_{\mathfrak{of}})) \, \{S_1'\}, m_1(\vals_1))$\\
\>\>                   by the rule EEval'\\
\>$->$\>               $(\text{while}_{\langle n_1\rangle}(0) \, \{S_1'\}, m_1(\vals_1))$\\
\>\>                   by the E-Oflow1 or E-Oflow2 rule  \\
\>$->$\>               $(\text{skip}, m_1(\vals_1))$ by the Wh-F rule.
\end{tabbing}

Similarly, the execution of $S_2$ gets to the configuration $(\text{skip},\,\allowbreak m_2(\vals_2))$.
Loop counters of $S_1$ and $S_2$ are less than 1 and value stores agree on values of  used/defined variables in both $S_1$ and $S_2$ as well as the input sequence variable,  and the I/O sequence variable.

\item The evaluation of $e$ reduces to the same nonzero integer value,
$\mathcal{E}'\llbracket e\rrbracket \vals_1 =
 \mathcal{E}'\llbracket e\rrbracket \vals_2 = (0, v_{\mathfrak{of}})$.

Then the execution of $S_1$ proceeds as follows:
\begin{tabbing}
xx\=xx\=\kill
\>\>                   $(\text{while}_{\langle n_1\rangle}(e) \, \{S_1'\}, m_1(\vals_1))$\\
\>= \>                 $(\text{while}_{\langle n_1\rangle}((v, v_{\mathfrak{of}})) \, \{S_1'\}, m_1(\vals_1))$\\
\>\>                   by the rule EEval'\\
\>$->$\>               $(\text{while}_{\langle n_1\rangle}(v) \, \{S_1'\}, m_1(\vals_1))$\\
\>\>                   by the E-Oflow1 or E-Oflow2 rule  \\
\>$->$\>               $(S_1';\text{while}_{\langle n_1\rangle}(e) \, \{S_1'\}, m_1($\\
\>\>                   $\text{loop}_c^1[1/n_1], \vals_1))$ by the Wh-T rule.
\end{tabbing}

Similarly, the execution of $S_2$ proceeds to the configuration
$(S_2';\text{while}_{\langle n_2\rangle}(e) \, \{S_2'\}, m_2(\text{loop}_c^2[1/n_1], \vals_2))$.

By the assumption, we show that $S_1'$ and $S_2'$ terminate in the same way and produce the same output sequence when started in the state $m_1(\text{loop}_c^{1_1}, \vals_1)$ and $m_2(\text{loop}_c^{2_1}, \vals_2)$ respectively, and $S_1'$ and $S_2'$ have equivalent computation of variables  defined in both statement sequences if both terminate.
We need to show that all conditions are satisfied for the application of the assumption.
\begin{itemize}
\item There are no inputs translated into enum labels in $E$ in the execution of $S_2'$.

The above condition is by assumption.

\item Initial values of  used variables in $S_2'$ are not enum labels in $E$.

By the definition of  used variables, $\text{Use}(S_2') \subseteq \text{Use}(S_2)$.
By assumption, initial values of  used variables in $S_2$ are not enum labels in $E$.
The condition holds.

\item Value stores $\vals_1$ and $\vals_2$ agree on values of  used variables in $S_1'$ and $S_2'$ as well as the input, output, I/O sequence variable.

By definition, $\text{Use}(S_1')\allowbreak \subseteq \text{Use}(S_1)$.
So are the cases to $S_2'$ and $S_2$.
In addition, value stores $\vals_1$ and $\vals_2$ are not changed in the evaluation of the predicate expression $e$. The condition holds.
\end{itemize}
By  assumption, $S_1'$ and $S_2'$ terminate in the same way and produce the same output sequence when started in states
$m_1(\text{loop}_c', \vals_1)$ and $m_2(\text{loop}_c', \vals_2)$.
In addition, $S_1'$ and $S_2'$ have equivalent computation of variables  used or defined in $S_1'$ and $S_2'$ when started in states
$m_1(\text{loop}_c', \vals_1)$ and $m_2(\text{loop}_c', \vals_2)$.

Then there are two cases.
\begin{enumerate}
\item $S_1'$ and $S_2'$ both do not terminate and produce the same output sequence.

By Lemma~\ref{lmm:multiStepSeqExec}, $S_1';S_1$ and $S_2';S_2$ both do not terminate and produce the same output sequence.

\item $S_1'$ and $S_2'$ both terminate and have equivalent computation of variables  defined in $S_1'$ and $S_2'$.

By assumption, $(S_1', m_1(\text{loop}_c', \vals_1)) ->*
                (\text{skip}, \,\allowbreak m_1'(\text{loop}_c'', \vals_1'))$;
               $(S_2', m_2(\text{loop}_c', \vals_2)) ->*
                (\text{skip},\allowbreak m_2'(\text{loop}_c'', \vals_2'))$
where $\forall x \in (\text{Def}(S_1') \cap \text{Def}(S_2')) \cup
                     \{{id}_I,   {id}_{IO}\},
\allowbreak\vals_1'(x) = \vals_2'(x)$.

By assumption,
$\text{Use}(S_1') \subseteq \text{Use}(S_2')$ and
$\text{Def}(S_1') = \text{Def}(S_2')$.
Then variables used in the predicate expression of $S_1$ and $S_2$ are either in variables  used or defined in both $S_1'$ and $S_2'$ or not.
Therefore value stores $\vals_2'$ and $\vals_1'$ agree on values of variables used in the expression $e$ and even variables  used or defined in $S_1$ and $S_2$.
\end{enumerate}
\end{enumerate}

\noindent{Induction step on iterations}

The induction hypothesis (IH) is that, when $i\geq 1$, one of the following holds:
\begin{enumerate}
\item Loop counters for $S_1$ and $S_2$ are always less than $i$ if any is present,
$\forall m_1'(\text{loop}_c^{1'})\, m_2'(\text{loop}_c^{2'})\,:\,
(S_1, m_1(\text{loop}_c^1, \vals_1)) ->* (S_1'', m_1'(\text{loop}_c^{1'})),
\text{loop}_c^{1'}(n_1) < i,
(S_2, m_2(\text{loop}_c^2,\allowbreak \vals_2)) ->* (S_2'', m_2'(\text{loop}_c^{2'})),
\text{loop}_c^{2'}(n_2) < i$,
$S_1$ and $S_2$ terminate in the same way, produce the same output sequence, and have equivalent computation of  used/defined variables in both $S_1$ and $S_2$ and the input sequence variable,  the I/O sequence variable,
$(S_1, m_1)\allowbreak \equiv_{H} (S_2, m_2)$ and
$(S_1, m_1) \equiv_{O} (S_2, m_2)$ and
$\forall x \in (\text{Def}(S_1) \cap \text{Def}(S_2)) \cup
               \{{id}_I,   {id}_{IO}\}\,:\,
               (S_1, m_1) \equiv_{x} (S_2, m_2)$;

\item The loop counter of $S_1$ and $S_2$ are of value less than or equal to $i$,
and there are no reachable configurations
$(S_1, m_1(\text{loop}_c^{1_i}, \vals_{1_i}))$ from $(S_1, m_1(\vals_1))$,
$(S_2, m_2(\text{loop}_c^{2_i}, \vals_{2_i}))$ from $(S_2, \,\allowbreak m_2(\vals_2))$ where all of the following hold:
\begin{itemize}
\item The loop counters of $S_1$ and $S_2$ are of value $i$,
$\text{loop}_c^{1_i}(n_1)\allowbreak = \text{loop}_c^{2_i}(n_2) = i$.

\item Value stores $\vals_{1_i}$ and $\vals_{2_i}$ agree on values of  used variables in both $S_1$ and $S_2$ as well as the input sequence variable,  and the I/O sequence variable,
$\forall x \in \, (\text{Use}(S_1) \cap \text{Use}(S_2)) \cup \{{id}_I,   {id}_{IO}\}\,:\,
\vals_{1_i}(x) = \vals_{2_i}(x)$.
\end{itemize}

\item There are reachable configurations
$(S_1, m_1(\text{loop}_c^{1_i}, \vals_{1_i}))$ from $(S_1, m_1(\vals_1))$,
$(S_2, m_2(\text{loop}_c^{2_i}, \vals_{2_i}))$ from $(S_2, \,\allowbreak m_2(\vals_2))$ where all of the following hold:
\begin{itemize}
\item The loop counter of $S_1$ and $S_2$ are of value $i$,
$\text{loop}_c^{1_i}(n_1)\allowbreak = \text{loop}_c^{2_i}(n_2) = i$.

\item Value stores $\vals_{1_i}$ and $\vals_{2_i}$ agree on values of  used variables in both $S_1$ and $S_2$ as well as the input sequence variable,  and the I/O sequence variable,
$\forall x \in \, (\text{Use}(S_1) \cap \text{Use}(S_2)) \cup \{{id}_I,   {id}_{IO}\}\,:\,
\vals_{1_i}(x) = \vals_{2_i}(x)$.
\end{itemize}
\end{enumerate}

Then we show that, when $i+1$, one of the following holds:
The induction hypothesis (IH) is that, when $i\geq 1$, one of the following holds:
\begin{enumerate}
\item Loop counters for $S_1$ and $S_2$ are always less than $i+1$ if any is present,
$\forall m_1'(\text{loop}_c^{1'})\, m_2'(\text{loop}_c^{2'})\,:\,
(S_1, m_1(\text{loop}_c^1, \vals_1)) ->* (S_1'', m_1'(\text{loop}_c^{1'})),
\text{loop}_c^{1'}(n_1) < i+1,
(S_2, m_2(\text{loop}_c^2,\allowbreak \vals_2)) ->* (S_2'', m_2'(\text{loop}_c^{2'})),
\text{loop}_c^{2'}(n_2) < i+1$,
$S_1$ and $S_2$ terminate in the same way, produce the same output sequence, and have equivalent computation of  used/defined variables in both $S_1$ and $S_2$ and the input sequence variable,  the I/O sequence variable,
$(S_1, m_1)\allowbreak \equiv_{H} (S_2, m_2)$ and
$(S_1, m_1) \equiv_{O} (S_2, m_2)$ and
$\forall x \in (\text{Def}(S_1) \cap \text{Def}(S_2)) \cup
               \{{id}_I,   {id}_{IO}\}\,:\,
               (S_1, m_1) \equiv_{x} (S_2, m_2)$;

\item The loop counter of $S_1$ and $S_2$ are of value less than or equal to $i+1$,
and there are no reachable configurations
$(S_1, m_1(\text{loop}_c^{1_{i+1}}, \vals_{1_{i+1}}))$ from $(S_1, m_1(\vals_1))$,

\noindent$(S_2, m_2(\text{loop}_c^{2_{i+1}}, \vals_{2_{i+1}}))$ from $(S_2, \,\allowbreak m_2(\vals_2))$ where all of the following hold:
\begin{itemize}
\item The loop counters of $S_1$ and $S_2$ are of value $i+1$,
$\text{loop}_c^{1_{i+1}}(n_1)\allowbreak = \text{loop}_c^{2_{i+1}}(n_2) = i+1$.

\item Value stores $\vals_{1_{i+1}}$ and $\vals_{2_{i+1}}$ agree on values of used variables in both $S_1$ and $S_2$ as well as the input sequence variable,  and the I/O sequence variable,
$\forall x \in \, (\text{Use}(S_1) \cap \text{Use}(S_2)) \cup \{{id}_I,   {id}_{IO}\}\,:\,
\vals_{1_{i+1}}(x) = \vals_{2_{i+1}}(x)$.
\end{itemize}

\item There are reachable configurations
$(S_1, m_1(\text{loop}_c^{1_{i+1}}, \vals_{1_i}))$ from $(S_1, m_1(\vals_1))$,
$(S_2, m_2(\text{loop}_c^{2_{i+1}}, \vals_{2_i}))$ from $(S_2, \,\allowbreak m_2(\vals_2))$ where all of the following hold:
\begin{itemize}
\item The loop counter of $S_1$ and $S_2$ are of value $i$,
$\text{loop}_c^{1_{i+1}}(n_1)\allowbreak = \text{loop}_c^{2_{i+1}}(n_2) = i+1$.

\item Value stores $\vals_{1_{i+1}}$ and $\vals_{2_{i+1}}$ agree on values of  used variables in both $S_1$ and $S_2$ as well as the input sequence variable,  and the I/O sequence variable,
$\forall x \in \, (\text{Use}(S_1) \cap \text{Use}(S_2)) \cup \{{id}_{I},   {id}_{IO}\}\,:\,
\vals_{1_{i+1}}(x) = \vals_{2_{i+1}}(x)$.
\end{itemize}
\end{enumerate}

By hypothesis IH, there is no configuration where loop counters of $S_1$ and $S_2$ are of value $i+1$ when any of the following holds:
\begin{enumerate}
\item Loop counters for $S_1$ and $S_2$ are always less than $i$ if any is present,
$\forall m_1'(\text{loop}_c^{1'})\, m_2'(\text{loop}_c^{2'})\,:\,
(S_1, m_1(\text{loop}_c^1, \vals_1)) ->* (S_1'', m_1'(\text{loop}_c^{1'})),
\text{loop}_c^{1'}(n_1) < i,
(S_2, m_2(\text{loop}_c^2,\allowbreak \vals_2)) ->* (S_2'', m_2'(\text{loop}_c^{2'})),
\text{loop}_c^{2'}(n_2) < i$,
$S_1$ and $S_2$ terminate in the same way, produce the same output sequence, and have equivalent computation of  used/defined variables in both $S_1$ and $S_2$ and the input sequence variable,  the I/O sequence variable,
$(S_1, m_1)\allowbreak \equiv_{H} (S_2, m_2)$ and
$(S_1, m_1) \equiv_{O} (S_2, m_2)$ and
$\forall x \in (\text{Def}(S_1) \cap \text{Def}(S_2)) \cup
               \{{id}_I,   {id}_{IO}\}\,:\,
               (S_1, m_1) \equiv_{x} (S_2, m_2)$;

\item The loop counter of $S_1$ and $S_2$ are of value less than or equal to $i$,
and there are no reachable configurations
$(S_1, m_1(\text{loop}_c^{1_i}, \vals_{1_i}))$ from $(S_1, m_1(\vals_1))$,
$(S_2, m_2(\text{loop}_c^{2_i}, \vals_{2_i}))$ from $(S_2, \,\allowbreak m_2(\vals_2))$ where all of the following hold:
\begin{itemize}
\item The loop counters of $S_1$ and $S_2$ are of value $i$,
$\text{loop}_c^{1_i}(n_1)\allowbreak = \text{loop}_c^{2_i}(n_2) = i$.

\item Value stores $\vals_{1_i}$ and $\vals_{2_i}$ agree on values of  used variables in both $S_1$ and $S_2$ as well as the input sequence variable,  and the I/O sequence variable,
$\forall x \in \, (\text{Use}(S_1) \cap \text{Use}(S_2)) \cup \{{id}_I,   {id}_{IO}\}\,:\,
\vals_{1_i}(x) = \vals_{2_i}(x)$.
\end{itemize}
\end{enumerate}

When there are  reachable configurations
$(S_1, m_1(\text{loop}_c^{1_i}, \vals_{1_i}))$ from $(S_1, m_1(\vals_1))$,
$(S_2, m_2(\text{loop}_c^{2_i}, \vals_{2_i}))$ from $(S_2, \,\allowbreak m_2(\vals_2))$ where all of the following hold:
\begin{itemize}
\item The loop counter of $S_1$ and $S_2$ are of value $i$,
$\text{loop}_c^{1_i}(n_1)\allowbreak = \text{loop}_c^{2_i}(n_2) = i$.

\item Value stores $\vals_{1_i}$ and $\vals_{2_i}$ agree on values of  used variables in both $S_1$ and $S_2$ as well as the input sequence variable,  and the I/O sequence variable,
$\forall x \in \, (\text{Use}(S_1) \cap \text{Use}(S_2)) \cup \{{id}_I,   {id}_{IO}\}\,:\,
\vals_{1_i}(x) = \vals_{2_i}(x)$.
\end{itemize}

By similar argument in base case, we have one of the following holds:
\begin{enumerate}
\item Loop counters for $S_1$ and $S_2$ are always less than $i+1$ if any is present,
$\forall m_1'(\text{loop}_c^{1'})\, m_2'(\text{loop}_c^{2'})\,:\,
(S_1, m_1(\text{loop}_c^1, \vals_1)) ->* (S_1'', m_1'(\text{loop}_c^{1'})),
\text{loop}_c^{1'}(n_1) < i+1,
(S_2, m_2(\text{loop}_c^2,\allowbreak \vals_2)) ->* (S_2'', m_2'(\text{loop}_c^{2'})),
\text{loop}_c^{2'}(n_2) < i$,
$S_1$ and $S_2$ terminate in the same way, produce the same output sequence, and have equivalent computation of  used/defined variables in both $S_1$ and $S_2$ and the input sequence variable,  the I/O sequence variable,
$(S_1, m_1)\allowbreak \equiv_{H} (S_2, m_2)$ and
$(S_1, m_1) \equiv_{O} (S_2, m_2)$ and
$\forall x \in (\text{Def}(S_1) \cap \text{Def}(S_2)) \cup
               \{{id}_I,   {id}_{IO}\}\,:\,
               (S_1, m_1) \equiv_{x} (S_2, m_2)$;

\item The loop counter of $S_1$ and $S_2$ are of value less than or equal to $i+1$,
and there are no reachable configurations
$(S_1, m_1(\text{loop}_c^{1_i}, \vals_{1_i}))$ from $(S_1, m_1(\vals_1))$,
$(S_2, m_2(\text{loop}_c^{2_i}, \vals_{2_i}))$ from $(S_2, \,\allowbreak m_2(\vals_2))$ where all of the following hold:
\begin{itemize}
\item The loop counters of $S_1$ and $S_2$ are of value $i$,
$\text{loop}_c^{1_{i+1}}(n_1)\allowbreak = \text{loop}_c^{2_{i+1}}(n_2) = i+1$.

\item Value stores $\vals_{1_{i+1}}$ and $\vals_{2_{i+1}}$ agree on values of  used variables in both $S_1$ and $S_2$ as well as the input sequence variable,  and the I/O sequence variable,
$\forall x \in \, (\text{Use}(S_1) \cap \text{Use}(S_2)) \cup \{{id}_I,   {id}_{IO}\}\,:\,
\vals_{1_{i+1}}(x) = \vals_{2_{i+1}}(x)$.
\end{itemize}

\item There are reachable configurations
$(S_1, m_1(\text{loop}_c^{1_{i+1}}, \vals_{1_{i+1}}))$ from $(S_1, m_1(\vals_1))$,
$(S_2, m_2(\text{loop}_c^{2_{i+1}}, \vals_{2_{i+1}}))$ from $(S_2, \,\allowbreak m_2(\vals_2))$ where all of the following hold:
\begin{itemize}
\item The loop counter of $S_1$ and $S_2$ are of value $i$,
$\text{loop}_c^{1_{i+1}}(n_1)\allowbreak = \text{loop}_c^{2_{i+1}}(n_2) = i$.

\item The loop counter of $S_1$ and $S_2$ are of value $i$,
$\text{loop}_c^{1_{i+1}}(n_1)\allowbreak = \text{loop}_c^{2_{i+1}}(n_2) = i$.

\item Value stores $\vals_{1_{i+1}}$ and $\vals_{2_{i+1}}$ agree on values of  used variables in both $S_1$ and $S_2$ as well as the input sequence variable and the I/O sequence variable,
$\forall x \in \, (\text{Use}(S_1) \cap \text{Use}(S_2)) \cup \{{id}_{I+1},   {id}_{IO}\}\,:\,
\vals_{1_{i+1}}(x) = \vals_{2_{i+1}}(x)$.
\end{itemize}
\end{enumerate}
\end{proof}

\subsection{Proof rule for variable type weakening}

In programs, variable types are changed either to allow for larger ranges (weakening).
For example,
an integer variable might be changed to become a long variable
to avoid integer overflow.
Adding a new enumeration value can is also type weakening.
Increasing array size is another example of weakening.
Allowing for type weakening is essentially an assumption about the intent behind the update.
The kinds of weakening that should be allowed are application dependent and would need to be defined by the user in general.
The type weakening considered are either changes of type Int to Long or increase of array size.
These updates fix integer overflow or array index out of bound.
In order to prove the update of variable type weakening to be backward compatible, we assume that there are no integer overflow and array index out of bound in execution of the old program and the updated program.
In conclusion, the old program and the new program produce the same output sequence because the integer overflow and index out of bound errors fixed by the new program do not occur.

We formalize the update of variable type weakening, then we show that the updated program produce the same output sequence as the old program in executions if there are no integer overflow or index out of bound exceptions related to variables with type changes.
First, we define a relation between variable definitions showing the type weakening.
\begin{definition}\label{def:typeRelaxTypeRelation} {\bf (Cases of type weakening)}
 We say there is type weakening from a sequence of variable definitions $V_1$ to $V_2$,
 written $V_1 \nearrow_{\tau} V_2$, iff one of the following holds:
\begin{enumerate}
\item $V_1 = ``\text{Int} \, id", V_2 = ``\text{Long} \, id"$;

\item $V_1 = ``\tau \, id[n_2]", V_2 = ``\tau \, id[n_1]"$ where $n_2 > n_1$;

\item $V_1 = V_1',``\tau_1 \, {id}_1", V_2 = V_2',``\tau_2 \, {id}_2"$ where
$(V_1' \nearrow_{\tau} V_2') \wedge (``\tau_1 \, {id}_1" \nearrow_{\tau} ``\tau_2 \, {id}_2")$;

\item $V_1 = V_1',``\tau_1 \, {id}_1[n_1]", V_2 = V_2',``\tau_2 \, {id}_2[n_2]"$ where
$(V_1' \nearrow_{\tau} V_2') \wedge (``\tau_1 \, {id}_1[n_1]" \nearrow_{\tau} ``\tau_2 \, {id}_2[n_2]")$;
\end{enumerate}
\end{definition}

The following is the generalized definition of variable type weakening.
\begin{definition}\label{def:typeRelaxFuncWide}
{\bf (Variable type weakening)}
We say that there are updates of variable type weakening in the program $P_2= Pmpt;EN;V_2;S_{entry}$ compared with the program $P_1 = Pmpt;$ $EN;V_1;S_{entry}$, written $P_2 {\approx}_{\tau}^S P_1$, iff $V_{1} \nearrow_{\tau} V_{2}$.
\end{definition}

We show that two programs terminate in the same way, produce the same output sequence, and have equivalent computation of defined variables in both programs in valid executions if there are updates of variable type weakening between them.
\begin{lemma}\label{lmm:typeRelaxFuncWide}
Let $P_1 = EN;V_1;S_{entry}$ and $P_2  = EN;V_2;S_{entry}$ be two programs where there are updates of variable type weakening,
$P_2 {\approx}_{\tau}^S P_1$.
If the programs $P_1$ and $P_2$ start in states $m_1(\vals_1)$ and $m_2(\vals_2)$ such that both of the following hold:
\begin{itemize}
\item Value stores $\vals_1$ and $\vals_2$ agree on values of variables  used in $S_{entry}$ as well as the input sequence variable, the I/O sequence variable,
$\forall x \in \text{Use}(S_{entry}) \cup \{{id}_I, {id}_{IO}\}\,:\,
 \vals_1(x) = \vals_2(x)$;

 \item There is no integer overflow or index out of bound exceptions related to variables of type change;
\end{itemize}
then $S_{entry}$ in the program $P_1$ and $P_2$ terminate in the same way, produce the same output sequence, and when $S_{entry}$ both terminate, they have equivalent computation of  defined variables in $S_{entry}$ in both programs as well as the input sequence variable, the I/O sequence variable,
\begin{itemize}
\item $(S_{entry}, m_1) \equiv_{H} (S_{entry}, m_2)$;

\item $(S_{entry}, m_1) \equiv_{O} (S_{entry}, m_2)$;

\item $\forall x \in \text{Def}(S) \cup \{{id}_I, {id}_{IO}\}\,:\,$

\noindent$\allowbreak (S_{entry}, m_1) \equiv_{x} (S_{entry}, m_2)$;
\end{itemize}
\end{lemma}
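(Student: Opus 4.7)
The plan is to prove this by a step-by-step simulation between executions of $S_{entry}$ in $P_1$ and $P_2$, arguing that the only places where the small-step operational semantics can distinguish the two executions are precisely the points ruled out by the hypothesis (integer overflow on variables whose type was weakened from Int to Long, and array accesses that would be out of bound in $P_1$ but in bound in $P_2$). Since $S_{entry}$ is syntactically identical in both programs and only the declaration sequences $V_1, V_2$ differ, I would set up the simulation relation $R$ between configurations $(S, m_1)$ and $(S, m_2)$ that requires: (i) both have the same program fragment $S$; (ii) their crash flags agree; (iii) their loop counters agree; (iv) their value stores agree on every identifier in $\text{Use}(S_{entry}) \cup \text{Def}(S_{entry}) \cup \{{id}_I, {id}_{IO}\}$; and (v) the type environments differ only according to $V_1 \nearrow_{\tau} V_2$.

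First I would establish a single-step simulation lemma: if $(S, m_1) R (S, m_2)$ and $(S, m_1) \to (S', m_1')$, then either both sides reach an error configuration (ruled out by the no-overflow/no-bounds hypothesis on variables of type change) or $(S, m_2) \to (S', m_2')$ with $(S', m_1') R (S', m_2')$. The interesting cases are rules \textsc{As-Err2}, \textsc{As-Err3}, \textsc{As-Err1}, \textsc{Arr-2}, \textsc{In-3}, \textsc{In-6}, and the expression-overflow rules \textsc{EOflow-1}/\textsc{EOflow-2}: whenever a variable of weakened type is the target or index, the hypothesis forbids the ``error'' branch in $P_1$, so the $P_1$ step reduces to the same successor as the $P_2$ step (since the wrapped value produced by $\mathcal{E}$ in $P_1$ equals the non-overflowing value in $P_2$, and index-in-bound computations produce the same result). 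For all rules not involving variables of type change, the steps are literally identical because $\mathcal{E}$ is deterministic and depends only on the values of used variables, which agree by $R$.

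Once the single-step simulation is in hand, I would lift it to the three conclusions by straightforward induction on the execution length. Termination in the same way ($\equiv_H$) follows because every step of one side is matched by a step of the other, and \textsc{skip} is reached simultaneously. Same output sequence ($\equiv_O$) follows because the only rules that extend ${id}_{IO}$ are \textsc{Out-1}, \textsc{Out-2}, \textsc{Out-3}, \textsc{In-1}, \textsc{In-2}, \textsc{In-4}, and each appends the same value in both executions under $R$. Equivalent computation of $\text{Def}(S_{entry}) \cup \{{id}_I, {id}_{IO}\}$ follows directly from clause (iv) of $R$ preserved through termination.

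The main obstacle will be the bookkeeping for the type-dependent rules when an updated variable of type Int becomes Long: in $P_1$ an assignment $id := e$ where $\mathcal{E}\llbracket e \rrbracket \vals_1$ overflows sets the overflow flag and wraps the result, while in $P_2$ the same expression evaluated in a Long context may not overflow at all. The hypothesis ``no integer overflow\ldots related to variables of type change'' must be made precise enough to rule out exactly this divergence, i.e., to force the wrapped value in $P_1$ to coincide with the unwrapped Long value in $P_2$ whenever the target variable had its type weakened. Stating this assumption formally (as a condition on the trace of $\mathcal{E}$ evaluations during execution) and then discharging it in the simulation's inductive step is the delicate part; the rest is a routine case analysis over the SOS rules of Figures~\ref{fig:sosrulesExpr}, \ref{fig:basicrules}, and~\ref{fig:iorules}.
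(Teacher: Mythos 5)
Your proposal is correct and takes essentially the same route as the paper: the paper also exploits the fact that $S_{entry}$ is syntactically identical in both programs so that executions proceed in lockstep, and reduces the whole argument to showing that the hypothesis rules out divergence at the type-dependent error rules (array bounds only increase under $V_1 \nearrow_{\tau} V_2$, and $\mathbb{Z}_I \subset \mathbb{Z}_L$ means no overflow in $P_1$ implies none in $P_2$). Your explicit simulation relation and single-step lemma merely make formal the scaffolding the paper leaves implicit, and the ``delicate part'' you flag is exactly what the stated hypothesis is there to discharge.
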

Because $S_{entry}$ are the exactly same in both programs $P_1$ and $P_2$, we omit the straightforward proof.
Instead, we show that, if there is no array index out of bound and integer overflow in executions of the old program, then there is no array index out of bound or integer overflow in executions of updated program due to the increase of array index and change of type Int to Long.
\begin{proof}
The proof is straightforward because the statement sequence $S$ is same in programs $P_1$ and $P_2$.
The only point is that if there is array index out of bound or integer overflow in execution of $S$ in $P_1$, then there is no array index out of bound or integer overflow in execution of $S$ in $P_2$.
To show the point, we present the argument for the array index out of bound and integer overflow separately.
\begin{enumerate}
\item We show that, as to one expression ${id}_1[{id}_2]$, there is no array index out of bound in $P_2$ if there is no array index out of bound in $P_1$ when $P_1$ and $P_2$ are in states agreeing on values of used variables in $P_1$ and $P_2$;

\begin{tabbing}
xx\=xx\=\kill
\>\>                   $({id}_1[{id}_2], m_1(\vals_1))$\\
\>$->$\>               $({id}_1[v], m_1(\vals_1))$ by the rule Var
\end{tabbing}

Similarly, $({id}_1[{id}_2], m_2(\vals_2)) -> ({id}_1[v], m_2(\vals_2))$.
By Definition~\ref{def:typeRelaxTypeRelation}, the array bound of ${id}_1$ in $P_2$ is no less than that in $P_1$, then there is no array out of bound exception in evaluation of ${id}_1[{id}_2]$ in $P_2$ if there is no array out of bound exception in evaluation of ${id}_1[{id}_2]$ in $P_1$.

\item We show that, as to one expression $e$, there is no integer overflow in evaluation of $e$ in $P_2$ if there is no integer overflow in evaluation of $e$ in $P_1$;

\begin{tabbing}
xx\=xx\=\kill
\>\>                   $(e, m_1(\vals_1))$\\
\>$->$\>               $((v_e, v_{\mathfrak{of}}), m_1(\vals_1))$ by the rule EEval'
\end{tabbing}

When every used variable in the expression $e$ is of same type in $P_1$ and $P_2$, then the evaluation of the expression $e$ in $P_2$ is of the same result $(v_e, v_{\mathfrak{of}})$ in $P_1$.
When every used variable in the expression $e$ is of type Int in $P_1$ and of type Long $P_2$, then there is no integer overflow in the evaluation of the expression $e$ in $P_2$ if there is no integer overflow in the evaluation of the expression $e$ in $P_1$. This is because the values of type Long are a superset of those of type $Int$.
\end{enumerate}
\end{proof} 

\subsection{Proof rule for exit on errors}

Another bugfix is called ``exit-on-error", which causes the program to exit in observation of application-semantic-dependent errors.
\begin{figure}
\begin{small}
\begin{tabbing}
xxxxxx\=xxx\=xxx\=xxx\=xxxxxxxxxxxxxxxx\=xxxx\=xxx\=xxxxx\= \kill
\>1: \>\>\>                                    \>1': \> {\bf If} $(1/(a-5))$ {\bf then}\> \\
\>2: \>\>\>                                    \>2': \> \> {\bf skip} \> \\
\>3: \> {\bf output} $a$\>\>                   \>3': \> {\bf output} $a$  \> \\
\>\\
\> \> old\>\>                                         \>  \> new \>
\end{tabbing}
\end{small}
\caption{Exit-on-error}\label{fig:exitOnErrExample}
\end{figure}
Figure~\ref{fig:exitOnErrExample} shows an example of exit-on-error update.
In the example, the fixed bugs refer to the program semantic error that $a = 5$.
Instead of using an ``exit" statement, we rely on the crash from expression evaluations to formalize the update class.
In order to prove the update of exit-on-error to be backward compatible, we assume that there are no application related error in executions of the old program. Therefore, the two programs produce the same output sequence because the extra check does not cause the new program's execution to crash.

The following is the generalized definition of the update class ``exit-on-error".
\begin{definition}\label{def:exitOnErrFuncWide}
{\bf (Exit on error)}
We say a statement sequence $S_2$ includes updates of exit-on-err from a statement sequence $S_1$, written $S_2\, {\approx}_{\text{Exit}}^S\, S_1$, iff
one of the following holds:
\begin{small}
\begin{enumerate}
\item $S_2 = ``\text{If}(e) \, \text{then}\{\text{skip}\}\, \text{else}\{\text{skip}\}";S_1$;

\item
$S_1 = ``\text{If}(e) \, \text{then}\{S_1^t\} \, \text{else} \{S_1^f\}"$,
$S_2 = ``\text{If}(e) \, \text{then}\{S_2^t\} \, \text{else} \{S_2^f\}"$ where both of the following hold
\begin{itemize}
\item $S_2^t\, {\approx}_{\text{Exit}}^S\, S_1^t$;

\item $S_2^f\, {\approx}_{\text{Exit}}^S\, S_1^f$;
\end{itemize}

\item
$S_1 = ``\text{while}_{\langle n_1\rangle}(e) \, \{S_1'\}"$,
$S_2 = ``\text{while}_{\langle n_2\rangle}(e) \, \{S_2'\}"$ where

\noindent$S_2' {\approx}_{\text{Exit}}^S S_1'$;

\item $S_1 \approx_{O}^S S_2$;

\item $S_1 = S_1';s_1$ and $S_2 = S_2';s_2$ such that both of the following hold:
\begin{itemize}
\item $S_2' \approx_{\text{Exit}}^S S_1'$;
\item $S_2' \approx_{H}^S S_1'$;
\item $\forall x \in \text{Imp}(s_1, {id}_{IO}) \cup \text{Imp}(s_1, {id}_{IO})\,:\, S_2' \approx_{x}^S S_1'$;
\item $s_2 \approx_{\text{Exit}}^S s_1$;
\end{itemize}
\end{enumerate}
\end{small}
\end{definition}
Though the bugfix in Definition~\ref{def:exitOnErrFuncWide} is not in rare execution in the first case, the definition shows the basic form of bugfix clearly.

We show that two programs terminate in the same way, produce the same output sequence, and have equivalent computation of defined variables in both programs in valid executions if there are updates of exit-on-error between them.
\begin{lemma}\label{lmm:exitOnErrFuncWide}
Let $S_1$ and $S_2$ be two statement sequences respectively where there are updates of exit-on-error in $S_2$ against $S_1$,
$S_2\, {\approx}_{\text{Exit}}^S\, S_1$.
If $S_1$ and $S_2$ start in states $m_1(\vals_1)$ and $m_2(\vals_2)$ such that both of the following hold:
\begin{itemize}
\item Value stores $\vals_1$ and $\vals_2$ agree on values of variables  used in both $S_1$ and $S_2$ as well as the input sequence variable and the I/O sequence variable,
$\forall x \in (\text{Use}(S_1)\cap \text{Use}(S_2)) \cup \{{id}_I,  {id}_{IO}\}\,:\,
 \vals_1(x) = \vals_2(x)$;

\item There are no program semantic errors related to the extra check in the update of exit-on-error in the execution of $S_1$;
\end{itemize}
then $S_1$ and $S_2$ terminate in the same way, produce the same output sequence, and when $S_1$ and $S_2$ both terminate, they have equivalent computation of  defined variables in both $S_1$ and $S_2$ as well as the input sequence variable and the I/O sequence variable,
\begin{itemize}
\item $(S_1, m_1) \equiv_{H} (S_2, m_2)$;

\item $(S_1, m_1) \equiv_{O} (S_2, m_2)$;

\item $\forall x \in (\text{Def}(S_1)\cap \text{Def}(S_2)) \cup \{{id}_I,  {id}_{IO}\}\,:\,$

\noindent$\allowbreak (S_1, m_1) \equiv_{x} (S_2, m_2)$;
\end{itemize}
\end{lemma}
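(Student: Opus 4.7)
The plan is to mirror the induction-based structure used in the preceding update-class lemmas (Lemma~\ref{lmm:additionalParamsFuncWide} and Lemma~\ref{lmm:enumTypeExtFuncWideAppend}), proceeding by strong induction on $\text{size}(S_1)+\text{size}(S_2)$ with a case analysis on the clauses of Definition~\ref{def:exitOnErrFuncWide}. The base case is the first clause with $S_1$ a single simple statement, so $S_2 = \text{``If}(e)\,\text{then}\{\text{skip}\}\,\text{else}\{\text{skip}\}";S_1$. The critical use of the ``no application-semantic error'' hypothesis enters here: under that assumption the evaluation of $e$ produces an integer value (not \text{error}) in $m_2$, so by the If-T/If-F, Wh-F-free reductions together with the Seq rule, two or three SOS steps reduce $(S_2,m_2)$ to $(S_1,m_2)$ \emph{without modifying the value store}. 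From this point $S_1$ and $S_1$ execute identically from states whose value stores already agree on $\text{Use}(S_1)\cup\{{id}_I,{id}_{IO}\}$, so termination, output, and computation equivalence follow from Theorems~\ref{thm:mainTermSameWayLocal},~\ref{thm:sameIOtheoremFuncWide}, and~\ref{thm:equivCompMain} applied to the trivial pair $S_1\equiv_{O}^S S_1$, etc.

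For the inductive step, clause~4 of Definition~\ref{def:exitOnErrFuncWide} (where $S_1\equiv_{O}^S S_2$) is immediate by invoking the three main theorems of Sections~\ref{sec_equiv_comp}--\ref{sec:sameOutSeq} directly. For the If-case (clause~2), observe that $\text{Use}(e)\subseteq \text{Use}(S_1)\cap\text{Use}(S_2)$ so Lemma~\ref{lmm:expEvalSameVal} forces the predicate to evaluate to the same value in both states; one then applies the IH to the selected branch pair, whose size sum is strictly smaller. For the sequence case (clause~5) I would apply the IH to $(S_1',S_2')$ to obtain termination-in-the-same-way and the same output up to the tails; when both prefixes terminate, Theorem~\ref{thm:equivCompMain} (together with the hypothesis that $S_1'\equiv_{y}^S S_2'$ for $y\in\text{Imp}(s_1,{id}_{IO})\cup\text{Imp}(s_2,{id}_{IO})$, plus an auxiliary fact that $\text{Use}(s_i)\cap(\text{Def}(S_1')\cup\text{Def}(S_2'))$-relevant variables are equivalently computed) gives intermediate states that agree on $\text{Use}(s_1)\cap\text{Use}(s_2)\cup\{{id}_I,{id}_{IO}\}$, at which point the IH applies to $(s_1,s_2)$.

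The genuinely hard step is the while case (clause~3). Here I would prove, in parallel to Lemma~\ref{lmm:additionalParamLoopStmt} and Lemma~\ref{lmm:enumTypeExtLoopStmt}, an ``exit-on-error loop lemma'' by nested induction on the loop-counter $i$: for every $i>0$, either both loop counters stay below $i$ and the loops terminate in the same way producing identical outputs with equivalently computed defined variables, or there exist reachable configurations $(S_1,m_{1_i})$ and $(S_2,m_{2_i})$ with matching counters and value stores agreeing on the relevant variables. The same-predicate-evaluation argument together with the IH on $(S_1',S_2')$ handles each loop body iteration, and the counter-bookkeeping is identical to that already developed in the proofs of Lemmas~\ref{lmm:equivTermCompSameLoopIteration} and~\ref{lmm:loopTermInSameWayIntermediate}. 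A prerequisite auxiliary lemma, analogous to Lemma~\ref{lmm:additionalParamFuncWideSimilarUseDef} and Lemma~\ref{lmm:enumTypeExtSameDefAndUse}, should first establish that $S_2\approx_{\text{Exit}}^S S_1$ implies $\text{OVar}(S_1)\subseteq\text{OVar}(S_2)$ and, crucially, $\text{Def}(S_1)=\text{Def}(S_2)$ (since the inserted If with two skip branches defines nothing new) --- this is what allows the loop-body IH state-agreement condition to be discharged.

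The main obstacle is discharging the ``no program semantic errors'' hypothesis uniformly across unboundedly many loop iterations: the assumption must be formalized as holding in \emph{every} reachable state of the old execution, not merely the initial one, so that at each iteration the inserted If-check in $S_2$ still reduces without crashing. Making this precise will likely require stating the hypothesis as a per-state predicate parameterized by the reachability relation and then showing it is preserved by the induction in the loop lemma; this is analogous to how ``no input translated to new enum labels'' is propagated through the proof of Lemma~\ref{lmm:enumTypeExtFuncWideAppend}.
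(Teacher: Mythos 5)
Your plan matches the paper's proof essentially step for step: strong induction on $\text{size}(S_1)+\text{size}(S_2)$ with a case analysis on Definition~\ref{def:exitOnErrFuncWide}, the base case discharging the inserted guard via the no-error hypothesis so that $(S_2,m_2)$ reduces to $(S_1,m_2)$ without touching the store, an auxiliary lemma giving $\text{OVar}(S_1)\subseteq\text{OVar}(S_2)$ (the paper's Lemma~\ref{lmm:exitOnErrFuncWideSimilarUseDef}), and a dedicated loop lemma proved by induction on the iteration counter (the paper's Lemma~\ref{lmm:exitOnErrLoopStmt}). The one point you flag as an obstacle --- propagating the no-error assumption across all iterations --- is handled in the paper simply by restating that assumption as a hypothesis of the loop lemma at each level, exactly as you anticipate.
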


\begin{proof}
By induction on the sum of the program size of $S_1$ and $S_2$, $\text{size}(S_1) + \text{size}(S_2)$.

\noindent{Base case}.
$S_1 = s$ and $S_2 = ``\text{If}(e) \, \text{then} \{\text{skip}\} \, \text{else}\{\text{skip}\}";s$;

By assumption, there is no program semantic error related to the update of exit-on-error. Then the evaluation of the predicate expression $e$ in the first statement of $S_2$ does not crash. W.l.o.g., the expression $e$ evaluates to zero.
Then the execution of $S_2$ proceeds as follows.
\begin{tabbing}
xx\=xx\=\kill
\>\>                   $(\text{If}(e) \, \text{then} \{\text{skip}\} \, \text{else}\{\text{skip}\};s, m_2(\vals_2))$\\
\>$->$\>               $(\text{If}((0, v_{\mathfrak{of}})) \, \text{then} \{\text{skip}\} \, \text{else}\{\text{skip}\};s, m_2(\vals_2))$\\
\>\>                   by the rule EEval'\\
\>$->$\>               $(\text{If}(0) \, \text{then} \{\text{skip}\} \, \text{else}\{\text{skip}\};s, m_2(\vals_2))$\\
\>\>                   by the rule E-Oflow1 or E-Oflow2.\\
\>$->$\>               $(\text{skip};s, m_2(\vals_2))$ by the rule If-F\\
\>$->$\>               $(s, m_2(\vals_2))$ by the rule Seq.
\end{tabbing}

Value stores $\vals_2$ are not changed in the execution of $(S_2, m_2(\vals_2)) ->* (s, m_2(\vals_2))$. By assumption, $\vals_1$ and $\vals_2$ agree on values of  used variables as well as the input sequence variable, and the I/O sequence variable,
$\forall x \in \text{Use}(S_2) \cap \text{Use}(S_1) \cup \{{id}_I,  {id}_{IO}\}\,:\,
\vals_1(id) = \vals_2(id)$.
By semantics, $S_1$ and $S_2$ terminate in the same way, produce the same output sequence, and have equivalent computation of  defined variables in both $S_1$ and $S_2$ as well as the input sequence variable, and the I/O sequence variable. Then this lemma holds.

\noindent{Induction step}.

The hypothesis is that this lemma holds when the sum $k$ of the program size of $S_1$ and $S_2$ are great than or equal to 4, $k\geq 4$.

We then show that this lemma holds when the sum of the program size of $S_1$ and $S_2$ is $k+1$.
There are cases to consider.
\begin{enumerate}
\item $S_1$ and $S_2$ are both ``If" statement:

$S_1 = ``\text{If}(e) \, \text{then}\{S_1^t\} \, \text{else} \{S_1^f\}"$,
$S_2 = ``\text{If}(e) \, \text{then}\{S_2^t\} \, \text{else} \{S_2^f\}"$ where both of the following hold
\begin{itemize}
\item $(S_2^t {\approx}_{\text{Exit}}^S S_1^t)$;

\item $(S_2^f {\approx}_{\text{Exit}}^S S_1^f)$;
\end{itemize}

By the definition of $\text{Use}(S_1)$, variables used in the predicate expression $e$ are a subset of  used variables in $S_1$ and $S_2$,
$\text{Use}(e) \subseteq \text{Use}(S_1) \cap \text{Use}(S_2)$.
By assumption, corresponding variables used in $e$ are of same value in value stores $\vals_1$ and $\vals_2$. By Lemma~\ref{lmm:expEvalSameVal}, the expression evaluates to the same value w.r.t value stores $\vals_1$ and $\vals_2$. There are three possibilities.
\begin{enumerate}
\item The evaluation of $e$ crashes,
$\mathcal{E}'\llbracket e\rrbracket\vals_1 =
 \mathcal{E}'\llbracket e\rrbracket\vals_2 = (\text{error}, v_{\mathfrak{of}})$.

The execution of $S_1$ continues as follows:
\begin{tabbing}
xx\=xx\=\kill
\>\>                   $(\text{If}(e) \, \text{then}\{S_1^t\} \, \text{else} \{S_1^f\}, m_1(\vals_1))$\\
\>$->$\>               $(\text{If}((\text{error}, v_{\mathfrak{of}})) \, \text{then}\{S_1^t\} \, \text{else} \{S_1^f\}, m_1(\vals_1))$\\
\>\>                   by the rule EEval'\\
\>$->$\>               $(\text{If}(0) \, \text{then}\{S_1^t\} \, \text{else} \{S_1^f\}, m_1(1/\mathfrak{f}))$\\
\>\>                   by the ECrash rule  \\
\>{\kStepArrow [i] }\> $(\text{If}(0) \, \text{then}\{S_1^t\} \, \text{else} \{S_1^f\}, m_1(1/\mathfrak{f}))$ for any $i>0$\\
\>\>                   by the Crash rule.
\end{tabbing}

Similarly, the execution of $S_2$ started from the state $m_2(\vals_2)$ crashes.
The lemma holds.

\item The evaluation of $e$ reduces to zero,
$\mathcal{E}'\llbracket e\rrbracket\vals_1 =
 \mathcal{E}'\llbracket e\rrbracket\vals_2 = (0, v_{\mathfrak{of}})$.

The execution of $S_1$ continues as follows.
\begin{tabbing}
xx\=xx\=\kill
\>\>                   $(\text{If}(e) \, \text{then}\{S_1^t\} \, \text{else} \{S_1^f\}, m_1(\vals_1))$\\
\>= \>                 $(\text{If}((0, v_{\mathfrak{of}})) \, \text{then}\{S_1^t\} \, \text{else} \{S_1^f\}, m_1(\vals_1))$\\
\>\>                   by the rule EEval'\\
\>$->$\>               $(\text{If}(0) \, \text{then}\{S_1^t\} \, \text{else} \{S_1^f\}, m_1(\vals_1))$\\
\>\>                   by the E-Oflow1 or E-Oflow2 rule  \\
\>$->$\>               $(S_1^f, m_1(\vals_1))$ by the If-F rule.
\end{tabbing}

Similarly, the execution of $S_2$ gets to the configuration $(S_2^f, m_2(\vals_2))$.

By the hypothesis IH, we show the lemma holds.
We need to show that all conditions are satisfied for the application of the hypothesis IH.
\begin{itemize}
\item $(S_2^f {\approx}_{\text{Exit}}^S S_1^f)$

By assumption.

\item The sum of the program size of $S_1^f$ and $S_2^f$ is less than $k$,
$\text{size}(S_1^f) + \text{size}(S_2^f) < k$.

By definition, $\text{size}(S_1) = 1 + \text{size}(S_1^t) + \text{size}(S_1^f)$.
Then, $\text{size}(S_1^f) + \text{size}(S_2^f) < k + 1 - 2 = k - 1$.

\item Value stores $\vals_1$ and $\vals_2$ agree on values of  used variables in $S_1^f$ and $S_2^f$ as well as the input, I/O sequence variable.

By definition, $\text{Use}(S_1^f)\allowbreak \subseteq \text{Use}(S_1)$.
So are the cases to $S_2^f$ and $S_2$.
In addition, value stores $\vals_1$ and $\vals_2$ are not changed in the evaluation of the predicate expression $e$. The condition holds.

\item There are no program semantic error related to the extra check in the update of exit-on-error in the execution of $S_2$.

By assumption.
\end{itemize}

By the hypothesis IH, the lemma holds.

\item The evaluation of $e$ reduces to the same nonzero integer value,
$\mathcal{E}'\llbracket e\rrbracket \vals_1 =
 \mathcal{E}'\llbracket e\rrbracket \vals_2 = (v, v_{\mathfrak{of}})$ where $v \neq 0$.

By similar to the second subcase above.
\end{enumerate}

\item $S_1$ and $S_2$ are both ``while" statements:

$S_1 = ``\text{while}_{\langle n\rangle}(e) \, \{S_1'\}"$,
$S_2 = ``\text{while}_{\langle n\rangle}(e) \, \{S_2'\}"$ where
$(S_2' {\approx}_{\text{exit}}^S S_1')$;

By Lemma~\ref{lmm:exitOnErrLoopStmt}, we show this lemma holds.
We need to show that all required conditions are satisfied for the application of Lemma~\ref{lmm:exitOnErrLoopStmt}.
\begin{itemize}

\item The output deciding variables in $S_1'$ are a subset of those in $S_2'$,
$\text{OVar}(S_1') = \text{OVar}(S_2')$;

By Lemma~\ref{lmm:exitOnErrFuncWideSimilarUseDef}.

\item When started in states $m_1'(\vals_1'), m_2'(\vals_1')$ where
value stores $\vals_1'$ and $\vals_2'$ agree on values of  used variables in both $S_1'$ and $S_2'$ as well as the input sequence variable,  and the I/O sequence variable,
then $S_1'$ and $S_2'$ terminate in the same way, produce the same output sequence, and have equivalent computation of defined variables in both $S_1$ and $S_2$ as well as the input sequence variable,  and the I/O sequence variable.

By the induction hypothesis IH. This is because the sum of the program size of $S_1'$ and $S_2'$ is less than $k$. By definition, $\text{size}(S_1) = 1 + \text{size}(S_1')$.
\end{itemize}
By Lemma~\ref{lmm:exitOnErrLoopStmt}, this lemma holds.

\item $S_1 = S_1';s_1$ and $S_2 = S_2';s_2$ where both of the following hold:

\begin{itemize}
\item $(S_2' \approx_{\text{Exit}}^S S_1')$;

\item $(s_2 \approx_{\text{Exit}}^S s_1)$;
\end{itemize}

By the hypothesis IH, we show $S_2'$ and $S_1'$ terminate in the same way and produce the same output sequence and when $S_2'$ and $S_1'$ both terminate, $S_2'$ and $S_1'$ have equivalent terminating computation of variables  used or defined in $S_2'$ and $S_1'$ as well as the input sequence variable,  and the I/O sequence variable.

We show all the required conditions are satisfied for the application of the hypothesis IH.
\begin{itemize}
\item $(S_2' {\approx}_{\text{Exit}}^S S_1')$.

By assumption.

\item The sum of  the program size of $S_1'$ and $S_2'$ is less than $k$,
$\text{size}(S_1') + \text{size}(S_2') < k$.

By definition, $\text{size}(S_2) = \text{size}(s_2) + \text{size}(S_2')$ where $\text{size}(s_2) < 1$.
Then, $\text{size}(S_2') + \text{size}(S_1') < k + 1 - \text{size}(s_2) - \text{size}(s_1) < k$.

\item Value stores $\vals_1$ and $\vals_2$ agree on values of  used variables in both $S_2'$ and $S_1'$ as well as the input, I/O sequence variable.

By definition, $\text{Use}(S_2')\allowbreak \subseteq \text{Use}(S_2)$,
$\text{Use}(S_1')\allowbreak \subseteq \text{Use}(S_1)$.
The condition holds.
\end{itemize}
By the hypothesis IH, one of the following holds:
\begin{enumerate}
\item $S_1'$ and $S_2'$ both do not terminate.

By Lemma~\ref{lmm:multiStepSeqExec}, executions of $S_1 = S_1';s_1$ and $S_2 = S_2';s_2$ both do not terminate and produce the same output sequence.

\item $S_1'$ and $S_2'$ both terminate.

By assumption,
$(S_2', m_2(\vals_2)) ->* (\text{skip}, m_2'(\vals_2'))$,
$(S_1', m_1(\vals_1)) ->* (\text{skip}, m_1'(\vals_1'))$.

By Corollary~\ref{coro:termSeq},
$(S_2';s_2, m_2(\vals_2)) ->* (s_2, m_2'(\vals_2'))$,
$(S_1';s_1, m_1(\vals_1)) ->* (s_1, m_1'(\vals_1'))$.

By the hypothesis IH, we show that $s_2$ and $s_1$ terminate in the same way, produce the same output sequence and when $s_2$ and $s_1$ both terminate, $s_2$ and $s_1$ have equivalent computation of variables  defined in both $s_1$ and $s_2$ and the input, and I/O sequence variables.

We need to show that all conditions are satisfied for the application of the hypothesis IH.
\begin{itemize}
\item There are updates of ``exit-on-error" between $s_2$ and $s_1$,
$s_2 {\approx}_{\text{Exit}}^S s_1$;

By assumption, $s_2 {\approx}_{\text{Exit}}^S s_1$.

\item The sum of the program size $s_2$ and $s_1$ is less than or equals to $k$;

By definition, $\text{size}(S_2') \geq 1,  \text{size}(S_1') \geq 1$.
Therefore, $\text{size}(s_2) + \text{size}(s_1) < k+1 - \text{size}(S_2') - \text{size}(S_1') \leq k$.

\item Value stores $\vals_1'$ and $\vals_2'$ agree on values of  output deciding variables in $s_2$ and $s_1$ as well as the input, I/O sequence variable.

By Lemma~\ref{lmm:exitOnErrFuncWideSimilarUseDef},
$\text{OVar}(s_1) \subseteq \text{OVar}(s_2)$, then $\text{OVar}(s_2)\,\allowbreak   \cap \text{OVar}(s_1) = \text{OVar}(s_1)$.
For any variable $id$ in $\text{Use}(s_1)$, if $id$ is not in $\text{OVar}(S_1')$,
then the value of $id$ is not changed in the execution of $S_1'$ and $S_2'$,
$\vals_1'(id) = \vals_1(id) = \vals_2(id)\allowbreak = \vals_2'(id)$.
Otherwise, the variable $id$ is defined in the execution of $S_1'$ and $S_2'$, by assumption, $\vals_1'(id) = \vals_2'(id)$.
The condition holds.

\item There are no program semantic errors related to the extra check in the update of exit-on-error in the execution of $S_2$.

By assumption.
\end{itemize}
By the hypothesis IH, the lemma holds.
\end{enumerate}
\end{enumerate}
\end{proof}

We list the auxiliary lemmas below. One lemma shows that, if there are updates of exit-on-error between two statement sequences, then there are same set of  defined variables in the two statement sequences, and the  used variables in the update program are the superset of those in the old program.
\begin{lemma}\label{lmm:exitOnErrFuncWideSimilarUseDef}
Let $S_2$ be a statement sequence and $S_1$ where there are updates of exit-on-error, $S_2 {\approx}_{\text{Exit}}^S S_1$.
Then output deciding variables in $S_1$ are a subset of those in $S_2$,
$\text{OVar}(S_1) \subseteq \text{OVar}(S_2)$.
\end{lemma}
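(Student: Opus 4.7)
The plan is to prove Lemma~\ref{lmm:exitOnErrFuncWideSimilarUseDef} by induction on $\text{size}(S_1) + \text{size}(S_2)$, with a case analysis that mirrors the five clauses of Definition~\ref{def:exitOnErrFuncWide}. The invariant I am trying to preserve is the subset relation $\text{OVar}(S_1) \subseteq \text{OVar}(S_2)$, i.e.\ both $\text{TVar}_o(S_1) \subseteq \text{TVar}_o(S_2)$ and $\text{Imp}_o(S_1) \subseteq \text{Imp}_o(S_2)$.

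\paragraph{Base case.}
The base case is clause~1, $S_2 = ``\text{If}(e)\,\text{then}\{\text{skip}\}\,\text{else}\{\text{skip}\}";S_1$. Here I would unfold the definitions of $\text{TVar}_o$ and $\text{Imp}_o$ on the sequence $S_2$. For any statement in $S_1$ that contains an output, the recursive equations for $\text{Imp}_o$ and $\text{TVar}_o$ on a sequence prepend $\text{Imp}(s_{\text{prefix}}, \cdot)$ and $\text{TVar}(s_{\text{prefix}}) \cup \text{Imp}(s_{\text{prefix}}, \cdot)$, which only adds variables. Because $s_{\text{prefix}}$ is an If with two \text{skip} branches, $\text{Def}(s_{\text{prefix}}) = \emptyset$, so $\text{Imp}(s_{\text{prefix}}, X) = X$ for every $X$, and the inclusion becomes immediate. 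When $S_1$ contains no output, $\text{Imp}_o(S_1) = \{{id}_{IO}\}$ and $\text{TVar}_o(S_1) = \emptyset$, both of which are trivially contained in $\text{OVar}(S_2)$ by Lemma~\ref{lmm:IOseqInImpOfAnyStmtWithOutStmt}.

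\paragraph{Inductive cases.}
For the If case (clause~2) and the while case (clause~3), I apply the induction hypothesis to the branches or loop body (each strictly smaller than the parent), obtaining inclusions of $\text{OVar}$; then I invoke the syntax-directed equations defining $\text{Imp}_o$ and $\text{TVar}_o$ on compound statements to conclude $\text{OVar}(S_1) \subseteq \text{OVar}(S_2)$. Clause~4, where $S_1 \approx_{O}^S S_2$, is handled directly by Corollary~\ref{coro:sameOVar}, which gives equality (hence inclusion) of $\text{OVar}$. The sequence clause (clause~5) is the main obstacle: here I need the induction hypothesis on both $(S_1',S_2')$ and $(s_1,s_2)$, then combine them using the compositional equations $\text{Imp}_o(S';s) = \text{Imp}(S', \text{Imp}_o(s))$ and $\text{TVar}_o(S';s) = \text{TVar}(S') \cup \text{Imp}(S', \text{TVar}_o(s))$ whenever $s$ contains an output. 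The monotonicity of $\text{Imp}(S',\cdot)$ with respect to set inclusion (provable from Definition~\ref{def:impvars}, and used implicitly via Lemma~\ref{lmm:impVarUnionLemma}) lifts the inclusion $\text{OVar}(s_1)\subseteq \text{OVar}(s_2)$ through the prefix $S_1'$. The subtlety is that $\text{Imp}$ is taken with respect to $S_1'$ on the left and $S_2'$ on the right, so I also need $\text{Imp}(S_1', X) \subseteq \text{Imp}(S_2', X)$. This should follow because $S_1'\approx_{\text{Exit}}^S S_2'$ essentially adds only exit-guarding If statements with skip branches, which do not change $\text{Def}$; an auxiliary sublemma that $\text{Def}(S_1) = \text{Def}(S_2)$ and $\text{Use}(S_1) \subseteq \text{Use}(S_2)$ (proved by a parallel induction) closes this gap.

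\paragraph{Main obstacle.}
The hardest step is reconciling clause~1 with the sequence clause~5 under the induction: the prepended If-skip-skip statement increases $\text{Use}$ without changing $\text{Def}$, and I need this asymmetry to propagate cleanly through the $\text{Imp}(S', \cdot)$ closures that appear in the sequence case. I anticipate needing the auxiliary fact that $\text{Def}(S_1) = \text{Def}(S_2)$ whenever $S_2 \approx_{\text{Exit}}^S S_1$, proved by a mutual induction alongside the main statement, so that the ``import through $S_1'$'' and ``import through $S_2'$'' computations agree on every variable that matters for $\text{OVar}$.
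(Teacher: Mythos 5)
Your approach is the same as the paper's: the paper's entire proof of Lemma~\ref{lmm:exitOnErrFuncWideSimilarUseDef} is the single sentence ``By induction on the sum of the program size of $S_1$ and $S_2$,'' and your plan is exactly that induction with a case split on Definition~\ref{def:exitOnErrFuncWide}, so you are supplying details the paper omits rather than diverging from it. Your treatment of clause~1 (the prepended guard has $\text{Def}=\emptyset$, hence $\text{Imp}$ through it is the identity and $\text{TVar}_o$ only grows) and of clause~4 via Corollary~\ref{coro:sameOVar} is sound.

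One caveat on your proposed auxiliary sublemma: the claim that $\text{Def}(S_1)=\text{Def}(S_2)$ whenever $S_2\approx_{\text{Exit}}^S S_1$ is doubtful, because the relation can bottom out in clause~4 ($S_1\approx_O^S S_2$), and the behavioral-equivalence proof rule tolerates asymmetries (e.g.\ a trailing non-output statement on one side only) that can change $\text{Def}$. You do not actually need that sublemma to push the inclusion through the prefixes in clause~5: the definition of $\approx_{\text{Exit}}^S$ already requires $S_2'\approx_x^S S_1'$ for every $x$ in the relevant imported-variable sets, so Lemma~\ref{lmm:sameImpfromEquivCompCond} gives $\text{Imp}(S_1',\{x\})=\text{Imp}(S_2',\{x\})$ directly, and Lemma~\ref{lmm:impVarUnionLemma} lifts this to the sets you care about. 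Replacing your Def/Use sublemma with that argument closes the only real gap in the plan.
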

\begin{proof}
By induction on the sum of the program size of $S_1$ and $S_2$.
\end{proof}

\begin{lemma}\label{lmm:exitOnErrLoopStmt}
Let $S_1 = \text{while}_{\langle n_1\rangle}(e) \, \{S_1'\}$ and
    $S_2 = \text{while}_{\langle n_2\rangle}(e) \,\allowbreak \{S_2'\}$ be two loop statements where all of the following hold:
\begin{itemize}
\item the output deciding variables in $S_1'$ are a subset of those in $S_2'$,

\noindent$\text{OVar}(S_1') \subseteq \text{OVar}(S_2') = \text{OVar}(S)$;

\item When started in states $m_1'(\vals_1'), m_2'(\vals_2')$ where
    \begin{itemize}
    \item Value stores agree on  values of  output deciding variables in both $S_1'$ and $S_2'$ as well as the input sequence variable,  and the I/O sequence variable,
     $\forall x \in \text{OVar}(S_2') \cup \{{id}_I,  {id}_{IO}\}\,
    \forall m_1'(\vals_1')\, m_2'(\vals_2')\,:\, \, \allowbreak
    \vals_1'(x) = \vals_2'(x)$;

    \item There are no program semantic errors related to the extra check in the update of exit-on-error in executions of $S_1'$ and $S_2'$;
    \end{itemize}
then $S_1'$ and $S_2'$ terminate in the same way, produce the same output sequence, and have equivalent computation of defined variables in $S_1'$ and $S_2'$ as well as the input sequence variable,  and the I/O sequence variable
    $((S_1', m_1) \equiv_{H} (S_2', m_2)) \wedge
     ((S_1', m_1) \equiv_{O} (S_2', m_2)) \wedge
     (\forall x \in \text{OVar}(S) \cup \{{id}_I,\allowbreak {id}_{IO}\}\,:\,$

     \noindent$(S_1', m_1) \equiv_{x} (S_2', m_2))$;
\end{itemize}

If $S_1$ and $S_2$ start in states $m_1(\text{loop}_c^1, \vals_1), m_2(\text{loop}_c^2, \vals_2)$ respectively, with loop counters of $S_1$ and $S_2$ not initialized ($S_1, S_2$ have not executed yet), value stores agree on values of  used variables in $S_1$ and $S_2$, and there are no program semantic errors related to the extra check in the update of exit-on-error, then, for any positive integer $i$, one of the following holds:
\begin{enumerate}
\item Loop counters for $S_1$ and $S_2$ are always less than $i$ if any is present,
$\forall m_1'(\text{loop}_c^{1'})\, m_2'(\text{loop}_c^{2'})\,:\,
(S_1, m_1(\text{loop}_c^1, \vals_1)) ->* (S_1'', m_1'(\text{loop}_c^{1'})),
\text{loop}_c^{1'}(n_1) < i,
(S_2, m_2(\text{loop}_c^2,\allowbreak \vals_2)) ->* (S_2'', m_2'(\text{loop}_c^{2'})),
\text{loop}_c^{2'}(n_2) < i$,
$S_1$ and $S_2$ terminate in the same way, produce the same output sequence, and have equivalent computation of  output deciding variables in $S_1$ and $S_2$ and the input sequence variable,  the I/O sequence variable,
$(S_1, m_1)\allowbreak \equiv_{H} (S_2, m_2)$ and
$(S_1, m_1) \equiv_{O} (S_2, m_2)$ and
$\forall x \in (\text{OVar}(S_1) \cup \text{OVar}(S_2)) \cup
               \{{id}_I, {id}_{IO}\}\,:\,$

               \noindent$(S_1, m_1) \equiv_{x} (S_2, m_2)$;

\item The loop counter of $S_1$ and $S_2$ are of value less than or equal to $i$,
and there are no reachable configurations
$(S_1, m_1(\text{loop}_c^{1_i},\allowbreak \vals_{1_i}))$ from $(S_1, m_1(\vals_1))$,
$(S_2, m_2(\text{loop}_c^{2_i}, \vals_{2_i}))$ from $(S_2, \,\allowbreak m_2(\vals_2))$ where all of the following hold:
\begin{itemize}
\item The loop counters of $S_1$ and $S_2$ are of value $i$,
$\text{loop}_c^{1_i}(n_1)\allowbreak = \text{loop}_c^{2_i}(n_2) = i$.

\item Value stores $\vals_{1_i}$ and $\vals_{2_i}$ agree on values of output deciding variables in $S_1$ and $S_2$ as well as the input sequence variable,  and the I/O sequence variable,
$\forall x \in \, (\text{OVar}(S_1) \cup \text{OVar}(S_2)) \cup \{{id}_I, {id}_{IO}\}\,:\,
\vals_{1_i}(x) = \vals_{2_i}(x)$.
\end{itemize}

\item There are reachable configurations
$(S_1, m_1(\text{loop}_c^{1_i}, \vals_{1_i}))$ from $(S_1, m_1(\vals_1))$,
$(S_2, m_2(\text{loop}_c^{2_i}, \vals_{2_i}))$ from $(S_2, \,\allowbreak m_2(\vals_2))$ where all of the following hold:
\begin{itemize}
\item The loop counter of $S_1$ and $S_2$ are of value $i$,
$\text{loop}_c^{1_i}(n_1)\allowbreak = \text{loop}_c^{2_i}(n_2) = i$.

\item Value stores $\vals_{1_i}$ and $\vals_{2_i}$ agree on values of output deciding variables in $S_1$ and $S_2$ including the input sequence variable and the I/O sequence variable,
$\forall x \in \, (\text{OVar}(S_1) \cup \text{OVar}(S_2)) \cup \{{id}_I, {id}_{IO}\}\,:\,
\vals_{1_i}(x) = \vals_{2_i}(x)$.
\end{itemize}
\end{enumerate}
\end{lemma}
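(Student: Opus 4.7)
The plan is to prove Lemma~\ref{lmm:exitOnErrLoopStmt} by induction on $i$, following the same schema used for the analogous loop lemmas (\ref{lmm:loopTermInSameWayIntermediate}, \ref{lmm:additionalParamLoopStmt}, \ref{lmm:enumTypeExtLoopStmt}) already proved in the paper. The argument proceeds by analyzing the common evaluation of the shared loop predicate $e$ and then invoking the body-level assumption (termination in the same way, same output sequence, equivalent computation) to carry the invariant from one iteration to the next.

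For the base case $i=1$, I would first observe that $\text{Use}(e)\subseteq \text{OVar}(S_1)\cap \text{OVar}(S_2)$ by definition of $\text{Imp}_o$ and $\text{LVar}$, so the initial agreement of $\vals_1$ and $\vals_2$ on out-deciding variables together with Lemma~\ref{lmm:expEvalSameVal} forces $\mathcal{E}'\llbracket e\rrbracket\vals_1 = \mathcal{E}'\llbracket e\rrbracket\vals_2$. Three subcases then follow mechanically from the SOS rules: if the common value is $(\text{error},v_{\mathfrak{of}})$ both loops crash at the predicate evaluation and the loop counters never reach $1$ (case 1); if it is $(0,v_{\mathfrak{of}})$ both loops take the Wh-F rule, skip out, and again the counters stay below $1$ (case 1); if it is $(v,v_{\mathfrak{of}})$ with $v\neq 0$ the Wh-T rule yields configurations $(S_1';S_1, m_1(\text{loop}_c^1[1/n_1],\vals_1))$ and $(S_2';S_2, m_2(\text{loop}_c^2[1/n_2],\vals_2))$, at which point I invoke the body assumption on $S_1' \equiv_H S_2'$, $S_1' \equiv_O S_2'$, and $\forall x\in\text{OVar}(S)\,:\,S_1'\equiv_x S_2'$. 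If $S_1',S_2'$ both diverge, case 2 holds with $i=1$; if they both terminate, Corollary~\ref{coro:termSeq} together with the equivalent-computation conclusion gives a reachable configuration $(s_1,m_{1_1})$, $(s_2,m_{2_1})$ satisfying case 3, using the assumption of unique loop labels to conclude $\text{loop}_c^{1_1}(n_1)=\text{loop}_c^{2_1}(n_2)=1$ via Corollary~\ref{coro:loopCntRemainsSame}.

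For the induction step, I would assume the trichotomy for $i$ and prove it for $i+1$. If the induction hypothesis gives case 1 or case 2 at $i$, the same disjunct trivially continues to hold at $i+1$ since the counters are already bounded and no new configuration with counter $i+1$ can arise. The only interesting situation is when case 3 holds at $i$: there we have $(s_1,m_{1_i})$ and $(s_2,m_{2_i})$ with agreeing value stores on $\text{OVar}(s)\cup\{{id}_I,{id}_{IO}\}$ and crash flags unset. I would then replay exactly the base-case analysis starting from $m_{1_i},m_{2_i}$ — the predicate evaluates identically, and the three semantic subcases resolve into (i) both exiting the loop cleanly (case 1 for $i+1$), (ii) entering the body and diverging (case 2 for $i+1$, using Lemma~\ref{lmm:multiStepSeqExec} to lift body-level non-termination to whole-loop non-termination), or (iii) entering the body, terminating, and reaching $(s_1,m_{1_{i+1}}),(s_2,m_{2_{i+1}})$ with agreeing stores at counter $i+1$ (case 3).

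The main obstacle, as in the parallel lemmas, will be carefully maintaining the agreement invariant on the out-deciding variables across iterations. Concretely, I must show that the imported variables of $S_1'$ (resp.\ $S_2'$) relative to $\text{OVar}(s)$ stay inside $\text{OVar}(s)$, so that the body-level hypothesis is applicable and its conclusion — equivalent computation of every $x\in\text{OVar}(s)$ — re-establishes the precondition of the next induction step. This is an analog of the computation $\text{Imp}(S_1,\text{Imp}(x))\subseteq\text{Imp}(x)$ carried out inside the proof of Lemma~\ref{lmm:loopTermInSameWayIntermediate}, and I expect to reuse Lemma~\ref{lmm:impVarUnionLemma}, Lemma~\ref{lmm:ImpPrefixLemma}, and Lemma~\ref{lmm:exitOnErrFuncWideSimilarUseDef} (which already yields $\text{OVar}(S_1')\subseteq\text{OVar}(S_2')$) to discharge it. Everything else — the predicate analysis, the semantic bookkeeping of loop counters via Corollary~\ref{coro:loopCntRemainsSame}, and the lifting of body-level non-termination via Lemma~\ref{lmm:multiStepSeqExec} — follows the established pattern with no new surprises.
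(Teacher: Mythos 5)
Your proposal is correct and follows essentially the same route as the paper's proof: induction on $i$, a three-way case split on the common evaluation of the shared predicate $e$ (error, zero, nonzero) justified by $\text{Use}(e)\subseteq\text{OVar}(S_1)\cup\text{OVar}(S_2)$ and Lemma~\ref{lmm:expEvalSameVal}, invocation of the body-level assumptions to propagate the agreement invariant on $\text{OVar}(s)\cup\{{id}_I,{id}_{IO}\}$, and a replay of the base-case analysis from the counter-$i$ configurations in the induction step, with the same supporting results (Corollaries~\ref{coro:termSeq} and~\ref{coro:loopCntRemainsSame}, Lemmas~\ref{lmm:multiStepSeqExec} and~\ref{lmm:exitOnErrFuncWideSimilarUseDef}). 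No gaps beyond the level of detail the paper itself leaves implicit.
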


\begin{proof}
By induction on $i$.

\noindent{Base case}.

We show that, when $i = 1$, one of the following holds:
\begin{enumerate}
\item Loop counters for $S_1$ and $S_2$ are always less than 1 if any is present,
$\forall m_1'(\text{loop}_c^{1'})\, m_2'(\text{loop}_c^{2'})\,:\,
(S_1, m_1(\text{loop}_c^1, \vals_1)) ->* (S_1'', m_1'(\text{loop}_c^{1'})),
\text{loop}_c^{1'}(n_1) < i,
(S_2, m_2(\text{loop}_c^2,\allowbreak \vals_2)) ->* (S_2'', m_2'(\text{loop}_c^{2'})),
\text{loop}_c^{2'}(n_2) < i$,
$S_1$ and $S_2$ terminate in the same way, produce the same output sequence, and have equivalent computation of  output deciding variables in $S_1$ and $S_2$ including the input sequence variable,  the I/O sequence variable,
$(S_1, m_1)\allowbreak \equiv_{H} (S_2, m_2)$ and
$(S_1, m_1) \equiv_{O} (S_2, m_2)$ and
$\forall x \in (\text{OVar}(S_1) \cup \text{OVar}(S_2)) \cup
               \{{id}_I, {id}_{IO}\}\,:\,
               (S_1, m_1) \equiv_{x} (S_2, m_2)$;

\item Loop counters of $S_1$ and $S_2$ are of values less than or equal to 1 but there are no reachable configurations
$(S_1, m_1(\text{loop}_c^{1_1}, \vals_{1_i}))$ from $(S_1, m_1(\vals_1))$,
$(S_2, m_2(\text{loop}_c^{2_1}, \vals_{2_i}))$ from $(S_2, \,\allowbreak m_2(\vals_2))$ where all of the following hold:
\begin{itemize}
\item The loop counter of $S_1$ and $S_2$ are of value 1,
$\text{loop}_c^{1_1}(n_1)\allowbreak = \text{loop}_c^{2_1}(n_2) = 1$.

\item Value stores $\vals_{1_1}$ and $\vals_{2_1}$ agree on values of  output deciding variables in $S_1$ and $S_2$ including the input sequence variable and the I/O sequence variable,
$\forall x \in \, (\text{OVar}(S_1) \cup \text{OVar}(S_2)) \cup \{{id}_I, {id}_{IO}\}\,:\,
\vals_{1_1}(x) = \vals_{2_1}(x)$.
\end{itemize}

\item There are reachable configuration
$(S_1, m_1(\text{loop}_c^{1_1}, \vals_{1_i}))$ from $(S_1, m_1(\vals_1))$,
$(S_2, m_2(\text{loop}_c^{2_1}, \vals_{2_i}))$ from $(S_2, \,\allowbreak m_2(\vals_2))$ where all of the following hold:
\begin{itemize}
\item The loop counter of $S_1$ and $S_2$ are of value 1,
$\text{loop}_c^{1_1}(n_1)\allowbreak = \text{loop}_c^{2_1}(n_2) = 1$.

\item Value stores $\vals_{1_1}$ and $\vals_{2_1}$ agree on values of  output deciding variables in $S_1$ and $S_2$ including the input sequence variable and the I/O sequence variable,
$\forall x \in \, (\text{OVar}(S_1) \cup \text{OVar}(S_2)) \cup \{{id}_I, {id}_{IO}\}\,:\,
\vals_{1_1}(x) = \vals_{2_1}(x)$.
\end{itemize}
\end{enumerate}

By definition, variables used in the predicate expression $e$ of $S_1$ and $S_2$ are in output deciding variables in $S_1$ and $S_2$, $\text{Use}(e) \subseteq \text{OVar}(S_1) \cup \text{OVar}(S_2)$.
By assumption, value stores $\vals_1$ and $\vals_2$ agree on values of variables in $\text{Use}(e)$,
the predicate expression $e$ evaluates to the same value w.r.t value stores $\vals_1$ and $\vals_2$ by Lemma~\ref{lmm:expEvalSameTerm}.
There are three possibilities.
\begin{enumerate}
\item The evaluation of $e$ crashes,

\noindent$\mathcal{E}'\llbracket e\rrbracket \vals_1 =
 \mathcal{E}'\llbracket e\rrbracket \vals_2 = (\text{error}, v_{\mathfrak{of}})$.

The execution of $S_1$ continues as follows:
\begin{tabbing}
xx\=xx\=\kill
\>\>                   $(\text{while}_{\langle n_1\rangle}(e) \, \{S_1'\}, m_1(\vals_1))$\\
\>$->$\>               $(\text{while}_{\langle n_1\rangle}((\text{error}, v_{\mathfrak{of}})) \, \{S_1'\}, m_1(\vals_1))$\\
\>\>                   by the rule EEval'\\
\>$->$\>               $(\text{while}_{\langle n_1\rangle}(0) \, \{S_1'\}, m_1(1/\mathfrak{f}))$\\
\>\>                   by the ECrash rule  \\
\>{\kStepArrow [i] }\> $(\text{while}_{\langle n_1\rangle}(0) \, \{S_1'\}, m_1(1/\mathfrak{f}))$ for any $i>0$\\
\>\>                   by the Crash rule.
\end{tabbing}

Similarly, the execution of $S_2$ started from the state $m_2(\vals_2)$ crashes.
Therefore $S_1$ and $S_2$ terminate in the same way when started from $m_1$ and $m_2$ respectively.
Because $\vals_1({id}_{IO}) = \vals_2({id}_{IO})$, the lemma holds.

\item The evaluation of $e$ reduces to zero,
$\mathcal{E}'\llbracket e\rrbracket \vals_1 =
 \mathcal{E}'\llbracket e\rrbracket \vals_2 = (0, v_{\mathfrak{of}})$.

The execution of $S_1$ continues as follows.
\begin{tabbing}
xx\=xx\=\kill
\>\>                   $(\text{while}_{\langle n_1\rangle}(e) \, \{S_1'\}, m_1(\vals_1))$\\
\>= \>                 $(\text{while}_{\langle n_1\rangle}((0, v_{\mathfrak{of}})) \, \{S_1'\}, m_1(\vals_1))$\\
\>\>                   by the rule EEval'\\
\>$->$\>               $(\text{while}_{\langle n_1\rangle}(0) \, \{S_1'\}, m_1(\vals_1))$\\
\>\>                   by the E-Oflow1 or E-Oflow2 rule  \\
\>$->$\>               $(\text{skip}, m_1(\vals_1))$ by the Wh-F rule.
\end{tabbing}

Similarly, the execution of $S_2$ gets to the configuration $(\text{skip}, m_2(\vals_2))$.
Loop counters of $S_1$ and $S_2$ are less than 1 and value stores agree on values of  output deciding variables in $S_1$ and $S_2$ including the input sequence variable and the I/O sequence variable.

\item The evaluation of $e$ reduces to the same nonzero integer value,
$\mathcal{E}'\llbracket e\rrbracket \vals_1 =
 \mathcal{E}'\llbracket e\rrbracket \vals_2 = (0, v_{\mathfrak{of}})$.

Then the execution of $S_1$ proceeds as follows:
\begin{tabbing}
xx\=xx\=\kill
\>\>                   $(\text{while}_{\langle n_1\rangle}(e) \, \{S_1'\}, m_1(\vals_1))$\\
\>= \>                 $(\text{while}_{\langle n_1\rangle}((v, v_{\mathfrak{of}})) \, \{S_1'\}, m_1(\vals_1))$\\
\>\>                   by the rule EEval'\\
\>$->$\>               $(\text{while}_{\langle n_1\rangle}(v) \, \{S_1'\}, m_1(\vals_1))$\\
\>\>                   by the E-Oflow1 or E-Oflow2 rule  \\
\>$->$\>               $(S_1';\text{while}_{\langle n_1\rangle}(e) \, \{S_1'\}, m_1($\\
\>\>                   $\text{loop}_c^1 \cup \{(n_1) \mapsto 1\}, \vals_1))$ by the Wh-T rule.
\end{tabbing}

Similarly, the execution of $S_2$ proceeds to the configuration
$(S_2';\text{while}_{\langle n_2\rangle}(e) \, \{S_2'\}, m_2(\text{loop}_c^2 \cup \{(n_2) \mapsto 1\}, \vals_2))$.

By the assumption, we show that $S_1'$ and $S_2'$ terminate in the same way and produce the same output sequence when started in the state $m_1(\text{loop}_c^{1_1}, \vals_1)$ and $m_2(\text{loop}_c^{2_1}, \vals_2)$ respectively, and $S_1'$ and $S_2'$ have equivalent computation of variables  defined in both statement sequences if both terminate.
We need to show that all conditions are satisfied for the application of the assumption.
\begin{itemize}
\item There are no program semantic errors related to the extra check in the update of exit-on-error in executions of $S_2'$ and $S_1'$.

The above two conditions are by assumption.

\item Value stores $\vals_1$ and $\vals_2$ agree on values of output deciding variables in $S_1'$ and $S_2'$ including the input, I/O sequence variable.

By definition, $\text{OVar}(S_1')\allowbreak \subseteq \text{OVar}(S_1)$.
So are the cases to $S_2'$ and $S_2$.
In addition, value stores $\vals_1$ and $\vals_2$ are not changed in the evaluation of the predicate expression $e$. The condition holds.
\end{itemize}
By  assumption, $S_1'$ and $S_2'$ terminate in the same way and produce the same output sequence when started in states
$m_1(\text{loop}_c', \vals_1)$ and $m_2(\text{loop}_c', \vals_2)$.
In addition, $S_1'$ and $S_2'$ have equivalent computation of output deciding variables in $S_1'$ and $S_2'$ when started in states
$m_1(\text{loop}_c', \vals_1)$ and $m_2(\text{loop}_c', \vals_2)$.

Then there are two cases.
\begin{enumerate}
\item $S_1'$ and $S_2'$ both do not terminate and produce the same output sequence.

By Lemma~\ref{lmm:multiStepSeqExec}, $S_1';S_1$ and $S_2';S_2$ both do not terminate and produce the same output sequence.

\item $S_1'$ and $S_2'$ both terminate and have equivalent computation of output deciding variables in $S_1'$ and $S_2'$.

By assumption, $(S_1', m_1(\text{loop}_c', \vals_1)) ->*
                (\text{skip}, \,\allowbreak m_1'(\text{loop}_c'', \vals_1'))$;
               $(S_2', m_2(\text{loop}_c', \vals_2)) ->*
                (\text{skip},\allowbreak m_2'(\text{loop}_c'', \vals_2'))$
where $\forall x \in (\text{OVar}(S_1') \cup \text{OVar}(S_2')) \cup
                     \{{id}_I, {id}_{IO}\},
\allowbreak\vals_1'(x) = \vals_2'(x)$.

By Lemma~\ref{lmm:exitOnErrFuncWideSimilarUseDef},
$\text{OVar}(S_1') \subseteq \text{OVar}(S_2')$.
Then variables used in the predicate expression of $S_1$ and $S_2$ are either in output deciding variables in both $S_1'$ and $S_2'$ or not.
Therefore value stores $\vals_2'$ and $\vals_1'$ agree on values of variables used in the expression $e$ and even output deciding variables in $S_1$ and $S_2$.

\end{enumerate}
\end{enumerate}

\noindent{Induction step on iterations}

The induction hypothesis (IH) is that, when $i\geq 1$, one of the following holds:
\begin{enumerate}
\item Loop counters for $S_1$ and $S_2$ are always less than $i$ if any is present,
$\forall m_1'(\text{loop}_c^{1'})\, m_2'(\text{loop}_c^{2'})\,:\,
(S_1, m_1(\text{loop}_c^1, \vals_1)) ->* (S_1'', m_1'(\text{loop}_c^{1'})),
\text{loop}_c^{1'}(n_1) < i,
(S_2, m_2(\text{loop}_c^2,\allowbreak \vals_2)) ->* (S_2'', m_2'(\text{loop}_c^{2'})),
\text{loop}_c^{2'}(n_2) < i$,
$S_1$ and $S_2$ terminate in the same way, produce the same output sequence, and have equivalent computation of  output deciding variables in both $S_1$ and $S_2$ as well as the input sequence variable,  the I/O sequence variable,
$(S_1, m_1)\allowbreak \equiv_{H} (S_2, m_2)$ and
$(S_1, m_1) \equiv_{O} (S_2, m_2)$ and
$\forall x \in (\text{OVar}(S_1) \cup \text{OVar}(S_2)) \cup
               \{{id}_I, {id}_{IO}\}\,:\,
               (S_1, m_1) \equiv_{x} (S_2, m_2)$;

\item The loop counter of $S_1$ and $S_2$ are of value less than or equal to $i$,
and there are no reachable configurations
$(S_1, m_1(\text{loop}_c^{1_i}, \vals_{1_i}))$ from $(S_1, m_1(\vals_1))$,
$(S_2, m_2(\text{loop}_c^{2_i}, \vals_{2_i}))$ from $(S_2, \,\allowbreak m_2(\vals_2))$ where all of the followings hold:
\begin{itemize}
\item The loop counters of $S_1$ and $S_2$ are of value $i$,
$\text{loop}_c^{1_i}(n_1)\allowbreak = \text{loop}_c^{2_i}(n_2) = i$.

\item Value stores $\vals_{1_i}$ and $\vals_{2_i}$ agree on values of  output deciding variables in $S_1$ and $S_2$ including the input sequence variable,  and the I/O sequence variable,
$\forall x \in \, (\text{OVar}(S_1) \cup \text{OVar}(S_2)) \cup \{{id}_I, {id}_{IO}\}\,:\,
\vals_{1_i}(x) = \vals_{2_i}(x)$.
\end{itemize}

\item There are reachable configurations
$(S_1, m_1(\text{loop}_c^{1_i}, \vals_{1_i}))$ from $(S_1, m_1(\vals_1))$,
$(S_2, m_2(\text{loop}_c^{2_i}, \vals_{2_i}))$ from $(S_2, \,\allowbreak m_2(\vals_2))$ where all of the following hold:
\begin{itemize}
\item The loop counter of $S_1$ and $S_2$ are of value $i$,
$\text{loop}_c^{1_i}(n_1)\allowbreak = \text{loop}_c^{2_i}(n_2) = i$.

\item Value stores $\vals_{1_i}$ and $\vals_{2_i}$ agree on values of  output deciding variables in $S_1$ and $S_2$ including the input sequence variable,  and the I/O sequence variable,
$\forall x \in \, (\text{OVar}(S_1) \cup \text{OVar}(S_2)) \cup \{{id}_I, {id}_{IO}\}\,:\,
\vals_{1_i}(x) = \vals_{2_i}(x))$.
\end{itemize}
\end{enumerate}

Then we show that, when $i+1$, one of the following holds:
\begin{enumerate}
\item Loop counters for $S_1$ and $S_2$ are always less than $i+1$ if any is present,
$\forall m_1'(\text{loop}_c^{1'})\, m_2'(\text{loop}_c^{2'})\,:\,
(S_1, m_1(\text{loop}_c^1, \vals_1)) ->* (S_1'', m_1'(\text{loop}_c^{1'})),
\text{loop}_c^{1'}(n_1) < i+1,
(S_2, m_2(\text{loop}_c^2,\allowbreak \vals_2)) ->* (S_2'', m_2'(\text{loop}_c^{2'})),
\text{loop}_c^{2'}(n_2) < i+1$,
$S_1$ and $S_2$ terminate in the same way, produce the same output sequence, and have equivalent computation of output deciding variables in $S_1$ and $S_2$ including the input sequence variable and  the I/O sequence variable,
$(S_1, m_1)\allowbreak \equiv_{H} (S_2, m_2)$ and
$(S_1, m_1) \equiv_{O} (S_2, m_2)$ and
$\forall x \in (\text{OVar}(S_1) \cup \text{OVar}(S_2)) \cup
               \{{id}_I, {id}_{IO}\}\,:\,
               (S_1, m_1) \equiv_{x} (S_2, m_2)$;

\item The loop counter of $S_1$ and $S_2$ are of value less than or equal to $i+1$,
and there are no reachable configurations
$(S_1, m_1(\text{loop}_c^{1_{i+1}}, \vals_{1_{i+1}}))$ from $(S_1, m_1(\vals_1))$,
$(S_2, m_2(\text{loop}_c^{2_{i+1}},\,\allowbreak \vals_{2_{i+1}}))$ from $(S_2, \,\allowbreak m_2(\vals_2))$ where all of the following hold:
\begin{itemize}
\item The loop counters of $S_1$ and $S_2$ are of value $i+1$,
$\text{loop}_c^{1_{i+1}}(n_1)\allowbreak = \text{loop}_c^{2_{i+1}}(n_2) = i+1$.

\item Value stores $\vals_{1_{i+1}}$ and $\vals_{2_{i+1}}$ agree on values of  output deciding variables in $S_1$ and $S_2$ including the input sequence variable,  and the I/O sequence variable,
$\forall x \in \, (\text{OVar}(S_1) \cup \text{OVar}(S_2)) \cup \{{id}_I, {id}_{IO}\}\,:\,
\vals_{1_{i+1}}(x) = \vals_{2_{i+1}}(x)$.
\end{itemize}

\item There are reachable configurations
$(S_1, m_1(\text{loop}_c^{1_{i+1}}, \vals_{1_i}))$ from $(S_1, m_1(\vals_1))$,
$(S_2, m_2(\text{loop}_c^{2_{i+1}}, \vals_{2_i}))$ from $(S_2, \,\allowbreak m_2(\vals_2))$ where all of the following hold:
\begin{itemize}
\item The loop counter of $S_1$ and $S_2$ are of value $i$,
$\text{loop}_c^{1_{i+1}}(n_1)\allowbreak = \text{loop}_c^{2_{i+1}}(n_2) = i+1$.

\item Value stores $\vals_{1_{i+1}}$ and $\vals_{2_{i+1}}$ agree on values of  output deciding variables in $S_1$ and $S_2$ including the input sequence variable,  and the I/O sequence variable,
$\forall x \in \, (\text{OVar}(S_1) \cup \text{OVar}(S_2)) \cup \{{id}_{I}, {id}_{IO}\}\,:\,
\vals_{1_{i+1}}(x) = \vals_{2_{i+1}}(x)$.
\end{itemize}
\end{enumerate}

By hypothesis IH and theorem~\ref{thm:mainTermSameWayLocal} and~\ref{thm:sameIOtheoremFuncWide}, there is no configuration where loop counters of $S_1$ and $S_2$ are of value $i+1$ when any of the following holds:
\begin{enumerate}
\item Loop counters for $S_1$ and $S_2$ are always less than $i$ if any is present,
$\forall m_1'(\text{loop}_c^{1'})\, m_2'(\text{loop}_c^{2'})\,:\,
(S_1, m_1(\text{loop}_c^1, \vals_1)) ->* (S_1'', m_1'(\text{loop}_c^{1'})),
\text{loop}_c^{1'}(n_1) < i,
(S_2, m_2(\text{loop}_c^2,\allowbreak \vals_2)) ->* (S_2'', m_2'(\text{loop}_c^{2'})),
\text{loop}_c^{2'}(n_2) < i$,
$S_1$ and $S_2$ terminate in the same way, produce the same output sequence, and have equivalent computation of output deciding variables in $S_1$ and $S_2$ including the input sequence variable and  the I/O sequence variable,
$(S_1, m_1)\allowbreak \equiv_{H} (S_2, m_2)$ and
$(S_1, m_1) \equiv_{O} (S_2, m_2)$ and
$\forall x \in (\text{OVar}(S_1) \cup \text{OVar}(S_2)) \cup
               \{{id}_I, {id}_{IO}\}\,:\,
               (S_1, m_1) \equiv_{x} (S_2, m_2)$;

\item The loop counter of $S_1$ and $S_2$ are of value less than or equal to $i$,
and there are no reachable configurations
$(S_1, m_1(\text{loop}_c^{1_i}, \vals_{1_i}))$ from $(S_1, m_1(\vals_1))$,
$(S_2, m_2(\text{loop}_c^{2_i}, \vals_{2_i}))$ from $(S_2, \,\allowbreak m_2(\vals_2))$ where all of the following hold:
\begin{itemize}
\item The loop counters of $S_1$ and $S_2$ are of value $i$,
$\text{loop}_c^{1_i}(n_1)\allowbreak = \text{loop}_c^{2_i}(n_2) = i$.

\item Value stores $\vals_{1_i}$ and $\vals_{2_i}$ agree on values of  output deciding variables in both $S_1$ and $S_2$ as well as the input sequence variable,  and the I/O sequence variable,
$\forall x \in \, (\text{OVar}(S_1) \cup \text{OVar}(S_2)) \cup \{{id}_I, {id}_{IO}\}\,:\,
\vals_{1_i}(x) = \vals_{2_i}(x)$.
\end{itemize}
\end{enumerate}

When there are  reachable configurations
$(S_1, m_1(\text{loop}_c^{1_i}, \vals_{1_i}))$ from $(S_1, m_1(\vals_1))$,
$(S_2, m_2(\text{loop}_c^{2_i}, \vals_{2_i}))$ from $(S_2, \,\allowbreak m_2(\vals_2))$ where all of the following hold:
\begin{itemize}
\item The loop counter of $S_1$ and $S_2$ are of value $i$,
$\text{loop}_c^{1_i}(n_1)\allowbreak = \text{loop}_c^{2_i}(n_2) = i$.

\item Value stores $\vals_{1_i}$ and $\vals_{2_i}$ agree on values of  output deciding variables in $S_1$ and $S_2$ including the input sequence variable and the I/O sequence variable,
$\forall x \in \, (\text{OVar}(S_1) \cup \text{OVar}(S_2)) \cup \{{id}_I, {id}_{IO}\}\,:\,
\vals_{1_i}(x) = \vals_{2_i}(x)$.
\end{itemize}

By similar argument in base case, we have one of the following holds:
\begin{enumerate}
\item Loop counters for $S_1$ and $S_2$ are always less than $i+1$ if any is present,
$\forall m_1'(\text{loop}_c^{1'})\, m_2'(\text{loop}_c^{2'})\,:\,
(S_1, m_1(\text{loop}_c^1, \vals_1)) ->* (S_1'', m_1'(\text{loop}_c^{1'})),
\text{loop}_c^{1'}(n_1) < i+1,
(S_2, m_2(\text{loop}_c^2,\allowbreak \vals_2)) ->* (S_2'', m_2'(\text{loop}_c^{2'})),
\text{loop}_c^{2'}(n_2) < i$,
$S_1$ and $S_2$ terminate in the same way, produce the same output sequence, and have equivalent computation of  output deciding variables in $S_1$ and $S_2$ including the input sequence variable,  the I/O sequence variable,
$(S_1, m_1)\allowbreak \equiv_{H} (S_2, m_2)$ and
$(S_1, m_1) \equiv_{O} (S_2, m_2)$ and
$\forall x \in (\text{OVar}(S_1) \cup \text{OVar}(S_2)) \cup
               \{{id}_I, {id}_{IO}\}\,:\,
               (S_1, m_1) \equiv_{x} (S_2, m_2)$;

\item The loop counter of $S_1$ and $S_2$ are of value less than or equal to $i+1$,
and there are no reachable configurations
$(S_1, m_1(\text{loop}_c^{1_i}, \vals_{1_i}))$ from $(S_1, m_1(\vals_1))$,
$(S_2, m_2(\text{loop}_c^{2_i}, \vals_{2_i}))$ from $(S_2, \,\allowbreak m_2(\vals_2))$ where all of the following hold:
\begin{itemize}
\item The loop counters of $S_1$ and $S_2$ are of value $i$,
$\text{loop}_c^{1_{i+1}}(n_1)\allowbreak = \text{loop}_c^{2_{i+1}}(n_2) = i+1$.

\item Value stores $\vals_{1_{i+1}}$ and $\vals_{2_{i+1}}$ agree on values of  output deciding variables in $S_1$ and $S_2$ including the input sequence variable and the I/O sequence variable,
$\forall x \in \, (\text{OVar}(S_1) \cup \text{OVar}(S_2)) \cup \{{id}_I, {id}_{IO}\}\,:\,
\vals_{1_{i+1}}(x) = \vals_{2_{i+1}}(x)$.
\end{itemize}

\item There are reachable configurations
$(S_1, m_1(\text{loop}_c^{1_{i+1}}, \vals_{1_{i+1}}))$ from $(S_1, m_1(\vals_1))$,
$(S_2, m_2(\text{loop}_c^{2_{i+1}}, \vals_{2_{i+1}}))$ from $(S_2, \,\allowbreak m_2(\vals_2))$ where all of the following hold:
\begin{itemize}
\item The loop counter of $S_1$ and $S_2$ are of value $i$,
$\text{loop}_c^{1_{i+1}}(n_1)\allowbreak = \text{loop}_c^{2_{i+1}}(n_2) = i$.

\item Value stores $\vals_{1_{i+1}}$ and $\vals_{2_{i+1}}$ agree on values of output deciding variables in $S_1$ and $S_2$ including the input sequence variable,  and the I/O sequence variable,
$\forall x \in \, (\text{OVar}(S_1) \cup \text{OVar}(S_2)) \cup \{{id}_{I+1}, {id}_{IO}\}\,:\,
\vals_{1_{i+1}}(x) = \vals_{2_{i+1}}(x)$.
\end{itemize}
\end{enumerate}
\end{proof}

\subsection{Proof rule for improved prompt message}

If the only difference between two programs are the constant messages that the user receives, we consider that the two programs to be equivalent. We realize that in general it is possible to introduce new semantics even by changing constant strings.
An old version might have incorrectly labeled output: ``median value = 5" instead of ``average value = 5”, for example.
We rule out such possibilities because all non-constant values are guaranteed to be exactly same.
In practice, outputs could be classified into prompt outputs and actual outputs. Prompt outputs are those asking clients for inputs, which are constants hardcoded in the output statement. Actual outputs are dynamic messages produced by evaluation of non-constant expression in execution.
The changes of prompt outputs are equivalent only for interactions with human clients.
In order to prove the update of improved prompt messages to be backward compatible, we assume that the different prompt outputs produced in executions of the old program and the updated program, due to the different constants in output statements, are equivalent.
Because the old program and the new program are exactly same except some output statements with different constants as expression $e$, we could show two programs produce the ``equivalent" output sequence under the assumption of equivalent prompt outputs.


We formalize the generalized update of improved prompt messages, then we show that the updated program produce the same I/O sequence as the old program in executions without program semantic errors.
The following is the definition of the update class of improved prompt messages.

\begin{definition}\label{def:outputConstChangeFuncWide}
{\bf (Improved user messages)}
A program $P_2 = {Pmpt}_2;EN;$ $V;S_{entry}$ includes updates of improved prompt messages compared with a program $P_1 = {Pmpt}_1;EN;V;S_{entry}$, written $P_2\, {\approx}_{\text{Out}}^S\, P_1$, iff
${Pmpt}_2 \neq {Pmpt}_1$.
\end{definition}


We give the lemma that two programs terminate in the same way, produce the equivalent output sequence, and have equivalent computation of defined variables in both programs in valid executions if there are updates of improved prompt messages between them.
\begin{lemma}\label{lmm:outputConstChangeFuncWide}
Let $P_1 = {Pmpt}_1;EN;V;S_{entry}$ and $P_2 = {Pmpt}_2;EN;V;S_{entry}$ be two programs where there are updates of improved prompt messages in $P_2$ compared with $P_1$.
If $S_1$ and $S_2$ start in states $m_1(\vals_1)$ and $m_2(\vals_2)$ such that both of the following hold:
\begin{itemize}
\item Value stores $\vals_1$ and $\vals_2$ agree on values of variables used in $S_{entry}$ in both programs as well as the input sequence variable,
$\forall x \in \text{Use}(S_{entry}) \cup \{{id}_I\}\,:\,
 \vals_1(x) = \vals_2(x)$;

\item Value stores $\vals_1$ and $\vals_2$ have ``equivalent" I/O sequence,
$\vals_1({id}_{IO}) \equiv \vals_2({id}_{IO})$;

\item The different prompt outputs in the update of improved prompt messages are equivalent;
\end{itemize}
then $S_1$ and $S_2$ terminate in the same way, produce the equivalent output sequence, and when $S_1$ and $S_2$ both terminate, they have equivalent computation of  defined variables in $S_{entry}$ in both programs as well as the input sequence variable, $S_{entry}$ in the two programs produce the equivalent  I/O sequence variable,
\begin{itemize}
\item $(S_{entry}, m_1) \equiv_{H} (S_{entry}, m_2)$;

\item $\forall x \in (\text{Def}(S_1)\cap \text{Def}(S_2)) \cup \{{id}_I\}\,:\,$
\noindent$\allowbreak (S_{entry}, m_1) \equiv_{x} (S_{entry}, m_2)$;

\item The produced output sequences in executions of $S_{entry}$ in both programs are ``equivalent",
$\vals_1({id}_{IO}) \equiv \vals_2({id}_{IO})$.
\end{itemize}
\end{lemma}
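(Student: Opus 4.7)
The plan is to exploit the fact that $P_1$ and $P_2$ share the same enumeration declarations, the same variable declarations, and, crucially, the very same entry statement sequence $S_{entry}$; the only syntactic difference lies in $Pmpt_1$ versus $Pmpt_2$. Since the prompt type is consulted by exactly one SOS rule, namely rule Out-3 which rewrites $\text{output } l$ to $\text{output } n$ using the label-to-integer binding in $Pmpt$, the two executions proceed in lockstep except at these translation steps. I will therefore argue by a small-step simulation: any transition of $P_1$ from a state $m_1$ agreeing with $m_2$ on every non-prompt component has a matching transition of $P_2$ from $m_2$ into a state that again agrees, with the output sequence extended by an equivalent (possibly different) integer.

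First I would fix a notion of state correspondence $m_1 \sim m_2$ meaning: the crash and overflow flags coincide, the type environments restricted to non-prompt declarations coincide, the loop counters coincide, the value stores agree on every scalar variable, array element, and on ${id}_I$, and $\vals_1({id}_{IO}) \equiv \vals_2({id}_{IO})$ in the equivalence of I/O sequences promised by the lemma (two sequences are equivalent iff they have the same length, the input values at matching positions are identical, and output values at matching positions are either identical or are the integer translations of two prompt labels whose prompt messages have been declared equivalent by the update). Then I would prove, by induction on one step of the SOS, the key simulation lemma: if $(S, m_1) \to (S', m_1')$ and $m_1 \sim m_2$, then $(S, m_2) \to (S', m_2')$ for some $m_2'$ with $m_1' \sim m_2'$. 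Almost every case is immediate because the firing rule depends only on the expression meaning function, the type environment, and the value store, all of which are identical on the relevant domain; Lemma~\ref{lmm:expEvalSameVal} and Lemma~\ref{lmm:expEvalSameTerm} handle the expression cases uniformly.

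The only non-trivial cases are Out-1 (integer output), Out-2 (enum-label output), and Out-3 (prompt-label output). For Out-1 and Out-2 the value appended to ${id}_{IO}$ is computed without reference to $Pmpt$, so the extended I/O sequences remain equivalent. For Out-3, the label $l$ is the same in both executions, but the associated integer $n$ may differ between $Pmpt_1$ and $Pmpt_2$; however, by the assumption that differing prompt messages introduced by the update are equivalent, the appended values form equivalent positions of the I/O sequence, so $\sim$ is preserved. Extending the one-step simulation to multi-step executions is routine induction on the number of steps, and the three conclusions of the lemma --- termination in the same way, equivalent final I/O sequences, and equivalent computation of every defined variable together with ${id}_I$ --- all follow as immediate corollaries of the lifted simulation together with the observation that $S_1$ diverges iff $S_2$ diverges, and that their terminal states (when they exist) agree on every non-prompt component.

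The main obstacle will be pinning down the I/O-sequence equivalence $\equiv$ precisely enough to carry it through the inductive step while remaining weak enough that the user's assumption (``the different prompt outputs are equivalent'') is sufficient. In particular, I must be careful that the equivalence relation is preserved by prepending equal inputs and by appending either equal integers or pairs of integers arising from the declared-equivalent prompt messages; ruling out spurious cross-talk (e.g., a prompt integer in one program accidentally coinciding with a non-prompt output in the other) may require a mild well-formedness assumption on the prompt type declarations, which I would state explicitly before invoking it.
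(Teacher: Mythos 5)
Your proposal is correct in substance but takes a genuinely different route from the paper. The paper proves this lemma by structural induction on $\text{size}(S_1)+\text{size}(S_2)$, with a base case for two output statements carrying different constants and inductive cases for If, while (delegated to Lemma~\ref{lmm:outputConstChangeLoopStmt}), and sequencing, each time re-establishing agreement of value stores via Lemma~\ref{lmm:expEvalSameVal} before applying the induction hypothesis. You instead exploit the fact that, by Definition~\ref{def:outputConstChangeFuncWide}, the two programs share the \emph{identical} entry statement sequence and differ only in $Pmpt$, so the only rule whose behavior can diverge is Out-3; this licenses a lockstep small-step simulation under a state-correspondence relation, from which termination in the same way, equivalent I/O sequences, and equivalent computation of defined variables all fall out of a single induction on the number of execution steps. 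Your approach is arguably more faithful to the definition as stated (the paper's own remark concedes the proof ``is straightforward because programs $P_1$ and $P_2$ have the same entry statement sequence,'' yet its induction treats $S_1$ and $S_2$ as syntactically distinct), and it handles nontermination more uniformly; the paper's structural induction buys compatibility with the other update-class proofs, which all follow the same template and reuse the same family of loop lemmas.

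One wrinkle you should repair before this goes through: your one-step simulation lemma asserts that $(S,m_1)\to(S',m_1')$ is matched by $(S,m_2)\to(S',m_2')$ with the \emph{same} residual $S'$, but rule Out-3 rewrites $\text{output }l$ to $\text{output }n$ where $n$ is read from the prompt declaration, so after that step the two residual configurations are $\text{output }n_1$ and $\text{output }n_2$ with possibly $n_1\neq n_2$. The correspondence must therefore relate configurations, not just states --- either by allowing the residual statements to differ at the redex position by prompt-translated constants, or by treating Out-3 followed by Out-1 as a single macro-step. This is routine to fix but, as stated, the simulation invariant is too strong to close the induction. Your closing caveat about pinning down $\equiv$ so that it is preserved under appending equal inputs and declared-equivalent prompt integers is well taken; the paper simply assumes this closure property without stating it, so making it explicit is an improvement rather than a defect.
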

The difference between prompt types in $P_1$ and $P_2$ can be either addition/removal of labels as well as the change of the mapping of labels with constants.
The proof is straightforward because programs $P_1$ and $P_2$ have the same entry statement sequence and we have the assumption that different prompt outputs due to the difference of the prompt type are equivalent.
\begin{proof}
By induction on the sum of the program size of $S_1$ and $S_2$, $\text{size}(S_1) + \text{size}(S_2)$.

\noindent{Base case}.
$S_1 = ``\text{output}\, v_1"$ and $S_2 = ``\text{output}\, v_2"$;

Then the execution of $S_2$ proceeds as follows.
\begin{tabbing}
xx\=xx\=\kill
\>\>                   $(\text{output}\, v_2, m_2(\vals_2))$\\
\>$->$\>               $(\text{skip}, m_2(\vals_2[``\vals_2({id}_{IO}) \cdot \bar{v_2}"/{id}_{IO}]))$\\
\>\>                   by the rule Out-1 or Out-2
\end{tabbing}
Similarly, $(\text{output}\, v_1, m_1(\vals_1)) ->*
(\text{skip}, m_1(\vals_1[``\vals_1({id}_{IO}) \cdot \bar{v_1}"/{id}_{IO}]))$.

By assumption, $\vals_2({id}_{IO}) \equiv \vals_1({id}_{IO})$.
In addition, by assumption, $\bar{v_2} \equiv  \bar{v_1}$.
Therefore, $S_1$ and $S_2$ terminate in the same way, produce the same output sequence and have equivalent computation of defined variables in $S_1$ and $S_2$. This lemma holds.


\noindent{Induction step}.

The hypothesis is that this lemma holds when the sum $k$ of the program size of $S_1$ and $S_2$ are great than or equal to 2, $k\geq 2$.

We then show that this lemma holds when the sum of the program size of $S_1$ and $S_2$ is $k+1$.
There are cases to consider.
\begin{enumerate}
\item $S_1$ and $S_2$ are both ``If" statement:

$S_1 = ``\text{If}(e) \, \text{then}\{S_1^t\} \, \text{else} \{S_1^f\}"$,
$S_2 = ``\text{If}(e) \, \text{then}\{S_2^t\} \, \text{else} \{S_2^f\}"$ where both of the following hold
\begin{itemize}
\item $S_2^t {\approx}_{\text{Out}}^S S_1^t$;

\item $S_2^f {\approx}_{\text{Out}}^S S_1^f$;
\end{itemize}

By the definition of $\text{Use}(S_1)$, variables used in the predicate expression $e$ are a subset of  used variables in $S_1$ and $S_2$,
$\text{Use}(e) \subseteq \text{Use}(S_1) \cap \text{Use}(S_2)$.
By assumption, corresponding variables used in $e$ are of same value in value stores $\vals_1$ and $\vals_2$. By Lemma~\ref{lmm:expEvalSameVal}, the expression evaluates to the same value w.r.t value stores $(\vals_1$ and $\vals_2$. There are three possibilities.
\begin{enumerate}
\item The evaluation of $e$ crashes,
$\mathcal{E}'\llbracket e\rrbracket \vals_1 =
 \mathcal{E}'\llbracket e\rrbracket \vals_2 = (\text{error}, v_{\mathfrak{of}})$.

The execution of $S_1$ continues as follows:
\begin{tabbing}
xx\=xx\=\kill
\>\>                   $(\text{If}(e) \, \text{then}\{S_1^t\} \, \text{else} \{S_1^f\}, m_1(\vals_1))$\\
\>$->$\>               $(\text{If}((\text{error}, v_{\mathfrak{of}})) \, \text{then}\{S_1^t\} \, \text{else} \{S_1^f\}, m_1(\vals_1))$\\
\>\>                   by the rule EEval'\\
\>$->$\>               $(\text{If}(0) \, \text{then}\{S_1^t\} \, \text{else} \{S_1^f\}, m_1(1/\mathfrak{f}))$\\
\>\>                   by the ECrash rule  \\
\>{\kStepArrow [i] }\> $(\text{If}(0) \, \text{then}\{S_1^t\} \, \text{else} \{S_1^f\}, m_1(1/\mathfrak{f}))$ for any $i>0$\\
\>\>                   by the Crash rule.
\end{tabbing}

Similarly, the execution of $S_2$ started from the state $m_2(\vals_2)$ crashes.
The lemma holds.

\item The evaluation of $e$ reduces to zero,
$\mathcal{E}'\llbracket e\rrbracket \vals_1 =
 \mathcal{E}'\llbracket e\rrbracket \vals_2 = (0, v_{\mathfrak{of}})$.

The execution of $S_1$ continues as follows.
\begin{tabbing}
xx\=xx\=\kill
\>\>                   $(\text{If}(e) \, \text{then}\{S_1^t\} \, \text{else} \{S_1^f\}, m_1(\vals_1))$\\
\>= \>                 $(\text{If}((0, v_{\mathfrak{of}})) \, \text{then}\{S_1^t\} \, \text{else} \{S_1^f\}, m_1(\vals_1))$\\
\>\>                   by the rule EEval'\\
\>$->$\>               $(\text{If}(0) \, \text{then}\{S_1^t\} \, \text{else} \{S_1^f\}, m_1(\vals_1))$\\
\>\>                   by the E-Oflow1 or E-Oflow2 rule  \\
\>$->$\>               $(S_1^f, m_1(\vals_1))$ by the If-F rule.
\end{tabbing}

Similarly, the execution of $S_2$ gets to the configuration $(S_2^f, m_2(\vals_2))$.

By the hypothesis IH, we show the lemma holds.
We need to show that all conditions are satisfied for the application of the hypothesis IH.
\begin{itemize}
\item $S_2^f {\approx}_{\text{Out}}^S S_1^f$

By assumption.

\item The sum of  the program size of $S_1^f$ and $S_2^f$ is less than $k$,
$\text{size}(S_1^f) + \text{size}(S_2^f) < k$.

By definition, $\text{size}(S_1) = 1 + \text{size}(S_1^t) + \text{size}(S_1^f)$.
Then, $\text{size}(S_1^f) + \text{size}(S_2^f) < k + 1 - 2 = k - 1$.

\item Value stores $\vals_1$ and $\vals_2$ agree on values of  used variables in $S_1^f$ and $S_2^f$ as well as the input, I/O sequence variable.

By definition, $\text{Use}(S_1^f)\allowbreak \subseteq \text{Use}(S_1)$.
So are the cases to $S_2^f$ and $S_2$.
In addition, value stores $\vals_1$ and $\vals_2$ are not changed in the evaluation of the predicate expression $e$. The condition holds.

\item Different constants used in output statements are equivalent as output values.

By assumption.
\end{itemize}

By the hypothesis IH, the lemma holds.

\item The evaluation of $e$ reduces to the same nonzero integer value,
$\mathcal{E}'\llbracket e\rrbracket \vals_1 =
 \mathcal{E}'\llbracket e\rrbracket \vals_2 = (v, v_{\mathfrak{of}})$ where $v \neq 0$.

By argument similar to the second subcase above.
\end{enumerate}

\item $S_1$ and $S_2$ are both ``while" statements:

$S_1 = ``\text{while}_{\langle n\rangle}(e) \, \{S_1'\}"$,
$S_2 = ``\text{while}_{\langle n\rangle}(e) \, \{S_2'\}"$ where
$S_2' {\approx}_{\text{Out}}^S S_1'$;

By Lemma~\ref{lmm:outputConstChangeLoopStmt}, we show this lemma holds.
We need to show that all required conditions are satisfied for the application of Lemma~\ref{lmm:outputConstChangeLoopStmt}.
\begin{itemize}
\item $S_1'$ and $S_2'$ have same set of  defined variables,
$\text{Def}(S_1') = \text{Def}(S_2') = \text{Def}(S)$;

\item The  used variables in $S_1'$ are a subset of those in $S_2'$,
$\text{Use}(S_1') = \text{Use}(S_2')$;

By Lemma~\ref{lmm:outputConstChangeFuncWideSimilarUseDef}.

\item When started in states $m_1'(\vals_1'), m_2'(\vals_1')$ where
value stores $\vals_1'$ and $\vals_2'$ agree on values of  used variables in both $S_1'$ and $S_2'$ as well as the input sequence variable, , and the I/O sequence variable,
then $S_1'$ and $S_2'$ terminate in the same way, produce the same output sequence, and have equivalent computation of defined variables in both $S_1$ and $S_2$ as well as the input sequence variable and the I/O sequence variable.

By the induction hypothesis IH. This is because the sum of the program size of $S_1'$ and $S_2'$ is less than $k$. By definition, $\text{size}(S_1) = 1 + \text{size}(S_1')$.
\end{itemize}
By Lemma~\ref{lmm:outputConstChangeLoopStmt}, this lemma holds.

\item $S_1 = S_1';s_1$ and $S_2 = S_2';s_2$ where both of the following hold:

\begin{itemize}
\item $S_2' \approx_{\text{Out}}^S S_1'$;

\item $s_2 \approx_{\text{Out}}^S s_1$;
\end{itemize}

By the hypothesis IH, we show $S_2'$ and $S_1'$ terminate in the same way and produce the equivalent output sequence and when $S_2'$ and $S_1'$ both terminate, $S_2'$ and $S_1'$ have equivalent terminating computation of variables defined in $S_2'$ and $S_1'$ as well as the input sequence variable.
By assumption, the different value of the I/O sequence in executions of $S_1$ and $S_2$ are equivalent.

We show all the required conditions are satisfied for the application of the hypothesis IH.
\begin{itemize}
\item $S_2' {\approx}_{\text{Out}}^S S_1'$;

\item The I/O sequence variable in executions of $S_1$ and $S_2$ are equivalent,
    $\vals_1({id}_{IO}) \equiv \vals_2({id}_{IO})$;

By assumption.

\item The sum of  the program size of $S_1'$ and $S_2'$ is less than $k$,
$\text{size}(S_1') + \text{size}(S_2') < k$.

By definition, $\text{size}(S_2) = \text{size}(s_2) + \text{size}(S_2')$ where $\text{size}(s_2) < 1$.
Then, $\text{size}(S_2') + \text{size}(S_1') < k + 1 - \text{size}(s_2) - \text{size}(s_1) < k$.

\item Value stores $\vals_1$ and $\vals_2$ agree on values of  used variables in both $S_2'$ and $S_1'$ as well as the input sequence variable.

By definition, $\text{Use}(S_2')\allowbreak \subseteq \text{Use}(S_2)$,
$\text{Use}(S_1')\allowbreak \subseteq \text{Use}(S_1)$.
The condition holds.
\end{itemize}

By the hypothesis IH, one of the following holds:
\begin{enumerate}
\item $S_1'$ and $S_2'$ both do not terminate.

By Lemma~\ref{lmm:multiStepSeqExec}, executions of $S_1 = S_1';s_1$ and $S_2 = S_2';s_2$ both do not terminate and produce the same output sequence.

\item $S_1'$ and $S_2'$ both terminate.

By assumption,
$(S_2', m_2(\vals_2)) ->* (\text{skip}, m_2'(\vals_2'))$,
$(S_1', m_1(\vals_1)) ->* (\text{skip}, m_1'(\vals_1'))$.

By Corollary~\ref{coro:termSeq},
$(S_2';s_2, m_2(\vals_2)) ->* (s_2, m_2'(\vals_2'))$,
$(S_1';s_1, m_1(\vals_1)) ->* (s_1, m_1'(\vals_1'))$.

By the hypothesis IH, we show that $s_2$ and $s_1$ terminate in the same way, produce the ``equivalent" output sequence and when $s_2$ and $s_1$ both terminate, $s_2$ and $s_1$ have equivalent computation of variables  defined in both $s_1$ and $s_2$ and the input sequence variable;
$s_2$ and $s_1$ produce ``equivalent" output sequence.

We need to show that all conditions are satisfied for the application of the hypothesis IH.
\begin{itemize}
\item There are updates of ``improved prompt messages" in $s_2$ compared with $s_1$,
$s_2 {\approx}_{\text{Out}}^S s_1$;

By assumption, $s_2 {\approx}_{\text{Out}}^S s_1$.

\item The sum of the program size $s_2$ and $s_1$ is less than or equals to $k$;

By definition, $\text{size}(S_2') \geq 1,  \text{size}(S_1') \geq 1$.
Therefore, $\text{size}(s_2) + \text{size}(s_1) < k+1 - \text{size}(S_2') - \text{size}(S_1') \leq k$.

\item Value stores $\vals_1'$ and $\vals_2'$ agree on values of  used variables in $s_2$ and $s_1$ as well as the input sequence variable;

By Lemma~\ref{lmm:exitOnErrFuncWideSimilarUseDef},
$\text{Use}(s_1) = \text{Use}(s_2)$, then $\text{Use}(s_2)\,\allowbreak  = \text{Use}(s_1) = \text{Use}(s)$.
Similarly, by Lemma~\ref{lmm:exitOnErrFuncWideSimilarUseDef}, $\text{Def}(S_1') = \text{Def}(S_2')$. For any variable $id$ in $\text{Use}(s_1)$, if $id$ is not in $\text{Def}(S_1')$,
then the value of $id$ is not changed in the execution of $S_1'$ and $S_2'$,
$\vals_1'(id) = \vals_1(id) = \vals_2(id)\allowbreak = \vals_2'(id)$.
Otherwise, the variable $id$ is defined in the execution of $S_1'$ and $S_2'$, by assumption, $\vals_1'(id) = \vals_2'(id)$.
The condition holds.

\item Values of , the I/O sequence variable in value stores $\vals_1'$ and $\vals_2'$ are equivalent.

By assumption.
\end{itemize}
By the hypothesis IH, the lemma holds.
\end{enumerate}
\end{enumerate}
\end{proof}

We list the auxiliary lemmas below. One lemma shows that, if there are updates of improved prompt messages between two statement sequences, then there are same set of  defined variables and used variables in the two statement sequences.
The second lemma shows that, if there are updates of improved prompt messages between two loop statements, then the two  loop statement terminate in the same way, produce the equivalent output sequence, and have equivalent computation of  defined variables in both the old and updated programs as well as the input sequence variable.
\begin{lemma}\label{lmm:outputConstChangeFuncWideSimilarUseDef}
Let $S_2$ be a statement sequence  and $S_1$  where there are updates of ``improved prompt messages",
$S_2 {\approx}_{\text{Out}}^S S_1$.
Then  used variables in $S_2$ are the same of  used variables in $S_1$,
$\text{Use}(S_1) = \text{Use}(S_2)$,
 defined variables in $S_2$ are the same as  used variables in $S_1$,
$\text{Def}(S_1) = \text{Def}(S_2)$.
\end{lemma}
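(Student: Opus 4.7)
The plan is to prove the lemma by structural induction on the sum of the program sizes $\text{size}(S_1) + \text{size}(S_2)$, mirroring the recursive structure of the relation $S_2 \approx_{\text{Out}}^S S_1$. The central observation is that an ``improved prompt messages'' update only substitutes one constant label for another inside \text{output} statements, leaving every other piece of syntax intact. Since $\text{Use}$ and $\text{Def}$ are themselves defined compositionally on the abstract syntax, and constants carry neither used nor defined variables apart from the universally present $\{{id}_{IO}\}$ on the output side, the equalities $\text{Use}(S_1) = \text{Use}(S_2)$ and $\text{Def}(S_1) = \text{Def}(S_2)$ should fall out directly once the base case is checked.

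For the base case I would take $S_1 = \text{output}\ v_1$ and $S_2 = \text{output}\ v_2$ with $v_1 \neq v_2$; these are the only sequences at which $\approx_{\text{Out}}^S$ is allowed to differ syntactically. Both constants contribute no used variables, so $\text{Use}(\text{output}\ v_1) = \text{Use}(\text{output}\ v_2) = \emptyset$, and $\text{Def}(\text{output}\ v_1) = \text{Def}(\text{output}\ v_2) = \{{id}_{IO}\}$ by the definition of $\text{Def}$ on output statements. For the induction step I would case-split according to the clauses of $\approx_{\text{Out}}^S$ used implicitly in Lemma~\ref{lmm:outputConstChangeFuncWide}: the If-case with a common predicate expression $e$ and $S_2^t \approx_{\text{Out}}^S S_1^t$, $S_2^f \approx_{\text{Out}}^S S_1^f$; the while-case with a common $e$ and $S_2' \approx_{\text{Out}}^S S_1'$; and the suffix case $S_1 = S_1';s_1$, $S_2 = S_2';s_2$ with corresponding components related. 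In every case the top-level shape is identical on the two sides, so invoking the induction hypothesis on the smaller subcomponents yields $\text{Use}$ and $\text{Def}$ equalities for the pieces, and the compositional definition of $\text{Use}$ and $\text{Def}$ glues them together into the required equality at the top level.

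The only mildly delicate point is that Definition~\ref{def:outputConstChangeFuncWide} only states $\approx_{\text{Out}}^S$ between whole programs, and the relation is only implicitly lifted to statement sequences by the structure of the proof of Lemma~\ref{lmm:outputConstChangeFuncWide}. I would therefore first make the statement-level version of $\approx_{\text{Out}}^S$ explicit in exactly the way those proof cases require (If, while, suffix decomposition, base output pair), so that the induction is unambiguously well-founded. After that the argument is routine: no reasoning about semantics, states, or expression evaluation is needed, only the fact that $\text{Use}$ and $\text{Def}$ recurse on the same syntactic shapes that $\approx_{\text{Out}}^S$ preserves. I therefore do not expect any genuine obstacle; the main discipline is making sure each clause of the recursive relation is paired with the matching defining clause of $\text{Use}$ and of $\text{Def}$ and that the base case handles output of a constant label correctly.
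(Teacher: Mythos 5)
Your proposal is correct and takes essentially the same route as the paper, which proves this lemma by induction on $\text{size}(S_1)+\text{size}(S_2)$ with a case split on the clauses of the (statement-level) relation, exactly as you describe. One small slip: by the paper's definition $\text{Use}(\text{output } v) = \{{id}_{IO}\}$ rather than $\emptyset$, but since both sides contribute the same set this does not affect the conclusion.
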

\begin{proof}
By induction on the sum of the program size of $S_1$ and $S_2$.
\end{proof}
\begin{lemma}\label{lmm:outputConstChangeLoopStmt}
Let $S_1 = \text{while}_{\langle n_1\rangle}(e) \, \{S_1'\}$ and
    $S_2 = \text{while}_{\langle n_2\rangle}(e) \,\allowbreak \{S_2'\}$ be two loop statements where all of the following hold:
\begin{itemize}
\item There are updates of improved prompt messages in $S_2'$ compared with $S_1'$,
$S_2' {\approx}_{\text{Out}}^S S_1'$;

\item $S_1'$ and $S_2'$ have same set of  defined variables,

\noindent$\text{Def}(S_1') = \text{Def}(S_2') = \text{Def}(S)$;

\item $S_1'$ and $S_2'$ have same set of  used variables,
$\text{Use}(S_1') = \text{Use}(S_2')$;

\item When started in states $m_1'(\vals_1'), m_2'(\vals_2')$ where
    \begin{itemize}
    \item Value stores agree on  values of  used variables in both $S_1'$ and $S_2'$ as well as the input sequence variable,
     $\forall x \in \text{Use}(S_1') \cup \{{id}_I\}\,
    \forall m_1'(\vals_1')\, m_2'(\vals_2')\,:\, \, \allowbreak
    \vals_1'(x) = \vals_2'(x)$;

    \item Values of the I/O sequence variable in value stores $\vals_1', \vals_2'$ are equivalent,
        $\vals_1'({id}_{IO}) \equiv \vals_2'({id}_{IO}))$;
    \end{itemize}
then $S_1'$ and $S_2'$ terminate in the same way, produce the ``equivalent" output sequence, and have equivalent computation of  defined variables in $S_1'$ and $S_2'$ as well as the input sequence variable,
    $((S_1', m_1) \equiv_{H} (S_2', m_2)) \wedge
     (\forall x \in \text{Def}(S) \cup \{{id}_I\}\,:\,
     (S_1', m_1) \equiv_{x} (S_2', m_2))$;
\end{itemize} 

If $S_1$ and $S_2$ start in states $m_1(\text{loop}_c^1, \vals_1), m_2(\text{loop}_c^2, \vals_2)$ respectively, with loop counters of $S_1$ and $S_2$ not initialized ($S_1, S_2$ have not executed yet), value stores agree on values of  used variables in $S_1$ and $S_2$, and there are no program semantic errors, then, for any positive integer $i$, one of the following holds:
\begin{enumerate}
\item Loop counters for $S_1$ and $S_2$ are always less than $i$ if any is present,
$\forall m_1'(\text{loop}_c^{1'})\, m_2'(\text{loop}_c^{2'})\,:\,
(S_1, m_1(\text{loop}_c^1, \vals_1)) ->* (S_1'', m_1'(\text{loop}_c^{1'})),
\text{loop}_c^{1'}( n_1) < i,
(S_2, m_2(\text{loop}_c^2,\allowbreak \vals_2)) ->* (S_2'', m_2'(\text{loop}_c^{2'})),
\text{loop}_c^{2'}(n_2) < i$,
$S_1$ and $S_2$ terminate in the same way, produce the equivalent output sequence, and have equivalent computation of  defined variables in both $S_1$ and $S_2$ and the input sequence variable,
$(S_1, m_1)\allowbreak \equiv_{H} (S_2, m_2)$ and
$\forall x \in (\text{Def}(S_1) \cap \text{Def}(S_2)) \cup
               \{{id}_I\}\,:\,
               (S_1, m_1) \equiv_{x} (S_2, m_2)$;
$S_1$ and $S_2$ produce the ``equivalent"  I/O sequence;

\item The loop counter of $S_1$ and $S_2$ are of value less than or equal to $i$,
and there are no reachable configurations
$(S_1, m_1(\text{loop}_c^{1_i},\allowbreak \vals_{1_i}))$ from $(S_1, m_1(\vals_1))$,
$(S_2, m_2(\text{loop}_c^{2_i}, \vals_{2_i}))$ from $(S_2, \,\allowbreak m_2(\vals_2))$ where all of the following hold:
\begin{itemize}
\item The loop counters of $S_1$ and $S_2$ are of value $i$,
$\text{loop}_c^{1_i}(n_1)\allowbreak = \text{loop}_c^{2_i}(n_2) = i$.

\item Value stores $\vals_{1_i}$ and $\vals_{2_i}$ agree on values of  used variables in both $S_1$ and $S_2$ as well as the input sequence variable,
$\forall x \in \, (\text{Use}(S_1) \cap \text{Use}(S_2)) \cup \{{id}_I\}\,:\,
\vals_{1_i}(x) = \vals_{2_i}(x)$.

\item Values of the I/O sequence variable in value stores
$\vals_{1_i}({id}_{IO}) \equiv \vals_{2_i}({id}_{IO})$;
\end{itemize}

\item There are reachable configurations
$(S_1, m_1(\text{loop}_c^{1_i}, \vals_{1_i}))$ from $(S_1, m_1(\vals_1))$,
$(S_2, m_2(\text{loop}_c^{2_i}, \vals_{2_i}))$ from $(S_2, \,\allowbreak m_2(\vals_2))$ where all of the following hold:
\begin{itemize}
\item The loop counter of $S_1$ and $S_2$ are of value $i$,
$\text{loop}_c^{1_i}(n_1)\allowbreak = \text{loop}_c^{2_i}(n_2) = i$.

\item Value stores $\vals_{1_i}$ and $\vals_{2_i}$ agree on values of  used variables in both $S_1$ and $S_2$ as well as the input sequence variable,
$\forall x \in \, (\text{Use}(S_1) \cap \text{Use}(S_2)) \cup \{{id}_I\}\,:\,
\vals_{1_i}(x) = \vals_{2_i}(x)$.

\item Values of the I/O sequence variable in value stores $\vals_{1_i}, \vals_{2_i}$ are equivalent,
$\vals_{1_i}({id}_{IO}) \equiv \vals_{2_i}({id}_{IO})$;
\end{itemize}
\end{enumerate}
\end{lemma}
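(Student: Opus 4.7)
The plan is to proceed by induction on $i$, following closely the structure of the proofs of Lemma~\ref{lmm:exitOnErrLoopStmt} and Lemma~\ref{lmm:enumTypeExtLoopStmt}, with the one substantive change that the I/O-sequence invariant carried through the induction is equivalence ($\equiv$) rather than equality, reflecting the assumption that differing prompt constants are considered equivalent outputs.

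For the base case $i=1$, I would first observe that $\text{Use}(e) \subseteq \text{Use}(S_1) \cap \text{Use}(S_2)$ by the definition of $\text{Use}$ on while statements, so by Lemma~\ref{lmm:expEvalSameVal} the predicate $e$ evaluates to the same value under $\vals_1$ and $\vals_2$. I then split into the three standard subcases: (a) evaluation raises $\text{error}$, in which case both executions go to an unrecoverable crash via ECrash and Crash, so both do not terminate and the I/O sequences remain equivalent since neither is extended; (b) evaluation yields $0$, in which case both loops exit via Wh-F to $(\text{skip}, m_j)$ with loop counters less than $1$, and the desired equivalences follow trivially since value stores are unchanged; (c) evaluation yields nonzero $v$, in which case the Wh-T rule drives both into $(S_j';\text{while}\ldots, m_j(\text{loop}_c^j[1/n_j],\vals_j))$. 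In this last subcase I apply the assumption about $S_1'$ and $S_2'$ terminating in the same way and producing equivalent output sequences, after verifying that the input-agreement and I/O-equivalence preconditions are met (they are, since $\text{Use}(S_j') \subseteq \text{Use}(S_j)$ and the predicate evaluation does not touch the value store). If $S_1'$ and $S_2'$ both diverge, Lemma~\ref{lmm:multiStepSeqExec} lifts this to divergence of $S_1,S_2$; if they both terminate, Corollary~\ref{coro:termSeq} yields a configuration in which loop counters are $1$, value stores agree on $\text{Use}(S_1) \cap \text{Use}(S_2) \cup \{{id}_I\}$ (using $\text{Use}(S_1') = \text{Use}(S_2')$ from Lemma~\ref{lmm:outputConstChangeFuncWideSimilarUseDef}), and the I/O sequences are equivalent by the hypothesis on $S_1',S_2'$.

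For the induction step, assuming the statement holds for $i$, I would consider the three disjoint cases of the IH. In the first two (termination inside $i$ iterations, or unreachability of a ``good'' $i$-iteration configuration), I can pass directly to the corresponding case for $i+1$ with no further work. The interesting case is when the IH yields a reachable pair $(S_1, m_{1_i})$, $(S_2, m_{2_i})$ with matching loop counters, value-store agreement on used variables plus ${id}_I$, and I/O-equivalence. Starting from these two configurations, I re-run the base-case analysis of the predicate expression: the same three subcases arise, and by the same reasoning either both executions crash/exit (yielding the first or second alternative of the conclusion), or both enter the $(i{+}1)$-th loop body execution and I reapply the assumption on $S_1',S_2'$ together with Lemma~\ref{lmm:multiStepSeqExec} and Corollary~\ref{coro:termSeq} to produce either divergence of both or a reachable pair at loop counter $i+1$ with the invariants preserved.

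The main obstacle is propagating I/O-equivalence rather than equality across loop iterations. Concretely, I need to ensure that the precondition ``$\vals_1'({id}_{IO}) \equiv \vals_2'({id}_{IO})$'' of the per-iteration hypothesis is re-established at the start of the $(i{+}1)$-th iteration. This requires (i) that $\equiv$ is preserved by concatenation with equivalent prompt outputs and identical non-prompt outputs, and (ii) that the semantic rules Out-1/Out-2/Out-3 only append output tokens that respect $\equiv$ given the used-variable agreement. Both are plausible but rely on the informal assumption in Definition~\ref{def:outputConstChangeFuncWide} that the differing prompt constants are equivalent; I would want to state and use explicitly a lemma of the form ``if $\vals_1({id}_{IO}) \equiv \vals_2({id}_{IO})$ and the appended values are equivalent, then the resulting I/O sequences remain equivalent,'' and also note that non-prompt output expressions evaluate identically because their used variables agree by the invariant. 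Beyond this, the remaining work is bookkeeping of the loop-counter bounds, which follows the template already worked out for the analogous lemmas.
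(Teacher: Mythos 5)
Your proposal is correct and follows essentially the same route as the paper's proof: induction on $i$, the three-way case split on the evaluation of the shared predicate $e$ (crash, zero, nonzero), application of the per-iteration hypothesis on $S_1'$ and $S_2'$ together with Lemma~\ref{lmm:multiStepSeqExec} and Corollary~\ref{coro:termSeq}, and propagation of I/O-\emph{equivalence} rather than equality as the loop invariant. The concern you raise about re-establishing $\vals_{1}({id}_{IO}) \equiv \vals_{2}({id}_{IO})$ at each iteration is handled in the paper exactly as you suggest, by baking equivalence into both the precondition and postcondition of the hypothesis on the loop bodies.
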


\begin{proof}
By induction on $i$.

\noindent{Base case}.

We show that, when $i=1$, one of the followings holds:
\begin{enumerate}
\item Loop counters for $S_1$ and $S_2$ are always less than 1 if any is present,
$\forall m_1'(\text{loop}_c^{1'})\, m_2'(\text{loop}_c^{2'})\,:\,
(S_1, m_1(\text{loop}_c^1, \vals_1)) ->* (S_1'', m_1'(\text{loop}_c^{1'})),
\text{loop}_c^{1'}(n_1) < i,
(S_2, m_2(\text{loop}_c^2,\allowbreak \vals_2)) ->* (S_2'', m_2'(\text{loop}_c^{2'})),
\text{loop}_c^{2'}(n_2) < i$,
$S_1$ and $S_2$ terminate in the same way, produce the equivalent I/O sequence, and have equivalent computation of  defined variables in both $S_1$ and $S_2$ and the input sequence variable, the I/O sequence variable,
$(S_1, m_1)\allowbreak \equiv_{H} (S_2, m_2)$ and
$\forall x \in \text{Def}(S) \cup
               \{{id}_I\}\,:\,
               (S_1, m_1) \equiv_{x} (S_2, m_2)$;

\item $S_1$ and $S_2$ produce the equivalent output sequence and the equivalent I/O sequence;

\item Loop counters of $S_1$ and $S_2$ are of values less than or equal to 1 but there are no reachable configurations
$(S_1, m_1(\text{loop}_c^{1_1}, \vals_{1_i}))$ from $(S_1, m_1(\vals_1))$,
$(S_2, m_2(\text{loop}_c^{2_1}, \vals_{2_i}))$ from $(S_2, \,\allowbreak m_2(\vals_2))$ where all of the following hold:
\begin{itemize}
\item The loop counter of $S_1$ and $S_2$ are of value 1,
$\text{loop}_c^{1_1}(n_1)\allowbreak = \text{loop}_c^{2_1}(n_2) = 1$.

\item Value stores $\vals_{1_1}$ and $\vals_{2_1}$ agree on values of  used variables in both $S_1$ and $S_2$ as well as the input sequence variable, and the I/O sequence variable,
$\forall x \in \, (\text{Use}(S_1) \cap \text{Use}(S_2)) \cup \{{id}_I\}\,:\,
\vals_{1_1}(x) = \vals_{2_1}(x)$.

\item Values of the I/O sequence variable in value stores $\vals_{1_1}$ and $\vals_{2_1}$ are equivalent,
    $\vals_{1_1}({id}_{IO}) \equiv \vals_{2_1}({id}_{IO})$;
\end{itemize}

\item There are reachable configuration
$(S_1, m_1(\text{loop}_c^{1_1}, \vals_{1_i}))$ from $(S_1, m_1(\vals_1))$,
$(S_2, m_2(\text{loop}_c^{2_1}, \vals_{2_i}))$ from $(S_2, \,\allowbreak m_2(\vals_2))$ where all of the following hold:
\begin{itemize}
\item The loop counter of $S_1$ and $S_2$ are of value 1,
$\text{loop}_c^{1_1}( n_1)\allowbreak = \text{loop}_c^{2_1}( n_2) = 1$.

\item Value stores $\vals_{1_1}$ and $\vals_{2_1}$ agree on values of  used variables in both $S_1$ and $S_2$ as well as the input sequence variable, and the I/O sequence variable,
$\forall x \in \, (\text{Use}(S_1) \cap \text{Use}(S_2)) \cup \{{id}_I\}\,:\,
\vals_{1_1}(x) = \vals_{2_1}(x)$.

\item Values of the I/O sequence variable in value stores $\vals_{1_1}$ and $\vals_{2_1}$ are equivalent,
    $\vals_{1_1}({id}_{IO}) \equiv \vals_{2_1}({id}_{IO})$;
\end{itemize}
\end{enumerate}

By definition, variables used in the predicate expression $e$ of $S_1$ and $S_2$ are  used in $S_1$ and $S_2$, $\text{Use}(e) \subseteq \text{Use}(S_1) \cap \text{Use}(S_2)$.
By assumption, value stores $\vals_1$ and $\vals_2$ agree on values of variables in $\text{Use}(e)$,
the predicate expression $e$ evaluates to the same value w.r.t value stores $\vals_1$ and $\vals_2$ by Lemma~\ref{lmm:expEvalSameTerm}.
There are three possibilities.
\begin{enumerate}
\item The evaluation of $e$ crashes,

\noindent$\mathcal{E}'\llbracket e\rrbracket \vals_1 =
 \mathcal{E}'\llbracket e\rrbracket \vals_2 = (\text{error}, v_{\mathfrak{of}})$.

The execution of $S_1$ continues as follows:
\begin{tabbing}
xx\=xx\=\kill
\>\>                   $(\text{while}_{\langle n_1\rangle}(e) \, \{S_1'\}, m_1(\vals_1))$\\
\>$->$\>               $(\text{while}_{\langle n_1\rangle}((\text{error}, v_{\mathfrak{of}})) \, \{S_1'\}, m_1(\vals_1))$\\
\>\>                   by the rule EEval'\\
\>$->$\>               $(\text{while}_{\langle n_1\rangle}(0) \, \{S_1'\}, m_1(1/\mathfrak{f}))$\\
\>\>                   by the ECrash rule  \\
\>{\kStepArrow [i] }\> $(\text{while}_{\langle n_1\rangle}(0) \, \{S_1'\}, m_1(1/\mathfrak{f}))$ for any $i>0$\\
\>\>                   by the Crash rule.
\end{tabbing}

Similarly, the execution of $S_2$ started from the state $m_2(\vals_2)$ crashes.
Therefore $S_1$ and $S_2$ terminate in the same way when started from $m_1$ and $m_2$ respectively.
Because $\vals_1({id}_{IO}) \equiv \vals_2({id}_{IO})$, the lemma holds.

\item The evaluation of $e$ reduces to zero,
$\mathcal{E}'\llbracket e\rrbracket \vals_1 =
 \mathcal{E}'\llbracket e\rrbracket \vals_2 = (0, v_{\mathfrak{of}})$.

The execution of $S_1$ continues as follows.
\begin{tabbing}
xx\=xx\=\kill
\>\>                   $(\text{while}_{\langle n_1\rangle}(e) \, \{S_1'\}, m_1(\vals_1))$\\
\>= \>                 $(\text{while}_{\langle n_1\rangle}((0, v_{\mathfrak{of}})) \, \{S_1'\}, m_1(\vals_1))$\\
\>\>                   by the rule EEval'\\
\>$->$\>               $(\text{while}_{\langle n_1\rangle}(0) \, \{S_1'\}, m_1(\vals_1))$\\
\>\>                   by the E-Oflow1 or E-Oflow2 rule  \\
\>$->$\>               $(\text{skip}, m_1(\vals_1))$ by the Wh-F rule.
\end{tabbing}

Similarly, the execution of $S_2$ gets to the configuration $(\text{skip}, m_2(\vals_2))$.
Loop counters of $S_1$ and $S_2$ are less than 1 and value stores agree on values of  used/defined variables in both $S_1$ and $S_2$ as well as the input sequence variable and the I/O sequence variable.

\item The evaluation of $e$ reduces to the same nonzero integer value,
$\mathcal{E}'\llbracket e\rrbracket \vals_1 =
 \mathcal{E}'\llbracket e\rrbracket \vals_2 = (v, v_{\mathfrak{of}})$ where $v \neq 0$.

Then the execution of $S_1$ proceeds as follows:
\begin{tabbing}
xx\=xx\=\kill
\>\>                   $(\text{while}_{\langle n_1\rangle}(e) \, \{S_1'\}, m_1(\vals_1))$\\
\>= \>                 $(\text{while}_{\langle n_1\rangle}((v, v_{\mathfrak{of}})) \, \{S_1'\}, m_1(\vals_1))$\\
\>\>                   by the rule EEval'\\
\>$->$\>               $(\text{while}_{\langle n_1\rangle}(v) \, \{S_1'\}, m_1(\vals_1))$\\
\>\>                   by the E-Oflow1 or E-Oflow2 rule  \\
\>$->$\>               $(S_1';\text{while}_{\langle n_1\rangle}(e) \, \{S_1'\}, m_1(\text{loop}_c^1[1/n_1], \vals_1))$\\
\>\>                   by the Wh-T rule.
\end{tabbing}

Similarly, the execution of $S_2$ proceeds to the configuration
$(S_2';\text{while}_{\langle n_2\rangle}(e) \, \{S_2'\}, m_2(\text{loop}_c^2 [1/n_2], \vals_2))$.

By the assumption, we show that $S_1'$ and $S_2'$ terminate in the same way and produce the equivalent I/O sequence when started in the state $m_1(\text{loop}_c^{1_1}, \vals_1)$ and $m_2(\text{loop}_c^{2_1}, \vals_2)$ respectively, and $S_1'$ and $S_2'$ have equivalent computation of variables  defined in both statement sequences if both terminate.
We need to show that all conditions are satisfied for the application of the assumption.
\begin{itemize}
\item Values of the I/O sequence variable in value stores $\vals_1$ and $\vals_2$ are equivalent,
    $\vals_1({id}_{IO}) \equiv \vals_2({id}_{IO})$;

The above two conditions are by assumption.

\item Value stores $\vals_1$ and $\vals_2$ agree on values of  used variables in $S_1'$ and $S_2'$ as well as the input sequence variable.

By definition, $\text{Use}(S_1')\allowbreak \subseteq \text{Use}(S_1)$.
So are the cases to $S_2'$ and $S_2$.
In addition, value stores $\vals_1$ and $\vals_2$ are not changed in the evaluation of the predicate expression $e$. The condition holds.
\end{itemize}
By  assumption, $S_1'$ and $S_2'$ terminate in the same way and produce the equivalent output sequence when started in states
$m_1(\text{loop}_c', \vals_1)$ and $m_2(\text{loop}_c', \vals_2)$.
In addition, $S_1'$ and $S_2'$ have equivalent computation of variables  used or defined in $S_1'$ and $S_2'$ when started in states
$m_1(\text{loop}_c', \vals_1)$ and $m_2(\text{loop}_c', \vals_2)$.

Then there are two cases.
\begin{enumerate}
\item $S_1'$ and $S_2'$ both do not terminate and produce the equivalent I/O sequence.

By Lemma~\ref{lmm:multiStepSeqExec}, $S_1';S_1$ and $S_2';S_2$ both do not terminate and produce the equivalent I/O sequence.

\item $S_1'$ and $S_2'$ both terminate and have equivalent computation of variables  defined in $S_1'$ and $S_2'$.

By assumption, $(S_1', m_1(\text{loop}_c', \vals_1)) ->*
                (\text{skip}, \,\allowbreak m_1'(\text{loop}_c'', \vals_1'))$;
               $(S_2', m_2(\text{loop}_c', \vals_2)) ->*
                (\text{skip},\allowbreak m_2'(\text{loop}_c'', \vals_2'))$
where $\forall x \in (\text{Def}(S_1') \cap \text{Def}(S_2')) \cup
                     \{{id}_I\},
\vals_1'(x) = \vals_2'(x)$.

By assumption,
$\text{Use}(S_1') = \text{Use}(S_2')$ and
$\text{Def}(S_1') = \text{Def}(S_2')$.
Then variables used in the predicate expression of $S_1$ and $S_2$ are either in variables  used or defined in both $S_1'$ and $S_2'$ or not.
Therefore value stores $\vals_2'$ and $\vals_1'$ agree on values of variables used in the expression $e$ and even variables  used or defined in $S_1$ and $S_2$.

By assumption, $S_1$ and $S_2$ produce the equivalent output sequence.
\end{enumerate}
\end{enumerate}

\noindent{Induction step on iterations}

The induction hypothesis (IH) is that, when $i\geq 1$, one of the following holds:
\begin{enumerate}
\item Loop counters for $S_1$ and $S_2$ are always less than $i$ if any is present,
$\forall m_1'(\text{loop}_c^{1'})\, m_2'(\text{loop}_c^{2'})\,:\,
(S_1, m_1(\text{loop}_c^1, \vals_1)) ->* (S_1'', m_1'(\text{loop}_c^{1'})),
\text{loop}_c^{1'}( n_1) < i,
(S_2, m_2(\text{loop}_c^2,\allowbreak \vals_2)) ->* (S_2'', m_2'(\text{loop}_c^{2'})),
\text{loop}_c^{2'}( n_2) < i$,
$S_1$ and $S_2$ terminate in the same way, and have equivalent computation of  defined variables in both $S_1$ and $S_2$ and the input sequence variable,
$(S_1, m_1)\allowbreak \equiv_{H} (S_2, m_2)$
and
$\forall x \in \text{Def}(S) \cup
               \{{id}_I\}\,:\,
               (S_1, m_1) \equiv_{x} (S_2, m_2)$;
$S_1$ and $S_2$ produce the equivalent I/O sequence;

\item The loop counter of $S_1$ and $S_2$ are of value less than or equal to $i$,
and there are no reachable configurations
$(S_1, m_1(\text{loop}_c^{1_i}, \vals_{1_i}))$ from $(S_1, m_1(\vals_1))$,
$(S_2, m_2(\text{loop}_c^{2_i}, \vals_{2_i}))$ from $(S_2, \,\allowbreak m_2(\vals_2))$ where all of the following hold:
\begin{itemize}
\item The loop counters of $S_1$ and $S_2$ are of value $i$,
$\text{loop}_c^{1_i}(n_1)\allowbreak = \text{loop}_c^{2_i}(n_2) = i$.

\item Value stores $\vals_{1_i}$ and $\vals_{2_i}$ agree on values of  used variables in both $S_1$ and $S_2$ as well as the input sequence variable,
$\forall x \in \, (\text{Use}(S_1) \cap \text{Use}(S_2)) \cup \{{id}_I\}\,:\,
\vals_{1_i}(x) = \vals_{2_i}(x)$.

\item Values of the I/O sequence variable in value stores $\vals_{1_i}$ and $\vals_{2_i}$,
    $\vals_{1_i}({id}_{IO}) \equiv \vals_{2_i}({id}_{IO})$;
\end{itemize}

\item There are reachable configurations
$(S_1, m_1(\text{loop}_c^{1_i}, \vals_{1_i}))$ from $(S_1, m_1(\vals_1))$,
$(S_2, m_2(\text{loop}_c^{2_i}, \vals_{2_i}))$ from $(S_2, \,\allowbreak m_2(\vals_2))$ where all of the following hold:
\begin{itemize}
\item The loop counter of $S_1$ and $S_2$ are of value $i$,
$\text{loop}_c^{1_i}(n_1)\allowbreak = \text{loop}_c^{2_i}(n_2) = i$.

\item Value stores $\vals_{1_i}$ and $\vals_{2_i}$ agree on values of  used variables in both $S_1$ and $S_2$ as well as the input sequence variable,
$\forall x \in \, (\text{Use}(S_1) \cap \text{Use}(S_2)) \cup \{{id}_I\}\,:\,
\vals_{1_i}(x) = \vals_{2_i}(x)$.

\item Values of the I/O sequence variable in value stores $\vals_{1_i}$ and $\vals_{2_i}$ are equivalent,
    $\vals_{1_i}({id}_{IO}) \equiv \vals_{2_i}({id}_{IO})$;
\end{itemize}
\end{enumerate}

Then we show that, when $i+1$, one of the following holds:
The induction hypothesis (IH) is that, when $i\geq 1$, one of the following holds:
\begin{enumerate}
\item Loop counters for $S_1$ and $S_2$ are always less than $i+1$ if any is present,
$\forall m_1'(\text{loop}_c^{1'})\, m_2'(\text{loop}_c^{2'})\,:\,
(S_1, m_1(\text{loop}_c^1, \vals_1)) ->* (S_1'', m_1'(\text{loop}_c^{1'})),
\text{loop}_c^{1'}( n_1) < i+1,
(S_2, m_2(\text{loop}_c^2,\allowbreak \vals_2)) ->* (S_2'', m_2'(\text{loop}_c^{2'})),
\text{loop}_c^{2'}( n_2) < i+1$,
$S_1$ and $S_2$ terminate in the same way, produce the equivalent I/O sequence, and have equivalent computation of  defined variables in both $S_1$ and $S_2$ and the input sequence variable,
$(S_1, m_1)\allowbreak \equiv_{H} (S_2, m_2)$ and
$(S_1, m_1) \equiv_{O} (S_2, m_2)$ and
$\forall x \in (\text{Def}(S) \cup
               \{{id}_I\}\,:\,
               (S_1, m_1) \equiv_{x} (S_2, m_2)$;
$S_1$ and $S_2$ produce the ``equivalent" I/O sequence variable;

\item The loop counter of $S_1$ and $S_2$ are of value less than or equal to $i+1$,
and there are no reachable configurations
$(S_1, m_1(\text{loop}_c^{1_{i+1}}, \vals_{1_{i+1}}))$ from $(S_1, m_1(\vals_1))$,
$(S_2, m_2(\text{loop}_c^{2_{i+1}},\,\allowbreak \vals_{2_{i+1}}))$ from $(S_2,  m_2(\vals_2))$ where all of the following hold:
\begin{itemize}
\item The loop counters of $S_1$ and $S_2$ are of value $i+1$,
$\text{loop}_c^{1_{i+1}}(n_1)\allowbreak = \text{loop}_c^{2_{i+1}}(n_2) = i+1$.

\item Value stores $\vals_{1_{i+1}}$ and $\vals_{2_{i+1}}$ agree on values of  used variables in both $S_1$ and $S_2$ as well as the input sequence variable,
$\forall x \in \, (\text{Use}(S_1) \cap \text{Use}(S_2)) \cup \{{id}_I\}\,:\,
\vals_{1_{i+1}}(x) = \vals_{2_{i+1}}(x)$.

\item Values of the I/O sequence variable in value stores $\vals_{1_{i+1}}$ and $\vals_{2_{i+1}}$  are equivalent,
    $\vals_{1_{i+1}}({id}_{IO}) \equiv \vals_{2_{i+1}}({id}_{IO})$;
\end{itemize}

\item There are reachable configurations
$(S_1, m_1(\text{loop}_c^{1_{i+1}}, \vals_{1_i}))$ from $(S_1, m_1(\vals_1))$,
$(S_2, m_2(\text{loop}_c^{2_{i+1}}, \vals_{2_i}))$ from $(S_2, \,\allowbreak m_2(\vals_2))$ where all of the following hold:
\begin{itemize}
\item The loop counter of $S_1$ and $S_2$ are of value $i$,
$\text{loop}_c^{1_{i+1}}( n_1)\allowbreak = \text{loop}_c^{2_{i+1}}( n_2) = i+1$.

\item Value stores $\vals_{1_{i+1}}$ and $\vals_{2_{i+1}}$ agree on values of  used variables in both $S_1$ and $S_2$ as well as the input sequence variable,
$\forall x \in \, (\text{Use}(S_1) \cap \text{Use}(S_2)) \cup \{{id}_{I}\}\,:\,
\vals_{1_{i+1}}(x) = \vals_{2_{i+1}}(x)$;

\item Values of the I/O sequence variable in value stores $\vals_{1_{i+1}}$ and $\vals_{2_{i+1}}$ are equivalent,
    $\vals_{1_{i+1}}({id}_{IO}) \equiv \vals_{2_{i+1}}({id}_{IO})$;
\end{itemize}
\end{enumerate}

By hypothesis IH, there is no configuration where loop counters of $S_1$ and $S_2$ are of value $i+1$ when any of the following holds:
\begin{enumerate}
\item Loop counters for $S_1$ and $S_2$ are always less than $i$ if any is present,
$\forall m_1'(\text{loop}_c^{1'})\, m_2'(\text{loop}_c^{2'})\,:\,
(S_1, m_1(\text{loop}_c^1, \vals_1)) ->* (S_1'', m_1'(\text{loop}_c^{1'})),
\text{loop}_c^{1'}( n_1) < i,
(S_2, m_2(\text{loop}_c^2,\allowbreak \vals_2)) ->* (S_2'', m_2'(\text{loop}_c^{2'})),
\text{loop}_c^{2'}( n_2) < i$,
$S_1$ and $S_2$ terminate in the same way, produce the equivalent I/O sequence, and have equivalent computation of  used/defined variables in both $S_1$ and $S_2$ and the input sequence variable, , the I/O sequence variable,
$(S_1, m_1)\allowbreak \equiv_{H} (S_2, m_2)$ and
$(S_1, m_1) \equiv_{O} (S_2, m_2)$ and
$\forall x \in (\text{Def}(S_1) \cap \text{Def}(S_2)) \cup
               \{{id}_I\}\,:\,
               (S_1, m_1) \equiv_{x} (S_2, m_2)$;
$S_1$ and $S_2$ produce , and the I/O sequence variable;

\item The loop counter of $S_1$ and $S_2$ are of value less than or equal to $i$,
and there are no reachable configurations
$(S_1, m_1(\text{loop}_c^{1_i}, \vals_{1_i}))$ from $(S_1, m_1(\vals_1))$,
$(S_2, m_2(\text{loop}_c^{2_i}, \vals_{2_i}))$ from $(S_2, \,\allowbreak m_2(\vals_2))$ where all of the following hold:
\begin{itemize}
\item The loop counters of $S_1$ and $S_2$ are of value $i$,
$\text{loop}_c^{1_i}( n_1)\allowbreak = \text{loop}_c^{2_i}( n_2) = i$.

\item Value stores $\vals_{1_i}$ and $\vals_{2_i}$ agree on values of  used variables in both $S_1$ and $S_2$ as well as the input sequence variable,
$\forall x \in \, (\text{Use}(S_1) \cap \text{Use}(S_2)) \cup \{{id}_I, , {id}_{IO}\}\,:\,
\vals_{1_i}(x) = \vals_{2_i}(x)$.

\item Values of the I/O sequence variable in value stores $\vals_{1_i}$ and $\vals_{2_i}$  are equivalent,
    $\vals_{1_i}({id}_{IO}) \equiv \vals_{2_i}({id}_{IO})$;
\end{itemize}
\end{enumerate}

When there are  reachable configurations
$(S_1, m_1(\text{loop}_c^{1_i}, \vals_{1_i}))$ from $(S_1, m_1(\vals_1))$,
$(S_2, m_2(\text{loop}_c^{2_i}, \vals_{2_i}))$ from $(S_2, \,\allowbreak m_2(\vals_2))$ where all of the following hold:
\begin{itemize}
\item The loop counter of $S_1$ and $S_2$ are of value $i$,
$\text{loop}_c^{1_i}(n_1)\allowbreak = \text{loop}_c^{2_i}(n_2) = i$.

\item Value stores $\vals_{1_i}$ and $\vals_{2_i}$ agree on values of  used variables in both $S_1$ and $S_2$ as well as the input sequence variable,
$\forall x \in \, (\text{Use}(S_1) \cap \text{Use}(S_2)) \cup \{{id}_I\}\,:\,
\vals_{1_i}(x) = \vals_{2_i}(x)$.

\item Values of the I/O sequence variable in value stores $\vals_{1_i}$ and $\vals_{2_i}$ are equivalent,
    $\vals_{1_i}({id}_{IO}) \equiv \vals_{2_i}({id}_{IO})$;
\end{itemize}

By similar argument in base case, we have one of the following holds:
\begin{enumerate}
\item Loop counters for $S_1$ and $S_2$ are always less than $i+1$ if any is present,
$\forall m_1'(\text{loop}_c^{1'})\, m_2'(\text{loop}_c^{2'})\,:\,
(S_1, m_1(\text{loop}_c^1, \vals_1)) ->* (S_1'', m_1'(\text{loop}_c^{1'})),
\text{loop}_c^{1'}(n_1) < i+1,
(S_2, m_2(\text{loop}_c^2,\allowbreak \vals_2)) ->* (S_2'', m_2'(\text{loop}_c^{2'})),
\text{loop}_c^{2'}(n_2) < i$,
$S_1$ and $S_2$ terminate in the same way, produce the equivalent I/O sequence, and have equivalent computation of  defined variables in both $S_1$ and $S_2$ and the input sequence variable,
$(S_1, m_1)\allowbreak \equiv_{H} (S_2, m_2)$
and
$\forall x \in (\text{Def}(S)) \cup
               \{{id}_I\}\,:\,
               (S_1, m_1) \equiv_{x} (S_2, m_2)$;

\item The loop counter of $S_1$ and $S_2$ are of value less than or equal to $i+1$,
and there are no reachable configurations
$(S_1, m_1(\text{loop}_c^{1_i}, \vals_{1_i}))$ from $(S_1, m_1(\vals_1))$,
$(S_2, m_2(\text{loop}_c^{2_i}, \vals_{2_i}))$ from $(S_2, \,\allowbreak m_2(\vals_2))$ where all of the following hold:
\begin{itemize}
\item The loop counters of $S_1$ and $S_2$ are of value $i$,
$\text{loop}_c^{1_{i+1}}(n_1)\allowbreak = \text{loop}_c^{2_{i+1}}(n_2) = i+1$.

\item Value stores $\vals_{1_{i+1}}$ and $\vals_{2_{i+1}}$ agree on values of  used variables in both $S_1$ and $S_2$ as well as the input sequence variable,
$\forall x \in \, (\text{Use}(S_1) \cap \text{Use}(S_2)) \cup \{{id}_I\}\,:\,
\vals_{1_{i+1}}(x) = \vals_{2_{i+1}}(x)$.

\item Values of the I/O sequence variable in value stores $\vals_{1_{i+1}}$ and $\vals_{2_{i+1}}$ are equivalent,
    $\vals_{1_{i+1}}({id}_{IO}) \equiv \vals_{2_{i+1}}({id}_{IO})$;
\end{itemize}

\item There are reachable configurations
$(S_1, m_1(\text{loop}_c^{1_{i+1}}, \vals_{1_{i+1}}))$ from $(S_1, m_1(\vals_1))$,
$(S_2, m_2(\text{loop}_c^{2_{i+1}}, \vals_{2_{i+1}}))$ from $(S_2, \,\allowbreak m_2(\vals_2))$ where all of the following hold:
\begin{itemize}
\item The loop counter of $S_1$ and $S_2$ are of value $i$,
$\text{loop}_c^{1_{i+1}}( n_1)\allowbreak = \text{loop}_c^{2_{i+1}}( n_2) = i$.

\item The loop counter of $S_1$ and $S_2$ are of value $i$,
$\text{loop}_c^{1_{i+1}}(, n_1)\allowbreak = \text{loop}_c^{2_{i+1}}(, n_2) = i$.

\item Value stores $\vals_{1_{i+1}}$ and $\vals_{2_{i+1}}$ agree on values of  used variables in both $S_1$ and $S_2$ as well as the input sequence variable,
$\forall x \in \, (\text{Use}(S_1) \cap \text{Use}(S_2)) \cup \{{id}_{I}\}\,:\,
\vals_{1_{i+1}}(x) = \vals_{2_{i+1}}(x)$;

\item Values of the I/O sequence variable in value stores $\vals_{1_{i+1}}$ and $\vals_{2_{i+1}}$  are equivalent,
    $\vals_{1_{i+1}}({id}_{IO}) \equiv \vals_{2_{i+1}}({id}_{IO})$;
\end{itemize}
\end{enumerate}
\end{proof} 

\subsection{Proof rule for missing variable initializations}

A kind of bugfix we call {\em missing-initialization} includes variable initialization for those in the imported variables relative to the I/O sequence variable in the old program.
\begin{figure}
\begin{small}
\begin{tabbing}
xxxxxx\=xxx\=xxx\=xxx\=xxxxxxxxxxxxxxxx\=xxxx\=xxx\=xxxxx\= \kill
\>1: \>\>\>                                    \>1': \> $b: = 2$ \> \\
\>2: \>{\bf If} $(a > 0)$ {\bf then}\>\>       \>2': \> {\bf If} $(a > 0)$ {\bf then} \> \\
\>3: \> \> $b := c + 1$\>                      \>3': \> \> $b := c + 1$               \> \\
\>4: \> {\bf output} $b + c$ \>\>              \>4': \> {\bf output} $b + c$          \> \\
\>\\
\> \> old\>\>                                         \>  \> new \>
\end{tabbing}
\end{small}
\caption{Missing initialization}\label{fig:missInitExample}
\end{figure}
Figure~\ref{fig:missInitExample} shows an example of missing-initializations. The initialization $b: = 2$ ensures the value used in ``{output} $b + c$" is not to be undefined.
In general, new variable initializations only affect rare buggy executions of the old program,
 where  there are uses of undefined imported variables relative to the I/O sequence variable in the program.
Because DSU is not starting in error state, we assume that, in the proof of backward compatibility,
there are no uses of variables with undefined variables in executions of the old program.

The following is the definition of the update class ``missing initializations".
\begin{definition}\label{def:missingVarInitFuncWide}
{\bf (Missing initializations)}
A statement sequence $S_2$ includes updates of missing initializations compared with a statement sequence $S_1$, written $S_2 {\approx}_{\text{Init}}^S S_1$, iff
$S_2 = S_{\text{Init}};S_1$ where $S_{\text{Init}}$ is a sequence of assignment statements of form
$``lval := v"$ and $\text{Def}(S_{\text{Init}}) \subseteq \text{Imp}(S_1, \{{id}_{IO}\})$;
\end{definition}
Though the bugfix in the update of missing initializations are not in rare execution in the first case in Definition~\ref{def:missingVarInitFuncWide}, the definition shows the basic form of bugfix clearly.

We show that two statement sequences terminate in the same way, produce the same output sequence, and have equivalent computation of defined variables in both programs in valid executions if there are updates of missing initializations between them.
\begin{lemma}\label{lmm:missingVarInitFuncWide}
Let $S_1$ and $S_2$ be two statement sequences respectively where there are updates of ``missing initializations" in $S_2$ compared with $S_1$, $S_2 {\approx}_{\text{Init}}^S S_1$.
If $S_1$ and $S_2$ start in states $m_1(\vals_1)$ and $m_2(\vals_2)$ respectively such that both of the following hold:
\begin{itemize}
\item Value stores $\vals_1$ and $\vals_2$ agree on values of variables  used in both $S_1$ and $S_2$ as well as the input sequence variable and  the I/O sequence variable,
$\forall id \in (\text{Use}(S_1)\cap \text{Use}(S_2)) \cup \{{id}_I,  {id}_{IO}\}\,:\,
 \vals_1(id) = \vals_2(id)$;

\item defined variables in $S_{\text{Init}}$ are of undefined value in value stores
$\vals_1, \vals_2$,
$\forall id \in \text{Def}(S_{\text{Init}})\,:\,
\vals_1(id) = \vals_2(id) = \text{Udf}\llbracket \tau\rrbracket$ where $\tau$ is the type of the variable $id$;

\item There are no use of variables with undefined values in the execution of $S_1$;

\item There are no crash in execution of $S_{\text{Init}}$;
\end{itemize}
then $S_1$ and $S_2$ terminate in the same way, produce the same output sequence, and when $S_1$ and $S_2$ both terminate, they have equivalent computation of  used variables and  defined variables in both $S_1$ and $S_2$ as well as the input sequence variable and the I/O sequence variable,
\begin{itemize}
\item $(S_1,   m_1) \equiv_{H} (S_2,   m_2)$;

\item $(S_1,   m_1) \equiv_{O} (S_2,   m_2)$;

\item $\forall x \in (\text{Def}(S_1)\cup \text{Def}(S_2)) \cup \{{id}_I,  {id}_{IO}\}\,:\,
 \allowbreak (S_1,   m_1) \equiv_{x} (S_2,   m_2)$;
\end{itemize}
\end{lemma}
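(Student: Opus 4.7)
My plan is to exploit the fact that $S_2$ is literally $S_{\text{Init}};S_1$, so its execution decomposes cleanly into two phases. First, I would argue that the prefix $S_{\text{Init}}$ terminates when started from $m_2$: every constituent statement has the form ``$lval := v$'' with $v$ a constant, so expression evaluation never raises an error, and the assumption that $S_{\text{Init}}$ does not crash rules out the remaining possible faults (array index out of bound or value mismatch). Hence $(S_{\text{Init}}, m_2) \to^* (\text{skip}, m_2'')$ for some state $m_2''$ whose value store $\vals_2''$ agrees with $\vals_2$ everywhere except on $\text{Def}(S_{\text{Init}})$, where each variable now carries its initializer. By Corollary~\ref{coro:termSeq}, $(S_2, m_2) \to^* (S_1, m_2'')$, and in particular no I/O takes place during this prefix, so the I/O sequence in $\vals_2''$ still equals that in $\vals_2$.

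The bulk of the proof then reduces to showing $(S_1, m_1) \equiv_H (S_1, m_2'')$, $(S_1, m_1) \equiv_O (S_1, m_2'')$, and the pointwise equivalence of computation. The initial value stores $\vals_1$ and $\vals_2''$ agree on every variable outside $\text{Def}(S_{\text{Init}})$ (by the hypothesis that $\vals_1$ and $\vals_2$ agreed on used variables and the fact that any discrepancy is confined to $\text{Def}(S_{\text{Init}}) \subseteq \text{Imp}(S_1, \{{id}_{IO}\})$, which are undefined in $\vals_1$). I would then formalize the informal argument by a lock-step simulation of the two executions of $S_1$: establish the invariant that, at every corresponding pair of intermediate states, the two value stores agree on every variable whose current value in $\vals_1$-side is \emph{defined}, and the input/output sequences coincide. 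Because $S_{\text{Init}}$ defines no counter variables or input/output streams, crash/overflow flags evolve identically in the two runs, and every semantic rule reads only identifiers that the invariant guarantees to be equal or undefined on the $m_1$ side.

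The main obstacle will be justifying that the invariant is preserved across reads. A read step in $S_1$ looks at some variable $x \in \text{Use}(s)$ for the current statement $s$. If $x \notin \text{Def}(S_{\text{Init}})$ the invariant directly gives equal values. If $x \in \text{Def}(S_{\text{Init}})$ and $\vals_1(x)$ at that point is defined, then by construction the variable has been assigned on the $m_1$ side by some preceding step of $S_1$; that same assignment step in the $m_2''$-side execution (which used only identifiers satisfying the invariant) produced the same value, so $\vals_1$ and the $m_2''$-side value store still agree on $x$. The remaining case, $x \in \text{Def}(S_{\text{Init}})$ with $\vals_1(x)$ still undefined, is precisely forbidden by the hypothesis that $S_1$'s execution from $m_1$ contains no uses of undefined variables, so it cannot occur. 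Carrying this invariant through assignment, input, output, \textbf{If}, and \textbf{while} cases (the last requiring the usual inner induction on loop-counter values, mirroring Lemma~\ref{lmm:loopTermInSameWayIntermediate}) then yields all three conclusions at once, since equal inputs on reads imply equal computed values on writes, equal predicate outcomes at branches and loops, and equal extensions of the I/O sequence at every output.
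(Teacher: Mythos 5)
Your decomposition is the same one the paper uses: run the prefix $S_{\text{Init}}$ to completion (it cannot crash by hypothesis, performs no I/O, and by Corollary~\ref{coro:termSeq} leaves $S_2$ poised to execute $S_1$ from a store $\vals_2''$ that differs from $\vals_1$ only on $\text{Def}(S_{\text{Init}})$), and then compare two executions of the \emph{same} statement sequence $S_1$ from $m_1$ and $m_2''$. Where you part ways with the paper is in how that second comparison is discharged. The paper simply invokes its general soundness theorems (Theorems~\ref{thm:mainTermSameWayLocal} and~\ref{thm:sameIOtheoremFuncWide}, applied to the reflexive instance of the proof rules) together with the no-undefined-use assumption; but those theorems require the initial stores to agree on the termination- and output-deciding variables, and here they literally disagree on $\text{Def}(S_{\text{Init}}) \subseteq \text{Imp}(S_1,\{{id}_{IO}\})$, so the citation only goes through because the no-undefined-use hypothesis is doing unacknowledged work. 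Your lock-step simulation with the invariant ``the stores agree on every variable that is \emph{defined} on the $m_1$ side, and the I/O sequences coincide'' makes that rescue explicit: a read of a disagreeing variable would be a use of an undefined variable and is forbidden, every other read returns equal values, hence writes, branch outcomes, loop predicates, crash conditions and I/O extensions coincide. This buys you a self-contained and arguably tighter argument at the cost of redoing (for the while case) the counter induction that the paper's theorems already package; the paper's route is shorter but leans on theorems whose hypotheses it does not quite verify. One small thing to keep in mind when writing this up: the loop counters in $m_1$ and $m_2''$ must also be related in your invariant (they start equal since $S_{\text{Init}}$ contains no loops, by Corollary~\ref{coro:loopCntRemainsSame}), and the conclusion about equivalent computation of $\text{Def}(S_1)\cup\text{Def}(S_2)$ includes the variables initialized by $S_{\text{Init}}$, for which your invariant gives agreement only once they are assigned in $S_1$ --- which is exactly what $\equiv_x$ requires on terminating runs given that unassigned ones remain undefined on the $\vals_1$ side and their use is excluded.
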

\begin{proof}
By induction on the sum of the program size of $S_1$ and $S_2$, $\text{size}(S_1) + \text{size}(S_2)$.

\noindent{Base case}.
$S_1 = s$ and $S_2 = S_{\text{Init}};s$ where
$S_{\text{Init}} = ``lval := v"$ and $\text{Def}(lval) \in \text{Use}(s)$;

There are cases regarding $lval$ in $S_{\text{Init}}$.
\begin{enumerate}
\item $lval = id$.

Then the execution of $S_2$ proceeds as follows.
\begin{tabbing}
xx\=xx\=\kill
\>\>                   $(id := v;s, m_2(\vals_2))$\\
\>$->$\>               $(\text{skip};s, m_2(\vals_2[v/(id)]))$\\
\>\>                   by the rule As-Scl\\
\>$->$\>               $(s, m_2(\vals_2[v/(id)]))$  by the rule Seq.
\end{tabbing}

By assumption, $id \in \text{Use}(e)$. By assumption, the value of $id$ is undefined in value store $\vals_1$. Then there is no valid execution of $S_1$. Then it holds that, in valid executions of $S_1$, $S_1$ and $S_2$ terminate in the same way, produce the same output sequence, and have equivalent computation of  defined variables in both $S_1$ and $S_2$. Then this lemma holds.

\item $lval = id[n]$.

Then the execution of $S_2$ proceeds as follows.
\begin{tabbing}
xx\=xx\=\kill
\>\>                   $(id[n] := v;s, m_2(\vals_2))$\\
\>$->$\>               $(\text{skip};s, m_2(\vals_2[v/(id, n)]))$\\
\>\>                   by the rule As-Err\\
\>$->$\>               $(s, m_2(\vals_2[v/(id, n)]))$  by the rule Seq.
\end{tabbing}

By similar argument above, this lemma holds.

\item $lval = {id}_1[{id}_2]$.

Then the execution of $S_2$ proceeds as follows.
\begin{tabbing}
xx\=xx\=\kill
\>\>                   $({id}_1[{id}_2] := v;s, m_2(\vals_2))$\\
\>$->$\>               $({id}_1[v_1] := v;s, m_2(\vals_2[v/(id, n)]))$\\
\>\>                   by the rule Var\\
\>$->$\>               $(\text{skip};s, m_2(\vals_2[v/(id, v_1)]))$  by the rule As-Arr.\\
\>$->$\>               $(s, m_2(\vals_2[v/(id, v_1)]))$  by the rule Seq.\\
\end{tabbing}

By similar argument above, this lemma holds.
\end{enumerate}


\noindent{Induction step}.

The hypothesis is that this lemma holds when the sum $k$ of the program size of $S_1$ and $S_2$ are great than or equal to 3, $k\geq 3$.

We then show that this lemma holds when the sum of the program size of $S_1$ and $S_2$ is $k+1$.

$S_2 = S_{\text{Init}};S_1$ where $S_{\text{Init}}$ is a sequence of assignment statements and $\text{Def}(S_{\text{Init}}) \in \text{Imp}(S_1, \{{id}_{IO}\})$;

The proof is similar to that in the base case.
By assumption, the execution of $S_{\text{Init}}$ does not crash,
$(S_{\text{Init}}, m_2(\vals_2)) ->* (\text{skip}, m_2(\vals_2'))$ where
$\vals_2' = \vals_2[v_1/x_1]...[v_k/x_k]$ and $\forall 1\leq i\leq k\,:\, x_k \in \text{Def}(S_{\text{Init}})$.

By assumption, there are no use of variables with undefined values in the execution of $S_1$ by Theorem~\ref{thm:sameIOtheoremFuncWide} and Theorem~\ref{thm:mainTermSameWayLocal}, this lemma holds.
\end{proof}

\section{Related Work}\label{sec:relatedwork}
We discuss related work on DSU safety and program equivalence in order.

Existing studies on DSU safety could be roughly divided into high level studies and low level ones.
There are a few studies on  high level DSU safety.
In~\cite{KramerConsistentDSU}, Kramer and Magee defined the DSU correctness that the updated system shall ``operate as normal instead of progressing to an error state". This is covered by our requirement that hybrid executions conform to the old program's specification and our accommodation for bug fixes. Moreover, our backward compatibility includes I/O behavior, which is more concrete than the behavior in~\cite{KramerConsistentDSU}.
In~\cite{BloomCorrectDSU}, Bloom and Day proposed a DSU correctness which allows functionality extension that could not produce past behavior. This is probably because Bloom and Day considered updated environment.
On the contrary, we assume that the environment is not updated. In addition, we explicitly present the error state, which is not mentioned in~\cite{BloomCorrectDSU}.
Panzica La Manna~\cite{PanzicaLaManna_criteria} presented a high level correctness only considering scenario-based specifications for controller systems instead of general programs.

There are also studies on low level DSU safety.
Hayden et al.~\cite{hayden12dsucorrect} discussed DSU correctness and concluded that there is only client-oriented correctness.
Zhang et al.~\cite{ZhangCorrectness} asked the developers to ensure DSU correctness.
Magill et al.~\cite{Magill_automap} did ad-hoc program correlation without definitions of any correctness.
We consider that there is general principle of DSU safety.
The difference lies at the abstraction of the program behavior. We model program behavior by concrete I/O while
others~\cite{hayden12dsucorrect,Magill_automap,ZhangCorrectness} consider a general program behavior.

We next discuss existing work on program equivalence.
There is a rich literature on program equivalence and we compare our work only with most related work.
Our study of program equivalence is inspired by original work of Horwitz et al.~\cite{Horwitz88} on program
dependence graphs, but we take a much more formal approach and we consider terminating as well as non-terminating
programs with recurring I/O.
In~\cite{Godlin10}, Godlin and Strichman have a structured study of program equivalence similar to that of ours.
Godlin and Strichman~\cite{Godlin10} restricted the equivalence to corresponding functions and therefore weakens the applicability to general transformations affecting loops such as loop fission, loop fusion and loop invariant code motion.
However, our program equivalence allows loop optimizations such as loop fusion and loop fission.
Furthermore, our syntactic conditions imply more program point mapping because we allow corresponding program point in arbitrary nested statements and in the middle of program that does not include function call. 

\section{Conclusion}\label{sec:conclusion}

In this paper, we propose a formal and practical general definition of DSU correction based on I/O sequences, backward compatibility.
We devised a formal language and adapt the general definition of DSU correctness for executable programs based on our language.
Based on the adapted backward compatibility, we proposed syntactic conditions that help guarantee correct DSUs for both terminating and nonterminating executions.
In addition, we formalize typical program updates that are provably backward compatible, covering both new feature and bugfix.

In the future, we plan to identify more backward compatible update patterns by studying more open source programs.
Though it is dubious if open source programs' evolution history includes typical update patterns, open source programs are the most important source of widely-used programs for our study of DSU.
In addition, we plan to develop an algorithm for automatic state mapping based on our syntactic condition of program equivalence and definition of update classes.

\bibliographystyle{abbrvnat}

\bibliography{reference}

\appendix

\section{Type system}\label{appendix:typesys}
Figure~\ref{fig:typsys} shows an almost standard unsound and incomplete type system.
The type system is unsound because of three reasons,
(a) the possible value mismatch due to the subtype rule from hte type Int to Long,
(b) the implicit subtype between enumeration types and the type Long allowed by our semantics and
(c) the possible array index out of bound.
The type system is incomplete due to the parameterized ``other" expressions.
The notation Dom($\Gamma$) borrowed from Cardelli~\cite{cardelli1996type} in rules Tvar1, Tvar2, Tlabels an Tfundecl refers to the domain of the typing environment $\Gamma$, which are identifiers bound to a type in $\Gamma$.
\begin{figure}[t!]

\normalsize{
\begin{tabular}{p{1cm}p{1cm}ll}
$\boxed{\Gamma \vdash \diamond}$ \\
\end{tabular}
}

\scriptsize{
\begin{tabular}{l}
\inference[{\bf TInit}]
{}
{\Gamma \vdash \diamond}
\\
\\
\end{tabular}
}

\scriptsize{
\begin{tabular}{p{2.5cm}l}
\multicolumn{2}{l}{$\boxed{\Gamma -> \Gamma'}$}  \\
\end{tabular}
}

\scriptsize{
\begin{tabular}{ll}
{\bf (Tvar1)} & {\bf (Tlabels)}\\

\inference[]
{  \Gamma \vdash \diamond\\
    V = V', \tau \, id   &    id \notin \text{Dom}(\Gamma) \\
}
{
    \Gamma, id : \tau \vdash \diamond
} &

\inference[]
{   \Gamma \vdash \diamond & k\geq 1 & id \notin \text{Dom}(\Gamma)\\
    EN = EN', \text{enum }id \{l_1,...,l_k\}
}
{
    \Gamma, id : \{l_1,...,l_k\} \vdash \diamond
}
\end{tabular}

\begin{tabular}{l}
{\bf (Tprompt)} \\

\inference[]
{  \Gamma \vdash \diamond\\
    Pmpt = \{l_1 : n_1, \ldots, l_k : n_k\}   &    \text{pmpt} \notin \text{Dom}(\Gamma) \\
}
{
    \Gamma, \text{pmpt} : \{l_1 : n_1, \ldots, l_k : n_k\} \vdash \diamond
}
\end{tabular}

\begin{tabular}{ll}
{\bf (Tvar2)} \\

\inference[]
{  \Gamma \vdash \diamond & id \notin \text{Dom}(\Gamma)\\
   V = V', \tau \, id[n]     & n > 0 \\
}
{
   \Gamma, id : \text{array}(\tau, n) \vdash \diamond
}
\\
\\
\end{tabular}

\normalsize{
\begin{tabular}{p{1cm}p{1cm}ll}
{$\boxed{\Gamma \vdash \tau}$}  \\
\end{tabular}
}

\scriptsize{
\begin{tabular}{p{1cm}ll}
{\bf (Tint)} & {\bf (Tlong)} & {\bf (Tenum)}\\
\inference[]
{\Gamma \vdash \diamond}
{\Gamma \vdash \text{Int}}

&

\inference[]
{\Gamma \vdash \diamond}
{\Gamma \vdash \text{Long}}

&

\inference[]
{\Gamma \vdash id : \{l_1,...,l_k\}}
{\Gamma \vdash \text{enum} \; id}
\\
\\
\end{tabular}
}
}

\normalsize{
\begin{tabular}{lll}
\multicolumn{3}{l}{$\boxed{\Gamma \vdash e : \tau}$} \\
\end{tabular}
}

\scriptsize{
\begin{tabular}{lll}
{\bf (Topnd)} & {\bf (Tequiv)} & {\bf (TSub)}     \\
\inference
{
}
{
    \Gamma, id:\tau \vdash id:\tau
}
&
\inference
{
    \Gamma \vdash id' : \{l_1,..., l, ..., l_k\} \\
    \Gamma \vdash id : \text{enum} \, id' \\
}
{
    \Gamma \vdash (id == l):\text{Long}
}
&
\inference
{   \Gamma \vdash e:\text{Int}   }
{   \Gamma \vdash e:\text{Long}  }
\\
\\
\end{tabular}

\begin{tabular}{p{1.5cm}p{1.5cm}l}
{\bf (Tarray1)} & {\bf (Tarray2)}  \\ 
\inference
{
    \Gamma \vdash id : \text{array}(\tau, n) \\
    \Gamma \vdash id': \text{Long}
}
{
    \Gamma \vdash id[id'] : \tau
}
&
\inference
{
 \Gamma \vdash  id : \text{array}(\tau, n) \\
 1\leq k \leq n
}
{
 \Gamma \vdash id[k] : \tau
}
\\
\end{tabular}
}

\scriptsize{
\begin{tabular}{lll}
\multicolumn{3}{l}{$\boxed{\Gamma \vdash S}$} \\
\end{tabular}
}

\scriptsize{
\begin{tabular}{p{0.8cm}p{.8cm}p{.8cm}l}%
{\bf (Tassign)} & {\bf (Tinput)} & {\bf (Toutput)} & {\bf (Tseq)}\\
\inference[]
{    \Gamma \vdash lval : \tau \\
     \Gamma \vdash e : \tau }
{    \Gamma \vdash lval := e }
     &
\inference[]
{   \Gamma \vdash id : \tau  \\
    \tau \neq \text{pmpt}}
{   \Gamma \vdash \text{input} \; id}
     &
\inference[]
{   \Gamma \vdash e : \tau    }
{   \Gamma \vdash \text{output} \, e }
     &
\inference[]
{   \Gamma \vdash S_1 \\
    \Gamma \vdash S_2}
{   \Gamma \vdash S_1; S_2 }
\\
\\
\end{tabular}

\begin{tabular}{ll}
{\bf (Tif)}   & {\bf (Twhile)}\\
\inference[]
{   \Gamma \vdash e : \text{Long}\\
 \Gamma \vdash S_1 & \Gamma \vdash S_2}
{   \Gamma \vdash \text{If}(e) \, \text{then} \, \{S_1\} \, \text{else} \, \{S_2\}}
         &
\inference[]
{   \Gamma \vdash e : \text{Long}, \; \Gamma \vdash S}
{   \Gamma \vdash \, \text{while}(e) \{S\}}
    \\
    \\
\end{tabular}
}

\scriptsize{
\begin{tabular}{l}
{$\boxed{\mbox{$\Gamma\vdash P$}}$} \\
\end{tabular}
}

\scriptsize{
\begin{tabular}{l}
{\bf (Tprog)}\\
\inference[]
{   
    {Pmpt} =  \{l_1 : n_1,...,l_k : n_k\} \\
    {EN} = \text{enum} \, {id}_1 \{l_1,...,l_r\},..., \text{enum} \, {id}_k \{l_1',...,l_r'\} \\
    \Gamma \vdash \text{enum} \, {id}_i, 1 \leq i \leq k  &  V = {\tau}_1' \, {id}_1', ..., {\tau}_k' {id}_k'[n] \\
    \Gamma \vdash {id}_j' : {\tau}_j', 1\leq j \leq k'-1 &\Gamma \vdash {id}_k' : \text{array}({\tau}_k', n) \\
    \Gamma \vdash S_{entry}\\
}
{
    \Gamma \vdash Pmpt; EN; V; S_{entry}
}\\
\\
\hline
\end{tabular}
}

\caption{Typing rules}\label{fig:typsys}

\end{figure}

\section{Syntactic definitions}\label{appendix:definitions}
The syntax-directed definitions listed below make our argument independent of existing program analysis partially.
\begin{definition}\label{def:indexUseInLval}
{\em (Idx$(lval)$)} The used variables in index of a left value $lval$, written Idx$(lval)$, are listed as follows:
\begin{enumerate}
\item $\text{Idx}({id})$ = $\emptyset$;

\item $\text{Idx}({id}[n])$ = $\emptyset$;

\item $\text{Idx}({id}_1[{id}_2])$ = \{${id}_2$\};
\end{enumerate}
\end{definition}

\begin{definition}\label{def:baseLval}
{\em (Base$(lval)$)} The base of a left value $lval$, written Base$(lval)$, is listed as follows:
\begin{enumerate}
\item $\text{Base}({id})$ = $\{id\}$;

\item $\text{Base}({id}[n])$ = $\{id\}$;

\item $\text{Base}({id}_1[{id}_2])$ = \{${id}_1$\};
\end{enumerate}
\end{definition}
%
\begin{definition}\label{def:dusevarsExpr}
{\em (Use$(e)$)} The set of used variables in an expression $e$, written Use$(e)$, are listed as follows:
\begin{enumerate}
\item $\text{Use}({lval})$ = $\text{Base}(lval) \, \cup \, \text{Idx}(lval)$;

\item $\text{Use}(id == l) = \{id\}$;

\item $\text{Use}(\text{other}) = \text{Use}(other)$ where function $\text{Use}: \text{other} -> \{id\}$ is parameterized;
\end{enumerate}
\end{definition}

\begin{definition}\label{usevars}
{\em(Use$(S)$)} The used variables in a sequence of statements $S$, written Use$(S)$, are listed as follows:
\begin{enumerate}
\item Use$(skip) = \emptyset$;

\item Use$({lval} := e)$ = $\text{Use}(e) \cup \text{Idx}(lval)$;

\item Use$(\text{output }{e})$ = Use$(e) \cup \{{id}_{IO}\}$;

\item Use$(\text{input }id)$ = $\{{id}_I, {id}_{IO}\}$;

\item Use$(\text{If }(e) \text{ then }\{S_t\} \text{ else }\{S_f\})$ = Use$(e) \cup \text{Use}(S_t) \cup \text{Use}(S_f)$;

\item Use$(\text{while}_{\langle n\rangle}(e) \{S'\})$ = Use$(e) \cup \text{Use}(S')$;

\item For $k>0$, Use$(s_1;...;s_{k+1})$ = $\text{Use}(s_1;...;s_k) \cup \text{Use}(s_{k+1})$;
\end{enumerate}
\end{definition}

\begin{definition}\label{defvars}
{\em (Def$(S)$)} The defined variables in a sequence of statements $S$, written Def$(S)$, are listed as follows:
\begin{enumerate}
\item Def$(skip)$ = $\emptyset$;

\item Def$({id} := e)$ = \{$id$\};

\item Def$(\text{input }id)$ = \{${id}_I, {id}_{IO}, id$\};

\item Def$(\text{output }{e})$ = \{${id}_{IO}$\};

\item Def$(\text{If }(e) \text{ then }\{S_t\} \text{ else }\{S_f\})$ = Def$(S_t)$ $\cup$ Def$(S_f)$;

\item Def$(\text{while}_{\langle n\rangle}(e) \{S\})$ = Def$(S)$;

\item For $k>0$, Def$(s_1;...;s_{k+1})$ = $\text{Def}(s_1;...;s_k) \cup \text{Def}(s_{k+1})$;
\end{enumerate}
\end{definition}

%
\begin{definition}\label{def:stmtInclusion}
($s \in S$) We say a statement $s$ is in a sequence of statements $S$ of a program $P$, written $s \in S$, if one of the following holds:
\begin{enumerate}
\item $S = s$;


\item If $S$ = ``If(e) then \{$S_t$\} else \{$S_f$\}", $(s \in S_t) \vee (s \in S_f)$;

\item If $S$ = ``while(e) \{$S'$\}", $s \in S'$;

\item For $k>0$, if $S = s_1;...;s_{k};s_{k+1}$, $(s \in s_{k+1}) \vee (s \in s_1;...;s_{k})$;
\end{enumerate}
\end{definition}
We write $s \notin S$ if $s \in S$ does not hold.

We show the definition of program size, which is based of our induction proof.
\begin{definition}\label{def:programsize}
$(\text{size}(S))$ The program size of a statement sequence $S$, written $\text{size}(S)$, is listed as follows:
\begin{enumerate}
\item $\text{size}(``\text{skip}") = \text{size}(``id := e") = \text{size}(``{id}_1 := \,\text{call}\, {id}_2(e^{*})") \allowbreak= \text{size}(``\text{input }id") = \text{size}(``\text{output }e") =   1$;

\item $\text{size}(``\text{If}(e) \text{ then }\{S_t\} \text{ else }\{S_f\}") =  1 + \text{size}(S_t) + \text{size}(S_f)$;

\item $\text{size}(``\text{while}(e)\, \{S'\}") = 1 +  \text{size}(S')$;

\item For $k>0$, $\text{size}(s_1;...;s_{k}$  = $\sum\limits_{i=1}^k\text{size}(s_i)$;
\end{enumerate}
\end{definition} 
\section{Properties of imported variables}
\begin{lemma}\label{lmm:ImpPrefixLemma}
$\text{Imp}(S_1;S_2, X) = \text{Imp}(S_1, \text{Imp}(S_2, X))$.
\end{lemma}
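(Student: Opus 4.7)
The plan is to prove the identity by induction on the number of statements composing $S_2$, viewed as a sequence $s_1';\ldots;s_k'$. The key observation is that clause (5) of Definition~\ref{def:impvars} effectively computes imported variables by a right fold over statements in the sequence: $\text{Imp}(s_1;\ldots;s_k;s_{k+1}, X) = \text{Imp}(s_1;\ldots;s_k, \text{Imp}(s_{k+1}, X))$. So the lemma is essentially a statement that this right fold is ``associative'' when we split the sequence into a prefix $S_1$ and suffix $S_2$.

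First I would handle the base case where $S_2$ is a single statement $s$. Then $S_1;S_2 = S_1;s$ is a sequence whose last statement is $s$, so clause (5) applied directly gives $\text{Imp}(S_1;s, X) = \text{Imp}(S_1, \text{Imp}(s, X)) = \text{Imp}(S_1, \text{Imp}(S_2, X))$, which is exactly what we want. Note that this case requires no case analysis on the syntactic form of $s$, because clause (5) operates uniformly on the last statement of a sequence.

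For the inductive step, suppose $S_2 = S_2'; s$ where $S_2'$ has $k$ statements and the hypothesis holds for all prefix/suffix decompositions where the suffix has at most $k$ statements. I would compute:
\begin{align*}
\text{Imp}(S_1;S_2, X) &= \text{Imp}(S_1;S_2';s, X) \\
                      &= \text{Imp}(S_1;S_2', \text{Imp}(s, X)) \quad \text{by clause (5)} \\
                      &= \text{Imp}(S_1, \text{Imp}(S_2', \text{Imp}(s, X))) \quad \text{by IH applied with suffix $S_2'$} \\
                      &= \text{Imp}(S_1, \text{Imp}(S_2';s, X)) \quad \text{by clause (5) again} \\
                      &= \text{Imp}(S_1, \text{Imp}(S_2, X)).
\end{align*}
This closes the induction.

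I do not expect a serious obstacle; the result is essentially a syntactic reassociation justified by the shape of clause (5). The only subtlety is making sure that the concatenation $S_1;S_2$ is understood as a flat sequence of statements (so that clause (5) applies with $s$ as the last statement regardless of whether $s$ came from $S_1$ or $S_2$), which matches the convention $S ::= s_1;\ldots;s_k$ for $k \geq 1$ in the syntax. No case analysis on the structure of the individual statements in $S_1$ or $S_2$ (simple, If, while) is needed, because the recursion on sequence length in clause (5) never touches those clauses.
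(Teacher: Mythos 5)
Your proof is correct and follows essentially the same route as the paper, which also proceeds by induction on the number of statements in $S_2$ and relies on clause (5) of the definition of imported variables. The base case and the three-step rewriting in the inductive step are exactly what is needed.
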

\begin{proof}
Let statement sequence $S_2$ = $s_1;s_2;...;s_k$ for some $k > 0$. The proof is by induction on $k$.
\end{proof}
%

\begin{corollary}\label{lmm:dupStmtPrefixLemma}
$\forall i \in \mathbb{Z}_{+}, \text{Imp}(S^{i+1}, X)$ = $\text{Imp}(S, \text{Imp}(S^{\mathit{i}}, X))$.
\end{corollary}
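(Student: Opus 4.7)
The plan is to derive the corollary as a direct application of Lemma ImpPrefixLemma, using only the notational convention for the iterate $S^i$. First I would unfold the definition: under the paper's convention that $S^i$ denotes $i$ consecutive copies of $S$ (with $S^0 = \text{skip}$), we have the identity $S^{i+1} = S\mathbin{;}S^{i}$ for every $i \in \mathbb{Z}_{+}$. This is the only piece of bookkeeping required, and it is immediate from how concatenation-of-copies is defined.

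Next I would instantiate Lemma ImpPrefixLemma with $S_1 := S$ and $S_2 := S^{i}$, which yields
\[
\text{Imp}(S\mathbin{;}S^{i}, X) \;=\; \text{Imp}(S, \text{Imp}(S^{i}, X)).
\]
Combining this with the identity $S^{i+1} = S\mathbin{;}S^{i}$ gives $\text{Imp}(S^{i+1}, X) = \text{Imp}(S, \text{Imp}(S^{i}, X))$, which is exactly the statement of the corollary.

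There is no real obstacle here: the work has already been done in Lemma ImpPrefixLemma (which itself is proved by induction on the length of the second statement sequence). The only thing one has to be a bit careful about is which end of the concatenation $S^{i+1}$ is being ``peeled off''; since Lemma ImpPrefixLemma is symmetric in form (it applies to any split $S_1\mathbin{;}S_2$), the same argument also works with the alternative reading $S^{i+1} = S^{i}\mathbin{;}S$ — but because Imp composes associatively via Lemma ImpPrefixLemma, both unfoldings give the same right-hand side, so the choice of convention is immaterial. No induction on $i$ is needed; a single application of the lemma suffices.
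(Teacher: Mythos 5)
Your proposal is correct and matches the paper's proof, which likewise derives the corollary as an immediate instance of Lemma~\ref{lmm:ImpPrefixLemma} applied to the decomposition $S^{i+1} = S;S^{i}$. (Your parenthetical claim that the alternative peeling $S^{i+1}=S^{i};S$ gives the same right-hand side ``for free'' is not quite immediate -- that reading yields $\text{Imp}(S^{i},\text{Imp}(S,X))$ and would need its own argument -- but this aside is not used in your main derivation, which is complete as stated.)
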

This is by lemma~\ref{lmm:ImpPrefixLemma}.

\begin{lemma}\label{lmm:impVarUnionLemma}
$\text{Imp}(S, A \cup B)$ = $\text{Imp}(S, A) \cup \text{Imp}(S, B)$.
\end{lemma}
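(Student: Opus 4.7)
The plan is to prove $\text{Imp}(S, A \cup B) = \text{Imp}(S, A) \cup \text{Imp}(S, B)$ by structural induction on the statement sequence $S$, following exactly the clauses of Definition~\ref{def:impvars}. At each node I would case split on the intersection of $\text{Def}$ with $A$, with $B$, and with $A \cup B$, observing the obvious fact that $(A \cup B) \cap \text{Def}(S) = \emptyset$ if and only if $A \cap \text{Def}(S) = \emptyset$ and $B \cap \text{Def}(S) = \emptyset$.

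For the base cases, the easy subcase is when $\text{Def}(S) \cap (A \cup B) = \emptyset$: then clause 1 gives $\text{Imp}(S, A \cup B) = A \cup B$, and also $\text{Imp}(S, A) = A$ and $\text{Imp}(S, B) = B$, so the equality is immediate. For a simple assignment / input / output with $\text{Def}(S) \cap (A \cup B) \neq \emptyset$, I would split into the three subcases where exactly one, or both, of $A \cap \text{Def}(S)$ and $B \cap \text{Def}(S)$ are nonempty, and in each subcase unfold clause 2 on the side that has nonempty intersection and clause 1 on the side that does not, then verify the resulting set identity $\text{Use}(S) \cup ((A \cup B) \setminus \text{Def}(S)) = (\text{Use}(S) \cup (A \setminus \text{Def}(S))) \cup (\text{Use}(S) \cup (B \setminus \text{Def}(S)))$ (and its degenerate forms) directly from set algebra.

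The If case is the most pleasant: clause 3 is already written as $\text{Use}(e) \cup \bigcup_{y \in A \cup B} (\text{Imp}(S_t, \{y\}) \cup \text{Imp}(S_f, \{y\}))$, and splitting the indexing set $A \cup B$ into $A$ and $B$ distributes the big union trivially. For the While case, I would apply clause 4 and use Corollary~\ref{lmm:dupStmtPrefixLemma} together with the induction hypothesis applied to the loop body unrolled $i$ times, again distributing the union over the index set. For the sequential case $S = s_1;\ldots;s_k;s_{k+1}$, I would apply clause 5 twice and then the induction hypothesis, first on $s_{k+1}$ to split $\text{Imp}(s_{k+1}, A \cup B)$ into $\text{Imp}(s_{k+1}, A) \cup \text{Imp}(s_{k+1}, B)$, then on the prefix $s_1;\ldots;s_k$ to distribute $\text{Imp}(s_1;\ldots;s_k, \cdot)$ over that union.

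The main obstacle is bookkeeping rather than mathematical: the four-way case analysis on $A \cap \text{Def}(S)$ and $B \cap \text{Def}(S)$ must be reconciled against the two-way split on $(A \cup B) \cap \text{Def}(S)$ in each inductive case, and in the compound cases (If and While) one must be careful that the ``$\text{Def}(S) \cap X \neq \emptyset$'' guard is compatible on both sides of the equation, e.g.\ noting that when $A \cap \text{Def}(S) = \emptyset$ but $B \cap \text{Def}(S) \neq \emptyset$, the clause used for $\text{Imp}(S, A)$ is clause 1 whereas the clause used for $\text{Imp}(S, A \cup B)$ is clause 3 or 4, and verifying that clause 1 on $A$ is still absorbed into the right-hand side by $A \subseteq \text{Imp}(S_t, \{y\}) \cup \text{Imp}(S_f, \{y\})$ for each $y \in A$ via another appeal to clause 1 on the branches.
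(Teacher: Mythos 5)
Your proposal is correct and follows essentially the same route as the paper, whose entire proof of Lemma~\ref{lmm:impVarUnionLemma} is the single line ``by structural induction on the abstract syntax of $S$''; you have simply filled in the case analysis on $\text{Def}(S)\cap A$ and $\text{Def}(S)\cap B$ that the paper leaves implicit. The only point worth flagging is that in the while case the induction hypothesis is invoked on the unrolled body $S'^i$, which is not a structural subterm, so strictly one needs an inner induction on $i$ (unfolding $S'^i$ via the sequencing clause into repeated applications of $\text{Imp}(S',\cdot)$), but this is a routine repair and does not change the approach.
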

\begin{proof}
By structural induction on abstract syntax of statement sequence $S$.
\end{proof}

\begin{lemma}\label{lmm:boundOfLoopDupForImpVar}
For statement $s$ = ``while(e)\{$S$\}" and a set of finite number of variables $X$ such that $X \cap \text{Def }($s$) \; \neq \; \emptyset$,
there is $\beta > 0$ such that $\bigcup_{0\leq i \leq (\beta+1)}$ Imp $(S^i,X)$ $\subseteq$ $\bigcup_{1\leq j\leq \beta}$ Imp$(S^j, X)$.
\end{lemma}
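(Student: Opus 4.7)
The plan is to exploit the syntactic finiteness of the program. Since the program is finite, the set $\mathcal{V}$ of all identifiers (plus the distinguished variables ${id}_I, {id}_{IO}$) is finite, and an easy structural induction on the clauses of Definition~\ref{def:impvars} shows that every $\text{Imp}(S^i, X)$ lives inside $\mathcal{V} \cup X$, which is likewise finite. This immediately bounds the universe in which the iterated imported-variable sets live and is the only essential ingredient.

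Next I would consider the monotonically increasing chain $U_k = \bigcup_{1 \leq j \leq k} \text{Imp}(S^j, X)$ of subsets of this finite universe. Because the universe is finite and the chain is monotone, there exists a least $\beta > 0$ at which it saturates, i.e.\ $U_{\beta+1} = U_\beta$. A short inductive argument using Corollary~\ref{lmm:dupStmtPrefixLemma} (which gives $\text{Imp}(S^{k+1}, X) = \text{Imp}(S, \text{Imp}(S^k, X))$) together with the monotonicity of $\text{Imp}(S, \cdot)$ (which follows from Lemma~\ref{lmm:impVarUnionLemma} and clause-wise inspection of Definition~\ref{def:impvars}) then shows $U_{\beta+k} = U_\beta$ for all $k \geq 0$, so in particular $\text{Imp}(S^{\beta+1}, X) \subseteq U_\beta$.

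The delicate point, and the main technical obstacle, is the asymmetry between the two unions in the statement: the left-hand union begins at $i = 0$ and therefore contains $\text{Imp}(S^0, X) = \text{Imp}(\text{skip}, X) = X$, whereas the right-hand union begins at $j = 1$. To bridge this I would use the hypothesis $X \cap \text{Def}(s) \neq \emptyset$ together with the assignment/compound clauses of Definition~\ref{def:impvars}: for $x \in X \setminus \text{Def}(S)$ we directly have $x \in \text{Imp}(S^1, X)$, while for $x \in X \cap \text{Def}(S)$ one traces $x$'s def-use chain through one or more iterations of $S$ to exhibit $x$ inside some $\text{Imp}(S^j, X)$ with $j \geq 1$, enlarging $\beta$ if necessary so that every such witness falls within $[1, \beta]$. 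Finiteness of $X$ guarantees that only finitely many witnesses are needed, so such a $\beta$ exists. The combined $\beta$ chosen as the maximum of the saturation index and the largest witness index then yields the desired containment.
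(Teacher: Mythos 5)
Your first two paragraphs are sound and, in substance, they already contain everything the paper's own proof offers: the paper disposes of this lemma with a one-line appeal to the finiteness of the set of variables redefined in $s$, and your monotone-chain saturation argument (via Corollary~\ref{lmm:dupStmtPrefixLemma} and the monotonicity of $\text{Imp}(S,\cdot)$ extracted from Lemma~\ref{lmm:impVarUnionLemma}) is a correct and more explicit rendering of that finiteness idea. It does yield a $\beta>0$ with $\text{Imp}(S^{\beta+1},X)\subseteq\bigcup_{1\leq j\leq\beta}\text{Imp}(S^j,X)$.

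The genuine gap is exactly at the point you flag, and the repair you sketch cannot be carried out. For $x\in X\cap\text{Def}(S)$ there need not exist \emph{any} $j\geq 1$ with $x\in\text{Imp}(S^j,X)$: take the loop body $S$ to be a single assignment $x:=e$ with $x\notin\text{Use}(e)$, and $X=\{x\}$, so that $X\cap\text{Def}(s)=\{x\}\neq\emptyset$. By clause~2 of Definition~\ref{def:impvars}, $\text{Imp}(S,\{x\})=\text{Use}(S)\cup(\{x\}\setminus\{x\})=\text{Use}(e)$, which excludes $x$; and since $\text{Def}(S)\cap\text{Use}(e)=\emptyset$, clause~1 gives $\text{Imp}(S^j,\{x\})=\text{Use}(e)$ for every $j\geq 1$. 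The def-use chain of $x$ terminates immediately, so there is no ``witness iteration'' to trace through, and no enlargement of $\beta$ ever places $x$ --- which sits in the left-hand union via $\text{Imp}(S^0,X)=X$ --- into the right-hand union. In other words, the asymmetric inclusion (left union starting at $i=0$, right union starting at $j=1$) is false under the natural reading; only the symmetric version with both unions starting at $0$, which is what is actually needed to stabilize the infinite union in the while-clause of Definition~\ref{def:impvars}, follows from your (correct) saturation machinery. The paper's one-line contradiction proof silently ignores this asymmetry as well, so the defect is ultimately in the statement rather than in your first two paragraphs; but as a proof of the lemma as written, the step ``exhibit $x$ inside some $\text{Imp}(S^j,X)$ with $j\geq 1$'' fails.
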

\begin{proof}
By contradiction against the fact that is finite number of variables redefined in statement $s$.
\end{proof}

%
%
%
%
%
%

\section{Properties of expression evaluation}
We wrap the two properties of expression evaluation, which is based on the two properties of ``other" expression evaluation. In the following, we use the notation $\mathcal{E}'$ to expand the domain of the expression meaning function
$\mathcal{E}'\,:\, e -> \vals -> (v_\text{error}, \{0, 1\})$.
\begin{lemma}\label{lmm:expEvalSameVal}
If every variable in $\text{Use}(e)$ of an expression $e$ has the same value w.r.t two value stores, the expression $e$ evaluates to same value against the two value stores,
$(\forall x \in \, \text{Use}(e)\,:\, \vals_1(x) = \vals_2(x)) => (\mathcal{E}'\llbracket e\rrbracket\vals_1 = \mathcal{E}'\llbracket e\rrbracket\vals_2)$.
\end{lemma}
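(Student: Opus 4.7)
The plan is to prove Lemma \ref{lmm:expEvalSameVal} by case analysis on the syntactic form of the expression $e$, as given in Figure \ref{fig:syntaxfig}: $e$ is either $id == l$, an $lval$ (which further splits into $id$, $id[n]$, or $id_1[id_2]$), or an unspecified ``other'' expression. In each case I will invoke the definition of $\text{Use}(e)$ from Definition \ref{def:dusevarsExpr} and the relevant SOS rules from Figure \ref{fig:sosrulesExpr}, and appeal to the assumption that $\vals_1$ and $\vals_2$ agree on $\text{Use}(e)$ to show that $\mathcal{E}'\llbracket e\rrbracket\vals_1 = \mathcal{E}'\llbracket e\rrbracket\vals_2$.

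First I would dispatch the straightforward cases. When $e = id$, $\text{Use}(e) = \{id\}$ and rule Var forces evaluation to $\vals_i(id)$, so the hypothesis gives equal results. When $e = id == l$, $\text{Use}(e) = \{id\}$; rule Eq-T or Eq-F fires depending on $\vals_i(id)$, and since both value stores map $id$ to the same label, both evaluations produce the same integer (1 or 0). When $e = id[n]$, $\text{Use}(e) = \{id\}$ from $\text{Base}$; the Arr-1 or Arr-2 rule fires according to whether $1\leq n\leq$ the array bound, and because $\vals_1$ and $\vals_2$ agree on $id$ (including on the mapping from indices to stored values), they either both produce the same value or both crash. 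The case $e = id_1[id_2]$ is essentially the same after one extra reduction step on $id_2$ using Var, since $\text{Use}(e) = \{id_1, id_2\}$ combines $\text{Base}$ and $\text{Idx}$.

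The remaining case $e = \text{other}$ is where care is needed, since ``other'' is parametric: the language is parameterized by a meaning function $\mathcal{E}: \text{other} \to \vals \to (v_{\text{err}} \times \{0,1\})$ and a set function $\text{Use}: \text{other} \to \{id\}$. By the rule EEval, the evaluation reduces in one step to $\mathcal{E}\llbracket e\rrbracket\vals_i$. Here I would invoke the explicit assumption stated in the paragraph following Figure \ref{fig:sosrulesExpr}: ``The value returned by the expression meaning function only depends on the values of variables in the use set of the expression.'' Since $\vals_1$ and $\vals_2$ agree on $\text{Use}(e)$ by hypothesis, $\mathcal{E}\llbracket e\rrbracket\vals_1 = \mathcal{E}\llbracket e\rrbracket\vals_2$.

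The main obstacle is not mathematical depth but rather choosing how to present $\mathcal{E}'$ uniformly: $\mathcal{E}'$ is the extended meaning covering all expression forms, while $\mathcal{E}$ is only defined for ``other'' expressions. For the structured cases ($id$, $lval$, $id==l$), ``evaluation'' is realized by SOS steps rather than by a direct meaning function, so I must be careful to state what ``evaluates to the same value w.r.t.\ $\vals_i$'' means in those cases; I would interpret $\mathcal{E}'\llbracket e\rrbracket\vals_i$ as the unique $(v_{\text{err}}, v_{\mathfrak{of}})$-pair to which the redex $e$ reduces in one step under the rules of Figure \ref{fig:sosrulesExpr}, starting from a state with $\mathfrak{f} = 0$. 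Once this convention is fixed, the case analysis is essentially a one-line argument per case, and the whole proof fits on half a page.
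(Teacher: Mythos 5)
Your proposal is correct and follows essentially the same route as the paper's proof: a case analysis on the syntactic form of $e$ ($id$, $id[n]$, ${id}_1[{id}_2]$, $id==l$, and the parametric ``other'' case handled by the assumed property of the meaning function $\mathcal{E}$). Your explicit remark on how to read $\mathcal{E}'$ uniformly across the structured cases is a reasonable clarification of a point the paper only settles afterwards by introducing the extended rule EEval'.
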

\begin{proof}
The proof is a case analysis of the expression $e$.
\begin{enumerate}
\item $e = lval$;

There are further cases regarding $lval$.
\begin{enumerate}
\item $lval = id$;

By definition, $\text{Use}(e) = \{id\}$.
Besides, there is no integer overflow in both evaluations.
The lemma holds trivially.

\item $lval = id[n]$;

By definition, $\text{Use}(e) = \{id\}$.
Because the array has fixed size, by assumption,
$\vals_1(id, n) = \vals_2(id, n)$ or
$(id, n, *) \notin \vals_1, (id, n, *) \notin \vals_2$.
Besides, there is no integer overflow in both evaluations.
The lemma holds.

\item $lval = {id}_1[{id}_2]$;

By definition, $\text{Use}(e) = \{{id}_1, {id}_2\}$.
By assumption, $\vals_1({id}_2) = \vals_2({id}_2) = n$
By similar argument to the case $lval = id[n]$, the lemma holds.
\end{enumerate}

\item $e = ``id == l"$;

By definition, $\text{Use}(e) = \{id\}$.
W.l.o.g, $id$ is a global variable.
By assumption, $\vals_1(id) = \vals_2(id) = l'$.
If $l' = l$, by rule Eq-T, $(l'==l, m(\vals)) -> (1, m)$.
If $l' \neq l$, by rule Eq-F, $(l'==l, m(\vals)) -> (0, m)$.
Besides, there is no integer overflow in both evaluations.
 The lemma holds.

\item $e = \text{other}$;

By definition, $\text{Use}(e) = \text{Use}(e)$.
By assumption, $\forall x \in \text{Use}(e)\,:\, \vals_1(x) = \vals_2(x) \vee \vals_1(x) = \vals_2(x)$.
The lemma holds by parameterized expression meaning function for ``other" expression.
\end{enumerate}
\end{proof}

\begin{lemma}\label{lmm:expEvalSameTerm}
If every variable in $\text{Err}(e)$ of an expression $e$ has same value w.r.t two pairs of (block, value store),
$\forall x \in  \, \text{Err}(e)\,:\,
  \vals_1(x) = \vals_2(x)$
then one of the following holds:
\begin{enumerate}
\item the expression evaluates to crash against the two value stores,
$(\mathcal{E}'\llbracket e\rrbracket\vals_1 = (\text{error}, v_{\mathfrak{of}})) \wedge
 (\mathcal{E}'\llbracket e\rrbracket\vals_2 \allowbreak= (\text{error}, v_{\mathfrak{of}}))$;

\item the expression evaluates to no crash against the two pairs of (block, value store)
$(\mathcal{E}'\llbracket e\rrbracket\vals_1 \neq (\text{error}, v_{\mathfrak{of}}^1)) \wedge
 (\mathcal{E}'\llbracket e\rrbracket\vals_2 \neq (\text{error}, v_{\mathfrak{of}}^2))$.
\end{enumerate}
\end{lemma}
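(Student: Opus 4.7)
The plan is to prove Lemma~\ref{lmm:expEvalSameTerm} by case analysis on the abstract syntax of the expression $e$, following the same pattern used for Lemma~\ref{lmm:expEvalSameVal} (which handles value equality). Since the language's expression forms are $lval$, $id == l$, and $\text{other}$, there are only a handful of cases to dispatch, and in each the conclusion follows either from a direct inspection of the SOS rules or from the assumed property of the parameterized meaning function $\mathcal{E}$.

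First I would handle the built-in expression forms. For $e = id == l$, the only applicable rules are Eq-T and Eq-F, neither of which produces $\text{error}$, so $\mathcal{E}'\llbracket e\rrbracket\vals_1$ and $\mathcal{E}'\llbracket e\rrbracket\vals_2$ both yield non-error values regardless of the value stores, and the second disjunct holds trivially. For $e = lval$ with $lval = id$, lookup never raises an exception, so again both evaluations are non-error. For $lval = id[n]$ with a literal index, by definition $\text{Err}(e)$ includes nothing relevant to bounds and the bound of $id$ is fixed by the type environment (shared across the two states by the implicit typing assumption), so the rules Arr-1 and Arr-2 fire uniformly in $\vals_1$ and $\vals_2$. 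The interesting subcase is $lval = id_1[id_2]$: here $\text{Err}(e)$ contains $id_2$, and since $\vals_1(id_2) = \vals_2(id_2)$ by hypothesis, the Arr-2 guard $\neg(1 \le v_1 \le n)$ either triggers in both evaluations or in neither.

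For $e = \text{other}$, the argument is not combinatorial but reduces directly to the stated property of the parameterized meaning function: the paper assumes ``the error evaluation only depends on the variables in the error set,'' i.e.\ $\mathcal{E}\llbracket e\rrbracket\vals$ returns $(\text{error}, *)$ as a function of $\vals$ restricted to $\text{Err}(e)$. Since by hypothesis $\vals_1$ and $\vals_2$ agree on $\text{Err}(e)$, the error components of $\mathcal{E}\llbracket e\rrbracket\vals_1$ and $\mathcal{E}\llbracket e\rrbracket\vals_2$ coincide, so EEval yields either $(\text{error},*)$ in both or a proper value in both.

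I do not expect a serious obstacle: the lemma is essentially a determinism/parametricity statement about the evaluator, and the hard work is already encoded in the language designer's assumption on $\mathcal{E}$ and in the bookkeeping performed by Definitions~\ref{def:CVar} and the $\text{Err}(\cdot)$ function. The only mildly delicate point is keeping straight that the overflow flag component of the pair need not match between the two evaluations (since overflow is not tracked by $\text{Err}$), so the conclusion must be stated purely in terms of whether the first component equals $\text{error}$, which matches the lemma's phrasing.
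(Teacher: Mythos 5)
Your proposal is correct and follows essentially the same route as the paper's proof: a case analysis on the syntactic form of $e$, dispatching $id$, $id[n]$, $id_1[id_2]$, and $id==l$ by direct inspection of the SOS rules (using agreement on $\text{Err}(e)=\text{Idx}(lval)$ in the computed-index case), and discharging the $\text{other}$ case by the assumed parametricity of $\mathcal{E}$ over the error set. Your closing remark that only the error component, not the overflow flag, is controlled by the hypothesis is a correct and worthwhile clarification of the lemma's phrasing.
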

\begin{proof}
The proof is a case analysis of the expression $e$.
\begin{enumerate}
\item $e = lval$;

There are further cases regarding $lval$.
\begin{enumerate}
\item $lval = id$;

By definition, $\text{Err}(e) = \text{Idx}(e) = \emptyset$. By our semantic, the evaluation of $id$ never crash.
Besides, there is no integer overflow in both evaluations.
 The lemma holds.

\item $lval = id[n]$;

By definition, $\text{Err}(e) = \text{Idx}(e) = \emptyset$.
Because the array ${id}_1$ has a fixed array size, by assumption, either
$(({id}_1, n, v_1) \in \vals_1) \wedge (({id}_1, n, v_2) \in \vals_2)$ or
$(({id}_1, n, v_1) \notin \vals_1) \wedge (({id}_1, n, v_2) \notin \vals_2)$.
Besides, there is no integer overflow in both evaluations.
The lemma holds.

\item $lval = {id}_1[{id}_2]$

By definition, $\text{Err}(e) = \text{Idx}(e) = \{{id}_2\}$.
By assumption, $\vals_1({id}_2) = \vals_2({id}_2) = n$ or
$\vals_1({id}_2) = \vals_2({id}_2) = n$.
By similar argument to the case $lval = id[n]$, the lemma holds.
\end{enumerate}

\item $e = ``id == l"$;

By definition, $\text{Err}(e) = \emptyset$.
W.l.o.g, $id$ is a global variable.
Let $\vals_1(id) = l_1, \vals_2(id) = l_2$.
W.l.o.g., $l_1 = l$ and $l_2 \neq l$, by rule Eq-T,
$(l_1==l, m(\vals)) -> (1, m)$ and, by rule Eq-F,
$(l_2==l, m(\vals)) -> (0, m)$.
Besides, there is no integer overflow in both evaluations.
The lemma holds.

\item $e = \text{other}$;

By definition, $\text{Err}(e) = \text{Err}(e)$.
By assumption, $\forall x \in \text{Err}(e)\,:\, \vals_1(x) = \vals_2(x)$.
The lemma holds by the property of parameterized expression meaning function for ``other" expression.
\end{enumerate}
\end{proof}

With respect to Lemma~\ref{lmm:expEvalSameVal} and Lemma~\ref{lmm:expEvalSameTerm},
we extend semantic rule for expression evaluation as follows.
\begin{figure}[t!]

\begin{tabular} {l}
{\boxed{\mbox{$(r, {\state}) -> (r', {\state}')$}}}
\end{tabular}

\begin{center}
\scriptsize{
\begin{tabular}{l}
$\mathcal{E}': e -> \vals -> (v_\text{error} \times \{0, 1\})$ 
\\
\\

\inference[EEval']
{\mathfrak{f} = 0}
{(e, \state(\mathfrak{f}, \vals)) -> (\mathcal{E}'\llbracket e\rrbracket\vals, \state)}
\\
\\
\hline
\end{tabular}
}
\end{center}
\caption{An extended SOS rule for expressions}\label{fig:sosrulesExprExt}
\end{figure}

\section{Properties of remaining execution}
We assume that crash flag $\mathfrak{f}$ = 0 in given execution state $m(\mathfrak{f})$.
\begin{lemma}\label{lmm:oneStepSeqExec}
$({S_1}, m) -> ({S_1'}, m') => ({S_1;S_2}, m) -> ({S_1';S_2}, m')$.
\end{lemma}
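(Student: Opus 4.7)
The plan is to prove this lemma by appealing directly to the contextual semantic rule in Figure~\ref{fig:semctxt}, which allows reduction under any evaluation context $\mathbb{E}$. The key observation is that the grammar of evaluation contexts already includes the production $\mathbb{E} ::= \ldots \mid \mathbb{E};S$, so sequential composition with a trailing $S_2$ is itself an evaluation context. This means that any reduction that can happen inside $S_1$ can equally happen inside $S_1;S_2$, since the surrounding $;S_2$ simply extends the evaluation context.

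First, I would invert the derivation $(S_1, m) \to (S_1', m')$. By the contextual rule, there exist an evaluation context $\mathbb{E}_0$, a redex $r$, and a contractum $r'$ such that $S_1 = \mathbb{E}_0[r]$, $S_1' = \mathbb{E}_0[r']$, and $(r, m) \to (r', m')$ by one of the axiomatic (non-contextual) rules in Figures~\ref{fig:sosrulesExpr}, \ref{fig:basicrules}, or~\ref{fig:iorules}. Then I would observe that $\mathbb{E}_1 \;\triangleq\; \mathbb{E}_0;S_2$ is again an evaluation context (by the $\mathbb{E};S$ production), and $\mathbb{E}_1[r] = S_1;S_2$ while $\mathbb{E}_1[r'] = S_1';S_2$. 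A single application of the contextual rule with $\mathbb{E}_1$ and the already-established axiomatic step $(r,m) \to (r',m')$ then yields the desired $(S_1;S_2, m) \to (S_1';S_2, m')$.

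The only potentially delicate point is the treatment of the rules that are not naturally framed as contextual reductions, in particular the Seq rule $(\text{skip};S, m) \to (S, m)$ and the Crash rule $(s, m(\mathfrak{f})) \to (s, m)$ with $\mathfrak{f}=1$. Under the standing assumption that $\mathfrak{f} = 0$ preceding the lemma, the Crash rule case is vacuous. The Seq rule case requires noticing that if $S_1 = \text{skip};S$ steps to $S_1' = S$, then $S_1;S_2 = (\text{skip};S);S_2$ must step to $S;S_2 = S_1';S_2$; this is where the implicit flattening/associativity of the statement sequence notation $s_1;\ldots;s_k$ is used, so I would state explicitly that sequential composition is treated associatively (which is consistent with how the Seq rule and the contextual rule $\mathbb{E};S$ are used throughout the paper). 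I expect this associativity bookkeeping, rather than the contextual argument itself, to be the main obstacle, since the paper's surface syntax for statement sequences is flat while the operational rules manipulate them as if right-associated.
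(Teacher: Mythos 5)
Your proof is correct, but it takes a genuinely different route from the paper's. The paper proves this lemma by structural induction on the abstract syntax of $S_1$, enumerating the cases skip, assignment, input, output, If, and while (each with subcases for normal evaluation versus crash of the embedded expression), and in each case applying the relevant axiomatic rule followed by one explicit use of the contextual rule. You instead give a single uniform argument: invert the step $(S_1,m) \to (S_1',m')$ into an evaluation context $\mathbb{E}_0$ and an axiomatic redex step, observe that $\mathbb{E}_0;S_2$ is again an evaluation context by the $\mathbb{E};S$ production, and re-apply the contextual rule once. This buys you brevity and robustness (adding a new statement form to the language would not change your proof, whereas the paper's case analysis would grow), at the cost of two pieces of implicit infrastructure that you correctly flag but should state as small auxiliary facts: (i) the inversion itself, i.e.\ that every derivable step factors as a single evaluation context wrapped around an axiomatic step, which in turn needs closure of evaluation contexts under composition; and (ii) associativity/flattening of the statement-sequence syntax, so that $(\mathbb{E}_0[r]);S_2$ and $\mathbb{E}_0[r];S_2$ denote the same program, which is also what makes the Seq case go through. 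Your dismissal of the Crash case is justified by the standing assumption $\mathfrak{f}=0$ stated immediately before the lemma in the appendix, and matches the paper's treatment.
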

\begin{proof}
The proof is by structural induction on abstract syntax of $S_1$.

{\bf Case 1}. $S_1$ = ``skip".

By rule Seq,
$({skip;S_2}, m) -> ({S_2}, m)$ where $m = m'$.

{\bf Case 2}.
$S_1 = \; ``id := e"$.

There are two subcases.

\hspace{0.5ex} {\bf Case 2.1}.
$(e, m) ->* (v, m)$ for some value $v$.

By rule Assign,

$(id := v, m) -> (\text{skip}, m(\vals[v/x]))$.

Then, by contextual (semantic) rule,

$(id := v;S_2, m) -> (\text{skip};S_2, m(\vals[v/x]))$.

\hspace{0.5ex} {\bf Case 2.2}.
$(e, m) ->* (e', m(1/\mathfrak{f}))$ for some expression $e'$.

Then, by rule crash,

$(id := e', m(1/\mathfrak{f})) -> (id := e', m(1/\mathfrak{f}))$.

Then, by contextual rule,

$(id := e';S_2, m(1/\mathfrak{f})) -> (id := e';S_2, m(1/\mathfrak{f}))$.

{\bf Case 3}.
$S_1$= ``output $e$"

{\bf Case 4}.
$S_1$= ``input $id$"

By similar argument in Case 2, the lemma holds for case 3 and 4.

{\bf Case 5}.
$S_1$= ``If $(e)$ then \{$S_t$\} else \{$S_f$\}".

\hspace{0.5ex} {\bf Case 5.1}.
W.l.o.g., expression $e$ in predicate of $S_1$ evaluates to nonzero in state $m$, written $(e, m) ->* (0, m)$.

By rule If-T,
$({\text{If }(0) \text{ then } \{S_t\} \text{ else }\{S_f\}}, m) -> ({S_t}, m)$.

By contextual (semantic) rule,

$({\text{If }(0) \text{ then } \{S_t\} \text{ else }\{S_f\};S_2}, m) -> ({S_t;S_2}, m)$.

\hspace{0.5ex} {\bf Case 5.2}.
Evaluation of expression $e$ in predicate of $S_1$ crashes, written $(e, m) ->* (e', m(1/\mathfrak{f}))$.

Then, by rule crash,

$({\text{If }(e') \text{ then } \{S_t\} \text{ else }\{S_f\}}, m(1/\mathfrak{f})) ->$

$({\text{If }(e') \text{ then } \{S_t\} \text{ else }\{S_f\}}, m(1/\mathfrak{f}))$.

Then, by contextual rule,

$({\text{If }(e') \text{ then } \{S_t\} \text{ else }\{S_f\}};S_2, m(1/\mathfrak{f})) -> $

$({\text{If }(e') \text{ then } \{S_t\} \text{ else }\{S_f\}};S_2, m(1/\mathfrak{f}))$.

{\bf Case 6}.
$S_1$ = ``$\text{while}_{\langle n\rangle} \; (e) \; \{S\}$".

\hspace{0.5ex}{\bf Case 6.1}
When expression $e$ in predicate of $S_1$ evaluates to nonzero value, written $(e, m) ->* (v, m)$ for some $v \neq 0$,
then, by rule Wh-T,

$(\text{while}_{\langle n\rangle} \; (e) \{S\}, m) -> (S; \text{while}_{\langle n\rangle} \; (e) \{S\}, m(m_c[(k+1)/n]))$ for some nonnegative integer $k$.

Then, by contextual rule,

$({\text{while}_{\langle n\rangle} \; (e) \; \{S\};S_2}, m) ->$

$({S; \text{while }_{\langle n\rangle} \; (e) \; \{S\};S_2}, m(m_c[(k+1)/n]))$.

\hspace{0.5ex}{\bf Case 6.2}
When expression $e$ in predicate of $S_1$ evaluates to zero, written $(e, m) ->* (0, m)$,
then, by rule Wh-F,

$(\text{while}_{\langle n\rangle} \;(e) \{S\}, m) -> (\text{skip}, m(m_c[0/n]))$.

By contextual rule,

$({\text{while}_{\langle n\rangle} \; (e) \; \{S\};S_2}, m) -> (\text{skip};{S_2}, m(m_c[0/n]))$.

\hspace{0.5ex}{\bf Case 6.3}
Evaluation of expression $e$ in predicate of $S_1$ crashes, written $(e, m) ->* (e', m)$.

By rule crash,

$({\text{while}_{\langle n\rangle} \;(e') \{S\}}, m(1/\mathfrak{f})) ->
 ({\text{while}_{\langle n\rangle} \;(e') \{S\}}, m(1/\mathfrak{f}))$.

Then, by contextual rule,

$({\text{while}_{\langle n\rangle} \;(e') \{S\}};S_2, m(1/\mathfrak{f})) -> $

$ ({\text{while}_{\langle n\rangle} \;(e') \{S\}};S_2, m(1/\mathfrak{f}))$.
\end{proof}

\begin{lemma}\label{lmm:multiStepSeqExec}
$(S_1, m) ->* (S_1', m') => ({S_1;S_2}, m) ->* (S_1';S_2, m')$.
\end{lemma}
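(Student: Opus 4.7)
The plan is to prove this by induction on the number $k$ of execution steps witnessing $(S_1, m) \xrightarrow{k} (S_1', m')$, leveraging the single-step version that has just been established as Lemma~\ref{lmm:oneStepSeqExec}. The multi-step arrow $\to^{*}$ is, by the definitions in Section~\ref{sectionDefs}, the finite iteration of $\to$, so an induction on step count is the natural handle.

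For the base case $k = 0$, we have $S_1' = S_1$ and $m' = m$, and the claim $(S_1;S_2, m) \to^{*} (S_1;S_2, m)$ holds trivially by the reflexivity of $\to^{*}$ (zero steps). For the inductive step, suppose the result holds for $k$, and consider an execution $(S_1, m) \xrightarrow{k+1} (S_1', m')$. Decompose it as $(S_1, m) \xrightarrow{k} (S_1'', m'') \to (S_1', m')$ for some intermediate configuration. By the induction hypothesis applied to the first $k$ steps, $(S_1;S_2, m) \to^{*} (S_1'';S_2, m'')$. By Lemma~\ref{lmm:oneStepSeqExec} applied to the final single step $(S_1'', m'') \to (S_1', m')$, we get $(S_1'';S_2, m'') \to (S_1';S_2, m')$. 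Concatenating these two derivations yields $(S_1;S_2, m) \to^{*} (S_1';S_2, m')$, as required.

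I do not expect any real obstacle here, because Lemma~\ref{lmm:oneStepSeqExec} has already done the semantic heavy lifting: it verified that each syntactic form of $S_1$ (assignments, I/O, If, while, skip, crash) composes correctly under the contextual rule when suffixed by $;S_2$. The present lemma is essentially a mechanical lifting of that single-step fact to the reflexive-transitive closure. The one detail worth being careful about is making sure the induction is on the step count rather than on the syntactic form of $S_1$, since after one step the residual program $S_1'$ can have arbitrary shape (in particular, the execution may pass through $\mathrm{skip}$ configurations, which must be handled by $\to^{*}$ allowing zero further steps, and through compound statements such as $S;\mathrm{while}_{\langle n\rangle}(e)\{S\}$ arising from loop unfolding), and the induction on step count accommodates all of these uniformly.
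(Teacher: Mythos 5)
Your proof is correct and follows essentially the same route as the paper: induction on the number of execution steps, with Lemma~\ref{lmm:oneStepSeqExec} handling the single-step composition under the $;S_2$ context. The only (immaterial) difference is that you peel the last step off the $(k{+}1)$-step execution while the paper peels off the first step; both decompositions work since the induction hypothesis and the one-step lemma apply to arbitrary intermediate configurations.
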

\begin{proof}
By induction on number of steps $k$ in execution
$({S_1}, m)$ {\kStepArrow [k] } $({S_1'}, m')$.

{\bf Base case}. $k$ = 0 and 1.

By definition,

$({S_1}, m)$ {\kStepArrow [0] } $({S_1}, m)$, and $({S_1;S_2}, m)$ {\kStepArrow [0] } $({S_1;S_2}, m)$.

By lemma~\ref{lmm:oneStepSeqExec},

$({S_1}, m) -> ({S_1'}, m') => ({S_1;S_2}, m) -> ({S_1';S_2}, m')$.

{\bf Induction step}.

The induction hypothesis IH is that, for $k\geq1$,

$({S_1}, m)$ {\kStepArrow [k] } $({S_1'}, m') => ({S_1;S_2}, m)$ {\kStepArrow [k] } $({S_1';S_2}, m')$.

Then we show that,

$({S_1}, m)$ {\kStepArrow [k+1] } $({S_1'}, m') => ({S_1;S_2}, m)$ {\kStepArrow [k+1] } $({S_1';S_2}, m')$.

We decompose the k+1 step execution into

$({S_1}, m) -> ({S_1''}, m'')$ {\kStepArrow [k] } $({S_1'}, m')$.

By lemma~\ref{lmm:oneStepSeqExec},

$({S_1;S_2}, m) -> ({S_1'';S_2}, m'')$.

Next, by IH,

$({S_1'';S_2}, m'')$ {\kStepArrow [k] } $({S_1';S_2}, m')$.
\end{proof}

\begin{corollary}\label{coro:termSeq}
$({S_1}, m) ->* (\text{skip}, m') => ({S_1;S_2}, m) ->* ({S_2}, m')$.
\end{corollary}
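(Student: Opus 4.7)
The plan is to derive this corollary as a direct consequence of Lemma~\ref{lmm:multiStepSeqExec} combined with one application of the Seq rule. The corollary is essentially the specialization of Lemma~\ref{lmm:multiStepSeqExec} to the case where $S_1$ terminates, followed by cleanup of the leading \text{skip}.

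First, I would instantiate Lemma~\ref{lmm:multiStepSeqExec} with $S_1' = \text{skip}$. From the hypothesis $(S_1, m) ->* (\text{skip}, m')$, the lemma immediately yields
\[
(S_1;S_2, m) ->* (\text{skip};S_2, m').
\]
Second, I would append one additional reduction step. By the Seq rule (Figure~\ref{fig:basicrules}), provided the crash flag in $m'$ is zero, we have
\[
(\text{skip};S_2, m') -> (S_2, m').
\]
Concatenating this single step with the multi-step execution above gives $(S_1;S_2, m) ->* (S_2, m')$, which is the desired conclusion.

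The only subtlety, and the one place that needs a brief remark, is the side condition $\mathfrak{f} = 0$ in the Seq rule. This is fine under the standing convention (used throughout this subsection and stated at the start of the ``Properties of remaining execution'' appendix) that given execution states have crash flag zero, and moreover that the state reached by $(S_1, m) ->* (\text{skip}, m')$ also has crash flag zero, since reducing to \text{skip} requires that no Crash rule has been triggered along the way (otherwise the configuration would remain stuck at some non-\text{skip} statement). So no real obstacle arises; the proof is two lines.
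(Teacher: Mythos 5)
Your proof is correct and follows exactly the paper's own argument: instantiate Lemma~\ref{lmm:multiStepSeqExec} with $S_1' = \text{skip}$ and then apply the Seq rule once. The extra remark about the crash-flag side condition is a sensible clarification the paper leaves implicit, but it does not change the route.
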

\begin{proof}
By lemma~\ref{lmm:multiStepSeqExec},

$({S_1}, m) ->* (\text{skip}, m') => ({S_1;S_2}, m) ->* ({\text{skip};S_2}, m')$.

Then, by rule Seq,

$({\text{skip};S_2}, m') -> ({S_2}, m')$.
\end{proof}

\begin{lemma}\label{lmm:oneStepStmtExclusion}
If one statement $s$ is not in $S$, then, after one step of execution $({S}, m) -> ({S'}, m')$, $s$ is not in the $S'$,
 $(s\notin S) \wedge (({S}, m) -> ({S'}, m')) => (s \notin S')$.
\end{lemma}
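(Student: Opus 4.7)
The plan is to prove the lemma by structural induction on $S$, equivalently by case analysis on which small-step rule produced the transition $(S,m) \to (S',m')$. In each case, I will show that every statement appearing in $S'$ already appeared in $S$ according to Definition~\ref{def:stmtInclusion}, so if $s$ does not occur in $S$, then $s$ does not occur in $S'$ either. The contextual rule reduces the general case to the leftmost redex $r$, so it suffices to analyze the head-reduction cases.

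First I would handle the base cases where $S$ is a simple statement (skip, assignment, input, or output). Here the only possible $S'$ values are $\text{skip}$, or the same statement again (in a crash configuration via rule Crash or As-Err/In-Err), or simply the original $S$ with a subexpression reduced by EEval'. In each situation, the set of statements appearing in $S'$ is a subset of those in $S$, so $s \notin S$ implies $s \notin S'$. The Crash rule in particular is handled trivially because $S' = S$.

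Next I would handle the two compound-statement cases. For $S = \text{If}(e)\;\text{then}\{S_t\}\;\text{else}\{S_f\}$, the rules If-T and If-F yield $S' \in \{S_t, S_f\}$; by Definition~\ref{def:stmtInclusion} any $s \in S_t$ or $s \in S_f$ satisfies $s \in S$, so the contrapositive gives $s \notin S'$. For $S = \text{while}_{\langle n\rangle}(e)\{S''\}$, rule Wh-F yields $S' = \text{skip}$ (trivial), while rule Wh-T yields $S' = S''; \text{while}_{\langle n\rangle}(e)\{S''\}$. This is the subtle case, but note that both constituents of $S'$ are already contained in $S$: the right operand is exactly $S$, and every statement of $S''$ is a statement of $S$ by clause~3 of Definition~\ref{def:stmtInclusion}. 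Hence $s \notin S$ forces $s$ to differ from the while itself and from every statement in $S''$, so $s \notin S'$.

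Finally, for a sequence $S = S_1; s_k$ (or equivalently handled by the contextual rule with evaluation context $\mathbb{E};S$), the step must take place in the head; rule Seq gives $S' = S_2$ when $S_1 = \text{skip}$, and otherwise the inductive hypothesis applied to the head yields $s \notin S_1'$, whence $s \notin S_1'; s_k = S'$. The main subtlety of the whole argument is the loop unrolling step, but it goes through cleanly precisely because the newly generated copy of the while statement is literally $S$ itself, which is excluded by the hypothesis $s \notin S$; no genuinely new syntactic material is ever produced by a single small-step transition.
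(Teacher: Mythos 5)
Your proposal is correct and takes essentially the same approach as the paper, which proves this lemma simply ``by induction on abstract syntax of $S$'' (and carries out exactly the case analysis you describe, including the loop-unrolling case, in its detailed proof of the closely related Lemma~\ref{lmm:stmtExclExecExt}). Your observation that the unrolled $S''; \text{while}_{\langle n\rangle}(e)\{S''\}$ introduces no syntactic material beyond what Definition~\ref{def:stmtInclusion} already attributes to $S$ is precisely the key point.
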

\begin{proof}
By induction on abstract syntax of $S$.
\end{proof}

\begin{lemma}\label{lmm:multiStepStmtExclusion}
If one statement $s$ is not in $S$, then, after the execution $({S}, m) ->* ({S'}, m')$, $s$ is not in the $S'$,
 $(s\notin S) \wedge (({S}, m) ->* ({S'}, m')) => (s \notin S')$.
\end{lemma}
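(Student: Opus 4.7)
The proof plan is a straightforward induction on the number of execution steps $k$ in $(S, m)$ {\kStepArrow [k] } $(S', m')$, reducing the multi-step claim to the single-step statement already established as Lemma~\ref{lmm:oneStepStmtExclusion}. This mirrors exactly the pattern used in the proof of Lemma~\ref{lmm:multiStepSeqExec}, where the corresponding single-step lemma (Lemma~\ref{lmm:oneStepSeqExec}) was lifted to a multi-step version by induction on the length of the execution.

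For the base case, I would handle $k = 0$ (so that $S = S'$ and $m = m'$, and $s \notin S'$ follows immediately from the hypothesis $s \notin S$) and $k = 1$ (which is exactly Lemma~\ref{lmm:oneStepStmtExclusion} applied directly). For the inductive step, assume the statement holds for all executions of length $k \geq 1$ and consider an execution $(S, m)$ {\kStepArrow [k+1] } $(S', m')$. I would decompose this as $(S, m) \to (S'', m'')$ {\kStepArrow [k] } $(S', m')$ for some intermediate configuration $(S'', m'')$. Applying Lemma~\ref{lmm:oneStepStmtExclusion} to the first step with the assumption $s \notin S$ yields $s \notin S''$. Then the induction hypothesis applied to the remaining $k$ steps starting from $(S'', m'')$ with the derived hypothesis $s \notin S''$ gives $s \notin S'$, completing the step.

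There is no real obstacle here; the lemma is essentially a transitive closure of the one-step property, and all the genuine syntactic work sits inside Lemma~\ref{lmm:oneStepStmtExclusion}, which is proved separately by induction on the abstract syntax of $S$. The only minor care needed is in presenting the decomposition of the $(k+1)$-step execution cleanly, ensuring that the intermediate configuration $(S'', m'')$ is introduced correctly so that both the single-step lemma and the induction hypothesis apply with matching hypotheses on statement inclusion.
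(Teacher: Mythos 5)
Your proposal is correct and matches the paper's proof, which is likewise a straightforward induction on the number of execution steps $k$, reducing to the one-step Lemma~\ref{lmm:oneStepStmtExclusion}. Your write-up simply makes explicit the base and inductive cases that the paper leaves implicit.
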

\begin{proof}
By induction on the number $k$ of the steps in the execution $({S}, m)$ {\kStepArrow [k] } $({S'}, m')$.
\end{proof}

\begin{lemma}\label{lmm:defExclusion}
If a variable $x$ is not defined in a statement sequence $S$, then, after one step execution of $S$, the value of $x$ is not redefined,
$(x \notin \text{Def}(S)) \land ((S, m(\vals)) -> ({S'}, m'(\vals'))) =>
(x \notin \text{Def}({S'})) \land (\vals'(x) = \vals(x))$
\end{lemma}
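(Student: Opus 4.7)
The plan is to prove Lemma~\ref{lmm:defExclusion} by case analysis on the SOS rule used to derive the single-step reduction $(S, m(\vals)) \to (S', m'(\vals'))$, which amounts to a structural induction on $S$ combined with an inner induction on the evaluation context $\mathbb{E}$ from Figure~\ref{fig:semctxt}. For each case I would establish two separate statements: (i) the syntactic containment $\text{Def}(S') \subseteq \text{Def}(S)$, so that $x \notin \text{Def}(S)$ implies $x \notin \text{Def}(S')$; and (ii) the store invariant $\vals'(x) = \vals(x)$ whenever $x \notin \text{Def}(S)$.

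First I would handle the ``value-store preserving'' rules, where (ii) is immediate because $\vals' = \vals$: all expression-level rules (Var, Arr-1, Arr-2, Eq-T, Eq-F, EEval, EOflow-1/2, ECrash — these at most touch the overflow or crash flag, not $\vals$), the control rules If-T, If-F, Wh-T, Wh-F, Seq, the Crash rule (which leaves $s$ unchanged), and the failing-assignment rules As-Err1, As-Err2, As-Err3 together with In-3, In-5, In-6 (which merely set $\mathfrak{f}$). For each of these (i) is a direct syntactic check against Definition~\ref{defvars}: e.g.\ If-T/If-F yield $S_t$ or $S_f$, and $\text{Def}(S_t) \cup \text{Def}(S_f) = \text{Def}(\text{If}(e)\ldots)$; Wh-T produces $S'';\text{while}_{\langle n\rangle}(e)\{S''\}$ whose Def equals $\text{Def}(S'') = \text{Def}(``\text{while}\ldots")$; Wh-F and Seq give strictly smaller residuals.

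Next I would handle the ``value-store updating'' rules: As-Scl and As-Arr update $\vals$ at $id$ or $(id, v_1)$, both of which lie in $\text{Def}(lval := e) = \{id\}$ (Definition~\ref{defvars}); In-1, In-2, In-4 update exactly $id$, ${id}_I$, ${id}_{IO}$, all contained in $\text{Def}(\text{input }id) = \{{id}_I, {id}_{IO}, id\}$; and Out-1, Out-2, Out-3 update only ${id}_{IO} \in \text{Def}(\text{output }e)$. In every such case $x \notin \text{Def}(S)$ rules out any overwrite, giving $\vals'(x) = \vals(x)$; and since $S'$ is $\text{skip}$ in each of these cases, $\text{Def}(S') = \emptyset$ trivially satisfies (i).

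Finally, for the contextual rule of Figure~\ref{fig:semctxt}, I would do a straightforward induction on the evaluation context $\mathbb{E}$. The step case relies on the observation that plugging a reduced redex back into $\mathbb{E}$ only shrinks or preserves the set of defined variables (each production of $\mathbb{E}$ satisfies $\text{Def}(\mathbb{E}[r']) \subseteq \text{Def}(\mathbb{E}[r])$), and the inner IH hands us both the store preservation and the Def-containment for the redex. The main obstacle, though a minor one, is the Wh-T case, where the residual $S'';\text{while}_{\langle n\rangle}(e)\{S''\}$ is syntactically larger than the original while statement; the point is that by Definition~\ref{defvars} one has $\text{Def}(S'';\text{while}_{\langle n\rangle}(e)\{S''\}) = \text{Def}(S'')$, so Def is in fact preserved rather than enlarged. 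Once this observation is recorded the remaining cases are entirely mechanical.
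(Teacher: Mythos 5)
Your proposal is correct and follows essentially the same route as the paper's proof: a case analysis over the single-step reduction (the paper organizes it as a structural induction on $S$, you organize it by SOS rule plus an induction on the evaluation context, which is the same argument) establishing both that $\text{Def}(S')\subseteq\text{Def}(S)$ and that the store is only updated at locations belonging to $\text{Def}(S)$. The one point you single out — that Wh-T enlarges the term but not its $\text{Def}$ set — is exactly the observation the paper relies on in its while case.
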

\begin{proof}
By structural induction on abstract syntax of statement sequence $S$, we show the lemma holds.

{\bf Case 1}.
$S$ = ``$id := e$". 

By definition, Def($S$) = \{$id$\}.
Then $id \neq x$ by condition that $x \notin \text{ Def }(S)$.

Then there are two subcases.

\hspace{0.5ex} {\bf Case 1.1}
Expression $e$ evaluates to some value $v$, written $(e, m) ->* (v, m)$.

Then, by rule Assign,
$(S, m(\vals)) -> (\text{skip}, m(\vals[v/id]))$ where $m' = m(\vals[v/id])$.

Hence, $\vals'(x) = \vals(x)$. Besides, $x \notin \text{Def (skip)}$ by definition.

\hspace{0.5ex} {\bf Case 1.2}
Evaluation of expression $e$ crashes, written $(e, m) ->* (e', m(1/\mathfrak{f}))$.

Then, by rule crash,

$(id := e', m(1/\mathfrak{f},\vals)) -> (id := e', m(1/\mathfrak{f}, \vals))$ where $m' = m(1/\mathfrak{f}, \vals)$.

Hence, $\vals'(x) = \vals(x)$. Besides, $x \notin \text{Def}(id := e')$ by definition.

{\bf Case 2}.
$S$ = ``$\text{ output }e$".

{\bf Case 3}.
$S$ = ``$\text{ input }id$".

By similar argument in case 1.

{\bf Case 4}.
$S$= ``If $(e)$ then \{$S_t$\} else \{$S_f$\}".

Def $(S) = \text{ Def }(S_f) \cup \text{ Def }(S_t)$ by definition.
Then $x \notin \text{ Def }(S_f) \cup \text{ Def }(S_t)$.

There are two subcases.

\hspace{0.5ex} {\bf Case 4.1}
W.l.o.g., expression $e$ in predicate of $S$ evaluates to nonzero value, written $(e, m) ->* (v, m)$ where $v\neq0$.

Then by rule If-T,
$(\text{If }(v) \text{ then }\{S_t\} \text{ else }\{S_f\}, m(\vals)) -> (S_t, m(\vals))$ where $m' = m$.

Therefore, $\vals'(x) = \vals(x)$.
By argument above, $x \notin \text{ Def }(S_t)$.

\hspace{0.5ex} {\bf Case 4.2}
Evaluation of expression $e$ in predicate of $S$ crashes, written $(e, m) ->* (e', m(1/\mathfrak{f}))$.

Then, by rule crash,

$(\text{If }(e') \text{ then }\{S_t\} \text{ else }\{S_f\}, m(1/\mathfrak{f},\vals)) -> $

$(\text{If }(e') \text{ then }\{S_t\} \text{ else }\{S_f\}, m(1/\mathfrak{f},\vals))$ where $m' = m(1/\mathfrak{f},\vals)$.

Therefore, $\vals'(x) = \vals(x)$.

Besides, $x \notin \text{ Def }(\text{If }(e') \text{ then }\{S_t\} \text{ else }\{S_f\})$ = $\text{ Def }(S_f) \cup \text{ Def }(S_t)$.

{\bf Case 5}.
$S$ = ``$\text{while}_{\langle n\rangle} \; (e) \; \{S'\}$".

Def $(S)$ = Def $(S')$ by definition.
Then $x \notin \text{ Def }(S')$ by condition $x \notin \text{ Def }(S)$.

There are subcases.

\hspace{0.5ex} {\bf Case 5.1}
Expression $e$ evaluates to nonzero value, written $(e, m) ->* (v, m)$ where $v\neq0$.

By rule Wh-T,
$(\text{while}_{\langle n\rangle} \; (v) \; \{S'\}, m(\vals)) -> $

$(S';\text{while}_{\langle n\rangle} \; (e) \; \{S'\}, m(m_c[(k+1)/n]),\vals)$ for some nonnegative integer $k$.

Let $m' = m(m_c[(k+1)/n], \vals)$.
Then $\vals'(x)$ = $\vals(x)$.

Besides, $x \notin \text{ Def }(S';\text{while}_{\langle n\rangle} \; (e) \; \{S'\})$ = $\text{Def }(S') \cup \text{Def }(S)$,
because $x \notin \text{Def }(S')$.

\hspace{0.5ex} {\bf Case 5.2}
Expression $e$ evaluates to zero in state $m$, written $(e, m) ->* (0, m)$.

By rule Wh-F,

$(\text{while}_{\langle n\rangle} \; (0) \; \{S'\}, m(\vals)) -> (\text{skip}, m(m_c[0/n],\vals))$ where $m' = m(m_c[0/n],\vals)$.

Therefore, $\vals'(x) = \vals(x)$.
Besides, $x \notin$ Def (skip).

\hspace{0.5ex} {\bf Case 5.3}
Evaluation of expression $e$ crashes, written $(e, m) ->* (e', m(1/\mathfrak{f}))$.

By rule crash

$(\text{while}_{\langle n\rangle} \; (e') \; \{S'\}, m(1/\mathfrak{f}, \vals)) -> $

$(\text{while}_{\langle n\rangle} \; (e') \; \{S'\}, m(1/\mathfrak{f}, \vals))$ where $m' = m(1/\mathfrak{f}, \vals)$.

Therefore, $\vals'(x) = \vals(x)$.
Besides, $x \notin \text{Def} (\text{while}_{\langle n\rangle} \; (e') \; \{S'\})$  = Def ($S'$) by definition.

{\bf Case 6}.
$S$ = $S_1;S_2$.

By argument in Case 1 to 5, after one step execution
$((S_1, m(\vals)) -> ({S'}, m'(\vals)))$,
$\vals'(x) = \vals(x)$.

By contextual rule, the lemma holds.

%
%
%
%
%
%
\end{proof}

\begin{corollary}\label{coro:defExclusion}
If a variable $x$ is not defined in a statement sequence $S$, then, after an execution of $S$, the value of $x$ is not redefined,
$(x \notin \text{Def}(S) \land (S, m(\vals)) ->* (S', m'(\vals')) =>
(x \notin \text{Def}(S')) \land \vals'(x) = \vals(x))$.
\end{corollary}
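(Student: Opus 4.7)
The plan is to lift the single-step Lemma~\ref{lmm:defExclusion} to a multi-step statement by induction on the length $k$ of the reduction $(S, m(\vals))$ {\kStepArrow [k] } $(S', m'(\vals'))$. This mirrors the way Lemma~\ref{lmm:multiStepSeqExec} was obtained from Lemma~\ref{lmm:oneStepSeqExec}, so the structure of the argument should follow that template directly.

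For the base case $k = 0$ we have $S' = S$, $m' = m$, and $\vals' = \vals$, so both conjuncts hold trivially: $x \notin \text{Def}(S') = \text{Def}(S)$ by hypothesis, and $\vals'(x) = \vals(x)$. For the inductive step, assume the corollary for all $k$-step reductions and consider a $(k+1)$-step reduction, which I would decompose as
$$(S, m(\vals)) -> (S'', m''(\vals'')) \text{ \kStepArrow [k] } (S', m'(\vals')).$$
Applying Lemma~\ref{lmm:defExclusion} to the first step yields both $x \notin \text{Def}(S'')$ and $\vals''(x) = \vals(x)$. The induction hypothesis applied to the remaining $k$-step reduction from $(S'', m''(\vals''))$ then gives $x \notin \text{Def}(S')$ and $\vals'(x) = \vals''(x)$. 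Chaining the two value-store equalities produces $\vals'(x) = \vals(x)$, completing the step.

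The only mild obstacle is the bookkeeping around crashed configurations: once $\mathfrak{f} = 1$, the Crash rule applies and neither $\text{Def}$ nor the value store changes, so the single-step lemma still holds vacuously in that regime. As long as Lemma~\ref{lmm:defExclusion} is already stated uniformly (covering both the $\mathfrak{f} = 0$ and $\mathfrak{f} = 1$ cases through the Crash rule), the induction goes through without any side conditions, and no separate case analysis on abstract syntax is required here — all the syntactic work has already been done inside Lemma~\ref{lmm:defExclusion}. Thus the corollary follows immediately by this short induction.
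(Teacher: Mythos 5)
Your proof is correct and follows exactly the paper's approach: the paper also proves this corollary by induction on the number of steps $k$ in $(S, m)$ {\kStepArrow [k] } $(S', m')$, applying Lemma~\ref{lmm:defExclusion} at each step. Your more detailed write-out, including the chaining of $\vals'(x) = \vals''(x) = \vals(x)$ and the observation that the crash regime is already absorbed into the single-step lemma, is a faithful expansion of the paper's one-line argument.
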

\begin{proof}
Let $(S, m)$ {\kStepArrow [k] } $(S', m')$.
The proof is by induction on $k$ using lemma~\ref{lmm:defExclusion}.
\end{proof}

Based on Corollary~\ref{coro:defExclusion}, we extend the result to array variable elements.
\begin{corollary}\label{coro:arrayVarDefExclusionMultiStep}
If an element in an array variable $x[i]$ is not defined in a statement sequence $S$ in a program $P = EN; V; S_{entry}$, then, after an execution of $S$, the value of $x[i]$ is not redefined,
$(x \notin \text{Def}(S)) \wedge
 ((x, i, *) \in \vals) \wedge
 (S, m(\vals)) ->* (S', m'(\vals')) =>
 (x \notin \text{Def}(S')) \wedge \vals'(x, i) = \vals(x, i))$.
\end{corollary}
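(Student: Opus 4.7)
The plan is to mimic the structure of Corollary~\ref{coro:defExclusion}, first proving a one-step version and then promoting to multi-step by induction. The core observation is that, by the definition of $\text{Def}$ for array assignments (e.g., $\text{Def}(id[v_1] := v_2) = \{id\}$ at the level of array-valued identifiers in $V$), the set $\text{Def}(S)$ already accounts for \emph{every} element of an array variable $id$, so ruling out $id \in \text{Def}(S)$ is strictly stronger than ruling out any particular element $id[i]$ being modified.

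First I would establish a one-step lemma: if $x \notin \text{Def}(S)$, $(x,i,*) \in \vals$, and $(S, m(\vals)) \to (S', m'(\vals'))$, then $x \notin \text{Def}(S')$ and $\vals'(x,i) = \vals(x,i)$. This is a case analysis on which SOS rule fires. The only rules that mutate the value store on an array element are As-Arr and the contextual instantiations leading to it, plus the input rules In-1/In-2/In-4 (which only touch scalar identifiers and $\{id_I, id_{IO}\}$). For As-Arr applied to $y[v_1] := v_2$, by definition $y \in \text{Def}(S)$, hence $y \neq x$, so the update $\vals[v_2/(y,v_1)]$ does not touch $\vals(x,i)$. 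For the scalar/input/output cases no array-element entry changes. For the rules that do not modify $\vals$ at all (If-T, If-F, Wh-T, Wh-F, Seq, EEval', ECrash, EOflow, Crash, Eq-T, Eq-F, Var) the conclusion about $\vals$ is immediate.

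Next I would verify the syntactic invariant $x \notin \text{Def}(S')$. This is handled exactly as in Lemma~\ref{lmm:defExclusion} / Lemma~\ref{lmm:oneStepStmtExclusion}: the reducts $S'$ that arise are always of the form $S_t$, $S_f$, $S$, or $S;\text{while}_{\langle n\rangle}(e)\{S\}$ (in the Wh-T case) drawn from substatements of the original $S$, and $\text{Def}(\cdot)$ is monotone with respect to syntactic containment, so $\text{Def}(S') \subseteq \text{Def}(S)$, preserving $x \notin \text{Def}(S')$.

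Finally I would promote the one-step result to the multi-step statement by induction on the number $k$ of steps in $(S, m(\vals)) \xrightarrow{k} (S', m'(\vals'))$. The base case $k=0$ is trivial since $S' = S$, $\vals' = \vals$. For the inductive step, split the execution as $(S, m(\vals)) \to (S'', m''(\vals'')) \xrightarrow{k} (S', m'(\vals'))$; the one-step lemma gives $x \notin \text{Def}(S'')$ and $\vals''(x,i) = \vals(x,i)$, and the induction hypothesis applied at $(S'', m''(\vals''))$ yields $\vals'(x,i) = \vals''(x,i) = \vals(x,i)$ together with $x \notin \text{Def}(S')$. The main obstacle, as already in Lemma~\ref{lmm:defExclusion}, is simply dispatching all assignment/input variants of the As-Err and As-Arr families; there is no conceptual difficulty since our $\text{Def}$ is defined at the granularity of array identifiers rather than individual cells.
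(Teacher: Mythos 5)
Your proposal is correct and follows essentially the same route the paper intends: the paper gives no explicit proof for this corollary, merely remarking that it extends Corollary~\ref{coro:defExclusion} to array elements, and your argument (one-step case analysis on the SOS rules using the fact that $\text{Def}$ is defined at the granularity of array identifiers, then induction on the number of steps) is exactly the natural elaboration of that remark, mirroring the proofs of Lemma~\ref{lmm:defExclusion} and Corollary~\ref{coro:defExclusion}.
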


\begin{lemma}\label{lmm:loopCntRemainsSame}
If all of the following hold:
\begin{enumerate}
\item There is no loop of label $n$ in statements $S$, $``\text{while}_{\langle n\rangle}(e) \{S'\}" \notin S$;

\item The crash flag is not set, $\mathfrak{f} = 0$;

\item There is an entry $n$ in loop counter, $(n, *) \in \text{loop}_c$;

\item There is one step execution, $(S, m(\mathfrak{f}, \text{loop}_c)) -> (S', m'(\text{loop}_c'))$;
\end{enumerate}
then, $\text{loop}_c'(n) = \text{loop}_c(n)$.
\end{lemma}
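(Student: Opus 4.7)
The plan is to proceed by induction on the abstract syntax of $S$, or equivalently by case analysis on which SOS rule is applied in the single step $(S, m(\mathfrak{f}, \text{loop}_c)) \to (S', m'(\text{loop}_c'))$. The crucial observation is that, inspecting the SOS rules in Figures~\ref{fig:sosrulesExpr}, \ref{fig:basicrules}, and \ref{fig:iorules}, the only rules that modify the loop counter component of the state are Wh-T (which updates $\text{loop}_c$ at the label of the executed while statement to $k+1$) and Wh-F (which resets the counter at that label to $0$). Every other rule leaves $\text{loop}_c$ unchanged.

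First I would dispatch all the non-loop cases uniformly: whenever the one-step reduction arises from a rule other than Wh-T or Wh-F (including all expression-evaluation, assignment, input, output, If-T/If-F, Seq, and Crash rules), we have $\text{loop}_c' = \text{loop}_c$ directly from the rule, so in particular $\text{loop}_c'(n) = \text{loop}_c(n)$. For compound statements that do not themselves fire a while rule at the top level (e.g., sequences, If statements whose predicate is being evaluated, while statements whose predicate is being evaluated), the contextual rule in Figure~\ref{fig:semctxt} pushes the reduction to a subterm, and by the hypothesis $``\text{while}_{\langle n\rangle}(e)\{S''\}" \notin S$ together with the inductive hypothesis, the subterm reduction also preserves $\text{loop}_c(n)$.

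The interesting cases are Wh-T and Wh-F, where the redex is $\text{while}_{\langle n'\rangle}(v)\{S''\}$ for some loop label $n'$. By the assumption that no while statement labeled $n$ occurs in $S$, and by our assumption of unique loop labels across the program, this label $n'$ must be distinct from $n$. Both rules modify $\text{loop}_c$ only at the entry $n'$, so $\text{loop}_c'(n) = \text{loop}_c[k+1/n'](n) = \text{loop}_c(n)$ in the Wh-T case and $\text{loop}_c'(n) = \text{loop}_c[0/n'](n) = \text{loop}_c(n)$ in the Wh-F case.

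The main obstacle, such as it is, is bookkeeping: I must verify that the predicate ``$\text{while}_{\langle n\rangle}(e)\{S''\} \notin S$'' is preserved under the contextual-rule decomposition (so the inductive hypothesis applies to the reduced subterm), and I must use the unique-loop-label assumption stated in Section~\ref{sec:formallanguage} to conclude $n' \neq n$ in the while cases. Neither step is deep, but both rely on global assumptions of the language rather than on local reasoning about the rule being fired, so they deserve explicit mention in the proof.
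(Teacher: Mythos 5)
Your proof is correct and follows essentially the same route as the paper, which proves this lemma by induction on the abstract syntax of $S$ with a case analysis over the SOS rules (mirroring the proof of Lemma~\ref{lmm:defExclusion}): only Wh-T and Wh-F touch the loop counter, and hypothesis (1) rules out the label $n$ being the one modified. One small simplification: in the Wh-T/Wh-F cases you do not even need the global unique-label assumption, since the reduced while statement labeled $n'$ is itself in $S$, so hypothesis (1) alone forces $n' \neq n$.
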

\begin{proof}
The proof is by induction on abstract syntax of $S$, similar to that for lemma~\ref{lmm:defExclusion}.
\end{proof}

\begin{corollary}\label{coro:loopCntRemainsSame}
If all of the following hold:
\begin{enumerate}
\item There is no loop of label $n$ in statements $S$, $``\text{while}_{\langle n\rangle}(e) \{S'\}" \notin S$;

\item The crash flag is not set, $\mathfrak{f} = 0$;

\item There is an entry $n$ in loop counter, $(n, *) \in \text{loop}_c$;

\item There is multiple steps execution of stack depth $d = 0$, $(S, m(\mathfrak{f}, \text{loop}_c)) ->* (S', m'(\text{loop}_c'))$;
\end{enumerate}
then, $\text{loop}_c'(n) = \text{loop}_c(n)$.
\end{corollary}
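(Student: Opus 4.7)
The plan is to prove Corollary~\ref{coro:loopCntRemainsSame} by straightforward induction on the number of execution steps $k$ in $(S, m) \kStepArrow[k] (S', m')$, bootstrapping from the one-step Lemma~\ref{lmm:loopCntRemainsSame}. The base case $k=0$ is immediate since $S' = S$ and $m' = m$, so $\text{loop}_c'(n) = \text{loop}_c(n)$ trivially.

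For the inductive step, assuming the result holds for $k$ steps, I would decompose a $(k+1)$-step execution as
\[
(S, m(\mathfrak{f}, \text{loop}_c)) \to (S'', m''(\mathfrak{f}'', \text{loop}_c'')) \kStepArrow[k] (S', m'(\text{loop}_c')).
\]
The first step starts in a state with crash flag $0$, with no $\text{while}_{\langle n \rangle}$ statement in $S$, and with an entry for $n$ in $\text{loop}_c$, so Lemma~\ref{lmm:loopCntRemainsSame} gives $\text{loop}_c''(n) = \text{loop}_c(n)$. I would then verify that the three hypotheses needed to invoke the induction hypothesis on the remaining $k$-step fragment still hold at $(S'', m'')$: (i) $\text{while}_{\langle n \rangle}(e)\{S'\} \notin S''$ by Lemma~\ref{lmm:oneStepStmtExclusion}; (ii) the entry $(n,*) \in \text{loop}_c''$ since Lemma~\ref{lmm:loopCntRemainsSame} only asserts the value for $n$ is preserved and the one-step SOS rules never remove counter entries; and (iii) either $\mathfrak{f}'' = 0$, in which case the IH applies directly and yields $\text{loop}_c'(n) = \text{loop}_c''(n) = \text{loop}_c(n)$, or $\mathfrak{f}'' = 1$, in which case the IH is not directly applicable.

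The slightly awkward case is the crash branch $\mathfrak{f}'' = 1$, since the IH explicitly requires $\mathfrak{f} = 0$ at the start of its execution. Here I would argue separately, using the rule Crash: once the crash flag is set, every subsequent step has the form $(s, m(1/\mathfrak{f})) \to (s, m(1/\mathfrak{f}))$, so the value store, loop counter, and program text are all fixed. A short secondary induction (or simply inspection of the Crash rule) shows $\text{loop}_c' = \text{loop}_c''$, and combining with $\text{loop}_c''(n) = \text{loop}_c(n)$ closes this case.

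I expect this crash-branch handling to be the only subtle obstacle; otherwise the corollary is a routine lifting of the one-step lemma to multi-step executions, entirely analogous to how Corollary~\ref{coro:defExclusion} was lifted from Lemma~\ref{lmm:defExclusion} earlier in the paper.
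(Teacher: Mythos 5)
Your proof is correct and follows the same route the paper intends: the corollary is lifted from the one-step Lemma~\ref{lmm:loopCntRemainsSame} by induction on the number of execution steps, exactly as Corollary~\ref{coro:defExclusion} is lifted from Lemma~\ref{lmm:defExclusion} (the paper itself gives no explicit proof for this corollary). Your separate treatment of the crash branch via the Crash rule is a genuine subtlety that the paper glosses over, and it is handled correctly.
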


\begin{lemma} \label{lmm:stmtExclExecExt}
If all of the following hold:
\begin{enumerate}
\item A non-skip statement $s$ is not in $S$, $(s \neq \text{skip}) \wedge (s \notin S)$;

\item There is one step execution of stack depth $d = 0$, $(S, m) -> (S', m')$,
\end{enumerate}
then, $s \notin S'$.
\end{lemma}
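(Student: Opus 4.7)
The plan is to proceed by structural induction on the abstract syntax of $S$ and case analysis according to which small-step rule fires, in the same style as the proof of Lemma~\ref{lmm:oneStepStmtExclusion}. The new hypothesis $s \neq \text{skip}$ is needed exactly to block the pathological case where a simple statement reduces to $\text{skip}$: for instance $(id := v, m) \to (\text{skip}, m(\vals[v/id]))$, where $s = \text{skip}$ would be present in $S'$ but absent from $S$. With $s \neq \text{skip}$ that difficulty disappears, and each case reduces essentially to showing that the statement occurrences in $S'$ form a subset of those already present in $S$, together with $\text{skip}$.

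First I would dispatch the simple statements (\text{skip}, $lval := e$, $\text{input } id$, $\text{output } e$): inspection of the rules in Figures~\ref{fig:sosrulesExpr}, \ref{fig:basicrules} and \ref{fig:iorules} shows that either (a) the step is an evaluation step inside the statement via the contextual rule, producing $S'$ of the same outer form with a subexpression replaced, so $s \notin S'$ holds trivially since the outer statement is unchanged and $s$ was never equal to it; (b) the step sets the crash flag and $S' = S$, so $s \notin S'$; or (c) $S'$ is $\text{skip}$, whence $s \notin S'$ follows from the hypothesis $s \neq \text{skip}$ and Definition~\ref{def:stmtInclusion}.

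Next, for If statements, $S = \text{If}(e)\text{ then}\{S_t\}\text{ else}\{S_f\}$: either the step evaluates inside $e$, leaving the branches unchanged, or it fires If-T/If-F and produces $S' \in \{S_t, S_f\}$. Since $s \notin S$ implies $s \notin S_t$ and $s \notin S_f$ by Definition~\ref{def:stmtInclusion}, the conclusion follows. For while statements $S = \text{while}_{\langle n\rangle}(e)\{S''\}$, the nontrivial case is Wh-T, which produces $S' = S''; \text{while}_{\langle n\rangle}(e)\{S''\}$. From $s \notin S$ we get both $s \notin S''$ and $s \notin \text{while}_{\langle n\rangle}(e)\{S''\}$ directly from the definition, so $s \notin S'$ by the sequence clause of Definition~\ref{def:stmtInclusion}. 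The Wh-F case yields $\text{skip}$ and is handled by $s \neq \text{skip}$, and the evaluation and crash cases are immediate.

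Finally, for a sequence $S = s_1; S_2$ (handled by the contextual rule), the step reduces $s_1$ to some $s_1'$, so $S' = s_1'; S_2$. The induction hypothesis applied to $(s_1, m) \to (s_1', m')$ gives $s \notin s_1'$, and $s \notin S_2$ is immediate from $s \notin S$, so $s \notin S'$. The main obstacle is keeping the bookkeeping for the while case honest: one must confirm via Definition~\ref{def:stmtInclusion} that $s \notin \text{while}_{\langle n\rangle}(e)\{S''\}$ entails $s \notin S''$, which is precisely clause 3 of that definition, so no real difficulty arises.
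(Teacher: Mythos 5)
Your proposal is correct and follows essentially the same route as the paper: structural induction on $S$ with a case analysis on which rule fires, using $s \neq \text{skip}$ to dispatch the reductions to $\text{skip}$, the branch/unfolding clauses of Definition~\ref{def:stmtInclusion} for If and while, and the induction hypothesis through the contextual rule for sequences. The only cosmetic difference is that in the crash/partial-evaluation subcases the paper additionally invokes well-typedness to argue $s$ cannot equal the partially evaluated statement, whereas you argue directly from the syntactic form; this does not change the structure of the argument.
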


\begin{proof}
By structural induction on abstract syntax of statement sequence $S$, we show the lemma holds.

{\bf Case 1}.
$S$ = ``$id := e$". 

Then there are two subcases.

\hspace{0.5ex} {\bf Case 1.1}
Expression $e$ evaluates to some value $v$, written $(e, m) ->* (v, m)$.

Then, by rule Assign,
$(S, m) -> (\text{skip}, m(\vals[v/id]))$.

Hence, $s \notin \text{skip}$ by definition.

\hspace{0.5ex} {\bf Case 1.2}
Evaluation of expression $e$ crashes, written $(e, m) ->* (e', m(1/\mathfrak{f}))$.

By parameterized type rule TExpr, $\Gamma \not\vdash e'$.
Then, by type rule TAssign, $\Gamma \not\vdash id := e'$.

Then, by rule crash,

$(id := e', m(1/\mathfrak{f})) -> (id := e', m(1/\mathfrak{f}))$.

Because $\Gamma \vdash s$, then $s \neq id := e'$.
Hence, $s \notin id := e'$ by definition.

{\bf Case 2}.
$S$ = ``$\text{ output }e$".

{\bf Case 3}.
$S$ = ``$\text{ input }id$".

By similar argument in case 1.

{\bf Case 4}.
$S$= ``If $(e)$ then \{$S_t$\} else \{$S_f$\}".

$s \notin S_f, s \notin S_t$ by definition.
There are two subcases.

\hspace{0.5ex} {\bf Case 4.1}
W.l.o.g., expression $e$ in predicate of $S$ evaluates to nonzero value, written $(e, m) ->* (v, m)$ where $v\neq0$.

Then by rule If-T,
$(\text{If }(v) \text{ then }\{S_t\} \text{ else }\{S_f\}, m) -> (S_t, m)$.

Therefore, $s \notin S_t$.

\hspace{0.5ex} {\bf Case 4.2}
Evaluation of expression $e$ in predicate of $S$ crashes, written $(e, m) ->* (e', m(1/\mathfrak{f}))$.

Then, by rule crash,

$(\text{If }(e') \text{ then }\{S_t\} \text{ else }\{S_f\}, m(1/\mathfrak{f})) -> $

$(\text{If }(e') \text{ then }\{S_t\} \text{ else }\{S_f\}, m(1/\mathfrak{f}))$.

By parameterized type rule TExpr, $\Gamma \not\vdash e'$.
By type rule Tif, $\Gamma \not\vdash \text{If }(e') \text{ then }\{S_t\} \text{ else }\{S_f\}$.

Because $\Gamma \vdash s$, then $s \neq \text{If }(e') \text{ then }\{S_t\} \text{ else }\{S_f\}$.

Besides, $s\notin S_t, s\notin S_f$ by condition. Therefore, $s \notin \text{If }(e') \text{ then }\{S_t\} \text{ else }\{S_f\}$.

{\bf Case 5}.
$S$ = ``$\text{while}_{\langle n\rangle} \; (e) \; \{S'\}$".

$s \notin S'$ by definition.
There are subcases.

\hspace{0.5ex} {\bf Case 5.1}
Expression $e$ evaluates to nonzero value, written $(e, m) ->* (v, m)$ where $v\neq0$.

By rule Wh-T,
$(\text{while}_{\langle n\rangle} \; (v) \; \{S'\}, m) -> $

$(S';\text{while}_{\langle n\rangle} \; (e) \; \{S'\}, m(m_c[(k+1)/n]))$ for some nonnegative integer $k$.

Then $s \notin S';\text{while}_{\langle n\rangle} \; (e) \; \{S'\}$ by definition.

\hspace{0.5ex} {\bf Case 5.2}
Expression $e$ evaluates to zero in state $m$, written $(e, m) ->* (0, m)$.

By rule Wh-F,

$(\text{while}_{\langle n\rangle} \; (0) \; \{S'\}, m) -> ({skip}, m(m_c[0/n]))$.

Therefore, $s \notin \text{skip}$.

\hspace{0.5ex} {\bf Case 5.3}
Evaluation of expression $e$ crashes, written $(e, m) ->* (e', m(1/\mathfrak{f}))$.

By rule crash,

$(\text{while}_{\langle n\rangle} \; (e') \; \{S'\}, m(1/\mathfrak{f})) -> $

$(\text{while}_{\langle n\rangle} \; (e') \; \{S'\}, m(1/\mathfrak{f}))$.

Then, by type rule Twhile, $\Gamma \not\vdash \text{while}_{\langle n\rangle} \; (e') \; \{S'\}$.
Because $\Gamma \vdash s$, then $s \neq \text{while}_{\langle n\rangle} \; (e') \; \{S'\}$.

Besides $s \notin S'$, then $s \notin \text{while}_{\langle n\rangle} \; (e') \; \{S'\}$ by definition.

{\bf Case 6}.
$S = S_1;S_2$.

By argument in Case 1 to 5, after one step execution
$(S_1, m) -> (S', m')$, $s\notin S'$.

By contextual rule, $(S_1;S_2, m) -> (S';S_2, m')$.

By definition, $s \notin S_2$.

Then, by definition, $s\notin S';S_2$

%
%
%
%
%
%
%
%
\end{proof}
%
%
%
%
%
%

%
\begin{lemma} \label{lmm:loopCntStepwiseInc}
Let $s$ = ``$\text{while}_{\langle n\rangle} \, (e) \, \{S''\}$".
If both of the following hold:
\begin{itemize}
\item $s \in S$;

\item $(S, m(\text{loop}_c)) -> (S', m'(\text{loop}_c'))$;
\end{itemize}
then one of the following holds:
\begin{enumerate}
\item The loop counter of label $n$ is incremented by one, $\text{loop}_c'(n) - \text{loop}_c(n)$ = 1;

\item There is no entry for label $n$ in loop counter, $(n, v) \notin \text{loop}_c'$;

\item The loop counter of label $n$ is not changed, $\text{loop}_c'(n) - \text{loop}_c(n)$ = 0;
\end{enumerate}
\end{lemma}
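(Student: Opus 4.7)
The plan is to prove this by structural induction on the abstract syntax of $S$, combined with a case analysis on which SOS rule fires in the one step execution. The key observation that drives the whole argument is the paper's standing assumption of unique loop labels: since $s$ has label $n$ and labels are unique across the program, the only statement anywhere in $S$ that carries the label $n$ is $s$ itself. Consequently, the only semantic rules that can modify the entry $\text{loop}_c(n)$ are \textsc{Wh-T} and \textsc{Wh-F} applied to $s$, and any other rule (including \textsc{Wh-T}/\textsc{Wh-F} applied to some other while-statement with label $n' \neq n$, as well as \textsc{Assign}, \textsc{In-$*$}, \textsc{Out-$*$}, \textsc{If-T}/\textsc{If-F}, \textsc{Seq}, \textsc{Crash}, etc.) leaves the entry for $n$ untouched, which immediately yields case~3.

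For the base case $S = s$, I would reduce by cases on the value produced by evaluating the predicate $e$. If evaluation yields a nonzero integer, rule \textsc{Wh-T} fires and updates $\text{loop}_c$ by $[(k+1)/n]$, so $\text{loop}_c'(n) - \text{loop}_c(n) = 1$, giving case~1. If evaluation yields zero, rule \textsc{Wh-F} fires and removes/resets the entry for $n$, giving case~2 (since after this step there is no live iteration of $s$ whose counter needs to be tracked). If the evaluation of $e$ raises an exception, the \textsc{ECrash}/\textsc{Crash} rules fire, the crash flag is set, and the loop-counter component is not modified at all, giving case~3.

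For the inductive cases I would dispatch on the shape of $S$: when $S$ is an \textbf{If} statement, the step is the evaluation of its predicate and either picks a branch or crashes, none of which touches $\text{loop}_c(n)$, so case~3 holds; when $S$ is a while-statement with a label $n' \neq n$ whose body contains $s$, the step either unfolds that outer loop (modifying only $\text{loop}_c(n')$) or, inside an unfolded body, executes some inner statement, in which case I appeal to the induction hypothesis; when $S = S_1; S_2$, contextual evaluation means $(S_1, m) \to (S_1', m')$ for some $S_1'$, and I apply the induction hypothesis to $S_1$ (noting that $s \in S$ implies $s \in S_1$ or $s \in S_2$, and only the $S_1$ subterm takes a step).

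The main obstacle is the \textsc{Wh-F} case for the base statement: the stated rule writes $\text{loop}_c[0/n]$ rather than removing the binding, so if $\text{loop}_c(n)$ was some $k > 0$ on entry, neither case~1 (increment by one), case~2 (removed), nor case~3 (unchanged) literally applies. I would resolve this by adopting the reading, consistent with the surrounding proofs in Section~\ref{sec_equiv_term} (e.g.\ the use of $\text{loop}_c \setminus \{(n_1)\}$ in the proof of Lemma~\ref{lmm:equivTermCompSameLoopIteration}), that \textsc{Wh-F} effectively withdraws the counter for the terminating while instance, so the post-state falls under case~2; under that reading the case analysis closes. The only other subtle point is ensuring that, in the sequence and nested-loop sub-cases, the induction hypothesis really does apply to the sub-term that takes the step, which is routine once the contextual rule is unfolded.
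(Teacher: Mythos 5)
Your proof is correct and follows essentially the same route as the paper, which gives only the one-line sketch ``Let $S = s';S''$; induction on the abstract syntax of $s'$'' — your structural induction on $S$ with a case split on the firing SOS rule, anchored by the unique-loop-labels assumption, is exactly the intended fleshing-out. Your observation about the \textsc{Wh-F} rule is a genuine catch: as formally written it resets the counter to $0$ rather than removing the entry, which would leave the lemma's three cases non-exhaustive when $\text{loop}_c(n)=k>0$, and your resolution (reading \textsc{Wh-F} as withdrawing the entry, consistent with the paper's own use of $\text{loop}_c\setminus\{(n_1)\}$ elsewhere) is the right repair.
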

\begin{proof}
Let $S = s';S''$.
The proof is by induction on abstract syntax of $s'$.
\end{proof}

\end{document}